\titleformat{\chapter}[display]{\fontsize{18pt}{21pt}\bfseries \sffamily \centering}{\chaptertitlename\ \thechapter}{0pt}{\ifnum\thechapter>0 {\includegraphics[height=0.05\textwidth]{figures/knot\thechapter.png}\\} \fi}{}
\tikzstyle{basic}=[fill=white, draw=black, shape=circle]
\tikzstyle{square}=[fill=white, draw=black, shape=rectangle]
\tikzstyle{big dashed}=[fill=white, draw=black, shape=circle, minimum width=1cm, dashed]
\tikzstyle{vertical ellipse dashed}=[fill=none, draw=blue, minimum width=0.75cm, minimum height=3cm, ellipse, dashed, tikzit shape=rectangle, tikzit draw=blue, tikzit fill=white, line width=1pt]
\tikzstyle{small vertical ellipse dashed}=[fill=none, draw=blue, shape=circle, tikzit fill=white, tikzit draw=blue, dashed, minimum width=0.75cm, minimum height=1.5cm, tikzit shape=rectangle, ellipse, line width=1pt]
\tikzstyle{tiny vertical ellipse dashed}=[fill=none, draw=blue, shape=circle, tikzit fill=white, ellipse, dashed, minimum width=0.75cm, minimum height=1cm, tikzit shape=rectangle, line width=1pt]
\tikzstyle{red}=[fill=red, draw=black, shape=circle]
\tikzstyle{green}=[fill={rgb,255: red,0; green,128; blue,128}, draw=black, shape=circle]
\tikzstyle{blue}=[fill=blue, draw=black, shape=circle]
\tikzstyle{huge dashed}=[fill=white, draw=black, shape=circle, dashed, minimum width=2cm]
\tikzstyle{medium}=[fill=white, draw=black, shape=circle, minimum width=1cm]
\tikzstyle{pale green}=[fill={rgb,255: red,173; green,231; blue,0}, draw=black, shape=circle, minimum width=1cm]
\tikzstyle{horizontal ellipse dashed}=[fill=white, draw=black, tikzit draw=magenta, tikzit shape=rectangle, minimum width=3cm, minimum height=0.75cm, ellipse, dashed]
\tikzstyle{minsize}=[fill=white, draw=black, shape=circle, minimum width=0.75cm]
\tikzstyle{horizontal ellipse green}=[fill={rgb,255: red,191; green,255; blue,0}, draw=black, tikzit draw={rgb,255: red,191; green,255; blue,0}, tikzit shape=rectangle, minimum width=3cm, minimum height=0.75cm, ellipse, dashed]
\tikzstyle{horizontal ellipse blue}=[fill={rgb,255: red,107; green,203; blue,255}, draw=black, tikzit draw=blue, tikzit shape=rectangle, minimum width=3cm, minimum height=0.75cm, ellipse, dashed]
\tikzstyle{smallblack}=[fill=black, draw=black, shape=circle, inner sep=0 pt, minimum size=3 pt]
\tikzstyle{smallSquare}=[fill=white, draw=black, shape=rectangle, inner sep=0 pt, minimum size=6 pt]
\tikzstyle{smallCircle}=[fill=white, draw=black, shape=circle, inner sep=0 pt, minimum size=20 pt]
\tikzstyle{big vertical ellipse dashed}=[fill=none, draw=blue, shape=circle, tikzit shape=rectangle, ellipse, dashed, minimum width=0.95cm, minimum height=3.7cm, line width=1pt]
\tikzstyle{smallred}=[fill=red, draw=red, shape=circle, inner sep=0 pt, minimum size=3 pt]
\tikzstyle{smallblue}=[fill=blue, draw=blue, shape=circle, inner sep=0pt, minimum size=3pt]
\tikzstyle{small green}=[fill={rgb,255: red,0; green,107; blue,61}, draw={rgb,255: red,0; green,107; blue,61}, shape=circle, inner sep=0pt, minimum size=3pt]
\tikzstyle{med red}=[fill=red, draw=red, shape=circle, inner sep=0pt, minimum size=5pt]
\tikzstyle{med blue}=[fill=blue, draw=blue, shape=circle, inner sep=0pt, minimum size=5pt]
\tikzstyle{med green}=[fill={rgb,255: red,0; green,107; blue,61}, draw={rgb,255: red,0; green,107; blue,61}, shape=circle, inner sep=0pt, minimum size=5pt]
\tikzstyle{med black}=[fill=black, draw=black, shape=circle, inner sep=0pt, minimum size=5pt]
\tikzstyle{directed}=[->, line width=1pt]
\tikzstyle{undirected}=[-, line width=1pt]
\tikzstyle{directed red}=[draw=red, ->, line width=1pt]
\tikzstyle{directed green}=[draw={rgb,255: red,0; green,128; blue,128}, ->, line width=1pt]
\tikzstyle{directed blue}=[draw=blue, ->, line width=1pt]
\tikzstyle{directed purple}=[draw={rgb,255: red,128; green,0; blue,128}, ->, line width=1pt]
\tikzstyle{undirected red}=[-, draw=red, line width=1pt]
\tikzstyle{undirected green}=[-, draw={rgb,255: red,0; green,107; blue,61}, line width=1pt]
\tikzstyle{undirected blue}=[-, draw=blue, line width=1pt]
\tikzstyle{undirected purple}=[-, draw={rgb,255: red,128; green,0; blue,128}, line width=1pt]
\tikzstyle{undirected dashed}=[-, line width=1pt, dashed]
\tikzstyle{orange dashed}=[-, draw={rgb,255: red,255; green,128; blue,0}, dashed, line width=1.5pt]
\tikzstyle{directed dash}=[->, dashed]
\tikzstyle{blue dashed}=[-, draw=blue, dashed, line width=1pt]
\tikzstyle{green dashed}=[-, draw={rgb,255: red,0; green,162; blue,0}, dashed, line width=1pt]
\tikzstyle{blue filled}=[-, fill={blue!20}, draw=blue, line width=1pt, opacity=0.5, tikzit fill=white]
\tikzstyle{red filled}=[-, fill={red!20}, line width=1pt, draw=red, opacity=0.5, tikzit fill=white]
\tikzstyle{green filled}=[-, line width=1pt, draw={rgb,255: red,0; green,107; blue,61}, opacity=0.5, tikzit fill=white, fill={rgb,255: red,149; green,255; blue,179}]
\tikzstyle{orange filled}=[-, fill={orange!20}, draw=orange, line width=1pt, opacity=0.5, tikzit fill=white]
\tikzstyle{undirected dashed}=[-, draw=black, dashed, line width=1pt]
\tikzstyle{thick}=[-, line width=3pt]
\tikzstyle{red dashed}=[-, line width=1pt, dashed, draw=red]
\tikzstyle{bluefill}=[-, fill={rgb,255: red,179; green,179; blue,255}, line width=1pt]
\tikzstyle{redfill}=[-, fill={rgb,255: red,255; green,156; blue,158}, line width=1pt]
\tikzstyle{greenfill}=[-, fill={rgb,255: red,174; green,255; blue,177}, line width=1pt]
\tikzstyle{whitefill}=[-, fill=white, draw=black, line width=1pt]
\tikzstyle{white fill}=[-, draw=none, fill=white, tikzit draw={rgb,255: red,24; green,255; blue,12}]
\tikzstyle{red dashed}=[-, draw={rgb,255: red,255; green,0; blue,4}, line width=1 pt, dashed]
\newtheorem{lemma}{Lemma} [chapter]
\newtheorem{definition}{Definition} [chapter]
\newtheorem{corollary}{Corollary} [chapter]
\newtheorem{proposition}{Proposition} [chapter]
\newtheorem{example}{Example} [chapter]
\newtheorem{openquestion}{Open Question}
\newtheorem{mainresult}{Main Result}
\newcommand{\transpose}{\intercal}                    % Transpose
\newcommand{\conjtranspose}{\dagger}                  % Conjugate transpose
\newcommand{\inner}[2]{\left\langle #1 , #2 \right\rangle} % <a,b>
\newcommand{\norm}[1]{\left\| #1\right\|}                  % ||a||
\newcommand{\abs}[1]{\left\lvert#1\right\rvert}
\DeclareMathOperator*{\argmin}{arg\,min}        % arg min
\DeclareMathOperator*{\argmax}{arg\,max}        % arg max
\newcommand{\tr}{\mathrm{tr}}                   % trace
\newcommand{\supp}{\mathrm{supp}}
\newcommand{\sign}{\mathrm{sign}}
\newcommand{\poly}{\mathrm{poly}}
\renewcommand{\P}{\mathsf{P}}
\newcommand{\NP}{\mathsf{NP}}
\newcommand{\vol}{\mathrm{vol}}
\newcommand{\bigo}[1]{O\!\left(#1\right)}
\newcommand{\bigomega}[1]{\Omega\!\left(#1\right)}
\newcommand{\bigtheta}[1]{\Theta\!\left(#1\right)}
\newcommand{\bigotilde}[1]{\widetilde{O}\!\left(#1\right)}
\newcommand{\polylog}{\mathrm{polylog}}
\newcommand{\E}{\mathbb{E}}
\newcommand{\p}{\mathbb{P}}
\newcommand{\R}{\mathbb{R}}
\newcommand{\C}{\mathbb{C}}
\newcommand{\Z}{\mathbb{Z}}
\newcommand{\N}{\mathbb{N}}
\newcommand{\union}{\cup}
\newcommand{\intersect}{\cap}
\newcommand{\cardinality}[1]{\abs{#1}}
\newcommand{\setcomplement}[1]{\overline{#1}}
\newcommand{\symmetricdiff}{\triangle}
\newcommand{\lur}{\setl \union \setr}   % Union of sets L and R
\newcommand{\sweepset}[2]{\sets_{#2}^{#1}} % Notation of sweep sets
\newcommand{\pjsweep}{\sweepset{\vecp}{j}} % A commonly used sweep set notation
\newcommand{\lscurve}[2][\vecp]{#1\!\left[#2\right]} % The notation of the LS curve
\newcommand{\geqve}{\graphg = (\vertexset, \edgeset)} % Convenient graph definitions
\newcommand{\geqvew}{\graphg = (\vertexset, \edgeset, \weight)} % Convenient graph definitions
\newcommand{\geqvewg}{\graphg = (\vertexsetg, \edgesetg, \weight_\graphg)} % Convenient graph definitions
\newcommand{\In}{\mathrm{in}} % Out and In mathematically
\newcommand{\Out}{\mathrm{out}}
\newcommand{\LS}{Lov\'asz-Simonovits} % Shortcut for names
\newcommand{\rank}{\mathrm{rank}}                   % The rank of a given hyperedge.
\newcommand{\esc}{\mathrm{esc}}
\newcommand{\dfdt}{\frac{\mathrm{d} \vecf_t}{\mathrm{d} t}}
\newcommand{\peye}{\vecp^{(i)}}
\newcommand{\pj}{\vecp^{(j)}}
\newcommand{\APT}{\textsf{APT}}
\newcommand{\lambdaeigs}{\lambda_1 \leq \lambda_2 \leq \ldots \leq \lambda_n}
\newcommand{\barx}{\Bar{\vecx}}
\newcommand{\barxi}{\barx^{(i)}}
\newcommand{\barxj}{\barx^{(j)}}
\newcommand{\barg}{\Bar{\vecg}}
\newcommand{\hatg}{\widehat{\vecg}}
\newcommand{\hatf}{\widehat{\vecf}}
\newcommand{\algsweepset}{\textsc{SweepSet}}
\newcommand{\algspectralcluster}{\textsc{SpectralCluster}}
\newcommand{\alggeneratesample}{\textsc{GenerateSample}}
\newcommand{\alglocbipartdc}{\textsc{LocBipartDC}}
\newcommand{\algaprdc}{\textsc{ApproximatePagerankDC}}
\newcommand{\algaprdcshort}{\textsc{APRDC}}
\newcommand{\algdcpush}{\textsc{DCPush}}
\newcommand{\algevocutdirected}{\textsc{EvoCutDirected}}
\newcommand{\algcomputechangerate}{\textsc{ComputeChangeRate}}
\newcommand{\alglp}{\textsc{LP}}
\newcommand{\alglbdc}{\textsc{LBDC}}
\newcommand{\algecd}{\textsc{ECD}}
\newcommand{\algclsz}{\textsc{CLSZ}}
\newcommand{\algcliquecut}{\textsc{CliqueCut}}
\newcommand{\algcc}{\textsc{CC}}
\newcommand{\diffalgname}{\textsc{FindBipartiteComponents}}
\newcommand{\diffalgshortname}{\textsc{FBC}}
\newcommand{\diffalgapproxname}{\textsc{FBCApprox}}
\newcommand{\diffalgapproxshortname}{\textsc{FBCA}}
\definecolor{indiagreen}{rgb}{0.07, 0.53, 0.03}
\newcommand{\twopartdef}[4]
{
	\left\{
		\begin{array}{ll}
			#1 & \mbox{if } #2 \\
			#3 & \mbox{if } #4
		\end{array}
	\right.
}
\newcommand{\twopartdefow}[3]
{
	\left\{
		\begin{array}{ll}
			#1 & \mbox{if } #2 \\
			#3 & \mbox{otherwise}
		\end{array}
	\right.
}
\newcommand{\twopartdefowfs}[3]
{
	\left\{
		\begin{array}{ll}
			#1 & \mbox{if } #2, \\
			#3 & \mbox{otherwise.}
		\end{array}
	\right.
}
\newcommand{\threepartdefow}[5]
{
	\left\{
		\begin{array}{lll}
			#1 & \mbox{if } #2 \\
			#3 & \mbox{if } #4 \\
			#5 & \mbox{otherwise}
		\end{array}
	\right.
}
\newcommand{\allnotation}[1]{#1}
\renewcommand{\vec}[1]{{\allnotation{\bm{#1}}}}
\newcommand{\vecf}{\vec{f}}
\newcommand{\vecg}{\vec{g}}
\newcommand{\vecx}{\vec{x}}
\newcommand{\vecy}{\vec{y}}
\newcommand{\vecu}{\vec{u}}
\newcommand{\vecv}{\vec{v}}
\newcommand{\vecr}{\vec{r}}
\newcommand{\vecp}{\vec{p}}
\newcommand{\vecs}{\vec{s}}
\newcommand{\veca}{\vec{a}}
\newcommand{\vecb}{\vec{b}}
\newcommand{\vecw}{\vec{w}}
\newcommand{\vecc}{\vec{c}}
\newcommand{\constvec}{\vec{1}}
\newcommand{\zerovec}{\vec{0}}
\newcommand{\indicatorvec}{\vec{\chi}}
\newcommand{\statdist}{\vec{\pi}}
\newcommand{\graph}[1]{{\allnotation{#1}}}
\newcommand{\graphg}{\graph{G}}
\newcommand{\graphh}{\graph{H}}
\newcommand{\graphm}{\graph{M}}
\newcommand{\set}[1]{{\allnotation{#1}}}
\newcommand{\setv}{\set{V}}
\newcommand{\sete}{\set{E}}
\newcommand{\sets}{\set{S}}
\newcommand{\setl}{\set{L}}
\newcommand{\setr}{\set{R}}
\newcommand{\seta}{\set{A}}
\newcommand{\setb}{\set{B}}
\newcommand{\setc}{\set{C}}
\newcommand{\setu}{\set{U}}
\newcommand{\setp}{\set{P}}
\newcommand{\setm}{\set{M}}
\newcommand{\vertexset}{\setv}
\newcommand{\vertexsetg}{\vertexset_\graphg}
\newcommand{\edgeset}{\sete}
\newcommand{\edgesetg}{\edgeset_\graphg}
\newcommand{\mat}[1]{{\allnotation{\mathbf{#1}}}}
\newcommand{\matm}{\mat{M}}
\newcommand{\lap}{\mat{L}}
\newcommand{\lapn}{\mat{N}}
\newcommand{\signlap}{\mat{J}}
\newcommand{\signlapn}{\mat{Z}}
\newcommand{\degm}{\mat{D}}
\newcommand{\degmhalf}{\degm^{\allnotation{\frac{1}{2}}}}
\newcommand{\degmhalfneg}{\degm^{\allnotation{-\frac{1}{2}}}}
\newcommand{\adj}{\mat{A}}
\newcommand{\adjn}{\mat{B}}
\newcommand{\adjh}{\allnotation{\mat{A}_h}}
\newcommand{\lazywalkm}{\mat{W}}
\newcommand{\walkm}{\mat{X}}
\newcommand{\identity}{\mat{I}}
\newcommand{\imag}{\allnotation{\mathit{i}}}
\newcommand{\conjugate}[1]{\allnotation{\overline{#1}}}
\newcommand{\pagerank}{Pagerank}
\newcommand{\lovasz}{Lov\'asz}
\renewcommand{\etal}{et al.}
\renewcommand{\deg}{{\allnotation{d}}}
\newcommand{\weight}{{\allnotation{w}}}
\newcommand{\bipart}{{\allnotation{\beta}}}
\newcommand{\cond}{{\allnotation{\Phi}}}
\newcommand{\kcond}{{\allnotation{\rho}}}
\newcommand{\kcondg}{{\allnotation{\kcond_\graphg}}}
\newcommand{\neighbors}{{\allnotation{N}}}
\newcommand{\connected}{{\allnotation{\sim}}}
\newcommand{\ppr}{\mathrm{ppr}}
\newcommand{\apr}{\mathrm{apr}}
\newcommand{\discrep}{{\allnotation{\Delta}}}
\newcommand{\firstdef}[1]{\textbf{#1}}
\newcommand{\iverson}[1]{\pmb{\left[\vphantom{#1}\right.} #1 \pmb{\left.\vphantom{#1}\right]}}
\definecolor{thmblue}{RGB}{224, 231, 255}
\definecolor{thmoutline}{RGB}{199, 212, 255}
\declaretheoremstyle[
    headfont=\bfseries, 
    notebraces={[}{]},
    bodyfont=\normalfont\itshape,
    % headpunct={},
    % postheadspace=\newline,
    % postheadhook={\textcolor{blue}{\rule[.6ex]{\linewidth}{0.4pt}}\\},
    spacebelow=\parsep,
    spaceabove=\parsep,
    mdframed={
        backgroundcolor=thmblue, 
            linecolor=thmoutline, 
            innertopmargin=6pt,
            roundcorner=5pt, 
            innerbottommargin=6pt, 
            skipabove=\parsep, 
            skipbelow=\parsep } 
]{keythmstyle}
\declaretheorem[
    % Un-comment the line below to apply the style above.
    % style=keythmstyle,
    name=Theorem,
    numberwithin=chapter
]{theorem}
\title{On Learning the Structure of Clusters in Graphs}
\author{Peter Macgregor}
\abstract{%
    Graph clustering is a fundamental problem in unsupervised learning, with numerous applications in computer science and in analysing real-world data. In many real-world applications, we find that the clusters have a significant high-level structure. This is often overlooked in the design and analysis of graph clustering algorithms which make strong simplifying assumptions about the structure of the graph. This thesis addresses the natural question of whether the structure of clusters can be learned efficiently and describes four new algorithmic results for learning such structure in graphs and hypergraphs.
    
    The first part of the thesis studies the classical spectral clustering algorithm, and presents a tighter analysis on its performance. This result explains why it works under a much weaker and more natural condition than the ones studied in the literature, and helps to close the gap between the theoretical guarantees of the spectral clustering algorithm and its excellent empirical performance.
   
    The second part of the thesis builds on the theoretical guarantees of the previous part and shows that, when the clusters of the underlying graph have certain structures, spectral clustering with fewer than $k$ eigenvectors is able to produce better output than classical spectral clustering in which $k$ eigenvectors are employed, where $k$ is the number of clusters. This presents the first work that discusses and analyses the performance of spectral clustering with fewer than $k$ eigenvectors, and shows that general structures of clusters can be learned with spectral methods.
    
    The third part of the thesis considers efficient learning of the  structure of clusters with local algorithms, whose runtime depends only on the size of the target clusters and is independent of the underlying input graph. While the objective of classical local clustering algorithms is to find a cluster which is sparsely connected to the rest of the graph, this part of the thesis presents a local algorithm that finds a pair of clusters which are \emph{densely} connected to each other. This result demonstrates that certain structures of clusters can be learned efficiently in the local setting, even in the massive graphs which are ubiquitous in real-world applications.
    
    The final part of the thesis studies the problem of learning densely connected clusters in hypergraphs. The developed algorithm is based on a new heat diffusion process, whose analysis extends a sequence of recent work on the spectral theory of hypergraphs. It allows the structure of clusters to be learned in datasets modelling higher-order relations of objects and can be applied to efficiently analyse many complex datasets occurring in practice.
    
    All of the presented theoretical  results are further  extensively evaluated on both synthetic and real-word datasets of different domains, including image classification and segmentation, migration networks, co-authorship networks, and natural language processing. These experimental results demonstrate that the newly developed algorithms are practical, effective, and immediately applicable for learning the structure of clusters in real-world data.
}
\begin{document}

%% First, the preliminary pages
\begin{preliminary}

%% This creates the title page
\maketitle

\begin{acknowledgements}
I must begin by thanking my supervisor, He Sun, for his invaluable support throughout this process.
I have immensely benefited from his encouragement, experience, knowledge, and friendship.
Sincere thanks are also due to the many other friends I made along the way.
Thanks are particularly in order for the walks and online hang-outs which helped me through the COVID-19 lockdowns.
Their ongoing friendship makes life infinitely more enjoyable.

I'd like to thank my family for being as wonderful as anyone could reasonably hope for.
Thanks in particular to my parents.
I would not be writing this thesis were it not for their encouragement of my interest in mathematics from a young age and for their continuing and unconditional support.
Thanks to Dad for the endless conversations about mathematical puzzles and curiosities~\cite{macgregorStaticsDynamicsDeeply1990}.
Thanks to Mum for putting up with us, and for her uncanny ability to spot typos hidden deep in the technical sections of my papers.

Thanks to Granny for permission to use her photo in the introduction.
I extend to her all my love and congratulations on the occasion of her 90th birthday on 17th August 2022.
\afterpage{\null\newpage}
\end{acknowledgements}

%% Next we need to have the declaration.
\standarddeclaration

%% Finally, a dedication (this is optional -- uncomment the following line if
%% you want one).
% \dedication{To Mum and Dad.}

%% Create the table of contents
\setcounter{tocdepth}{2}
\tableofcontents

%% If you want a list of figures or tables, uncomment the appropriate line(s)
\listoffigures
\listoftables
\listofalgorithms
\addcontentsline{toc}{chapter}{List of Algorithms}

\end{preliminary}

%%%%%%%%
%% Include your chapter files here. See the sample chapter file for the basic
%% format.

\chapter{Introduction} \label{chap:intro}
Graphs are a natural tool to represent many datasets and have proved invaluable in many diverse areas of data science such as
recommendation systems~\cite{andersenTrustbasedRecommendationSystems2008, abbassiRecommenderSystemBased2007}, image segmentation~\cite{shiNormalizedCutsImage2000, sunDistributedGraphClustering2019, zhangImageSegmentationBased2021}, and bioinformatics~\cite{highamEpidemicsHypergraphsSpectral, goglevaKnowledgeGraphbasedRecommendation2022}.
One of the most fundamental operations in these applications is \emph{graph clustering}, in which the goal is to partition the vertices of a graph into disjoint clusters.
A cluster is usually defined to be a set of vertices in the graph such that there are many edges connecting vertices inside the cluster, while there are few edges connecting the cluster to the rest of the graph.
However, this definition fails to tell the whole story.
In many real-world applications, we find that the clusters also have a significant high-level structure which is often overlooked in the design and analysis of graph clustering algorithms.
Throughout this thesis, we will discuss several methods for learning this high-level structure of clusters, and for exploiting this structure to develop improved algorithms for graph clustering.

In order to better understand the structure of clusters in graphs, let us consider some examples constructed from real-world data.
We begin by considering the well-known MNIST dataset which consists of 70,000 images of handwritten digits from 0 to 9.
Each image has $28 \times 28$ greyscale pixels, and each pixel takes a value from 0 to 255.
This dataset is not given to us as a graph, but we can employ a common tool to convert almost any generic dataset into a graph: the \emph{$k$ nearest neighbour graph}.
To do this, we consider each image in the dataset to be a vertex in the graph, and   connect each vertex $v$ to the vertices representing the images which are the `most similar' to the image represented by $v$.
This construction is described more formally in Chapter~\ref{chap:meta}.
Figure~\ref{fig:intro:mnist_knn} illustrates the $k$ nearest neighbour graph for a randomly chosen subset of the data.
\begin{figure}[t]
    \centering
    \hspace{0.05\textwidth}
    \begin{subfigure}{0.45\textwidth}  
    \centering
    \includegraphics[height=0.22\textheight]{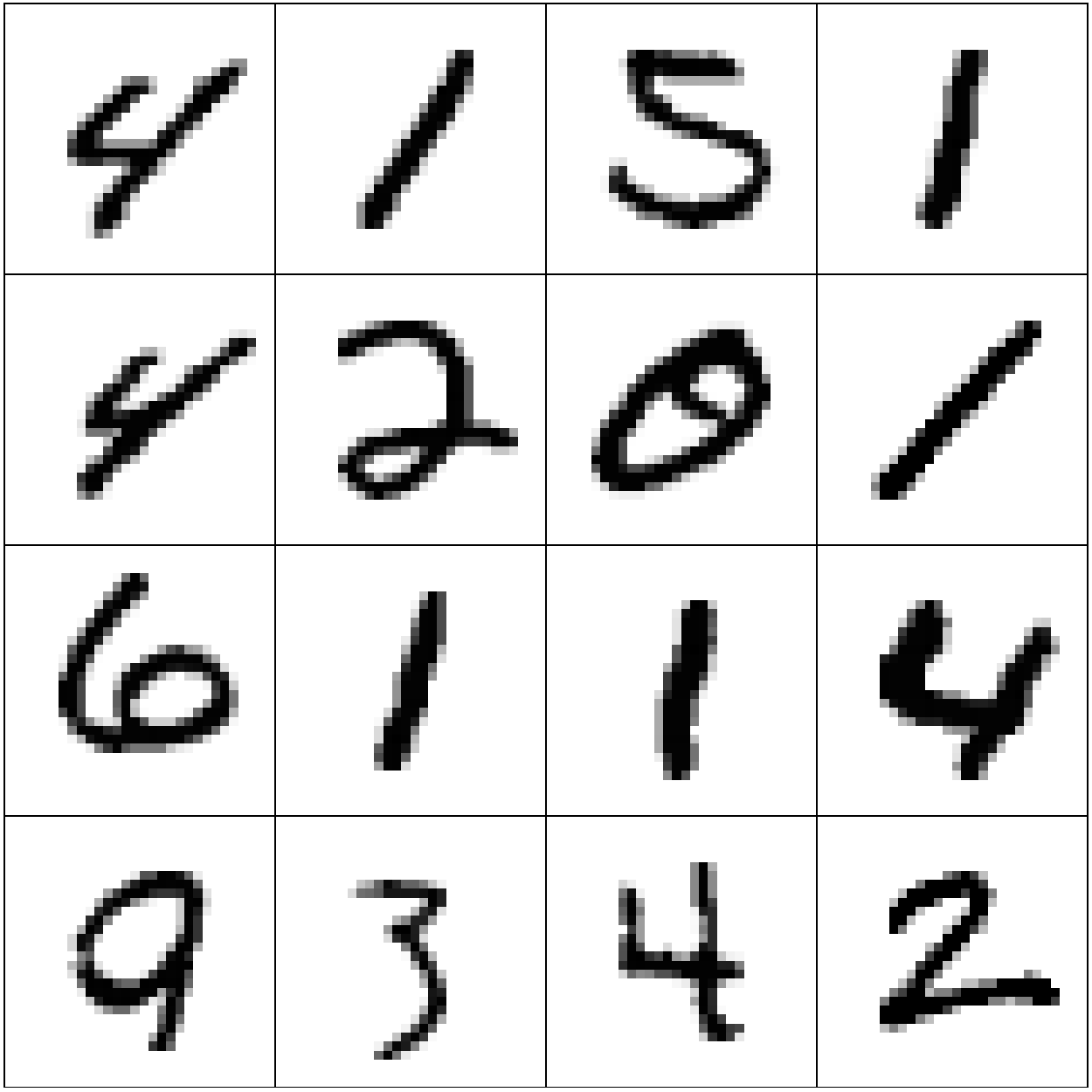}
    \caption{}
    \end{subfigure}
    \hspace{0.02\textwidth}
    \begin{subfigure}{0.45\textwidth}  
    \centering
    \includegraphics[height=0.22\textheight]{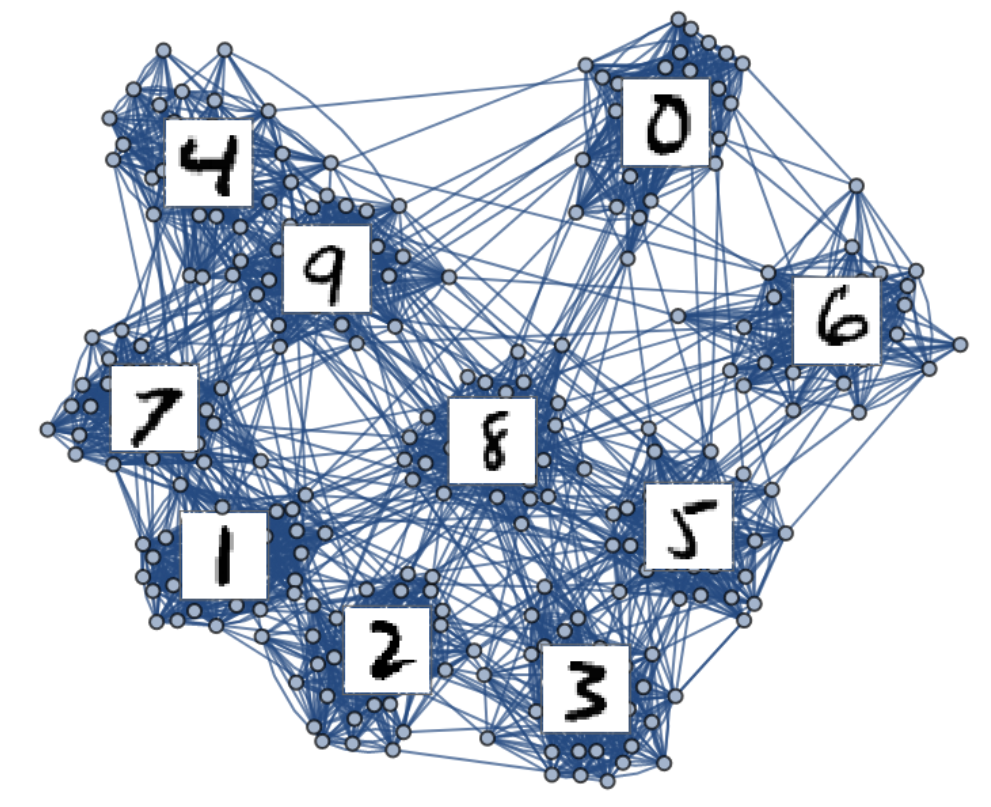}
    \caption{}
    \end{subfigure}
    \caption[The MNIST dataset]{\textbf{(a)} Example images from the MNIST dataset. \textbf{(b)} The $k$ nearest neighbour graph of a random subset of the MNIST dataset, for $k = 3$. The graph has a cluster corresponding to each digit.
    }
    \label{fig:intro:mnist_knn}
\end{figure}
Notice that in the constructed graph there are 10 clusters, which correspond to the ground truth clusters of images representing each digit.
A vertex in a given cluster tends to be more strongly connected with other vertices inside the same cluster than with vertices in other clusters.
There is also additional high-level structure due to the relative similarity of different digits.
For example, the cluster representing the digit 4 has more connections with the cluster representing 9 than the one representing 6.
This is the \emph{high-level structure} which has been overlooked in previous clustering algorithms, and we study such structures in more detail in Chapter~\ref{chap:meta}.

We will also look at another learning task for image data: image segmentation.
In this case, we are given an arbitrary image, and the goal is to partition the pixels into some number of clusters such that each cluster corresponds to a different region of the image.
This time, we will construct a graph from a single image in the following way:
we first represent each pixel in the image by a point $(r, g, b, x, y)^\transpose \in \R^5$ where $r, g, b \in [1, 255]$ are the red, green, and blue values of the pixel and $x$ and $y$ are the coordinates of the pixel in the image.
Then, we construct the similarity graph by taking each pixel to be a vertex in the graph, and for every pair of pixels $\vecu, \vecv \in \R^5$, we add an edge with weight $\exp(- \norm{\vecu - \vecv}^2 / 2 \sigma^2)$ for some parameter $\sigma$.
Notice that this graph combines the spatial and colour information of each pixel, such that we might expect each region of the image to be a separate cluster in the graph.
We can then use a graph clustering algorithm to find these regions and segment the image.
Once again, we can reasonably expect that the clusters in the graph constructed from the image will have some high-level structure.
For example, a cluster representing a given region of the image will be more closely related to clusters of pixels with similar colours, or clusters representing neighbouring regions.
We will see in Chapter~\ref{chap:meta} that this structure of clusters allows us to develop an improved graph-based image segmentation algorithm than the ones previously studied in the literature, and the results of this algorithm are also shown in Figure~\ref{fig:intro:imagesegmentation}. 
\begin{figure}[th]
    \centering
    \begin{subfigure}{0.3\textwidth}  
    \includegraphics[width=0.9\textwidth]{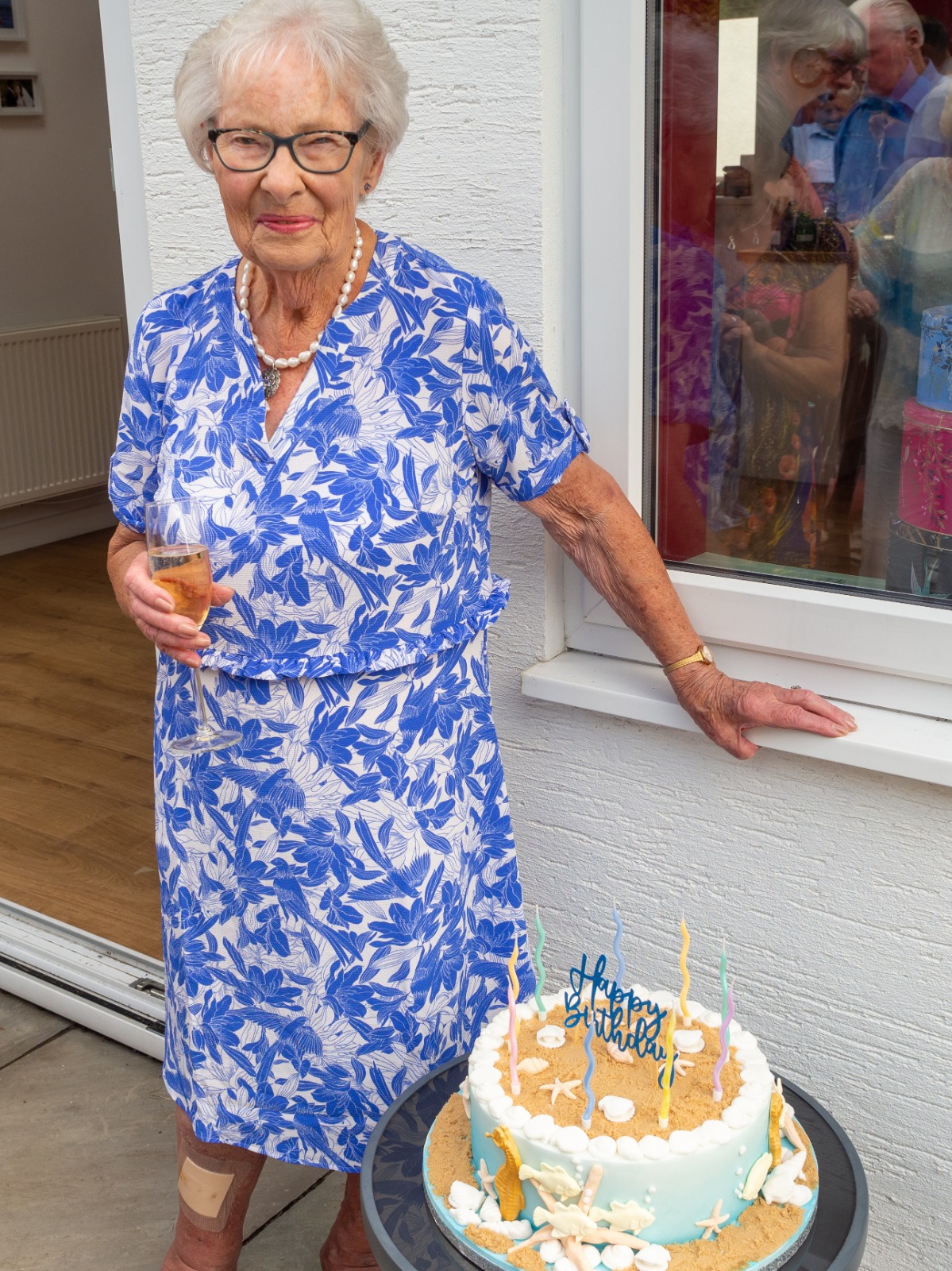}
    \caption{Original Image}
    \end{subfigure}
    \hspace{0.15\textwidth}
    \begin{subfigure}{0.3\textwidth} 
    \includegraphics[width=0.9\textwidth]{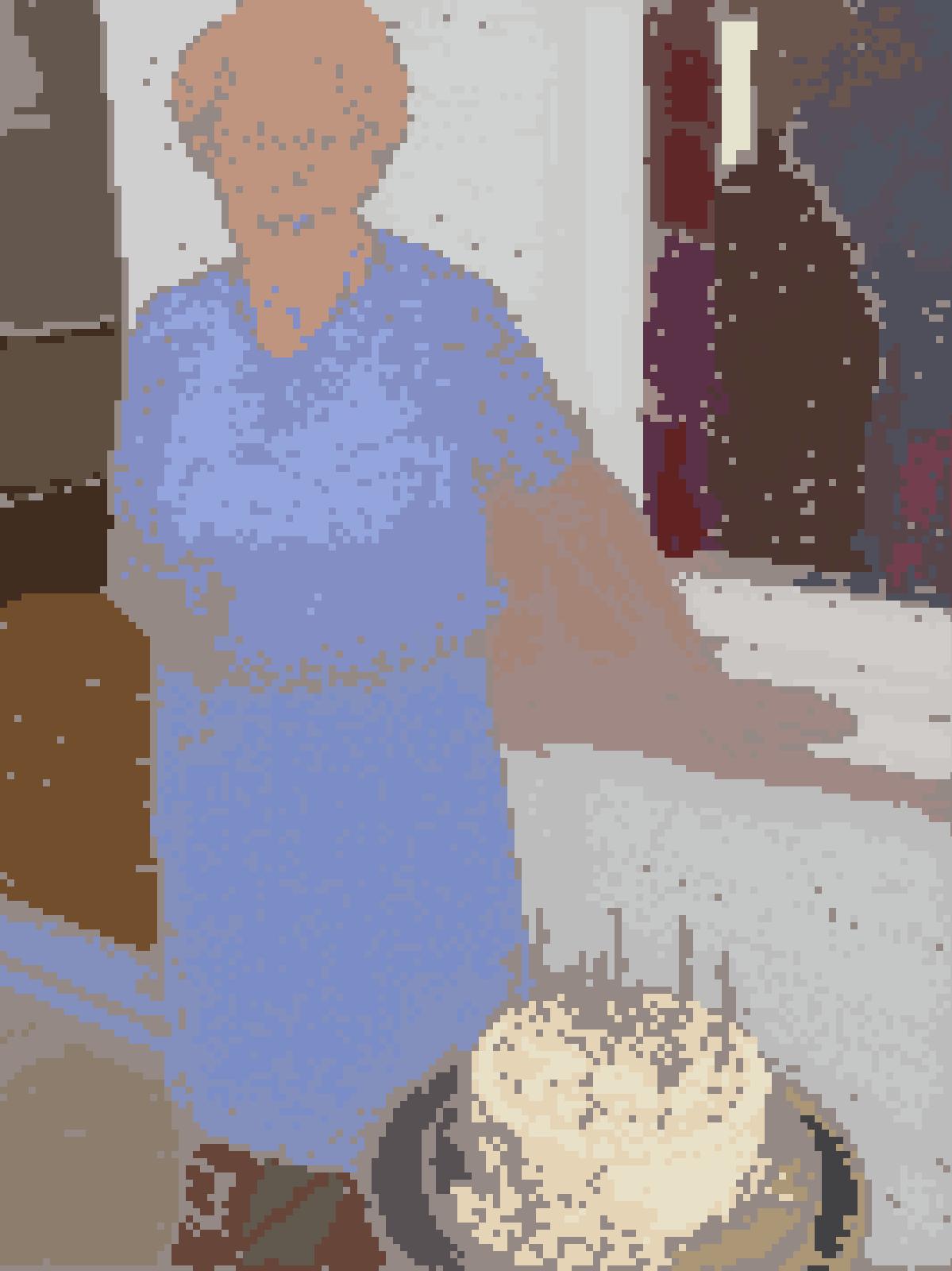}
    \caption{Segmented Image}
    \end{subfigure}
    \caption[Image segmentation based on graph clustering]{An example of image segmentation using the clustering algorithm in Chapter~\ref{chap:meta}.}
    \label{fig:intro:imagesegmentation}
\end{figure}

Finally, we will consider a completely different example dataset in which we are interested in learning the structure of clusters in a graph.
The \emph{interstate dispute dataset} records every military dispute between pairs of countries
during 1816--2010, including the level of hostility resulting from the dispute and the number of casualties if applicable.
This dataset already has a natural graph structure: for a given time period, we can construct a graph from the data by representing each country with a vertex and adding an edge between each pair of countries weighted according to the severity of their military disputes.
Notice that since the edges in this graph represent a `negative' relationship between countries, we would expect that a cluster of allied countries would not contain many edges inside the cluster.
On the other hand, two different clusters might be densely connected to each other when there are two groups of opposing countries.
If we wish to develop an algorithm for finding the clusters in this graph, clearly we must take the high-level structure into account and most existing graph clustering algorithms will not perform well.
In Chapter~\ref{chap:local}, we will develop an algorithm capable of learning the clusters in this dataset which correspond to the underlying geopolitical relationships between countries.
 Figure~\ref{fig:intro:disputegraph} demonstrates that from this dataset, our algorithm is able to identify changes in geopolitics over time, including the changing roles of Russia, Japan, and Eastern Europe during the 20th century.
\begin{figure}[th]
    \centering
    \begin{subfigure}{0.48\textwidth} 
    \includegraphics[width=\textwidth]{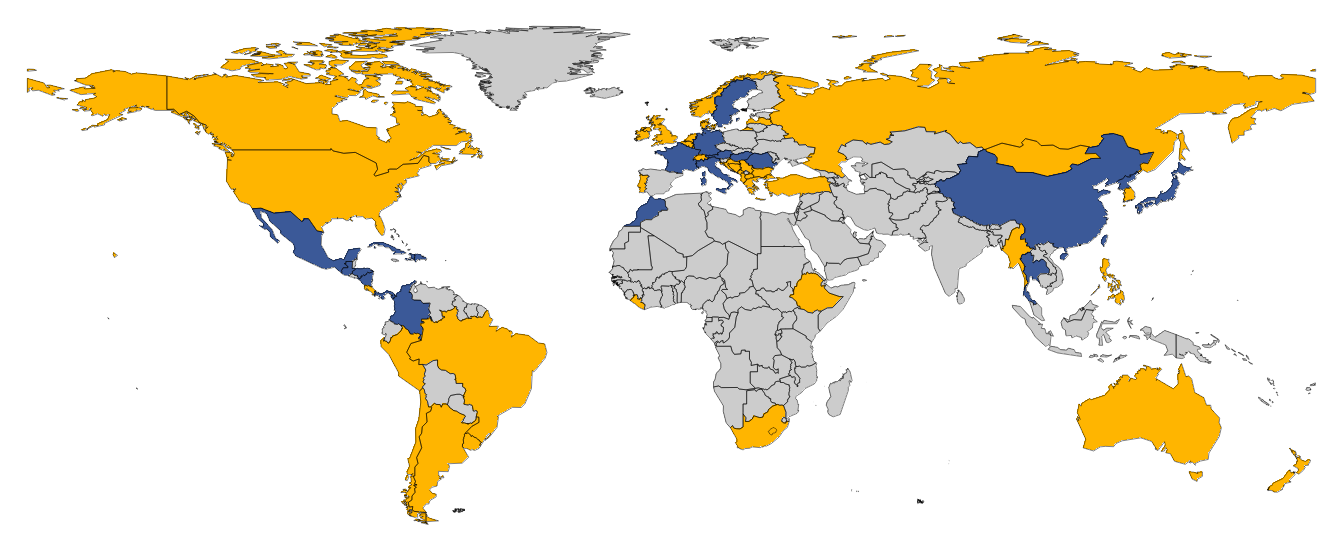}
    \caption{1900-1950}
    \end{subfigure}
    \hfill
    \begin{subfigure}{0.48\textwidth} 
    \includegraphics[width=\textwidth]{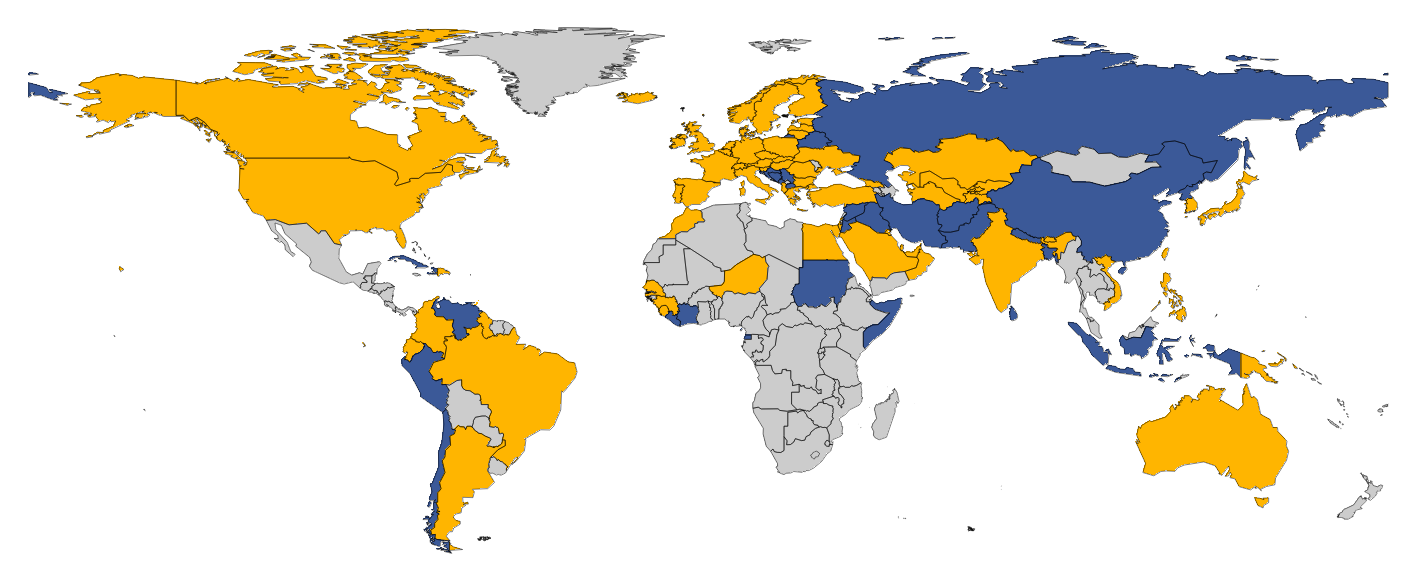}
    \caption{1990-2010}
    \end{subfigure}
    \caption[Output of our algorithm on the interstate dispute dataset]{ Clusters found by the algorithm from Chapter~\ref{chap:local} on the interstate dispute dataset.
    For any given time period and using the USA as a `seed', the algorithm returns two densely connected clusters representing groups of allied countries.
    The results are well explained by geopolitics, and capture the changing political picture during the 20th century.
    \label{fig:intro:disputegraph}}
\end{figure}

Although our highlighted examples come from different domains  and the method used to construct a graph is different in each case,
their common theme is that the clusters in the generated graphs have a significant high-level structure.
In this thesis we will see that these structures can be learned efficiently through our newly developed algorithms and that the cluster structure can be used to significantly improve the performance of spectral clustering.
Throughout the thesis, we will touch on a wide variety of techniques from the field of algorithmic spectral graph theory, which studies the matrix representations of graphs and applies the algebraic properties of these matrices to design efficient algorithms.
Along the way, we will examine the classical spectral clustering algorithm (Chapters~\ref{chap:tight}~and~\ref{chap:meta}), random walks and local algorithms (Chapter~\ref{chap:local}), and the recently emerging spectral hypergraph theory based on non-linear heat diffusion operators (Chapter~\ref{chap:hyper}).
The newly developed techniques are all extensively evaluated on real-world datasets including those introduced above, and the results demonstrate a wide applicability for learning the structure of clusters in natural datasets.
The remainder of the thesis is organised as follows.

\begin{itemize}
    \item Chapter~\ref{chap:prelim} introduces the basic mathematical concepts and notation which is used throughout the thesis, including a brief introduction to linear algebra, graphs, and hypergraphs. Additionally, this chapter introduces the field of spectral graph theory, on which most of the results in this thesis are based.
    \item Chapter~\ref{chap:related} describes the previous work which is most related to the methods and results presented in later chapters.
    This includes a detailed description of the history of spectral graph partitioning, techniques for local graph clustering, and spectral hypergraph theory.
    This chapter forms an important reference in order to understand the technical material of the thesis.
    \item Chapter~\ref{chap:tight} studies spectral clustering, and presents a tighter analysis on its performance. This result explains why spectral clustering works under a much weaker and more natural condition than the ones previously studied in the literature, and helps to close the gap between the theoretical guarantees of the spectral clustering algorithm and its excellent empirical performance.
    This chapter is largely based on Sections~3~and~4 of Macgregor and Sun~\cite{macgregorTighterAnalysisSpectral2022}.
    \item Chapter~\ref{chap:meta} builds on the theoretical guarantees of the previous chapter and shows that, when the clusters of the underlying graph have certain structures, spectral clustering with fewer than $k$ eigenvectors is able to produce better output than classical spectral clustering in which $k$ eigenvectors are employed, where $k$ is the number of clusters.
    This chapter includes experiments on the MNIST dataset and the BSDS dataset for image segmentation, and demonstrates improved clustering performance over previous methods.
    This chapter is based on Sections~5~and~6 of Macgregor and Sun~\cite{macgregorTighterAnalysisSpectral2022}.
    \item Chapter~\ref{chap:local} considers efficient learning of the structure of clusters with local algorithms, whose runtime depends only on the size of the target clusters and is independent of the underlying input graph.
    The experimental part of the chapter focuses on graphs like the interstate dispute dataset, in which clusters are \emph{densely} connected to each other, and demonstrates that certain structures of clusters can be learned efficiently in the local setting.
    This chapter is based on Macgregor and Sun~\cite{macgregorLocalAlgorithmsFinding2021}.
    \item Finally, Chapter~\ref{chap:hyper} studies the problem of learning densely connected clusters in hypergraphs.
    The developed algorithm is based on a new heat diffusion process, whose analysis extends a sequence of recent work on the spectral theory of hypergraphs.
    It allows the structure of clusters to be learned in datasets modelling higher-order relations of objects and can be applied to efficiently analyse many complex datasets occurring in practice.
    This chapter is based on Macgregor and Sun~\cite{macgregorFindingBipartiteComponents2021}. 
\end{itemize}

% For reference, I will try to use the following conventions.
% Integers - i, j, k, l, m
% Real numbers - a, b, \alpha, \lambda, \gamma
% Time - t
% Vertices - u, v, w
% Graphs - G
% Hypergraphs - H
% Functions - f, g, h, \sigma
% Perimutations - \pi
% Sets - S, V, E, A
% Edges - e, e_1 etc.
\begin{samepage}
\chapter{Preliminaries} \label{chap:prelim}
This chapter introduces the basic technical concepts which form the prerequisites for understanding the material in this thesis.
Although most of the concepts listed here are standard, everything is included so that the thesis is as self-contained as possible.
The chapter is divided into six sections.
Section~\ref{sec:prelim:notation} defines some basic mathematical notation;
Section~\ref{sec:prelim:la} gives a brief introduction to concepts from linear algebra; 
Section~\ref{sec:prelim:graphs} introduces graphs and spectral graph theory;
Section~\ref{sec:prelim:hypergraphs} defines hypergraphs;
Section~\ref{sec:prelim:algorithms} defines asymptotic notation used in the analysis of algorithms;
finally, Section~\ref{sec:prelim:metrics} defines the metrics used to evaluate clustering algorithms.

\section{Basic Notation} \label{sec:prelim:notation}
Let us first define some of the mathematical notation used throughout the thesis. 
The set of real numbers is given by $\R$, the set of integers is $\Z$, and the set of complex numbers is $\C$.
Occasionally, $\R_{\geq 0}$ or $\Z_{\geq 0}$ are used to restrict these sets to have non-negative values.
For any $n \in \Z_{\geq 0}$, the notation $[n]$ is used to mean $\{1, 2, \ldots, n\}$.
For some logical proposition $X$, the expression $\iverson{X}$ is equal to $1$ if $X$ is true, or $0$ otherwise.
For example, we say that $\iverson{a=b}$ is equal to $1$ if $a=b$, and it is equal to $0$ otherwise.
For any operators $f, g: \R^n \rightarrow \R^n$, we define $f \circ g:\R^n \rightarrow \R^n$ by  $\left(f \circ g\right)(v) \triangleq f(g(v))$ for any $v$.
Given two sets $\seta$ and $\setb$, the set difference is expressed as $\seta \setminus \setb$.
The \firstdef{symmetric difference} between the sets $A$ and $B$ is given by $\seta \symmetricdiff \setb \triangleq \left(\seta \setminus \setb\right) \union \left(\setb \setminus \seta\right)$.
When $\seta$ is a subset of some universe $\set{U}$,   the \firstdef{complement} of $\seta$ is given by $\setcomplement{\seta} \triangleq \set{U} \setminus \seta$.
For any set $\seta$, we   use $2^\seta$ to denote the \firstdef{power set} of $\seta$; that is, $2^\seta$ is the set consisting of all possible subsets of $\seta$.

\section{Linear Algebra}\label{sec:prelim:la}
The subject of this thesis lies at the intersection of linear algebra and graph theory.
\end{samepage}
As such, let us review some basic concepts in linear algebra.
A vector $\vecx \in \R^n$ is written as
$(\vecx(1), \vecx(2), \ldots, \vecx(n))^\transpose$.
Notice that we use $\vecx(i)$ to refer to the $i$-th element in the vector $\vecx$.
Subscripts, such as $\vecx_i$, are used to distinguish between different vectors.
For a vector $\vecx \in \R^n$ and a set $\seta \subseteq [n]$, let $\vecx(\seta) = \sum_{i \in A} \vecx(i)$.
For any vectors $\vecx, \vecy \in \R^n$, we write $\vecx \preceq \vecy$ if it holds for all $v$ that $\vecx(v) \leq \vecy(v)$.
We will use $\vec{1}_n \in \R^n$ to represent the vector with $\vec{1}_n(i) = 1$ for all $i \in [n]$ and $\vec{0}_n$ to represent the vector with $\vec{0}_n(i) = 0$.
For any vector $\vecx$, we define the \firstdef{support} of $\vecx$ to be $\supp(\vecx) = \{u : \vecx(u) \neq 0\}$.
Given any two vectors $\vecv, \vecu \in \R^n$, the \firstdef{inner product} of $\vecv$ and $\vecu$ is written as $\inner{\vecv}{\vecu}$ or $\vecv^\transpose \vecu$ and defined to be $\sum_{i = 1}^n \vecv(i) \vecu(i)$.
Throughout the thesis we will use the \firstdef{Euclidean norm}, defined for any $\vecv \in \R^n$ as
\[
    \norm{\vecv} = \sqrt{\inner{\vecv}{\vecv}} = \sqrt{\sum_{i = 1}^n \vecv(i)^2}.
\]
Given a set of vectors $\{\vecv_1, \ldots, \vecv_n\}$, their \firstdef{span} is denoted by $\mathrm{span}(\vecv_1, \ldots \vecv_n)$ and is defined to be the set of vectors $\vecx$ which can be written as a linear combination of $\vecv_1, \ldots, \vecv_n$.
That is,  $\vecx\in\mathrm{span}(\vecv_1,\ldots, \vecv_n)$ if and only if
\[
    \vecx = \sum_{i = 1}^n c_i \vecv_i
\]
for some numbers $c_1, \ldots, c_n \in \R$.

Given a matrix $\matm \in \R^{n \times m}$, the expression $\matm(i, j)$ is the element in the $i$-th row  and $j$-th column of the matrix. 
An \firstdef{eigenvector} of a matrix $\matm \in \R^{n \times n}$ is any vector $\vecv \in \R^n$ such that $\vecv \neq \vec{0}_n$ and
\[
\matm \vecv = \lambda \vecv
\]
for some $\lambda \in \R$, and $\lambda$ is called an \firstdef{eigenvalue}.
When discussing some matrix $\matm \in \R^{n \times n}$, we often use $\lambda_1, \ldots, \lambda_n$ to denote the eigenvalues of $\matm$ and $\vecf_1, \ldots \vecf_n$ for the eigenvectors.
Occasionally, we   consider matrices with complex-valued entries.
Recall that for any complex number $x = a + b \cdot \imag$, the \firstdef{complex conjugate} of $x$ is given by $\conjugate{x} = a - b \cdot \imag$.
For any $\matm \in \C^{n \times n}$, the \firstdef{conjugate transpose} of $\matm$ is written as $\matm^\conjtranspose$, and is constructed by taking the transpose of $\matm$ and replacing each entry with its complex conjugate.
A matrix $\matm$ is said to be \firstdef{Hermitian} if $\matm = \matm^\conjtranspose$.
A real matrix $\matm$ is \firstdef{positive semi-definite (PSD)} if and only if it holds that
\[
    \vecx^\transpose \matm \vecx \geq 0
\]
for all $\vecx \in \R^n$.
It is equivalent to say that all of the eigenvalues of $\matm$ are non-negative.

The following theorem, known as the Courant-Fischer Theorem, characterises the eigenvectors and eigenvalues of a symmetric matrix with respect to an optimisation over $\R^n$.
\begin{theorem}[Courant-Fischer Theorem] \label{thm:courantfischer}
Let $\mat{M}$ be a real symmetric matrix with eigenvalues $\lambda_1 \leq \ldots \leq \lambda_n$. Then
\[
    \lambda_k = \min_{\substack{S \subset \R^n \\ \mathrm{dim}(S) = k}} \hspace{0.5em} \max_{\substack{\vecx \in S \\ \vecx \neq \vec{0}_n}} \frac{\vecx^\transpose \mat{M} \vecx}{\vecx^\transpose \vecx},
\]
where the minimisation is taken over all $k$ dimensional subspaces of $\R^n$.
\end{theorem}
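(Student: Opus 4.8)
The plan is to reduce everything to the spectral theorem for real symmetric matrices: $\mat{M}$ admits an orthonormal basis of eigenvectors $\vecf_1, \ldots, \vecf_n \in \R^n$ with $\mat{M}\vecf_i = \lambda_i \vecf_i$, ordered so that $\lambda_1 \leq \ldots \leq \lambda_n$. Writing an arbitrary $\vecx \in \R^n$ in this basis as $\vecx = \sum_{i=1}^n c_i \vecf_i$, orthonormality gives $\vecx^\transpose \vecx = \sum_i c_i^2$ and $\vecx^\transpose \mat{M}\vecx = \sum_i \lambda_i c_i^2$, so the Rayleigh quotient is simply the weighted average $\frac{\vecx^\transpose \mat{M}\vecx}{\vecx^\transpose \vecx} = \frac{\sum_i \lambda_i c_i^2}{\sum_i c_i^2}$ of the eigenvalues. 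Every estimate below then follows by controlling which coefficients $c_i$ are permitted to be nonzero. I would prove the two inequalities $\lambda_k \geq \min_S \max_{\vecx}(\cdot)$ and $\lambda_k \leq \min_S \max_{\vecx}(\cdot)$ separately.

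For the upper bound I would exhibit one good subspace. Take $S = \mathrm{span}(\vecf_1, \ldots, \vecf_k)$, which has dimension $k$. Any nonzero $\vecx \in S$ has $c_i = 0$ for $i > k$, so $\vecx^\transpose \mat{M}\vecx = \sum_{i=1}^k \lambda_i c_i^2 \leq \lambda_k \sum_{i=1}^k c_i^2 = \lambda_k \, \vecx^\transpose \vecx$. Hence the inner maximum over this particular $S$ is at most $\lambda_k$, and therefore the outer minimum over all $k$-dimensional subspaces is also at most $\lambda_k$.

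For the matching lower bound I would show that \emph{every} $k$-dimensional subspace $S$ contains a nonzero vector with Rayleigh quotient at least $\lambda_k$. Consider $T = \mathrm{span}(\vecf_k, \vecf_{k+1}, \ldots, \vecf_n)$, of dimension $n - k + 1$. If $S \cap T = \{\vec{0}_n\}$ then $\dim(S + T) = \dim(S) + \dim(T) = k + (n - k + 1) = n + 1 > n$, which is impossible since $S + T \subseteq \R^n$; so there exists a nonzero $\vecx \in S \cap T$. For this $\vecx$ we have $c_i = 0$ for $i < k$, whence $\vecx^\transpose \mat{M}\vecx = \sum_{i=k}^n \lambda_i c_i^2 \geq \lambda_k \sum_{i=k}^n c_i^2 = \lambda_k \, \vecx^\transpose \vecx$. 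Thus $\max_{\vecx \in S, \vecx \neq \vec{0}_n} \frac{\vecx^\transpose \mat{M}\vecx}{\vecx^\transpose \vecx} \geq \lambda_k$ for every $k$-dimensional $S$, so the outer minimum is at least $\lambda_k$. Combining the two bounds gives the stated equality.

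The only step with any genuine content is the dimension count forcing $S \cap T \neq \{\vec{0}_n\}$; everything else is bookkeeping once $\mat{M}$ is diagonalised. I would also remark, as a minor point, that the outer minimum is actually attained — by the subspace $\mathrm{span}(\vecf_1, \ldots, \vecf_k)$ used in the upper-bound argument — which is what justifies writing $\min$ rather than $\inf$ in the statement.
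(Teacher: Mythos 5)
Your proof is correct: it is the standard argument via the spectral theorem, with the upper bound witnessed by $\mathrm{span}(\vecf_1,\ldots,\vecf_k)$ and the lower bound obtained from the dimension count forcing every $k$-dimensional $S$ to meet $\mathrm{span}(\vecf_k,\ldots,\vecf_n)$ nontrivially. The paper states this classical result in its preliminaries without proof, so there is nothing to compare against; your argument is the canonical one and is complete.
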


\section{Graphs} \label{sec:prelim:graphs}
An \firstdef{undirected graph} $\geqvew$ is defined by a set of vertices $\vertexset$, a set of edges $\edgeset$, and a weight function $\weight: \vertexset \times \vertexset \rightarrow \R_{\geq 0}$.
Every edge $e \in \edgeset$ is an unordered pair $\{v, u\}$ where $v, u \in \vertexset$ and $v \connected u$ is used to mean that $\{v, u\} \in \edgeset$.
The weight function $\weight$ is defined such that $\weight(v, u) > 0$ if $\{v, u\} \in \edgeset$ and $\weight(v, u) = 0$ otherwise.
Sometimes, the weight function $w$ is omitted when discussing a graph and in this case the weight of every edge can be taken to be $1$.
In this thesis, we consider only graphs with non-negative edge weights.
A \firstdef{directed graph} is defined similarly to an undirected graph, but in this case each edge $e \in \edgeset$ is an \emph{ordered} pair $(u, v)$. 
Unless otherwise specified, the graphs defined in this thesis are undirected.
Sometimes, the vertex and edge sets of different graphs are distinguished by subscript, such as $\graphg = (\vertexset_\graphg, \edgeset_\graphg, \weight_\graphg)$ and $\graphh = (\vertexset_\graphh, \edgeset_\graphh, \weight_\graphh)$.

Given some graph $\geqvew$, the number of vertices in a graph is commonly given by $n = \cardinality{\vertexset}$.
The \firstdef{degree} of any vertex $v \in \vertexset$ is defined by $\deg(v) \triangleq \sum_{u \in \vertexset} \weight(v,u)$,
and the set of neighbours of $v$ is $\neighbors(v) \triangleq \{u \in \vertexset : \{v, u\} \in \edgeset \}$.
When $\geqve$ is a directed graph, for any $v \in \vertexset$ we use $\deg_\Out(v)$ and $\deg_\In(v)$ to express the number of edges with $v$ as the tail or the head, respectively. 
For any $\sets \subset \vertexset$, we define
$\vol_\Out(\sets) = \sum_{v \in \sets} \deg_\Out(v)$, and
$\vol_\In(\sets) = \sum_{v \in \sets} \deg_\In(v)$.

For an undirected graph $\geqve$ and any $\sets \subseteq \vertexset$, the \firstdef{volume} of $\sets$ is defined by $\vol(\sets) \triangleq \sum_{v \in \sets} \deg(v)$,
and for two sets $\sets_1, \sets_2 \subseteq \vertexset$ let us define the weight between $\sets_1$ and $\sets_2$ by
\[
    \weight(\sets_1, \sets_2) \triangleq \sum_{v \in \sets_1} \sum_{u \in \sets_2} \weight(v, u).
\]
The \firstdef{conductance} of $\sets \subset \vertexset$ is given by
\[
\cond(\sets) \triangleq \frac{\weight(\sets, \setcomplement{\sets})}{\min\{\vol(\sets),\vol(\setcomplement{\sets})\}}.
\]
 
Notice that if $\cond(\sets) = 0$, then $\sets$ is disconnected from the rest of the graph.
Loosely, if $\cond(\sets)$ is low, then there are only a small number of edges connecting $\sets$ to the rest of the graph and we can say that $\sets$ forms a cluster in the graph. 
The conductance of the graph $\geqve$ is given by
\[
    \cond_\graphg \triangleq \min_{\substack{\sets \subsetneq \vertexset \\ \sets \neq \emptyset}} \cond(\sets).
\]
The \firstdef{boundary} of $\sets \subset \vertexset$ is given by
\[
    \partial(\sets) \triangleq \{ \{u, v\} \in \edgeset : u \in \sets, v \not\in \sets\}.
\]
The \firstdef{indicator vector} of a set $\sets \subseteq \vertexset$ is given by $\indicatorvec_\sets \in \R^n$, where
\[
    \indicatorvec_\sets(v) = \twopartdefowfs{1}{v \in \sets}{0}
\]
If $\sets$ contains only a single vertex $v$, we can write $\indicatorvec_v$ instead of $\indicatorvec_{\{v\}}$.

A graph $\geqve$ is said to be \firstdef{bipartite} if the vertex set can be decomposed into disjoint sets $\setl$ and $\setr$ such that
every edge $e \in \edgeset$ connects a vertex in $\setl$ with a vertex in $\setr$.
That is, for all $e \in \edgeset$, it holds that $e \intersect \setl \neq \emptyset$ and $e \intersect \setr \neq \emptyset$.
A \firstdef{connected component} of $\graphg$ is any maximal set of vertices $\sets \subseteq \vertexset$ such that there is a path from any vertex in $\sets$ to every other vertex in $\sets$.

\subsection{Spectral Graph Theory} \label{sec:prelim:sgt}
Having introduced linear algebra and graph theory, let us now take their intersection.
Spectral graph theory is the study of the eigenvalues and eigenvectors of various matrices associated with a graph.
This section gives a brief introduction to the key concepts in spectral graph theory, and Chapter~\ref{chap:related} includes several specific results which are used in this thesis.
For a comprehensive overview of the field, we refer the reader to Chung's excellent book~\cite{chungSpectralGraphTheory1997}.

Given a graph $\geqvew$ with $n$ vertices, let us label the vertices as $\vertexset = \{v_1, v_2, \ldots, v_n\}$.
Then, the \firstdef{adjacency matrix} of $\graphg$ is given by $\adj \in \R^{n \times n}$ where $\adj(i, j) = \weight(v_i, v_j)$.
The \firstdef{degree matrix} $\degm \in \R^{n \times n}$ is a diagonal matrix whose non-zero entries are $\degm(i, i) = \deg(v_i)$ for $i \in [n]$.
A key object of study in spectral graph theory is the \firstdef{Laplacian matrix} of $\graphg$ which is defined to be
\[
    \lap = \degm - \adj.
\]
The most useful property of this operator is the expression given by its \firstdef{quadratic form}.
Given some vector $\vecx \in \R^n$, the Laplacian quadratic form is given by
\[
    \vecx^\transpose \lap \vecx = \sum_{\{u, v\} \in \edgeset} \weight(u, v) \left(\vecx(u) - \vecx(v)\right)^2.
\]
It is also common to instead consider the \firstdef{normalised Laplacian matrix}, which is defined as
\[
    \lapn = \degmhalfneg \lap \degmhalfneg = \identity - \degmhalfneg \adj \degmhalfneg.
\]
We also use the \firstdef{signless Laplacian operator} defined as
\[
\signlap \triangleq \degm + \adj,
\]
and its normalised counterpart 
\[
\signlapn \triangleq \degmhalfneg \signlap \degmhalfneg.
\]

Many results in spectral graph theory focus on the eigenvectors and eigenvalues of the normalised Laplacian matrix.
The $n$ eigenvalues are often given in order of increasing value, and are represented by $\lambda_1 \leq \lambda_2 \leq \ldots \leq \lambda_n$.
The corresponding eigenvectors are often denoted by $\vecf_1, \ldots, \vecf_n$.
Notice that the Laplacian matrix uniquely defines the original graph, and so the eigenvectors and eigenvalues of the Laplacian fully describe the graph.
 A recurring theme in 
spectral graph theory is that, in many cases, we can tell a lot about a graph from only a small number of eigenvalues and eigenvectors.  
The following basic results in spectral graph theory are not difficult to prove, and can  be found in~\cite{chungSpectralGraphTheory1997}.
\begin{enumerate}
    \item The eigenvalues of $\lapn$ all lie between $0$ and $2$.
    \item The smallest eigenvalue of $\lapn$ is always equal to $0$.
    \item If the graph has $k$ connected components, then the smallest eigenvalue of $\lapn$ has multiplicity $k$.
    \item The largest eigenvalue of $\lapn$ is equal to $2$ if and only if the graph has a bipartite connected component.
\end{enumerate}

\begin{figure}[t]
    \centering
    \includegraphics[width=0.27\textheight]{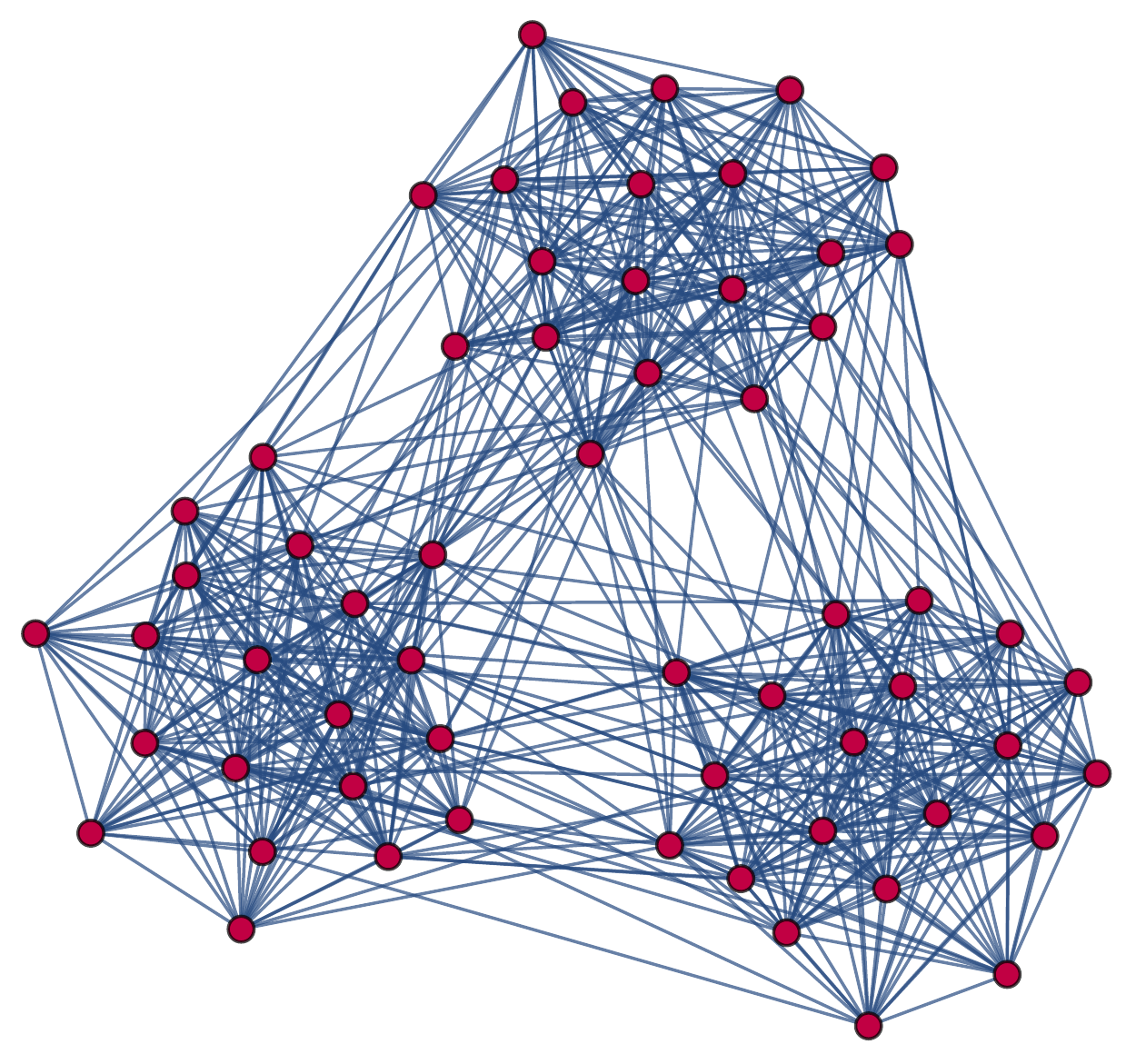}
    \caption[A random graph from the stochastic block model]{
    A random graph sampled from $\mathrm{SBM}(60, 3, 0.9, 0.05)$.
    }
    \label{fig:sbmExample}
\end{figure}

\subsection{The Stochastic Block Model} \label{sec:prelim:sbm}
It is useful to evaluate the performance of graph clustering algorithms
on graphs with a known ground-truth clustering.
A popular model for generating such random graphs is the \firstdef{stochastic block model (SBM)}, and we use several variants of this model throughout the thesis.
For $n, k \in \Z_{\geq 0}$ and $p, q \in [0, 1]$ the model $\mathrm{SBM}(n, k, p, q)$ defines a random graph on $n$ vertices in which the vertices are split evenly between $k$ clusters $\sets_1, \ldots, \sets_k$.
Every pair of vertices $u$ and $v$ are connected with probability
\begin{itemize}
    \item  $p$ if they are in the same cluster, and
    \item $q$ otherwise.
\end{itemize}
Figure~\ref{fig:sbmExample} shows an example graph drawn from the stochastic block model.
When $p > q$, the stochastic block model typically generates graphs with a ground-truth cluster structure which makes it a useful model for evaluating the performance of graph clustering algorithms.
For the interested reader, there is an extensive literature on several variants of the stochastic block model~\cite{abbeCommunityDetectionStochastic2017, kanadeGlobalLocalInformation2016}.

\section{Hypergraphs} \label{sec:prelim:hypergraphs}
A \firstdef{hypergraph} $\graphh = (\vertexset, \edgeset, \weight)$ is defined by a vertex set $\vertexset$, a set of edges $\edgeset \subseteq 2^\vertexset$ and a weight function $w: \edgeset \rightarrow \R_{\geq 0}$.
The difference between a graph and a hypergraph is   that each edge in a graph connects exactly two vertices while edges in a hypergraph can contain any number of vertices.
Note that this means a graph is a special case of a hypergraph.
Figure~\ref{fig:hypergraphExample} shows an example hypergraph.
\begin{figure}[t]
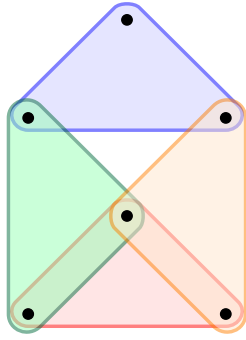

    \centering
    \scalebox{1.3}{\tikzfig{figures/preliminaries/hypergraphExample}}
    \caption[An example hypergraph]{
    An example of a $3$-uniform hypergraph with $6$ vertices and $4$ edges
    }
    \label{fig:hypergraphExample}
\end{figure}
Let $\graphh=(\vertexset, \edgeset, \weight)$ be a hypergraph with $n=\cardinality{\vertexset}$ vertices.
For any vertex $v\in \vertexset$, the \firstdef{degree} of $v$ is defined by $\deg(v) \triangleq \sum_{e \in \edgeset} \weight(e)\cdot \iverson{v\in e}$, recalling that $\iverson{X}=1$ if event $X$ holds and $\iverson{X}=0 $ otherwise.
The \firstdef{rank} of edge $e\in \edgeset$ is defined to be the total number of vertices in $e$ and is written as $\rank(e)$.
A hypergraph is said to be \firstdef{$\bm{r}$-uniform} if every edge $e \in \edgeset$ has rank $r$.

\section{Asymptotic Notation} \label{sec:prelim:algorithms}
When studying the running time of algorithms, we are often interested in how the running time grows as a function of the size of the algorithm's input.
A useful tool to this end is \firstdef{asymptotic notation}.
Suppose we are given two functions $f, g: \R \rightarrow \R$.
Then, we say that
\[
    f(x) = \bigo{g(x)}
\]
if there exist constants $c, t \in \R$ such that for all $x \geq t$, it holds that $f(x) \leq c \cdot g(x)$.
For example, if $f(n)$ gives the running time of some algorithm for an input of size $n$, then $f(n) = \bigo{n}$ tells us that for sufficiently large inputs, the running time of the algorithm is at most linear in $n$.
On the other hand,
\[
    f(x) = \bigomega{g(x)}
\]
tells us that there exist constants $c, t \in \R$ such that for all $x \geq t$, it holds that $f(x) \geq c \cdot g(x)$.
In other words, $f(x) = \bigomega{g(x)}$ if and only if $g(x) = \bigo{f(x)}$.
Finally, we can say that
\[
    f(x) = \bigtheta{g(x)}
\]
if $f(x) = \bigo{g(x)}$ and $f(x) = \bigomega{g(x)}$.
When discussing the running time of algorithms, it is common to use $\poly(n)$ to represent some polynomial in $n$ and $\polylog(n)$ for some  polynomial in $\log(n)$.
Intuitively, $\bigo{\cdot}$ gives an upper-bound on some function, and hides a constant factor.
We can take this one step further and use $\bigotilde{\cdot}$ to hide constants and logarithmic factors.
For example, suppose $f(x) = 100 \cdot x \log(x)$. Then, we could say that $f(x) = \bigo{x \log(x)}$ or $f(x) = \bigotilde{x}$.

Occasionally, we will use asymptotic notation in order to hide a fixed constant when we are not discussing
the asymptotic behaviour of a function.
For example, $x = \Omega(y)$ means that there exists a universal constant $c$ such that $x \geq c y$.
This use of asymptotic notation is common throughout theoretical computer science~\cite{leeMultiwaySpectralPartitioning2014, pengPartitioningWellClusteredGraphs2017}.

\section{Clustering Metrics} \label{sec:prelim:metrics}
When evaluating the performance of a clustering algorithm, we would like to measure the similarity between the ground truth clusters and the clusters returned by the algorithm.
The \firstdef{Rand Index}~\cite{randObjectiveCriteriaEvaluation1971}, also referred to as the \firstdef{clustering accuracy}, is a common metric used for this purpose.

Suppose we are given a dataset with $n$ data points and $k$ clusters.
The ground truth clusters are given by $\setc_1, \ldots, \setc_k$ and the clusters returned by our algorithm are $\sets_1, \ldots, \sets_k$.
Let $\sigma_{\mathrm{C}}: [n] \xrightarrow[]{} [k]$ be a function which maps each data point to its ground truth cluster.
That is, if $j \in \setc_i$ then $\sigma_{\mathrm{C}}(j) = i$.
Let $\sigma_{\mathrm{S}}: [n] \xrightarrow[]{} [k]$ be the equivalent function for the clusters $\sets_1, \ldots, \sets_k$.
Then, let
\[
    \mathrm{TP} \triangleq \cardinality{\{(i, j) \in [n] \times [n] : i \neq j \land \sigma_{\mathrm{C}}(i) = \sigma_{\mathrm{C}}(j) \land \sigma_{\mathrm{S}}(i) = \sigma_{\mathrm{S}}(j) \}}
\]
be the number of pairs of data points which are in the same ground truth cluster and the same cluster returned by the algorithm.
Similarly, let 
\[
    \mathrm{TN} \triangleq \cardinality{\{(i, j) \in [n] \times [n] : i \neq j \land \sigma_{\mathrm{C}}(i) \neq \sigma_{\mathrm{C}}(j) \land \sigma_{\mathrm{S}}(i) \neq \sigma_{\mathrm{S}}(j) \}}
\]
be the number of pairs of data points which the algorithm correctly identifies as belonging to different clusters.
Then, the Rand Index is
\[
    \mathrm{RI} \triangleq \frac{\mathrm{TP} + \mathrm{TN}}{\binom{n}{2}}.
\]
That is, the Rand Index measures the proportion of pairs of data points which the algorithm correctly classifies as belonging to the same cluster or different clusters.
While this is a natural definition, notice that even a baseline algorithm which assigns data points to clusters at random will achieve a non-zero Rand Index.
To correct for this, the \firstdef{Adjusted Rand Index}~\cite{gatesImpactRandomModels2017} normalises the Rand Index by the expected value for a random assignment.
Further defining
\[
    \mathrm{FP} \triangleq \cardinality{\{(i, j) \in [n] \times [n] : i \neq j \land \sigma_{\mathrm{C}}(i) \neq \sigma_{\mathrm{C}}(j) \land \sigma_{\mathrm{S}}(i) = \sigma_{\mathrm{S}}(j) \}}
\]
and 
\[
    \mathrm{FN} \triangleq \cardinality{\{(i, j) \in [n] \times [n] : i \neq j \land \sigma_{\mathrm{C}}(i) = \sigma_{\mathrm{C}}(j) \land \sigma_{\mathrm{S}}(i) \neq \sigma_{\mathrm{S}}(j) \}},
\]
the Adjusted Rand Index is given by
\[
    \mathrm{ARI} \triangleq \frac{2 \left( \mathrm{TP} \cdot \mathrm{TN} - \mathrm{FP} \cdot \mathrm{FN} \right) }{ ( \mathrm{TP} + \mathrm{FN}) (\mathrm{TN} + \mathrm{FN}) + (\mathrm{TN} + \mathrm{FP}) (\mathrm{TP} + \mathrm{FP})}.
\]
In contrast with the Rand Index, which gives a value between $0$ and $1$, the Adjusted Rand Index can give negative values.
A random assignment of data points to clusters will give an expected Adjusted Rand Index of $0$.

\chapter{Related Work} \label{chap:related}
The goal of this chapter is to provide a broad overview of the techniques related to the algorithms developed later in the thesis,  and to serve as an introduction to spectral graph partitioning, local graph clustering, and spectral hypergraph theory.
As well as placing the novel work of this thesis in the context of the related work in the field, this chapter also introduces several technical concepts which are built on in the later chapters.
The structure of the chapter is given below.
\begin{itemize}
    \item Section~\ref{sec:related:graph_partitioning} gives an introduction to spectral graph partitioning, and describes several key points in the development of the field which underpins this thesis.
    \item Section~\ref{sec:related:structured} describes some recently developed techniques for finding clusters in graphs with specific structures of clusters.
    \item Section~\ref{sec:related:local} introduces local graph clustering and describes two key algorithms based on the personalised \pagerank\ and evolving set process, respectively.
    \item Section~\ref{sec:related:hypergraph_spectral} gives a brief overview of spectral hypergraph theory, with a focus on the hypergraph Laplacian operator recently developed by Chan~\etal~\cite{chanSpectralPropertiesHypergraph2018}.
\end{itemize}

\section{Spectral Graph Partitioning} \label{sec:related:graph_partitioning}
Spectral graph partitioning is the use of the eigenvectors of graph matrices to help with finding partitions (or clusters) in graphs.
The roots of this idea stretch back to the 1970s, when Donath and Hoffman~\cite{donathAlgorithmsPartitioningGraphs1972} first proposed to use the eigenvectors of the adjacency matrix for partitioning, although they gave no theoretical justification for this choice.

\subsection{Fiedler Vector}
In 1973, Fiedler~\cite{fiedlerAlgebraicConnectivityGraphs1973} proved a number of results relating the second smallest eigenvalue of the graph Laplacian to the combinatorial connectivity of the graph.
This gave the first theoretical justification for using eigenvectors in graph partitioning.
Many authors refer to the eigenvector corresponding to the second smallest eigenvalue of the Laplacian as the \firstdef{Fiedler vector} of the graph.
The corresponding eigenvalue is sometimes referred to as the \firstdef{algebraic connectivity} of the graph.

\subsection{Cheeger Inequality}
In parallel to this, Cheeger~\cite{cheegerLowerBoundSmallest1970} studied the connectivity of Riemannian manifolds, and proved an inequality relating the second smallest eigenvalue of the Laplacian operator to the connectivity of the manifold.
The significance of Cheeger's result for graph partitioning was demonstrated several years later, when Dodziuk~\cite{dodziukDifferenceEquationsIsoperimetric1984} and Alon~\cite{alonEigenvaluesExpanders1986} independently proved (slightly weaker) versions of Cheeger's inequality in the graph setting.
Chung~\cite{chungLaplaciansGraphsCheeger1996} later tightened the result and proved the following theorem, which is commonly referred to as the \firstdef{Cheeger inequality} in the context of spectral graph theory.

\begin{theorem} [Cheeger Inequality] \label{thm:cheeger}
Let $\graphg$ be a graph and $\lapn_\graphg$ be its normalised Laplacian matrix with eigenvalues $0 = \lambda_1 \leq \lambda_2 \leq \ldots \leq \lambda_n$.
Then,
\[
    \frac{\lambda_2}{2} \leq \cond_\graphg \leq \sqrt{2 \lambda_2},
\]
where $\cond_\graphg$ is the conductance of $G$.
\end{theorem}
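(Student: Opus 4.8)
The plan is to prove the two inequalities of Theorem~\ref{thm:cheeger} separately: the lower bound $\cond_\graphg \geq \lambda_2/2$ follows quickly from the variational (Courant--Fischer) characterisation of $\lambda_2$, whereas the upper bound $\cond_\graphg \leq \sqrt{2\lambda_2}$ requires a \emph{rounding} argument that turns the second eigenvector of $\lapn$ into a low-conductance set, and this is where essentially all of the work lies.

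For the lower bound, I would take a set $\sets$ with $\cond(\sets) = \cond_\graphg$ and, without loss of generality, $\vol(\sets) \leq \vol(\vertexset)/2$, and feed a shifted indicator vector into the Rayleigh quotient of $\lapn$. Concretely, set $\vecy = \indicatorvec_\sets - \tfrac{\vol(\sets)}{\vol(\vertexset)}\constvec$ and $\vecx = \degmhalf\vecy$; the shift is chosen precisely so that $\vecx$ is orthogonal to the bottom eigenvector $\degmhalf\constvec$ of $\lapn$. Using $\vecx^\transpose\lapn\vecx = \sum_{\{u,v\}\in\edgeset}\weight(u,v)(\vecy(u)-\vecy(v))^2$ and $\vecx^\transpose\vecx = \sum_{v\in\vertexset}\deg(v)\vecy(v)^2$, the numerator equals $\weight(\sets,\comp{\sets})$ because the constant cancels in each difference, and a one-line computation gives the denominator $\vol(\sets)\vol(\comp{\sets})/\vol(\vertexset) \geq \vol(\sets)/2$. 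Applying the Courant--Fischer theorem (Theorem~\ref{thm:courantfischer}) to the two-dimensional subspace spanned by $\degmhalf\constvec$ and $\vecx$ then gives $\lambda_2 \leq \frac{\vecx^\transpose\lapn\vecx}{\vecx^\transpose\vecx} \leq 2\cond(\sets) = 2\cond_\graphg$.

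For the upper bound, I would start from an eigenvector $\vecf_2$ of $\lapn$ for $\lambda_2$ and pass to $\vecg = \degmhalfneg\vecf_2$, which satisfies $\sum_{\{u,v\}\in\edgeset}\weight(u,v)(\vecg(u)-\vecg(v))^2 = \lambda_2\sum_{v\in\vertexset}\deg(v)\vecg(v)^2$ and, because $\vecf_2\perp\degmhalf\constvec$, $\sum_{v\in\vertexset}\deg(v)\vecg(v) = 0$. Sorting the vertices so that $\vecg$ is non-decreasing along $v_1,\ldots,v_n$, I would subtract from $\vecg$ the constant equal to its value at the volume-median vertex; this leaves the numerator unchanged and, using $\sum_v\deg(v)\vecg(v) = 0$, can only increase the denominator, so the shifted vector still has Rayleigh-type quotient at most $\lambda_2$, and now both $\{v : \vecg(v) > 0\}$ and $\{v : \vecg(v) < 0\}$ have volume at most $\vol(\vertexset)/2$. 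Splitting the shifted vector into its positive and negative parts $\vecg_+$ and $\vecg_-$, the weighted norm splits exactly, $\sum_v\deg(v)\vecg(v)^2 = \sum_v\deg(v)\vecg_+(v)^2 + \sum_v\deg(v)\vecg_-(v)^2$, while $(\vecg(u)-\vecg(v))^2 \geq (\vecg_+(u)-\vecg_+(v))^2 + (\vecg_-(u)-\vecg_-(v))^2$ for every edge; the mediant inequality then yields one of $\vecg_+,\vecg_-$ --- call it $\vecz$ --- with $\vecz\geq\zerovec$, $\vol(\supp(\vecz)) \leq \vol(\vertexset)/2$, and $\sum_{\{u,v\}\in\edgeset}\weight(u,v)(\vecz(u)-\vecz(v))^2 \leq \lambda_2\sum_{v\in\vertexset}\deg(v)\vecz(v)^2$.

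The crux --- and the step I expect to be the main obstacle --- is the rounding lemma: any non-negative $\vecz$ with $\vol(\supp(\vecz)) \leq \vol(\vertexset)/2$ and $\sum_{\{u,v\}\in\edgeset}\weight(u,v)(\vecz(u)-\vecz(v))^2 \leq \lambda_2\sum_v\deg(v)\vecz(v)^2$ has a threshold set $\sets_t = \{v : \vecz(v) > t\}$ with $\cond(\sets_t) \leq \sqrt{2\lambda_2}$. I would prove this by applying Cauchy--Schwarz to $\sum_{\{u,v\}\in\edgeset}\weight(u,v)\lvert\vecz(u)^2 - \vecz(v)^2\rvert = \sum_{\{u,v\}\in\edgeset}\weight(u,v)\lvert\vecz(u)-\vecz(v)\rvert\,\lvert\vecz(u)+\vecz(v)\rvert$, bounding the first factor by $\bigl(\lambda_2\sum_v\deg(v)\vecz(v)^2\bigr)^{1/2}$ and, via $(\vecz(u)+\vecz(v))^2 \leq 2\vecz(u)^2 + 2\vecz(v)^2$, the second by $\bigl(2\sum_v\deg(v)\vecz(v)^2\bigr)^{1/2}$, to obtain $\sum_{\{u,v\}\in\edgeset}\weight(u,v)\lvert\vecz(u)^2-\vecz(v)^2\rvert \leq \sqrt{2\lambda_2}\sum_v\deg(v)\vecz(v)^2$. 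A co-area identity --- $\sum_{\{u,v\}\in\edgeset}\weight(u,v)\lvert\vecz(u)^2-\vecz(v)^2\rvert = \int_0^\infty \weight(\sets_{\sqrt{t}},\comp{\sets_{\sqrt{t}}})\,\mathrm{d}t$ and $\sum_v\deg(v)\vecz(v)^2 = \int_0^\infty \vol(\sets_{\sqrt{t}})\,\mathrm{d}t$ --- then forces some level $\sqrt{t}$ with $\weight(\sets_{\sqrt{t}},\comp{\sets_{\sqrt{t}}}) \leq \sqrt{2\lambda_2}\,\vol(\sets_{\sqrt{t}})$, and since $\sets_{\sqrt{t}} \subseteq \supp(\vecz)$ has volume at most $\vol(\vertexset)/2$ this says exactly $\cond(\sets_{\sqrt{t}}) \leq \sqrt{2\lambda_2}$, hence $\cond_\graphg \leq \sqrt{2\lambda_2}$. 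The points that need care are checking that the support-volume bound really does survive the median shift and the positive/negative split, and running the averaging step so that a non-empty threshold set is actually produced.
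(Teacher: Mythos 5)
Your proposal is correct: the lower bound via the shifted indicator test vector and Courant--Fischer, and the upper bound via the median shift, positive/negative split, Cauchy--Schwarz, and the co-area (threshold-set) rounding, is exactly the standard argument behind this classical result, which the paper itself only cites (Chung~\cite{chungLaplaciansGraphsCheeger1996}) rather than proves. Your rounding lemma is precisely what makes the upper bound constructive and yields the sweep-set procedure (Algorithm~\ref{alg:sweepset}) described in the paper, so there is no substantive divergence to report.
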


The theorem relates the second smallest eigenvalue of the graph Laplacian to the conductance of the graph.
While the theorem itself already provides significant evidence that the Laplacian spectrum may be helpful when partitioning graphs, its proof gives even more.
The proof of the upper bound of Theorem~\ref{thm:cheeger} is constructive: it presents an algorithm which, given a graph $\geqve$, finds a set $\sets \subset \vertexset$ such that $\cond(\sets) \leq \sqrt{2 \lambda_2}$.
The algorithm is formally described in Algorithm~\ref{alg:sweepset}, and for any given graph the partition returned by this algorithm is often called the \firstdef{Cheeger cut} of the given graph.

\begin{algorithm}[b]
\caption[The sweep set procedure: \algsweepset$(\graphg)$]{\algsweepset$(\graphg)$ \label{alg:sweepset}}
Compute the second smallest eigenvalue $\lambda_2$ of $\lapn_\graphg$, and its eigenvector $\vecf_2$ \\
 Sort the vertices $v_1, \ldots, v_n \in \vertexset_\graphg$ such that $\vecf_2(v_1) / \sqrt{\deg(v_1)} \leq \ldots \leq \vecf_2(v_n) / \sqrt{\deg(v_n)} $ \\
Let $i = \argmin_{i \in [n-1]} \cond_\graphg(\{v_1, \ldots, v_i\})$ \\
Return $\{v_1, \ldots, v_i\}$
\end{algorithm}

It is worth noting that both sides of the Cheeger inequality are tight up to a constant, as demonstrated in Examples~\ref{ex:cheeg_lower}~and~\ref{ex:cheeg_upper}.

\vspace{1em}

\noindent
\begin{minipage}{0.65\textwidth}
\begin{example} [Tightness of Cheeger Inequality lower bound]\label{ex:cheeg_lower}
Let $\graphg$ be the $d$-dimensional hypercube graph.
Then,
\[
    \lambda_2 = \frac{2}{d}
\]
and
\[
    \cond(\graphg) = \frac{2^{d-1}}{d \cdot  2^{d-1}} = \frac{1}{d} = \frac{\lambda_2}{2},
\]
which demonstrates that the lower bound is tight.
\end{example}
\end{minipage}
\begin{minipage}{0.3\textwidth}
\vspace{0.5em}
\centering
\captionsetup{type=figure}
        \scalebox{0.9}{\tikzfig{figures/intro/hypercubeGraph}}
        
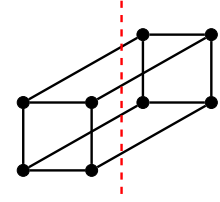
\captionof{figure}[A hypercube graph and the cut with minimum conductance]{
        A hypercube graph and the cut with minimum conductance.
        }
\end{minipage}

\vspace{1em}

\noindent
\begin{minipage}{0.65\textwidth}
\begin{example} [Tightness of Cheeger Inequality upper bound]\label{ex:cheeg_upper}
Let $\graphg$ be the cycle graph on $n$ vertices.
Then,
\[
    \lambda_2 = 2 - 2 \cos\left(\frac{2 \pi}{n}\right) = \bigo{\frac{1}{n^2}},
\]
using the fact that $1 - \cos(x) = 2 \sin^2(x/2)$ and $\sin(x) \leq x$.
Moreover, we have that
\[
    \cond_\graphg = \frac{2}{n} = \bigomega{\sqrt{\lambda_2}},
\]
which shows that the Cheeger inequality is tight up to a constant.
\end{example}
\end{minipage}
\begin{minipage}{0.3\textwidth}
\centering
\captionsetup{type=figure}
\scalebox{0.9}{\tikzfig{figures/intro/cycleGraph}}

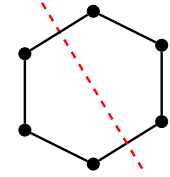
\captionof{figure}[A cycle graph and the cut with minimum conductance]{
A cycle graph and the cut with minimum conductance.}
\end{minipage}

\vspace{1em}

Finally, Kwok~\etal~\cite{kwokImprovedCheegerInequality2013a} gave the following improved version of the Cheeger inequality, which shows that $\lambda_2$ better approximates the conductance $\cond_\graphg$ when there is a large gap between $\lambda_2$ and $\lambda_3$. 
\begin{theorem}[Improved Cheeger Inequality]
Let $\graphg$ be a graph, and $\lapn_\graphg$ be its normalised Laplacian matrix with eigenvalues $0 = \lambda_1 \leq \lambda_2 \leq \ldots \leq \lambda_n$.
Then, it holds for any $2 \leq k \leq n$ that 
\[
    \cond_\graphg \leq \bigo{k} \frac{\lambda_2}{\sqrt{\lambda_k}},
\]
where $\cond_\graphg$ is the conductance of $G$.
\end{theorem}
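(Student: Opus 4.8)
The plan is to read the bound off the variational characterisation of $\lambda_k$. Applying Theorem~\ref{thm:courantfischer} to the normalised Laplacian $\lapn_\graphg$ and substituting $\vecx = \degmhalf\vecy$ gives $\lambda_k = \min_{\dim(\sets)=k}\max_{\vecy\in\sets}\mathcal{R}_\graphg(\vecy)$, where $\mathcal{R}_\graphg(\vecy) \triangleq (\vecy^\transpose\lap_\graphg\vecy)/(\vecy^\transpose\degm\vecy)$ is the combinatorial Rayleigh quotient; so to upper-bound $\lambda_k$ it suffices to exhibit a $k$-dimensional subspace on which $\mathcal{R}_\graphg$ is small. Let $\phi$ denote the least conductance attained by any sweep set of the Fiedler vector $\vecf_2$ (the quantity implicitly minimised in Algorithm~\ref{alg:sweepset}); since $\cond_\graphg \le \phi$ it is enough to prove $\phi \le \bigo{k}\lambda_2/\sqrt{\lambda_k}$, equivalently $\lambda_k \le \bigo{k^2}\lambda_2^2/\phi^2$. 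I will establish the latter directly, using only the defining property of $\phi$ --- that \emph{every} sweep set of $\vecf_2$ has conductance at least $\phi$ --- to build $k$ disjointly supported test functions all of whose Rayleigh quotients are $\bigo{k^2}\lambda_2^2/\phi^2$.

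First I would replace the eigenvector by a non-negative one-sided test function via a standard median-shift-and-truncate step: writing $\vecg = \degmhalfneg\vecf_2$, so that $\mathcal{R}_\graphg(\vecg) = \lambda_2$ exactly, shift $\vecg$ by its $\degm$-weighted median and keep whichever of the two sign-parts has the smaller Rayleigh quotient, obtaining a non-negative $\vec{h}$ with $\mathcal{R}_\graphg(\vec{h}) \le \lambda_2$ and $\vol(\supp(\vec{h})) \le \vol(\vertexset)/2$. Because $\supp(\vec{h})$ carries at most half the volume, every super-level set $\{v:\vec{h}(v)>t\}$ agrees with a sweep set of $\vecf_2$ up to complementation, and hence has conductance at least $\phi$.

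The technical heart is a localisation step. Sorting $\supp(\vec{h})$ by the value of $\vec{h}$, I would cut the range of $\vec{h}$ into $\Theta(k)$ consecutive value-intervals $[\tau_{j-1},\tau_j)$, chosen by a pigeonhole argument so that each layer $\vertexset_j = \{v:\tau_{j-1}\le\vec{h}(v)<\tau_j\}$ carries a comparable fraction of the mass $\sum_v\deg(v)\vec{h}(v)^2$ while the total weight of edges passing between consecutive layers stays small. Restricting $\vec{h}$ to each layer yields at least $k$ disjointly supported functions $\vec{\psi}_1,\dots,\vec{\psi}_k$. The internal Dirichlet energy of $\vec{\psi}_j$ is controlled by the global estimate $\vec{h}^\transpose\lap_\graphg\vec{h} \le \lambda_2\norm{\vec{h}}_\degm^2$, while its boundary (crossing) edges are handled by the co-area inequality $\sum_{\{u,v\}\in\edgeset}\weight(u,v)\,\bigl|\vec{h}(u)^2-\vec{h}(v)^2\bigr| \lesssim \sqrt{\mathcal{R}_\graphg(\vec{h})}\,\norm{\vec{h}}_\degm^2$ applied \emph{within a single layer} rather than across the whole vertex set; combined with the lower bound $\phi$ on every super-level cut --- which forces the $\degm$-mass of each layer to be spread over a value-range of length $\gtrsim \lambda_2/\phi$, making $\norm{\vec{\psi}_j}_\degm^2$ comparatively large --- this yields $\mathcal{R}_\graphg(\vec{\psi}_j) \le \bigo{k^2}\lambda_2^2/\phi^2$ for every $j$. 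It is precisely this per-layer (rather than global) use of the co-area estimate that produces the quadratic improvement $\lambda_2^2/\phi^2$ over the $\lambda_2$ of the usual Cheeger rounding.

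Finally I would insert $\vec{\psi}_1,\dots,\vec{\psi}_k$ into the variational bound: disjointly supported vectors are linearly independent, and for any linear combination the cross-edge contributions are absorbed using $2ab\le a^2+b^2$, so $\mathcal{R}_\graphg$ on their span is at most $\bigo{1}\max_j\mathcal{R}_\graphg(\vec{\psi}_j)$; hence $\lambda_k \le \bigo{k^2}\lambda_2^2/\phi^2$, which rearranges to $\phi \le \bigo{k}\lambda_2/\sqrt{\lambda_k}$ and therefore $\cond_\graphg \le \bigo{k}\lambda_2/\sqrt{\lambda_k}$. I expect the localisation lemma to be the main obstacle: choosing the $\Theta(k)$ thresholds so as to guarantee simultaneously that there are at least $k$ non-degenerate layers, that consecutive layers leak little weight, and that every layer function obeys the uniform $\bigo{k^2}\lambda_2^2/\phi^2$ bound requires a delicate pigeonhole balancing the layer masses against the crossing weights, and correctly tracking the polynomial dependence on $k$ through this argument is the crux of the proof.
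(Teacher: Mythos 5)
This theorem is not proved in the thesis at all: it is stated as a known result and attributed to Kwok~et al.~\cite{kwokImprovedCheegerInequality2013a}, so your sketch has to be judged against their argument. Your outer scaffolding is sound and matches one half of that proof: passing to the contrapositive, the median-shift-and-truncate step producing a non-negative $\vec{h}$ with $\mathcal{R}_\graphg(\vec{h})\leq\lambda_2$ and small support volume, and the final step that $k$ disjointly supported functions give $\lambda_k\leq 2\max_j\mathcal{R}_\graphg(\vec{\psi}_j)$ via Courant--Fischer are all standard and correct. The known proof indeed also ends by building disjointly supported test functions, but it does so from the residual $\vecf_2-g$ where $g$ is the best $2k$-step approximation of the eigenvector, and it needs a second, separate lemma (the ``improved rounding'' lemma) showing that if the eigenvector \emph{is} close to a $2k$-step function then some sweep cut already has conductance $\bigo{k}\lambda_2$-ish; the improvement comes from charging $|\vec{h}(u)^2-\vec{h}(v)^2|=|\vec{h}(u)-\vec{h}(v)|\cdot|\vec{h}(u)+\vec{h}(v)|$ against the step approximation.

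The genuine gap in your proposal is exactly where you try to shortcut that second lemma. You assert that the lower bound $\phi$ on every sweep cut ``forces the $\degm$-mass of each layer to be spread over a value-range of length $\gtrsim\lambda_2/\phi$'' and that the co-area inequality can be ``applied within a single layer''; neither is justified, and the localisation does not work as stated. For a threshold $t$ inside layer $j$, the conductance assumption only gives $\weight(\partial\{\vec{h}>t\})\geq\phi\cdot\vol(\{\vec{h}>t\})$, and the super-level set $\{\vec{h}>t\}$ is dominated by the layers \emph{above} $j$; moreover the edges crossing level $t$ may have both endpoints outside layer $j$ (edges that skip the layer). Hence integrating the co-area identity over layer $j$'s value range says nothing about that layer's own norm $\norm{\vec{\psi}_j}_{\degm}^2$ or its internal energy, and the crucial per-layer bound $\mathcal{R}_\graphg(\vec{\psi}_j)\leq\bigo{k^2}\lambda_2^2/\phi^2$ is unsupported. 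What your pigeonhole step silently requires is the implication ``all sweep cuts have conductance at least $\phi$ $\Rightarrow$ $\vecf_2$ is quantitatively far from every $\bigo{k}$-step function,'' which is precisely the contrapositive of the improved rounding lemma of Kwok et al.\ --- the hard core of their proof --- so as written the argument assumes the main content of the theorem rather than proving it.
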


\subsection{Higher-order Cheeger Inequality}
The Cheeger inequality clearly justifies the use of the second eigenvector of the graph Laplacian for partitioning a graph into two parts.
 This raises a natural question of whether the other eigenvalues give any information for partitioning the graph into more than two parts.
Lee \etal~\cite{leeMultiwaySpectralPartitioning2014} considered the \firstdef{$\bm{k}$-way expansion constant} of a graph, which generalises the conductance of a graph to a partitioning with more than two clusters.
\begin{definition}[$k$-way expansion]
Given a graph $\graphg$, the $k$-way expansion is defined as
\[
    \kcondg(k) \triangleq \min_{\mathrm{partition}~\sets_1, \ldots, \sets_k} \max_{i \in [k]} \cond_\graphg(\sets_i),
\]
 where the minimum is taken over all sets $\sets_1, \ldots, \sets_k \subset \vertexsetg$ where $\sets_i \intersect \sets_j = \emptyset$ for all $i \neq j$ and $\bigcup_{i = 1}^k \sets_i = \vertexsetg$.
\end{definition}
It is clear that $\kcondg(2) = \cond_\graphg$, and so this generalises the definition of graph conductance.
Lee \etal~\cite{leeMultiwaySpectralPartitioning2014} proved the following theorem, which relates the $k$-way expansion of a graph to the $k$-th smallest eigenvalue of the normalised Laplacian.

\begin{theorem}[Higher-order Cheeger Inequality] \label{thm:higher_order_cheeger}
Let $\geqve$ be a graph, and $\lapn_\graphg$ be its normalised Laplacian matrix with eigenvalues $0 = \lambda_1 \leq \lambda_2 \leq \ldots \leq \lambda_n$.
Then,
\[
    \frac{\lambda_k}{2} \leq \kcondg(k) \leq \bigo{k^3} \sqrt{\lambda_k},
\]
where $\kcondg(k)$ is the $k$-way expansion constant of $\graphg$.
\end{theorem}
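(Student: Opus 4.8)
The plan is to prove the two inequalities separately, since they call for quite different tools: the lower bound $\lambda_k/2 \le \kcondg(k)$ comes from the variational characterisation of eigenvalues in Theorem~\ref{thm:courantfischer}, whereas the upper bound $\kcondg(k) \le \bigo{k^3}\sqrt{\lambda_k}$ requires a $k$-dimensional spectral embedding followed by Cheeger-type rounding.

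For the lower bound, I would fix a partition $\sets_1, \ldots, \sets_k$ of $\vertexset$ attaining $\max_i \cond_\graphg(\sets_i) = \kcondg(k)$ and set $\psi_i \triangleq \degmhalf \indicatorvec_{\sets_i}$. These vectors have pairwise disjoint supports, so they span a $k$-dimensional subspace $S$ of $\R^n$, and a short computation gives $\psi_i^\transpose \psi_i = \vol(\sets_i)$ and $\psi_i^\transpose \lapn \psi_i = \indicatorvec_{\sets_i}^\transpose \lap \indicatorvec_{\sets_i} = \weight(\sets_i, \comp{\sets_i})$. Writing a general element of $S$ as $\vecx = \sum_i c_i \psi_i$ and using $(c_j - c_l)^2 \le 2c_j^2 + 2c_l^2$ on each edge joining two distinct parts, one gets $\vecx^\transpose \lapn \vecx \le 2\sum_i c_i^2 \weight(\sets_i, \comp{\sets_i})$ while $\vecx^\transpose \vecx = \sum_i c_i^2 \vol(\sets_i)$. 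Since at most one part has volume more than $\vol(\vertexset)/2$, one checks that $\weight(\sets_i,\comp{\sets_i}) \le \kcondg(k)\vol(\sets_i)$ for every $i$, so the Rayleigh quotient of every nonzero $\vecx \in S$ is at most $2\kcondg(k)$. Theorem~\ref{thm:courantfischer} then gives $\lambda_k \le 2\kcondg(k)$; the factor $2$ here is exactly the loss from $(c_j-c_l)^2 \le 2c_j^2 + 2c_l^2$.

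For the upper bound I would follow the spectral-embedding approach of Lee, Oveis Gharan and Trevisan. Let $\vecf_1, \ldots, \vecf_k$ be orthonormal eigenvectors of $\lapn_\graphg$ corresponding to $\lambda_1 \le \ldots \le \lambda_k$, put $\vecg_i \triangleq \degmhalfneg \vecf_i$, and define $F : \vertexset \to \R^k$ by $F(v) \triangleq (\vecg_1(v), \ldots, \vecg_k(v))$. Three identities make this embedding useful: the total mass $\sum_v \deg(v)\norm{F(v)}^2 = k$; the energy $\sum_{\{u,v\} \in \edgeset} \weight(u,v) \norm{F(u) - F(v)}^2 = \sum_{i=1}^k \lambda_i \le k\lambda_k$; and isotropy, $\sum_v \deg(v) F(v) F(v)^\transpose = \identity$. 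The aim is to manufacture from $F$ a family of $k$ functions $\psi_1, \ldots, \psi_k : \vertexset \to \R$ with pairwise disjoint supports, each obeying $\sum_{\{u,v\} \in \edgeset} \weight(u,v)(\psi_i(u) - \psi_i(v))^2 \le \poly(k)\,\lambda_k \sum_v \deg(v)\psi_i(v)^2$. Given such functions, I would apply the sweep-cut rounding of Algorithm~\ref{alg:sweepset} (equivalently the constructive half of Theorem~\ref{thm:cheeger}) to each $\psi_i$ restricted to its own support, obtaining disjoint sets $\sets_1, \ldots, \sets_k$ with $\cond_\graphg(\sets_i) \le \sqrt{2\,\poly(k)\lambda_k} = \bigo{k^3}\sqrt{\lambda_k}$; the remaining vertices can be appended to any one $\sets_i$ without increasing its conductance by more than a constant factor, yielding a genuine partition.

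The hard part is the localisation step that produces the disjointly supported functions $\psi_i$ from the embedding $F$. Isotropy prevents the mass of $F$ from collapsing into a low-dimensional set, which lets one carve $\R^k$ into $k$ regions, each carrying a constant fraction of a unit of mass and mutually separated in Euclidean distance; this uses a random-projection or recursive space-partitioning argument and is responsible for the polynomial-in-$k$ loss. On each region one replaces $F$ by a radially truncated, Lipschitz ``bump'' supported near that region: the Lipschitz bound controls the Dirichlet-energy numerator of the Rayleigh quotient via the energy identity above, while separation forces the supports to be disjoint and keeps the mass denominators bounded below. This metric-geometry argument is the only genuinely delicate ingredient; the eigenvalue identities and the Cheeger sweep are routine. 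I would present it in the crude form giving the stated $\bigo{k^3}$ dependence, which admits a more transparent separation argument than the sharper $\bigo{k^2}$ bound and is all that the later chapters require.
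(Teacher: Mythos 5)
You should first note that the paper itself does not prove Theorem~\ref{thm:higher_order_cheeger}: it is quoted from Lee, Oveis Gharan and Trevisan~\cite{leeMultiwaySpectralPartitioning2014} in the related-work chapter, so there is no in-paper argument to compare against. Your lower bound is complete and correct: taking the disjointly supported test vectors $\degmhalf\indicatorvec_{\sets_i}$, bounding each cross-edge contribution by $(c_j-c_l)^2\leq 2c_j^2+2c_l^2$, and invoking Theorem~\ref{thm:courantfischer} is the standard argument, and the bookkeeping giving $\lambda_k\leq 2\kcondg(k)$ is right (indeed you do not even need the remark about at most one part having volume above $\vol(\vertexset)/2$, since $\min\{\vol(\sets_i),\vol(\comp{\sets_i})\}\leq\vol(\sets_i)$ always).

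The upper bound, however, has a genuine gap in two places. First, the localisation step --- manufacturing $k$ disjointly supported functions, each with Rayleigh quotient $\poly(k)\cdot\lambda_k$, out of the isotropic spectral embedding --- is the entire technical content of the Lee--Oveis Gharan--Trevisan theorem; you describe its flavour (random projection or recursive space partitioning, separated regions, Lipschitz truncation) but prove none of it, so as written this direction is a roadmap rather than a proof. Second, and more concretely wrong, the closing claim that the vertices outside $\bigcup_i\sets_i$ ``can be appended to any one $\sets_i$ without increasing its conductance by more than a constant factor'' is false in general: the leftover set $R$ may carry large volume and a boundary dominated by edges into the other parts, so absorbing it into an arbitrary (possibly small) $\sets_i$ can increase the conductance without bound. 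The correct completion merges $R$ into the part $\sets_j$ of \emph{largest} volume; then the complement of $\sets_j\union R$ is $\bigcup_{i\neq j}\sets_i$, so its cut weight is at most $\sum_{i\neq j}\weight(\sets_i,\comp{\sets_i})\leq\big(\max_i\cond_\graphg(\sets_i)\big)\sum_{i\neq j}\vol(\sets_i)$, while the denominator is at least $\min\big\{\vol(\sets_j),\sum_{i\neq j}\vol(\sets_i)\big\}\geq\frac{1}{k-1}\sum_{i\neq j}\vol(\sets_i)$, giving $\cond_\graphg(\sets_j\union R)\leq (k-1)\max_i\cond_\graphg(\sets_i)$. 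This $\Theta(k)$ loss is not a nuisance term: it is precisely the difference between the $\bigo{k^2}\sqrt{\lambda_k}$ guarantee for $k$ disjoint sets and the $\bigo{k^3}\sqrt{\lambda_k}$ bound for a genuine $k$-partition asserted in the theorem, so your unspecified ``crude $\poly(k)$'' must already deliver $\bigo{k^2}\sqrt{\lambda_k}$ on the disjoint sets for the stated bound to survive the merging step.
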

Once again, the proof of this theorem is constructive, and gives an algorithm for finding clusters $\sets_1, \ldots, \sets_k$ satisfying the inequalities.
The dependence on $k$ in the upper bound of Theorem~\ref{thm:higher_order_cheeger} is needed, but it is not known if the inequality is tight~\cite{leeMultiwaySpectralPartitioning2014}.

\subsection{Spectral Clustering} \label{sec:spectralclusteringworks}
The spectral clustering algorithm was proposed over a decade before the higher-order Cheeger inequality~\cite{ngSpectralClusteringAnalysis2001},  and is widely used in practice due to its simplicity and empirical effectiveness.
 There are several variants of this algorithm since
it is possible to use the adjacency matrix, Laplacian matrix, or normalised Laplacian matrix for constructing the spectral embedding~\cite{vonluxburgTutorialSpectralClustering2007}.
However, they are all composed of the same three steps:
\begin{enumerate}
    \item Compute $k$ eigenvectors of some graph matrix;
    \item Embed the vertices of the graph into $\R^k$ according to the values of the eigenvectors;
    \item Apply a simple clustering algorithm, such as $k$-means, to obtain the final clusters.
\end{enumerate}
See Algorithm~\ref{alg:sc} for the formal description of spectral clustering.
\begin{algorithm}[hb]
\caption[Classical spectral clustering: \algspectralcluster$(\graphg, k)$]{\algspectralcluster$(\graphg, k)$ \label{alg:sc}}
Construct the normalised graph Laplacian matrix $\lapn_\graphg$\\
Compute eigenvectors $\vecf_1, \ldots, \vecf_k$ of $\lapn_\graphg$ corresponding to the $k$ smallest eigenvalues \\
Embed each $v \in \vertexsetg$ to the point $(\vecf_1(v), \ldots, \vecf_k(v))^\transpose$ \\
Apply $k$-means to the embedded vertices; partition $\vertexsetg$ based on the output
\end{algorithm}

Although spectral clustering has been used in practice for several decades,
 the theoretical justification for its good performance has seen some recent progress~\cite{pengPartitioningWellClusteredGraphs2017} and deserves further study.
There are two technical approaches which are common when analysing spectral clustering.
The first is to see that the algorithm solves a relaxation of a natural optimisation problem.
For more details on this perspective, the reader can refer to the excellent tutorial of Von Luxburg~\cite{vonluxburgTutorialSpectralClustering2007}.
The second comes from considering the structure of clusters in the eigenspace of the graph matrices.
Ng~\etal~\cite{ngSpectralClusteringAnalysis2001} introduced this idea with the following observations, which are illustrated in Figure~\ref{fig:scExample}.
\begin{enumerate}
    \item If the clusters in the input graph $G$ are disconnected, then the normalised Laplacian matrix $\lapn_\graphg$ is block diagonal.
    Therefore, the first $k$ eigenvectors of $\lapn_\graphg$ are the indicator vectors of the connected components.
    \item If the clusters are then connected by a small number of edges, then the Laplacian matrix doesn't change much, and accordingly the eigenvalues and eigenvectors of the Laplacian do not change too much.
    \item As such, if a graph possesses a clear cluster structure, then the clusters are well separated in the spectral embedding.
\end{enumerate}

\begin{figure}[t]
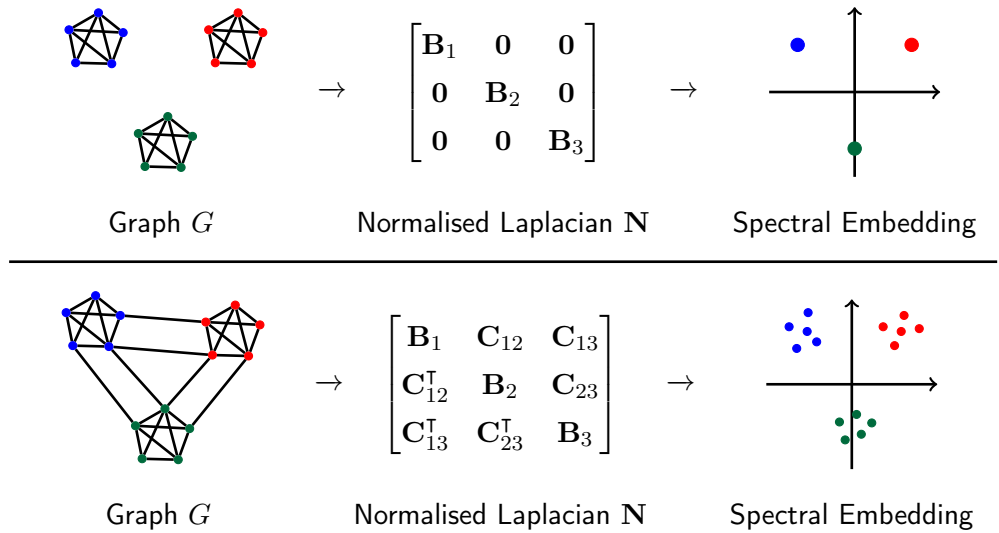

    \centering
    \tikzfig{figures/intro/scExample}
    \caption[Why does spectral clustering work?]{
    An illustration of the intuition behind spectral clustering.
    If the clusters are disconnected, then the Laplacian is block diagonal and the spectral embedding perfectly separates the clusters.
    If a small number of edges are added between the clusters, then the embedding doesn't change `too much'.
    }
    \label{fig:scExample}
\end{figure}

Peng~\etal~\cite{pengPartitioningWellClusteredGraphs2017} built on this intuition and proved that, not only are the clusters well separated in the spectral embedding, but the output of the $k$-means algorithm is guaranteed to correctly cluster most of the vertices.
To quantify the `clusterability' of a given graph $G$, they introduced the parameter $\Upsilon(k)$ defined as follows:

% \samepage{
\begin{definition}
For a graph $\graphg$, let $\lambda_1 \leq \lambda_2 \leq \ldots \leq \lambda_n$ be the eigenvalues of the normalised Laplacian matrix $\lapn_\graphg$. Then,
\[
    \Upsilon(k) \triangleq \frac{\lambda_{k+1}}{\kcond(k)},
\]
where $\kcond(k)$ is the $k$-way expansion constant.
\end{definition}
% }
 To understand the motivation behind this definition, suppose that a graph $\geqve$ contains \emph{exactly} $k$ clusters with low conductance.
That is,
\begin{enumerate}
    \item there is a partition of $\vertexset$ into $k$ sets $\sets_1, \ldots, \sets_k$ such that $\max_{i \in [k]} \cond(\sets_i)$ is low, and
    \item for \emph{any} partition of $\vertexset$ into $k + 1$ sets $\sets_1, \ldots, \sets_{k+1}$, there is some $i \in [k+1]$ such that $\cond(\sets_i)$ is high.
\end{enumerate}
Then, by the definition of $\kcond(k)$, the denominator of $\Upsilon(k)$ is low, and by the higher-order Cheeger inequality (Theorem~\ref{thm:higher_order_cheeger}), the numerator is high.
Thus, a large value of $\Upsilon(k)$ indicates that there are exactly $k$ low-conductance clusters in the graph.
The definition of $\Upsilon(k)$ also closely resembles the widely-used eigen-gap heuristic for spectral clustering~\cite{ngSpectralClusteringAnalysis2001, vonluxburgTutorialSpectralClustering2007}, which suggests that spectral clustering works when the value of $\abs{\lambda_{k+1} - \lambda_k}$ is much larger than $\abs{\lambda_k - \lambda_{k-1}}$.

Peng~\etal~\cite{pengPartitioningWellClusteredGraphs2017} prove that for graphs with a high value of $\Upsilon(k)$, the eigenvectors of the Laplacian are close to the indicator vectors of the optimal clusters.
Let $\sets_1, \ldots, \sets_k \subset \vertexset$ be the optimal clusters which achieve the $k$-way expansion $\kcond(k)$, then the \firstdef{normalised indicator vector} of each cluster $\sets_i$ is defined as
\[
    \barg_i = \frac{\degmhalf \indicatorvec_{\sets_i}}{\norm{\degmhalf \indicatorvec_{\sets_i}}}.
\]
Their result is given in the following theorem.
\begin{theorem}[\cite{pengPartitioningWellClusteredGraphs2017}, Theorem 1.1] \label{thm:pengStructure}
Let $\graphg$ be a graph with $\Upsilon(k) = \bigomega{k^2}$, and let $\vecf_1, \ldots, \vecf_n$ be the eigenvectors of $\lapn_\graphg$ corresponding to eigenvalues $\lambda_1 \leq \ldots \leq \lambda_n$. Then,
\begin{enumerate}
    \item for every $\barg_i$, there is a linear combination of $\{\vecf_i\}_{i = 1}^k$ called $\hatf_i$ such that $\norm{\barg_i - \hatf_i} \leq 1 / \Upsilon(k)$;
    \item for every $\vecf_i$, there is a linear combination of $\{\barg_i\}_{i = 1}^k$, called $\hatg_i$, such that $\norm{\vecf_i - \hatg_i}^2 \leq 1.1 k / \Upsilon(k)$.
\end{enumerate}
\end{theorem}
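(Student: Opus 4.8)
The plan is to prove both parts from two ingredients: the spectral decomposition of $\lapn_\graphg$, and the fact that each normalised indicator vector $\barg_i$ has a small Rayleigh quotient with respect to $\lapn_\graphg$. For the first part, I would begin by computing, using the definition of $\lapn_\graphg$, the Laplacian quadratic form, and the identity $\norm{\degmhalf\indicatorvec_{\sets_i}}^2=\vol(\sets_i)$,
\[
    \barg_i^\transpose \lapn_\graphg \barg_i \;=\; \frac{\weight(\sets_i, \comp{\sets_i})}{\vol(\sets_i)} \;\le\; \cond_\graphg(\sets_i) \;\le\; \kcond(k),
\]
where the middle inequality holds because the denominator of $\cond_\graphg(\sets_i)$ is the smaller of the two volumes, and the last inequality holds because $\sets_1,\ldots,\sets_k$ is the partition attaining $\kcond(k)=\max_{i}\cond_\graphg(\sets_i)$.

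Next, expand $\barg_i = \sum_{j=1}^n \inner{\barg_i}{\vecf_j}\vecf_j$ in the orthonormal eigenbasis of $\lapn_\graphg$, so that $\barg_i^\transpose\lapn_\graphg\barg_i = \sum_{j}\lambda_j\inner{\barg_i}{\vecf_j}^2 \ge \lambda_{k+1}\sum_{j>k}\inner{\barg_i}{\vecf_j}^2$. Combining with the bound above gives $\sum_{j>k}\inner{\barg_i}{\vecf_j}^2 \le \kcond(k)/\lambda_{k+1} = 1/\Upsilon(k)$. Setting $\hatf_i \triangleq \sum_{j=1}^k \inner{\barg_i}{\vecf_j}\vecf_j$, the orthogonal projection of $\barg_i$ onto $\mathrm{span}(\vecf_1,\ldots,\vecf_k)$, then yields $\norm{\barg_i - \hatf_i}^2 = \sum_{j>k}\inner{\barg_i}{\vecf_j}^2 \le 1/\Upsilon(k)$, which is the first claim. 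This step needs only that $\lambda_{k+1}$ exceeds the cluster conductance.

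For the second part, I would form the matrix $\mata \in \R^{k\times k}$ with $\mata(j,i) = \inner{\barg_i}{\vecf_j}$, and write $\setu = \mathrm{span}(\vecf_1,\ldots,\vecf_k)$ and $\setb = \mathrm{span}(\barg_1,\ldots,\barg_k)$, both $k$-dimensional with the displayed orthonormal bases (orthonormality of the $\barg_i$ follows from the disjointness of the $\sets_i$), with orthogonal projections $\mathrm{P}_\setu$ and $\mathrm{P}_\setb$. The key identity is
\[
    \sum_{i=1}^k \norm{\vecf_i - \mathrm{P}_\setb\vecf_i}^2 \;=\; k - \norm{\mata}_F^2 \;=\; \sum_{j=1}^k \norm{\barg_j - \mathrm{P}_\setu\barg_j}^2,
\]
since $\norm{\mathrm{P}_\setb\vecf_i}^2$ is the squared norm of the $i$-th column of $\mata$ and $\norm{\mathrm{P}_\setu\barg_j}^2$ is the squared norm of the $j$-th row of $\mata$. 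By the first part the right-hand side is at most $k/\Upsilon(k)$, and since each summand on the left is non-negative, each one is individually at most $k/\Upsilon(k) \le 1.1\,k/\Upsilon(k)$. Taking $\hatg_i \triangleq \mathrm{P}_\setb\vecf_i = \sum_{j=1}^k\inner{\vecf_i}{\barg_j}\barg_j$ gives the second claim.

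The main difficulty is conceptual and located in the second part: the factor-$k$ loss relative to the first part cannot be avoided by this argument. The total ``budget'' $k-\norm{\mata}_F^2$ is at most $k/\Upsilon(k)$, but nothing prevents it from being concentrated on a single $\vecf_i$, so a per-vector bound better than $\bigo{k/\Upsilon(k)}$ would require exploiting extra structure; the slack $1.1$ absorbs minor losses (for instance if one prefers $\hatg_i$ to be a unit vector). The remaining points are routine: verifying the normalised-Laplacian computation in the first display, and checking that it is precisely the optimality of the partition $\sets_1,\ldots,\sets_k$ that makes $\cond_\graphg(\sets_i)\le\kcond(k)$ hold for every $i$, and hence makes the hypothesis $\Upsilon(k)=\bigomega{k^2}$ usable as a quantitative control on the approximation error.
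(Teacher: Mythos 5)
Your proposal is correct and follows essentially the same route as the paper's own argument (the proof of the stronger structure theorem, Theorem~\ref{thm:struc1}): bound the Rayleigh quotient $\barg_i^\transpose \lapn_\graphg \barg_i \le \kcond(k)$, project onto $\mathrm{span}(\vecf_1,\ldots,\vecf_k)$ to get $\norm{\barg_i-\hatf_i}^2 \le 1/\Upsilon(k)$, and for the second part use the same Frobenius-norm/row-column swap identity to get the total bound $\sum_i\norm{\vecf_i-\hatg_i}^2 \le k/\Upsilon(k)$, from which the per-vector $1.1k/\Upsilon(k)$ bound of the cited theorem follows immediately, without ever needing the hypothesis $\Upsilon(k)=\bigomega{k^2}$.
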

 Peng~\etal~\cite{pengPartitioningWellClusteredGraphs2017} used this to prove an upper bound on the number of vertices misclassified by the $k$-means clustering step of the spectral clustering algorithm.
\begin{theorem}[\cite{pengPartitioningWellClusteredGraphs2017}, Theorem 1.2]
Let $\graphg$ be a graph with $\Upsilon(k) = \bigomega{k^3}$. Let $\seta_1, \ldots, \seta_k$ be the clusters found by spectral clustering, and let the optimal correspondent of $\seta_i$ be $\sets_i$. Then, it holds for any $i\in[k]$ that
\[
    \frac{\vol(\seta_i \triangle \sets_i)}{\vol(\sets_i)} = \bigo{k^3 / \Upsilon(k)}.
\]
\end{theorem}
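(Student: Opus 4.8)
The plan is to deduce the misclassification bound from the structure theorem (Theorem~\ref{thm:pengStructure}) together with a geometric analysis of the $k$-means step, following~\cite{pengPartitioningWellClusteredGraphs2017}. Throughout, work with the degree-normalised spectral embedding $F \colon \vertexset \to \R^k$, $F(v) = \frac{1}{\sqrt{\deg(v)}}\bigl(\vecf_1(v), \ldots, \vecf_k(v)\bigr)^\transpose$, and the degree-weighted $k$-means objective, which assigns to a partition $\seta_1, \ldots, \seta_k$ with centres $c_1, \ldots, c_k$ the cost $\sum_{i} \sum_{v \in \seta_i} \deg(v) \norm{F(v) - c_i}^2$; note $\sum_{v \in \vertexset} \deg(v) \norm{F(v)}^2 = k$ by orthonormality of the $\vecf_i$. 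The first step is to exhibit a cheap \emph{reference} clustering, namely the optimal partition $\sets_1, \ldots, \sets_k$ itself. Part~2 of Theorem~\ref{thm:pengStructure} says each $\vecf_i$ is within squared distance $1.1k/\Upsilon(k)$ of a linear combination of the $\barg_j$, and the $\barg_j$ are constant on the clusters after degree-normalisation; reading this off coordinate by coordinate produces points $\vecp^{(1)}, \ldots, \vecp^{(k)} \in \R^k$ with
\[
    \sum_{i = 1}^k \sum_{v \in \sets_i} \deg(v) \norm{F(v) - \vecp^{(i)}}^2 \;=\; \sum_{i=1}^k \norm{\vecf_i - \hatg_i}^2 \;=\; \bigo{k^2 / \Upsilon(k)}.
\]
Hence the optimal degree-weighted $k$-means cost is $\bigo{k^2/\Upsilon(k)}$, and since the $k$-means step uses a $\bigo{1}$-approximation (or $\bigo{\log k}$-approximation) algorithm, the clustering $\seta_1, \ldots, \seta_k$ that is output has cost $\bigo{k^2/\Upsilon(k)}$ as well with respect to its own optimal centres.

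The second step is to show that the reference centres $\vecp^{(i)}$ are well separated. Part~1 of Theorem~\ref{thm:pengStructure} says each $\barg_i$ lies within $1/\Upsilon(k)$ of $\mathrm{span}(\vecf_1, \ldots, \vecf_k)$; since the $\barg_i$ are orthonormal and cluster-constant after degree-normalisation, one can read the value $\vecp^{(i)}$ off the embedding along a near-unit direction on which $\vecp^{(i)}$ is $\bigtheta{1/\sqrt{\vol(\sets_i)}}$ while every $\vecp^{(j)}$ with $j \neq i$ is negligible, giving $\norm{\vecp^{(i)} - \vecp^{(j)}}^2 = \bigomega{1/\min\{\vol(\sets_i), \vol(\sets_j)\}}$ together with $\norm{\vecp^{(i)}} = \bigtheta{1/\sqrt{\vol(\sets_i)}}$. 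The delicate point is then to pair the $k$-means centres $c_i$ with the $\vecp^{(i)}$: when $\Upsilon(k) = \bigomega{k^3}$, the cost bound is small enough compared with the squared separations that each $c_i$ is forced close to a single $\vecp^{(i)}$, which fixes a bijection $\seta_i \leftrightarrow \sets_i$ --- the ``optimal correspondent'' of the statement --- and makes $\norm{c_i - \vecp^{(i)}}$ negligible relative to the inter-centre distances.

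The final step is a counting argument. Any $v \in \sets_i \symmetricdiff \seta_i$ is assigned by the output to the wrong cluster $\seta_j$, so it contributes $\deg(v)\norm{F(v) - c_j}^2$ to the $k$-means cost. A Markov-type argument applied to the cost bound of the first step shows that, apart from a small fraction of $\vol(\sets_i)$, the embedded points of $\sets_i$ lie within $o\bigl(\norm{\vecp^{(i)} - \vecp^{(j)}}\bigr)$ of $\vecp^{(i)}$; for such a $v$ the triangle inequality gives $\norm{F(v) - c_j} = \bigomega{\norm{\vecp^{(i)} - \vecp^{(j)}}}$, so its contribution is $\bigomega{\deg(v)/\vol(\sets_i)}$. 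Summing over the misclassified vertices of $\sets_i$ and comparing with the cost bound $\bigo{k^2/\Upsilon(k)}$ controls $\vol(\sets_i \symmetricdiff \seta_i)/\vol(\sets_i)$; collecting the $\bigo{1}$ approximation factor and the various $k$-dependent losses incurred in the matching and geometric estimates then yields $\vol(\sets_i \symmetricdiff \seta_i)/\vol(\sets_i) = \bigo{k^3/\Upsilon(k)}$.

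I expect the main obstacle to be the separation-and-matching step: turning part~1 of the structure theorem into a clean lower bound on $\norm{\vecp^{(i)} - \vecp^{(j)}}$, and then certifying that $\seta_i \leftrightarrow \sets_i$ is a genuine bijection with all the associated error terms under control. This step is where the hypothesis $\Upsilon(k) = \bigomega{k^3}$ is used in earnest, and where the numerous $k$-dependent slacks must be tracked so that they do not overwhelm the separation; the closing volume-count is comparatively routine once that scaffolding is in place.
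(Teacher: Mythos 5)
Your outline is correct and is essentially the standard route: the statement you are proving is only cited in this thesis (it is Peng et al.'s Theorem 1.2), and the thesis's own proof of the analogous, stronger result (Theorem~\ref{thm:sc_guarantee}) follows the same skeleton you describe — bound the weighted $k$-means cost by the structure theorem (here Lemma~\ref{lem:total_cost}, giving $k/\Upsilon(k)$ from the improved Theorem~\ref{thm:struc1}, versus your $\bigo{k^2/\Upsilon(k)}$ from Theorem~\ref{thm:pengStructure}), establish centre separation $\norm{\peye-\pj}^2 = \bigomega{1/\min\{\vol(\sets_i),\vol(\sets_j)\}}$ (Lemmas~\ref{lem:pnorm}--\ref{lem:p_diff}, which indeed need only the per-vector part of the structure theorem), and then count the cost paid by misclassified vertices (Lemma~\ref{lem:cost_lower_bound}).

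The one place where your sketch is thin, and where the thesis deliberately departs from your plan, is the matching step. You assert that under $\Upsilon(k)=\bigomega{k^3}$ each output centre is "forced" close to a distinct $\peye$, fixing a bijection. That is not automatic from the cost bound alone; the nearest-centre map $\sigma$ of \eqref{eq:defsigma} can fail to be injective, and making the "forced bijection" claim rigorous requires a contradiction argument (if some $\pj$ has no nearby output centre, then most of the volume of $\sets_j$ pays $\bigomega{1/\vol(\sets_j)}$ per unit volume, contributing $\bigomega{1}$ to a cost that is $\bigo{\APT\cdot k^2/\Upsilon(k)} = o(1)$ under your hypothesis), which is essentially how Peng et al.\ argue and does go through at the $\bigomega{k^3}$ level. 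The thesis instead never assumes injectivity: it works with the possibly non-injective $\sigma$, bounds the summed misclassification for $\setm_{\sigma,i}$, and then explicitly repairs $\sigma$ into a permutation $\sigma^\star$ while controlling the extra cost; this is precisely what lets it drop the $k$-dependence in the assumption, at the price of only a summed (not per-cluster) guarantee. Your per-cluster counting at the end is fine for the weaker cited bound, so the proposal stands, but the bijection step should be spelled out as above rather than asserted.
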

 
Kolev and Melhorn~\cite{kolevNoteSpectralClustering2016} and Mizutani~\cite{mizutaniImprovedAnalysisSpectral2021} developed similar results under the weaker assumptions of $\Upsilon(k) = \Omega(k^2)$ and $\Upsilon(k) = \Omega(k)$ respectively, although their results are not directly comparable since they use a slightly different definition of the gap $\Upsilon(k)$.

\section{Clustering Structured Graphs} \label{sec:related:structured}
Spectral clustering gives an excellent starting point for the subject of graph partitioning;
however, the classical analysis is usually based only on the assumption that the clusters are sparsely connected to each other and does not take into account the high-level structure of clusters.
As demonstrated in Chapter~\ref{chap:intro}, graphs created from real-world data often exhibit a high-level structure, and this section describes techniques for learning clusters in such graphs.

\subsection{Partitioning Almost-Bipartite Graphs}
Let us start by considering the case of graphs exhibiting \firstdef{heterophily}.
That is, graphs in which vertices have a tendency to share connections with those in a \emph{different} cluster to their own.\footnote{For completeness, the opposite of heterophily is \firstdef{homophily}, in which vertices are more likely to share edges with other vertices in the same cluster.}
An example of such a graph is the inter-state dispute graph introduced in Chapter~\ref{chap:intro}.
More generally, graphs in which edges represent `negative' relationships tend to exhibit heterophily.

A bipartite graph is a `perfect' example of a two-cluster graph with heterophily.
Of course, it is not difficult to separate a bipartite graph into its two clusters, but what if the graph is only \emph{almost} bipartite?
To answer this, Bauer and Jost~\cite{bauerBipartiteNeighborhoodGraphs2013} and Trevisan~\cite{trevisanMaxCutSmallest2012} introduced the \firstdef{bipartiteness} of two sets $\setl, \setr \subset \vertexsetg$, which measures how close the sets $\setl$ and $\setr$ are to forming a bipartite connected component.
\begin{definition} [Bipartiteness] \label{def:bipartiteness}
Given a graph $\geqve$ and two clusters $\setl, \setr \subset \vertexset$, the bipartiteness of $\setl$ and $\setr$ is
\[
\bipart(\setl, \setr) \triangleq 1 - \frac{2 \weight(\setl, \setr)}{\vol(\lur)}.
\]
The bipartiteness of the graph $\graphg$ is given by
\[
    \bipart_\graphg \triangleq \min_{\substack{\setl, \setr \subset \vertexsetg \\ \setl \intersect \setr = \emptyset}} \bipart(\setl, \setr).
\]
\end{definition}
Notice that $\bipart(\setl, \setr) = 0$ if $\setl$ and $\setr$ form a bipartite and connected component of $\graphg$.
Given a graph $\graphg$, let $0 = \lambdaeigs \leq 2$ be the eigenvalues of the normalised Laplacian $\lapn_\graphg$.
It is a classical result in spectral graph theory that $\lambda_n = 2$ if and only if $\bipart_\graphg = 0$, and so a natural question is whether there is an analogue of the Cheeger inequality which connects the bipartiteness of a graph with the largest eigenvalue $\lambda_n$.
Bauer and Jost~\cite{bauerBipartiteNeighborhoodGraphs2013} and Trevisan~\cite{trevisanMaxCutSmallest2012} independently showed that this is indeed the case with the following theorem.
\begin{theorem} \label{thm:trevisansinequality}
Given a graph $\graphg$, let $\lambdaeigs$ be the eigenvalues of the normalised Laplacian $\lapn_\graphg$. Then,
\[
    \frac{2 - \lambda_n}{2} \leq \bipart_\graphg \leq \sqrt{2 (2 - \lambda_n)},
\]
where $\bipart_\graphg$ is the bipartiteness of the graph.
\end{theorem}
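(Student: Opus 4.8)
The plan is to follow the template of the Cheeger inequality (Theorem~\ref{thm:cheeger}), but with the \emph{signless} Laplacian in place of the Laplacian. The starting point is that $\signlapn_\graphg = 2\identity - \lapn_\graphg$, so the smallest eigenvalue of $\signlapn_\graphg$ is exactly $2 - \lambda_n$, and the associated quadratic form is $\vecx^\transpose \signlap_\graphg \vecx = \sum_{\{u,v\} \in \edgeset} \weight(u,v)\,(\vecx(u)+\vecx(v))^2$. Hence a vector $\vecx$ with a small \emph{signless} Rayleigh quotient $\vecx^\transpose \signlap_\graphg \vecx / \vecx^\transpose \degm \vecx$ witnesses an almost-bipartite structure, in exactly the way that a vector with a small Laplacian Rayleigh quotient witnesses a sparse cut. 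By the Courant--Fischer theorem (Theorem~\ref{thm:courantfischer}) applied to $\signlapn_\graphg = \degmhalfneg \signlap_\graphg \degmhalfneg$, we have $2 - \lambda_n = \min_{\vecz \neq \zerovec} \vecz^\transpose \signlap_\graphg \vecz / \vecz^\transpose \degm \vecz$, which is the tool for both directions.

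For the lower bound $\frac{2-\lambda_n}{2} \le \bipart_\graphg$, let $\setl, \setr$ be an optimal pair achieving $\bipart_\graphg$ and take the test vector $\vecx = \indicatorvec_\setl - \indicatorvec_\setr$. Classifying edges by how their endpoints meet $\setl$, $\setr$ and the rest, one computes $\vecx^\transpose \signlap_\graphg \vecx = 4 E_{\setl} + 4 E_{\setr} + E_{\partial}$, where $E_\setl, E_\setr$ are the total edge weights inside $\setl$ and inside $\setr$ and $E_\partial$ is the total weight of $\partial(\lur)$, while $\vecx^\transpose \degm \vecx = \vol(\lur) = 2 E_\setl + 2 E_\setr + 2\weight(\setl,\setr) + E_\partial$. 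A short rearrangement gives $\vecx^\transpose \signlap_\graphg \vecx \le 2\bigl(\vol(\lur) - 2\weight(\setl,\setr)\bigr) = 2\,\vol(\lur)\,\bipart(\setl,\setr)$, so this Rayleigh quotient is at most $2\bipart_\graphg$ and the bound follows from the Courant--Fischer characterisation above (using $\vol(\lur) > 0$).

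The upper bound is the harder direction and uses a threshold-rounding (\textsc{SweepSet}-style) argument on the top eigenvector. Let $\vecf_n$ be an eigenvector of $\lapn_\graphg$ for $\lambda_n$, set $\vecx = \degmhalfneg \vecf_n$ scaled so that $\max_v |\vecx(v)| = 1$, and let $\delta = 2 - \lambda_n = \sum_{\{u,v\}} \weight(u,v)(\vecx(u)+\vecx(v))^2 \,/\, \sum_v \deg(v)\vecx(v)^2$. If $\delta \ge 1$ the claim is immediate, since always $\bipart_\graphg \le 1 \le \sqrt{2\delta}$, so assume $\delta < 1$. Draw $t$ uniformly from $[0,1]$ and set $\setl_t = \{v : \vecx(v) > 0,\ \vecx(v)^2 > t\}$ and $\setr_t = \{v : \vecx(v) < 0,\ \vecx(v)^2 > t\}$, which are disjoint. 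Writing $M = \sum_v \deg(v)\vecx(v)^2$, one has $\E[\vol(\setl_t \cup \setr_t)] = M$, and the expectation of the numerator $\vol(\setl_t \cup \setr_t) - 2\weight(\setl_t, \setr_t)$ of $\bipart(\setl_t,\setr_t)$ splits over edges: an edge $\{u,v\}$ with $\vecx(u)\vecx(v) \ge 0$ contributes $\weight(u,v)(\vecx(u)^2 + \vecx(v)^2)$, while an edge with $\vecx(u)\vecx(v) < 0$ contributes $\weight(u,v)\,|\vecx(u)^2 - \vecx(v)^2|$. For same-sign edges one uses $\vecx(u)^2 + \vecx(v)^2 \le (\vecx(u)+\vecx(v))^2$; for opposite-sign edges one writes $|\vecx(u)^2 - \vecx(v)^2| = (|\vecx(u)| + |\vecx(v)|)\,|\vecx(u)+\vecx(v)|$ and applies Cauchy--Schwarz together with the identity $(|\vecx(u)|+|\vecx(v)|)^2 = 2(\vecx(u)^2+\vecx(v)^2) - (\vecx(u)+\vecx(v))^2$. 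Tracking how the $\degm$-weighted mass $M$ and the quadratic-form mass $\delta M$ are distributed between same-sign and opposite-sign edges and optimising the resulting bound over this split (this is where $\delta < 1$ is used) gives $\E[\text{numerator}] \le \sqrt{2\delta - \delta^2}\,M \le \sqrt{2\delta}\,\E[\vol(\setl_t \cup \setr_t)]$. Consequently some threshold $t$ with $\vol(\setl_t \cup \setr_t) > 0$ achieves $\bipart(\setl_t, \setr_t) \le \sqrt{2(2-\lambda_n)}$, as required.

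The main obstacle is pinning down the constant in the upper bound. A uniform Cauchy--Schwarz treatment of all edges loses a factor of $2$; the fix is to observe that same-sign edges already obey the clean inequality $\vecx(u)^2 + \vecx(v)^2 \le (\vecx(u)+\vecx(v))^2$ with no square root, and then to carry out a one-parameter optimisation over the mass split. That optimisation reduces to showing the function $s \mapsto s + \sqrt{((2-\delta)-s)(\delta-s)}$ is non-increasing on $[0,\delta]$, which follows from $1 - (2-\delta)\delta = (1-\delta)^2 \ge 0$; its value at $s=0$ is $\sqrt{2\delta - \delta^2} \le \sqrt{2\delta}$, giving the stated bound.
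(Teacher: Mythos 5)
Your two-part strategy is the right one, and it is essentially the same argument as Trevisan's original proof and as the thesis's own hypergraph generalisation (Lemmas~\ref{lem:trev_cheeg_easy} and~\ref{lem:trev-cheeg}, which specialise to this theorem when every edge has rank two): the lower bound by plugging $\indicatorvec_\setl - \indicatorvec_\setr$ into the signless Rayleigh quotient via Courant--Fischer, and the upper bound by rounding the extremal vector with a uniform threshold on $\vecx(v)^2$ and analysing same-sign and opposite-sign edges separately. Your lower-bound computation is correct, as is the per-edge expectation analysis in the upper bound.

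The one step that does not hold up as written is the final ``mass-split'' optimisation. Write $M=\sum_v \deg(v)\vecx(v)^2$, let $Q_s,Q_o$ be the same-sign and opposite-sign parts of $\sum_e \weight(e)(\vecx(u)+\vecx(v))^2$ (so $Q_s+Q_o=\delta M$), and let $P_s,P_o$ be the corresponding parts of $\sum_e \weight(e)(\vecx(u)^2+\vecx(v)^2)=M$. Your bounds give $\E[\mathrm{numerator}]\le P_s+\sqrt{(2P_o-Q_o)Q_o}$, and your identity says $2P_o-Q_o=\sum_{\mathrm{opp}}\weight(e)(\abs{\vecx(u)}+\abs{\vecx(v)})^2$. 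The estimates actually available ($P_s\le Q_s$, and $2P_o-Q_o\le 2M-Q_s$ because same-sign edges satisfy $(\abs{\vecx(u)}+\abs{\vecx(v)})^2=(\vecx(u)+\vecx(v))^2$) lead to the function $q\mapsto q+\sqrt{(2-q)(\delta-q)}$ with $q=Q_s/M$, not to your $s+\sqrt{((2-\delta)-s)(\delta-s)}$; to get $(2-\delta)-s$ under the root one would need $2P_s-Q_s\ge sM$ together with $P_s\le sM$, i.e.\ $P_s\ge Q_s$, which is the wrong direction for same-sign edges. So the intermediate claim $\E[\mathrm{numerator}]\le\sqrt{2\delta-\delta^2}\,M$ is not justified by your accounting. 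The repair is immediate, though: $q+\sqrt{(2-q)(\delta-q)}$ is non-increasing in $q$ by AM--GM, hence at most $\sqrt{2\delta}$, with no need for the assumption $\delta<1$. Simpler still (and this is what the thesis does in the hypergraph case): bound the same-sign expected contribution by $\abs{\vecx(u)+\vecx(v)}(\abs{\vecx(u)}+\abs{\vecx(v)})$ as well and apply a single Cauchy--Schwarz over all edges together with $\sum_e\weight(e)(\abs{\vecx(u)}+\abs{\vecx(v)})^2\le 2M$; this yields $\E[\mathrm{numerator}]\le\sqrt{2\delta}\,M$ directly, so your premise that a uniform treatment loses a factor of $2$ is mistaken, and the case split you introduce to avoid it is unnecessary.
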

Trevisan used this result to develop a novel approximation algorithm for the \textsc{MAX-CUT} problem.
Soto~\cite{sotoImprovedAnalysisMaxCut2015} improved Trevisan's analysis and showed that the algorithm achieves an approximation ratio of $0.614247$.
Liu~\cite{liuMultiwayDualCheeger2015} generalised Theorem~\ref{thm:trevisansinequality} and proved a higher-order version of this inequality which links the top eigenvalues of a graph Laplacian to the bipartiteness of multiple clusters in the graph.

The problem of learning the clusters in graphs exhibiting heterophily has also been studied from a supervised and semi-supervised learning perspective, such as in \cite{mooreActiveLearningNode2011, peiGeomGCNGeometricGraph2019, zhuHomophilyGraphNeural2020}.
These techniques are not covered in detail here since the focus of this thesis is on algorithms for unsupervised learning.

\subsection{Learning Clusters from Edge Directions}
In the preceding sections, we considered algorithms for learning clusters which are defined by
sparse~(or dense) connections. 
In this section, we consider directed graphs in which the clusters are defined by the \emph{directions} of their connecting edges.

 Cucuringu~\etal~\cite{cucuringuHermitianMatricesClustering2020} studied this problem and introduced the so-called \firstdef{cut imbalance} between two clusters $\sets_1, \sets_2 \subset \vertexset$, which measures the imbalance of edge directions between the clusters.
They proposed to use the Hermitian adjacency matrix of a directed graph to find clusters with a large cut imbalance. 
The Hermitian adjacency matrix of a directed graph $\graphg$ is the matrix $\adjh \in \C^{n \times n}$ where $\adjh(u, v) = \conjugate{\adjh(v, u)} = \imag$ if $\graphg$ contains an edge from $u$ to $v$ and $\adjh(u, v) = \adjh(v, u) = 0$ otherwise.
They develop a spectral clustering algorithm based on the complex-valued eigenvectors of $\adjh$.
For graphs generated from a directed stochastic block model, their algorithm provably recovers the ground truth clusters when the cut imbalance between clusters is sufficiently large.

Laenen and Sun~\cite{laenenHigherOrderSpectralClustering2020} further studied spectral clustering with the Hermitian adjacency matrix, and proved that if the clusters form a directed path structure, then they can be recovered using a \emph{single} complex-valued eigenvector.
In Chapter~\ref{chap:meta}, we   see that this can be generalised to undirected graphs and arbitrary structures of clusters.
Moreover, we   prove that when the clusters have a clearly defined structure, spectral clustering with fewer than $k$ eigenvectors performs better than spectral clustering with $k$ eigenvectors.

\section{Local Graph Clustering} \label{sec:related:local}
Let us now turn our attention to the problem of \firstdef{local graph clustering}, which was first introduced by Spielman and Teng~\cite{spielmanLocalClusteringAlgorithm2013}.
In the classical local clustering problem, we are given as input a large graph $\geqve$ and some starting vertex $v \in \vertexset$.
Our aim is to find a low-conductance cluster $\sets \subset \vertexset$ which contains the vertex $v$.
Moreover, the algorithm should have running time which depends on $\vol(\sets)$ and is independent of $\vol(\vertexset)$.
This sublinear running time makes local clustering a crucially important technique for studying massive datasets occurring in real-world applications.
This section introduces two techniques which form the basis for several local clustering algorithms.

\subsection{\pagerank\ Based Algorithm}
The first key local clustering technique that we   consider is known as the personalised \pagerank.
With its roots in the well-known \pagerank\ algorithm for ranking web pages~\cite{pagePageRankCitationRanking1999}, personalised \pagerank\ can be used to find vertices in a graph close to some starting vertex.
This technique has proved useful for many graph learning applications~\cite{andersenLocalGraphPartitioning2006, andersenLocalPartitioningDirected2007, arrigoNonbacktrackingPagerank2019, yinLocalHigherorderGraph2017, takaiHypergraphClusteringBased2020}.

\paragraph{Random walks.}
Before coming to the personalised \pagerank, we first discuss random walks on graphs.
In a graph $\geqve$, a random walk is a sequence of vertices $x_1, x_2, \ldots, x_n \in \vertexset$ where each $x_t$ is chosen uniformly at random from the neighbours of $x_{t-1}$.
See Figure~\ref{fig:randwalkExample} for an illustration.

\begin{figure}[t]
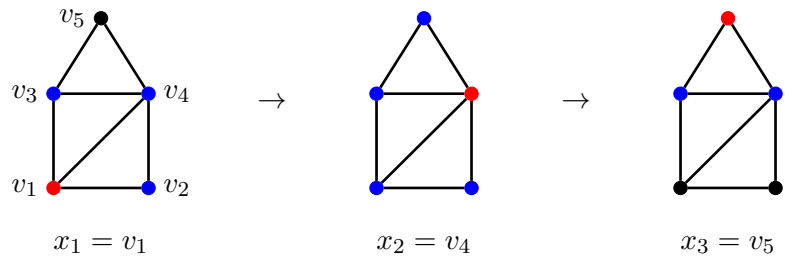

    \centering
    \tikzfig{figures/intro/randomWalk}
    \caption[A random walk]{
    An example random walk starting with $x_1 = v_1$. At each time step, the current position is shown by the red vertex, and the next vertex of the walk is chosen randomly from the neighbours highlighted in blue.
    }
    \label{fig:randwalkExample}
\end{figure}

We can represent this process algebraically by defining the \firstdef{random walk matrix} to be
\[
    \walkm \triangleq \adj \degm^{-1}.
\]
Let $\vecp_t \in \R^n$ be the probability distribution at time $t$, where $\vecp_t(v)$ is the probability that $x_t = v$.
 Define $\vecp_0 \in \R^n$ to be some starting distribution such that $\sum_{i = 1}^n \vecp(i) = 1$.
 It is common to use $\vecp_0 = \indicatorvec_v$ for some vertex $v \in \vertexset$.
Then, we can write $\vecp_{t + 1}$ as
\[
    \vecp_{t + 1}  = \walkm \vecp_t.
\]
In order to ensure that the random walk converges, we instead use the \firstdef{lazy random walk} defined by the matrix
\[
    \lazywalkm \triangleq \frac{1}{2} \left(\identity + \walkm \right).
\]
This corresponds to a random walk process in which with probability $1/2$, the next vertex $x_{t + 1}$ is equal to $x_t$, and with probability $1/2$ the next vertex $x_{t+1}$ is chosen uniformly at random from the neighbours of $x_t$.
If the graph is connected, this random walk process always converges to the stationary distribution denoted by $\statdist$.~\footnote{Notice that this is not guaranteed when using $\walkm$. For example, if $\graphg$ is bipartite such that every edge connects $\setl \subset \vertexset$ to $\setr \subset \vertexset$, then $x_t$ always alternates between $\setl$ and $\setr$.}
In particular, by setting
\[
    \statdist(v) = \frac{\deg(v)}{\vol(\vertexset)}
\]
for every $v\in \vertexset$, it is easy to verify that
\[
    \lazywalkm \statdist = \statdist.
\]
Many algorithms for local graph clustering are based on the fact that a random walk starting inside some low conductance cluster is likely to stay inside the cluster.
We can study this through the \firstdef{escape probability}, which is the probability that a lazy random walk starting at $v$ leaves a set $\seta$ within the first $T$ steps and is defined as
\[
    \esc(v, T, \seta) \triangleq \p \left[\bigcup_{j = 0}^T x_j \not\in \seta  \middle\vert x_0 = v \right].
\]
Spielman and Teng~\cite{spielmanLocalClusteringAlgorithm2013} gave the following result which connects the escape probability to the conductance of the set $\sets$.
\begin{proposition}[\cite{spielmanLocalClusteringAlgorithm2013}, Proposition 2.5] \label{prop:escape_prob}
Let $G=(V,E)$ be a graph, and $A\subset V$. Then, there is a set $\seta_T \subseteq \seta$ with $\vol(\seta_T) \geq \frac{1}{2}\vol(\seta)$, such that it holds for any $x \in \seta_T$ that
\[ 
    \mathrm{esc}(x, T, \seta) \leq T \cond(\seta).
\]
\end{proposition}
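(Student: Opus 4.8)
The plan is to track the lazy random walk \emph{absorbed} on the first exit from $\seta$, started from the stationary distribution restricted to $\seta$, and to show that this absorbed walk sheds at most $\weight(\seta,\comp{\seta})/(2\vol(\vertexset))$ units of probability mass per step. Averaging the resulting bound over the starting vertices will then carve out $\seta_T$. First I would dispose of the trivial case $\cond(\seta)=0$: then $\seta$ is a union of connected components, no walk started in $\seta$ can leave it, so $\esc(x,T,\seta)=0$ for every $x\in\seta$ and $\seta_T=\seta$ suffices. Hence assume $\cond(\seta)>0$.

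Next I would set $\vecp_0 = \degm\indicatorvec_\seta/\vol(\vertexset)$, so that $\vecp_0(v)=\statdist(v)$ for $v\in\seta$ and $\vecp_0(v)=0$ otherwise; in particular $\vecp_0\preceq\statdist$. Define the absorbed walk by $\vecp_{t+1}=\Pi_\seta\lazywalkm\vecp_t$, where $\Pi_\seta$ is the diagonal $0$-$1$ matrix selecting the coordinates of $\seta$. Since $\lazywalkm$ has non-negative entries and $\lazywalkm\statdist=\statdist$, a one-line induction gives $\vecp_t\preceq\statdist$ for all $t$. The $\ell_1$-mass lost at step $t$ is exactly the mass pushed out of $\seta$, namely $\norm{\vecp_t}_1-\norm{\vecp_{t+1}}_1=\sum_{v\notin\seta}(\lazywalkm\vecp_t)(v)$; expanding $\lazywalkm=\tfrac12(\identity+\walkm)$ and using that the $\tfrac12\identity$ part contributes nothing to the sum over $v\notin\seta$ (there $\vecp_t$ vanishes), the sum becomes $\tfrac12\sum_{u\in\seta}\tfrac{\vecp_t(u)}{\deg(u)}\sum_{v\notin\seta}\weight(v,u)$, and bounding $\vecp_t(u)\le\statdist(u)$ with $\statdist(u)/\deg(u)=1/\vol(\vertexset)$ gives $\sum_{v\notin\seta}(\lazywalkm\vecp_t)(v)\le \weight(\seta,\comp{\seta})/(2\vol(\vertexset))$, uniformly in $t$. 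Telescoping over $T$ steps yields $\norm{\vecp_0}_1-\norm{\vecp_T}_1\le T\weight(\seta,\comp{\seta})/(2\vol(\vertexset))$.

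To finish, I would read the left-hand side combinatorially. The absorbed walk launched from a point mass at $v\in\seta$ loses all its mass precisely if and when the walk first leaves $\seta$, which happens within $T$ steps with probability $\esc(v,T,\seta)$ (the $j=0$ term in the definition of $\esc$ is vacuous for $v\in\seta$); by linearity, $\norm{\vecp_0}_1-\norm{\vecp_T}_1=\sum_{v\in\seta}\tfrac{\deg(v)}{\vol(\vertexset)}\esc(v,T,\seta)$. Combining this with $\weight(\seta,\comp{\seta})\le\cond(\seta)\vol(\seta)$, which is immediate from the definition of conductance, gives $\sum_{v\in\seta}\deg(v)\,\esc(v,T,\seta)\le\tfrac{T}{2}\cond(\seta)\vol(\seta)$. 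A Markov-type averaging then closes the argument: setting $\setb=\{v\in\seta:\esc(v,T,\seta)>T\cond(\seta)\}$, the previous inequality forces $T\cond(\seta)\vol(\setb)<\tfrac{T}{2}\cond(\seta)\vol(\seta)$, hence $\vol(\setb)<\tfrac12\vol(\seta)$; taking $\seta_T=\seta\setminus\setb$ yields $\vol(\seta_T)\ge\tfrac12\vol(\seta)$ and $\esc(x,T,\seta)\le T\cond(\seta)$ for every $x\in\seta_T$, as required.

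The main obstacle I anticipate is the absorbed-walk bookkeeping rather than any single hard estimate: one must verify that the domination $\vecp_t\preceq\statdist$ survives the projection $\Pi_\seta$, compute the per-step leakage $\sum_{v\notin\seta}(\lazywalkm\vecp_t)(v)$ carefully (the factor $\tfrac12$ is essential, since only the non-lazy half of a step can cross the cut, and the symmetry $\weight(v,u)=\weight(u,v)$ together with $\statdist(u)=\deg(u)/\vol(\vertexset)$ is what makes the bound telescope cleanly), and justify that the total $\ell_1$-mass lost by time $T$ equals $\sum_{v\in\seta}\vecp_0(v)\esc(v,T,\seta)$ — which hinges on $\esc$ counting the event of leaving $\seta$ \emph{at least once}, so that it coincides with absorption on first exit. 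Everything else is elementary.
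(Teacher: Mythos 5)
Your proof is correct, and it is essentially the standard argument for this result: dominate the absorbed (killed-on-exit) lazy walk started from the degree-weighted distribution on $\seta$ by the stationary distribution, bound the per-step leakage across the cut by $\weight(\seta,\comp{\seta})/(2\vol(\vertexset))$, telescope, and finish with Markov averaging over starting vertices. The thesis itself states this proposition as a cited result of Spielman and Teng without reproducing a proof, and your argument matches theirs in all essentials.
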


\paragraph{Personalised \pagerank.}
In the local clustering problem, we are interested in finding a cluster in the graph $\graphg$ which is close to some starting vertex $v$.
In order to find such a cluster, let us consider a new random walk process called the \firstdef{teleporting random walk}. 
We first define
some starting distribution $\vecs \in \R^n$, and a \firstdef{teleport probability} $\alpha$.
Then, at time $t$,
\begin{enumerate}
    \item with probability $\alpha$, the next vertex $x_{t+1}$ is chosen according to the distribution $\vecs$;
    \item with probability $1 - \alpha$, the vertex $x_{t+1}$ is chosen according to the lazy random walk from $x_t$.
\end{enumerate}
If we take $\vecs$ to be the indicator vector of some vertex $v$, then at each step the random walk `teleports' to $v$ with probability $\alpha$.
We can represent this walk with the following update rule:
\[
    \vecp_{t+1} = \alpha \vecs + \left(1 - \alpha\right) \lazywalkm \vecp_t.
\]
The \firstdef{personalised \pagerank\ vector} $\ppr(\alpha, \vecs)$ is the stationary distribution of this random walk for some specified $\alpha$ and $\vecs$.
That is, one can view $\ppr: \R \times \R^n \rightarrow \R^n$ as a function such that, for any $\alpha$ and $\vecs$, $\ppr(\alpha, \vecs)$ is the unique solution to the equation
\[
    \ppr(\alpha, \vecs) = \alpha \vecs + (1 - \alpha) \lazywalkm \ppr(\alpha, \vecs).
\]
Andersen et al.~\cite{andersenLocalGraphPartitioning2006} showed that the personalised \pagerank\ vector can also be written as 
\[
    \ppr(\alpha,  \vecs) = \alpha \sum_{t = 0}^\infty (1 - \alpha)^t \lazywalkm^t \vecs.
\]
Therefore, we can also study $\ppr(\alpha, \vecs)$ through the following random process: pick some integer $t\in\mathbb{Z}_{\geq 0}$ with probability $\alpha(1-\alpha)^t$, and perform a $t$-step lazy random walk, where the starting vertex of the random walk is picked according to $\vecs$. Then, $\ppr(\alpha, \vecs)$ describes the probability of reaching each vertex in this process. 
Intuitively, we would expect the vertices which are `close' to the vertices in the starting distribution $\vecs$ to have a higher value in the $\ppr(\alpha, \vecs)$ vector.
 
\paragraph{Approximate personalised \pagerank.}
Our goal is to use the personalised \pagerank\ vector as the basis of local graph clustering algorithms. 
However, computing a personalised \pagerank\ vector $\ppr(\alpha, \vecs)$ exactly is equivalent to computing the stationary distribution of a Markov chain on the vertex set $\vertexset$, which has a time complexity of $\bigomega{n}$.
This is too slow for any local clustering algorithm,
whose running time is independent of the size of the input graph.

The good news is that, since the probability mass of a personalised \pagerank\ vector is concentrated around some starting vertex, it is possible to compute a good approximation of the \pagerank\ vector in a local way.
Andersen et al.~\cite{andersenLocalGraphPartitioning2006}
introduced the \firstdef{approximate \pagerank\ vector} for this purpose.
\begin{definition}[Approximate \pagerank]
    A vector $\vecp = \apr(\alpha, \vecs, \vecr)$ is called an approximate \pagerank\ vector if 
    $\vecp + \ppr(\alpha,  \vecr) = \ppr(\alpha,  \vecs)$.
    The vector $\vecr$ is called the residual vector.
\end{definition}

\paragraph{The \lovasz-Simonovits curve.}
The \lovasz-Simonovits curve was originally introduced by \lovasz\ and Simonovits~\cite{lovaszMixingRateMarkov1990} to reason about the mixing rate of Markov chains, and has been used extensively in the analysis of local clustering algorithms based on the personalised \pagerank~\cite{andersenLocalGraphPartitioning2006, zhuLocalAlgorithmFinding2013}.
For any vector $\vecp \in \R_{\geq 0}^{n}$, order the vertices such that $$\frac{\vecp(v_1)}{\deg(v_1)} \geq \frac{\vecp(v_2)}{\deg(v_2)} \geq \ldots \geq \frac{\vecp(v_n)}{\deg(v_n)},$$ 
and define the \firstdef{sweep sets} of $\vecp$ to be
$
    \pjsweep = \{v_1, \ldots, v_j\}
$ for $1\leq j\leq n$.
Then, the \firstdef{\lovasz-Simonovits curve}, denoted by $\vecp[x]$ for $x \in [0, \vol(\vertexset)]$, is defined by the points
\[
    \vecp[\vol(\pjsweep)] \triangleq \vecp(\pjsweep),
\]
and is linear between those points for consecutive $j$.  
Lov\'asz and Simonovits also showed that
\begin{equation} \label{eq:deflscurve}
    \lscurve{x} = \max_{\substack{\vecw \in [0, 1]^{n} \\ \sum_{u\in \vertexset } \vecw(u) \deg(u) = x}} \sum_{u \in \vertexset} \vecw(u) \vecp(u),
\end{equation}
which leads immediately to the following proposition.
\begin{proposition} \label{prop:lsleq}
    For any $\vecp \in \R_{\geq 0}^{n}$ and any $\sets \subseteq \vertexset$, it holds that 
   $
        \vecp(\sets) \leq \lscurve{\vol(\sets)}.
    $
\end{proposition}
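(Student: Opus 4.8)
The plan is to apply the variational characterisation of the \LS\ curve in equation~\eqref{eq:deflscurve} with a carefully chosen feasible weight vector. Fix a set $\sets \subseteq \vertexset$ and set $x = \vol(\sets)$; note that $x \in [0, \vol(\vertexset)]$ since $\sets \subseteq \vertexset$, so $\lscurve{x}$ is well-defined.

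First I would take $\vecw = \indicatorvec_\sets$, the indicator vector of $\sets$. Then $\vecw(u) \in \{0, 1\} \subseteq [0, 1]$ for every $u \in \vertexset$, so $\vecw \in [0,1]^n$. Moreover,
\[
    \sum_{u \in \vertexset} \vecw(u) \deg(u) = \sum_{u \in \sets} \deg(u) = \vol(\sets) = x,
\]
so $\vecw$ is a feasible point of the maximisation defining $\lscurve{x}$.

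Since $\lscurve{x}$ is the maximum of $\sum_{u \in \vertexset} \vecw(u) \vecp(u)$ over all feasible $\vecw$, plugging in our particular choice gives
\[
    \lscurve{\vol(\sets)} \;\geq\; \sum_{u \in \vertexset} \indicatorvec_\sets(u)\, \vecp(u) \;=\; \sum_{u \in \sets} \vecp(u) \;=\; \vecp(\sets),
\]
which is exactly the claimed inequality. There is no real obstacle here: the only point requiring a moment's care is checking that $\indicatorvec_\sets$ is genuinely feasible (it lies in $[0,1]^n$ and its weighted coordinate sum equals $\vol(\sets)$), after which the result is immediate from~\eqref{eq:deflscurve}.
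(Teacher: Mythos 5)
Your proof is correct and follows essentially the same route as the paper: both take $\vecw = \indicatorvec_\sets$ as a feasible point in the maximisation \eqref{eq:deflscurve} and conclude immediately. You simply spell out the feasibility check that the paper leaves implicit.
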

\begin{proof} 
    We have that
 $\vecp(\sets)  =  \sum_{u \in \vertexset} \indicatorvec_{\sets}(u) \vecp(u).$
  Hence, the statement follows by (\ref{eq:deflscurve}).
\end{proof}
It is worth taking  time to understand this definition and gain some of the intuition behind the \lovasz-Simonovits curve.
First of all, we have that $\vecp[0] = 0$ and $\vecp[\vol(\vertexset)] = 1$.
Then, for some $x \in [0, \vol(V)]$, the value of $\vecp[x]$ is the maximum probability mass in $\vecp$ on any vertex set with volume $x$.
Notice that for the stationary distribution $\statdist$, we have
\[
    \statdist[x] = \frac{x}{\vol(\vertexset)}.
\]
Andersen \etal~\cite{andersenLocalGraphPartitioning2006} used the \lovasz-Simonovits curve to prove that the sweep sets of an approximate \pagerank\ vector can be used to find a low-conductance cluster close to some starting vertex.
Their proof is built around the following two results.
\begin{enumerate}
    \item If there is a set $\sets \subset \vertexset$ with low conductance, then for most of the vertices $v \in \sets$, the \lovasz-Simonovits curve of the approximate \pagerank\ vector $\vecp = \apr(\alpha, \indicatorvec_v, \vecr)$ deviates from that of the stationary distribution. That is, $\vecp[x] - \statdist[x]$ is large for some $x \in [0, \vol(\vertexset)]$.
    This corresponds to our intuition that a random walk is likely to stay inside a cluster with low conductance.
    \item If $\vecp[x] - \statdist[x]$ is large for some $x \in [0, \vol(\vertexset)]$, then there is a sweep set of $\vecp$ with low conductance.
\end{enumerate}
These two results lead to the following simple local graph clustering algorithm, which takes as input a graph $\graphg$, a seed vertex $v \in \vertexset$, and a parameter $\alpha$.
\begin{enumerate}
    \item Compute vectors $\vecp$ and $\vecr$ satisfying $\vecp = \apr(\alpha, \indicatorvec_v, \vecr)$.
    \item Compute $j = \argmin_{j \in [\abs{\supp(p)}]} \cond(\pjsweep)$ and return $\pjsweep$.
\end{enumerate}
Andersen~\etal~\cite{andersenLocalGraphPartitioning2006} also provided an algorithm for computing the approximate \pagerank\ locally, which is used in the first step of the clustering algorithm. 

\subsection{Evolving Set Process} \label{sec:esp}
 The \firstdef{evolving set process~(ESP)} is another key technique for designing local clustering algorithms~\cite{andersenFindingDenseSubgraphs2009, andersenAlmostOptimalLocal2016}.
Given a graph $\geqve$, the ESP is a Markov chain whose states are sets of vertices $\sets_i \subseteq \vertexset$.
Given a state $\sets_i$, the next state $\sets_{i+1}$ is determined by the following process: (1) choose $t \in [0, 1]$ uniformly at random; (2) let $\sets_{i + 1} = \{v \in \vertexset | \indicatorvec_{\sets_i}^\transpose \lazywalkm \indicatorvec_v \geq t\}$,
 where $\indicatorvec_{\sets}$ is the indicator vector defined in Section~\ref{sec:prelim:graphs}.
To understand this update process, notice that $\indicatorvec_{\sets_i}^\transpose \lazywalkm \indicatorvec_v$ is the probability that a one-step random walk starting at $v$ ends inside $\sets_i$.
Since the ESP is based on a \emph{lazy} random walk,  we have $\sets_i \subseteq \sets_{i+1}$ if $t \leq 1/2$, and $\sets_{i+1} \subseteq \sets_i$ otherwise.

If the graph $\graphg$ is connected, then the ESP has two absorbing states: $\emptyset$ and $\vertexset$.
Given that the transition kernel of the evolving set process is given by $\mathbf{K}(\sets, \sets') = \p[\sets_{i+1} = \sets' | \sets_i = \sets]$, the \firstdef{volume-biased ESP} is defined by the transition kernel
\[
    \mathbf{\widehat K}(\sets, \sets') = \frac{\vol(\sets')}{\vol(\sets)} \mathbf{K}(\sets, \sets'),
\]
and is guaranteed to absorb in the state $\vertexset$ rather than $\emptyset$.
Andersen and Peres~\cite{andersenFindingSparseCuts2009} gave a local algorithm for undirected graph clustering using the volume-biased ESP.
 As part of this, they described an algorithm, $\alggeneratesample(\graphg, u, T)$, which samples $\sets_T$ from the volume-biased ESP with $\sets_0 = \{u\}$ and has running time proportional to the volume of the output set. 

Andersen and Peres~\cite{andersenFindingSparseCuts2009} gave two further results which are useful in the analysis of the evolving set process.
The first tells us that an evolving set process running for sufficiently many steps is very likely to have at least one state with low conductance.\footnote{Note that they use a slightly different definition of conductance: $\cond(\sets) = \weight(\sets, \setcomplement{\sets}) / \vol(\sets)$. The difference in the denominator does not impact our use of this theorem, since we   guarantee that the volume of $\sets$ is small.}
\begin{proposition}[\cite{andersenFindingSparseCuts2009}, Corollary 1] \label{prop:esp_cond}
    Let $\sets_0, \sets_1, \ldots, \sets_T$ be sets drawn from the volume-biased evolving set process beginning at $\sets_0$. Then, it holds  for any constant $c \geq 0$ that 
    \[
        \p\left[\min_{j \leq T} \cond(\sets_j) \leq \sqrt{\frac{4 c}{T} \ln (\vol(\vertexset))}\right] \geq 1 - \frac{1}{c}.
    \]
\end{proposition}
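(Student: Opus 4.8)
The plan is to control the trajectory of the volume-biased ESP through a potential function that must decrease substantially at every step along which the current set has large conductance; bounding the total decrease then bounds the number of such steps, and Markov's inequality converts this into the stated high-probability statement. The factor $1/c$ in the conclusion is exactly the output of a Markov bound, and the square root comes from averaging a sum of squared conductances over the $T$ steps.

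Concretely, I would introduce the potential
\[
    \Psi_j \triangleq \ln\!\left(\frac{\vol(\vertexset)}{\vol(\sets_j)}\right),
\]
which is well defined because the volume-biased ESP never visits the absorbing state $\emptyset$, and which satisfies $0 \le \Psi_j \le \ln\vol(\vertexset)$ since $1 \le \vol(\sets_j) \le \vol(\vertexset)$ for any non-empty $\sets_j$. Writing $\mathcal{F}_j$ for the $\sigma$-algebra generated by $\sets_0, \ldots, \sets_j$, the technical core of the proof is the one-step drop estimate
\[
    \E\!\left[\Psi_j - \Psi_{j+1} \,\middle|\, \mathcal{F}_j\right] \;\ge\; \frac{\cond(\sets_j)^2}{4}.
\]
Given this, the remainder is routine: by the tower rule and telescoping,
\[
    \frac{1}{4}\,\E\!\left[\sum_{j=0}^{T-1} \cond(\sets_j)^2\right] \;\le\; \E\!\left[\sum_{j=0}^{T-1}(\Psi_j - \Psi_{j+1})\right] \;=\; \E[\Psi_0] - \E[\Psi_T] \;\le\; \ln\vol(\vertexset),
\]
so that $\E\big[\sum_{j=0}^{T-1}\cond(\sets_j)^2\big] \le 4\ln\vol(\vertexset)$. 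Markov's inequality then gives $\prob{\sum_{j=0}^{T-1}\cond(\sets_j)^2 \ge 4c\ln\vol(\vertexset)} \le 1/c$, and on the complementary event
\[
    \min_{j \le T}\cond(\sets_j)^2 \;\le\; \frac{1}{T}\sum_{j=0}^{T-1}\cond(\sets_j)^2 \;<\; \frac{4c\ln\vol(\vertexset)}{T},
\]
which is precisely the claimed inequality after taking square roots.

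The hard part is establishing the one-step drop estimate, for which I would analyse a single ESP step from a fixed set $\sets = \sets_j$. The two ingredients I expect to use are: (i) the fact that $1/\vol(\sets_j)$ is a non-negative martingale under the volume-biased ESP — it is the Doob $h$-transform, with $h = \vol$, of the ordinary ESP, for which $\vol(\sets_j)$ is itself a martingale — so that $V \triangleq \vol(\sets)/\vol(\sets_{j+1})$ satisfies $\E[V \mid \mathcal{F}_j] = 1$ while $-\ln V = \Psi_j - \Psi_{j+1}$; and (ii) a Cheeger-type isoperimetric estimate, in the spirit of Morris and Peres, showing that the random threshold $t$ splits $\sets$ across its boundary $\boundary(\sets)$ with enough fluctuation that the conditional variance of $V$ is at least a constant times $\cond(\sets)^2$. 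A second-order expansion of $-\ln$ around $1$ then upgrades $\E[V\mid\mathcal{F}_j] = 1$ and this variance bound into $\E[-\ln V \mid \mathcal{F}_j] \ge \cond(\sets)^2/4$. The delicate points are choosing a functional for which the second-order expansion is uniformly valid — it may be cleaner to run the same scheme with $\sqrt{\vol(\sets_j)}$, or with $\sqrt{\vol(\sets_j)\,\vol(\vertexset \setminus \sets_j)}$, so that the relevant random variables stay bounded — and exploiting the laziness of $\lazywalkm$, which forces $\sets_j \subseteq \sets_{j+1}$ or $\sets_{j+1} \subseteq \sets_j$ and keeps the split well behaved. Alternatively, one can take the one-step estimate as proved by Andersen and Peres and obtain the proposition directly from the telescoping-plus-Markov argument above.
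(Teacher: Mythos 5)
First, a point of reference: the thesis itself does not prove this proposition --- it is imported verbatim as Corollary~1 of Andersen and Peres --- so the comparison is with their original argument rather than with a proof in the paper. Your outer reduction is sound and, granted your one-step drop estimate, it yields exactly the stated bound with the stated constants: telescoping $\Psi_j=\ln(\vol(\vertexset)/\vol(\sets_j))$ gives $\E\bigl[\sum_{j<T}\cond(\sets_j)^2\bigr]\le 4\ln\vol(\vertexset)$ (using $\Psi_T\ge 0$ and $\vol(\sets_0)\ge 1$), Markov gives the $1-1/c$ probability, and averaging gives $\min_{j\le T}\cond(\sets_j)\le\sqrt{4c\ln(\vol(\vertexset))/T}$. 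This is genuinely a different route from Andersen--Peres: they never telescope an expected log-drop, but instead bound the gauge $\psi(\sets)\triangleq 1-\E\bigl[\sqrt{\vol(\sets')/\vol(\sets)}\,\big|\,\sets\bigr]\ge\cond(\sets)^2/8$ (Morris--Peres) and show that $\exp\bigl(\sum_{t<T}\psi(\sets_t)\bigr)\sqrt{\vol(\sets_0)/\vol(\sets_T)}$ is a supermartingale under the volume-biased process, then apply Markov to that exponential functional. Your first-moment version is more elementary and recovers the corollary directly, at the cost of the exponential concentration their argument also provides.

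The gap is that the one-step estimate $\E[\Psi_j-\Psi_{j+1}\mid\mathcal{F}_j]\ge\cond(\sets_j)^2/4$ is the entire technical content, and your sketch of it does not go through as written. The proposed input --- a lower bound of order $\cond(\sets_j)^2$ on the conditional variance of $V=\vol(\sets_j)/\vol(\sets_{j+1})$ --- is not the standard isoperimetric estimate for the ESP (what is known is the Morris--Peres bound on $\E[\sqrt{\vol(\sets')/\vol(\sets)}\,]$), and even granted such a variance bound the second-order expansion of $-\ln$ does not upgrade it: $V$ is unbounded (the set may shrink by a large factor in one step), one only has $\E[-\ln V\mid\mathcal{F}_j]=\E[g(V)]$ with $g(v)=v-1-\ln v\ge (v-1)^2/(2\max(v,1))$, and a bound on $\mathrm{Var}(V)$ alone does not control this weighted second moment. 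Nor can you simply ``take the one-step estimate as proved by Andersen and Peres'': they prove the gauge bound, not the expected log-drop. The fix, however, is short. Under the volume-biased ESP, $\widehat{\E}[\ln X\mid\sets_j]=\E[X\ln X\mid\sets_j]$, where $X=\vol(\sets_{j+1})/\vol(\sets_j)$ is distributed according to an \emph{ordinary} ESP step from $\sets_j$ and satisfies $\E[X\mid\sets_j]=1$; since $\ln t\ge 1-1/t$ implies $x\ln x\ge 2(x-\sqrt{x})$, it follows that $\widehat{\E}[\Psi_j-\Psi_{j+1}\mid\mathcal{F}_j]\ge 2\bigl(1-\E[\sqrt{X}\mid\sets_j]\bigr)=2\psi(\sets_j)\ge\cond(\sets_j)^2/4$ by the Morris--Peres lemma for the lazy walk. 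With that substitution in place of your variance argument, your proof is complete.
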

Secondly, they showed that the overlap between the sets generated by the evolving set process and some target cluster $\seta$ is closely linked to the escape probability of the lazy random walk.
    \begin{proposition}[\cite{andersenFindingSparseCuts2009}, Lemma 2] \label{prop:esp_vol}
        For any vertex $x$, let $\sets_0, \sets_1, \ldots, \sets_T$ be sampled from a volume-biased ESP starting from $\sets_0 = \{x\}$.
        Then, it holds for any set $\seta \subseteq \vertexset$ and $\lambda > 0$ that 
        \[
            \p\left[\max_{t \leq T} \frac{\vol(\sets_t \setminus \seta)}{\vol(\sets_t)} > \lambda \esc(x, T, \seta)\right] < \frac{1}{\lambda}.
        \]
    \end{proposition}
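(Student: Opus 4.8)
The plan is to reduce the claim to a one-sided maximal inequality for the volume-biased ESP and to prove that inequality via a coupling with an ordinary lazy random walk. Throughout, write $\esc \triangleq \esc(x,T,\seta)$, $\sets_0 = \{x\}$, and $Z_t \triangleq \vol(\sets_t \setminus \seta)/\vol(\sets_t)$ for $0 \le t \le T$. Let $(\mathcal{F}_t)$ be the filtration generated by $\sets_0, \ldots, \sets_t$, and set $\tau \triangleq \min\{t \le T : Z_t > \lambda\,\esc\}$ (with $\tau = \infty$ if there is no such $t$); then $\tau$ is a stopping time and the event whose probability we must bound is exactly $\{\tau \le T\}$. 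The ingredient that does the work is the strong-stationary-dual coupling of evolving sets (going back to Diaconis and Fill, and made explicit for the ESP by Morris and Peres; it is the coupling underlying the analysis in \cite{andersenFindingSparseCuts2009}): on a common probability space there is a lazy random walk $(X_t)_{t \ge 0}$ with $X_0 = x$ such that $X_t \in \sets_t$ for every $t$ and, conditioned on $\mathcal{F}_t$, the vertex $X_t$ is distributed over $\sets_t$ proportionally to degree, so that $\p[X_t \notin \seta \mid \mathcal{F}_t] = Z_t$. As a consistency check, taking expectations recovers $\widehat{\E}[Z_t] = \p_x[X_t \notin \seta] \le \esc$ for $t \le T$; the same identity also drops out of the observation that the volume-biased kernel is the Doob $h$-transform of the ESP kernel with the harmonic function $h = \vol$, using $\sum_v \deg(v)\,\lazywalkm(u,v) = \deg(u)$ and $\lazywalkm^\transpose = \degm^{-1}\lazywalkm\degm$.

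Granting the coupling, the argument is short. Work on the coupled space. On the event $\{\tau \le T\}$ we have, by the degree-uniformity and the definition of $\tau$, $\p[X_\tau \notin \seta \mid \mathcal{F}_\tau] = Z_\tau > \lambda\,\esc$. On the other hand, because $X_\tau \in \sets_\tau$, the event $\{\tau \le T,\ X_\tau \notin \seta\}$ forces the lazy walk to lie outside $\seta$ at some time $\le T$, so $\p[\tau \le T,\ X_\tau \notin \seta] \le \esc$ by the definition of the escape probability. Combining,
\[
    \esc \;\ge\; \p\!\left[\tau \le T,\ X_\tau \notin \seta\right] \;\ge\; \lambda\,\esc \cdot \p\!\left[\tau \le T\right],
\]
where the second inequality is obtained by taking expectations of the pointwise bound $\p[X_\tau \notin \seta \mid \mathcal{F}_\tau] > \lambda\,\esc$ restricted to $\{\tau \le T\}$, and is strict as soon as $\p[\tau \le T] > 0$. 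If $\esc > 0$ this gives $\p[\tau \le T] < 1/\lambda$; if $\esc = 0$ then $Z_t = 0$ for all $t \le T$ (since $\widehat{\E}[Z_t] = \p_x[X_t \notin \seta] = 0$ and $Z_t \ge 0$), so $\p[\tau \le T] = 0 < 1/\lambda$. In either case the proposition follows. Equivalently, one can apply Doob's maximal inequality to the nonnegative L\'evy martingale $M_t \triangleq \p[\,\exists j \le T : X_j \notin \seta \mid \mathcal{F}_t\,]$, using $M_0 = \esc$ and $Z_t \le M_t$.

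I expect the genuinely substantial step to be the first one: establishing the coupling so that the walk stays inside the current evolving set and is degree-uniform there conditionally on the set's entire history. In a self-contained treatment this is built one step at a time --- given $(\sets_t, X_t)$, take a single lazy-walk step to obtain $X_{t+1}$, then draw the ESP threshold conditioned on $\{X_{t+1} \in \sets_{t+1}\}$ with precisely the size-biased law that makes $\sets_{t+1}$ follow the volume-biased transition distribution $\widehat{\mathbf{K}}(\sets_t, \cdot)$ while keeping $X_{t+1}$ degree-uniform in $\sets_{t+1}$ --- and checking that these two invariants can be maintained simultaneously is the only delicate point. Everything after that (the $h$-transform identity, the $\lazywalkm^\transpose = \degm^{-1}\lazywalkm\degm$ computation, and the stopping-time/maximal-inequality bookkeeping) is routine. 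Since in this thesis the proposition is quoted from \cite{andersenFindingSparseCuts2009}, I would most likely cite the coupling as an established fact and devote the write-up to the clean maximal-inequality argument above.
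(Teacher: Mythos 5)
Your argument is correct, and it is essentially the proof given in the cited source: the thesis itself only quotes this result from Andersen and Peres, whose Lemma~2 is proved exactly via the Diaconis--Fill coupling (the lazy walk $X_t$ degree-uniform in $\sets_t$ conditioned on the ESP history) followed by the stopping-time computation $\esc \geq \p[\tau \leq T,\, X_\tau \notin \seta] > \lambda\,\esc\cdot\p[\tau \leq T]$. Your handling of the strictness and of the degenerate case $\esc(x,T,\seta)=0$ is fine, and citing the coupling as an established fact is exactly what the source does, so nothing is missing.
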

    
\subsection{Other Local Clustering Techniques}
Chung~\cite{chungLocalGraphPartitioning2009} and Kloster and Gleich~\cite{klosterHeatKernelBased2014} developed local clustering algorithms based on the heat-kernel diffusion process.
Orecchia and Allen-Zhu~\cite{orecchiaFlowbasedAlgorithmsLocal2014} developed an algorithm based on network flows which can be used to find a low-conductance cluster close to some seed set.
Li and Peng~\cite{liDetectingCharacterizingSmall2013} developed a local algorithm for finding densely connected clusters in graphs, based on truncated random walks.

\section{Spectral Hypergraph Theory} \label{sec:related:hypergraph_spectral}
Given the usefulness of hypergraphs for modelling higher-order relationships between data
and the effectiveness of spectral graph theory for learning the structure of graphs, it is natural to look for an extension of these techniques to hypergraphs.
Several authors have proposed different ways to generalise the Laplacian matrix from graphs to hypergraphs~\cite{chungLaplacianHypergraph1993, heinTotalVariationHypergraphslearning2013, jostHypergraphLaplaceOperators2019, liSubmodularHypergraphsPLaplacians2018}.
 This thesis focuses on the Laplacian operator proposed by Chan~\etal~\cite{chanSpectralPropertiesHypergraph2018} which
has been applied for learning clusters in hypergraphs~\cite{takaiHypergraphClusteringBased2020}.

Given some hypergraph $\graphh$ on $n$ vertices, Chan~\etal~\cite{chanSpectralPropertiesHypergraph2018} defined a new \emph{non-linear} operator $\lap_\graphh: \R^n \rightarrow \R^n$ which describes a heat diffusion process on hypergraphs: for any vector $\vecf_t \in \R^n$, the diffusion proceeds according to the equation 
\[
    \dfdt = - \lap_\graphh \vecf_t.
\]
Among many properties of $\lap_\graphh$, they proved that for any vector $\vecf$, the value of $\lap_\graphh \vecf$ can be computed in polynomial time.
 Moreover, the heat diffusion process is guaranteed to converge to an eigenvector of $\lap_\graphh$, and the corresponding eigenvalue satisfies an analogue of the Cheeger inequality for the \firstdef{hypergraph conductance}. 
 
\begin{definition}[Hypergraph conductance]
Given a hypergraph $\graphh = (\vertexset, \edgeset, \weight)$ and any set $\sets \subset \vertexset$, the conductance of $\sets$ is
\[
    \cond_\graphh(\sets) = \frac{\weight(\sets, \setcomplement{\sets})}{\vol(\sets)},
\]
where $\weight(\sets, \setcomplement{\sets}) = \sum_{e \in \edgeset} \weight(e) \cdot \iverson{e \intersect \sets \neq \emptyset \land e \intersect \setcomplement{\sets} \neq \emptyset}$ and $\vol(\sets) = \sum_{v \in \sets} \deg(v)$.
\end{definition}

\begin{theorem}[\cite{chanSpectralPropertiesHypergraph2018}, Theorem 3.3]
For any hypergraph $\graphh$, let $\lambda_2$ be the second smallest eigenvalue of $\lap_\graphh$. Then,
\[
    \frac{\lambda_2}{2} \leq \cond_\graphh \leq 2 \sqrt{\lambda_2},
\]
where $\cond_\graphh$ is the hypergraph conductance.
\end{theorem}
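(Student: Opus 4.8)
The plan is to establish the two inequalities separately, transplanting the classical proof of the Cheeger inequality (Theorem~\ref{thm:cheeger}) to the hypergraph setting, with each hyperedge $e$ playing the role that a single edge plays for graphs. I work throughout with the Rayleigh quotient associated with the operator $\lap_\graphh$, which for $\vecf \in \R^n$ is
\[
    R_\graphh(\vecf) \triangleq \frac{\sum_{e \in \edgeset} \weight(e)\left(\max_{u \in e}\vecf(u) - \min_{v \in e}\vecf(v)\right)^2}{\sum_{v \in \vertexset}\deg(v)\,\vecf(v)^2},
\]
and I assume, as part of the spectral theory of $\lap_\graphh$ already developed by Chan~\etal, that $\lambda_1 = 0$ with constant eigenvector $\constvec$, and that $\lambda_2 = \min R_\graphh(\vecf)$ over all $\vecf \neq \zerovec$ with $\sum_{v \in \vertexset}\deg(v)\vecf(v) = 0$. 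The subtlety that $\lap_\graphh$ is non-linear is confined to justifying this variational characterisation, which I take as given.

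\textbf{Lower bound, $\lambda_2/2 \le \cond_\graphh$.} Let $\sets$ with $\vol(\sets) \le \vol(\vertexset)/2$ attain $\cond_\graphh$, and use the test vector $\vecf = \indicatorvec_\sets - \frac{\vol(\sets)}{\vol(\vertexset)}\constvec$, which satisfies $\sum_{v\in\vertexset}\deg(v)\vecf(v) = 0$. The key observation is that on a hyperedge contained in $\sets$ or in $\comp{\sets}$ the quantity $\max_{u\in e}\vecf(u) - \min_{v\in e}\vecf(v)$ vanishes, while on a hyperedge meeting both $\sets$ and $\comp{\sets}$ it equals $1$; hence the numerator of $R_\graphh(\vecf)$ is exactly $\weight(\sets,\comp{\sets})$. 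A direct computation gives $\sum_{v\in\vertexset}\deg(v)\vecf(v)^2 = \vol(\sets)\vol(\comp{\sets})/\vol(\vertexset) \ge \vol(\sets)/2$, so $\lambda_2 \le R_\graphh(\vecf) \le 2\weight(\sets,\comp{\sets})/\vol(\sets) = 2\cond_\graphh$. This direction is routine.

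\textbf{Upper bound, $\cond_\graphh \le 2\sqrt{\lambda_2}$.} Let $\vecf$ attain $R_\graphh(\vecf) = \lambda_2$, and perform the standard preprocessing: shift $\vecf$ by its degree-weighted median, which leaves the numerator of $R_\graphh$ unchanged and can only increase the denominator, so $R_\graphh(\vecf) \le \lambda_2$ still holds while each of $\{v : \vecf(v) > 0\}$, $\{v : \vecf(v) < 0\}$ now has volume at most $\vol(\vertexset)/2$; then pass to whichever of the positive and negative parts of $\vecf$ has Rayleigh quotient at most $\lambda_2$ (such a part exists by the usual computation, since the numerators of the two one-sided parts sum to at most that of $\vecf$); and rescale so $\max_v \vecf(v) = 1$. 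Now consider the random sweep set $\sets_t = \{v : \vecf(v) > \sqrt{t}\}$ with $t$ uniform on $[0,1]$, so $\expected{\vol(\sets_t)} = \sum_v\deg(v)\vecf(v)^2$; a hyperedge $e$ lies in $\partial(\sets_t)$ exactly when $\min_{v\in e}\vecf(v) \le \sqrt{t} < \max_{v\in e}\vecf(v)$, hence
\[
    \prob{e \in \partial(\sets_t)} = \left(\max_{u\in e}\vecf(u) - \min_{v\in e}\vecf(v)\right)\left(\max_{u\in e}\vecf(u) + \min_{v\in e}\vecf(v)\right).
\]
Weighting by $\weight(e)$, summing over $e$, and applying Cauchy--Schwarz, the first factor is $\sqrt{\sum_e\weight(e)(\max_{u\in e}\vecf(u)-\min_{v\in e}\vecf(v))^2} = \sqrt{R_\graphh(\vecf)\sum_v\deg(v)\vecf(v)^2} \le \sqrt{\lambda_2}\sqrt{\sum_v\deg(v)\vecf(v)^2}$, while the second, using $\max_{u\in e}\vecf(u) + \min_{v\in e}\vecf(v) \le 2\max_{u\in e}\vecf(u)$ and $(\max_{u\in e}\vecf(u))^2 \le \sum_{u\in e}\vecf(u)^2$, is at most $2\sqrt{\sum_v\deg(v)\vecf(v)^2}$. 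Hence $\expected{\weight(\sets_t,\comp{\sets_t})} \le 2\sqrt{\lambda_2}\,\expected{\vol(\sets_t)}$, so some threshold $t$ gives $\weight(\sets_t,\comp{\sets_t}) \le 2\sqrt{\lambda_2}\,\vol(\sets_t)$, i.e.\ $\cond_\graphh(\sets_t) \le 2\sqrt{\lambda_2}$; and $\vol(\sets_t) \le \vol(\vertexset)/2$ by construction, so $\cond_\graphh \le 2\sqrt{\lambda_2}$.

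I expect the rounding step of the upper bound to be the main obstacle: one must verify that threshold-cutting a hyperedge genuinely behaves like cutting a single edge between its extreme vertices — in particular the sign bookkeeping for hyperedges straddling $0$, where the contribution $(\max_{u\in e}\vecf(u)-\min_{v\in e}\vecf(v))^2$ has to account for the hyperedge being counted in both the positive and the negative sweep — and that the two-factor Cauchy--Schwarz estimate indeed yields the stated constant. The variational characterisation of $\lambda_2$, the test-vector computation for the lower bound, and the averaging argument extracting a good threshold are the classical Cheeger proof essentially unchanged.
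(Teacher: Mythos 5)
The thesis only cites this theorem from Chan~et~al.\ and gives no proof of it, so there is no in-paper argument to compare against line by line; your proof is correct, granted the variational characterisation of $\lambda_2$ as $\min\{R_\graphh(\vecf): \sum_v \deg(v)\vecf(v)=0\}$, which is indeed part of Chan~et~al.'s framework (and the intended reading of ``second smallest eigenvalue'' for this non-linear operator, with $\cond_\graphh$ taken as the minimum of $\weight(\sets,\comp{\sets})/\vol(\sets)$ over $\vol(\sets)\le\vol(\vertexset)/2$). The step you flagged as the potential obstacle does go through: per hyperedge, writing $M=\max_{u\in e}g(u)$ and $m=\min_{v\in e}g(v)$, if the edge lies entirely in one sign class then exactly one of $g^+,g^-$ inherits the full spread $M-m$ and the other contributes $0$, while if $m<0<M$ the two one-sided spreads are $M$ and $-m$ and $M^2+m^2\le (M-m)^2$; so the numerators of the positive and negative parts sum to at most that of $g$, and the usual averaging argument selects a part with Rayleigh quotient at most $\lambda_2$. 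Your threshold computation and the two-factor Cauchy--Schwarz then give exactly the stated constant $2\sqrt{\lambda_2}$ (using $(\max+\min)^2\le 2(\max^2+\min^2)$ instead of $\max+\min\le 2\max$ would even sharpen this to $\sqrt{2\lambda_2}$). Your argument is essentially the Trevisan-style sweep proof that the thesis itself adapts in Chapter~\ref{chap:hyper} for the signless operator $\signlap_\graphh$ and hypergraph bipartiteness (Lemmas~\ref{lem:trev_cheeg_easy} and~\ref{lem:trev-cheeg}), so it is fully in the spirit of the surrounding work.
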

Chan and Liang~\cite{chanGeneralizingHypergraphLaplacian2020} generalised this result by considering a slightly different diffusion process, and Chan~\etal~\cite{chanDiffusionOperatorSpectral2019} generalised the hypergraph Laplacian operator to directed hypergraphs.
 Takai~\etal~\cite{takaiHypergraphClusteringBased2020} showed that the hypergraph Laplacian operator can be used to compute the personalised \pagerank\ vector in a hypergraph.
They further used this to develop a local clustering algorithm for hypergraphs.
Finally, Yoshida~\cite{yoshidaCheegerInequalitiesSubmodular2019} described a Laplacian operator for hypergraphs with \emph{submodular cut functions}, and posed the question of whether there is a hypergraph operator which satisfies an inequality like Trevisan's bipartiteness inequality given in Theorem~\ref{thm:trevisansinequality}.
We give an affirmative answer to Yoshida's question in Chapter~\ref{chap:hyper} and demonstrate that the new hypergraph signless Laplacian operator can be used to find densely connected clusters in hypergraphs.

\chapter{A Tighter Analysis of Spectral Clustering} \label{chap:tight}
Spectral clustering, as described in Algorithm~\ref{alg:sc}, has been widely applied over the past two decades due to its simplicity and excellent empirical performance~\cite{ngSpectralClusteringAnalysis2001}. 
In this chapter we theoretically analyse the spectral clustering algorithm, and develop a tighter guarantee for well-clustered graphs than the ones previously given in the literature.
Informally, we analyse the performance guarantee of spectral clustering under a simple assumption on the input graph.
While all the previous work~(e.g., \cite{leeMultiwaySpectralPartitioning2014,kolevNoteSpectralClustering2016,mizutaniImprovedAnalysisSpectral2021,pengPartitioningWellClusteredGraphs2017}) on the same problem suggests that the assumption on the input graph must depend on $k$, we show that the performance of spectral clustering can be rigorously analysed under a general condition independent of $k$.
This is the first result of its kind,  and could have further applications in graph clustering.

Given a graph $\geqve$, the goal of graph clustering is to partition $V$ into $k$ clusters $S_1,\ldots, S_k$ according to some objective function.
Recall that we can measure the quality of a cluster $\sets_i$ by its conductance
\[
    \cond(\sets_i) = \frac{\weight(\sets_i, \setcomplement{\sets_i})}{\min\{\vol(\sets_i), \vol(\setcomplement{\sets_i})\}},
\]
and that the $k$-way expansion of $\graphg$ is a measure of the quality of the optimal partitioning:
\[
    \kcond(k) = \min_{\mathrm{partition}~\sets_1, \ldots, \sets_k} \max_{i \in [k]} \cond(\sets_i).
\]
Let $\lapn$ be the normalised Laplacian matrix of $\graphg$, and  $\lambda_1 \leq \lambda_2, \ldots, \leq \lambda_n$ be the eigenvalues of $\lapn$.
In this chapter, we study spectral clustering on $\graphg$ in terms of the function $\Upsilon(k)$ defined by 
\begin{equation}\label{eq:defineupsilon}
    \Upsilon(k) \triangleq \frac{\lambda_{k+1}}{\rho(k)},
\end{equation}
which was introduced in Section~\ref{sec:spectralclusteringworks}.
The first result in this chapter improves on the structure theorem of Peng~\etal~\cite{pengPartitioningWellClusteredGraphs2017} given in Theorem~\ref{thm:pengStructure}.

\begin{mainresult} [See Theorem~\ref{thm:struc1} for the formal statement]
Given any graph $\graphg$, let $\sets_1, \ldots, \sets_k$ be the clusters which achieve $\kcond(k)$.
The normalised indicator vectors of $\sets_1, \ldots, \sets_k$ are well-approximated by the $k$ eigenvectors of the graph Laplacian $\lapn$ corresponding to $\lambda_1, \ldots, \lambda_k$.
The approximation guarantee is parameterised by $\Upsilon(k)$, and is stronger for larger values of $\Upsilon(k)$.
\end{mainresult}

In the remainder of the chapter, we consider spectral clustering for graphs with clusters of \firstdef{almost-balanced} size.

\begin{definition} (Almost-balanced) \label{def:almostBalanced}
    Let $\graphg$ be a graph with $k$ clusters $S_1,\ldots, S_k$.
    We say that the clusters are almost-balanced if
    \[(1/2) \cdot \vol(\vertexsetg)/k \leq \vol(\sets_i) \leq 2 \cdot \vol(\vertexsetg)/k\]
    for all $i \in[k]$.
\end{definition}

The second result of this chapter is a guarantee on the performance of spectral clustering under the assumption that $\Upsilon(k) \geq C$ for some constant $C$.
\begin{mainresult} [See Theorem~\ref{thm:sc_guarantee} for the formal statement]
Let $G=(V,E)$ be a graph, and $S_1,\ldots, S_k$ the $k$  clusters of almost-balanced size that achieves $\rho(k)$. Given $G$ as the input, let the output of spectral clustering be $A_1,\ldots, A_k$ and the 'optimal' correspondent of $A_i$ be $S_i$ for any $i\in [k]$. Then, $\sum_{i=1}^k \vol(A_i\triangle S_i) \leq C\cdot \vol(V)/\Upsilon(k)$ for some constant $C$.
\end{mainresult}

Notice that some condition on $\Upsilon(k)$  is needed 
to ensure that an input graph $\graphg$ has $k$ well-defined clusters, so that misclassified vertices can be formally defined.
Taking this into account, our result is non-trivial as long as 
$\Upsilon(k)$ is lower bounded by some constant\footnote{Note that we can take any constant approximation in Definition~\ref{def:almostBalanced} with a different corresponding constant in the main result.}.
This significantly improves most of the previous analysis of graph clustering algorithms, which make stronger assumptions on the input graphs.
For example, Peng et al.~\cite{pengPartitioningWellClusteredGraphs2017} assumes that $\Upsilon(k)=\bigomega{k^3}$, Mizutani~\cite{mizutaniImprovedAnalysisSpectral2021} assumes that $\Upsilon(k) =\bigomega{k}$, the algorithm presented by Oveis Gharan and Trevisan~\cite{gharanPartitioningExpanders2014} assumes that $\lambda_{k+1} = \bigomega{\mathrm{poly}(k)\lambda^{1/4}_k}$, and the one presented by Dey et al.~\cite{deySpectralConcentrationGreedy2019} further assumes some condition with respect to  $k$, $\lambda_k$, and the maximum degree of $\graphg$.
While these assumptions require at least a linear dependency on $k$, making it difficult for the instances with a large value of $k$ to satisfy, the result in this chapter suggests that the performance of spectral clustering can be rigorously analysed for these graphs.
In particular, compared with previous work, our result better justifies the widely used eigen-gap heuristic for spectral clustering~\cite{ngSpectralClusteringAnalysis2001,vonluxburgTutorialSpectralClustering2007}.
This heuristic suggests that spectral clustering works when the value of $|\lambda_{k+1} -\lambda_k|$ is much larger than $|\lambda_{k} -\lambda_{k-1}|$, and in practice the ratio between the two gaps is usually a constant rather than some function of $k$.

\section{Stronger Structure Theorem}
Let $\{\sets_i\}_{i=1}^k$ be any optimal partition that achieves $\rho(k)$.
 Recall that $\indicatorvec_{\sets_i}$ is the indicator vector of $\sets_i$, and that $\barg_i$ is the corresponding normalised indicator vector, which is formally defined in Section~\ref{sec:spectralclusteringworks}.
Additionally, let $\vecf_1, \ldots \vecf_n$ be the eigenvectors of $\lapn$ corresponding to the eigenvalues $\lambda_1, \ldots, \lambda_n$.
One of the basic results in spectral graph theory states that $\graphg$ consists of at least $k$ connected components if and only if $\lambda_i=0$ for any $i \in [k]$, and
$\mathrm{span}\left(\vecf_1, \ldots, \vecf_k\right) = \mathrm{span}\left(\barg_1, \ldots, \barg_k\right)$~\cite{chungSpectralGraphTheory1997}.
Hence,  one would expect that, when $\graphg$ consists of $k$ densely connected components~(clusters) connected by sparse cuts, the bottom eigenvectors $\{\vecf_i\}_{i=1}^k$ of $\lapn$ are close to $\{\barg_i\}_{i=1}^k$.

In their work, Peng et~al.~\cite{pengPartitioningWellClusteredGraphs2017} assume   $\Upsilon (k) =\Omega(k^2)$, and  prove that the space spanned by $\{\vecf_i\}_{i=1}^k$ and the one spanned by $\{\barg_i\}_{i=1}^k$ are indeed close to each other. 
Specifically, they show that
\begin{enumerate}
    \item  for every $\barg_i$, there is some linear combination of $\{\vecf_i\}^k_{i=1}$, denoted by $\hatf_i$, such that $\|\barg_i - \hatf_i\|^2\leq 1/\Upsilon(k)$; and
    \item  for every $\vecf_i$ there is some linear combination of $\{\barg_i\}_{i=1}^k$, denoted by $\hatg_i$, such that $\|\vecf_i - \hatg_i \|^2\leq 1.1 k/\Upsilon(k)$.
\end{enumerate}
See Theorem~\ref{thm:pengStructure} for the formal statement.
In essence, their structure theorem gives a quantitative explanation on why spectral methods work for graph clustering
when there is a clear cluster-structure in $\graphg$ characterised by $\Upsilon(k)$.
As it holds for graphs with clusters of different sizes and edge densities, this structure theorem has been shown to be a powerful tool in analysing clustering algorithms, and inspired many subsequent works~(e.g.,~\cite{czumajTestingClusterStructure2015, chenCommunicationoptimalDistributedClustering2016, kolevNoteSpectralClustering2016, kloumannBlockModelsPersonalized2017, louisPlantedModelsKway2019, mizutaniImprovedAnalysisSpectral2021,pengRobustClusteringOracle2020,pengAverageSensitivitySpectral2020,sunDistributedGraphClustering2019}).  

We show that a stronger statement of the original structure theorem holds under a much weaker assumption, and this result is summarised as follows:

\begin{theorem}[The Stronger Structure Theorem] \label{thm:struc1}
The following statements hold:
\begin{enumerate}
    \item For any $i\in[k]$, there is a vector $\hatf_i\in\mathbb{R}^n$ which is  
    a linear combination of $\vecf_1,\ldots, \vecf_k$, such that $\|\barg_i - \hatf_i\|^2 \leq 1/\Upsilon(k)$.
    \item There are vectors $\hatg_1,\ldots, \hatg_k$, each of which is a linear combination of $\barg_1,\ldots, \barg_k$, such that $\sum_{i = 1}^k \norm{\vecf_i - \hatg_i}^2 \leq k /\Upsilon(k)$.
\end{enumerate}
\end{theorem}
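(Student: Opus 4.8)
First I would expand each normalised indicator vector in the eigenbasis of $\lapn$, writing $\barg_i = \sum_{j=1}^{n} \alpha_j^{(i)} \vecf_j$ with $\alpha_j^{(i)} = \inner{\barg_i}{\vecf_j}$. Using $\barg_i = \degmhalf \indicatorvec_{\sets_i} / \sqrt{\vol(\sets_i)}$ together with the Laplacian quadratic form gives the Rayleigh-quotient bound
\[
    \barg_i^\transpose \lapn \barg_i = \frac{\indicatorvec_{\sets_i}^\transpose \lap \indicatorvec_{\sets_i}}{\vol(\sets_i)} = \frac{\weight(\sets_i, \comp{\sets_i})}{\vol(\sets_i)} \leq \cond(\sets_i) \leq \kcond(k),
\]
where the first inequality uses $\vol(\sets_i) \geq \min\{\vol(\sets_i), \vol(\comp{\sets_i})\}$ and the second uses that $\sets_1, \ldots, \sets_k$ achieve $\kcond(k)$. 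On the other hand $\barg_i^\transpose \lapn \barg_i = \sum_{j=1}^{n} (\alpha_j^{(i)})^2 \lambda_j \geq \lambda_{k+1} \sum_{j > k} (\alpha_j^{(i)})^2$, so the ``high-frequency'' mass satisfies $\sum_{j > k} (\alpha_j^{(i)})^2 \leq \kcond(k) / \lambda_{k+1} = 1 / \Upsilon(k)$. Taking $\hatf_i \triangleq \sum_{j=1}^{k} \alpha_j^{(i)} \vecf_j$, which is the orthogonal projection of $\barg_i$ onto $\mathrm{span}(\vecf_1, \ldots, \vecf_k)$, then yields $\norm{\barg_i - \hatf_i}^2 = \sum_{j > k} (\alpha_j^{(i)})^2 \leq 1/\Upsilon(k)$. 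This part is essentially the argument of Peng~\etal~\cite{pengPartitioningWellClusteredGraphs2017}.

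\textbf{Part 2.} The plan for the second statement is to exploit a symmetry between the subspaces $\mathrm{span}(\vecf_1, \ldots, \vecf_k)$ and $\mathrm{span}(\barg_1, \ldots, \barg_k)$ rather than bounding each $\norm{\vecf_i - \hatg_i}$ separately — the latter being what costs Peng~\etal\ the extra factor of $k$. I would collect the vectors into matrices $\mat{F} \in \R^{n \times k}$ with columns $\vecf_1, \ldots, \vecf_k$ and $\mat{B} \in \R^{n \times k}$ with columns $\barg_1, \ldots, \barg_k$; both have orthonormal columns, the $\vecf_i$ since $\lapn$ is symmetric and the $\barg_i$ since the $\sets_i$ are pairwise disjoint, so that $\inner{\barg_i}{\barg_j} = \iverson{i = j}$. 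Then define $\hatg_i$ to be the orthogonal projection of $\vecf_i$ onto $\mathrm{span}(\barg_1, \ldots, \barg_k)$, that is $\hatg_i = \mat{B} \mat{B}^\transpose \vecf_i = \sum_{j=1}^{k} \inner{\barg_j}{\vecf_i} \barg_j$, which is a linear combination of $\barg_1, \ldots, \barg_k$ as required. The decisive step is the trace computation: using $\tr(\mat{F}^\transpose \mat{F}) = k$ and the cyclic property of the trace,
\[
    \sum_{i=1}^{k} \norm{\vecf_i - \hatg_i}^2 = \tr\!\left(\mat{F}^\transpose (\identity - \mat{B} \mat{B}^\transpose) \mat{F}\right) = k - \norm{\mat{B}^\transpose \mat{F}}_F^2,
\]
and the identical computation with the roles of $\mat{B}$ and $\mat{F}$ interchanged gives $\sum_{i=1}^{k} \norm{\barg_i - \mat{F} \mat{F}^\transpose \barg_i}^2 = k - \norm{\mat{F}^\transpose \mat{B}}_F^2 = k - \norm{\mat{B}^\transpose \mat{F}}_F^2$, so the two sums coincide. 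Since $\mat{F} \mat{F}^\transpose \barg_i$ is the orthogonal projection of $\barg_i$ onto $\mathrm{span}(\vecf_1, \ldots, \vecf_k)$ — indeed it coincides with the $\hatf_i$ of Part 1 — we obtain
\[
    \sum_{i=1}^{k} \norm{\vecf_i - \hatg_i}^2 = \sum_{i=1}^{k} \norm{\barg_i - \mat{F} \mat{F}^\transpose \barg_i}^2 = \sum_{i=1}^{k} \norm{\barg_i - \hatf_i}^2 \leq \frac{k}{\Upsilon(k)},
\]
which is the claimed bound.

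\textbf{The main obstacle.} Part 1 is routine bookkeeping, so the real content is the observation underlying Part 2: the ``distance'' between two equidimensional subspaces is symmetric, $\norm{(\identity - \mat{B} \mat{B}^\transpose) \mat{F}}_F = \norm{(\identity - \mat{F} \mat{F}^\transpose) \mat{B}}_F$, which lets a per-vector estimate in one direction be upgraded to a bound on the \emph{sum} of squared errors in the other direction at no extra cost in $k$ (and, as a bonus, with a clean constant $1$ in place of Peng~\etal's $1.1$). The points that need care are verifying orthonormality of the two column sets, the trace manipulation (which needs nothing beyond $\tr(XY) = \tr(YX)$ and $\tr(\mat{F}^\transpose \mat{F}) = k$), and identifying the $\hatf_i$ of Part 1 with the true orthogonal projection $\mat{F} \mat{F}^\transpose \barg_i$. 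Finally, one should treat the degenerate case $\kcond(k) = 0$ (hence $\Upsilon(k) = \infty$) on its own, where both statements reduce to the classical fact that $\mathrm{span}(\vecf_1, \ldots, \vecf_k) = \mathrm{span}(\barg_1, \ldots, \barg_k)$.
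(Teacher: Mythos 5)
Your proposal is correct and follows essentially the same route as the paper: Part 1 is the identical Rayleigh-quotient/projection argument, and your trace identity $\sum_{i=1}^k\|\vecf_i-\hatg_i\|^2 = k-\|\mat{B}^\transpose\mat{F}\|_F^2 = \sum_{i=1}^k\|\barg_i-\hatf_i\|^2$ is exactly the paper's key step of swapping the order of summation in $\sum_i\sum_j\langle\barg_j,\vecf_i\rangle^2$, merely written in matrix notation. No gaps; the orthonormality of the $\barg_i$ (disjoint supports) and of the $\vecf_i$ that you flag are indeed all that is needed.
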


To examine the significance of Theorem~\ref{thm:struc1}, notice that the two statements hold for any $\Upsilon(k)$, while the original structure theorem relies on the assumption that $\Upsilon(k)=\Omega(k^2)$.
Since $\Upsilon(k)=\Omega(k^2)$ is a strong and even questionable assumption when $k$ is large, obtaining these statements for general $\Upsilon(k)$ is important.
Secondly, the second statement of Theorem~\ref{thm:struc1} significantly improves Theorem~\ref{thm:pengStructure}.
Specifically, instead of stating $\|\vecf_i - \hatg_i \|^2\leq 1.1 k/\Upsilon(k)$ for any $i\in [k]$, the second statement shows that $\sum_{i = 1}^k \norm{\vecf_i - \hatg_i}^2 \leq k /\Upsilon(k)$;
hence, it holds in expectation that $\norm{\vecf_i - \hatg_i}^2 \leq 1 /\Upsilon(k)$, the upper bound of which matches the first statement.
This implies that the vectors $\vecf_1,\ldots, \vecf_k$ and $\barg_1,\ldots, \barg_k$ can be linearly approximated by each other  with  \emph{roughly the same} approximation guarantee. 
Thirdly, rather than employing the machinery from matrix analysis used by Peng et al.~\cite{pengPartitioningWellClusteredGraphs2017}
to prove the original theorem, our new proof is simple and purely linear-algebraic.
Therefore, both the stronger theorem and its much simplified proof are significant, and could have further applications in graph clustering and related problems. 

\begin{proof} [Proof of Theorem~\ref{thm:struc1}]
Let $\hatf_i = \sum_{j=1}^k \langle \barg_i, \vecf_j\rangle \vecf_j $, and we write $\barg_i$ as a linear combination of the vectors $\vecf_1, \ldots, \vecf_n$ by 
   $
        \barg_i = \sum_{j = 1}^n \langle \barg_i, \vecf_j \rangle \vecf_j$.
   Since $\hatf_i$ is a projection of $\barg_i$, we have that $\barg_i - \hatf_i$ is perpendicular to $\hatf_i$ and 
    \begin{align*}
        \norm{\barg_i - \hatf_i}^2 & = \norm{\barg_i}^2 - \norm{\hatf_i}^2  = \left(\sum_{j = 1}^n \langle \barg_i, \vecf_j \rangle^2 \right) - \left(\sum_{j = 1}^{k} \langle \barg_i, \vecf_j \rangle^2 \right)  = \sum_{j = k + 1}^n \langle \barg_i, \vecf_j \rangle^2.
    \end{align*}
     Now, let us consider the quadratic form
    \begin{align}
        \barg_i^\transpose \lapn \barg_i & = \left(\sum_{j = 1}^n \langle \barg_i, \vecf_j \rangle \vecf_j^\transpose \right) \lapn \left(\sum_{j = 1}^n \langle \barg_i, \vecf_j \rangle \vecf_j\right) \nonumber \\
        & = \sum_{j = 1}^n \langle \barg_i, \vecf_j \rangle^2 \lambda_j  \geq \lambda_{k + 1} \norm{\barg_i - \hatf_i}^2, \label{eq:lbquad}
    \end{align}
    where the last inequality follows by the fact that $\lambda_i\geq 0$ holds for any $1\leq i\leq n$. This gives us that  
    \begin{align}
        \barg_i^\transpose \lapn \barg_i & = \sum_{(u, v) \in \edgeset} \weight(u, v) \left(\frac{\barg_i(u)}{\sqrt{\deg(u)}} - \frac{\barg_i(v)}{\sqrt{\deg(v)}}\right)^2 \nonumber \\
        & = \sum_{(u, v) \in \edgeset} \weight(u, v) \left(\frac{\indicatorvec_i(u)}{\sqrt{\vol(\sets_i)}} - \frac{\indicatorvec_i(v)}{\sqrt{\vol(\sets_i)}}\right)^2 \nonumber \\
        & = \frac{\weight(\sets_i, \setcomplement{\sets_i})}{\vol(\sets_i)} \leq \rho(k).  \label{eq:upquad}
    \end{align}
    Combining \eqref{eq:lbquad} with \eqref{eq:upquad}, we have that
    \[
    \norm{\barg_i - \hatf_i}^2 \leq \frac{\barg_i^\transpose \lapn \barg_i}{\lambda_{k+1}} \leq \frac{\rho(k)}{\lambda_{k+1}} \leq \frac{1}{\Upsilon (k)},
    \]
    which proves the first statement of the theorem.

Now we prove the second statement.
We define for any $1\leq i \leq k$ that 
 $\hatg_i = \sum_{j=1}^k \langle \vecf_i, \barg_j\rangle \barg_j$ and have that, since $\vecf_i - \hatg_i$ is perpendicular to $\hatg_i$,
    \begin{align*}
        \sum_{i = 1}^k \norm{\vecf_i - \hatg_i}^2
        & = \sum_{i = 1}^k \left( \norm{\vecf_i}^2 - \norm{\hatg_i}^2 \right) \\
        & = \sum_{i = 1}^k \left( 1 - \sum_{j = 1}^k \langle \barg_j, \vecf_i \rangle^2 \right) \\
        & = k - \sum_{i = 1}^k \sum_{j = 1}^k \langle \barg_j, \vecf_i \rangle^2   \\
        & = \sum_{j = 1}^k \left( 1 - \sum_{i = 1}^k \langle \barg_j, \vecf_i \rangle^2 \right) \\
        & = \sum_{j = 1}^k \left( \norm{\barg_j}^2 - \norm{\hatf_j}^2 \right) \\
        & = \sum_{j = 1}^k \norm{\barg_j - \hatf_j}^2 \\
        & \leq \sum_{j = 1}^k \frac{1}{\Upsilon(k)} \\
        & = \frac{k}{\Upsilon(k)},
    \end{align*}
    where the inequality follows by the first statement of  Theorem~\ref{thm:struc1}.
\end{proof}

\section{Analysis of Spectral Clustering\label{sec:analysis1}}
For any input graph $\geqve$ and  $k\in[n]$, we study the following spectral clustering algorithm: 
\begin{enumerate}
    \item compute the eigenvectors $\vecf_1,\ldots \vecf_k$ of $\lapn$, and embed each $u \in \vertexset$ to the point $F(u) \in \R^k$ defined by
    \begin{equation}\label{eq:embedding}
     F(u) \triangleq \frac{1}{\sqrt{\deg(u)}} \left( \vecf_1(u),\ldots, \vecf_k(u)\right)^\transpose;
    \end{equation}
    \item apply $k$-means on the embedded points $\{ F(u)\}_{u\in \vertexset}$;
    \item partition $\vertexset$ into $k$ clusters based on the output of $k$-means.
\end{enumerate}
 
In this rest of this section, we prove the following theorem, where $\APT$ is the approximation ratio of the $k$-means algorithm used in spectral clustering.
Recall that we can take $\APT$ to be some small constant~\cite{cohen-addadOnlineKmeansClustering2021, kumarSimpleLinearTime2004}. 
\begin{theorem}\label{thm:sc_guarantee}
Let $\geqve$ be a graph with $k$ clusters $S_1,\ldots, S_k$ of almost-balanced size, and
$\Upsilon(k) \geq 2176 (1 + \APT)$.
Let $\{\seta_i\}_{i=1}^k$ be the output of spectral clustering and assume, without loss of generality, that the optimal correspondent of $\seta_i$ is $\sets_i$. Then, it holds that
\[
        \sum_{i = 1}^k \vol\left(\seta_i \triangle \sets_i \right) \leq 2176 (1 + \APT) \frac{\vol(\vertexset)}{\Upsilon(k)}.
\] 
\end{theorem}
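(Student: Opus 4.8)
The plan is to run the standard two-phase argument for analysing spectral clustering — first show that the spectral embedding concentrates near $k$ well-separated points, one per cluster, and then show that an approximate $k$-means solution must essentially recover this structure — but to keep \emph{every} constant independent of $k$ by exploiting the \emph{aggregate} bound $\sum_{i=1}^k\|\vecf_i-\hatg_i\|^2\le k/\Upsilon(k)$ from the second part of Theorem~\ref{thm:struc1} rather than a per-$i$ bound. Throughout I work with the degree-weighted $k$-means cost $\sum_{i=1}^k\sum_{u\in A_i}\deg(u)\,\|F(u)-c_i\|^2$.

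\textbf{Phase 1: the embedding concentrates.} For each cluster $S_j$ define the point $p_j\triangleq\frac{1}{\sqrt{\vol(S_j)}}(\langle\vecf_1,\barg_j\rangle,\ldots,\langle\vecf_k,\barg_j\rangle)^\transpose\in\R^k$. Using $\hatg_i=\sum_{j=1}^k\langle\vecf_i,\barg_j\rangle\barg_j$ and $\barg_j(u)/\sqrt{\deg(u)}=\iverson{u\in S_j}/\sqrt{\vol(S_j)}$, one checks that $p_j$ is exactly the value on $S_j$ of the ``ideal'' embedding $u\mapsto\frac{1}{\sqrt{\deg(u)}}(\hatg_1(u),\ldots,\hatg_k(u))^\transpose$, and hence that $\sum_{u\in\vertexset}\deg(u)\,\|F(u)-p_{\sigma(u)}\|^2=\sum_{i=1}^k\|\vecf_i-\hatg_i\|^2\le k/\Upsilon(k)$, where $\sigma(u)$ is the index of the true cluster containing $u$. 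I would then show the $p_j$ are pairwise far apart: $\|p_j\|^2=\|\hatf_j\|^2/\vol(S_j)\ge(1-1/\Upsilon(k))/\vol(S_j)$ by the first part of Theorem~\ref{thm:struc1}, while $\langle p_i,p_j\rangle=\langle\hatf_i,\hatf_j\rangle/\sqrt{\vol(S_i)\vol(S_j)}$, and since $\barg_i\perp\barg_j$ one expands $\hatf_i=\barg_i+(\hatf_i-\barg_i)$ and applies Cauchy--Schwarz to bound $|\langle\hatf_i,\hatf_j\rangle|=O(1/\sqrt{\Upsilon(k)})$. Combining these with AM--GM gives, for $\Upsilon(k)$ above an absolute constant, $\|p_i-p_j\|^2\ge\tfrac12\bigl(\tfrac{1}{\vol(S_i)}+\tfrac{1}{\vol(S_j)}\bigr)\ge\tfrac{k}{4\vol(\vertexset)}$, where the last step uses almost-balancedness.

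\textbf{Phase 2: charging the misclassified volume.} The $\APT$-approximate $k$-means output $A_1,\ldots,A_k$ with centres $c_1,\ldots,c_k$ has cost at most $\APT$ times that of the candidate solution $(S_1,\ldots,S_k;p_1,\ldots,p_k)$, hence at most $\APT\cdot k/\Upsilon(k)$. Call $u\in S_l$ \emph{good} if both $\|F(u)-p_l\|^2$ and $\|F(u)-c_{A(u)}\|^2$ are at most $\tfrac{1}{25}\tau_l^2$ with $\tau_l\triangleq\min_{m\neq l}\|p_l-p_m\|$. Applying Markov's inequality to the two global cost budgets, and using $\tau_l^2\ge\tfrac{1}{2\vol(S_l)}$, $\tau_l^2\ge\tfrac{k}{4\vol(\vertexset)}$ and $\vol(S_l)\le 2\vol(\vertexset)/k$, bounds the total volume of non-good vertices by $O\bigl((1+\APT)\vol(\vertexset)/\Upsilon(k)\bigr)$ — the factors of $k$ cancel precisely because each cost budget is $O(k/\Upsilon(k))$ while $\max_l\vol(S_l)=O(\vol(\vertexset)/k)$. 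For a good $u\in S_l$ the triangle inequality gives $\|c_{A(u)}-p_l\|\le\tfrac{2}{5}\tau_l$, so two good vertices from different true clusters cannot lie in the same output cluster; this produces an injection from true clusters (those containing a good vertex) into output clusters, which I would extend to a permutation $\pi$. Every vertex in $S_l\triangle A_{\pi(l)}$ is then non-good, so summing and relabelling yields $\sum_{i=1}^k\vol(A_i\triangle S_i)=O\bigl((1+\APT)\vol(\vertexset)/\Upsilon(k)\bigr)$, and tracking the accumulated constants gives the explicit bound $2176(1+\APT)\vol(\vertexset)/\Upsilon(k)$.

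\textbf{Main obstacle.} The delicate part is Phase 2: obtaining an \emph{absolute} constant. Several quantities scale with $k$ — the separation $\tau_l^2\asymp k/\vol(\vertexset)$, the cost budgets $\asymp k/\Upsilon(k)$, the per-cluster volumes $\asymp\vol(\vertexset)/k$ — and the argument only succeeds because these cancel exactly, so the bookkeeping must be done carefully. One must also choose the constant $\tfrac{1}{25}$ in the definition of ``good'' small enough relative to the $\tfrac12$ in the separation bound so that good vertices genuinely pin the output centres near the true centres; this balancing of constants, rather than any single inequality, is where the care is needed. Phase 1 is comparatively routine given Theorem~\ref{thm:struc1}, the only mild subtlety being the orthogonality-perturbation estimate for $\langle\hatf_i,\hatf_j\rangle$.
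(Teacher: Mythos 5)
Your Phase~1 is essentially the paper's own chain of lemmas (the cost bound of Lemma~\ref{lem:total_cost} and the centre-separation bounds of Lemmas~\ref{lem:pnorm}--\ref{lem:p_diff}); your Cauchy--Schwarz estimate $|\langle\hatf_i,\hatf_j\rangle|=O(1/\sqrt{\Upsilon(k)})$ is weaker than the paper's $3/\Upsilon(k)$ but suffices. The genuine gap is in Phase~2, at the sentence ``this produces an injection from true clusters \dots which I would extend to a permutation $\pi$; every vertex in $S_l\triangle A_{\pi(l)}$ is then non-good.'' What your separation argument actually yields is that each \emph{output} cluster contains good vertices of at most one true cluster; it does \emph{not} yield that all good vertices of a true cluster lie in a single output cluster, nor that every true cluster contains a good vertex. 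In the regime the theorem is about, $\Upsilon(k)$ a constant independent of $k$, neither can be assumed: your Markov step only bounds the non-good volume by $\Theta\bigl((1+\APT)\vol(\vertexset)/\Upsilon(k)\bigr)$, a constant fraction of $\vol(\vertexset)$, so up to $\Theta(k)$ true clusters (each of volume only $\Theta(\vol(\vertexset)/k)$) may contain no good vertex at all; correspondingly, the two cost budgets you use (each $\Theta(k/\Upsilon(k))=\Theta(k)$) do not prevent an $\APT$-approximate $k$-means solution from placing two centres near one $p_m$ and none near some $p_l$, since orphaning a cluster costs only $\Omega(1)$. In that situation the leftover output cluster you are forced to match to the orphaned $\sets_l$ can consist almost entirely of \emph{good} vertices of the split cluster $\sets_m$, so the claim that every vertex of $S_l\triangle A_{\pi(l)}$ is non-good is false, and the Markov budget no longer pays for the symmetric difference.

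This is precisely where the paper spends most of its effort: it works with the map $\sigma(i)=\argmin_{j}\norm{\pj-\vecc_i}$, which need not be injective, proves the cost lower bound of Lemma~\ref{lem:cost_lower_bound} for the unions $\setm_{\sigma,i}=\bigcup_{j:\sigma(j)=i}\seta_j$ (this automatically handles split clusters), and then explicitly repairs $\sigma$ into a permutation $\sigma^\star$, showing via a four-case accounting that each repair adds only $O(1)$ to $\sum_i\vol(\setm_{\sigma,i}\triangle\sets_i)/\vol(\sets_i)$ and that the number of repairs is at most $\cardinality{\{i:\setm_{\sigma,i}=\emptyset\}}\leq 64(1+\APT)k/\Upsilon(k)$; this repair step is also where most of the constant $2176$ arises. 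To rescue your route you would need an analogous charging argument: for instance, show that every true cluster with no matched output cluster forces $\Omega(1)$ of $k$-means cost (its vertices that are close to $p_l$ are at distance $\Omega(\tau_l)$ from every centre), so at most $O\bigl((1+\APT)k/\Upsilon(k)\bigr)$ clusters are orphaned, and then bound the \emph{extra} symmetric-difference volume incurred by each forced re-matching by $O(\vol(\vertexset)/k)$ using almost-balancedness --- which is not automatic, since the leftover output cluster's volume is not a priori $O(\vol(\vertexset)/k)$. Without such a step the proposal does not establish the theorem.
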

The most significant feature of Theorem~\ref{thm:sc_guarantee} is that the lower bound of $\Upsilon(k)$ is independent of the number of clusters $k$.
The theorem gives a non-trivial guarantee on the performance of spectral clustering so long as the spectral gap, as measured by $\Upsilon(k)$, is at least a constant.
This improves on the previous analysis of graph clustering algorithms which make stronger assumptions on the input graphs~\cite{pengPartitioningWellClusteredGraphs2017, mizutaniImprovedAnalysisSpectral2021, leeMultiwaySpectralPartitioning2014}.

\subsection{Properties of Spectral Embedding}
Let us first study the properties of the spectral embedding defined in \eqref{eq:embedding}, and in the next subsection we use these properties to prove Theorem~\ref{thm:sc_guarantee}.
For every cluster $\sets_i$, define the vector $\peye \in \R^k$ by 
\[
\peye(j) = \frac{1}{\sqrt{\vol(\sets_i) }} \langle \vecf_j, \barg_i \rangle,
\]
and we can view these $\{\peye\}_{i=1}^k$ as the approximate centres of the embedded points from the optimal clusters $\{\sets_i\}_{i=1}^k$.
We prove that the total $k$-means cost of the embedded points can be upper bounded as follows:

\begin{lemma} \label{lem:total_cost}
    It holds that
    \[
        \sum_{i = 1}^k \sum_{u \in \sets_i} \deg(u) \norm{F(u) - \peye}^2 \leq \frac{k}{\Upsilon(k)}.
    \]
\end{lemma}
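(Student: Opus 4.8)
The plan is to expand the left-hand side directly using the definitions of $F(u)$ and $\peye$, and to recognise the resulting double sum as something controlled by the second statement of the Stronger Structure Theorem (Theorem~\ref{thm:struc1}). First I would fix a cluster index $i$ and an inner index $j \in [k]$, and examine the contribution $\sum_{u \in \sets_i} \deg(u)\,(F(u)(j) - \peye(j))^2$. Substituting $F(u)(j) = \vecf_j(u)/\sqrt{\deg(u)}$ and $\peye(j) = \langle \vecf_j, \barg_i\rangle / \sqrt{\vol(\sets_i)}$, and using $\barg_i(u) = \sqrt{\deg(u)}/\sqrt{\vol(\sets_i)}$ for $u \in \sets_i$ and $0$ otherwise, this contribution should rearrange into $\sum_{u \in \sets_i} \left(\vecf_j(u) - \langle \vecf_j, \barg_i\rangle\, \barg_i(u)\right)^2$.

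Next I would sum over $i$ and $j$. The key observation is that, since the sets $\sets_1,\dots,\sets_k$ partition $\vertexset$, summing the expression $\sum_{u \in \sets_i}\left(\vecf_j(u) - \langle \vecf_j,\barg_i\rangle \barg_i(u)\right)^2$ over $i$ reconstructs a global norm. Writing $\hatg_j = \sum_{i=1}^k \langle \vecf_j, \barg_i\rangle \barg_i$ (the same linear combination appearing in Theorem~\ref{thm:struc1}), and using that the $\barg_i$ have disjoint supports (hence are orthonormal), the term $\sum_i \langle \vecf_j, \barg_i\rangle \barg_i(u)$ evaluated at $u \in \sets_i$ picks out exactly the $i$-th summand. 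Therefore $\sum_{i=1}^k \sum_{u\in\sets_i}\left(\vecf_j(u) - \langle \vecf_j,\barg_i\rangle\barg_i(u)\right)^2 = \sum_{u \in \vertexset}\left(\vecf_j(u) - \hatg_j(u)\right)^2 = \norm{\vecf_j - \hatg_j}^2$. Summing over $j \in [k]$ and invoking the second statement of Theorem~\ref{thm:struc1} gives $\sum_{j=1}^k \norm{\vecf_j - \hatg_j}^2 \leq k/\Upsilon(k)$, which is the claimed bound.

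I expect the main obstacle to be the bookkeeping in the first step: carefully verifying that the per-coordinate algebra collapses cleanly, in particular that the cross terms and the $\deg(u)$ factors cancel exactly so that $\deg(u)\,(F(u)(j) - \peye(j))^2$ becomes $(\vecf_j(u) - \langle \vecf_j, \barg_i\rangle \barg_i(u))^2$ with no leftover constants. This hinges on the precise normalisation $\norm{\degmhalf\indicatorvec_{\sets_i}} = \sqrt{\vol(\sets_i)}$ built into $\barg_i$, and on the fact that $\barg_i$ is supported only on $\sets_i$ so that the global norm $\norm{\vecf_j - \hatg_j}^2$ splits perfectly across the partition. Once that identity is in hand, the rest is just summation and a single appeal to the structure theorem, so no further estimates are needed.
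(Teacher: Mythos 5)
Your proposal is correct and matches the paper's proof essentially step for step: expand $\norm{F(u)-\peye}^2$ coordinate-wise, use $\barg_i(u)=\sqrt{\deg(u)/\vol(\sets_i)}$ on $\sets_i$ to absorb the $\deg(u)$ factor, identify the partition sum with $\sum_{j}\norm{\vecf_j-\hatg_j}^2$ via the disjoint supports of the $\barg_i$, and finish with the second statement of Theorem~\ref{thm:struc1}. No gaps.
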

\begin{proof}
    We have
    \begin{align*}
        \sum_{i = 1}^k \sum_{u \in \sets_i} \deg(u) \norm{F(u) - \peye}^2 & = \sum_{i = 1}^k \sum_{u \in \sets_i} \deg(u) \left[ \sum_{j = 1}^k \left(\frac{\vecf_j(u)}{\sqrt{\deg(u)}} - \frac{\langle \barg_i, \vecf_j\rangle}{\sqrt{\vol(\sets_i)}}\right)^2 \right] \\
        & = \sum_{i = 1}^k \sum_{u \in \sets_i} \sum_{j = 1}^k \left(\vecf_j(u) - \langle \barg_i, \vecf_j\rangle \barg_i(u)\right)^2 \\
        & = \sum_{i = 1}^k \sum_{u \in \sets_i} \sum_{j = 1}^k \left(\vecf_j(u) - \hatg_j(u)\right)^2 \\
        & = \sum_{j = 1}^k \norm{\vecf_j - \hatg_j}^2 \leq \frac{k}{\Upsilon(k)},
    \end{align*}
    where we use that for  $u \in \sets_x$ it holds that  $
        \hatg_i(u) = \sum_{j = 1}^k \langle \vecf_i, \barg_j\rangle \barg_j(u) = \langle \vecf_i, \barg_x\rangle \barg_x(u)$, and the final inequality follows by the second statement of Theorem~\ref{thm:struc1}.
\end{proof}
The importance of Lemma~\ref{lem:total_cost} is that, although the optimal centres for $k$-means are unknown, the existence of $\{\peye\}_{i=1}^k$ is sufficient to  show  that the cost of an optimal $k$-means clustering on $\{F(u)\}_{u\in \vertexset}$ is at most $k/\Upsilon(k)$.
Since one can always use an $O(1)$-approximate $k$-means algorithm for spectral clustering~(e.g.,~\cite{kanungoLocalSearchApproximation2004,kumarSimpleLinearTime2004}), the cost of the output of $k$-means on $\{F(u)\}_{u\in \vertexset}$ is $O\left(k/\Upsilon(k)\right)$.
Next, we show that the
length of $\peye$ is approximately equal to
$1/\vol(\sets_i)$,
which is used in our later analysis.

\begin{lemma} \label{lem:pnorm}
    It holds for  any $i \in [k]$ that 
    \[
        \frac{1}{\vol(\sets_i)} \left(1 - \frac{1}{\Upsilon(k)}\right) \leq \norm{\vecp^{(i)}}^2 \leq \frac{1}{\vol(\sets_i)}.
    \]
\end{lemma}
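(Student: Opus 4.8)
The plan is to unpack the definition of $\peye$ and recognise the sum $\sum_{j=1}^k\langle\vecf_j,\barg_i\rangle^2$ as the squared length of the projection $\hatf_i$ of $\barg_i$ onto $\mathrm{span}(\vecf_1,\ldots,\vecf_k)$, and then invoke the first statement of Theorem~\ref{thm:struc1} together with the Pythagorean identity.

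First I would compute, directly from the definition $\peye(j) = \tfrac{1}{\sqrt{\vol(\sets_i)}}\langle\vecf_j,\barg_i\rangle$ for $j\in[k]$, that
\[
    \norm{\peye}^2 = \sum_{j=1}^k \peye(j)^2 = \frac{1}{\vol(\sets_i)}\sum_{j=1}^k \langle\vecf_j,\barg_i\rangle^2.
\]
Recalling that $\hatf_i = \sum_{j=1}^k\langle\barg_i,\vecf_j\rangle\vecf_j$ and that the $\vecf_j$ are orthonormal, the inner sum equals $\norm{\hatf_i}^2$. So it suffices to show $1 - 1/\Upsilon(k) \leq \norm{\hatf_i}^2 \leq 1$.

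For the two bounds on $\norm{\hatf_i}^2$ I would use the following. Since $\hatf_i$ is the orthogonal projection of $\barg_i$ onto $\mathrm{span}(\vecf_1,\ldots,\vecf_k)$, the vector $\barg_i - \hatf_i$ is perpendicular to $\hatf_i$, so $\norm{\barg_i}^2 = \norm{\hatf_i}^2 + \norm{\barg_i - \hatf_i}^2$. Because $\barg_i$ is a normalised indicator vector we have $\norm{\barg_i}^2 = 1$, which immediately gives $\norm{\hatf_i}^2 = 1 - \norm{\barg_i - \hatf_i}^2 \leq 1$ (the upper bound) and, combined with the first statement of Theorem~\ref{thm:struc1} which yields $\norm{\barg_i - \hatf_i}^2 \leq 1/\Upsilon(k)$, also $\norm{\hatf_i}^2 \geq 1 - 1/\Upsilon(k)$ (the lower bound). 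Dividing through by $\vol(\sets_i)$ completes the argument.

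This lemma is essentially a one-line consequence of Theorem~\ref{thm:struc1}(1), so there is no real obstacle; the only point requiring care is confirming that $\norm{\barg_i}=1$ (immediate from the definition of the normalised indicator vector) and that the projection identity applies because the spanning vectors $\vecf_1,\ldots,\vecf_k$ are orthonormal.
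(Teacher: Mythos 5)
Your proposal is correct and follows essentially the same argument as the paper: identify $\vol(\sets_i)\norm{\peye}^2$ with $\norm{\hatf_i}^2$, apply the Pythagorean identity $\norm{\hatf_i}^2 = 1 - \norm{\barg_i - \hatf_i}^2$, and invoke the first statement of Theorem~\ref{thm:struc1}. No gaps.
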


\begin{proof} 
    By definition, we have
    \begin{align*}
        \vol(\sets_i) \norm{\vecp^{(i)}}^2 & = \sum_{j = 1}^k \langle \vecf_j, \barg_i \rangle^2  = \norm{\hatf_i}^2  = 1 - \norm{\hatf_i - \barg_i}^2  \geq 1 - \frac{1}{\Upsilon(k)}, 
    \end{align*}
    where the inequality follows by Theorem~\ref{thm:struc1}. The other direction of the inequality follows similarly.
\end{proof}

In the remainder of this subsection, we prove a sequence of lemmas showing that any pair of $\peye$ and $\pj$ for $i\neq j$ are well separated.
Moreover, their distance is essentially independent of $k$ and $\Upsilon(k)$, as long as $\Upsilon(k) \geq 20$.

\begin{lemma} \label{lem:normp_diff}
It holds for any different $i,j\in[k]$ that 
\[
    \left\|\sqrt{\vol(\sets_i)}\cdot \vecp^{(i)} - \sqrt{\vol(\sets_j)}\cdot \vecp^{(j)}\right\|^2 \geq 2 - \frac{8}{\Upsilon(k)}.
\]
\end{lemma}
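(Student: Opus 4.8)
The plan is to expand the squared norm and exploit the fact that $\sqrt{\vol(\sets_i)}\cdot\peye = \hatf_i$ is the orthogonal projection of $\barg_i$ onto $\mathrm{span}(\vecf_1,\dots,\vecf_k)$. Concretely, by the definition of $\peye$, one has $\sqrt{\vol(\sets_i)}\cdot\peye(j) = \langle \vecf_j, \barg_i\rangle$, so $\sqrt{\vol(\sets_i)}\cdot\peye$ (viewed in $\R^k$ via coordinates in the $\vecf$-basis) corresponds exactly to $\hatf_i = \sum_{j=1}^k \langle\barg_i,\vecf_j\rangle\vecf_j \in \R^n$. Therefore
\[
    \left\|\sqrt{\vol(\sets_i)}\cdot\peye - \sqrt{\vol(\sets_j)}\cdot\pj\right\|^2 = \norm{\hatf_i - \hatf_j}^2.
\]
So the lemma reduces to showing $\norm{\hatf_i - \hatf_j}^2 \geq 2 - 8/\Upsilon(k)$.

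For this, I would use the triangle inequality in the form $\norm{\hatf_i - \hatf_j} \geq \norm{\barg_i - \barg_j} - \norm{\barg_i - \hatf_i} - \norm{\barg_j - \hatf_j}$. Since $\sets_i$ and $\sets_j$ are disjoint, $\barg_i$ and $\barg_j$ have disjoint supports, hence are orthogonal unit vectors and $\norm{\barg_i - \barg_j}^2 = 2$, i.e.\ $\norm{\barg_i - \barg_j} = \sqrt{2}$. By the first statement of Theorem~\ref{thm:struc1}, $\norm{\barg_i - \hatf_i} \leq 1/\sqrt{\Upsilon(k)}$ and likewise for $j$. Combining,
\[
    \norm{\hatf_i - \hatf_j} \geq \sqrt{2} - \frac{2}{\sqrt{\Upsilon(k)}},
\]
and squaring gives
\[
    \norm{\hatf_i - \hatf_j}^2 \geq 2 - \frac{4\sqrt{2}}{\sqrt{\Upsilon(k)}} + \frac{4}{\Upsilon(k)} \geq 2 - \frac{4\sqrt{2}}{\sqrt{\Upsilon(k)}}.
\]
This is slightly weaker than the claimed bound of $2 - 8/\Upsilon(k)$, so I would instead avoid the triangle-inequality-then-square route and expand $\norm{\hatf_i - \hatf_j}^2 = \norm{\hatf_i}^2 + \norm{\hatf_j}^2 - 2\langle\hatf_i,\hatf_j\rangle$ directly: by Lemma~\ref{lem:pnorm} (or equivalently Theorem~\ref{thm:struc1}), $\norm{\hatf_i}^2 = 1 - \norm{\barg_i - \hatf_i}^2 \geq 1 - 1/\Upsilon(k)$, and similarly for $\hatf_j$, while $\langle\hatf_i,\hatf_j\rangle = \langle\barg_i,\hatf_j\rangle$ (since $\hatf_i$ is the projection of $\barg_i$ and $\hatf_j$ lies in the projection subspace), and $\langle\barg_i,\hatf_j\rangle = \langle\barg_i,\barg_j\rangle - \langle\barg_i,\barg_j - \hatf_j\rangle = 0 - \langle\barg_i,\barg_j-\hatf_j\rangle$, whose absolute value is at most $\norm{\barg_i - \hatf_j}\cdot\ldots$ — more cleanly, $|\langle\hatf_i,\hatf_j\rangle| = |\langle\barg_i - \hatf_i + \hatf_i,\ \barg_j - \hatf_j + \hatf_j\rangle|$ type manipulations should yield $|\langle\hatf_i,\hatf_j\rangle| \leq 1/\Upsilon(k)$ after using orthogonality of $\barg_i,\barg_j$ and Cauchy–Schwarz on the error terms.

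The main obstacle is squeezing the constant: getting exactly $8/\Upsilon(k)$ rather than something like $4\sqrt{2}/\sqrt{\Upsilon(k)}$ requires doing the inner-product expansion carefully and bounding $\langle\hatf_i,\hatf_j\rangle$ by a term of order $1/\Upsilon(k)$ (not $1/\sqrt{\Upsilon(k)}$), which is possible because both $\hatf_i-\barg_i$ and $\hatf_j-\barg_j$ are small and $\langle\barg_i,\barg_j\rangle=0$ exactly, so the cross term is a product of two small quantities. Assembling $\norm{\hatf_i}^2 + \norm{\hatf_j}^2 - 2\langle\hatf_i,\hatf_j\rangle \geq (1 - 1/\Upsilon(k)) + (1 - 1/\Upsilon(k)) - 2\cdot(1/\Upsilon(k)) = 2 - 4/\Upsilon(k)$ would actually beat the stated bound, so a looser but honest bookkeeping of the cross term (e.g.\ $|\langle\hatf_i,\hatf_j\rangle|\leq \norm{\barg_i-\hatf_i} + \norm{\barg_j - \hatf_j}$-style estimates giving $\leq 2/\sqrt{\Upsilon(k)}$, then using $\Upsilon(k)\geq$ some constant to convert) plausibly lands on precisely $2 - 8/\Upsilon(k)$; I would reconcile the exact constant once the expansion is written out. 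Either way, the proof is a short, purely linear-algebraic computation built on Theorem~\ref{thm:struc1} and the exact orthonormality of the $\barg_i$.
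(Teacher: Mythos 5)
Your second route is essentially the paper's proof: it likewise identifies $\sqrt{\vol(\sets_i)}\,\vecp^{(i)}$ with $\hatf_i$, expands the quantity as $\norm{\hatf_i}^2+\norm{\hatf_j}^2-2\langle\hatf_i,\hatf_j\rangle$, lower-bounds the norms by $1-1/\Upsilon(k)$ via Theorem~\ref{thm:struc1}, and controls the cross term by writing $\hatf_i=\barg_i+(\hatf_i-\barg_i)$, $\hatf_j=\barg_j+(\hatf_j-\barg_j)$, using $\langle\barg_i,\barg_j\rangle=0$ and Cauchy--Schwarz on the error terms (the paper's bookkeeping gives $\abs{\hatf_i^\transpose\hatf_j}\le 3/\Upsilon(k)$, hence $2-8/\Upsilon(k)$). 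The detail you left open does close exactly as you anticipate: since $\barg_j-\hatf_j$ is orthogonal to the span of $\vecf_1,\ldots,\vecf_k$ containing $\hatf_i$, one has $\langle\hatf_i,\hatf_j\rangle=-\langle\hatf_i-\barg_i,\hatf_j-\barg_j\rangle$, so $\abs{\langle\hatf_i,\hatf_j\rangle}\le 1/\Upsilon(k)$ and your $2-4/\Upsilon(k)$ implies the stated bound, while your first triangle-inequality route is rightly discarded as too lossy.
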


\begin{proof}
    By the definitions of $\peye$ and $\hatf_i$, we have
    \begin{align*}
    \lefteqn{\left\|\sqrt{\vol(\sets_i)}\cdot  \vecp^{(i)} - \sqrt{\vol(\sets_j)}\cdot  \vecp^{(j)}\right\|^2} \\
    & = \sum_{x = 1}^k \left(\langle \vecf_x, \barg_i \rangle - \langle \vecf_x, \barg_j \rangle \right)^2 \\
    & = \left( \sum_{x = 1}^k \langle \vecf_x, \barg_i\rangle^2 \right) + \left(\sum_{x = 1}^k \langle \vecf_x, \barg_j \rangle^2 \right) - 2 \sum_{x = 1}^k \langle \vecf_x, \barg_i\rangle \langle \vecf_x, \barg_j \rangle \\
    & \geq \norm{\hatf_i}^2 + \norm{\hatf_j}^2 - 2 \abs{\hatf_i^\transpose \hatf_j}.
    \end{align*}
   Then, considering only the final term, we have
   \begingroup
   \allowdisplaybreaks
   \begin{align*}
    \abs{\hatf_i^\transpose \hatf_j} & = \abs{(\barg_i + \hatf_i - \barg_i)^\transpose (\barg_j + \hatf_j - \barg_j)} \\
    & = \abs{\langle \barg_i, \hatf_j - \barg_j \rangle + \langle \barg_j, \hatf_i - \barg_i \rangle + \langle \hatf_i - \barg_i, \hatf_j - \barg_j\rangle} \\
    & = \abs{\langle \hatf_i - \barg_i, \hatf_j - \barg_j \rangle + \langle \hatf_j - \barg_j, \hatf_i - \barg_i \rangle + \langle \hatf_i - \barg_i, \hatf_j - \barg_j\rangle} 
     \leq \frac{3}{\Upsilon(k)},
   \end{align*}
   \endgroup
   where we use the fact that $\inner{\barg_i}{\barg_j} = 0$ for $i \neq j$ and $\norm{\hatf_i - \barg_i}^2 \leq 1 / \Upsilon(k)$.
   Then, since $\norm{\hatf_i}^2 \geq 1 - 1 / \Upsilon(k)$, we have
    \begin{align*}
    \left\|\sqrt{\vol(\sets_i)}\cdot  \vecp^{(i)} - \sqrt{\vol(\sets_j)}\cdot  \vecp^{(j)}\right\|^2
    & \geq 2 \left(1 - \frac{1}{\Upsilon(k)}\right) - \frac{6}{\Upsilon(k)} \\
    & = 2 - \frac{8}{\Upsilon(k)}. \qedhere
    \end{align*}
\end{proof}

\begin{lemma} \label{lem:normp_diff2}
It holds for any different $i,j \in[k]$ that  
\[
        \left\|\frac{\vecp^{(i)}}{\norm{\peye}} - \frac{\pj}{\norm{\pj}}\right\|^2 \geq 2 - \frac{20}{\Upsilon(k)}.
    \]
\end{lemma}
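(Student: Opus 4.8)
The plan is to expand the squared distance between the two unit vectors and reduce the claim to the two estimates already established. Set $\alpha_i \triangleq \sqrt{\vol(\sets_i)}\cdot\norm{\peye}$ for each $i$, so that by Lemma~\ref{lem:pnorm} we have $1 - 1/\Upsilon(k) \leq \alpha_i^2 \leq 1$, and $\peye/\norm{\peye} = \sqrt{\vol(\sets_i)}\,\peye / \alpha_i$. Then a direct expansion gives
\[
    \left\|\frac{\peye}{\norm{\peye}} - \frac{\pj}{\norm{\pj}}\right\|^2 = 2 - \frac{2\,\big\langle \sqrt{\vol(\sets_i)}\,\peye,\ \sqrt{\vol(\sets_j)}\,\pj\big\rangle}{\alpha_i\,\alpha_j},
\]
so it suffices to upper bound the inner product $c \triangleq \big\langle \sqrt{\vol(\sets_i)}\,\peye, \sqrt{\vol(\sets_j)}\,\pj\big\rangle$ and lower bound the product $\alpha_i \alpha_j$.

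For the inner product, I would write $\big\|\sqrt{\vol(\sets_i)}\,\peye - \sqrt{\vol(\sets_j)}\,\pj\big\|^2 = \alpha_i^2 + \alpha_j^2 - 2c$, and combine Lemma~\ref{lem:normp_diff} with $\alpha_i^2, \alpha_j^2 \leq 1$ to obtain $2c \leq \alpha_i^2 + \alpha_j^2 - 2 + 8/\Upsilon(k) \leq 8/\Upsilon(k)$, i.e. $c \leq 4/\Upsilon(k)$. For the denominator, Lemma~\ref{lem:pnorm} again gives $\alpha_i \alpha_j \geq 1 - 1/\Upsilon(k)$. The one point needing care is the sign of $c$: if $c \leq 0$ the subtracted term is nonnegative and the squared distance is already at least $2 \geq 2 - 20/\Upsilon(k)$; if $c > 0$, then dividing $2c \leq 8/\Upsilon(k)$ by $\alpha_i\alpha_j \geq 1 - 1/\Upsilon(k)$ yields $2c/(\alpha_i\alpha_j) \leq 8/(\Upsilon(k)-1)$, so the squared distance is at least $2 - 8/(\Upsilon(k)-1)$.

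It then remains to absorb $8/(\Upsilon(k)-1)$ into $20/\Upsilon(k)$: a short calculation shows $8/(\Upsilon(k)-1) \leq 20/\Upsilon(k)$ precisely when $\Upsilon(k) \geq 5/3$, which I can invoke from the standing assumption that $\Upsilon(k)$ is at least a (large) constant, or simply by noting the bound is vacuous whenever $2 - 20/\Upsilon(k) \leq 0$. I do not expect a genuine obstacle here — the argument is a routine Cauchy--Schwarz-style expansion — and the only subtleties are keeping the $\sqrt{\vol(\cdot)}$ rescalings consistent so that Lemma~\ref{lem:normp_diff} applies verbatim, and handling the sign of $c$ so that dividing by $\alpha_i\alpha_j < 1$ does not flip the inequality.
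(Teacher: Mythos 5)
Your proof is correct, but it takes a different route from the paper's. The paper works with unsquared distances: after rescaling $\veca_i = \sqrt{\vol(\sets_i)}\,\peye$ and $\veca_j = \sqrt{\vol(\sets_j)}\,\pj$, it lower bounds $\bigl\|\peye/\norm{\peye} - \pj/\norm{\pj}\bigr\|$ by $\bigl\|\veca_i - (\norm{\veca_i}/\norm{\veca_j})\veca_j\bigr\|$, applies the triangle inequality together with Lemmas~\ref{lem:normp_diff} and~\ref{lem:pnorm} (plus $\sqrt{1-x}\geq 1-x$) to reach $\sqrt{2} - 7/\Upsilon(k)$, and only then squares to get $2 - 14\sqrt{2}/\Upsilon(k) \geq 2 - 20/\Upsilon(k)$. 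You instead expand the squared distance of the unit vectors as $2 - 2c/(\alpha_i\alpha_j)$ and bound the inner product $c \leq 4/\Upsilon(k)$ directly from Lemma~\ref{lem:normp_diff} combined with $\alpha_i^2,\alpha_j^2 \leq 1$, then divide by $\alpha_i\alpha_j \geq 1 - 1/\Upsilon(k)$ from Lemma~\ref{lem:pnorm}. The ingredients are the same two lemmas, but your route avoids the square-root manipulations and the geometric rescaling step, is explicit about the sign of $c$ and about the vacuous regime (the paper's own chain implicitly needs, e.g., $\Upsilon(k)\geq 4$ for $\sqrt{2-8/\Upsilon(k)}$ and for the final squaring, which is likewise only covered by vacuousness), and yields the slightly sharper intermediate bound $2 - 8/(\Upsilon(k)-1)$, absorbed into $2-20/\Upsilon(k)$ once $\Upsilon(k)\geq 5/3$. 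One small trade-off: the paper's unsquared bound $\sqrt{2}-7/\Upsilon(k)$ is reused in the proof of Lemma~\ref{lem:p_diff} to control $\langle\peye/\norm{\peye},\pj/\norm{\pj}\rangle$, but your bound $c/(\alpha_i\alpha_j)\leq 4/(\Upsilon(k)-1)$ would in fact supply that cosine estimate even more directly.
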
 

\begin{proof}
    Assume without loss of generality that $\sqrt{\vol(\sets_i)} \norm{\peye} \leq \sqrt{\vol(\sets_j)} \norm{\pj}$.
    Let $\veca_i = \sqrt{\vol(\sets_i)} \peye$ and $\veca_j = \sqrt{\vol(\sets_j)} \pj$ and notice that $\norm{\veca_i} \leq \norm{\veca_j} \leq 1$ by Lemma~\ref{lem:pnorm}.
    Then, since $\veca_i$ and $(\norm{\veca_i}/\norm{\veca_j}) \veca_j$ are scaled versions of $\peye$ and $\pj$ with smaller norm, we have
    \begin{align*}
        \left\|\frac{\vecp^{(i)}}{\norm{\peye}} - \frac{\pj}{\norm{\pj}}\right\| & \geq \left\|\veca_i - \frac{\norm{\veca_i}}{\norm{\veca_j}} \veca_j\right\| \\
        & \geq \norm{\veca_i - \veca_j} - \left(\norm{\veca_j} - \norm{\veca_i}\right) \\
        & \geq \sqrt{2 - \frac{8}{\Upsilon(k)}}  - \left(\sqrt{\vol(\sets_j)}\cdot \left\|\pj\right\| - \sqrt{\vol(\sets_i)} \left\|\peye\right\|\right) \\
        & \geq \sqrt{2}\left(1 - \frac{4}{\Upsilon(k)}\right) + \sqrt{1 - \frac{1}{\Upsilon(k)}} - 1 \\
        & \geq \sqrt{2} - \frac{4\sqrt{2}}{\Upsilon(k)} - \frac{1}{\Upsilon(k)} \geq \sqrt{2} - \frac{7}{\Upsilon(k)},
    \end{align*}
    where the second inequality follows by the triangle inequality, and the third and fourth use Lemma~\ref{lem:normp_diff} and Lemma~\ref{lem:pnorm}. We also use the fact that it holds for $x \leq 1$ that $\sqrt{1 - x} \geq 1 - x$.
    This gives
    \begin{align*}
        \left\|\frac{\vecp^{(i)}}{\norm{\peye}} - \frac{\pj}{\norm{\pj}}\right\|^2 & \geq 2 - \frac{14 \sqrt{2}}{\Upsilon(k)}  \geq 2 - \frac{20}{\Upsilon(k)},
    \end{align*}
    which proves the lemma.
\end{proof}

\begin{lemma} \label{lem:p_diff}
    It holds for any different $i, j \in [k]$ that 
    \[
        \norm{\peye - \pj}^2 \geq \frac{1}{\min\{\vol(\sets_i), \vol(\sets_j)\}} \left(\frac{1}{2} - \frac{8}{\Upsilon(k)} \right). 
    \]
\end{lemma}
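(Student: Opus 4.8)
The plan is to derive the bound directly from Lemma~\ref{lem:normp_diff} and the norm estimates of Lemma~\ref{lem:pnorm}, working throughout with squared norms and inner products. It is tempting to try to transfer the separation of $\sqrt{\vol(\sets_i)}\,\peye$ and $\sqrt{\vol(\sets_j)}\,\pj$ to $\peye$ and $\pj$ via the triangle inequality, but this is too lossy — it only yields a constant of roughly $(\sqrt 2-1)^2 < 1/2$ in the denominator factor. Instead, one should bound the cross term $\langle\peye,\pj\rangle$ from above. Without loss of generality assume $\vol(\sets_i)\le\vol(\sets_j)$, so that $\min\{\vol(\sets_i),\vol(\sets_j)\}=\vol(\sets_i)$.

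First I would control $\langle\peye,\pj\rangle$. Set $u=\sqrt{\vol(\sets_i)}\,\peye$ and $v=\sqrt{\vol(\sets_j)}\,\pj$. By the upper bounds in Lemma~\ref{lem:pnorm} we have $\norm{u}^2\le 1$ and $\norm{v}^2\le 1$, while Lemma~\ref{lem:normp_diff} gives $\norm{u-v}^2\ge 2-8/\Upsilon(k)$. Expanding $\norm{u-v}^2=\norm{u}^2+\norm{v}^2-2\langle u,v\rangle\le 2-2\langle u,v\rangle$ and rearranging yields $\langle u,v\rangle\le 4/\Upsilon(k)$, and hence by bilinearity $\langle\peye,\pj\rangle\le \frac{4}{\sqrt{\vol(\sets_i)\vol(\sets_j)}\,\Upsilon(k)}\le \frac{4}{\vol(\sets_i)\,\Upsilon(k)}$, using $\vol(\sets_j)\ge\vol(\sets_i)$ in the last step.

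Then I would expand $\norm{\peye-\pj}^2=\norm{\peye}^2+\norm{\pj}^2-2\langle\peye,\pj\rangle$, discard the nonnegative term $\norm{\pj}^2$, apply the lower bound $\norm{\peye}^2\ge \frac{1}{\vol(\sets_i)}\left(1-1/\Upsilon(k)\right)$ from Lemma~\ref{lem:pnorm}, and substitute the inner-product estimate from the previous step. This gives $\norm{\peye-\pj}^2\ge \frac{1}{\vol(\sets_i)}\left(1-9/\Upsilon(k)\right)$, which is at least $\frac{1}{\vol(\sets_i)}\left(\frac12-8/\Upsilon(k)\right)$ whenever $\Upsilon(k)\ge 2$; and when $\Upsilon(k)<2$ the claimed right-hand side is negative, so the inequality holds trivially because $\norm{\peye-\pj}^2\ge 0$. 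In either case the lemma follows.

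I do not expect a serious obstacle. The only real subtlety is recognising that the cross term must be bounded directly — which is exactly where Lemma~\ref{lem:normp_diff} and the cheap bounds $\norm{u},\norm{v}\le 1$ come in — rather than appealing to the triangle inequality, and keeping track of the fact that the surviving denominator is $\min\{\vol(\sets_i),\vol(\sets_j)\}$ rather than $\sqrt{\vol(\sets_i)\vol(\sets_j)}$, which is what forces the harmless case split on the size of $\Upsilon(k)$.
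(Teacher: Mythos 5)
Your proof is correct, and it takes a genuinely different route from the paper's. The paper works with the unit-normalised points: it assumes without loss of generality that $\norm{\peye}\geq\norm{\pj}$, writes $\norm{\pj}=\alpha\norm{\peye}$, and uses the bound $\bigl\langle \peye/\norm{\peye},\,\pj/\norm{\pj}\bigr\rangle \leq \sqrt{2}/2+7/(2\Upsilon(k))$ extracted from the proof of Lemma~\ref{lem:normp_diff2}; the constant $\tfrac12$ in the statement is then what survives after taking the worst case over $\alpha$, since only the angle between $\peye$ and $\pj$ is controlled. You instead bound the raw cross term: combining Lemma~\ref{lem:normp_diff} with the upper bounds $\vol(\sets_i)\norm{\peye}^2\leq 1$ and $\vol(\sets_j)\norm{\pj}^2\leq 1$ from Lemma~\ref{lem:pnorm} gives $\langle\peye,\pj\rangle\leq 4/\bigl(\Upsilon(k)\sqrt{\vol(\sets_i)\vol(\sets_j)}\bigr)$, after which discarding $\norm{\pj}^2$, invoking the lower bound of Lemma~\ref{lem:pnorm} for the smaller-volume cluster, and a trivial case split on $\Upsilon(k)$ finish the argument. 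This bypasses Lemma~\ref{lem:normp_diff2} and the $\alpha$-parameterisation entirely, and in fact yields the slightly stronger factor $1-\tfrac{9}{\Upsilon(k)}$ in place of $\tfrac12-\tfrac{8}{\Upsilon(k)}$ (the former dominates once $\Upsilon(k)\geq 2$, and the claim is vacuous otherwise). What the paper's normalisation-based argument buys is mainly structural: it mirrors the analogous and genuinely more delicate bound in Chapter~\ref{chap:meta} (Lemma~\ref{lem:pij_distance}), where the embedded centres have $\theta$-dependent lengths and the cross term is harder to control directly; for the present lemma your argument is simpler and marginally sharper.
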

\begin{proof}
    Assume without loss of generality that $\norm{\peye} \geq \norm{\pj}$. Then, let $\norm{\pj} = \alpha \norm{\peye}$ for some $\alpha \in [0, 1]$.
    By Lemma~\ref{lem:pnorm}, it holds that
    \[
        \left\|\peye\right\|^2 \geq \frac{1}{\min\{\vol(\sets_i), \vol(\sets_j)\}} \left(1 - \frac{1}{\Upsilon(k)}\right).
    \]
    Additionally, notice that by the proof of Lemma~\ref{lem:normp_diff2},  
    \[
        \left\langle \frac{\peye}{\norm{\peye}}, \frac{\pj}{\norm{\pj}} \right\rangle \leq \sqrt{2} - \frac{1}{2} \norm{\frac{\peye}{\norm{\peye}} - \frac{\pj}{\norm{\pj}}} \leq \frac{\sqrt{2}}{2} + \frac{7}{2 \Upsilon(k)},
    \]
    where we use the fact that if $x^2 + y^2 = 1$, then $x + y \leq \sqrt{2}$.
    One can understand the equation above by considering the right-angled triangle with one edge given by $\peye / \norm{\peye}$ and another edge given by $(\peye / \norm{\peye}) . (\pj / \norm{\pj})$.
    Then,
    \begin{align*}
        \norm{\peye - \pj}^2 & = \norm{\peye}^2 + \norm{\pj}^2 - 2 \left\langle \frac{\peye}{\norm{\peye}}, \frac{\pj}{\norm{\pj}} \right\rangle \norm{\peye} \norm{\pj} \\
        & \geq (1 + \alpha) \norm{\peye}^2 - \left(\sqrt{2} + \frac{7}{\Upsilon(k)}\right) \alpha \norm{\peye}^2 \\
        & \geq \left(1 - (\sqrt{2} - 1) \alpha  - \frac{7}{\Upsilon(k)}\right) \norm{\peye}^2 \\
        & \geq  \frac{1}{\min\{\vol(\sets_i), \vol(\sets_j)\}}  \left(\frac{1}{2} - \frac{7}{\Upsilon(k)} \right) \left(1 - \frac{1}{\Upsilon(k)}\right) \\
        & \geq  \frac{1}{\min\{\vol(\sets_i), \vol(\sets_j)\}} \left(\frac{1}{2} - \frac{8}{\Upsilon(k)}\right),
    \end{align*}
    which completes the proof.
\end{proof}

It is worth noting that, despite the similarity in their formulation, the technical lemmas presented here are stronger than the ones in Peng et al.~\cite{pengPartitioningWellClusteredGraphs2017}.
These results are obtained through the stronger structure theorem~(Theorem~\ref{thm:struc1}), and are crucial for us to prove Theorem~\ref{thm:sc_guarantee}.

% \subsection{Proof of Theorem~\ref{thm:sc_guarantee}}
\subsection{Bounding the Misclassified Vertices}
Now we prove
Theorem~\ref{thm:sc_guarantee},
and see why a mild condition like
$\Upsilon(k) \geq C$ for some constant $C \in \N$
suffices for spectral clustering to perform well in practice. 
Let   $\{\seta_i\}_{i=1}^k$ be the output of spectral clustering,  and we denote  the centre of the embedded points $\{F(u)\}_{u\in \seta_i}$ for any $\seta_i$  by $\vecc_i$.
As the starting point of the analysis, we   show that every $\vecc_i$ is close to its `optimal' correspondent $\vecp^{(\sigma(i))}$ for some $\sigma(i)\in[k]$. That is, the actual centre of embedded points from every $\seta_i$ is close to the approximate centre of the embedded points from some optimal $\sets_i$.
To formalise this,  we define the function $\sigma:[k]\rightarrow[k]$ by
\begin{equation}\label{eq:defsigma}
    \sigma(i) = \argmin_{j \in [k]} \norm{\pj - \vecc_{i}};
\end{equation}
that is, cluster $\seta_i$ should correspond to $\sets_{\sigma(i)}$ in which the value of $\|\vecp^{(\sigma(i))} - \vecc_i\|$ is the lowest among all the distances between $\vecc_i$ and all of the $\vecp^{(j)}$ for $j\in [k]$.
However, one needs to be cautious as \eqref{eq:defsigma} doesn't necessarily define a permutation, and there might exist different $i,i'\in[k]$ such that both of $\seta_i$ and $\seta_{i'}$ map to the same $\sets_{\sigma(i)}$. 
Taking this into account, for any fixed $\sigma:[k]\rightarrow[k]$ and $i\in[k]$, we  further define $\setm_{\sigma, i}$ by
\begin{equation}\label{eq:defmset}
    \setm_{\sigma, i} \triangleq \bigcup_{j : \sigma(j) = i} \seta_j.
\end{equation}
The following lemma shows that, when mapping every output  $\seta_i$ to $\sets_{\sigma(i)}$, the total ratio of misclassified volume with respect to each cluster can be upper bounded:
\begin{lemma} \label{lem:cost_lower_bound}
Let $\{\seta_i\}_{i=1}^k$ be the output of spectral clustering, and $\sigma$ and  $\setm_{\sigma,i}$ be defined as  in \eqref{eq:defsigma} and \eqref{eq:defmset}.
If $\Upsilon(k) \geq 32$, then
\[
        \sum_{i = 1}^k \frac{\vol(\setm_{\sigma, i} \triangle \sets_i)}{\vol(\sets_i)} \leq 64 (1 + \APT) \frac{k}{\Upsilon(k)}.
    \]
 \end{lemma}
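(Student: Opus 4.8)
The plan is to run a charging argument: show that every misclassified vertex contributes a fixed amount to the total $k$-means cost of the spectral embedding, and then combine the cost bound of Lemma~\ref{lem:total_cost} with the approximation guarantee of the $k$-means routine to cap the misclassified volume.

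First I would fix the cost budget. By Lemma~\ref{lem:total_cost} the points $\{\vecp^{(i)}\}_{i=1}^k$ exhibit a clustering of $\{F(u)\}_{u\in\vertexset}$ of (degree-weighted) cost at most $k/\Upsilon(k)$, so the optimal $k$-means cost is at most $k/\Upsilon(k)$; since spectral clustering runs an $\APT$-approximate algorithm, its output $\{\seta_\ell\}$ with empirical centres $\{\vecc_\ell\}$ satisfies $\sum_\ell\sum_{u\in\seta_\ell}\deg(u)\norm{F(u)-\vecc_\ell}^2\leq\APT\cdot k/\Upsilon(k)$. Writing $i(u)$ for the index of the optimal cluster containing $u$ and $\ell(u)$ for the output cluster containing $u$, the combined quantity $\Psi\triangleq\sum_{u\in\vertexset}\deg(u)\big(\norm{F(u)-\vecp^{(i(u))}}^2+\norm{F(u)-\vecc_{\ell(u)}}^2\big)$ is at most $(1+\APT)\,k/\Upsilon(k)$.

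Next is the per-vertex charge. Consider a vertex $u$ with $u\in\sets_i$, $u\in\seta_\ell$, and $j:=\sigma(\ell)\neq i$. Since $\sigma(\ell)=\argmin_{j'}\norm{\vecp^{(j')}-\vecc_\ell}$ we have $\norm{\vecc_\ell-\vecp^{(j)}}\leq\norm{\vecc_\ell-\vecp^{(i)}}$, so two applications of the triangle inequality give $\norm{\vecp^{(i)}-\vecp^{(j)}}\leq 2\norm{\vecp^{(i)}-F(u)}+2\norm{F(u)-\vecc_\ell}$. Squaring, applying $(a+b)^2\leq 2(a^2+b^2)$, and then invoking Lemma~\ref{lem:p_diff} with $\Upsilon(k)\geq 32$ (so that $\tfrac12-\tfrac{8}{\Upsilon(k)}\geq\tfrac14$) yields
\[
    \norm{F(u)-\vecp^{(i)}}^2+\norm{F(u)-\vecc_\ell}^2\;\geq\;\tfrac18\norm{\vecp^{(i)}-\vecp^{(j)}}^2\;\geq\;\frac{1}{32\min\{\vol(\sets_i),\vol(\sets_j)\}}.
\]
Multiplying by $\deg(u)$ and using that $\min\{\vol(\sets_i),\vol(\sets_j)\}$ is at most \emph{both} $\vol(\sets_i)$ and $\vol(\sets_j)$, the contribution of such a $u$ to $\Psi$ is simultaneously at least $\deg(u)/(32\vol(\sets_i))$ and at least $\deg(u)/(32\vol(\sets_j))$.

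Finally I would add everything up. A vertex $u$ lies in $\sets_i\setminus\setm_{\sigma,i}$ precisely when $i=i(u)$ and $\sigma(\ell(u))\neq i$, and in $\setm_{\sigma,i}\setminus\sets_i$ precisely when $i=\sigma(\ell(u))\neq i(u)$; in both cases $u$ is of the type just analysed. Charging $\sum_i\vol(\sets_i\setminus\setm_{\sigma,i})/\vol(\sets_i)$ via the first of the two lower bounds and $\sum_i\vol(\setm_{\sigma,i}\setminus\sets_i)/\vol(\sets_i)$ via the second, every misclassified vertex's $\Psi$-contribution gets counted at most twice in total, so $\sum_i\vol(\setm_{\sigma,i}\triangle\sets_i)/\vol(\sets_i)\leq 64\,\Psi\leq 64(1+\APT)\,k/\Upsilon(k)$. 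The point requiring care is exactly this bookkeeping: $\sigma$ need not be a bijection, so one cannot charge through a matching between the $\seta_\ell$'s and the $\sets_i$'s and must instead charge through the vertices and the quantity $\Psi$; the reason the "right" denominator $\vol(\sets_i)$ appears on both sides of the symmetric difference is the two-sided bound via $\min\{\vol(\sets_i),\vol(\sets_j)\}$, which is also what makes the almost-balanced hypothesis unnecessary for this particular lemma.
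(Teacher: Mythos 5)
Your proposal is correct and follows essentially the same route as the paper's proof: both combine the cost bound of Lemma~\ref{lem:total_cost} with the $\APT$-approximation guarantee, use the $\argmin$ property of $\sigma$ together with the triangle inequality to lower-bound the per-vertex cost by $\norm{\peye - \pj}^2/8$, and then invoke Lemma~\ref{lem:p_diff} with $\Upsilon(k) \geq 32$ so that $\tfrac{1}{2} - \tfrac{8}{\Upsilon(k)} \geq \tfrac{1}{4}$, arriving at the same constant $64(1+\APT)$. The only difference is organisational: you charge vertex-by-vertex against the combined quantity $\Psi$, while the paper sums the identical inequalities over the sets $\setb_{ij} = \seta_i \intersect \sets_j$ and rearranges; your remark that almost-balancedness is not needed for this lemma is likewise consistent with the paper, which only uses that assumption later when repairing $\sigma$ into a permutation.
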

 \begin{proof}
    Let us define $\setb_{ij} = \seta_i \cap \sets_j$ to be the vertices in $\seta_i$ which belong to the ground-truth cluster $\sets_j$.
    Then, we have that
    \begin{align}
        \sum_{i = 1}^k \frac{\vol(\setm_{\sigma, i} \triangle \sets_i)}{\vol(\sets_i)} & = \sum_{i = 1}^k \sum_{\substack{j = 1 \\ j \neq \sigma(i)}}^k \vol(\setb_{ij}) \left(\frac{1}{\vol(\sets_{\sigma(i)})} + \frac{1}{\vol(\sets_j)} \right) \nonumber \\
        & \leq 2 \sum_{i = 1}^k \sum_{\substack{j = 1 \\ j \neq \sigma(i)}}^k \frac{\vol(\setb_{ij})}{\min\{\vol(\sets_{\sigma(i)}), \vol(\sets_j)\}}, \label{eq:up_sym_ratio}
    \end{align}
and that 
\begingroup
\allowdisplaybreaks
\begin{align*}
        \lefteqn{\mathrm{COST}(\seta_1, \ldots \seta_k)}\\
        & = \sum_{i = 1}^k \sum_{u \in \seta_i} \deg(u) \norm{F(u) - \vecc_i}^2 \\
        & \geq \sum_{i = 1}^k \sum_{\substack{1\leq j \leq  k \\ j \neq \sigma(i)}}   \sum_{u \in \setb_{ij}} \deg(u) \norm{F(u) - \vecc_i}^2 \\
        & \geq \sum_{i = 1}^k \sum_{\substack{1\leq j \leq k  \\ j \neq \sigma(i)}}  \sum_{u \in \setb_{ij}} \deg(u) \left(\frac{\norm{\pj - \vecc_i}^2}{2} - \norm{\pj - F(u)}^2 \right) \\
        & \geq \sum_{i = 1}^k \sum_{\substack{1\leq j \leq k \\ j \neq \sigma(i)}}  \sum_{u \in \setb_{ij}} \frac{\deg(u) \norm{\pj - \vecp^{(\sigma(i))}}^2}{8} - \sum_{i = 1}^k \sum_{\substack{1\leq j \leq k \\ j \neq \sigma(i)}}  \sum_{u \in \setb_{ij}} \deg(u) \norm{\pj - F(u)}^2 \\
        & \geq \sum_{i = 1}^k \sum_{\substack{1\leq j\leq k \\ j \neq \sigma(i)}}  \vol(\setb_{ij}) \frac{\norm{\pj - \vecp^{(\sigma(i))}}^2}{8} - \sum_{i = 1}^k \sum_{u \in \sets_i} \deg(u) \norm{\peye - F(u)}^2 \\
        & \geq \sum_{i = 1}^k \sum_{\substack{1\leq j\leq k \\ j \neq \sigma(i)}}   \frac{\vol(\setb_{ij})}{8 \min\{\vol(\sets_{\sigma(i)}), \vol(\sets_j)\}}\left(\frac{1}{2} - \frac{8}{\Upsilon(k)}\right) - \sum_{i = 1}^k \sum_{u \in \sets_i} \deg(u) \norm{\peye - F(u)}^2 \\
        & \geq \frac{1}{16} \cdot\left( \sum_{i = 1}^k \frac{\vol(\setm_{\sigma, i} \triangle \sets_i)}{\vol(\sets_i)}\right) \left(\frac{1}{2} - \frac{8}{\Upsilon(k)}\right) - \frac{k}{\Upsilon(k)},
    \end{align*}
    \endgroup
    where the second inequality follows by the inequality
    \[
    \norm{\veca - \vecb}^2 \geq \frac{\norm{\vecb - \vecc}^2}{2} - \norm{\veca - \vecc}^2,
    \]
    the third inequality follows since $\vecc_i$ is closer to $\vecp^{(\sigma(i))}$ than $\vecp^{(j)}$, the fifth one follows from Lemma~\ref{lem:p_diff}, and the last one follows by \eqref{eq:up_sym_ratio}.
    
   On the other side, since $\mathrm{COST}(\seta_1,\ldots, \seta_k) \leq \APT\cdot k/\Upsilon(k)$, we have that 
    \[
    \frac{1}{16} \cdot\left( \sum_{i = 1}^k \frac{\vol(\setm_{\sigma, i} \triangle \sets_i)}{\vol(\sets_i)}\right) \left(\frac{1}{2} - \frac{8}{\Upsilon(k)}\right) - \frac{k}{\Upsilon(k)} \leq \APT\cdot\frac{k}{\Upsilon(k)}.
    \] This implies that 
    \begin{align*}
            \sum_{i = 1}^k \frac{\vol(\setm_{\sigma, i} \triangle \sets_i)}{\vol(\sets_i)}   & \leq 16\cdot   (1+\APT)\cdot \frac{k}{\Upsilon(k)}\cdot  \left(\frac{1}{2} - \frac{8}{\Upsilon(k)}\right)^{-1}  \leq 64\cdot   (1+\APT)\cdot \frac{k}{\Upsilon(k)},
    \end{align*}
    where the last inequality holds by the assumption that $\Upsilon(k)\geq 32$. 
\end{proof}
 
 It remains to  study the case in which $\sigma$ isn't a permutation. Notice that, if this occurs,  there is some $i\in[k]$ such that $\setm_{\sigma,i}=\emptyset$, and different values of $x,y\in[k]$ such that $\sigma(x)=\sigma(y)=j$ for some $j\neq i$.
 Based on this, we construct the function $\sigma':[k]\rightarrow [k]$ from $\sigma$ based on the following procedure: 
\begin{itemize}
    \item set $\sigma'(z)=i$ if $z=x$;
    \item set $\sigma'(z)=\sigma(z)$ for any other $z\in [k]\setminus \{x\}$.
\end{itemize}
Notice that one can construct $\sigma'$ in this way as long as $\sigma$ isn't a permutation, and this constructed $\sigma'$ reduces the number of $\setm_{\sigma,i}$ which are equal to $\emptyset$ by one.
We show that one only needs to construct such $\sigma'$ at most $64 (1 + \APT) (k/\Upsilon(k))$ times
to obtain the final permutation called $\sigma^{\star}$, and it holds for $\sigma^{\star}$ that 
\[
        \sum_{i = 1}^k \frac{\vol(\setm_{\sigma^{\star}, i} \triangle \sets_i)}{\vol(\sets_i)} \leq 1088 \cdot (1 + \APT) \cdot \frac{k}{\Upsilon(k)}.
    \]
Combining this with the fact that $\vol(\sets_i) = \Theta(\vol(\vertexsetg)/k)$ for any $i\in[k]$ proves  Theorem~\ref{thm:sc_guarantee}.

\begin{proof}[Proof of Theorem~\ref{thm:sc_guarantee}]
\newcommand{\voldiffi}{\vol(\setm_{\sigma, i} \triangle \sets_i)}
\newcommand{\voldiffip}{\vol(\setm_{\sigma', i} \triangle \sets_i)}
\newcommand{\voldiffjp}{\vol(\setm_{\sigma', j} \triangle \sets_j)}
\newcommand{\voldiffj}{\vol(\setm_{\sigma, j} \triangle \sets_j)}
    By Lemma~\ref{lem:total_cost}, we have
    \[
        \mathsf{COST}(\sets_1, \ldots, \sets_k) \leq  \frac{k}{\Upsilon(k)}.
        \]
    Combining this with the fact that one can apply an approximate $k$-means clustering algorithm with approximation ratio $\APT$ for spectral clustering,  we have that
    \[
        \mathsf{COST}(\seta_1, \ldots, \seta_k) \leq \APT\cdot\frac{k}{\Upsilon(k)}.
        \]
    Then, let $\sigma: [1, k] \rightarrow [1, k]$ be the function which assigns each cluster $\seta_i$ to some ground-truth cluster $\sets_{\sigma(i)}$ such that
    $
        \sigma(i) = \argmin_{j \in [k]} \norm{\pj - \vecc_i}$.
    Then, it holds by Lemma~\ref{lem:cost_lower_bound} that
    \begin{equation} \label{eq:miscl}
        \epsilon(\sigma) \triangleq \sum_{i = 1}^k \frac{\vol(\setm_{\sigma, i} \triangle \sets_i)}{\vol(\sets_i)} \leq 64 \cdot (1 + \APT)\cdot \frac{k}{\Upsilon(k)}.
    \end{equation}
    Now, assume that $\sigma$ is not a permutation, and we'll apply the following procedure inductively to construct a permutation from  $\sigma$. Since $\sigma$ isn't a permutation, there is $i\in[k]$ such that $M_{\sigma, i } = \emptyset$. Therefore, there are different values of $x,y\in[k]$ such that $\sigma(x)=\sigma(y)=j$ for some $j\neq i$.  Based on this, we construct the function $\sigma':[k]\rightarrow[k]$ such that $\sigma'(z)=i$ if $z=x$, and $\sigma'(z)=\sigma(z)$ for any other $z\in [k]\setminus \{x\}$. Notice that we can construct $\sigma'$ in this way as long as $\sigma$ isn't a permutation.
    By the definition of $\epsilon(\cdot)$ and function $\sigma'$, the difference between $\epsilon(\sigma')$ and $\epsilon(\sigma)$ can be written as
   \begin{align} 
      \lefteqn{\epsilon(\sigma') - \epsilon(\sigma)}\nonumber\\
       & = \underbrace{ \left(\frac{\voldiffip}{\vol(\sets_i)} - \frac{\voldiffi}{\vol(\sets_i)}\right)}_{\alpha} + \underbrace{\left(\frac{\voldiffjp}{\vol(\sets_j)} - \frac{\voldiffj}{\vol(\sets_j)}\right)}_{\beta}.\label{eq:sigmadiff}
   \end{align} 
   Let us consider $4$ cases based on the sign of $\alpha,\beta$ defined above.
   In each case, we   bound the cost introduced by the change from $\sigma$ to $\sigma'$, and then consider the total cost introduced throughout the entire procedure of constructing a permutation.

   \textbf{Case~1:  $\alpha<0, \beta<0$.}
   In this case, it is clear that $\epsilon(\sigma') - \epsilon(\sigma) \leq 0$, and hence the total introduced  cost 
   is at most $0$.
   
   \textbf{Case~2: $\alpha>0, \beta<0$.}
   In this case, we have 
   \begingroup
   \allowdisplaybreaks
   \begin{align*}
       \epsilon(\sigma') - \epsilon(\sigma) & \leq
       \begin{multlined}[t]
        \frac{1}{\min(\vol(\sets_i), \vol(\sets_j))} \Big( \voldiffip - \voldiffi \qquad \\
        + \abs{\voldiffjp - \voldiffj} \Big)
        \end{multlined}\\
       & =
       \begin{multlined}[t]
       \frac{1}{\min(\vol(\sets_i), \vol(\sets_j))} \Big( \voldiffip - \voldiffi \qquad \\
       + \voldiffj - \voldiffjp \Big)
       \end{multlined}\\
       & =
       \begin{multlined}[t]
       \frac{1}{\min(\vol(\sets_i), \vol(\sets_j))} \big( \vol(\seta_x \setminus \sets_i) - \vol(\seta_x \intersect \sets_i) \qquad \\
       + \vol(\seta_x \setminus \sets_j) - \vol(\seta_x \intersect \sets_j) \big)
       \end{multlined}\\
       & \leq \frac{2 \cdot \vol(\seta_x \setminus \sets_j)}{\min(\vol(\sets_i), \vol(\sets_j))} \\
       & \leq \frac{8 \cdot \vol(\seta_x \setminus \sets_j)}{\vol(\sets_j)},
   \end{align*}
   \endgroup
   where the last inequality follows by the  fact that the clusters are almost balanced.
   Since each set $\seta_x$ is moved at most once in order to construct a permutation, the total introduced cost due to this case  is at most
   \begin{align*}
       \sum_{j = 1}^k \sum_{\seta_x \in \setm_{\sigma, j}} \frac{8 \cdot \vol(\seta_x \setminus \sets_j)}{\vol(\sets_j)} \leq 8 \cdot \sum_{j = 1}^k \frac{\voldiffj}{\vol(\sets_j)} \leq 512\cdot \left(1 + \APT\right)\cdot  \frac{k}{\Upsilon(k)}.
   \end{align*}
   
   \textbf{Case~3: $\alpha>0, \beta>0$.}
   In this case, we have
   \begin{align*}
       \epsilon(\sigma') - \epsilon(\sigma) & \leq
       \begin{multlined}[t]
       \frac{1}{\min(\vol(\sets_i), \vol(\sets_j))} \big( \voldiffip - \voldiffi \qquad \\
       + \voldiffjp - \voldiffj \big)
       \end{multlined} \\
        & =
        \begin{multlined}[t]
        \frac{1}{\min(\vol(\sets_i), \vol(\sets_j))} \big( \vol(\seta_x \setminus \sets_i) - \vol(\seta_x \cap \sets_i) \qquad \\
         + \vol(\seta_x \cap \sets_j) -  \vol(\seta_x \setminus \sets_j) \big) 
        \end{multlined} \\
        & \leq  \frac{1}{\min(\vol(\sets_i), \vol(\sets_j))} \left(2\cdot  \vol(\sets_x \cap \sets_j) \right) \\
        & \leq \frac{2\cdot  \vol(\sets_j)}{\min(\vol(\sets_i), \vol(\sets_j))} \\
        & \leq 8,
   \end{align*}
   where the last inequality follows by  the fact that the clusters are almost balanced.
   We will consider the total introduced cost due to this case and Case~4 together, and   so let's first examine Case~4.
   
   \textbf{Case~4: $\alpha<0,\beta>0$.}
   In this case, we have
   \begin{align*}
       \epsilon(\sigma') - \epsilon(\sigma) & \leq \frac{1}{\vol(\sets_j)} \left(\voldiffjp - \voldiffj\right) \\
       & \leq \frac{1}{\vol(\sets_j)} \left( \vol(\seta_x \intersect \sets_j) - \vol(\seta_x \setminus \sets_j)\right)\\
       & \leq \frac{\vol(\sets_j)}{\vol(\sets_j)} \\
       & = 1.
   \end{align*}
   Now, let us bound the total number of times we need to construct $\sigma'$ in order to obtain a permutation.
   For any $i$  with $\setm_{\sigma,i}=\emptyset$,  we have
    \[
        \frac{\vol(\setm_{\sigma, i} \triangle \sets_i)}{\vol(\sets_i)} = \frac{\vol(\sets_i)}{\vol(\sets_i)} = 1,
    \]
    so the total number of required iterations is upper bounded  by
    \[
        \cardinality{\{i : \setm_{\sigma,i}=\emptyset \}} \leq \sum_{i = 1}^k \frac{\vol(\setm_{\sigma, i} \triangle \sets_i)}{\vol(\sets_i)} \leq 64\cdot (1 + \APT)\cdot  \frac{k}{\Upsilon(k)}.
    \]
As such, the total introduced cost due to Cases~3 and 4 is at most  
    \[
        8 \cdot 64\cdot (1 + \APT)\cdot  \frac{k}{\Upsilon(k)} = 512\cdot (1 + \APT) \cdot \frac{k}{\Upsilon(k)}.
    \]
    Putting everything together, we have that  
    \[
        \epsilon(\sigma^\star) \leq \epsilon(\sigma) + 1024\cdot (1 + \APT)\cdot \frac{k}{\Upsilon(k)} \leq 1088\cdot (1 + \APT)\cdot  \frac{k}{\Upsilon(k)}.
    \]
    This implies that
    \[
        \sum_{i = 1}^k \vol(\setm_{\sigma^\star, i} \triangle \sets_i) \leq 2176\cdot (1 + \APT)\cdot  \frac{\vol(\vertexsetg)}{\Upsilon(k)},
    \]
    and completes the proof.
\end{proof}

It is worth noting that this method of upper bounding the ratio of misclassified vertices is very different from the ones used in previous references, e.g.,~\cite{deySpectralConcentrationGreedy2019,mizutaniImprovedAnalysisSpectral2021,pengPartitioningWellClusteredGraphs2017}. In particular, instead of examining all the possible mappings between $\{\seta_i\}_{i=1}^k$ and $\{\sets_i\}_{i=1}^k$, 
in this proof we directly work with some specifically defined function $\sigma$, and constructed our desired mapping $\sigma^{\star}$ from $\sigma$. This is another key for us to obtain stronger results than the previous work.

\section{Future Work}
Notice that the analysis presented in this chapter requires that the optimal clusters have almost-balanced size, which is the only technical assumption of our result. This raises a natural question on the necessity of such a condition.
\begin{openquestion}
Is it possible for spectral clustering to theoretically achieve the same approximation guarantee as Theorem~\ref{thm:sc_guarantee} when the optimal clusters of an input graph have arbitrary sizes? Alternatively, is there a hard instance showing that our presented analysis is tight up to some constant factor?
\end{openquestion}
Secondly, it would be very interesting to see whether
one can theoretically analyse the performance of spectral clustering   with respect to $\lambda_{k+1}/\lambda_{k}$, instead of $\Upsilon(k)$.
\begin{openquestion}
Is it possible to upper bound the number~(or volume) of misclassified vertices from  spectral clustering with respect to the ratio between $\lambda_{k+1}$ and $\lambda_k$?
\end{openquestion}
Notice that, in contrast to $\Upsilon(k)$, the value of $\lambda_{k+1}/\lambda_k$ can be  computed in polynomial time; therefore, an affirmative answer to the question will have more practical impact than our current presented result.
In addition, such an answer would further justify the eigen-gap heuristic~\cite{vonluxburgTutorialSpectralClustering2007}, helping to close
the gap between the theoretical analysis of spectral clustering and its empirical performance.

\chapter{Spectral Clustering with Meta-Graphs} \label{chap:meta}
In this chapter we   build on the analysis of the last chapter and develop a variant of spectral clustering which employs fewer than $k$ eigenvectors to find $k$ clusters.
We prove the surprising result that,
when the structure among the optimal clusters in an input graph satisfies certain conditions,
spectral clustering with fewer eigenvectors is able to produce better results than classical spectral clustering.
This presents the opportunity to significantly speed up the runtime of spectral clustering in practice, since fewer eigenvectors are used to construct the embedding.

The starting point of our new analysis is the following observation: in many graphs arising from real-world data, the structure of clusters is not symmetric.
For example, let us consider the well-known MNIST dataset of handwritten digits~\cite{hullDatabaseHandwrittenText1994}.
This dataset consists of 70,000 images of handwritten digits 0 to 9.
Each image has $28 \times 28$ pixels, and each pixel has a greyscale value between 0 and 255.
Figure~\ref{fig:mnist_example} shows some example images from the MNIST dataset.

\begin{figure}[b]
    \centering
    \includegraphics[width=0.35\textwidth]{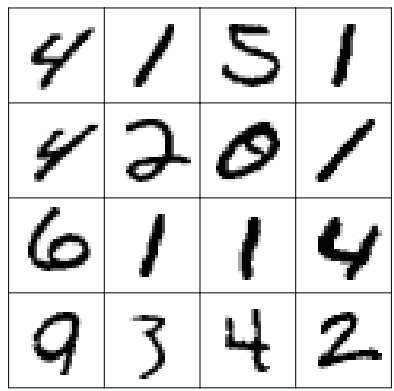}
    \caption[Example images from the MNIST dataset]{Example images from the MNIST dataset.}
    \label{fig:mnist_example}
\end{figure}

Then, we can represent each image in the dataset by a vector $\vecx_i \in \R^{28 \times 28}$ and construct the \firstdef{$\bm{k}$ nearest neighbours graph}, $\geqve$, in the following way:
\begin{itemize}
    \item for each data point $\vecx_i$ in the dataset, there is a corresponding vertex $v_i \in \vertexset$;
    \item for each $v_i$ there is an undirected edge to the $k$ vertices representing the data points with the smallest distances to $\vecx_i$.
\end{itemize}
Figure~\ref{fig:mnist_knn} illustrates the $k$ nearest neighbour graph for a randomly chosen subset of the data.
\begin{figure}[t]
    \centering
    \includegraphics[width=0.5\textwidth]{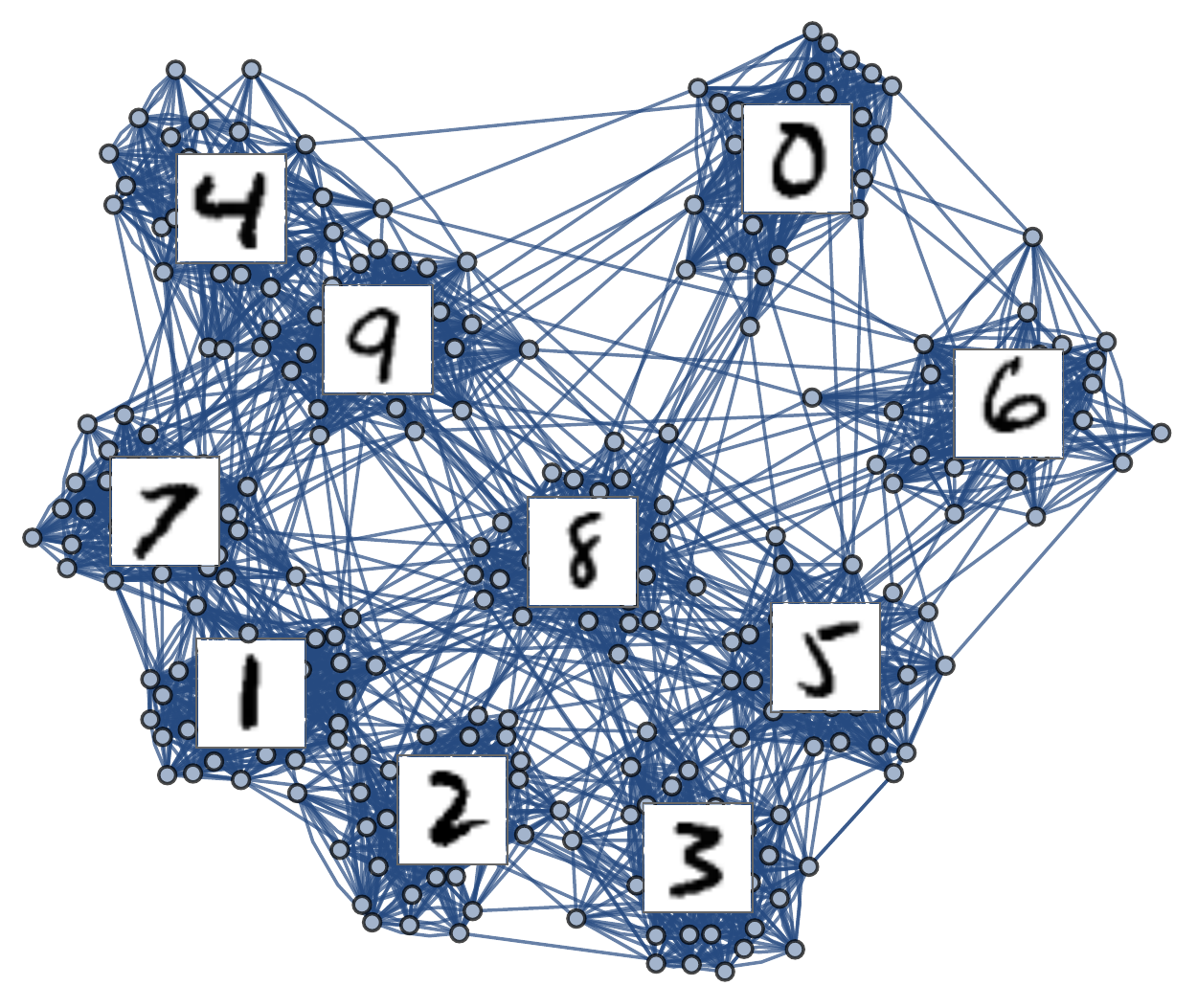}
    \caption[The $k$ nearest neighbour graph of the MNIST dataset]{The $k$ nearest neighbour graph of a random subset of the MNIST dataset, for $k = 3$. The graph has a cluster corresponding to each digit.}
    \label{fig:mnist_knn}
\end{figure}
Notice that some of the clusters in this graph are more similar than others.
For example, the cluster representing the digit $4$ has more connections with the cluster representing $9$ than the one representing $6$.
The central result of this chapter is that this \emph{asymmetry} can be used to improve the performance of spectral clustering, and   allow us to recover all $10$ clusters while using fewer than $10$ eigenvectors of the graph Laplacian.
 
\begin{mainresult}[See Theorem~\ref{thm:sc_meta-graph} for the formal statement]
Given a graph $\graphg$, let $\sets_1, \ldots, \sets_k$ be the clusters which achieve $\kcond(k)$ and suppose that $\sets_1, \ldots, \sets_k$ are almost-balanced.
Additionally, suppose that the clusters $\sets_1, \ldots, \sets_k$ exhibit some significant high-level structure.
Then, spectral clustering with fewer than $k$ eigenvectors   performs better than spectral clustering with $k$ eigenvectors.
\end{mainresult}

The significance of this result is further demonstrated by experiments on the BSDS image segmentation dataset.
It is well known that spectral methods are an effective technique for image segmentation~\cite{shiNormalizedCutsImage2000, tungEnablingScalableSpectral2010, zhangImageSegmentationBased2021}, and in this chapter we see that spectral clustering is effective for image segmentation, and that using fewer eigenvectors often results in a better segmentation.
Figure~\ref{fig:bsds_results_intro} shows examples in which, in order to find $6$ and $45$ clusters, spectral clustering with $3$ and $7$ eigenvectors produce better results than the ones with $6$ and $45$ eigenvectors.
Section~\ref{sec:metaExperiments} describes the full experimental details.

\begin{figure*}[t]
    \centering
    \begin{subfigure}{0.3\textwidth} 
    \includegraphics[width=\textwidth]{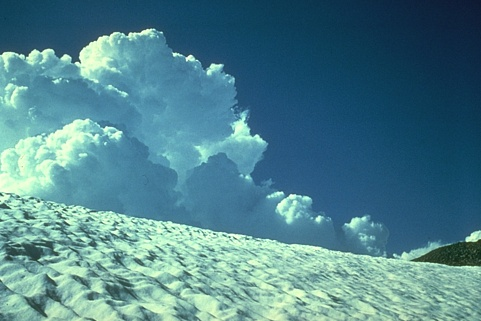}
    \caption{Original Image}
    \end{subfigure}
    \hspace{1em}
    \begin{subfigure}{0.3\textwidth} 
    \includegraphics[width=\textwidth]{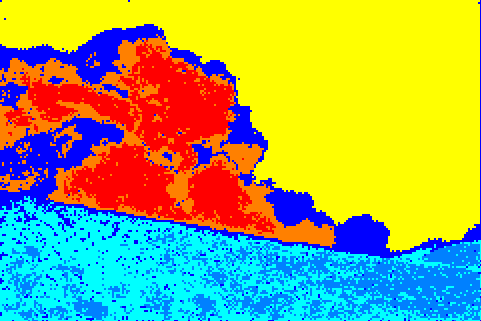}
    \caption{$6$ clusters with $3$ vectors}
    \end{subfigure}
    \hspace{1em}
    \begin{subfigure}{0.3\textwidth} 
    \includegraphics[width=\textwidth]{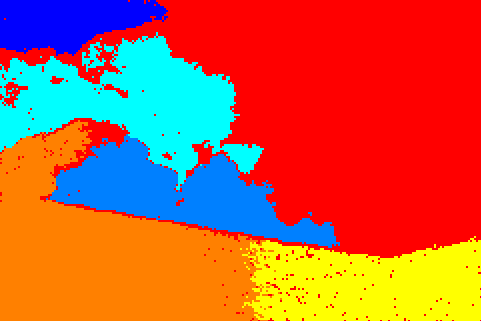}
    \caption{$6$ clusters with $6$ vectors}
    \end{subfigure}
    \par\bigskip
    \begin{subfigure}{0.3\textwidth} 
    \includegraphics[width=\textwidth]{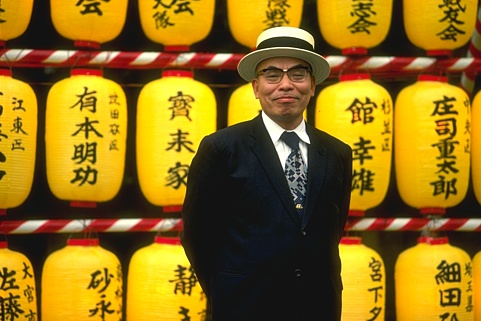}
    \caption{Original Image}
    \end{subfigure}
    \hspace{1em}
    \begin{subfigure}{0.3\textwidth} 
    \includegraphics[width=\textwidth]{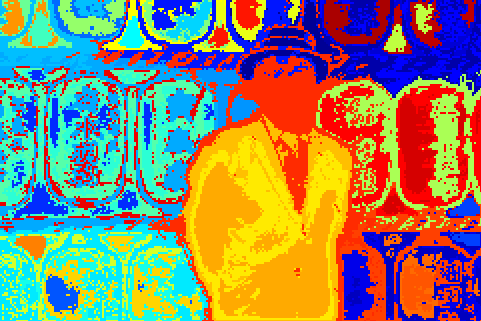}
    \caption{$45$ clusters with $7$ vectors}
    \end{subfigure}
    \hspace{1em}
    \begin{subfigure}{0.3\textwidth} 
    \includegraphics[width=\textwidth]{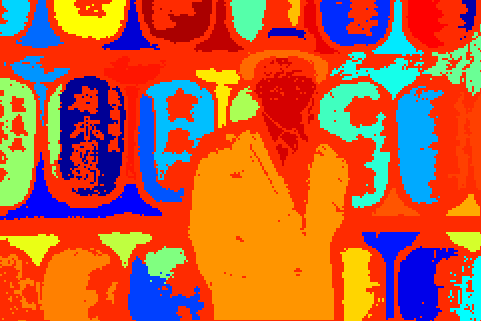}
    \caption{$45$ clusters with $45$ vectors}
    \end{subfigure}
    \caption[Examples of image segmentation using spectral clustering]{Examples of image segmentation using spectral clustering; the original images are from the BSDS dataset.  The Rand Index of segmentation (b) is $0.83$, while (c) has Rand Index $0.78$. Segmentation (e) has Rand Index $0.92$, and (f) has Rand Index $0.80$. Hence, it's clear that spectral clustering with fewer than $k$ eigenvectors suffices to produce comparable or better output.} 
    \label{fig:bsds_results_intro}
\end{figure*}

The result in this chapter is related to previous work on
efficient spectral algorithms to find cluster-structures.
While it's known that flow and path structures of clusters in directed graphs can be uncovered with complex-valued Hermitian  matrices~\cite{cucuringuHermitianMatricesClustering2020, laenenHigherOrderSpectralClustering2020}, our result shows that one can apply  real-valued Laplacians of undirected graphs, and find more general patterns of clusters.
 
\section{Encoding the Cluster-Structure into  Meta-Graphs}\label{sec:structure++}
Suppose that $\{\sets_i\}_{i=1}^k$ is a partition of $\vertexsetg$ that corresponds to the $k$-way expansion $\rho(k)$ for an input graph $\graphg$. We define the  matrix $\adj_{\graphm}\in\R^{k\times k}$ by
\[
    \adj_\graphm(i, j) = \twopartdef{\weight(\sets_i, \sets_j)}{i \neq j,}{2 \weight(\sets_i, \sets_j)}{i = j}
\]
and, taking $\adj_\graphm$ to be the adjacency matrix, this 
defines a graph $\graphm = (\vertexset_\graphm, \edgeset_\graphm, \weight_\graphm)$ which we   refer to as the \firstdef{meta-graph} of the clusters.
Informally, the meta-graph is constructed by viewing every cluster $\sets_i$ as a `giant vertex', and captures the structure of the optimal clusters.
Then, define the normalised adjacency matrix of $\graphm$ by 
$\adjn_\graphm \triangleq \degmhalfneg_\graphm \adj_\graphm \degmhalfneg_\graphm$,
and the normalised Laplacian matrix of $\graphm$ by $\lapn_\graphm \triangleq \identity - \adjn_\graphm$.
Let the eigenvalues of $\lapn_\graphm$ be $\gamma_1\leq\ldots\leq\gamma_k$, and   $\vecg_i\in\R^k$ be the eigenvector corresponding to $\gamma_i$ for any $i\in[k]$.

The starting point of our novel approach is to look at the structure of this meta-graph $\graphm$, and study how the spectral information of $\lapn_\graphm\in\R^{k\times k}$ is encoded in the bottom eigenvectors of $\lapn_\graphg$.
To achieve this, 
 for any    $ \ell\in [k]$ and vertex $i\in \vertexset_\graphm$, let
\begin{equation}\label{eq:defxi}
\bar{\vecx}^{(i)} \triangleq \left( \vecg_1(i),\ldots, \vecg_{\ell}(i) \right)^\transpose;
\end{equation}
notice that  $\bar{\vecx}^{(i)}\in\R^{\ell}$ defines the spectral embedding of  $i\in \vertexset_\graphm$ with the bottom $\ell$ eigenvectors of $\lapn_\graphm$.

\begin{definition}[$(\theta,\ell)$-distinguishable graph]\label{def:distinguishable} For any  $\graphm=(\vertexset_\graphm, \edgeset_\graphm, \weight_\graphm)$ with  $k$ vertices, $\ell\in[k]$, and $\theta\in\R_{\geq 0}$, we say that $\graphm$ is  $(\theta,\ell)$-distinguishable if 
\begin{itemize}
    \item it holds for any $i\in[k]$ that  $\norm{\barx^{(i)}}^2 \geq \theta$, and 
    \item it holds for any different $i,j\in[k]$  that  $
    \left\|\frac{\barx^{(i)}}{\norm{\barx^{(i)}}} - \frac{\barx^{(j)}}{\norm{\barx^{(j)}}}\right\|^2\geq \theta$.
\end{itemize}
\end{definition}
In other words, graph $\graphm$ is $(\theta,\ell)$-distinguishable if (i) every embedded point $\bar{\vecx}^{(i)}$ has squared length at least $\theta$, and (ii) any pair of different embedded points with normalisation are separated by
 a distance
of at least $\theta$.
By definition, it is easy to see that, if $\graphm$ is $(\theta,\ell)$-distinguishable for some large value of $\theta$, then the embedded points $\{\bar{\vecx}^{(i)} \}_{i\in \vertexset_\graphm}$ can be easily separated even if $\ell< k$.
Examples~\ref{ex:cycle} and \ref{ex:grid} below demonstrate that this is indeed the case and, since the meta-graph $\graphm$ is constructed from  $\{\sets_i\}_{i=1}^k$, this well-separation property for $\{\bar{\vecx}^{(i)} \}_{i\in \vertexset_\graphm}$ usually implies 
that the clusters $\{\sets_i\}_{i = 1}^k$ are also well separated when the vertices are mapped to the points $\{F(u)\}_{u \in \vertexset_\graphg}$, in which
\begin{equation}\label{eq:embeddingell}
   F(u)\triangleq \frac{1}{\sqrt{\deg(u)}}\cdot  \left(\vecf_1(u), \ldots \vecf_{\ell}(u)\right)^\transpose.
\end{equation}

\begin{example} \label{ex:cycle}
 Suppose the meta-graph is $C_6$, the cycle on $6$ vertices.
 Figure~\ref{fig:cycle_example}(\subref{subfig:cycle_embedding}) shows that the vertices of $C_6$ are well separated by the second and third eigenvectors of $\lapn_{C_6}$.\footnote{Notice that the first eigenvector is always the constant vector and gives no useful information. This is why we visualise the second and third eigenvectors only.}
 Since the minimum distance between any pair of vertices in this embedding is $2/3$, we say that $C_6$ is $\left(2/3, 3\right)$-distinguishable.
 Figure~\ref{fig:cycle_example}(\subref{subfig:cycle_sbm_embedding}) shows that, when using $\vecf_2, \vecf_3$ of $\lapn_G$ to embed the vertices of a $600$-vertex graph with a cyclical cluster pattern, the embedded points closely match the ones from the meta-graph.~\footnote{In Examples~\ref{ex:cycle}~and~\ref{ex:grid}, the graphs $G$ are generated according to a generalised stochastic block model based on the defined meta-graph. The model is formally described in Section~\ref{sec:metaExperiments}.}
\end{example}
 
\begin{figure}[ht]
\centering
\begin{subfigure}{0.45\textwidth}
        \includegraphics[width=\textwidth]{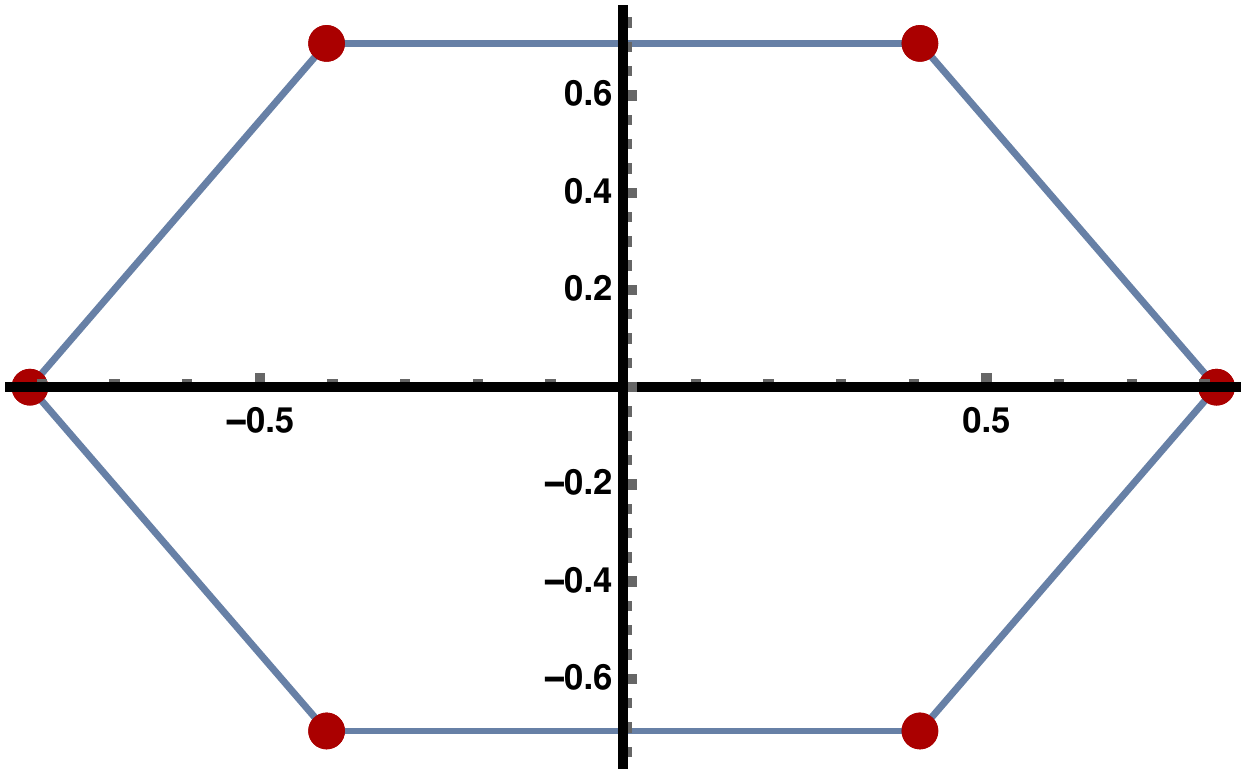}%
    \caption{
    Spectral embedding of $C_6$ with $\vecg_2$ and $\vecg_3$.
    \label{subfig:cycle_embedding}   
    }
\end{subfigure}
\hspace{1em}
\begin{subfigure}{0.45\textwidth}
    \includegraphics[width=\textwidth]{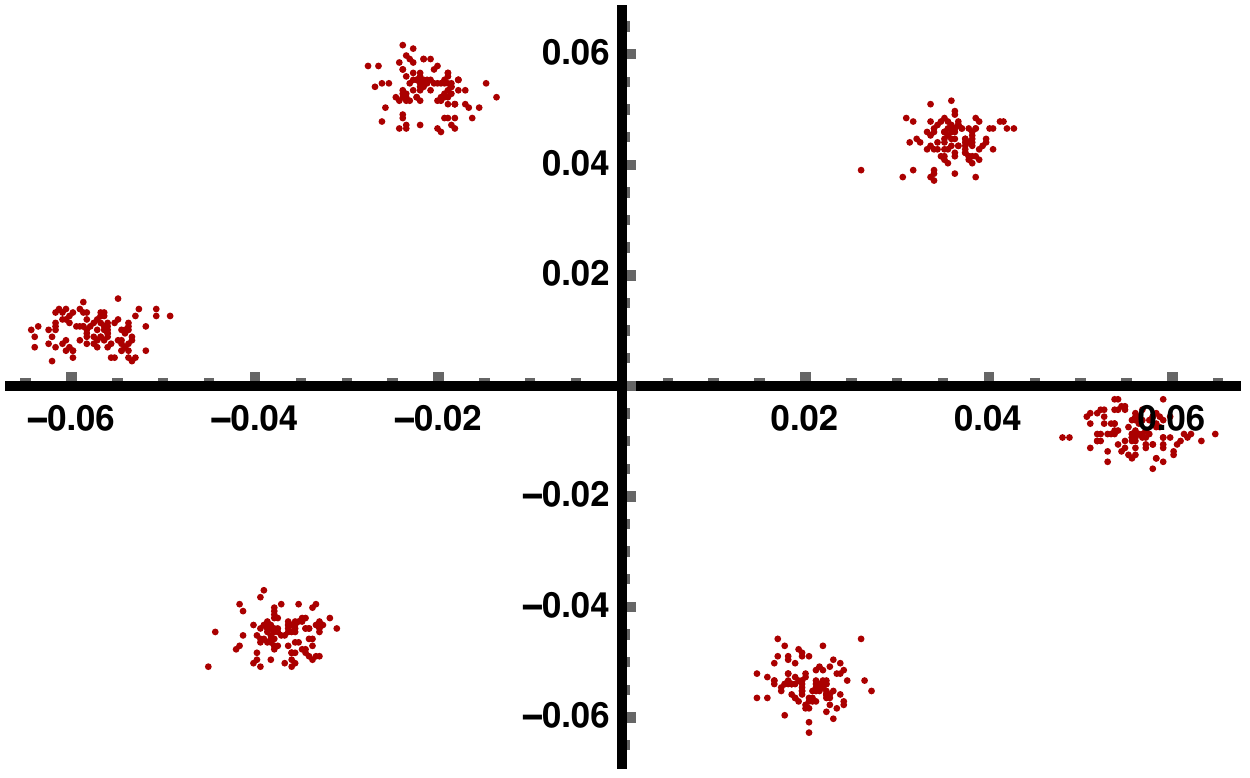}%
    \caption{Spectral embedding of $\graphg$ with $\vecf_2$ and $\vecf_3$.
    \label{subfig:cycle_sbm_embedding}
    }
\end{subfigure}
    \caption[Spectral embedding of a graph whose clusters exhibit a cyclical structure]{The spectral embedding of a large graph $\graphg$ whose clusters exhibit a cyclical structure closely matches the embedding of the meta-graph $C_6$.
    \label{fig:cycle_example}
    }
\end{figure}
 
\begin{example} \label{ex:grid}
Suppose the meta-graph is $P_{4, 4}$, which is the $4 \times 4$ grid graph.
Figure~\ref{fig:grid_example}(\subref{subfig:grid_embedding}) shows that the vertices are well separated using the second and third eigenvectors of $\lapn_{P_{4, 4}}$.
The minimum distance between any pair of vertices in this embedding is roughly $0.1$, and so $P_{4, 4}$ is $(0.1, 3)$-distinguishable.
 Figure~\ref{fig:grid_example}(\subref{subfig:grid_sbm_embedding}) demonstrates that,
 when using $\vecf_2, \vecf_3$ of $\lapn_\graphg$ to construct the embedding, the embedded points    closely match the ones from the meta-graph.
\end{example}

\begin{figure}[ht]
\centering
\begin{subfigure}{0.45\textwidth}
        \includegraphics[width=\textwidth]{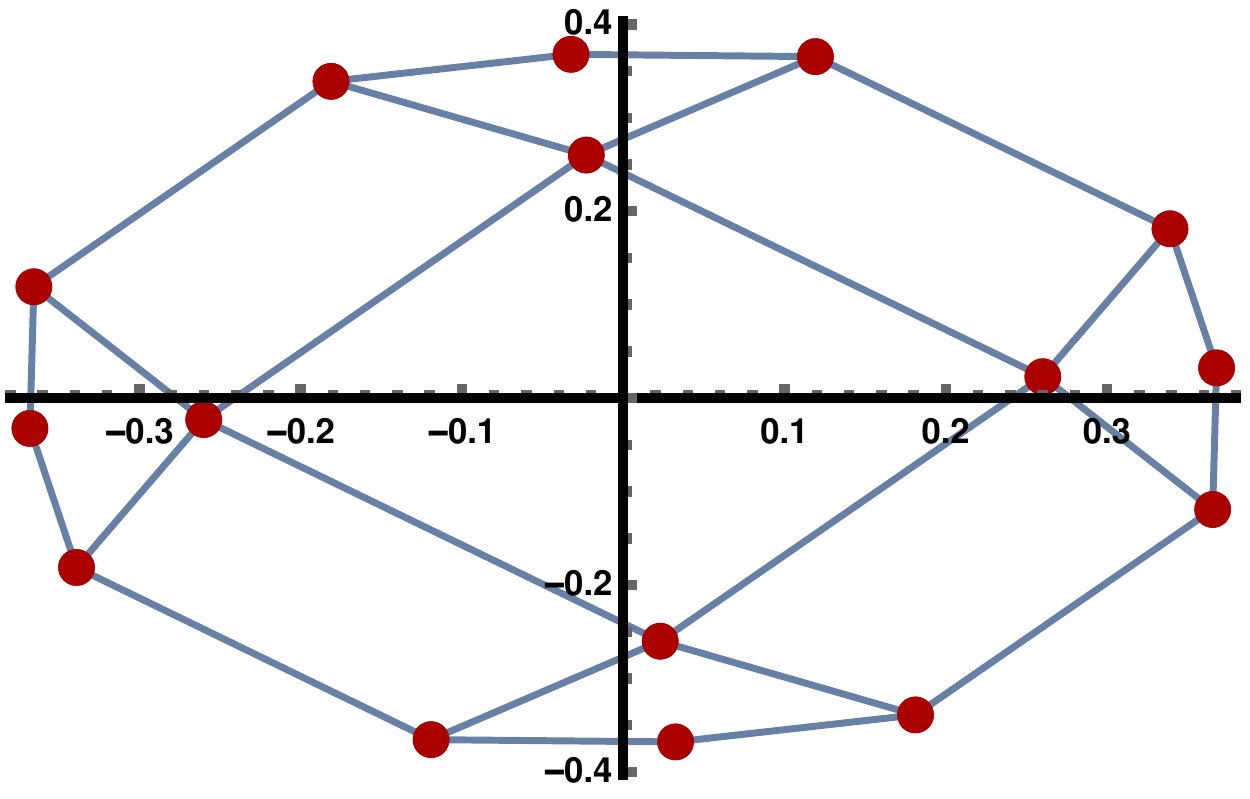}%
        \caption{
        Embedding of $P_{4, 4}$ with $\vecg_2$ and $\vecg_3$.
        \label{subfig:grid_embedding}
        }
\end{subfigure}
\hspace{1em}
\begin{subfigure}{0.45\textwidth}
    \includegraphics[width=\textwidth]{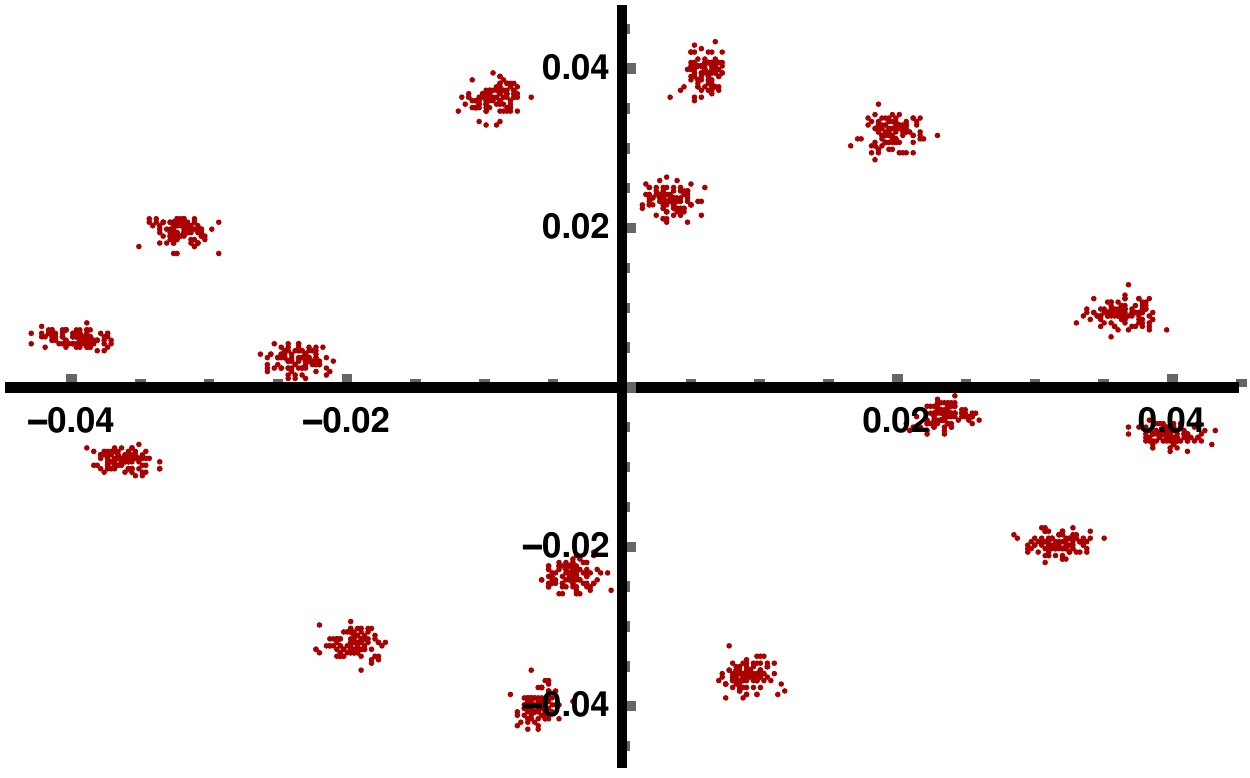}
    \caption{Embedding of $\graphg$ with $\vecf_2$ and $\vecf_3$.
    \label{subfig:grid_sbm_embedding}    
    }
\end{subfigure}
    \caption[Spectral embedding of a graph whose clusters exhibit a grid structure]{
    The spectral embedding of a large $1600$-vertex graph $\graphg$ whose clusters exhibit a grid structure closely matches the embedding of the meta-graph $P_{4, 4}$.
    \label{fig:grid_example}
    }
\end{figure}
 
From these examples, it is clear to see that   there is a close
connection between the embedding $\{\bar{\vecx}^{(i)}\}$ defined in \eqref{eq:defxi} and the embedding $\{F(u)\}$ defined in \eqref{eq:embeddingell}. To formally analyse this connection, we  develop a new structure theorem with meta-graphs. 
 
We define vectors $\barg_1, \ldots, \barg_k \in \R^n$ which represent the eigenvectors of $\lapn_\graphm$ `blown-up' to $n$ dimensions.
Formally, we define $\barg_i$ such that
% Let $\barg_i \in \R^n$ be defined such that
\[
    \barg_i = \sum_{j = 1}^{k} \frac{\degmhalf \indicatorvec_j}{\norm{\degmhalf \indicatorvec_j}}\cdot \vecg_i(j),
\]
where $\indicatorvec_j \in \R^n$ is the indicator vector of cluster $\sets_j$ defined in Section~\ref{sec:prelim:graphs}.
By definition, it holds for any $u\in \sets_j$ that 
\[
    \barg_i(u) = \sqrt{\frac{\deg(u)}{\vol(\sets_j)}}\cdot \vecg_i(j).
\]
The following lemma shows  that $\{\bar{\vecg}_i\}_{i=1}^k$ form an orthonormal basis.

\begin{lemma} \label{lem:barg_ortho}
The following statements hold:
\begin{enumerate}
    \item it holds for any $i\in[k]$ that  $\norm{\barg_i} = 1$;
    \item it holds for any different  $i, j\in [k]$ that  $\langle \barg_i, \barg_j\rangle = 0$.
\end{enumerate}
\end{lemma}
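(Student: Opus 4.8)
The plan is to reduce the orthonormality of $\{\barg_i\}_{i=1}^k$ to the (automatic) orthonormality of the eigenvectors $\{\vecg_i\}_{i=1}^k$ of the symmetric matrix $\lapn_\graphm$. First I would introduce the shorthand $\vec{h}_j \triangleq \degmhalf \indicatorvec_j / \norm{\degmhalf \indicatorvec_j}$, so that by definition $\barg_i = \sum_{j = 1}^k \vecg_i(j)\, \vec{h}_j$. The key observation is that $\{\vec{h}_j\}_{j=1}^k$ is itself an orthonormal family in $\R^n$: the clusters $\sets_1, \ldots, \sets_k$ are pairwise disjoint, so $\vec{h}_j$ and $\vec{h}_{j'}$ have disjoint supports whenever $j \neq j'$, giving $\langle \vec{h}_j, \vec{h}_{j'}\rangle = 0$; and each $\vec{h}_j$ has unit norm by construction, where we use that $\norm{\degmhalf \indicatorvec_j}^2 = \vol(\sets_j) > 0$ so that these vectors are well-defined (this is where we need that no cluster is empty).

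With this in hand, expanding the inner product gives $\langle \barg_i, \barg_{i'}\rangle = \sum_{j, j' = 1}^k \vecg_i(j)\vecg_{i'}(j') \langle \vec{h}_j, \vec{h}_{j'}\rangle = \sum_{j=1}^k \vecg_i(j) \vecg_{i'}(j) = \langle \vecg_i, \vecg_{i'}\rangle$. Equivalently, and perhaps more in keeping with the style of the surrounding lemmas, one can argue directly from the pointwise formula $\barg_i(u) = \sqrt{\deg(u)/\vol(\sets_j)}\,\vecg_i(j)$ for $u \in \sets_j$: summing over each cluster separately, $\sum_{u \in \sets_j} \barg_i(u)\barg_{i'}(u) = \tfrac{1}{\vol(\sets_j)}\vecg_i(j)\vecg_{i'}(j)\sum_{u \in \sets_j}\deg(u) = \vecg_i(j)\vecg_{i'}(j)$, and then summing over $j \in [k]$ recovers $\langle \barg_i, \barg_{i'}\rangle = \langle \vecg_i, \vecg_{i'}\rangle$.

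Finally, since $\lapn_\graphm = \identity - \adjn_\graphm$ is a real symmetric $k \times k$ matrix, the spectral theorem allows us to take its eigenvectors $\vecg_1, \ldots, \vecg_k$ to be orthonormal, i.e. $\langle \vecg_i, \vecg_{i'}\rangle = \iverson{i = i'}$. Combining this with the identity above yields $\norm{\barg_i} = 1$ for every $i \in [k]$ and $\langle \barg_i, \barg_j \rangle = 0$ for all distinct $i, j \in [k]$, which is exactly the statement of the lemma; orthonormality of a set of $k$ vectors in $\R^n$ also gives linear independence, justifying the description as an orthonormal basis (of their span). I do not expect any genuine obstacle here — the proof is a short computation — and the only points requiring care are that the $\vecg_i$ are fixed to be an orthonormal eigenbasis (always possible for a symmetric matrix) and that every $\vol(\sets_j)$ is strictly positive so that the normalisations are legitimate.
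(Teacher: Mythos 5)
Your proposal is correct and, in the second (pointwise) computation, is essentially identical to the paper's own proof: both reduce the claim to the orthonormality of the eigenvectors $\vecg_1, \ldots, \vecg_k$ of the symmetric matrix $\lapn_\graphm$ by summing over each cluster separately and using $\sum_{u \in \sets_j} \deg(u) = \vol(\sets_j)$. Your explicit remarks that the $\vecg_i$ can be taken orthonormal and that $\vol(\sets_j) > 0$ are sensible clarifications of points the paper leaves implicit, but the argument is the same.
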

\begin{proof}
By definition, we have that 
\begin{align*}
        \norm{\barg_i}^2 = \barg_i^\transpose \barg_i & = \sum_{j = 1}^k \sum_{u \in \sets_j} \barg_i(u)^2 \\
        & = \sum_{j=1}^k \sum_{u \in \sets_j} \left(\frac{\sqrt{\deg(u)}}{\sqrt{\vol(\sets_j)}} \cdot  \vecg_i(j) \right)^2  \\
        & = \sum_{j = 1}^k \vecg_i(j)^2 = \norm{\vecg_i}^2 = 1,
    \end{align*}
    which proves the first statement. 

    To prove the second statement, we have for any $i \neq j$ that 
    \begin{align*}
        \langle \barg_i, \barg_j\rangle & = \sum_{x = 1}^{k} \sum_{u \in \sets_x} \barg_i(u) \barg_j(u) \\
        & = \sum_{x = 1}^{k} \sum_{u \in \sets_x} \frac{\deg(u)}{\vol(\sets_x)}\cdot \vecg_i(x) \vecg_j(x) \\
        & = \sum_{x = 1}^k \vecg_i(x) \vecg_j(x) \\
        & = \vecg_i^\transpose \vecg_j = 0,
    \end{align*}
    which completes the proof.
\end{proof}

The following lemma presents an
  important relationship between the eigenvalues of $\lapn_\graphm$ and those of $\lapn_\graphg$.

\begin{lemma} \label{lem:lambda_leq_gamma2}
 It holds for any $i \in [k]$ that  $\lambda_i \leq \gamma_i$.
\end{lemma}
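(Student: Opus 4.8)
The plan is to prove the inequality via the Courant--Fischer characterisation (Theorem~\ref{thm:courantfischer}), using the blown-up vectors $\barg_1,\ldots,\barg_k$ to build a convenient $k$-dimensional test subspace of $\R^n$. The point will be that on the span of $\{\barg_i\}_{i=1}^k$ the Rayleigh quotient of $\lapn_\graphg$ agrees \emph{exactly} with the Rayleigh quotient of $\lapn_\graphm$, so minimising over subspaces of this span recovers $\gamma_i$, while the unrestricted minimum defining $\lambda_i$ can only be smaller.

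First I would record the identity $\deg_\graphm(i) = \vol(\sets_i)$ for every $i\in[k]$; this is exactly the reason the diagonal entries of $\adj_\graphm$ carry the factor $2$, since $\sum_j \adj_\graphm(i,j) = 2\weight(\sets_i,\sets_i) + \sum_{j\neq i}\weight(\sets_i,\sets_j) = \sum_{v\in\sets_i}\deg(v)$. Next, fix $\vecy = \sum_{i=1}^k c_i\vecg_i \in \R^k$ and set $\vecx = \sum_{i=1}^k c_i\barg_i \in \R^n$. From the definition of $\barg_i$, for $u\in\sets_j$ we have $\vecx(u) = \sqrt{\deg(u)/\vol(\sets_j)}\,\vecy(j)$, so the vector $\vecz \triangleq \degmhalfneg\vecx$ is constant on each cluster with $\vecz(u) = \vecy(j)/\sqrt{\vol(\sets_j)}$ for $u\in\sets_j$. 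Plugging into the Laplacian quadratic form, edges inside a cluster contribute nothing and
\[
    \vecx^\transpose\lapn_\graphg\vecx \;=\; \vecz^\transpose\lap_\graphg\vecz \;=\; \sum_{\{u,v\}\in\edgesetg}\weight(u,v)\bigl(\vecz(u)-\vecz(v)\bigr)^2 \;=\; \sum_{a<b}\weight(\sets_a,\sets_b)\left(\frac{\vecy(a)}{\sqrt{\vol(\sets_a)}} - \frac{\vecy(b)}{\sqrt{\vol(\sets_b)}}\right)^2.
\]
Carrying out the analogous computation for the meta-graph (self-loops drop out of the Laplacian quadratic form, and $\deg_\graphm(a) = \vol(\sets_a)$) gives the same expression for $\vecy^\transpose\lapn_\graphm\vecy$, so $\vecx^\transpose\lapn_\graphg\vecx = \vecy^\transpose\lapn_\graphm\vecy$. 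Moreover, since $\{\barg_i\}_{i=1}^k$ is orthonormal by Lemma~\ref{lem:barg_ortho} and $\{\vecg_i\}_{i=1}^k$ is orthonormal, $\norm{\vecx}^2 = \sum_i c_i^2 = \norm{\vecy}^2$.

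Finally I would conclude with Courant--Fischer. Let $U \triangleq \mathrm{span}(\barg_1,\ldots,\barg_k)\subseteq\R^n$; the assignment $\barg_j\mapsto\vecg_j$ extends to a linear isometry $\Phi\colon U\to\R^k$, and the previous step shows that for every $\vecx\in U$ the Rayleigh quotient of $\lapn_\graphg$ at $\vecx$ equals the Rayleigh quotient of $\lapn_\graphm$ at $\Phi(\vecx)$. Restricting the outer minimisation in Theorem~\ref{thm:courantfischer} to $i$-dimensional subspaces of $U$ only increases its value, so
\[
    \lambda_i \;\leq\; \min_{\substack{S\subseteq U\\ \dim S = i}}\ \max_{\substack{\vecx\in S\\ \vecx\neq\zerovec}} \frac{\vecx^\transpose\lapn_\graphg\vecx}{\vecx^\transpose\vecx} \;=\; \min_{\substack{T\subseteq\R^k\\ \dim T = i}}\ \max_{\substack{\vecy\in T\\ \vecy\neq\zerovec}} \frac{\vecy^\transpose\lapn_\graphm\vecy}{\vecy^\transpose\vecy} \;=\; \gamma_i,
\]
where the final equality is Courant--Fischer applied to $\lapn_\graphm$, and the middle equality holds because $\Phi$ is a bijection between $i$-dimensional subspaces of $U$ and of $\R^k$ preserving Rayleigh quotients.

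I do not anticipate a genuine obstacle; the only place requiring care is the bookkeeping in the quadratic-form identity — specifically checking that the cluster-internal edges of $\graphg$ genuinely cancel (because $\vecz = \degmhalfneg\vecx$ is cluster-constant) and that the factor $2$ on the diagonal of $\adj_\graphm$ is precisely what makes $\deg_\graphm(i) = \vol(\sets_i)$, so the two normalised quadratic forms coincide rather than differing by a cluster-dependent scaling.
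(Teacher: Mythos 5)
Your proof is correct and takes essentially the same route as the paper: evaluate the $\lapn_\graphg$ quadratic form on the span of $\{\barg_i\}_{i=1}^k$, observe that it coincides with the $\lapn_\graphm$ quadratic form (internal edges cancel, and the doubled diagonal of $\adj_\graphm$ makes $\deg_\graphm(i)=\vol(\sets_i)$), and conclude via Lemma~\ref{lem:barg_ortho} and the Courant--Fischer theorem. The only difference is that you spell out the quadratic-form identity for arbitrary linear combinations via the isometry $\Phi$, a detail the paper's terser proof (which states the identity only for each $\barg_j$ before invoking the $i$-dimensional subspace) leaves implicit.
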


\begin{proof}
    Notice that we have for any $j \leq k$ that 
    \begin{align*}
        \barg_j^\transpose \lapn_\graphg \barg_j & = \sum_{(u, v) \in \edgesetg} \weight_\graphg(u, v) \left(\frac{\barg_j(u)}{\sqrt{\deg(u)}} - \frac{\barg_j(v)}{\sqrt{\deg(v)}}\right)^2 \\
        & = \sum_{x = 1}^{k - 1} \sum_{y = x + 1}^k \sum_{a \in \sets_x} \sum_{b \in \sets_y} \weight(a, b) \left(\frac{\barg_j(a)}{\sqrt{\deg(a)}} - \frac{\barg_j(b)}{\sqrt{\deg(b)}} \right)^2 \\
        & = \sum_{x = 1}^{k - 1} \sum_{y = x + 1}^k \weight(\sets_x, \sets_y) \left(\frac{\vecg_j(x)}{\sqrt{\vol(\sets_x)}} - \frac{\vecg_j(y)}{\sqrt{\vol(\sets_y)}} \right)^2 \\
        &  = \vecg_j^\transpose \lapn_\graphm \vecg_j = \gamma_j. 
    \end{align*}
    By Lemma~\ref{lem:barg_ortho}, we have an $i$-dimensional subspace $\sets_i \subset \R^n$ such that
    \[
        \max_{\vecx \in \sets_i} \frac{\vecx^\transpose \lapn_\graphg \vecx}{\vecx^\transpose \vecx} = \gamma_i,
    \]
    from which the lemma follows by the Courant-Fischer theorem (Theorem~\ref{thm:courantfischer}).
\end{proof}

Next, similar to the function $\Upsilon(k)$
defined in \eqref{eq:defineupsilon}, for any input graph $\geqvewg$ and   $(\theta, \ell)$-distinguishable meta-graph $\graphm$, we define the function $\Psi(\ell)$ 
by
\[
        \Psi(\ell) \triangleq \sum_{i = 1}^{\ell} \frac{\gamma_i}{\lambda_{\ell + 1}}.
    \]
Notice that we have by the higher-order Cheeger inequality that $\gamma_i/2 \leq  \rho_\graphm(i)$ holds for any $i\in[\ell]$, and $\rho_\graphm(i)\leq \rho_\graphg(k)$ by the construction of matrix $\adj_\graphm$. Hence, one can view $\Psi(\ell)$ as a refined version of $l / \Upsilon(k)$.

We   now see that the vectors $\vecf_1, \ldots, \vecf_{\ell}$ and $\barg_1, \ldots, \barg_{\ell}$ are well approximated by each other.
In order to show this, we define for any $ i\in [\ell]$ the vectors
\[
    \hatf_i = \sum_{j = 1}^{\ell} \langle \barg_i, \vecf_j \rangle \vecf_j~\mbox{\ \ \ and\ \ \ }~\hatg_i = \sum_{j = 1}^{\ell} \langle \vecf_i, \barg_j \rangle \barg_j,
\]
and can prove the following structure theorem with meta-graphs.

\begin{theorem}[The Structure Theorem with Meta-Graphs] \label{thm:structure++} 
The following statements hold:
\begin{enumerate}
    \item it holds for  any $i \in [\ell]$ that $\norm{\barg_i - \hatf_i}^2 \leq \gamma_i / \lambda_{\ell + 1}$;
    \item it holds for any $ \ell\in [k]$ that 
    \[
        \sum_{i = 1}^{\ell} \norm{\vecf_i - \hatg_i}^2 \leq \Psi(\ell).
    \]
\end{enumerate}

\end{theorem}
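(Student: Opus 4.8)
The plan is to transplant the proof of Theorem~\ref{thm:struc1} almost verbatim, substituting each ingredient with its meta-graph analogue: the orthonormal family $\{\barg_i\}_{i=1}^{k}$ of blown-up meta-graph eigenvectors in place of the normalised cluster indicators, $\ell$ in place of $k$, and $\lambda_{\ell+1}$ in place of $\lambda_{k+1}$.

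For the first statement, I would note that $\hatf_i$ is exactly the orthogonal projection of $\barg_i$ onto $\mathrm{span}(\vecf_1,\ldots,\vecf_\ell)$. Expanding $\barg_i$ in the eigenbasis of $\lapn_\graphg$ as $\barg_i=\sum_{j=1}^{n}\langle\barg_i,\vecf_j\rangle\vecf_j$ gives $\norm{\barg_i-\hatf_i}^2=\sum_{j=\ell+1}^{n}\langle\barg_i,\vecf_j\rangle^2$, while the quadratic form satisfies $\barg_i^\transpose\lapn_\graphg\barg_i=\sum_{j=1}^{n}\lambda_j\langle\barg_i,\vecf_j\rangle^2\geq\lambda_{\ell+1}\norm{\barg_i-\hatf_i}^2$. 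The only substantive input is the identity already established inside the proof of Lemma~\ref{lem:lambda_leq_gamma2}, namely $\barg_i^\transpose\lapn_\graphg\barg_i=\vecg_i^\transpose\lapn_\graphm\vecg_i=\gamma_i$; combining the two displays yields $\norm{\barg_i-\hatf_i}^2\leq\gamma_i/\lambda_{\ell+1}$.

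For the second statement, I would reuse the sum-swapping identity from the proof of Theorem~\ref{thm:struc1}. Since $\{\vecf_i\}$ is orthonormal and $\hatg_i$ is the projection of $\vecf_i$ onto $\mathrm{span}(\barg_1,\ldots,\barg_\ell)$ --- an orthonormal family by Lemma~\ref{lem:barg_ortho} --- we have $\norm{\vecf_i-\hatg_i}^2=\norm{\vecf_i}^2-\norm{\hatg_i}^2=1-\sum_{j=1}^{\ell}\langle\barg_j,\vecf_i\rangle^2$. Summing over $i\in[\ell]$, writing $\ell=\sum_{j=1}^{\ell}\norm{\barg_j}^2$, and interchanging the two finite sums rewrites the total as $\sum_{j=1}^{\ell}\bigl(\norm{\barg_j}^2-\norm{\hatf_j}^2\bigr)=\sum_{j=1}^{\ell}\norm{\barg_j-\hatf_j}^2$, where the last step uses $\barg_j-\hatf_j\perp\hatf_j$ together with $\norm{\hatf_j}^2=\sum_{i=1}^{\ell}\langle\barg_j,\vecf_i\rangle^2$. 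Applying the first statement term by term bounds this by $\sum_{j=1}^{\ell}\gamma_j/\lambda_{\ell+1}=\Psi(\ell)$.

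I do not expect a genuine obstacle: the real content has already been isolated in Lemma~\ref{lem:barg_ortho} (orthonormality of the $\barg_i$, which is what makes both projection identities $\norm{v-Pv}^2=\norm{v}^2-\norm{Pv}^2$ legitimate) and in the quadratic-form computation inside Lemma~\ref{lem:lambda_leq_gamma2}. The one point to watch is that the argument projects in both directions --- $\barg_i$ onto the span of the $\vecf$'s, and $\vecf_i$ onto the span of the $\barg$'s --- so each direction needs its target family to be orthonormal; the first is automatic and the second is precisely Lemma~\ref{lem:barg_ortho}. I would also record explicitly that $\ell\leq k$, so that $\gamma_1,\ldots,\gamma_\ell$ and $\lambda_{\ell+1}$ are all defined, as guaranteed by the hypothesis $\ell\in[k]$.
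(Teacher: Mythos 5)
Your proposal is correct and follows essentially the same route as the paper: projection onto $\mathrm{span}(\vecf_1,\ldots,\vecf_\ell)$ plus the quadratic-form bound $\barg_i^\transpose\lapn_\graphg\barg_i=\gamma_i\geq\lambda_{\ell+1}\norm{\barg_i-\hatf_i}^2$ for the first statement, and the same sum-swapping identity (relying on the orthonormality of the $\barg_j$ from Lemma~\ref{lem:barg_ortho}) for the second. The only cosmetic difference is that the paper extends $\barg_1,\ldots,\barg_k$ to a full orthonormal basis before expanding $\vecf_i$, which your direct projection argument makes unnecessary.
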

\begin{proof}
For the first statement, we    write $\barg_i$ as a linear combination of the vectors $\vecf_1, \ldots, \vecf_n$, i.e., 
    $
        \barg_i = \sum_{j = 1}^n \langle \barg_i, \vecf_j \rangle \vecf_j$.
    Since  $\hatf_i$ is a projection of $\barg_i$, we have that $\barg_i - \hatf_i$ is perpendicular to $\hatf_i$, and that
    \begin{align*}
        \norm{\barg_i - \hatf_i}^2 & = \norm{\barg_i}^2 - \norm{\hatf_i}^2 \\
        & = \left(\sum_{j = 1}^n \langle \barg_i, \vecf_j \rangle^2 \right) - \left(\sum_{j = 1}^{\ell} \langle \barg_i, \vecf_j \rangle^2 \right) \\
        & = \sum_{j = \ell + 1}^n \langle \barg_i, \vecf_j \rangle^2.
    \end{align*}
    Now, we study the quadratic form $\barg_i^\transpose \lapn_\graphg \barg_i$ and have that 
    \begin{align*}
        \barg_i^\transpose \lapn_\graphg \barg_i & = \left(\sum_{j = 1}^n \langle \barg_i, \vecf_j \rangle \vecf_j^\transpose \right) \lapn_\graphg \left(\sum_{j = 1}^n \langle \barg_i, \vecf_j \rangle \vecf_j\right) \\
        & = \sum_{j = 1}^n \langle \barg_i, \vecf_j \rangle^2 \lambda_j \\
        & \geq \lambda_{\ell + 1} \norm{\barg_i - \hatf_i}^2.
    \end{align*}
    By the proof of Lemma~\ref{lem:lambda_leq_gamma2}, we have that $\barg_i^\transpose \lapn_\graphg \barg_i \leq \gamma_i$, from which the first statement follows.
   
    Now we prove the second statement. 
    We define the vectors  $\barg_{k+1},\ldots, \barg_n$   to be an arbitrary orthonormal basis of the space orthogonal to the space spanned by 
    $\barg_1, \ldots, \barg_k$. Then, we can write any $\vecf_i$ as
    $
        \vecf_i = \sum_{j = 1}^n \langle \vecf_i, \barg_j \rangle \barg_j$,
    and   have that 
    \begin{align*}
        \sum_{i = 1}^{\ell} \norm{\vecf_i - \hatg_i}^2 & = \sum_{i = 1}^{\ell} \left(\norm{\vecf_i}^2 - \norm{\hatg_i}^2 \right) \\
        & = \ell - \sum_{i = 1}^{\ell} \sum_{j = 1}^{\ell} \langle \vecf_i, \barg_j \rangle^2 \\
        & = \sum_{j = 1}^{\ell} \left( 1 - \sum_{i = 1}^{\ell} \langle \barg_j, \vecf_i \rangle^2 \right) \\
        & = \sum_{j = 1}^{\ell} \left( \norm{\barg_j}^2 - \norm{\hatf_j}^2 \right) \\
        & = \sum_{j = 1}^{\ell} \norm{\barg_j - \hatf_j}^2 \leq \sum_{j = 1}^{\ell} \frac{\gamma_j}{\lambda_{{\ell} + 1}},
    \end{align*}
    where the final inequality follows by
    the first statement of the theorem.
\end{proof}

\section{Spectral Clustering with Fewer Eigenvectors \label{sec:embedding+}}
In this section, we   analyse spectral clustering with fewer eigenvectors. The algorithm is essentially the same as the standard spectral clustering described in Section~\ref{sec:analysis1}, with the only difference that every $u\in \vertexsetg$ is embedded into a point in $\R^{\ell}$ by the mapping
defined in
\eqref{eq:embeddingell}.
The analysis follows from the one in Section~\ref{sec:analysis1} at a very high level.
However, since we require that $\{F(u)\}_{u \in \vertexsetg}$ are well separated in $\R^{\ell}$ for some $\ell < k$, the proof is more involved. 
For any  $i \in [k]$, we  define the approximate centre $\vecp^{(i)} \in \R^{\ell}$ of every cluster $\sets_i$ by
\[
    \vecp^{(i)}(j) = \frac{1}{\sqrt{\vol(\sets_i)}}\cdot \sum_{x = 1}^{\ell} \langle \vecf_j, \barg_x \rangle\cdot  \vecg_x(i),
\]
and prove that the total $k$-means cost for the points $\{F(u)\}_{u\in \vertexsetg}$ can be upper bounded with respect to $\Psi(\ell)$.

\begin{lemma} \label{lem:cost_bound}
It holds that 
    \[
        \sum_{i = 1}^k \sum_{u \in \sets_i} \deg(u) \norm{F(u) - \peye}^2 \leq \Psi(\ell).
    \]
\end{lemma}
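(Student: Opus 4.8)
The plan is to follow the template of Lemma~\ref{lem:total_cost} from Chapter~\ref{chap:tight}, but with the meta-graph blow-up vectors $\barg_i$ in place of the normalised cluster indicators. First I would expand the left-hand side using the definition of the embedding \eqref{eq:embeddingell}: for $u \in \sets_i$,
\[
    \deg(u)\norm{F(u) - \peye}^2 = \deg(u) \sum_{j=1}^{\ell} \left(\frac{\vecf_j(u)}{\sqrt{\deg(u)}} - \peye(j)\right)^2 = \sum_{j=1}^{\ell}\left(\vecf_j(u) - \sqrt{\deg(u)}\cdot\peye(j)\right)^2 .
\]
The whole proof then reduces to the single identity $\sqrt{\deg(u)}\cdot\peye(j) = \hatg_j(u)$ for every $u \in \sets_i$, after which summing over $j$ and over $u$ collapses everything to $\sum_{j=1}^{\ell}\norm{\vecf_j - \hatg_j}^2$.

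To establish that identity, I would use the pointwise formula for the blow-up vectors recorded just before Lemma~\ref{lem:barg_ortho}: for $u \in \sets_i$ we have $\barg_x(u) = \sqrt{\deg(u)/\vol(\sets_i)}\cdot\vecg_x(i)$, equivalently $\sqrt{\deg(u)}\cdot\tfrac{1}{\sqrt{\vol(\sets_i)}}\,\vecg_x(i) = \barg_x(u)$. Plugging the definition $\peye(j) = \tfrac{1}{\sqrt{\vol(\sets_i)}}\sum_{x=1}^{\ell}\langle \vecf_j, \barg_x\rangle \vecg_x(i)$ into $\sqrt{\deg(u)}\cdot\peye(j)$ and moving the $\sqrt{\deg(u)}$ inside the sum gives $\sum_{x=1}^{\ell}\langle \vecf_j, \barg_x\rangle\barg_x(u) = \hatg_j(u)$ by the definition of $\hatg_j$ in Section~\ref{sec:embedding+}. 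The one point requiring care here — and the only place the argument is not completely mechanical — is keeping track of which cluster $u$ belongs to: the index $i$ in $\peye$ must be the same cluster whose membership we use when evaluating $\barg_x(u)$, so the computation must be carried out cluster by cluster rather than globally.

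Finally I would assemble the bound:
\begin{align*}
    \sum_{i=1}^k \sum_{u \in \sets_i} \deg(u)\norm{F(u) - \peye}^2
    &= \sum_{i=1}^k \sum_{u \in \sets_i} \sum_{j=1}^{\ell}\left(\vecf_j(u) - \hatg_j(u)\right)^2 \\
    &= \sum_{j=1}^{\ell} \sum_{u \in \vertexsetg}\left(\vecf_j(u) - \hatg_j(u)\right)^2
    = \sum_{j=1}^{\ell}\norm{\vecf_j - \hatg_j}^2 \leq \Psi(\ell),
\end{align*}
where the second equality uses that $\{\sets_i\}_{i=1}^k$ partitions $\vertexsetg$, and the final inequality is exactly the second statement of the Structure Theorem with Meta-Graphs (Theorem~\ref{thm:structure++}). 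I do not anticipate a genuine obstacle; the content is entirely in the bookkeeping of the blow-up identity and in invoking Theorem~\ref{thm:structure++}, so the proof should be short.
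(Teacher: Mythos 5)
Your proposal is correct and follows essentially the same route as the paper's proof: expand the embedding, use the blow-up identity $\barg_x(u) = \sqrt{\deg(u)/\vol(\sets_i)}\,\vecg_x(i)$ to recognise $\sqrt{\deg(u)}\,\peye(j) = \hatg_j(u)$, and then sum and apply the second statement of Theorem~\ref{thm:structure++}. No gaps.
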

\begin{proof}
By definition, it holds that 
    \begin{align*}
      \lefteqn{  \sum_{i = 1}^k \sum_{u \in S_i} \deg(u) \norm{F(u) - \peye}^2 }\\
      & = \sum_{i = 1}^k \sum_{u \in \sets_i} \deg(u) \left[ \sum_{j = 1}^{\ell} \left( \frac{\vecf_j(u)}{\sqrt{\deg(u)}} - \left(\sum_{x = 1}^{\ell} \langle \vecf_j, \barg_x \rangle \frac{\vecg_x(i)}{\sqrt{\vol(\sets_i)}}  \right) \right)^2 \right] \\
        & = \sum_{i = 1}^k \sum_{u \in \sets_i} \sum_{j = 1}^{\ell} \left( \vecf_j(u) - \left(\sum_{x = 1}^{\ell} \langle \vecf_j, \barg_x \rangle \barg_x(u) \right) \right)^2 \\
        & = \sum_{i = 1}^k \sum_{u \in \sets_i} \sum_{j = 1}^{\ell} \left( \vecf_j(u) - \hatg_j(u) \right)^2 \\
        & = \sum_{j = 1}^{\ell} \norm{\vecf_j - \hatg_j}^2 \leq \Psi(\ell),
    \end{align*}
    where the final inequality follows from the second statement of Theorem~\ref{thm:structure++}.
\end{proof}

We now prove a sequence of lemmas establishing that the distance between different $\peye$ and $\pj$ can be bounded with respect to $\theta$ and $\Psi(\ell)$.

\begin{lemma}\label{lem:p_norm}
It holds for  $i \in [k]$ that 
    \[
       \left(1 -  \frac{4 \sqrt{\Psi(\ell)}}{\theta}\right) \frac{\norm{\barx^{(i)}}^2}{\vol(\sets_i)}  \leq \norm{\vecp^{(i)}}^2 \leq  \frac{\norm{\barx^{(i)}}^2}{\vol(\sets_i)} \left(1 +  \frac{2 \sqrt{\Psi(\ell)}}{\theta}\right).
    \]
\end{lemma}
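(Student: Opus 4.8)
The plan is to reduce the statement to a fact about the $\ell\times\ell$ matrix of inner products between the bottom eigenvectors of $\lapn_\graphg$ and the ``blown-up'' vectors $\barg_x$, and then feed in the first part of Theorem~\ref{thm:structure++}. Write $C\in\R^{\ell\times\ell}$ for the matrix with $C_{jx}=\langle\vecf_j,\barg_x\rangle$, and recall $\barx^{(i)}=(\vecg_1(i),\ldots,\vecg_\ell(i))^\transpose$. Unwinding the definition of $\peye$ exactly as in the proof of Lemma~\ref{lem:cost_bound} (and being careful that the inner sum over $x$ runs only to $\ell$, so it matches the $\hatf_x$ of the structure theorem) gives $\vol(\sets_i)\,\norm{\peye}^2=\norm{C\barx^{(i)}}^2=(\barx^{(i)})^\transpose C^\transpose C\,\barx^{(i)}$. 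So the whole statement is about how close $C^\transpose C$ is to $\identity$, tested against the single vector $\barx^{(i)}$.

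First I would show that $E\triangleq\identity-C^\transpose C$ is positive semi-definite with small trace. Since $\{\vecf_1,\ldots,\vecf_n\}$ is an orthonormal basis of $\R^n$ and $\langle\barg_x,\barg_{x'}\rangle=\delta_{xx'}$ by Lemma~\ref{lem:barg_ortho}, Parseval gives $\delta_{xx'}=\sum_{j=1}^n\langle\vecf_j,\barg_x\rangle\langle\vecf_j,\barg_{x'}\rangle$, hence $E_{xx'}=\sum_{j=\ell+1}^n\langle\vecf_j,\barg_x\rangle\langle\vecf_j,\barg_{x'}\rangle$; this is a Gram matrix, so $E$ is PSD, and
\[
    \tr(E)=\sum_{x=1}^\ell\sum_{j=\ell+1}^n\langle\barg_x,\vecf_j\rangle^2=\sum_{x=1}^\ell\norm{\barg_x-\hatf_x}^2\leq\sum_{x=1}^\ell\frac{\gamma_x}{\lambda_{\ell+1}}=\Psi(\ell),
\]
using that $\hatf_x$ is the orthogonal projection of $\barg_x$ onto $\mathrm{span}(\vecf_1,\ldots,\vecf_\ell)$ together with the first statement of Theorem~\ref{thm:structure++}. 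Because $E$ is PSD, its largest eigenvalue is at most $\tr(E)\leq\Psi(\ell)$, so $0\leq(\barx^{(i)})^\transpose E\,\barx^{(i)}\leq\Psi(\ell)\norm{\barx^{(i)}}^2$, and therefore $(1-\Psi(\ell))\norm{\barx^{(i)}}^2\leq\norm{C\barx^{(i)}}^2=\norm{\barx^{(i)}}^2-(\barx^{(i)})^\transpose E\,\barx^{(i)}\leq\norm{\barx^{(i)}}^2$. Dividing by $\vol(\sets_i)$ sandwiches $\norm{\peye}^2$ between $(1-\Psi(\ell))\norm{\barx^{(i)}}^2/\vol(\sets_i)$ and $\norm{\barx^{(i)}}^2/\vol(\sets_i)$.

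It remains to put this into the stated form. The upper bound is immediate, since $1\leq 1+2\sqrt{\Psi(\ell)}/\theta$. For the lower bound, note that $\{\vecg_x\}_{x=1}^k$ is an orthonormal basis of $\R^k$, so $\theta\leq\norm{\barx^{(i)}}^2\leq\sum_{x=1}^k\vecg_x(i)^2=1$ by $(\theta,\ell)$-distinguishability; hence if $4\sqrt{\Psi(\ell)}/\theta\geq 1$ the claimed lower bound is non-positive and holds trivially, while if $4\sqrt{\Psi(\ell)}/\theta<1$ then $\Psi(\ell)<\theta^2/16\leq\theta\sqrt{\Psi(\ell)}/4$, giving $1-\Psi(\ell)\geq 1-4\sqrt{\Psi(\ell)}/\theta$.

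I expect the only real obstacle to be the bookkeeping in the first two steps: correctly identifying $\vol(\sets_i)\norm{\peye}^2$ with $\norm{C\barx^{(i)}}^2$, and matching the index ranges so that the trace of $E$ comes out exactly as the quantity $\Psi(\ell)$ that the structure theorem supplies. The spectral part is then routine, and in fact this route yields a somewhat stronger bound than the one stated. If one prefers to mirror the companion Lemma~\ref{lem:pnorm} rather than use the operator-norm shortcut, one can instead expand $\norm{C\barx^{(i)}}^2=\sum_{x,x'}\vecg_x(i)\vecg_{x'}(i)\langle\hatf_x,\hatf_{x'}\rangle$ and bound the off-diagonal terms via Cauchy--Schwarz; this is where the explicit $1/\theta$ factor in the statement naturally appears, at the cost of a messier computation.
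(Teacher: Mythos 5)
Your proof is correct, but it takes a genuinely different route from the paper. The paper expands $\vol(\sets_i)\norm{\peye}^2$ into a diagonal part plus cross terms $\sum_{y\neq z}\vecg_y(i)\vecg_z(i)\,\hatf_y^\transpose\hatf_z$, bounds the cross terms by $\sqrt{\Psi(\ell)}$ via a termwise expansion of $\hatf_y^\transpose\hatf_z$ and two applications of Cauchy--Schwarz, and then converts the additive errors $\pm\sqrt{\Psi(\ell)}$ (together with $\norm{\hatf_y}^2\geq 1-\Psi(\ell)$ and $\theta\leq\norm{\barx^{(i)}}^2\leq 1$) into the stated multiplicative form with the $1/\theta$ factors. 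You instead identify $\vol(\sets_i)\norm{\peye}^2$ with $\norm{C\barx^{(i)}}^2$ and observe that $E=\identity-C^\transpose C$ is a Gram matrix (hence PSD) whose trace is exactly $\sum_{x=1}^{\ell}\norm{\barg_x-\hatf_x}^2\leq\Psi(\ell)$, so its operator norm is at most $\Psi(\ell)$. This global spectral argument yields the cleaner and strictly stronger sandwich $(1-\Psi(\ell))\norm{\barx^{(i)}}^2\leq\vol(\sets_i)\norm{\peye}^2\leq\norm{\barx^{(i)}}^2$, which you then weaken to the stated bounds; it also sidesteps the paper's implicit use of $\Psi(\ell)<1$, since your case analysis handles the degenerate regime where the stated lower bound is non-positive. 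The paper's messier computation buys nothing extra here --- the $1/\theta$ factors in the statement are an artefact of its method, as your remark at the end correctly anticipates.

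One small slip: in the final case $4\sqrt{\Psi(\ell)}/\theta<1$ you write $\Psi(\ell)<\theta^2/16\leq\theta\sqrt{\Psi(\ell)}/4$, but the second inequality is equivalent to $\sqrt{\Psi(\ell)}\geq\theta/4$, which contradicts the case hypothesis. The fix is immediate and uses the same hypothesis: from $\sqrt{\Psi(\ell)}<\theta/4$ you get $\Psi(\ell)=\sqrt{\Psi(\ell)}\cdot\sqrt{\Psi(\ell)}<\tfrac{\theta}{4}\sqrt{\Psi(\ell)}\leq\tfrac{4}{\theta}\sqrt{\Psi(\ell)}$ (using $\theta\leq 1$), hence $1-\Psi(\ell)\geq 1-4\sqrt{\Psi(\ell)}/\theta$ as required. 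With that correction the argument is complete.
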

\begin{proof}
It holds by definition that 
    \begingroup
    \allowdisplaybreaks
    \begin{align}
        \lefteqn{\vol(\sets_i)\cdot  \norm{\vecp^{(i)}}^2 } \nonumber \\
        & = \sum_{x = 1}^{\ell} \left( \sum_{y = 1}^{\ell } \langle \vecf_x, \barg_y \rangle \vecg_y(i)\right)^2  \nonumber \\
        & = \sum_{x = 1}^{\ell} \sum_{y = 1}^{\ell} \sum_{z = 1}^{\ell} \langle \vecf_x, \barg_y\rangle \langle \vecf_x, \barg_z \rangle   \vecg_y(i) \vecg_z(i) \nonumber \\
        & = \sum_{x = 1}^{\ell} \sum_{y = 1}^{\ell} \langle \vecf_x, \barg_y\rangle^2 \vecg_y(i)^2 + \sum_{x = 1}^{\ell} \sum_{y = 1}^{\ell} \sum_{\substack{z = 1\\z \neq y}}^{\ell} \langle \vecf_x, \barg_y\rangle \langle \vecf_x, \barg_z\rangle \vecg_y(i) \vecg_z(i) \nonumber \\
         & = \sum_{x = 1}^{\ell} \sum_{y = 1}^{\ell} \langle \vecf_x, \barg_y\rangle^2 \vecg_y(i)^2 + \sum_{y = 1}^{\ell} \sum_{\substack{z = 1\\z \neq y}}^{\ell}  \vecg_y(i) \vecg_z(i) \cdot \left( \hatf_y^\transpose \hatf_z\right).  \label{eq:separate}
    \end{align}
    \endgroup
    We study the two terms of \eqref{eq:separate} separately. For the second term, we have that 
    \begingroup
    \allowdisplaybreaks
    \begin{align*}
        \lefteqn{\sum_{y = 1}^{\ell} \sum_{\substack{z = 1\\z \neq y}}^{\ell}  \vecg_y(i) \vecg_z(i) \cdot \left( \hatf_y^\transpose \hatf_z\right)} \\
        & \leq \sum_{y = 1}^{\ell} \abs{\vecg_y(i)} \sum_{\substack{z = 1 \\ z \neq y}}^{\ell} \abs{\vecg_z(i)} \abs{\hatf_y^\transpose \hatf_z} \\
        & = \sum_{y = 1}^{\ell} \abs{\vecg_y(i)} \sum_{\substack{z = 1 \\ z \neq y}}^{\ell} \abs{\vecg_z(i)} \abs{ \left(\barg_y + \hatf_y - \barg_y\right)^\transpose \left(\barg_z + \hatf_z - \barg_z\right) } \\
        & = \sum_{y = 1}^{\ell} \abs{\vecg_y(i)} \sum_{\substack{z = 1 \\ z \neq y}}^{\ell} \abs{\vecg_z(i)} \abs{ \langle \hatf_y - \barg_y, \barg_z \rangle + \langle \hatf_z - \barg_z, \barg_y \rangle + \langle \hatf_y - \barg_y, \hatf_z - \barg_z \rangle } \\
        & =  \sum_{y = 1}^{\ell} \abs{\vecg_y(i)} \sum_{\substack{z = 1 \\ z \neq y}}^{\ell} \abs{\vecg_z(i)} \abs{ \langle \hatf_y - \barg_y, \barg_z \rangle } \\
        & \leq  \sqrt{\left(\sum_{y = 1}^{\ell} \vecg_y(i)^2\right) \sum_{y = 1}^{\ell} \left(\sum_{\substack{1\leq z\leq \ell \\ z \neq y}} \abs{\vecg_z(i)} \abs{ \langle \hatf_y - \barg_y, \barg_z \rangle }\right)^2 } \\
        & \leq   \sqrt{\sum_{y = 1}^{\ell} \left(\sum_{\substack{1\leq z\leq \ell \\ z \neq y}} \vecg_z(i)^2 \right) \left( \sum_{\substack{1\leq  z\leq \ell  \\ z \neq y}} \langle \hatf_y - \barg_y, \barg_z \rangle^2 \right) } \\
        & \leq   \sqrt{\sum_{y = 1}^{\ell} \sum_{\substack{1\leq z\leq \ell \\ z \neq y}} \langle \hatf_y - \barg_y, \barg_z \rangle^2 } \\
        & \leq  \sqrt{\sum_{y = 1}^{\ell} \norm{\hatf_y - \barg_y}^2 }  \leq   \sqrt{\Psi(\ell)},
    \end{align*}
    \endgroup
    where we used the fact that $\sum_{y = 1}^k \vecg_y(i)^2 = 1$ for all $i\in[k]$. Therefore, we have that
    \begin{align*}
        \vol(\sets_i) \norm{\vecp^{(i)}}^2 & \leq \sum_{y = 1}^{\ell} \left(\sum_{x = 1}^{\ell} \langle \vecf_x, \barg_y\rangle^2 \right) \vecg_y(i)^2 +   \sqrt{\Psi (\ell)} \\
        & \leq \sum_{y = 1}^{\ell} \vecg_y(i)^2 +   \sqrt{\Psi (\ell)} \\
        & \leq \norm{\barx^{(i)}}^2 + \sqrt{\Psi(\ell)} \leq \norm{\barx^{(i)}}^2 \left(1 + \frac{2 \sqrt{\Psi(\ell)}}{\theta}\right).
    \end{align*}

    On the other hand, we have that 
    \begin{align*}
        \vol(\sets_i) \norm{\vecp^{(i)}}^2 & \geq \sum_{y = 1}^{\ell} \left(\sum_{x = 1}^{\ell} \langle \vecf_x, \barg_y\rangle^2 \right) \vecg_y(i)^2 - 2 \sqrt{\Psi(\ell)} \\
        & \geq \sum_{y = 1}^{\ell} \norm{\hatf_y}^2 \vecg_y(i)^2 - 2 \sqrt{\Psi(\ell)} \\
        & \geq \left(1 - \Psi(\ell)\right) \norm{\barx^{(i)}}^2 - 2 \sqrt{\Psi(\ell)}\\
        & \geq \norm{\barx^{(i)}}^2 \left(1 - \frac{4 \sqrt{\Psi(\ell)}}{\theta}\right), 
    \end{align*}
    where the last inequality holds by the fact that  $\theta \leq \left\|\barx^{(i)} \right\|^2 \leq 1$ and $\Psi(\ell)<1$.   Hence, the statement holds.
\end{proof}

\begin{lemma} \label{lem:pnorm_diff}
    It holds for any different $i,j\in[k]$ that 
    \[
        \norm{\frac{\sqrt{\vol(\sets_i)} }{\norm{\barx^{(i)}}}\cdot \vecp^{(i)} - \frac{\sqrt{\vol(\sets_j)} }{\norm{\barx^{(j)}}} \cdot \vecp^{(j)} }^2 \geq \theta - 3 \sqrt{\Psi(\ell)}.
    \]
\end{lemma}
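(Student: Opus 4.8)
The plan is to recognise the left-hand side as the squared norm of a fixed linear image of the unit-normalised meta-graph embedding, and then to show that this linear map is almost a norm-isometry, with the distortion controlled by $\Psi(\ell)$ through the first part of Theorem~\ref{thm:structure++}. First I would set $v \triangleq \frac{\barx^{(i)}}{\norm{\barx^{(i)}}} - \frac{\barx^{(j)}}{\norm{\barx^{(j)}}} \in \R^{\ell}$. Unwinding the definition of $\vecp^{(i)}$, the $m$-th coordinate of $\frac{\sqrt{\vol(\sets_i)}}{\norm{\barx^{(i)}}}\vecp^{(i)} - \frac{\sqrt{\vol(\sets_j)}}{\norm{\barx^{(j)}}}\vecp^{(j)}$ equals $\sum_{x=1}^{\ell}\inner{\vecf_m}{\barg_x}\,v(x) = \inner{\vecf_m}{\vecc}$, where $\vecc \triangleq \sum_{x=1}^{\ell} v(x)\hatf_x \in \R^{n}$; here I use that $\inner{\vecf_m}{\barg_x} = \inner{\vecf_m}{\hatf_x}$ for $m\le\ell$, because $\hatf_x$ is the orthogonal projection of $\barg_x$ onto $\mathrm{span}(\vecf_1,\dots,\vecf_\ell)$. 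Since $\vecc$ lies in that span and $\vecf_1,\dots,\vecf_\ell$ are orthonormal, these coordinates are exactly the coordinate representation of $\vecc$, so the left-hand side of the lemma equals $\norm{\vecc}^2$, and it suffices to prove $\norm{\vecc}^2 \ge \theta - 3\sqrt{\Psi(\ell)}$.

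Next I would use the orthogonal splitting $\sum_{x=1}^{\ell} v(x)\barg_x = \vecc + \vec{d}$ with $\vec{d} \triangleq \sum_{x=1}^{\ell} v(x)(\barg_x - \hatf_x)$; the two pieces are orthogonal since $\vec{d}$ is perpendicular to $\mathrm{span}(\vecf_1,\dots,\vecf_\ell) \ni \vecc$. By Lemma~\ref{lem:barg_ortho} the $\barg_x$ are orthonormal, so $\norm{\sum_x v(x)\barg_x}^2 = \norm{v}^2$, and Pythagoras gives $\norm{\vecc}^2 = \norm{v}^2 - \norm{\vec{d}}^2$. I bound the error term by the triangle inequality and Cauchy--Schwarz: $\norm{\vec{d}} \le \sum_{x=1}^{\ell}\abs{v(x)}\norm{\barg_x - \hatf_x} \le \norm{v}\left(\sum_{x=1}^{\ell}\norm{\barg_x - \hatf_x}^2\right)^{1/2} \le \norm{v}\sqrt{\Psi(\ell)}$, where the last step uses $\norm{\barg_x - \hatf_x}^2 \le \gamma_x/\lambda_{\ell+1}$ from Theorem~\ref{thm:structure++} together with the definition of $\Psi(\ell)$. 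Hence $\norm{\vecc}^2 \ge \norm{v}^2\left(1 - \Psi(\ell)\right)$.

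Finally, $(\theta,\ell)$-distinguishability of $\graphm$ gives $\norm{v}^2 \ge \theta$, and $\theta \le 1$ because $\norm{\barx^{(i)}}^2 = \sum_{x=1}^{\ell}\vecg_x(i)^2 \le \sum_{x=1}^{k}\vecg_x(i)^2 = 1$. If $\Psi(\ell) \ge 1$ the claimed bound is nonpositive and holds trivially; otherwise $\norm{\vecc}^2 \ge \theta\left(1 - \Psi(\ell)\right) \ge \theta - \Psi(\ell) \ge \theta - \sqrt{\Psi(\ell)} \ge \theta - 3\sqrt{\Psi(\ell)}$, which in fact beats the stated bound. I expect the only genuinely delicate points to be pinning down the coordinate identity in the first step — in particular the replacement of $\barg_x$ by $\hatf_x$, which hinges on $\hatf_x$ being the projection onto the span of the first $\ell$ eigenvectors — and confirming the orthogonality $\vec{d}\perp\vecc$; everything else is routine. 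A more computational alternative, in the spirit of the separation lemmas of Chapter~3, would be to control $\norm{\vecp^{(i)}}$ via Lemma~\ref{lem:p_norm} and $\inner{\vecp^{(i)}}{\vecp^{(j)}}$ via the estimate inside its proof and then expand the squared distance; this also works, but loses a constant, so I would prefer the isometry argument above.
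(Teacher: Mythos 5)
Your proof is correct, and it takes a different route from the paper's. The paper expands the squared norm coordinate-wise into a diagonal part $\sum_{t,y}\langle \vecf_t,\barg_y\rangle^2 v(y)^2$ plus cross terms $\sum_{y\neq z}v(y)v(z)\,\hatf_y^\transpose\hatf_z$, lower-bounds the diagonal part by $(1-\Psi(\ell))\theta$ via $\norm{\hatf_y}^2\geq 1-\Psi(\ell)$, and controls the cross terms by $2\sqrt{\Psi(\ell)}$ through a nested Cauchy--Schwarz computation of the same flavour as in the proof of Lemma~\ref{lem:p_norm}, arriving at $(1-\Psi(\ell))\theta-2\sqrt{\Psi(\ell)}\geq\theta-3\sqrt{\Psi(\ell)}$. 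You instead observe that the left-hand side is exactly $\norm{\vecc}^2$ for $\vecc=\sum_x v(x)\hatf_x$ --- your coordinate identity is sound, since $\langle\vecf_m,\barg_x\rangle=\langle\vecf_m,\hatf_x\rangle$ for $m\leq\ell$ and $\vecf_1,\dots,\vecf_\ell$ are orthonormal, so the coordinates of the displayed vector are precisely the coordinates of $\vecc$ in that basis --- and then use the orthogonal splitting $\sum_x v(x)\barg_x=\vecc+\vec{d}$ with $\vec{d}\perp\mathrm{span}(\vecf_1,\dots,\vecf_\ell)$, Pythagoras, orthonormality of the $\barg_x$ (Lemma~\ref{lem:barg_ortho}), and a single Cauchy--Schwarz step giving $\norm{\vec{d}}^2\leq\norm{v}^2\Psi(\ell)$. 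Both arguments rest on the same two ingredients (the first statement of Theorem~\ref{thm:structure++} and $(\theta,\ell)$-distinguishability), but your projection decomposition avoids the cross-term machinery entirely and yields the sharper bound $\theta-\sqrt{\Psi(\ell)}$ (with the trivial case $\Psi(\ell)\geq 1$ handled separately), which would propagate marginally better constants into Lemmas~\ref{lem:norm_phat_diff} and~\ref{lem:pij_distance}; the paper's computation, while heavier, keeps the proof stylistically parallel to Lemma~\ref{lem:p_norm}, where an analogous cross-term bound is unavoidable.
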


\begin{proof}
By definition, it holds that 
    \begin{alignat*}{2}
    \lefteqn{\norm{\frac{\sqrt{\vol(\sets_i)} }{\norm{\barxi}}\cdot \peye - \frac{\sqrt{\vol(\sets_j)} }{\norm{\barxj}}\cdot \pj }^2 }\\
    & =  \sum_{t = 1}^{\ell} \left(\sum_{y = 1}^{\ell} \langle \vecf_t, \barg_y \rangle \left(\frac{\vecg_y(i)}{\norm{\barxi}} - \frac{\vecg_y(j)}{\norm{\barxj}}\right)\right)^2 \\
    & =  \sum_{t = 1}^{\ell} \sum_{y = 1}^{\ell} \langle \vecf_t, \barg_y\rangle^2 \left(\frac{\vecg_y(i)}{\norm{\barxi}} - \frac{\vecg_y(j)}{\norm{\barxj}} \right)^2\\
    & \qquad\qquad  +  \sum_{t = 1}^{\ell} \sum_{y = 1}^{\ell} \sum_{\substack{1\leq z \leq \ell \\ z \neq y}} \langle \vecf_t, \barg_y\rangle \langle \vecf_t, \barg_z \rangle \left(\frac{\vecg_y(i)}{\norm{\barxi}} - \frac{\vecg_y(j)}{\norm{\barxj}} \right) \left(\frac{\vecg_z(i)}{\norm{\barxi}} - \frac{\vecg_z(j)}{\norm{\barxj}} \right).
    \end{alignat*}
    We upper bound  the second term by 
    \newcommand{\gnormij}[1]{\frac{\vecg_{#1}(i)}{\norm{\barx^{(i)}}} - \frac{\vecg_{#1}(j)}{\norm{\barx^{(j)}}}}
    \begingroup
    \allowdisplaybreaks
    \begin{align*}
        & \sum_{y = 1}^{\ell} \abs{\frac{\vecg_y(i)}{\norm{\barx^{(i)}}} - \frac{\vecg_y(j)}{\norm{\barx^{(j)}}}} \sum_{\substack{1\leq z \leq \ell \\ z \neq y}} \abs{\frac{\vecg_z(i)}{\norm{\barx^{(i)}}} - \frac{\vecg_z(j)}{\norm{\barx^{(j)}}}} \sum_{t = 1}^{\ell } \abs{\langle \vecf_t, \barg_y \rangle \langle \vecf_t, \barg_x \rangle} \\
        = & \sum_{y = 1}^{\ell} \abs{\frac{\vecg_y(i)}{\norm{\barx^{(i)}}} - \frac{\vecg_y(j)}{\norm{\barx^{(j)}}}} \sum_{\substack{1\leq z\leq \ell  \\ z \neq y}} \abs{\frac{\vecg_z(i)}{\norm{\barx^{(i)}}} - \frac{\vecg_z(j)}{\norm{\barx^{(j)}}}} \abs{\hatf_y^\transpose \hatf_z} \\
        = & \sum_{y = 1}^{\ell} \abs{\frac{\vecg_y(i)}{\norm{\barx^{(i)}}} - \frac{\vecg_y(j)}{\norm{\barx^{(j)}}}} \sum_{\substack{1\leq z \leq \ell \\ z \neq y}} \abs{\frac{\vecg_z(i)}{\norm{\barx^{(i)}}} - \frac{\vecg_z(j)}{\norm{\barx^{(j)}}}} \abs{ \left(\barg_y + \hatf_y - \barg_y\right)^\transpose \left(\barg_z + \hatf_z - \barg_z\right) } \\
        % = & \sum_{y = 1}^l \abs{g_y(i)} \sum_{\substack{z = 1 \\ z \neq y}}^l \abs{g_z(i)} \abs{ \langle \hatf_y - \barg_y, \barg_z \rangle + \langle \hatf_z - \barg_z, \barg_y \rangle + \langle \hatf_y - \barg_y, \hatf_z - \barg_z \rangle }
       = &   \sum_{y = 1}^{\ell} \abs{\frac{\vecg_y(i)}{\norm{\barx^{(i)}}} - \frac{\vecg_y(j)}{\norm{\barx^{(j)}}}} \sum_{\substack{1\leq z \leq \ell \\ z \neq y}} \abs{\frac{\vecg_z(i)}{\norm{\barx^{(i)}}} - \frac{\vecg_z(j)}{\norm{\barx^{(j)}}}} \abs{ \langle \hatf_y - \barg_y, \barg_z \rangle } \\
         \leq &   \sqrt{\left(\sum_{y = 1}^{\ell} \left(\gnormij{y}\right)^2\right) \sum_{y = 1}^{\ell} \left(\sum_{\substack{1\leq z \leq \ell \\ z \neq y}} \abs{\gnormij{z}} \abs{ \langle \hatf_y - \barg_y, \barg_z \rangle }\right)^2 } \\
         \leq &  \sqrt{2 \sum_{y = 1}^{\ell} \left(\sum_{\substack{1\leq z \leq \ell \\ z \neq y}} \left(\gnormij{z}\right)^2 \right) \left( \sum_{\substack{1\leq z \leq \ell \\ z \neq y}} \langle \hatf_y - \barg_y, \barg_z \rangle^2 \right) } \\
         \leq &  2 \sqrt{\sum_{y = 1}^{\ell} \sum_{\substack{1\leq z \leq \ell \\ z \neq y}} \langle \hatf_y - \barg_y, \barg_z \rangle^2 } \\
          \leq &  2 \sqrt{\sum_{y = 1}^{\ell} \norm{\hatf_y - \barg_y}^2 } \\
         \leq &  2 \sqrt{\Psi (\ell)},
    \end{align*}
    \endgroup
    from which we   conclude that
    \begin{align*}
    \norm{\frac{\sqrt{\vol(\sets_i)} }{\norm{\barxi}}\cdot \peye - \frac{\sqrt{\vol(\sets_j)} }{\norm{\barxj}}\cdot \pj}^2 & \geq  \left(1 - \Psi(\ell) \right) \theta - 2 \sqrt{\Psi(\ell)} \\
    & \geq  \theta - 3 \sqrt{\Psi(\ell)}. \qedhere
    \end{align*}  
\end{proof}

\begin{lemma}\label{lem:norm_phat_diff}
It holds for any different  $i,j\in[k]$ that  
    \[
        \norm{\frac{\peye}{\norm{\peye}} - \frac{\pj}{\norm{\pj}}}^2 \geq \frac{\theta}{4} - 8 \sqrt{\frac{\Psi}{\theta}}.
    \]
\end{lemma}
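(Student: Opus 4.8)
The plan is to mirror the proof of Lemma~\ref{lem:normp_diff2}, with the extra care demanded by the facts that the approximate centres now live in $\R^{\ell}$ and that $\norm{\barx^{(i)}}$ may be strictly smaller than $1$. For each $i \in [k]$ set $\veca_i \triangleq \big(\sqrt{\vol(\sets_i)}/\norm{\barx^{(i)}}\big)\, \peye$, which is a positive scalar multiple of $\peye$, so that $\peye/\norm{\peye} = \veca_i/\norm{\veca_i}$ and the claim reduces to a lower bound on $\norm{\veca_i/\norm{\veca_i} - \veca_j/\norm{\veca_j}}^2$. By Lemma~\ref{lem:p_norm} we have $1 - 4\sqrt{\Psi}/\theta \leq \norm{\veca_i}^2 \leq 1 + 2\sqrt{\Psi}/\theta$ for every $i$, and by Lemma~\ref{lem:pnorm_diff} we have $\norm{\veca_i - \veca_j}^2 \geq \theta - 3\sqrt{\Psi}$ for all $i \neq j$; these two facts are the only inputs. (If $\theta - 3\sqrt{\Psi} < 0$, i.e. $\Psi$ is large relative to $\theta$, then the claimed bound $\theta/4 - 8\sqrt{\Psi/\theta}$ is already negative and the statement is trivial, so I may assume $\Psi$ is small throughout.)

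Assume without loss of generality that $\norm{\veca_i} \leq \norm{\veca_j}$. The algebraic identity
\[
\norm{\veca_i}\left(\frac{\veca_i}{\norm{\veca_i}} - \frac{\veca_j}{\norm{\veca_j}}\right) = (\veca_i - \veca_j) + \left(\norm{\veca_j} - \norm{\veca_i}\right)\frac{\veca_j}{\norm{\veca_j}}
\]
together with the triangle inequality gives $\norm{\veca_i/\norm{\veca_i} - \veca_j/\norm{\veca_j}} \geq \big(\norm{\veca_i - \veca_j} - (\norm{\veca_j} - \norm{\veca_i})\big)/\norm{\veca_i}$. I would then bound the three ingredients separately: $\norm{\veca_i - \veca_j} \geq \sqrt{\theta - 3\sqrt{\Psi}}$ from Lemma~\ref{lem:pnorm_diff}; the norm mismatch $\norm{\veca_j} - \norm{\veca_i} \leq \sqrt{1 + 2\sqrt{\Psi}/\theta} - \sqrt{1 - 4\sqrt{\Psi}/\theta} = O(\sqrt{\Psi}/\theta)$ using the elementary estimates $\sqrt{1+x} \leq 1 + x/2$ and $\sqrt{1-x} \geq 1-x$; and $1/\norm{\veca_i} \geq 1/\sqrt{1 + 2\sqrt{\Psi}/\theta} \geq 1 - \sqrt{\Psi}/\theta$. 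Substituting these in, using $\theta \leq 1$ (so that $\norm{\barx^{(i)}}^2 \in [\theta,1]$ and the various $\sqrt{\Psi}\cdot\sqrt{\theta}$ cross terms that appear upon squaring can be absorbed), and squaring, collapses everything into a bound of the form $\theta - O(\sqrt{\Psi/\theta})$, which in particular implies the stated $\theta/4 - 8\sqrt{\Psi/\theta}$ with room to spare.

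The chain of elementary inequalities is routine; the genuinely delicate point — and the reason this lemma is more involved than its counterpart Lemma~\ref{lem:normp_diff2} — is controlling the renormalization. There the approximate centres satisfied $\norm{\veca_i} \leq 1$ exactly, so passing to unit vectors could only help; here $\norm{\veca_i}$ can exceed $1$ (by $O(\sqrt{\Psi}/\theta)$), so I must verify that this excess, together with the norm mismatch $\norm{\veca_j} - \norm{\veca_i}$, costs only an additive $O(\sqrt{\Psi/\theta})$ rather than a multiplicative factor. This is exactly what ensures that the $\theta$-separation of the meta-graph embedding $\{\barx^{(i)}\}$ is inherited, up to constants, by the embedding $\{F(u)\}$ that uses only $\ell < k$ eigenvectors, which is the point of the section.
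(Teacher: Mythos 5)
Your proposal is correct and takes essentially the same route as the paper: both reduce to the scaled centres $\veca_i \propto \sqrt{\vol(\sets_i)}\,\peye/\norm{\barx^{(i)}}$, feed in Lemma~\ref{lem:p_norm} and Lemma~\ref{lem:pnorm_diff}, and use the same triangle-inequality split of $\veca_i - \frac{\norm{\veca_i}}{\norm{\veca_j}}\veca_j$ into $\norm{\veca_i - \veca_j}$ minus the norm mismatch, the only cosmetic difference being that the paper pre-scales by $1/(1+\epsilon)$ with $\epsilon = 4\sqrt{\Psi(\ell)}/\theta$ to force $\norm{\veca_i}\leq 1$, whereas you divide by $\norm{\veca_i}$ at the end. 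Like the paper's own proof, your final comparison implicitly needs $\Psi(\ell)$ small relative to $\theta^3$ (your carve-out $\theta < 3\sqrt{\Psi(\ell)}$ does not quite cover everything), but this is harmless since in the remaining regime the right-hand side of the lemma is nonpositive and the claim is trivial.
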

\begin{proof}
To follow the proof, it may help to refer to the illustration in Figure~\ref{fig:proof_fig}.
    \begin{figure}[t]
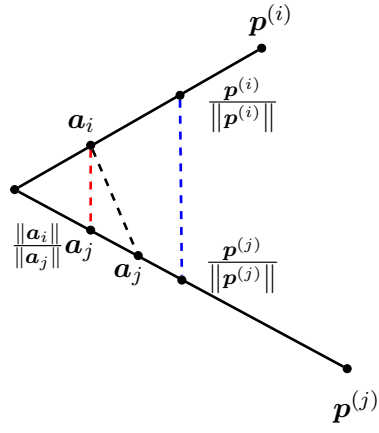

        \centering
        \tikzfig{figures/scBeyond/proof_fig}
        \caption[Illustration of the proof of Lemma~\ref{lem:norm_phat_diff}]{
        Illustration of the proof of Lemma~\ref{lem:norm_phat_diff}.
        Our goal is to give a lower bound on the length of $\left(\frac{\peye}{\norm{\peye}} - \frac{\pj}{\norm{\pj}}\right)$, which is the blue dashed line in the figure.
        We instead calculate a lower bound on the length of $\left( \veca_i - \frac{\norm{\veca_i}}{\norm{\veca_j}} \veca_j \right)$, which is the red dashed line, and use the fact that by construction $\norm{\veca_i} \leq 1$ and $\norm{\veca_j} \leq 1$.}
        \label{fig:proof_fig}
    \end{figure}
   
    We set the parameter $\epsilon = 4 \sqrt{\Psi} / \theta$, and define 
    \[
        \veca_i = \frac{\sqrt{\vol(\sets_i)} }{\left(1 + \epsilon\right) \norm{\barxi}}\cdot \peye, \qquad \hspace{3em} \veca_j = \frac{\sqrt{\vol(\sets_j)} }{\left(1 + \epsilon\right) \norm{\barxj}}\cdot \pj.
    \]
    By the definition of $\epsilon$ and  Lemma~\ref{lem:p_norm}, it holds that 
    $
        \norm{\veca_i} \leq 1$, and $
        \norm{\veca_j} \leq  1$.
    We can also assume without loss of generality that $\norm{\veca_i} \leq \norm{\veca_j}$.
    Then, as illustrated in Figure~\ref{fig:proof_fig}, we have
    \[
        \norm{\frac{\peye}{\norm{\peye}} - \frac{\pj}{\norm{\pj}}}^2 \geq \norm{\veca_i - \frac{\norm{\veca_i}}{\norm{\veca_j}} \veca_j}^2,
    \]
    and so it suffices to lower bound the right-hand side of the inequality above.
    By the triangle inequality, we have
    \begin{align*}
        \norm{\veca_i - \frac{\norm{\veca_i}}{\norm{\veca_j}} \veca_j} & \geq \norm{\veca_i - \veca_j} - \norm{\veca_j - \frac{\norm{\veca_i}}{\norm{\veca_j}} \veca_j} \\
        & = \frac{1}{1+\epsilon} \norm{\frac{\sqrt{\vol(\sets_i)} }{\norm{\barxi}}\cdot \peye - \frac{\sqrt{\vol(\sets_j)} }{\norm{\barxj}} \cdot \pj} - \left(\norm{\veca_j} - \norm{\veca_i}\right).
    \end{align*}
    Now, we have that 
    \begin{align*}
        \norm{\veca_j} - \norm{\veca_i} & = \frac{\sqrt{\vol(\sets_j)}}{(1 + \epsilon) \norm{\barxj}} \cdot \norm{\pj} - \frac{\sqrt{\vol(\sets_i)}}{(1 + \epsilon) \norm{\barxi}}\cdot   \norm{\peye} \\
        & \leq 1 - \frac{1 - \epsilon}{1 + \epsilon} = \frac{2 \epsilon}{1 + \epsilon},
    \end{align*}
    and have by Lemma~\ref{lem:pnorm_diff} that   
    \begin{align*}
        \norm{\frac{\sqrt{\vol(\sets_i)} }{\norm{\barxi}}\cdot \peye - \frac{\sqrt{\vol(\sets_j)} }{\norm{\barxj}} \cdot \pj} & \geq \sqrt{\theta - 3 \sqrt{\Psi(\ell)}} \geq \sqrt{\theta} - \sqrt{2 \epsilon} \geq \sqrt{\theta} - 2 \epsilon,
    \end{align*}
    since $\epsilon = 4 \sqrt{\Psi} / \theta < 1$ by the assumption on $\Psi$.
    This gives us
    \begin{align*}
        \norm{\veca_i - \frac{\norm{\veca_i}}{\norm{\veca_j}} \veca_j} & \geq \frac{\sqrt{\theta} - 2 \epsilon}{1 + \epsilon} - \frac{2\epsilon}{1 + \epsilon}  \geq \frac{1}{2} \left(\sqrt{\theta} - 4 \epsilon\right).
    \end{align*}
    Finally, we have that
    \begin{align*}
        \norm{\frac{\peye}{\norm{\peye}} - \frac{\pj}{\norm{\pj}}}^2 & \geq \frac{1}{4} \left(\sqrt{\theta} - 16 \frac{\sqrt{\Psi(\ell)}}{\theta}\right)^2 \geq \frac{\theta}{4} - 8 \sqrt{\frac{\Psi(\ell)}{\theta}},
    \end{align*}
    which completes the proof.
\end{proof}

\begin{lemma}\label{lem:pij_distance}
   It holds for different $i, j\in[k]$ that
    \[
        \norm{\peye - \pj}^2 \geq \frac{\theta^2 - 20 \sqrt{\theta\cdot \Psi(\ell)}}{16 \min\left\{\vol(\sets_i), \vol(\sets_j)\right\}}.
    \]
\end{lemma}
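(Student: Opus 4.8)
The plan is to follow the template of the proof of Lemma~\ref{lem:p_diff}, substituting the meta-graph estimates of Lemma~\ref{lem:p_norm} and Lemma~\ref{lem:norm_phat_diff} for the roles played there by Lemma~\ref{lem:pnorm} and Lemma~\ref{lem:normp_diff2}; the distinguishability hypothesis enters through the a priori bounds $\theta \le \norm{\barxi}^2 \le 1$, the upper bound holding because the rows of the orthogonal matrix with columns $\vecg_1,\ldots,\vecg_k$ are unit vectors. First I would dispose of two easy configurations: if $\theta^2 \le 20\sqrt{\theta\,\Psi(\ell)}$ the stated bound is trivial since the left-hand side is nonnegative, and if $\inner{\peye}{\pj} < 0$ then $\norm{\peye - \pj}^2 \ge \norm{\peye}^2 + \norm{\pj}^2$ and the claim follows immediately from the length estimate below. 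So assume $\inner{\peye}{\pj} \ge 0$ and, relabelling if necessary, $\norm{\peye} \ge \norm{\pj}$; set $\vecu = \peye/\norm{\peye}$ and $\vecv = \pj/\norm{\pj}$.

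Next I would reduce to a one-variable inequality. Writing $\norm{\pj} = \alpha \norm{\peye}$ with $\alpha \in [0,1]$,
\[
    \norm{\peye - \pj}^2 = \norm{\peye}^2\bigl(1 + \alpha^2 - 2\alpha\inner{\vecu}{\vecv}\bigr) \ge \norm{\peye}^2\bigl(1 - \inner{\vecu}{\vecv}^2\bigr),
\]
since the quadratic in $\alpha$ attains its minimum $1 - \inner{\vecu}{\vecv}^2$ at $\alpha = \inner{\vecu}{\vecv} \in [0,1]$. Then I would bound the two factors separately. For the angular factor, $\inner{\vecu}{\vecv} = 1 - \tfrac12\norm{\vecu - \vecv}^2$, so $1 - \inner{\vecu}{\vecv}^2 = \norm{\vecu - \vecv}^2\bigl(1 - \tfrac14\norm{\vecu - \vecv}^2\bigr)$; feeding in the bound $\norm{\vecu - \vecv}^2 \ge \tfrac{\theta}{4} - 8\sqrt{\Psi(\ell)/\theta}$ from Lemma~\ref{lem:norm_phat_diff}, together with $\norm{\vecu-\vecv}^2 \le 2$ (which holds here because $\inner{\peye}{\pj} \ge 0$) and $\theta \le 1$ to control the quadratic correction, gives $1 - \inner{\vecu}{\vecv}^2 \ge \tfrac{\theta}{8} - 4\sqrt{\Psi(\ell)/\theta}$. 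For the length factor, Lemma~\ref{lem:p_norm} and $\norm{\barxi}^2 \ge \theta$ give $\norm{\peye}^2 \ge \tfrac{\theta}{\vol(\sets_i)}\bigl(1 - 4\sqrt{\Psi(\ell)}/\theta\bigr)$ and similarly for $\pj$; since $\norm{\peye} \ge \norm{\pj}$, this forces $\norm{\peye}^2 \ge \tfrac{\theta}{\min\{\vol(\sets_i), \vol(\sets_j)\}}\bigl(1 - 4\sqrt{\Psi(\ell)}/\theta\bigr)$.

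Multiplying the two estimates, expanding, and discarding the positive cross-term produces a lower bound of the shape $\tfrac{1}{\min\{\vol(\sets_i), \vol(\sets_j)\}}\bigl(c_1 \theta^2 - c_2 \sqrt{\theta\,\Psi(\ell)}\bigr)$ for explicit absolute constants $c_1, c_2$; the remaining task is to use $\theta \le 1$ and the standing smallness assumption on $\Psi(\ell)$ (so that $\theta\sqrt{\Psi(\ell)} \le \sqrt{\theta\,\Psi(\ell)}$ and the residual $\Psi(\ell)/\sqrt{\theta}$ terms are dominated by $\sqrt{\theta\,\Psi(\ell)}$) to rewrite this as $\tfrac{\theta^2 - 20\sqrt{\theta\,\Psi(\ell)}}{16\min\{\vol(\sets_i), \vol(\sets_j)\}}$. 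I expect this constant-tracking to be the only real obstacle: because Lemma~\ref{lem:norm_phat_diff} already costs a factor of $\theta$, one must be frugal in the quadratic-minimisation and AM--GM steps to reach the stated constants. It is worth noting that routing the argument through the normalised separation of Lemma~\ref{lem:norm_phat_diff}, rather than through the unnormalised Lemma~\ref{lem:pnorm_diff}, is essential for the bound to survive when $\vol(\sets_i)$ and $\vol(\sets_j)$ are very different.
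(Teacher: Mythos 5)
Your high-level strategy is the same as the paper's: combine the length bound of Lemma~\ref{lem:p_norm} with the normalised separation of Lemma~\ref{lem:norm_phat_diff} and expand $\norm{\peye-\pj}^2$; your replacement of the paper's case analysis by minimising the quadratic in $\alpha=\norm{\pj}/\norm{\peye}$ (giving $\norm{\peye-\pj}^2\geq\norm{\peye}^2(1-\langle\vecu,\vecv\rangle^2)$) is sound as far as it goes. The genuine gap is exactly the step you defer as ``constant-tracking'': the constants do not come out, and no available assumption rescues them. Your chain gives $1-\langle\vecu,\vecv\rangle^2\geq \theta/8-4\sqrt{\Psi(\ell)/\theta}$ and $\norm{\peye}^2\geq(\theta-4\sqrt{\Psi(\ell)})/\min\{\vol(\sets_i),\vol(\sets_j)\}$, whose product is (after dropping nonnegative terms and using $\theta\leq1$) at least $\bigl(\theta^2/8-\tfrac{9}{2}\sqrt{\theta\Psi(\ell)}\bigr)/\min\{\vol(\sets_i),\vol(\sets_j)\}$. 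This dominates the claimed $\bigl(\theta^2-20\sqrt{\theta\Psi(\ell)}\bigr)/\bigl(16\min\{\vol(\sets_i),\vol(\sets_j)\}\bigr)$ only when $\theta^2\geq52\sqrt{\theta\Psi(\ell)}$, whereas your trivial-case disposal only covers $\theta^2\leq20\sqrt{\theta\Psi(\ell)}$. In the window in between the argument fails outright: for instance if $\sqrt{\theta\Psi(\ell)}=\theta^2/30$ your lower bound is $-0.025\,\theta^2<0$ while the claimed right-hand side is $\theta^2/48>0$. There is no ``standing smallness assumption on $\Psi(\ell)$'' to invoke: Lemma~\ref{lem:pij_distance} is stated unconditionally, and even where it is used, Lemma~\ref{lem:cost_lower_bound_2} assumes only $\Psi(\ell)\leq\theta^3/1600$, i.e.\ $\theta^2\geq40\sqrt{\theta\Psi(\ell)}$, still short of $52$ (sharpening your quadratic-correction step only lowers the threshold to roughly $42$, which still does not close the window).

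The ingredient of the paper's proof that you dropped, and which is doing the real work, is the case distinction on the norm ratio: when $\norm{\peye}\geq4\norm{\pj}$ the triangle inequality alone gives $\norm{\peye-\pj}^2\geq\tfrac{9}{16}\norm{\peye}^2$ with no angular error term at all, and in the complementary case $\alpha>1/4$ the $\sqrt{\Psi(\ell)/\theta}$ loss from the inner-product bound is multiplied by $\alpha$ and only then scaled down (via $\alpha>1/4$) before being multiplied by $\norm{\peye}^2\approx\theta/\min\{\vol(\sets_i),\vol(\sets_j)\}$; this factor-of-four reduction is what keeps the error coefficient within the budget of $20/16$ on $\sqrt{\theta\Psi(\ell)}$. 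Your uniform minimisation, by contrast, is evaluated at the worst case $\alpha\approx\langle\vecu,\vecv\rangle\approx1$ and then multiplies the full $8\sqrt{\Psi(\ell)/\theta}$ loss of Lemma~\ref{lem:norm_phat_diff} by the full length factor, producing an error of order $8\sqrt{\theta\Psi(\ell)}$ that the stated constant cannot absorb. To repair your route you would either need to reinstate a case split of this kind (so that the angular error is damped when the norms are comparable and avoided when they are not) or prove a sharper version of Lemma~\ref{lem:norm_phat_diff}; as sketched, the proposal proves a weaker inequality than the one claimed.
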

\begin{proof}
We assume without loss of generality that $\norm{\peye}^2 \geq \norm{\pj}^2$. Then, by Lemma~\ref{lem:p_norm} and the fact that $\|\barx^{(i)}\|^2\geq\theta$ holds for any $i\in[k]$, 
we have
    \[
        \norm{\peye}^2 \geq \left(1 - \frac{4 \sqrt{\Psi(\ell)}}{\theta}\right)\cdot \frac{\norm{\barxi}^2}{\vol(\sets_i)}
    \]
    and
    \[
        \norm{\pj}^2 \geq \left(1 - \frac{4 \sqrt{\Psi(\ell)}}{\theta}\right)\cdot \frac{\norm{\barxj}^2}{\vol(\sets_j)},
    \]
    which implies that
    \[
        \norm{\peye}^2 \geq \frac{\theta - 4 \sqrt{\Psi(\ell)}}{\min\left\{\vol(\sets_i), \vol(\sets_j)\right\}}.
    \]
    Now, we proceed by case distinction.
   
    Case 1: $\norm{\peye} \geq 4 \norm{\pj}$. In this case, we have
    \[
        \norm{\peye - \pj} \geq \norm{\peye} - \norm{\pj} \geq \frac{3}{4} \norm{\peye},
    \]
    and
    \begin{align*}
        \norm{\peye - \pj}^2 & \geq \frac{9}{16}\cdot \frac{\theta - 4 \sqrt{\Psi(\ell)}}{\min\left\{\vol(\sets_i), \vol(\sets_j)\right\}} \\
        & \geq \frac{\theta \left(\theta - 20 \sqrt{\Psi(\ell)/\theta}\right)}{16 \min\left\{\vol(\sets_i), \vol(\sets_j)\right\}} \\
        & = \frac{\theta^2 - 20 \sqrt{\theta\cdot  \Psi(\ell)}}{16 \min\left\{\vol(\sets_i), \vol(\sets_j)\right\}},
    \end{align*}
    since $\theta < 1$.  
   
    Case 2: $\norm{\pj} = \alpha \norm{\peye}$ for some $\alpha \in \left(\frac{1}{4}, 1\right]$.
    By Lemma~\ref{lem:norm_phat_diff}, we have
    \[
        \left\langle \frac{\peye}{\norm{\peye}}, \frac{\pj}{\norm{\pj}} \right\rangle \leq 1 - \frac{1}{2} \left(\frac{\theta}{4} - 8 \sqrt{\frac{\Psi}{\theta}}\right) \leq 1 - \frac{\theta}{8} + 2 \sqrt{\frac{\Psi}{\theta}}.
    \]
    Then, it holds that 
    \begingroup
    \allowdisplaybreaks
    \begin{align*}
        \lefteqn{\norm{\peye - \pj}^2}\\
        & = \norm{\peye}^2 + \norm{\pj}^2 - 2 \left\langle \frac{\peye}{\norm{\peye}}, \frac{\pj}{\norm{\pj}} \right\rangle \norm{\peye} \norm{\pj}   \\
        & \geq \left(1 + \alpha^2\right) \norm{\peye}^2 - 2 \left( 1 - \frac{\theta}{8} + 2 \sqrt{\frac{\Psi(\ell)}{\theta}} \right) \alpha \norm{\peye}^2 \\
        & \geq \left(1 + \alpha^2 - 2\alpha + \frac{\theta}{4} \alpha - 4 \sqrt{\frac{\Psi(\ell)}{\theta}} \alpha\right) \norm{\peye}^2 \\
        & \geq \left(\frac{\theta}{4} - 4 \sqrt{\frac{\Psi(\ell)}{\theta}}\right)\cdot \alpha\cdot  \frac{\theta - 4 \sqrt{\Psi(\ell)}}{\min\left\{\vol(\sets_i), \vol(\sets_j)\right\}} \\
        & \geq \left(\frac{\theta}{16} - \sqrt{\frac{\Psi(\ell)}{\theta}} \right) \left(\theta - 4 \sqrt{\Psi(\ell)} \right)\cdot  \frac{1}{\min\left\{\vol(\sets_i), \vol(\sets_j)\right\}} \\
        & \geq \left(\frac{\theta^2}{16} -\frac{5}{4} \sqrt{\theta \Psi(\ell) } \right) \cdot \frac{1}{\min\left\{\vol(\sets_i), \vol(\sets_j)\right\}} \\
        & = \frac{\theta^2 - 20 \sqrt{\theta \Psi(\ell) }}{16 \min\left\{\vol(\sets_i), \vol(\sets_j)\right\}}
        \end{align*}
        \endgroup
        which completes the proof.
\end{proof}

It is important to recognise that the lower bound in Lemma~\ref{lem:pij_distance} implies a condition on $\theta$ and $\Psi(\ell)$ under which $\vecp^{(i)}$ and $\vecp^{(j)}$ are well separated.
With this, we analyse the performance of spectral clustering when fewer eigenvector are employed to construct the embedding and show that it works when the optimal clusters present a noticeable pattern.

\begin{lemma} \label{lem:cost_lower_bound_2}
Let $A_1,\ldots A_k$ be the output of spectral clustering with $\ell$ eigenvectors, and $\sigma$ and  $\setm_{\sigma,i}$ be defined as in \eqref{eq:defsigma} and \eqref{eq:defmset}.
If $\Psi(\ell) \leq \theta^3 / 40^2$, then
\[
        \sum_{i = 1}^k \frac{\vol(\setm_{\sigma, i} \triangle \sets_i)}{\vol(\sets_i)} \leq 64 (1 + \APT) \frac{\Psi(\ell)}{\theta^2}.
    \]
\end{lemma}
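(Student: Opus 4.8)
The plan is to closely follow the proof of Lemma~\ref{lem:cost_lower_bound} from Chapter~\ref{chap:tight}, substituting its two inputs --- the total $k$-means cost bound and the separation of the approximate cluster centres --- by the meta-graph analogues Lemma~\ref{lem:cost_bound} and Lemma~\ref{lem:pij_distance}, and replacing the role of $1/\Upsilon(k)$ throughout by $\Psi(\ell)/\theta^2$. Concretely, I would let $A_1, \ldots, A_k$ be the output of spectral clustering with $\ell$ eigenvectors, write $\vecc_i \in \R^{\ell}$ for the (degree-weighted) centre of the embedded points $\{F(u)\}_{u \in \seta_i}$, so that $\mathrm{COST}(\seta_1, \ldots, \seta_k) = \sum_{i=1}^k \sum_{u \in \seta_i} \deg(u) \norm{F(u) - \vecc_i}^2$ is the cost of the partition the $k$-means step returns, and set $\setb_{ij} \triangleq \seta_i \intersect \sets_j$. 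The bookkeeping identity from \eqref{eq:up_sym_ratio} then still gives
\[
\sum_{i=1}^k \frac{\vol(\setm_{\sigma, i} \triangle \sets_i)}{\vol(\sets_i)} = \sum_{i=1}^k \sum_{j \neq \sigma(i)} \vol(\setb_{ij}) \left( \frac{1}{\vol(\sets_{\sigma(i)})} + \frac{1}{\vol(\sets_j)} \right) \leq 2 \sum_{i=1}^k \sum_{j \neq \sigma(i)} \frac{\vol(\setb_{ij})}{\min\{\vol(\sets_{\sigma(i)}), \vol(\sets_j)\}}.
\]

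The main step is to lower bound $\mathrm{COST}(\seta_1, \ldots, \seta_k)$ by the misclassified volume. Restricting the cost sum to misclassified points $u \in \setb_{ij}$ with $j \neq \sigma(i)$, I would apply the elementary inequality $\norm{F(u) - \vecc_i}^2 \geq \tfrac12 \norm{\pj - \vecc_i}^2 - \norm{F(u) - \pj}^2$, then use that $\vecc_i$ is at least as close to $\vecp^{(\sigma(i))}$ as to $\pj$ (whence $\norm{\pj - \vecc_i}^2 \geq \tfrac14 \norm{\pj - \vecp^{(\sigma(i))}}^2$ by the triangle inequality), and finally invoke Lemma~\ref{lem:pij_distance} to bound $\norm{\pj - \vecp^{(\sigma(i))}}^2$ from below and Lemma~\ref{lem:cost_bound} to absorb the error term $\sum_i \sum_{j \neq \sigma(i)} \sum_{u \in \setb_{ij}} \deg(u) \norm{F(u) - \pj}^2 \leq \sum_i \sum_{u \in \sets_i} \deg(u) \norm{F(u) - \peye}^2 \leq \Psi(\ell)$. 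Together with the displayed bookkeeping inequality this produces an estimate of the form
\[
\mathrm{COST}(\seta_1, \ldots, \seta_k) \geq c \left( \sum_{i=1}^k \frac{\vol(\setm_{\sigma, i} \triangle \sets_i)}{\vol(\sets_i)} \right) \left( \theta^2 - 20 \sqrt{\theta \cdot \Psi(\ell)} \right) - \Psi(\ell)
\]
for an absolute constant $c$.

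For the other direction, Lemma~\ref{lem:cost_bound} shows the optimal $k$-means cost on $\{F(u)\}_{u \in \vertexsetg}$ is at most $\Psi(\ell)$, and since spectral clustering runs an $\APT$-approximate $k$-means algorithm, $\mathrm{COST}(\seta_1, \ldots, \seta_k) \leq \APT \cdot \Psi(\ell)$. Substituting into the previous display and rearranging gives
\[
\sum_{i=1}^k \frac{\vol(\setm_{\sigma, i} \triangle \sets_i)}{\vol(\sets_i)} \leq \frac{c^{-1} (1 + \APT)\, \Psi(\ell)}{\theta^2 - 20 \sqrt{\theta \cdot \Psi(\ell)}}.
\]
This is exactly where the hypothesis $\Psi(\ell) \leq \theta^3/40^2$ enters: it yields $20\sqrt{\theta \cdot \Psi(\ell)} \leq \theta^2/2$, hence $\theta^2 - 20\sqrt{\theta \cdot \Psi(\ell)} \geq \theta^2/2$, so the right-hand side is at most a constant multiple of $(1 + \APT)\Psi(\ell)/\theta^2$, which is the claimed bound.

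I do not anticipate a real obstacle here, because all the technically delicate spectral estimates are already packaged into Lemmas~\ref{lem:cost_bound} and \ref{lem:pij_distance}; this lemma is essentially their routine combination, mirroring the Chapter~\ref{chap:tight} argument. The points that need care are bookkeeping the absolute constants faithfully down the chain of inequalities, and verifying that the assumed bound on $\Psi(\ell)$ is strong enough --- both to keep $\theta^2 - 20\sqrt{\theta\cdot\Psi(\ell)}$ bounded below by a positive multiple of $\theta^2$, and to satisfy the side conditions under which Lemma~\ref{lem:pij_distance} (and the auxiliary Lemmas~\ref{lem:p_norm} and \ref{lem:norm_phat_diff} feeding into it, e.g.\ the requirement $\epsilon = 4\sqrt{\Psi(\ell)}/\theta < 1$) was proved.
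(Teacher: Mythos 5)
Your proposal is correct and follows essentially the same route as the paper's proof: the same bookkeeping identity, the same elementary inequality $\norm{F(u)-\vecc_i}^2 \geq \tfrac12\norm{\pj-\vecc_i}^2 - \norm{F(u)-\pj}^2$ combined with the fact that $\vecc_i$ is closer to $\vecp^{(\sigma(i))}$, Lemma~\ref{lem:pij_distance} for the centre separation, Lemma~\ref{lem:cost_bound} for both the error term and the $\APT\cdot\Psi(\ell)$ upper bound, and finally the hypothesis $\Psi(\ell)\leq\theta^3/1600$ to ensure $\theta^2-20\sqrt{\theta\Psi(\ell)}\geq\theta^2/2$. The only remaining work is the constant bookkeeping you already flag, which yields the stated factor $64(1+\APT)$.
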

\begin{proof} 
    Let us define $\setb_{ij} = \seta_i \cap \sets_j$ to be the vertices in $\seta_i$ which belong to the ground-truth cluster $\sets_j$.
    Then, we have that
    \begin{align}
         \sum_{i = 1}^k \frac{\vol(\setm_{\sigma, i} \triangle \sets_i)}{\vol(\sets_i)}  
        & = \sum_{i = 1}^k \sum_{\substack{j = 1 \\ j \neq \sigma(i)}}^k \vol(\setb_{ij}) \left(\frac{1}{\vol(\sets_{\sigma(i)})} + \frac{1}{\vol(\sets_j)} \right) \nonumber \\
        & \leq 2 \sum_{i = 1}^k \sum_{\substack{j = 1 \\ j \neq \sigma(i)}}^k \frac{\vol(\setb_{ij})}{\min\{\vol(\sets_{\sigma(i)}), \vol(\sets_j)\}}, \label{eq:up_sym_ratio2}
    \end{align}
and that 
\begingroup
\allowdisplaybreaks
\begin{align*}
        \mathrm{COST}(\seta_1, \ldots \seta_k) & = \sum_{i = 1}^k \sum_{u \in \seta_i} \deg(u) \norm{F(u) - \vecc_i}^2 \\
        & \geq \sum_{i = 1}^k \sum_{\substack{1\leq j \leq  k \\ j \neq \sigma(i)}}   \sum_{u \in \setb_{ij}} \deg(u) \norm{F(u) - \vecc_i}^2 \\
        & \geq \sum_{i = 1}^k \sum_{\substack{1\leq j \leq k  \\ j \neq \sigma(i)}}  \sum_{u \in \setb_{ij}} \deg(u) \left(\frac{\norm{\pj - \vecc_i}^2}{2} - \norm{\pj - F(u)}^2 \right) \\
        & \geq
        \begin{multlined}[t]
        \sum_{i = 1}^k \sum_{\substack{1\leq j \leq k \\ j \neq \sigma(i)}}  \sum_{u \in \setb_{ij}} \frac{\deg(u) \norm{\pj - \vecp^{(\sigma(i))}}^2}{8} \qquad \\
        - \sum_{i = 1}^k \sum_{\substack{1\leq j \leq k \\ j \neq i}}  \sum_{u \in \setb_{ij}} \deg(u) \norm{\pj - F(u)}^2
        \end{multlined} \\
        & \geq \sum_{i = 1}^k \sum_{\substack{1\leq j\leq k \\ j \neq \sigma(i)}}  \vol(\setb_{ij}) \frac{\norm{\pj - \vecp^{(\sigma(i))}}^2}{8} - \sum_{i = 1}^k \sum_{u \in \sets_i} \deg(u) \norm{\peye - F(u)}^2 \\
        & \geq
        \begin{multlined}[t]
        \sum_{i = 1}^k \sum_{\substack{1\leq j\leq k \\ j \neq \sigma(i)}}   \frac{\vol(\setb_{ij})}{16\cdot  \min\{\vol(\sets_{\sigma(i)}), \vol(\sets_j)\}}\left( \theta^2 - 20\sqrt{\theta\cdot \Psi(\ell)}\right) \qquad \\
        - \sum_{i = 1}^k \sum_{u \in \sets_i} \deg(u) \norm{\peye - F(u)}^2
        \end{multlined} \\
        & \geq \frac{1}{32} \cdot\left( \sum_{i = 1}^k \frac{\vol(\setm_{\sigma, i} \triangle \sets_i)}{\vol(\sets_i)}\right) \left( \theta^2 - 20\sqrt{\theta\cdot \Psi(\ell)}\right) - \Psi(\ell),
    \end{align*}
    \endgroup
    where the second inequality follows by the inequality $\norm{\veca - \vecb}^2 \geq \frac{\norm{\vecb - \vecc}^2}{2} - \norm{\veca - \vecc}^2$, the third inequality follows since $\vecc_i$ is closer to $\vecp^{(\sigma(i))}$ than $\vecp^{(j)}$, the fifth one follows from Lemma~\ref{lem:pij_distance}, and the last one follows by \eqref{eq:up_sym_ratio2}.
    
    On the other hand, since $\mathrm{COST}(\seta_1,\ldots, \seta_k) \leq \APT\cdot  \Psi(\ell)$ by Lemma~\ref{lem:cost_bound}, we have that 
    \begin{align*}
         \sum_{i = 1}^k \frac{\vol(\setm_{\sigma, i} \triangle \sets_i)}{\vol(\sets_i)} & \leq 32\cdot (1+\APT) \cdot \Psi(\ell)\cdot  \left( \theta^2 - 20\sqrt{\theta\cdot \Psi(\ell)}\right)^{-1} \\
         & \leq 64\cdot   (1+\APT) \cdot \Psi(\ell),
    \end{align*}
    where the last inequality follows by the assumption that $\Psi(\ell) \leq \theta^3 /1600$. Therefore, the statement follows.
\end{proof}

Combining this with other technical ingredients, including the new technique for constructing the desired mapping $\sigma^*$ described in Section~\ref{sec:analysis1}, we obtain the performance guarantee of the designed algorithm, which is summarised as follows:

\begin{theorem} \label{thm:sc_meta-graph}
Let $\geqve$ be a graph with $k$ clusters $S_1,\ldots, S_k$ of almost balanced size,
and a $(\theta, \ell)$-distinguishable meta-graph that satisfies 
$\Psi(\ell) \leq \left(2176 (1 + \APT)\right)^{-1} \theta^3$.
Let $A_1,\ldots, A_k$ be the output of spectral clustering with $\ell$ eigenvectors, and without loss of generality let the  optimal correspondent of $\seta_i$ be $\sets_i$. Then, it holds that
\[
        \sum_{i = 1}^k \vol\left(\seta_i \triangle \sets_{i}\right) \leq 2176 (1 + \APT) \frac{\Psi(\ell) \cdot \vol(\vertexset)}{k \cdot \theta^2}.
    \]
\end{theorem}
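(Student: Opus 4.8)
The plan is to follow the same high-level strategy as the proof of Theorem~\ref{thm:sc_guarantee}, with the quantity $k/\Upsilon(k)$ systematically replaced by $\Psi(\ell)/\theta^2$ and the separation estimates for the approximate centres coming from Lemmas~\ref{lem:cost_bound}--\ref{lem:pij_distance} rather than from Lemmas~\ref{lem:total_cost}--\ref{lem:p_diff}. First I would record that the hypothesis $\Psi(\ell) \leq \left(2176(1+\APT)\right)^{-1}\theta^3$ implies $\Psi(\ell)\leq\theta^3/40^2$: since $\APT\geq 1$ we have $2176(1+\APT)\geq 4352 > 1600$, and this is exactly the condition needed to invoke Lemma~\ref{lem:cost_lower_bound_2}; it also gives $\Psi(\ell)<1$, and $\theta\leq 1$ follows from Definition~\ref{def:distinguishable} since the $\vecg_i$ are unit vectors. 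By Lemma~\ref{lem:cost_bound}, the approximate centres $\{\peye\}_{i=1}^k$ certify that an optimal $k$-means clustering of $\{F(u)\}_{u\in\vertexsetg}$ (with $F$ as in \eqref{eq:embeddingell}) has cost at most $\Psi(\ell)$, so running an $\APT$-approximate $k$-means algorithm produces $\seta_1,\ldots,\seta_k$ with $\mathrm{COST}(\seta_1,\ldots,\seta_k)\leq\APT\cdot\Psi(\ell)$.

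Next I would define $\sigma:[k]\to[k]$ as in \eqref{eq:defsigma}, mapping each output cluster $\seta_i$ to the index $\sigma(i)$ of the nearest approximate centre, and $\setm_{\sigma,i}$ as in \eqref{eq:defmset}. Lemma~\ref{lem:cost_lower_bound_2} then gives directly
$\sum_{i=1}^k \vol(\setm_{\sigma,i}\triangle\sets_i)/\vol(\sets_i) \leq 64(1+\APT)\Psi(\ell)/\theta^2$. If $\sigma$ is already a permutation, then each $\setm_{\sigma,i}$ is a single $\seta_j$, and multiplying through by $\vol(\sets_i)\leq 2\vol(\vertexset)/k$ (almost-balancedness) yields the claimed bound. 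The remaining work, exactly as in the proof of Theorem~\ref{thm:sc_guarantee}, is to repair $\sigma$ into a permutation $\sigma^\star$: whenever some $\setm_{\sigma,i}=\emptyset$ there exist $x\neq y$ with $\sigma(x)=\sigma(y)=j$, and one sets $\sigma'(x)=i$, reducing the number of empty $\setm$'s by one. I would bound the increase $\epsilon(\sigma')-\epsilon(\sigma)$ by the four-case analysis on the signs of the terms $\alpha,\beta$ in \eqref{eq:sigmadiff}: Case~1 contributes nothing; in Case~2 the per-move increase is at most $8\vol(\seta_x\setminus\sets_j)/\vol(\sets_j)$ using $\min(\vol(\sets_i),\vol(\sets_j))\geq\vol(\sets_j)/4$ from almost-balancedness, summing over all moves to at most $8\cdot 64(1+\APT)\Psi(\ell)/\theta^2$; in Cases~3 and~4 the per-move increase is at most the absolute constant $8$, and the number of such moves is $\cardinality{\{i:\setm_{\sigma,i}=\emptyset\}}\leq\epsilon(\sigma)\leq 64(1+\APT)\Psi(\ell)/\theta^2$, again contributing at most $512(1+\APT)\Psi(\ell)/\theta^2$. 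Summing, $\epsilon(\sigma^\star)\leq\epsilon(\sigma)+1024(1+\APT)\Psi(\ell)/\theta^2\leq 1088(1+\APT)\Psi(\ell)/\theta^2$, and multiplying by $\vol(\sets_i)\leq 2\vol(\vertexset)/k$ gives $\sum_{i=1}^k\vol(\seta_i\triangle\sets_i)\leq 2176(1+\APT)\Psi(\ell)\vol(\vertexset)/(k\theta^2)$ after relabelling so that $\seta_i$ corresponds to $\sets_{\sigma^\star(i)}^{-1}$, i.e. to $\sets_i$ under the chosen correspondence.

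The main obstacle is bookkeeping rather than conceptual: one must verify that every inequality in Cases~2--4 that previously leaned on $\Upsilon(k)\geq 32$ or on almost-balancedness still goes through under the present hypothesis, and in particular that the factor $\left(\theta^2-20\sqrt{\theta\Psi(\ell)}\right)^{-1}$ entering through Lemma~\ref{lem:cost_lower_bound_2} is bounded by $2/\theta^2$, which follows from $\Psi(\ell)\leq\theta^3/1600$. A secondary point to double-check is that the repair procedure terminates after $O\!\left((1+\APT)\Psi(\ell)/\theta^2\right)$ steps — each empty $\setm_{\sigma,i}$ contributes exactly $1$ to $\epsilon(\sigma)$ — since this is precisely what prevents the constant from acquiring a dependence on $k$.
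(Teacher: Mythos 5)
Your proposal is correct and follows essentially the same route as the paper, whose proof of Theorem~\ref{thm:sc_meta-graph} simply reruns the argument of Theorem~\ref{thm:sc_guarantee} with Lemma~\ref{lem:cost_bound} in place of Lemma~\ref{lem:total_cost} and Lemma~\ref{lem:cost_lower_bound_2} in place of Lemma~\ref{lem:cost_lower_bound}; your explicit verification that the hypothesis implies $\Psi(\ell)\leq\theta^3/1600$ (indeed $2176>1600$ even without assuming $\APT\geq 1$) and your constant bookkeeping through the permutation-repair cases ($64+1024=1088$, then the factor $2\vol(\vertexset)/k$ from almost-balancedness) match the paper's intended calculation exactly.
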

\begin{proof}
This result can be obtained by using the same technique as the one used in the proof of Theorem~\ref{thm:sc_guarantee}, but applying Lemma~\ref{lem:cost_bound} instead of Lemma~\ref{lem:total_cost} and
Lemma~\ref{lem:cost_lower_bound_2} instead of Lemma~\ref{lem:cost_lower_bound} in the analysis.
\end{proof}
 Notice that if we take $\ell = k$, then we have that $\theta = 1$ and $\Psi(\ell) \leq k / \Upsilon(k)$ which makes the guarantee in Theorem~\ref{thm:sc_meta-graph} the same as the one in Theorem~\ref{thm:sc_guarantee}.
 However, if the meta-graph corresponding to the optimal clusters is $(\theta, \ell)$-distinguishable for large $\theta$ and $\ell \ll k$, then it holds that $\Psi(\ell) \ll k / \Upsilon(k)$ and Theorem~\ref{thm:sc_meta-graph} gives a stronger guarantee than the 
 one from Theorem~\ref{thm:sc_guarantee}.

\section{Experimental Results} \label{sec:metaExperiments}
In this section we empirically evaluate the performance of spectral clustering for finding $k$ clusters while using fewer than $k$ eigenvectors.
Our results on synthetic data demonstrate that, for graphs with a clear pattern of clusters,
spectral clustering with fewer than $k$ eigenvectors performs better.
This is further confirmed on real-world datasets including BSDS, MNIST, and USPS. 
The code used to produce all experimental results is available at
\begin{equation*}
\mbox{\url{https://github.com/pmacg/spectral-clustering-meta-graphs}.}
\end{equation*}
We implement spectral clustering in Python, using the \texttt{scipy} library for computing eigenvectors, and the $k$-means algorithm from the \texttt{sklearn} library.
Our experiments on synthetic data are performed on a desktop computer with an Intel(R) Core(TM) i5-8500 CPU @ 3.00GHz processor and 16 GB RAM.
The experiments on the BSDS, MNIST, and USPS datasets are performed on a compute server with 64 AMD EPYC 7302 16-Core Processors.

\subsection{Results on Synthetic Data}
We first study the performance of spectral clustering on random graphs whose clusters exhibit a clear pattern.
Given the parameters $n \in \Z^+$, $0 \leq q \leq p \leq 1$, and some meta-graph $\graphm = (\vertexset_\graphm, \edgeset_\graphm)$ with $k$ vertices, we generate a graph with clusters $\{\sets_i\}_{i = 1}^k$, each of size $n$, as follows.
For each pair of vertices $u \in \sets_i$ and $v \in \sets_j$, we add the edge $(u, v)$ with probability $p$ if $i = j$, and with probability $q$ if $i \neq j$ 
and $(i, j) \in \edgeset_\graphm$.
The metric used for our evaluation is defined by
$
    \frac{1}{n k} \sum_{i = 1}^k \cardinality{\sets_i \cap \seta_i}$,
for the optimal matching between the output $\{\seta_i\}_{i=1}^k$ and the ground truth $\{\sets_i\}_{i=1}^k$.

In our experiments, we fix $n = 1000$, $p = 0.01$, and consider the meta-graphs $C_{10}$ and $P_{4, 4}$, similar to those illustrated in Examples~\ref{ex:cycle} and~\ref{ex:grid}; this results in graphs with 10,000 and 16,000 vertices respectively.
We vary the ratio $p / q$ and the number of eigenvectors used to find the clusters.
Our experimental result, which is reported as the average score over 10 trials and shown in 
Figure~\ref{fig:sbm_results}, clearly shows that spectral clustering with fewer than $k$ eigenvectors performs better. This is particularly the case when $p$ and $q$ are close, which corresponds to the more challenging regime in the model.

\begin{figure}[t]
\centering
\begin{subfigure}{0.45\textwidth}
    \includegraphics[width=\textwidth]{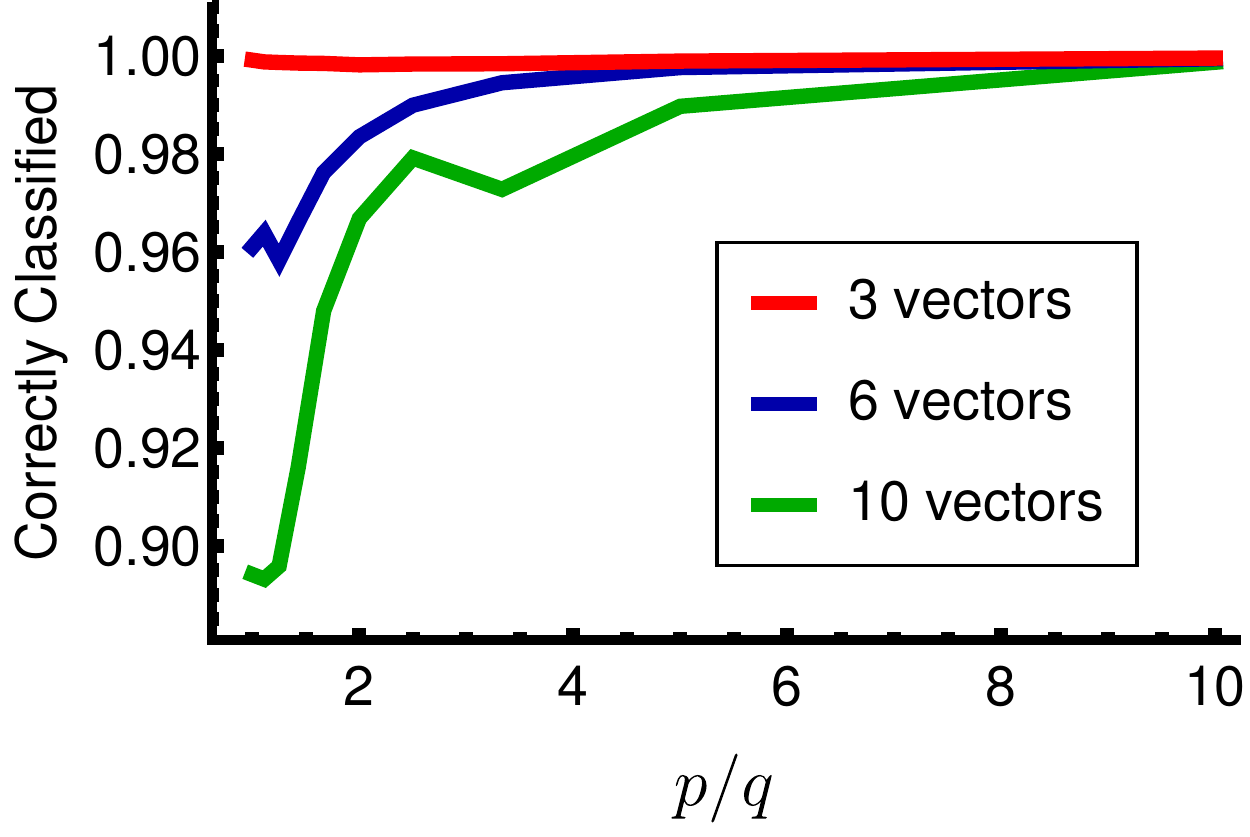}
    \caption{
    Meta-Graph $C_{10}$
    }
\end{subfigure}
\hspace{1em}
\begin{subfigure}{0.45\textwidth}
    \includegraphics[width=\textwidth]{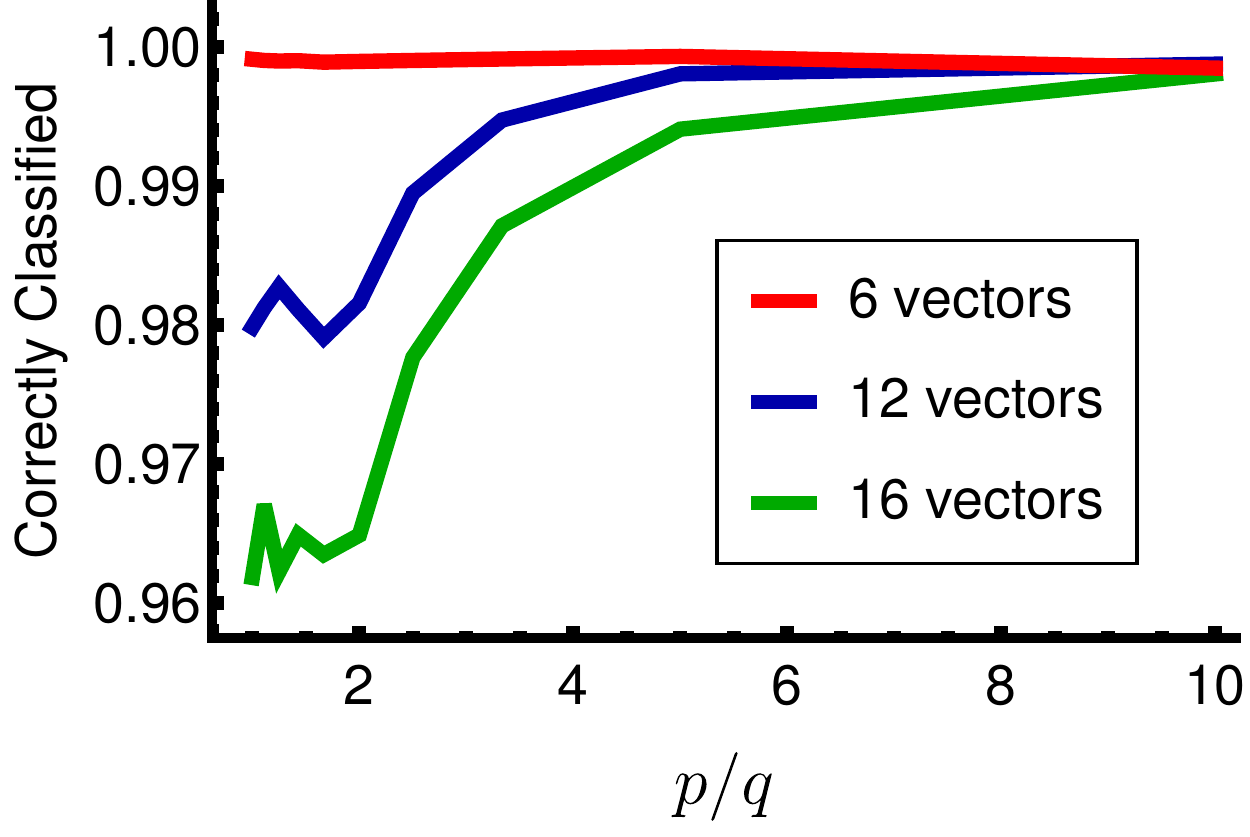}
    \caption{
        Meta-Graph $P_{4, 4}$
    }
\end{subfigure}
    \caption[The performance of spectral clustering with fewer eigenvectors]{
    \label{fig:sbm_results}
    A comparison on the output of spectral clustering with    meta-graphs $C_{10}$ and $P_{4, 4}$, when different number of eigenvectors are used.
    Note that classical spectral clustering uses $10$ and $16$ eigenvectors respectively.
    }
\end{figure}

\subsection{Results on the BSDS Dataset}
In this experiment, we study the performance of spectral clustering for image segmentation when using different numbers of eigenvectors.
We consider the Berkeley Segmentation Data Set (BSDS)~\cite{arbelaezContourDetectionHierarchical2011}, which consists of $500$ images along with their ground-truth segmentations.
For each image, we construct a similarity graph on the pixels and take $k$ to be the number of clusters in the ground-truth segmentation\footnote{
The BSDS dataset provides several human-generated ground truth segmentations for each image.
Since there are different numbers of ground truth clusterings associated with each image, in our experiments we take the target number of clusters for a given image to be the
one closest to the median.
}.
Given a particular image in the dataset, we first downsample the image to have at most 20,000 pixels.
Then, we represent each pixel by the point $(r, g, b, x, y)^\transpose \in \R^5$, where $r, g, b \in [255]$ are the RGB values of the pixel and $x$ and $y$ are the coordinates of the pixel in the downsampled image.
We construct the similarity graph by taking each pixel to be a vertex in the graph, and for every pair of pixels $\vecu, \vecv \in \R^5$, we add an edge with weight $\exp(- \norm{\vecu - \vecv}^2 / 2 \sigma^2)$ where $\sigma = 20$.
Then we apply spectral clustering, varying the number of eigenvectors used.
% and evaluate the output using the Rand Index~\cite{randObjectiveCriteriaEvaluation1971}.
We evaluate each segmentation produced with spectral clustering using the Rand Index~\cite{randObjectiveCriteriaEvaluation1971} as implemented in the benchmarking code provided along with the BSDS dataset.
For each image, this computes the average Rand Index across all of the provided ground-truth segmentations for the image.
Figure~\ref{fig:bsds_results_intro} shows two images from the dataset along with the segmentations produced with spectral clustering, and additional examples are given in Figures~\ref{fig:bsds_app_0}~and~\ref{fig:bsds_app_2}.
These examples demonstrate that spectral clustering with fewer eigenvectors performs better.

We conduct the experiments on the entire BSDS dataset, and the average Rand Index of the algorithm's  output is reported in Table~\ref{tab:bsds_results}: it is clear that spectral clustering with $k/2$ eigenvectors consistently out-performs the one with $k$ eigenvectors. We further notice that, on $89\%$ of the images across the entire dataset,  using fewer than $k$ eigenvectors gives a better result than using $k$ eigenvectors.

\begin{table}[t]
    \caption[Experimental results for image segmentation with spectral clustering]{The average Rand Index across the BSDS dataset for different numbers of eigenvectors.
    \textsc{Optimal}
    % \textsc{Best}
    refers to the algorithm which runs spectral clustering with   $\ell$  eigenvectors for all possible $\ell\in[k]$ and returns the output with the highest Rand Index.}
    \label{tab:bsds_results}
    \centering
    \begin{tabular}{cc}
    \toprule
        Number of Eigenvectors & Average Rand Index  \\
        \midrule
        $k$ & 0.71 \\
        $k / 2$ & 0.74 \\
        \textsc{Optimal} & 0.76 \\
        % \textsc{Best} & 0.76 \\
    \bottomrule
    \end{tabular}
\end{table}

\subsection{Results on the MNIST and USPS Datasets}
We further demonstrate the real-world applicability of our results on the MNIST and USPS datasets~\cite{hullDatabaseHandwrittenText1994, lecunGradientbasedLearningApplied1998}, which consist of images of hand-written digits.
In both the MNIST and USPS datasets, each image is represented as an array of greyscale pixels with values between 0 and 255. 
The MNIST dataset has 60,000 images with dimensions $28 \times 28$ and the USPS dataset has 7,291 images with dimensions $16 \times 16$.
In each dataset, we consider each image to be a single data point in $\R^{(d^2)}$ where $d$ is the dimension of the images and construct the $k$ nearest neighbour graph for $k = 3$.
For the MNIST dataset, this gives a graph with 60,000 vertices and 138,563 edges, and for the USPS dataset this gives a graph with 7,291 vertices and 16,715 edges.
We use spectral clustering to partition the graphs into $10$ clusters.
We measure the similarity between the found clusters and the ground truth using the Adjusted Rand Index~\cite{gatesImpactRandomModels2017} and accuracy~\cite{randObjectiveCriteriaEvaluation1971},
and plot the results in Figure~\ref{fig:mnist_usps}.
Our experiments show that spectral clustering with just 7 eigenvectors gives the best performance on both datasets.

\begin{figure}[t]
\centering
\begin{subfigure}{0.45\textwidth}
    \includegraphics[width=\textwidth]{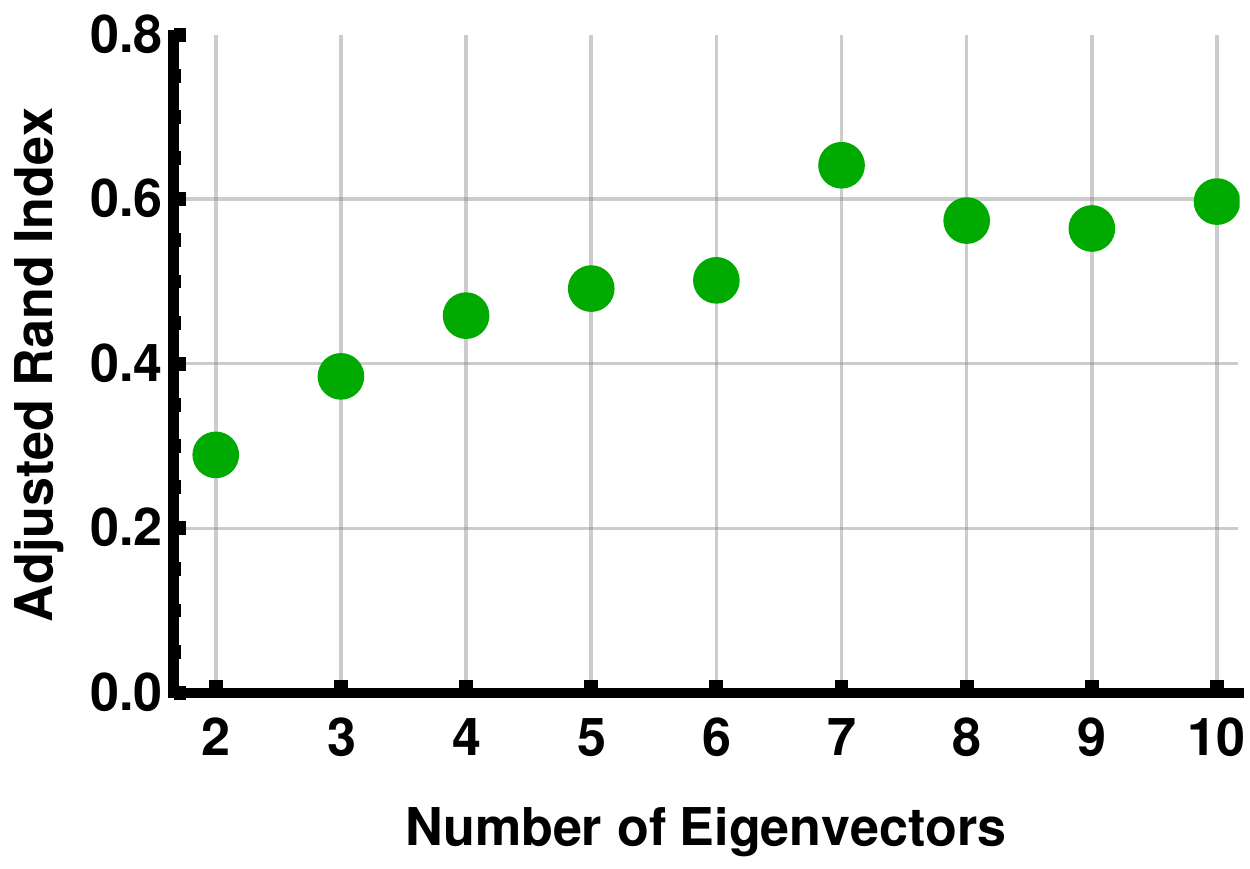}
    \caption{}
\end{subfigure}
\hspace{1em}
\begin{subfigure}{0.45\textwidth}
    \includegraphics[width=\textwidth]{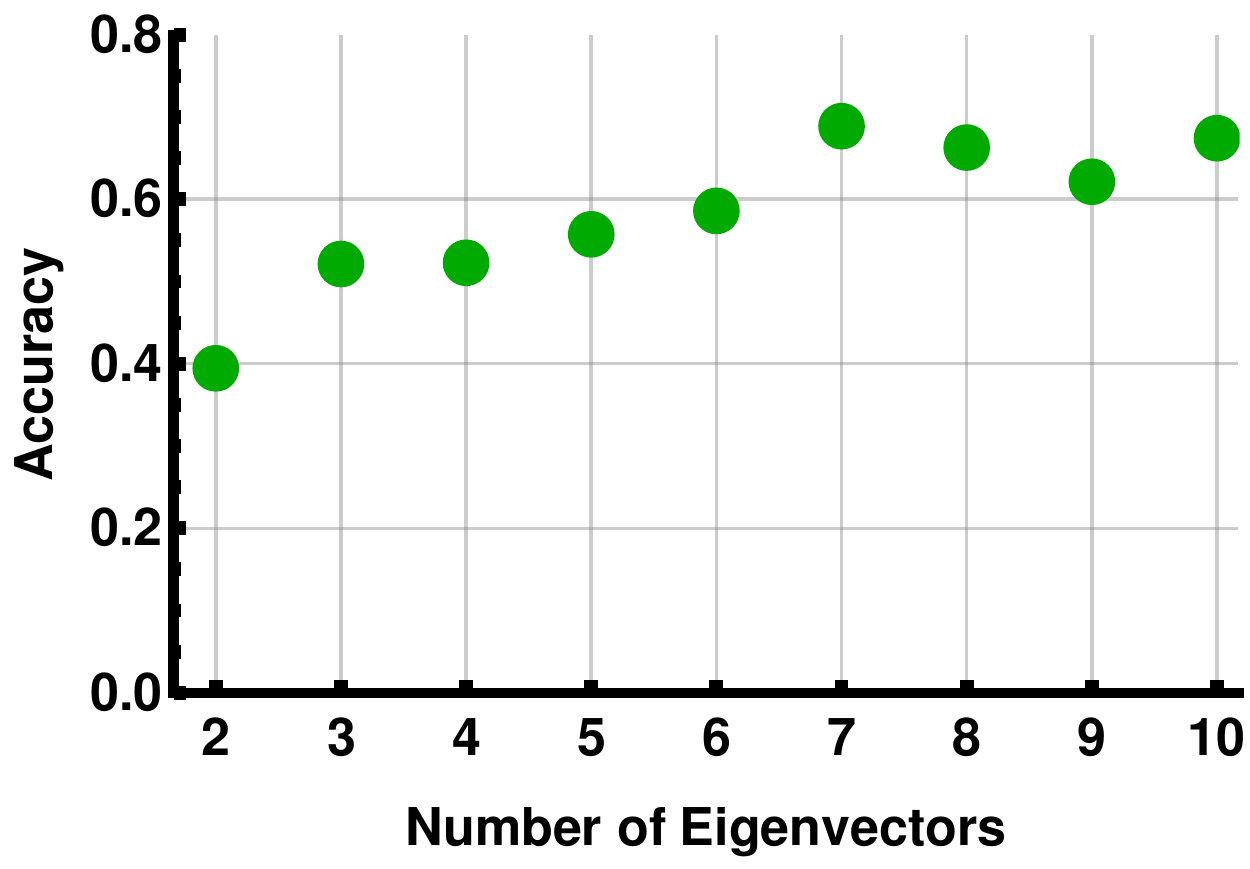}
    \caption{}
\end{subfigure}
\par\bigskip
\begin{subfigure}{0.45\textwidth}
    \includegraphics[width=\textwidth]{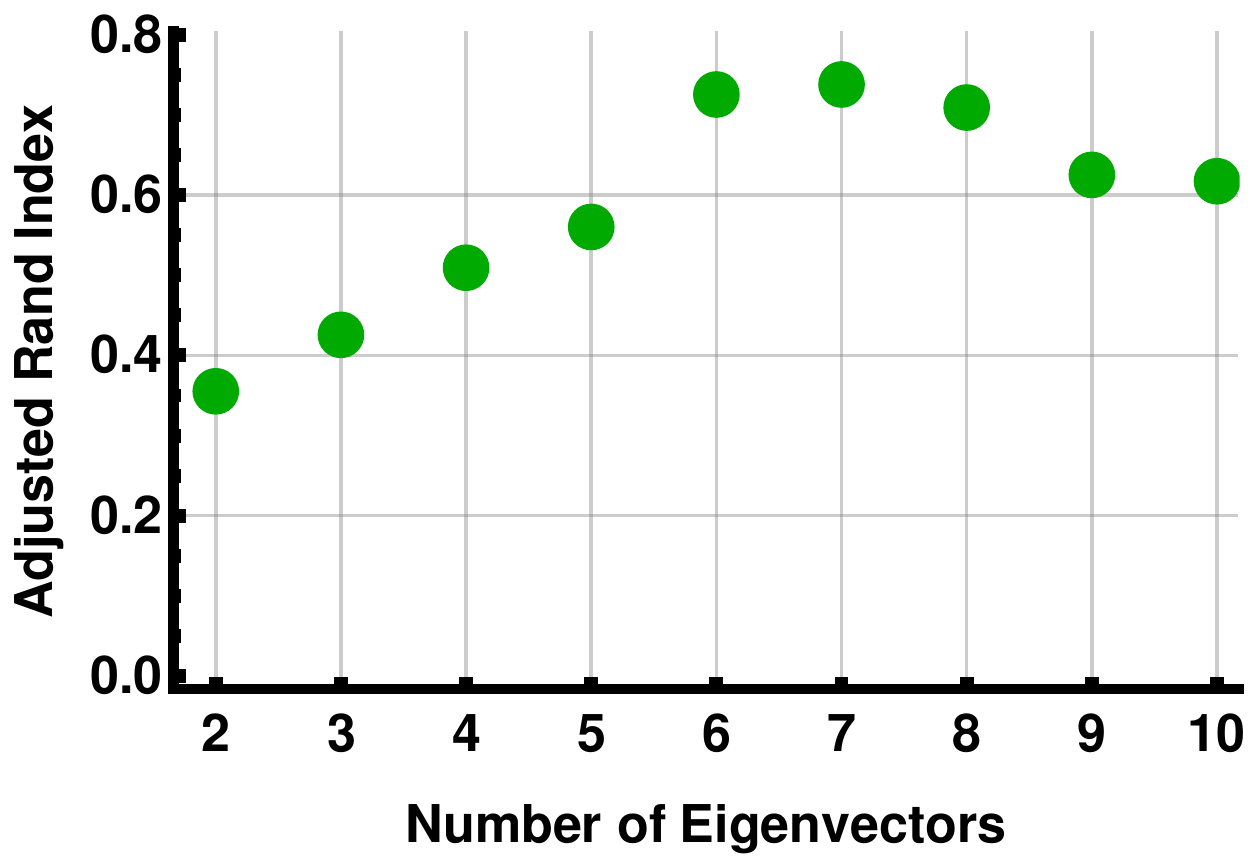}
    \caption{}
\end{subfigure}
\hspace{1em}
\begin{subfigure}{0.45\textwidth}
    \includegraphics[width=\textwidth]{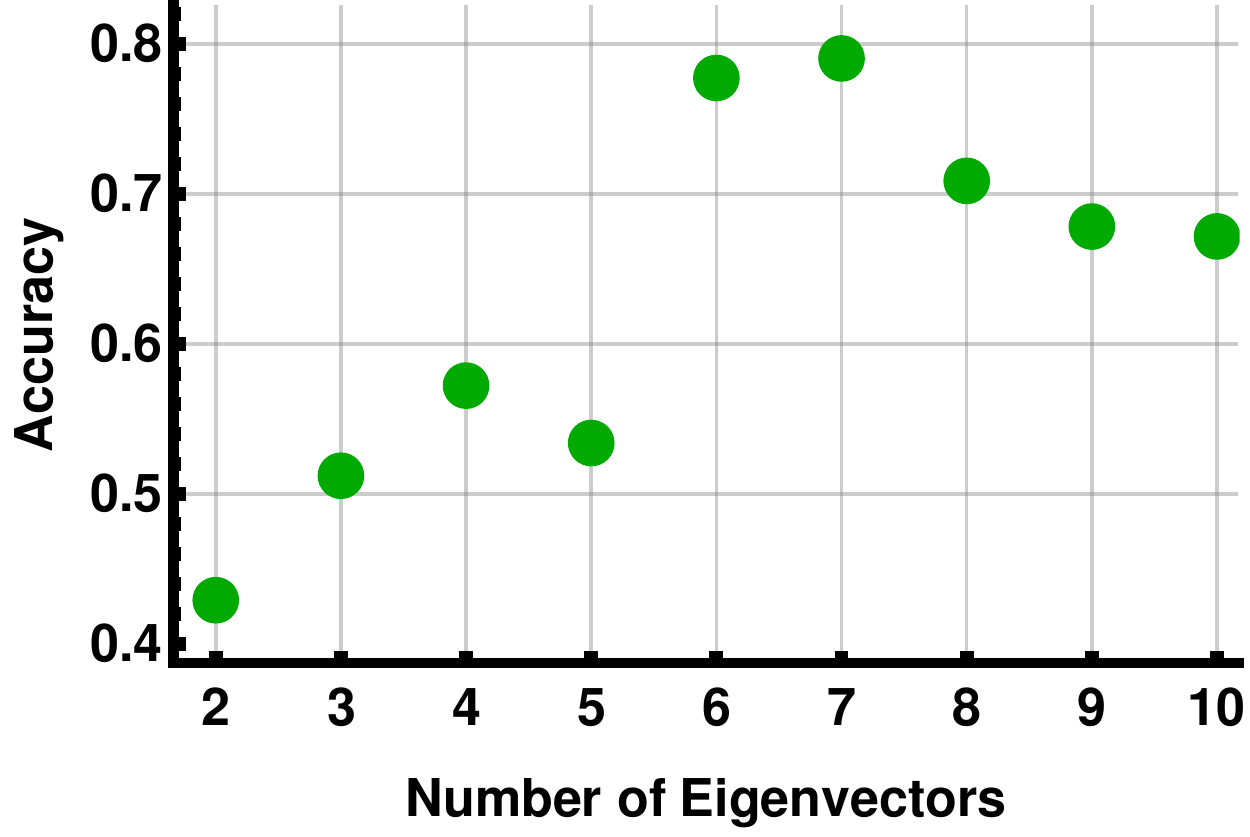}
    \caption{}
\end{subfigure}
    \caption[Experimental results on the MNIST and USPS datasets]{
    \label{fig:mnist_usps}
    Experimental results on the MNIST and USPS datasets.
    \textbf{(a)} and \textbf{(b)}: Adjusted Rand Index and accuracy of spectral clustering on the MNIST dataset;
    \textbf{(c)} and \textbf{(d)}: Adjusted Rand Index and accuracy of spectral clustering on the USPS dataset.
    These experiments show that spectral clustering  with 7 eigenvectors gives the best partition of the input into 10 clusters.
    }
\end{figure}

\section{Future Work}
This work is merely the starting point in the study of spectral clustering with fewer than $k$ eigenvectors, and leaves many open questions. For example, although one can enumerate the number of used eigenvectors from $1$ to $k$ and take the
 best clustering result with respect to some objective function~(such as graph conductance or the normalised cut value),
it would be interesting to know whether the `optimal' number of eigenvectors
can be computed directly.
\begin{openquestion}
Can the `optimal' number of eigenvectors for spectral clustering be computed directly from the Laplacian spectrum?
\end{openquestion}
Additionally, more empirical studies could give further insight into the structure of the meta-graphs of real-world datasets.
\begin{openquestion}
What are the typical meta-graph structures appearing in real-world datasets?
Do they share any common properties that can be exploited to improve spectral clustering in practice?
\end{openquestion}

\begin{figure*}[ht]
    \centering
    \begin{subfigure}[t]{0.25\textwidth}
        \includegraphics[width=\textwidth]{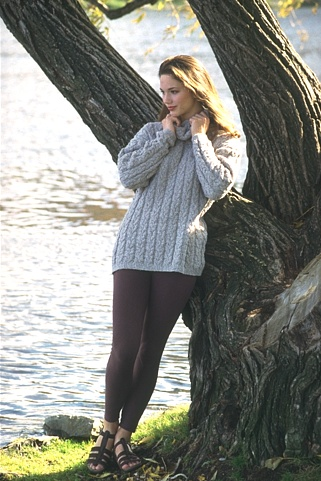}
        \caption{Original Image}
    \end{subfigure}
    \hspace{1.5em}
    \begin{subfigure}[t]{0.25\textwidth}
        \includegraphics[width=\textwidth]{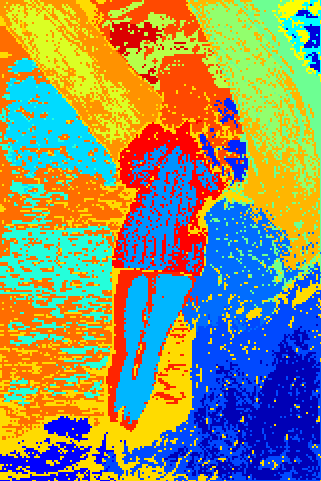}
        \caption{Segmentation into $24$ clusters with $8$ eigenvectors; Rand Index $0.82$.}
    \end{subfigure}
    \hspace{1.5em}
    \begin{subfigure}[t]{0.25\textwidth}
        \includegraphics[width=\textwidth]{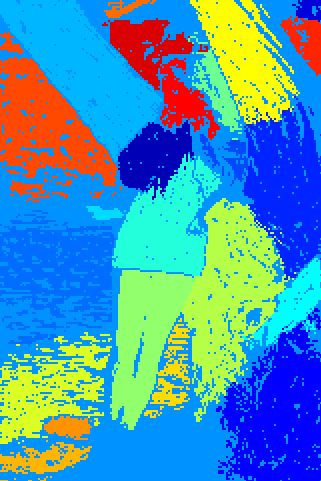}
        \caption{Segmentation into $24$ clusters with $24$ eigenvectors; Rand Index $0.77$.}
    \end{subfigure}
    \par\bigskip
    \begin{subfigure}[t]{0.25\textwidth}
        \includegraphics[width=\textwidth]{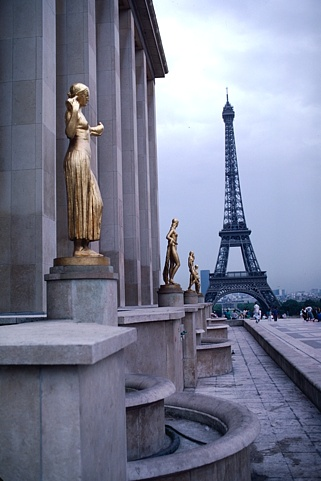}
        \caption{Original Image}
    \end{subfigure}
    \hspace{1.5em}
    \begin{subfigure}[t]{0.25\textwidth}
        \includegraphics[width=\textwidth]{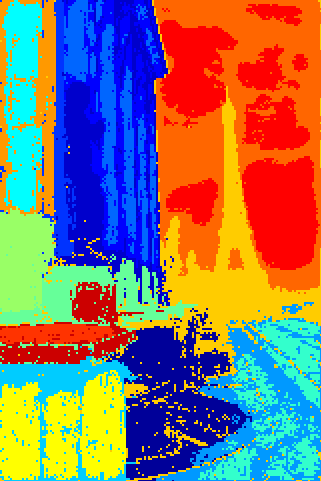}
        \caption{Segmentation into $18$ clusters with $6$ eigenvectors; Rand Index $0.77$.}
    \end{subfigure}
    \hspace{1.5em}
    \begin{subfigure}[t]{0.25\textwidth}
        \includegraphics[width=\textwidth]{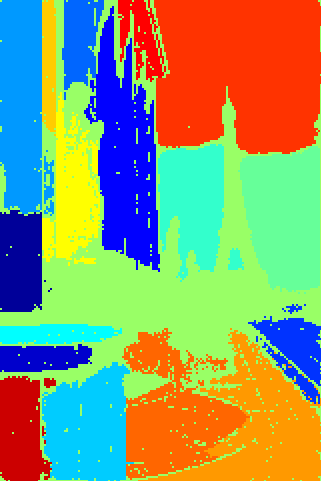}
        \caption{Segmentation into $18$ clusters with $18$ eigenvectors; Rand Index $0.74$.}
    \end{subfigure}
    \caption[Additonal examples from the BSDS dataset]{Examples of the segmentations produced with spectral clustering on the BSDS dataset.} 
    \label{fig:bsds_app_0}
\end{figure*}

\begin{figure*}[t]
    \centering
    \begin{subfigure}[t]{0.3\textwidth}
      \includegraphics[width=\textwidth]{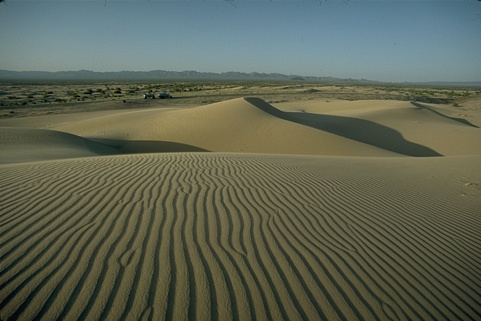}
      \caption{Original Image}
    \end{subfigure}
    \hspace{1em}
    \begin{subfigure}[t]{0.3\textwidth}
      \includegraphics[width=\textwidth]{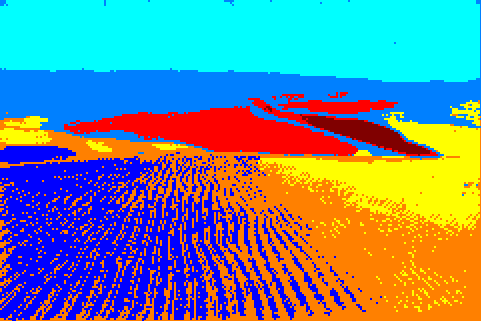}
      \caption{Segmentation into $7$ clusters with $3$ eigenvectors; Rand Index $0.76$.}
    \end{subfigure}
    \hspace{1em}
    \begin{subfigure}[t]{0.3\textwidth}
      \includegraphics[width=\textwidth]{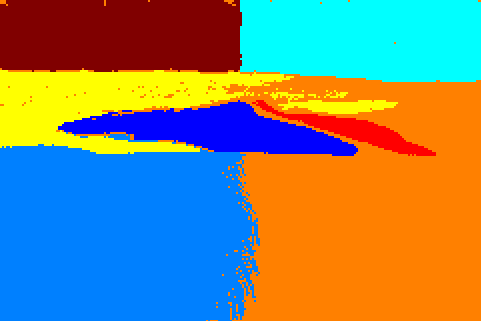}
      \caption{Segmentation into $7$ clusters with $7$ eigenvectors; Rand Index $0.74$.}
    \end{subfigure}
    \par\bigskip
    \begin{subfigure}[t]{0.3\textwidth}
      \includegraphics[width=\textwidth]{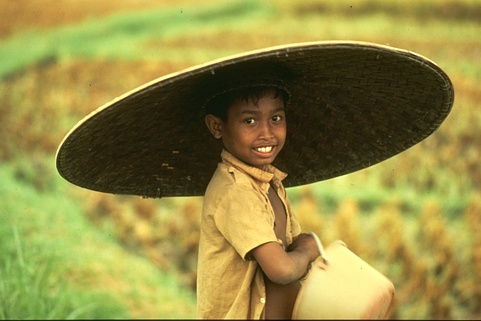}
      \caption{Original Image}
    \end{subfigure}
    \hspace{1em}
    \begin{subfigure}[t]{0.3\textwidth}
      \includegraphics[width=\textwidth]{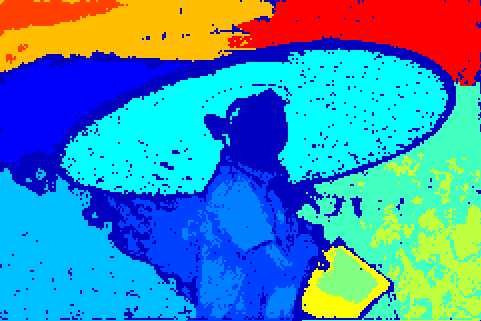}
      \caption{Segmentation into $13$ clusters with $8$ eigenvectors; Rand Index $0.80$.}
    \end{subfigure}
    \hspace{1em}
    \begin{subfigure}[t]{0.3\textwidth}
      \includegraphics[width=\textwidth]{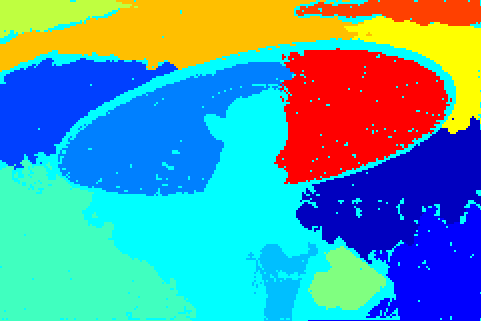}
      \caption{Segmentation into $13$ clusters with $13$ eigenvectors; Rand Index $0.77$.}
    \end{subfigure}
    \par\bigskip
    \begin{subfigure}[t]{0.3\textwidth}
      \includegraphics[width=\textwidth]{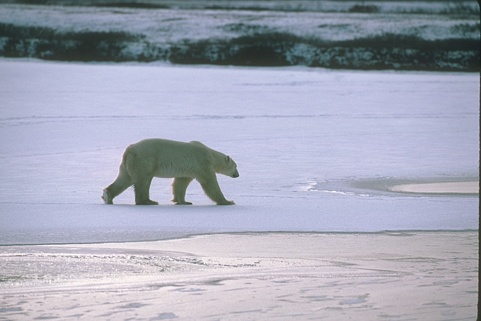}
      \caption{Original Image}
    \end{subfigure}
    \hspace{1em}
    \begin{subfigure}[t]{0.3\textwidth}
      \includegraphics[width=\textwidth]{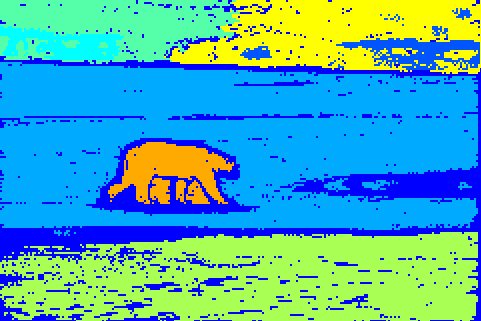}
      \caption{Segmentation into $8$ clusters with $5$ eigenvectors; Rand Index $0.86$.}
    \end{subfigure}
    \hspace{1em}
    \begin{subfigure}[t]{0.3\textwidth}
      \includegraphics[width=\textwidth]{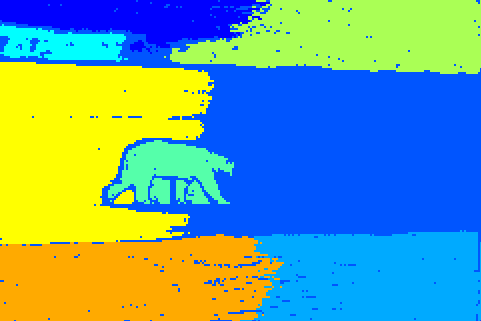}
      \caption{Segmentation into $8$ clusters with $8$ eigenvectors; Rand Index $0.79$.}
    \end{subfigure}
    \par\bigskip
    \begin{subfigure}[t]{0.3\textwidth}
      \includegraphics[width=\textwidth]{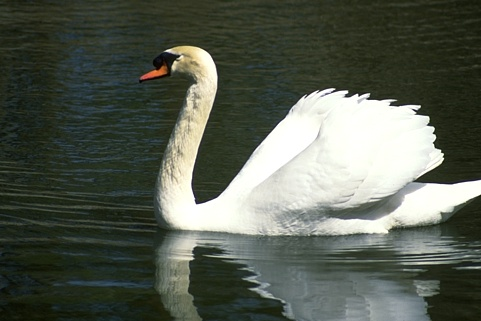}
      \caption{Original Image}
    \end{subfigure}
    \hspace{1em}
    \begin{subfigure}[t]{0.3\textwidth}
      \includegraphics[width=\textwidth]{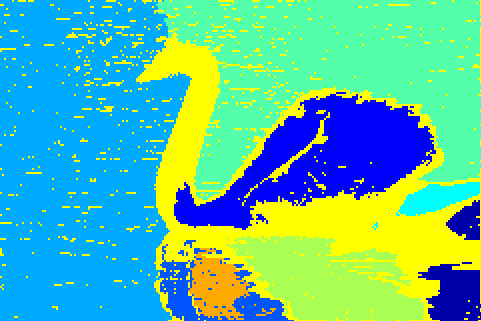}
      \caption{Segmentation into $9$ clusters with $7$ eigenvectors; Rand Index $0.69$.}
    \end{subfigure}
    \hspace{1em}
    \begin{subfigure}[t]{0.3\textwidth}
      \includegraphics[width=\textwidth]{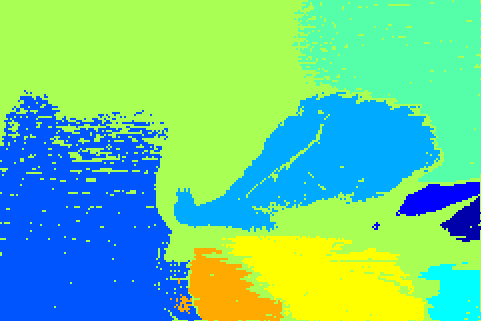}
      \caption{Segmentation into $9$ clusters with $9$ eigenvectors; Rand Index $0.61$.}
    \end{subfigure}
    \caption[Additional examples from the BSDS dataset]{Examples of the segmentations produced with spectral clustering on the BSDS dataset.} 
    \label{fig:bsds_app_2}
\end{figure*}
% Commands unique to this chapter
\newcommand{\simplify}[1]{\sigma \circ \left(#1\right)}
\newcommand{\sigmap}{(\sigma\circ \vecp)}

\chapter{Finding Densely Connected Clusters Locally} \label{chap:local}
In this chapter we turn our attention from learning the arbitrary structure of clusters to the problem of finding clusters with a specific known structure.
Moreover, we would like to solve the problem \emph{locally} so that we can learn the structure of a graph close to some particular vertex of interest.

To begin, let us recall the definition of a \emph{local clustering algorithm}.
Given an arbitrary vertex $u$ of graph $\graphg=(\vertexset,\edgeset)$ as input, a local graph clustering algorithm finds some low-conductance set $\sets \subset \vertexset$ containing $u$, while the algorithm runs in time proportional to the size of the target cluster and independent of the size of the graph $\graphg$.
 Because of the increasing size of available datasets, which makes centralised computation too expensive, local graph clustering has become an important learning technique for analysing a number of large-scale graphs and has been applied to solve many other learning and combinatorial optimisation problems~\cite{zhuLocalAlgorithmFinding2013, andersenLocalAlgorithmFinding2010,andersenAlmostOptimalLocal2016,fountoulakisPNormFlowDiffusion2020,liuStronglyLocalPnormcut2020,takaiHypergraphClusteringBased2020,  wangCapacityReleasingDiffusion2017, yinLocalHigherorderGraph2017}.
 In this chapter, we study local graph clustering for learning the structure of clusters that are defined by their inter-connections,
 and develop local algorithms to achieve this objective in both undirected and directed graphs.
 The design and analysis of the algorithms in this chapter build on the \emph{personalised \pagerank} and \emph{evolving set process} techniques for local clustering which are described in Section~\ref{sec:related:local}.

 The first result of this chapter is a local algorithm for finding densely connected clusters
in an undirected graph $\geqve$.
\begin{mainresult}[See Theorem~\ref{thm:main_thm} for the formal statement]
Given an undirected graph $\geqve$ and a seed vertex $u$, our local algorithm finds \emph{two} clusters $\setl, \setr$ around $u$, which are densely connected to each other and are loosely connected to $\vertexset \setminus (\lur)$.
\end{mainresult}
The design of the algorithm is based on a new reduction that allows us to relate the connections between $\setl, \setr$ and $\vertexset \setminus (\lur)$ to a \emph{single} cluster in the resulting graph $\graphh$, and a generalised analysis of \pagerank-based algorithms for local graph clustering.
The significance of the designed algorithm is demonstrated by experimental results on the Interstate Dispute Network from 1816 to 2010.
By connecting two vertices~(countries) with an undirected edge
weighted according to the severity of their military disputes
and using the USA as the seed vertex,
the algorithm recovers two groups of countries that tend to have conflicts with each other, and shows how the two groups evolve over time.
In particular, as shown in Figure~\ref{fig:intro_undirected}, the algorithm not only identifies the changing roles of Russia, Japan, and eastern Europe in line with 20th century geopolitics, but also the reunification of east and west Germany around 1990.

\begin{figure*}[t]
\centering
    \begin{subfigure}{0.48\textwidth}  
    \includegraphics[width=\textwidth]{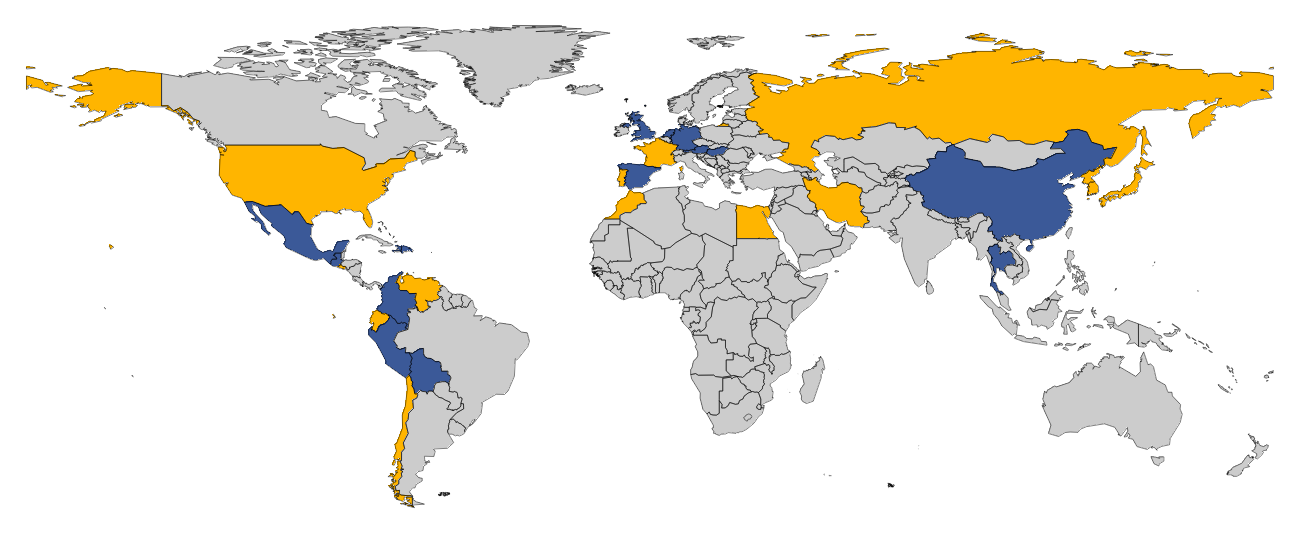}
    \caption{1816-1900}
    \end{subfigure}
    \hfill
    \begin{subfigure}{0.48\textwidth} 
    \includegraphics[width=\textwidth]{figures/localDense/1900-1950_yellow_blue.png}
    \caption{1900-1950}
    \end{subfigure}
    \begin{subfigure}{0.48\textwidth} 
    \includegraphics[width=\textwidth]{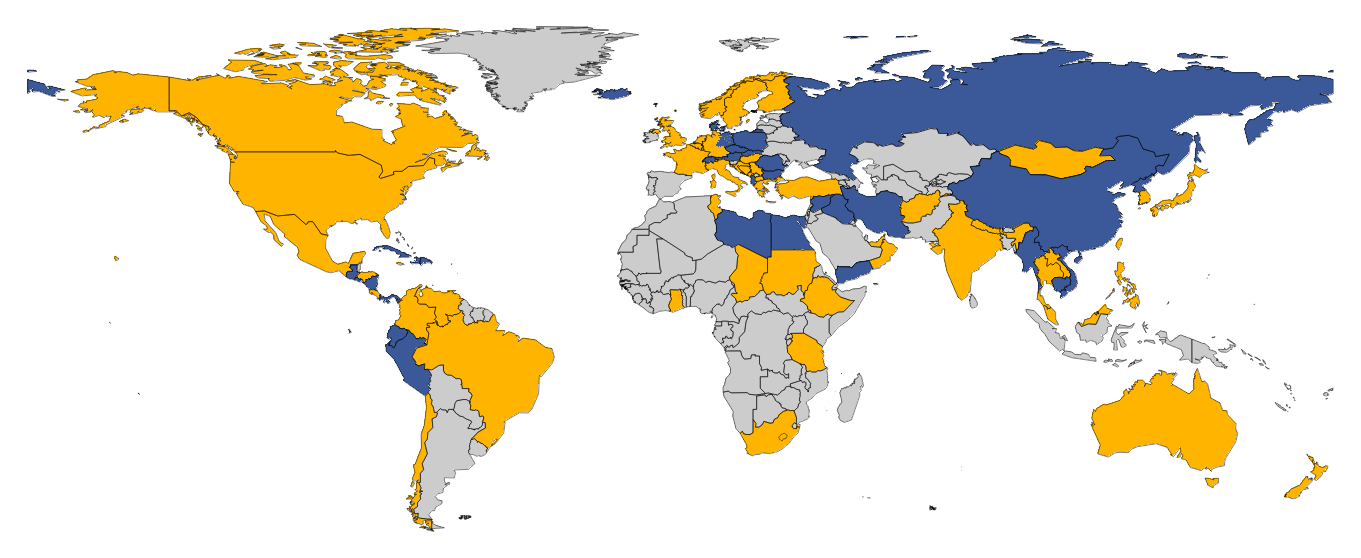}
    \caption{1950-1990}
    \end{subfigure}
    \hfill
    \begin{subfigure}{0.48\textwidth} 
    \includegraphics[width=\textwidth]{figures/localDense/1990-2010_yellow_blue.png}
    \caption{1990-2010}
    \end{subfigure}
    \caption[Example output of the \alglocbipartdc\ algorithm]{
    Clusters found by \alglocbipartdc\ on the interstate dispute network, using the USA as the seed vertex. In each case, countries in the yellow cluster tend to enter conflicts with countries in the blue cluster and vice versa.
    \label{fig:intro_undirected}}
\end{figure*}

 The second result of this chapter is a local algorithm for finding densely connected clusters
in a \emph{directed graph}.
\begin{mainresult}[See Theorem~\ref{thm:directedresult} for the formal statement]
Given a directed graph $\geqve$ and a seed vertex $u$, our local algorithm finds two clusters $\setl$ and $\setr$, such that (i) there are many edges \emph{from $\setl$ to $\setr$}, and (ii) $\lur$ is loosely connected to $\vertexset \setminus (\lur)$.
\end{mainresult}
The design of this algorithm is based on the following two techniques: (1) a
new reduction that allows us to relate the edge weight from $\setl$ to $\setr$, as well as the edge connections between $\lur$ and $\vertexset \setminus (\lur)$, to a \emph{single} vertex set in the resulting \emph{undirected} graph $\graphh$; (2) a refined analysis of the ESP-based algorithm for local graph clustering. 
 We demonstrate that the algorithm is able to recover local densely connected clusters in the US migration dataset, in which two vertex sets $\setl$ and $\setr$ defined as above could represent a higher-order migration trend.
 In particular, as shown in  Figure~\ref{fig:intro_directed}, by using different counties as starting vertices, the algorithm uncovers refined and more localised migration patterns than the previous work on the same dataset~\cite{cucuringuHermitianMatricesClustering2020}.
 This algorithm is the first local clustering algorithm that achieves a similar goal on directed graphs.
 
\begin{figure*}[t]
\centering
    \begin{subfigure}{0.48\textwidth} 
    \includegraphics[width=\textwidth]{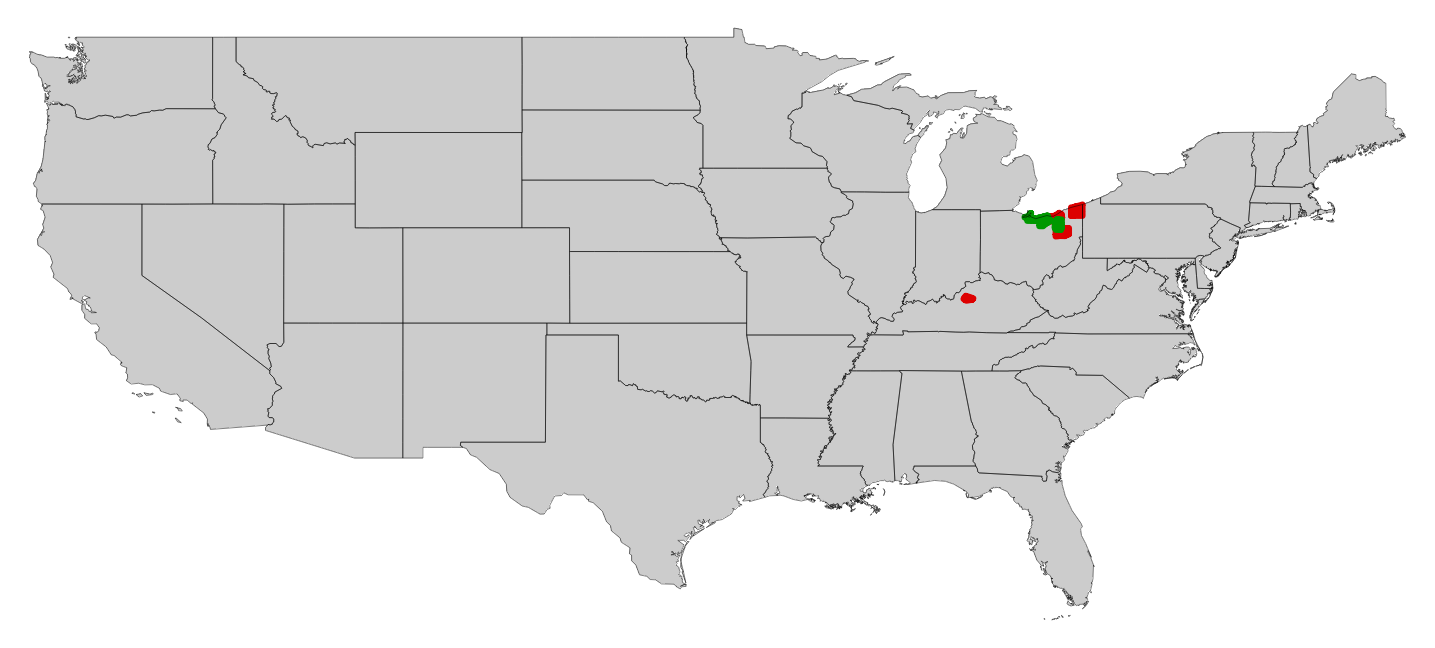}
    \caption{Ohio Seed}
    \end{subfigure}
    \hfill
    \begin{subfigure}{0.48\textwidth} 
    \includegraphics[width=\textwidth]{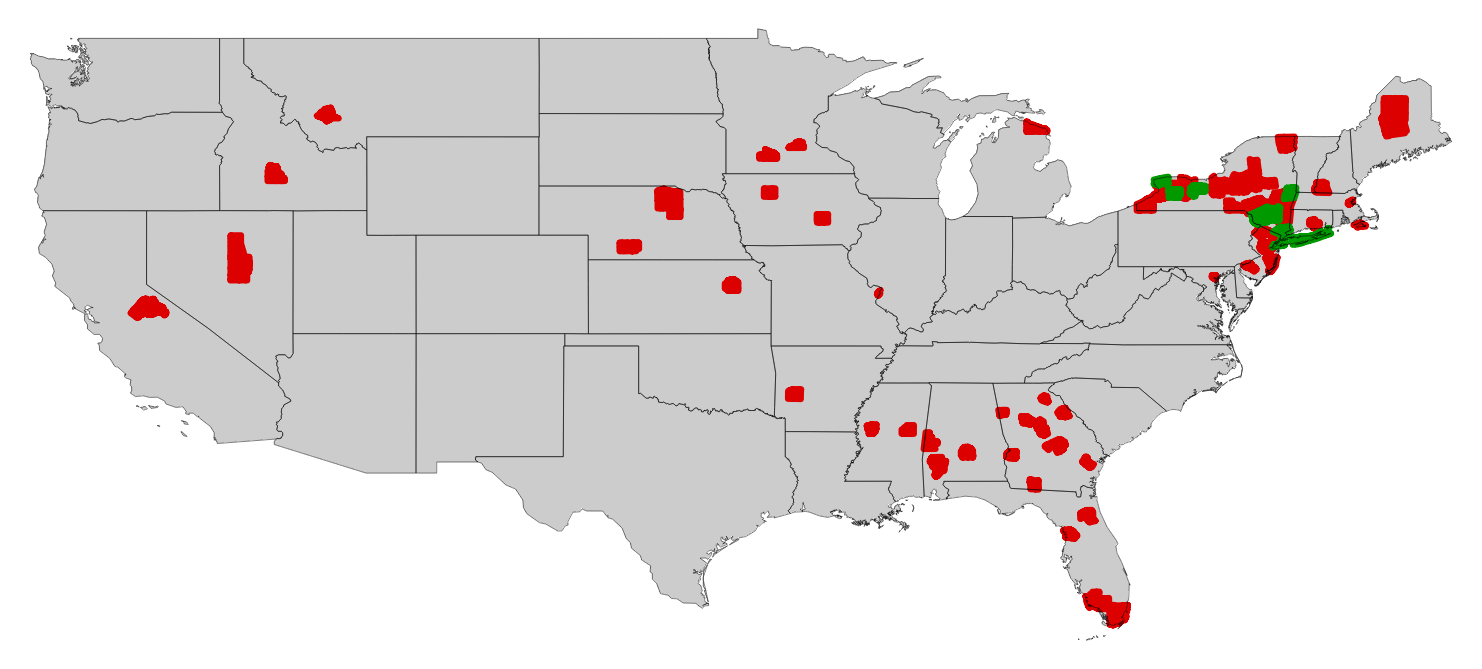}
    \caption{New York Seed}
    \end{subfigure}
    \begin{subfigure}{0.48\textwidth} 
    \includegraphics[width=\textwidth]{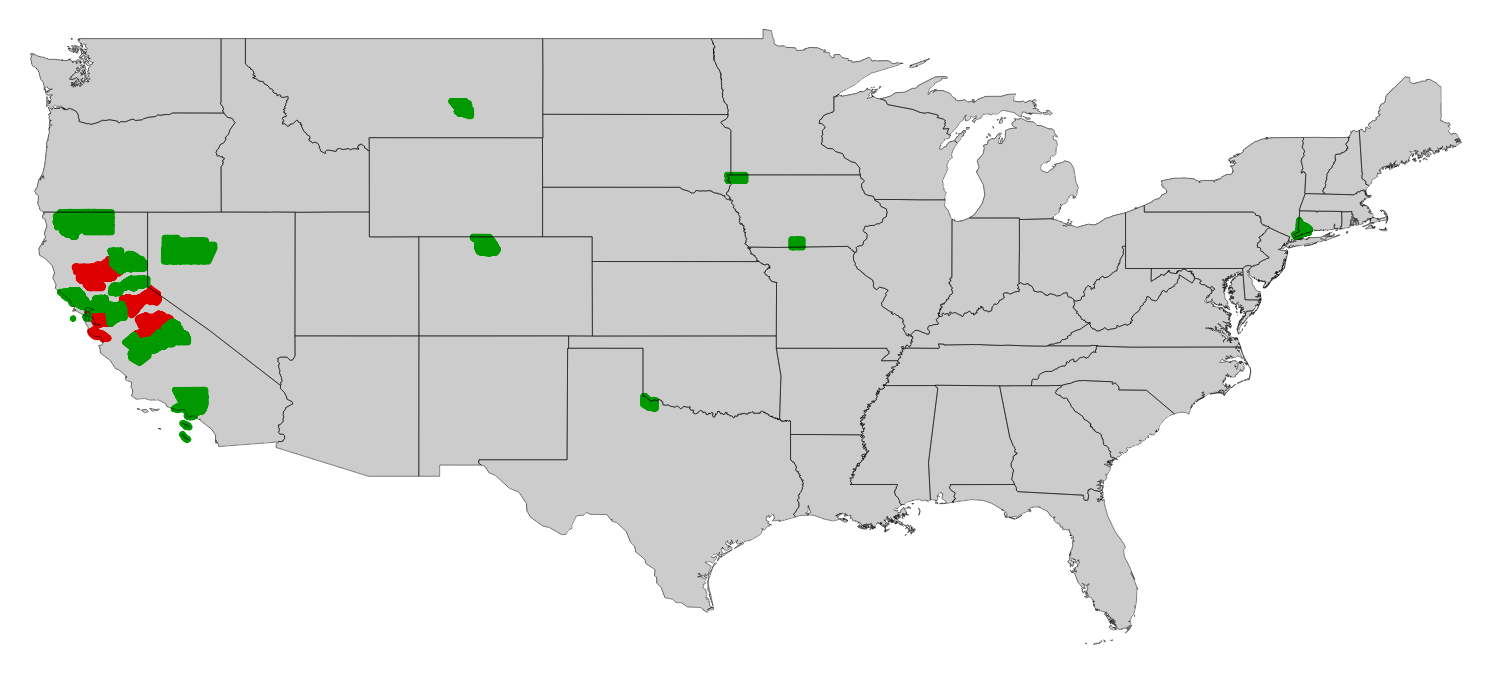}
    \caption{California Seed}
    \end{subfigure}
    \hfill
    \begin{subfigure}{0.48\textwidth} 
    \includegraphics[width=\textwidth]{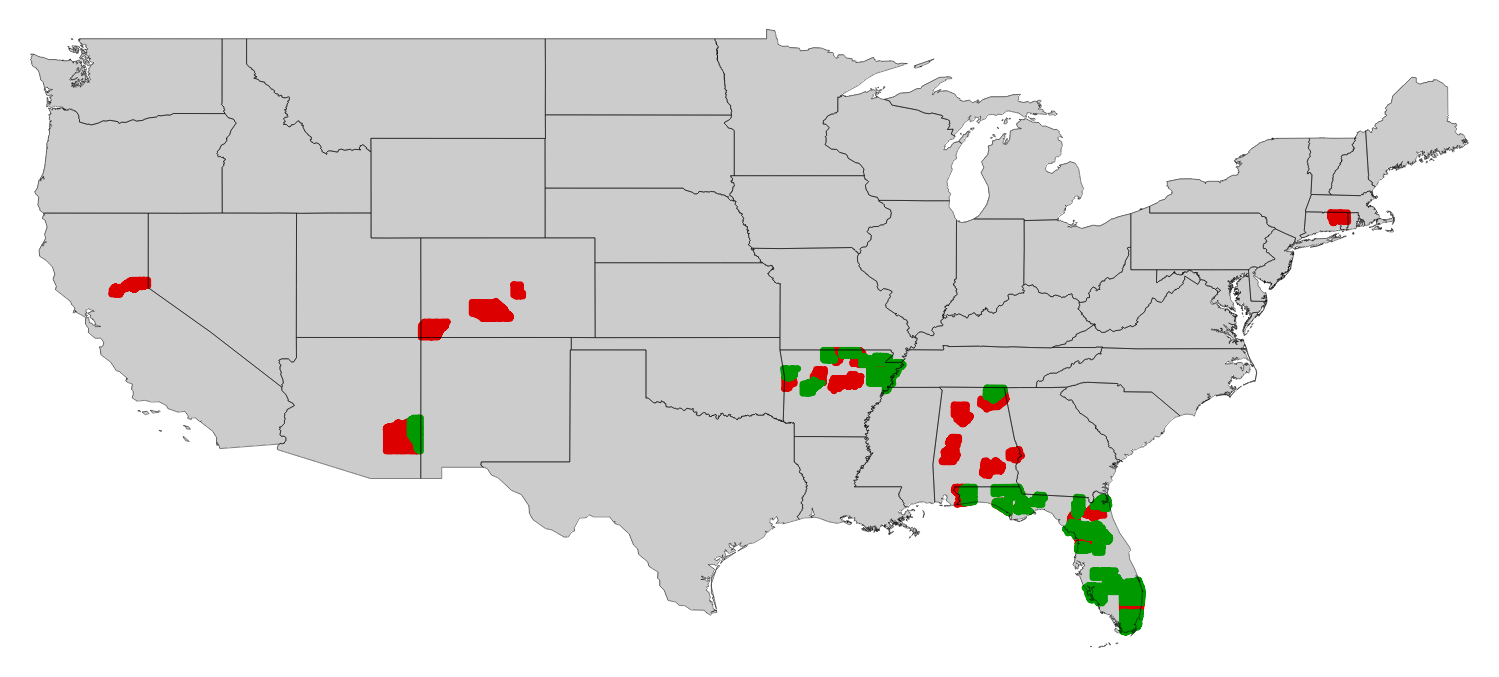}
    \caption{Florida Seed}
    \end{subfigure}
    \caption[Example output of the \algevocutdirected\ algorithm]{
    Clusters found by \algevocutdirected\ in the US migration network, in which four different counties are used as starting vertices of the local algorithm.
    In each case, people tend to migrate from the red counties to the green counties.} \label{fig:intro_directed}
\end{figure*}
 
\section{Algorithm for Undirected Graphs\label{sec:undirected}}
We first develop and analyse a local algorithm for finding two clusters in an undirected graph with a dense cut between them.
To formalise this notion, for any undirected graph $\geqve$ and disjoint $\setl, \setr \subset \vertexset$, we
 recall the definition of bipartiteness given in Section~\ref{sec:related:structured}:
\[
\bipart(\setl, \setr) \triangleq 1 - \frac{2 e(\setl, \setr)}{\vol(\lur)}.
\]
Notice that a low $\bipart(\setl, \setr)$ value means that there is a dense cut between $\setl$ and $\setr$, and there is a sparse cut between 
$\lur$ and $\vertexset \setminus (\lur)$.
In particular, $\bipart(\setl, \setr)=0$ implies that $(\setl, \setr)$ forms a bipartite and connected component of $\graphg$.
We   describe a local algorithm for finding almost-bipartite sets $\setl$ and $\setr$ with a low value of $\bipart(\setl, \setr)$.
 
\subsection{Reduction by Double Cover} \label{sec:reduction}
The design of most local algorithms for finding a target set $\sets \subset \vertexset$ of low conductance is based on analysing the behaviour of random walks starting from vertices in $\sets$. In particular, when the conductance $\cond_\graphg(\sets)$ is low, a random walk starting from most vertices in $\sets$ leaves $\sets$ with low probability.
However, for our setting, the target is a pair of sets $\setl, \setr$ with many connections between them. As such, a random walk starting in either $\setl$ or $\setr$ is very likely to leave the starting set. To address this,
 we   use a novel technique based on the double cover of $\graphg$
to reduce the problem of finding two sets of high conductance to the problem of finding one of \emph{low} conductance.
Formally, for any undirected graph $\graphg=(\vertexset_\graphg, \edgeset_\graphg)$, its \firstdef{double cover} is the  bipartite graph $\graphh=(\vertexset_\graphh, \edgeset_\graphh)$ defined as follows: (1) every vertex $v \in \vertexset_\graphg$ has two corresponding vertices $v_1, v_2\in \vertexset_\graphh$; (2) for every edge $\{u,v\}\in \edgeset_\graphg$, there are edges $\{u_1, v_2\}$  and $\{u_2, v_1\}$ in $\edgeset_\graphh$. See Figure~\ref{fig:dc} for an illustration.

\begin{figure}[t]
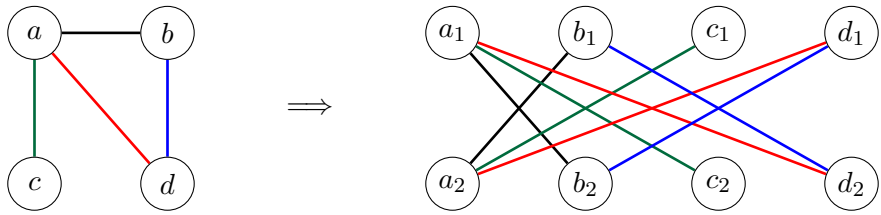

\centering
\scalebox{1}{\tikzfig{localDense/dcconstruct2}}
\caption[Construction of the double cover]{\small{An example of the construction of the double cover\label{fig:dc}}}
\end{figure}

Now  we show a tight connection between the value of $\bipart(\setl,\setr)$ for any disjoint  sets  $\setl, \setr \subset \vertexset_\graphg$ and
the conductance of
a \emph{single} set in the double cover of $\graphg$.
To this end, for any $\sets \subset \vertexset_\graphg$, we define $\sets_1\subset \vertexset_\graphh$ and $\sets_2\subset \vertexset_\graphh$ by $\sets_1 \triangleq \{v_1~|~ v\in \sets\}$ and $\sets_2\triangleq \{v_2~|~ v\in \sets\}$.
We formalise the connection in the following lemma.
 
\begin{lemma} \label{lem:cond_bip}
Let $\graphg$ be an undirected graph, $\sets \subset \vertexset$ with partitioning $(\setl, \setr)$, and $\graphh$ be the double cover of $\graphg$.
Then, it holds that
$\cond_\graphh(\setl_1 \union \setr_2) = \bipart_\graphg(\setl, \setr)$.
\end{lemma}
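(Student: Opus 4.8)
The plan is to rewrite $\cond_\graphh(\setl_1\union\setr_2)$ entirely in terms of quantities of $\graphg$ and then read off that it equals $\bipart_\graphg(\setl,\setr)$. Set $\sett\triangleq\setl_1\union\setr_2\subseteq\vertexseth$ and $\setm\triangleq\vertexset\setminus\sets$, so that $\vertexseth$ is the disjoint union $\setl_1\union\setr_1\union\setm_1\union\setl_2\union\setr_2\union\setm_2$ and $\comp{\sett}=\setr_1\union\setm_1\union\setl_2\union\setm_2$. The two ingredients I would use are (i) the degree-preservation property of the double cover, and (ii) the elementary identity $\vol(\sett)=2\,\weight_\graphh(\sett)+\weight_\graphh(\sett,\comp{\sett})$ that holds in any weighted graph, where $\weight_\graphh(\sett)$ denotes the total weight of edges with both endpoints inside $\sett$.

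First I would record the volumes. Since the neighbours of $v_1$ in $\graphh$ are exactly the vertices $u_2$ with $\{u,v\}\in\edgesetg$ (each lifted edge inheriting the weight $\weight(u,v)$), and symmetrically for $v_2$, we have $\deg_\graphh(v_1)=\deg_\graphh(v_2)=\deg_\graphg(v)$ for every $v$; hence $\vol_\graphh(\sett)=\vol_\graphg(\setl)+\vol_\graphg(\setr)=\vol_\graphg(\lur)$ and $\vol_\graphh(\vertexseth)=2\vol_\graphg(\vertexset)$. Because $\sets\subseteq\vertexset$ gives $\vol_\graphg(\sets)\leq\vol_\graphg(\vertexset)$, it follows that $\vol_\graphh(\sett)\leq\vol_\graphg(\vertexset)\leq\vol_\graphh(\comp{\sett})$, so the $\min$ in the definition of conductance is attained at $\sett$ and $\cond_\graphh(\sett)=\weight_\graphh(\sett,\comp{\sett})/\vol_\graphg(\lur)$.

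Next I would compute $\weight_\graphh(\sett)$, the internal weight. Each edge $\{u,v\}\in\edgesetg$ lifts to $\{u_1,v_2\}$ and $\{u_2,v_1\}$. Checking the possible locations of the two endpoints of $\{u,v\}$: if both lie in $\setl$, or both lie in $\setr$, or at least one lies in $\setm$, then neither lift has both endpoints in $\sett=\setl_1\union\setr_2$; and if the edge runs between $\setl$ and $\setr$, then exactly one of the two lifts is internal to $\sett$, with weight $\weight(u,v)$. Hence $\weight_\graphh(\sett)=\weight_\graphg(\setl,\setr)$, and the identity in (ii) gives $\weight_\graphh(\sett,\comp{\sett})=\vol_\graphg(\lur)-2\weight_\graphg(\setl,\setr)$. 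Combining with the previous paragraph, $\cond_\graphh(\sett)=1-2\weight_\graphg(\setl,\setr)/\vol_\graphg(\lur)=\bipart_\graphg(\setl,\setr)$, which is exactly the claim. The only point requiring a little care is the endpoint case check in this last step, together with the observation that $\vol_\graphh(\sett)$ is always the smaller side so the symmetric $\min$ collapses; everything else is a direct definition chase, so I do not anticipate a genuine obstacle.
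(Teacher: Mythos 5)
Your proof is correct and follows essentially the same route as the paper's: use degree preservation to get $\vol_\graphh(\setl_1\union\setr_2)=\vol_\graphg(\lur)$, observe that the only edges of $\graphh$ internal to $\setl_1\union\setr_2$ are the lifts of $\setl$--$\setr$ edges of $\graphg$, and combine via the identity relating volume, internal weight, and boundary weight. The only difference is that you explicitly verify that $\vol_\graphh(\setl_1\union\setr_2)\leq\vol_\graphh(\comp{\setl_1\union\setr_2})$ so the $\min$ in the conductance collapses, a point the paper's proof passes over silently; this is a welcome bit of extra care, not a different argument.
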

\begin{proof}
    Let $\sets' = \setl_1 \cup \setr_2$, and by definition we have  that $
        \vol_\graphg(\sets) = \vol_\graphh(\sets')
    $. On the other hand, it holds that
    \[
    \vol_\graphh(\sets') = \weight_\graphh(\sets', \vertexset \setminus \sets') + 2\weight_\graphh(\sets', \sets') = \weight_\graphh(\sets', \vertexset \setminus \sets') + 2 \weight_\graphh(\setl_1, \setr_2).
    \]
    This implies that 
    \[
    \cond_\graphh(\sets') = \frac{\weight_\graphh(\sets', \vertexset \setminus \sets')}{\vol_\graphh(\sets')} = \frac{\vol_\graphg(\sets) - 2\weight_\graphh(\setl_1, \setr_2)}{\vol_\graphg(\sets)} = 1 - \frac{2\weight_\graphg(\setl,\setr)}{\vol_\graphg(\sets)} = \bipart_\graphg(\setl, \setr),
    \]
    which proves the statement.
\end{proof}

Next we look at the other direction of this correspondence.
Specifically, given any $\sets \subset \vertexset_\graphh$ in the double cover of a graph $\graphg$, we would like to find two disjoint sets $\setl \subset \vertexset_\graphg$ and $\setr \subset \vertexset_\graphg$ such that $\bipart_\graphg(\setl, \setr) = \cond_\graphh(\sets)$.
However, such a connection does not hold in general.
To overcome this, we restrict our attention to those subsets of $\vertexset_\graphh$ which can be unambiguously interpreted as two disjoint sets in $\vertexset_\graphg$.

\begin{definition} (Simple set) \label{def:simple}
We call $\sets \subset \vertexset_\graphh$ \firstdef{simple} if $\abs{\{v_1,v_2\}\intersect \sets}\leq 1$ holds for all $v\in \vertexset_\graphg$.
\end{definition}

Given this definition, the following corollary follows directly from Lemma~\ref{lem:cond_bip}.

\begin{corollary}\label{lem:ReductionForSimpleset}
    For any  simple set $\sets \subset \vertexset_\graphh$, let  $\setl = \{u : u_1 \in \sets\}$ and $\setr = \{u : u_2 \in \sets\}$. Then,   
    $
        \bipart_\graphg(\setl, \setr) = \cond_\graphh(\sets).
    $
\end{corollary}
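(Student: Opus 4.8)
The plan is to obtain this as an immediate consequence of Lemma~\ref{lem:cond_bip}; the only actual work is to check that $(\setl,\setr)$ satisfies the hypotheses of that lemma and that the given set $\sets$ is exactly $\setl_1 \union \setr_2$ in the notation of Section~\ref{sec:reduction}.

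First I would observe that simplicity of $\sets$ forces $\setl$ and $\setr$ to be disjoint. Indeed, if some $u \in \vertexsetg$ belonged to $\setl \intersect \setr$, then both $u_1 \in \sets$ and $u_2 \in \sets$, so $\cardinality{\{u_1, u_2\} \intersect \sets} = 2$, contradicting Definition~\ref{def:simple}. Hence $(\setl, \setr)$ is a valid partitioning of the set $\sets' \triangleq \setl \union \setr \subseteq \vertexsetg$. Next I would verify the set identity $\sets = \setl_1 \union \setr_2$, where $\setl_1 = \{u_1 : u \in \setl\}$ and $\setr_2 = \{u_2 : u \in \setr\}$. Every vertex of $\vertexseth$ is of the form $v_1$ or $v_2$ for a unique $v \in \vertexsetg$; a vertex $v_1$ lies in $\sets$ if and only if $v \in \setl$, i.e. if and only if $v_1 \in \setl_1$, and likewise $v_2 \in \sets$ if and only if $v \in \setr$, i.e. if and only if $v_2 \in \setr_2$. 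Taking the union over the two types of vertices yields $\sets = \setl_1 \union \setr_2$.

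Finally I would apply Lemma~\ref{lem:cond_bip} to the graph $\graphg$, the set $\sets'$ with partitioning $(\setl, \setr)$, and its double cover $\graphh$, obtaining $\cond_\graphh(\setl_1 \union \setr_2) = \bipart_\graphg(\setl, \setr)$; combined with $\sets = \setl_1 \union \setr_2$ this gives $\cond_\graphh(\sets) = \bipart_\graphg(\setl, \setr)$, as required. I do not expect any genuine obstacle here, since this is a bookkeeping corollary; the only point requiring a little care is confirming that the map $v \mapsto \{v_1, v_2\}$ covers all of $\vertexseth$, so that no vertex of the double cover is overlooked when decomposing $\sets$ as $\setl_1 \union \setr_2$.
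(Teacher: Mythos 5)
Your proposal is correct and follows exactly the route the paper intends: the paper gives no separate proof, stating only that the corollary "follows directly from Lemma~\ref{lem:cond_bip}", and your argument simply fills in the routine bookkeeping (disjointness of $\setl$ and $\setr$ from simplicity, and the identity $\sets = \setl_1 \union \setr_2$) before invoking that lemma. No issues.
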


\subsection{Design of the Algorithm} \label{sec:algdesc}
So far we have shown that the problem of finding densely connected sets 
$\setl, \setr \subset \vertexset_\graphg$ can be reduced to finding $\sets \subset \vertexset_\graphh$ of low conductance in the double cover $\graphh$, and this reduction raises the natural question of whether existing local algorithms can be directly employed to find $\setl$ and $\setr$ in $\graphg$.
However, this is not the case: even though a set $\sets \subset \vertexset_\graphh$ returned by most local algorithms is guaranteed to have low conductance, vertices of $\graphg$ could be included in $\sets$ twice, and as such $\sets$ doesn't necessarily give us disjoint sets $\setl, \setr \subset \vertexset_\graphg$ with low value of $\bipart(\setl,\setr)$.
See Figure~\ref{fig:nonsimple} for an illustration.

\begin{figure}[t]
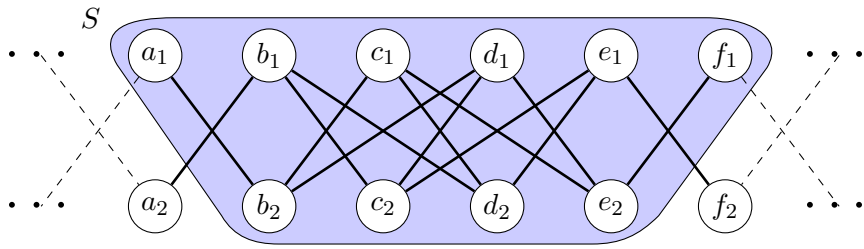

\centering
% \resizebox{2\totalheight}{!}{
\tikzfig{localDense/nonsimple3}
% }
\caption[A non-simple set in the double cover which has low conductance]{\small{The indicated vertex set $\sets \subset \vertexset_\graphh$
has low conductance.
However, since $\sets$ contains many pairs of vertices which correspond to the same vertex in $\graphg$, $\sets$ gives us little information for finding disjoint $\setl,\setr \subset \vertexset_\graphg$ with a low $\bipart(\setl,\setr)$ value.}}
\label{fig:nonsimple}
\end{figure}

\subsubsection{The Simplify Operator}
Taking the example shown in Figure~\ref{fig:nonsimple} into account,  our objective is to  design a local algorithm for finding some set $\sets \subset \vertexset_\graphh$ of low conductance which is also simple.
To ensure this, we  introduce  the simplify operator, and analyse its properties.
 
\begin{definition}[Simplify operator] Let $\graphh$ be the double cover of $\graphg$, where the two corresponding vertices of any $u\in \vertexset_\graphg$ are defined as $u_1,u_2\in \vertexset_\graphh$. Then, for any   $\vecp \in \R^{2n}_{\geq 0}$,  the simplify operator is a function $\sigma:\R^{2n}_{\geq 0} \rightarrow \R^{2n}_{\geq 0}$  defined by 
\begin{align*}
    	(\sigma\circ \vecp)(u_1) & \triangleq \max (0, \vecp(u_1) - \vecp(u_2)), \\
        (\sigma\circ \vecp)(u_2) & \triangleq \max (0, \vecp(u_2) - \vecp(u_1))
\end{align*}
for  every $u\in \vertexset_\graphg$.
\end{definition}
 
Notice that, for any vector $\vecp$ and any $u\in \vertexset_\graphg$,  at most one of $u_1$ and $u_2$ is in the support of $\sigma \circ \vecp$; hence, the support of $\sigma\circ \vecp$ is always simple. To explain the meaning of $\sigma$, for a vector $\vecp$ one could view $\vecp(u_1)$ as our `confidence' that $u\in \setl$, and   $\vecp(u_2)$ as our `confidence' that $u\in \setr $. Hence, when $\vecp(u_1) \approx \vecp(u_2)$, both $(\sigma \circ \vecp)(u_1)$ and $(\sigma \circ \vecp)(u_2)$ are low which captures the fact that we would not prefer to include $u$ in either $\setl$ or $\setr$. 
On the other hand, when $\vecp(u_1) \gg \vecp(u_2)$, we have $(\sigma \circ \vecp)(u_1) \approx \vecp(u_1)$, which captures our confidence that $u$ should belong to  $\setl$. 
The following lemma summaries some key properties of $\sigma$.
 
\begin{lemma} \label{lem:sigmaproperty}
The  following  holds for the 
$\sigma$-operator:   
\begin{itemize} 
    \item $\sigma\circ(c\cdot \vecp) = c\cdot (\sigma \circ \vecp)$ for $\vecp \in \R_{\geq 0}^{2n}$ and any $c \in\R_{\geq 0}$;
    \item $\sigma\circ(\veca + \vecb) \preceq \sigma\circ \veca + \sigma\circ \vecb$ for $\veca, \vecb \in \R_{\geq 0}^{2n}$;
    \item $\sigma \circ (\lazywalkm \vecp) \preceq  \lazywalkm (\sigma \circ \vecp)$ for $\vecp \in \R_{\geq 0}^{2n}$.
\end{itemize}
\end{lemma}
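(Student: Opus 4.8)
The plan is to verify each of the three properties directly from the definition of the simplify operator, handling them in increasing order of difficulty. Throughout, fix $\vecp \in \R_{\geq 0}^{2n}$ and recall that for each $u \in \vertexset_\graphg$, $(\sigma \circ \vecp)(u_1) = \max(0, \vecp(u_1) - \vecp(u_2))$ and $(\sigma \circ \vecp)(u_2) = \max(0, \vecp(u_2) - \vecp(u_1))$. The key elementary fact I will use repeatedly is that $\max(0, x)$ is a nonnegative, monotone, subadditive, and positively homogeneous function of $x \in \R$: that is, $\max(0, cx) = c\max(0,x)$ for $c \geq 0$, and $\max(0, x+y) \leq \max(0,x) + \max(0,y)$.

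For the first property, I would simply note that $(\sigma \circ (c\vecp))(u_1) = \max(0, c\vecp(u_1) - c\vecp(u_2)) = c\max(0, \vecp(u_1) - \vecp(u_2)) = c(\sigma \circ \vecp)(u_1)$ by positive homogeneity, and symmetrically for $u_2$; since this holds coordinatewise, $\sigma \circ (c\vecp) = c(\sigma \circ \vecp)$. For the second property, write $x = \veca(u_1) - \veca(u_2)$ and $y = \vecb(u_1) - \vecb(u_2)$; then $(\sigma \circ (\veca + \vecb))(u_1) = \max(0, x + y) \leq \max(0, x) + \max(0, y) = (\sigma \circ \veca)(u_1) + (\sigma \circ \vecb)(u_1)$ by subadditivity, and again symmetrically for $u_2$, giving the claimed coordinatewise inequality $\sigma \circ (\veca + \vecb) \preceq \sigma \circ \veca + \sigma \circ \vecb$.

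The third property is the main obstacle, and I expect it to require the structure of the double cover rather than just pointwise inequalities. The idea is to decompose $\lazywalkm \vecp$ over the edges of $\graphh$. Since $\graphh$ is bipartite with the two copies of each $\graphg$-vertex lying on opposite sides, the lazy walk matrix $\lazywalkm = \tfrac12(\identity + \adjn_\graphh \degm_\graphh^{-1})$ moves mass from $v_2$ to $u_1$ exactly when $\{u,v\} \in \edgeset_\graphg$, with the same transition weight as from $v_1$ to $u_2$. Writing $(\lazywalkm\vecp)(u_1) - (\lazywalkm\vecp)(u_2)$ and grouping the contributions of each neighbour $v$ of $u$ in $\graphg$, the incoming mass to $u_1$ comes from the $v_2$'s and the incoming mass to $u_2$ comes from the $v_1$'s, so the difference decomposes into a sum of terms proportional to $\vecp(v_2) - \vecp(v_1)$ (plus the lazy term proportional to $\vecp(u_1) - \vecp(u_2)$). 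Applying $\max(0, \cdot)$ and using subadditivity distributes over this sum; then using $\max(0, \vecp(v_2) - \vecp(v_1)) = (\sigma \circ \vecp)(v_2)$ and $\max(0, \vecp(u_1)-\vecp(u_2)) = (\sigma\circ\vecp)(u_1)$, one recognises the resulting expression as exactly $(\lazywalkm(\sigma \circ \vecp))(u_1)$. The symmetric computation handles the $u_2$ coordinate, and since both coordinates satisfy the inequality we conclude $\sigma \circ (\lazywalkm \vecp) \preceq \lazywalkm(\sigma \circ \vecp)$. The bookkeeping of which copy of $v$ feeds which copy of $u$ — i.e., getting the bipartite incidence structure of the double cover exactly right — is the delicate part and the only place where a careless sign or index swap would break the argument.
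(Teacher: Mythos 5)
Your proposal is correct and follows essentially the same route as the paper: the first two properties are the same coordinatewise use of positive homogeneity and subadditivity of $\max(0,\cdot)$, and for the third you use the same decomposition of $(\lazywalkm\vecp)(u_1)-(\lazywalkm\vecp)(u_2)$ over the double-cover neighbourhood (the $v_2$'s feeding $u_1$ and the $v_1$'s feeding $u_2$). The only cosmetic difference is that the paper argues via a without-loss-of-generality case split, the exact identity $\vecp(v_2)-\vecp(v_1)=(\sigma\circ\vecp)(v_2)-(\sigma\circ\vecp)(v_1)$, and then dropping nonnegative terms, whereas you apply subadditivity of $\max(0,\cdot)$ directly to the decomposition — the same calculation packaged slightly more compactly.
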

\begin{proof}
For any $u_1 \in \vertexset_\graphh$, we have by definition that 
\begin{align*}
    \sigma\circ(c\cdot  \vecp)(u_1) & = \max(0, c\cdot \vecp(u_1) - c\cdot  \vecp(u_2)) \\
    & = c\cdot \max(0, \vecp(u_1) - \vecp(u_2)) \\
    & = c \cdot \sigma\circ \vecp(u_1),
\end{align*}
and similarly we have $\sigma\circ(c\cdot \vecp)(u_2) = c \cdot \sigma\circ \vecp(u_2)$. Therefore, the first statement holds.

Now we prove the second statement. Let $u_1, u_2 \in \vertexset_\graphh$ be the vertices that correspond to $u\in \vertexset_\graphg$. We have by definition that 
\begin{align*}
    \sigma\circ(\veca + \vecb)(u_1) & = \max (0, \veca(u_1) + \vecb(u_1) - \veca(u_2) - \vecb(u_2) )  \\ 
   &  = \max (0, \veca(u_1)  - \veca(u_2) + \vecb(u_1) - \vecb(u_2) ) \\
   & \leq \max (0, \veca(u_1)  - \veca(u_2)) + \max (0, \vecb(u_1) - \vecb(u_2) )\\
   &  = \sigma\circ \veca(u_1) + \sigma\circ \vecb(u_1),
\end{align*}
where the last inequality holds by the fact that $\max(0, x+y) \leq \max (0, x) + \max(0,y) $ for any $x,y\in\R$.
For the same reason, we have that 
\[
    \sigma\circ(\veca+\vecb) (u_2) \leq \sigma\circ \veca(u_2) + \sigma\circ \vecb(u_2).
\]
Combining these proves the second statement.

Finally, we   prove the third statement.
Recall that the lazy random walk matrix for a graph is defined to be $\lazywalkm = \frac{1}{2}(\identity + \adj \degm^{-1})$.
By this definition, we have that
\[
(\lazywalkm \vecp)(u_1) = \frac{1}{2}\vecp(u_1) + \frac{1}{2} \sum_{v \in N_\graphg(u)} \frac{\vecp(v_2)}{\deg_\graphh(v_2)},
\]
and
\[
(\lazywalkm \vecp)(u_2) = \frac{1}{2}\vecp(u_2) + \frac{1}{2} \sum_{v \in N_\graphg(u)} \frac{\vecp(v_1)}{\deg_\graphh(v_1)}.
\]
Without loss of generality, we assume that $(\lazywalkm \vecp)(u_1) \geq (\lazywalkm \vecp)(u_2)$. This implies that $(\sigma\circ(\lazywalkm \vecp))(u_2) = 0 \leq (\lazywalkm \sigmap)(u_2)$, hence it suffices to prove that $(\sigma\circ(\lazywalkm \vecp))(u_1) \leq (\lazywalkm \sigmap)(u_1)$. By definition, we have that 
\begin{align}
(\sigma\circ(\lazywalkm \vecp))(u_1)   &  = \max(0, (\lazywalkm \vecp)(u_1) - (\lazywalkm \vecp)(u_2)) \notag \\
& = (\lazywalkm \vecp)(u_1) - (\lazywalkm \vecp)(u_2) \notag\\
& = \frac{1}{2}\left( \vecp(u_1) - \vecp(u_2)\right)  + \frac{1}{2} \sum_{v \in N_\graphg(u)} \left( \frac{\vecp(v_2)}{\deg_\graphh(v_2)} -\frac{\vecp(v_1)}{\deg_\graphh(v_1)} \right) \notag\\
& = \frac{1}{2}\left(\vecp(u_1) - \vecp(u_2)\right)  + \frac{1}{2} \sum_{v \in N_\graphg(u)} \frac{\vecp(v_2) - \vecp(v_1)}{\deg_\graphg(v)}, \label{eq:sigmadif}
\end{align}
where the last equality follows by the fact that $\deg_\graphh(v_1) = \deg_\graphh(v_2) = \deg_\graphg(v)$  for any $v\in \vertexset_\graphg$. To analyse \eqref{eq:sigmadif}, notice that $\vecp(v_1) - \vecp(v_2) =(\sigma\circ \vecp)(v_1) -\sigmap (v_2)$ for any $v\in \vertexset_\graphg$ for the following reasons:
\begin{itemize}
    \item When $\vecp(v_1)\geq \vecp(v_2)$, we have that $
(\sigma\circ \vecp)(v_1) -\sigmap (v_2) = \left(\vecp(v_1) - \vecp(v_2)\right) - 0 = \vecp(v_1) - \vecp(v_2)$;
    \item Otherwise, we have $\vecp(v_1)<\vecp(v_2)$ and $
(\sigma\circ \vecp)(v_1) -\sigmap (v_2) = 0 - (\vecp(v_2) - \vecp(v_1)) = \vecp(v_1) - \vecp(v_2)$.
\end{itemize} 
Therefore, we can rewrite \eqref{eq:sigmadif} as  
\begin{align*}
        (\sigma\circ (\lazywalkm \vecp))(u_1)  
       &  = \frac{1}{2}((\sigma\circ \vecp)(u_1) - (\sigma\circ \vecp)(u_2)) + \frac{1}{2} \sum_{v \in N(u)} \frac{ \sigmap(v_2) - \sigmap(v_1)}{\deg_\graphg(v)}  \\
   & \leq \frac{1}{2}\cdot\sigmap(u_1) + \frac{1}{2}\sum_{v \in N(u)} \frac{ \sigmap(v_2) }{\deg_\graphg(v)} \\
    & = (\lazywalkm \sigmap)(u_1),
    \end{align*}
where the inequality holds by the fact that $\sigmap(u_1)\geq 0$ and $\sigmap(u_2)\geq 0$ for any $u\in \vertexset_\graphg$.
Since the analysis above holds for any $v\in \vertexset_\graphg$, we have that $\sigma\circ(\lazywalkm \vecp)\preceq \lazywalkm \sigmap$, which finishes the proof of the lemma.
\end{proof}

While our analysis is based on all the three properties in Lemma~\ref{lem:sigmaproperty}, the third is of particular importance:  it implies that, if $\vecp$ is the probability distribution of a random walk in $\graphh$, applying $\sigma$ before taking a one-step random walk would never result in lower probability mass than applying $\sigma$ after taking a one-step random walk.
This means that no probability mass would be lost when the $\sigma$-operator is applied between every step of a random walk, in comparison with applying $\sigma$ at the end of an entire random walk process.

\subsubsection{Description of the Algorithm}
Our proposed algorithm is conceptually simple: every vertex $u$ of the input graph $\graphg$ maintains two copies $u_1, u_2$ of itself, and these two `virtual' vertices are used to simulate $u$'s corresponding vertices in the double cover $\graphh$ of $\graphg$.
Then, as the neighbours of $u_1$ and $u_2$ in $\graphh$ are entirely determined by $u$'s neighbours in $\graphg$ and can be constructed locally, a random walk process in $\graphh$ is simulated in $\graphg$.
This   allows us to apply a local algorithm similar to the one by Anderson et al.~\cite{andersenLocalGraphPartitioning2006} on this `virtual' graph $\graphh$.
Finally, since all the required information about $u_1,u_2\in \vertexset_\graphh$ is maintained by $u\in \vertexset_\graphg$, the $\sigma$-operator can be applied locally. 

The formal description of the algorithm is given in Algorithm~\ref{alg:local_max_cut}, which invokes Algorithm~\ref{alg:aprdc} as the key component  to compute the approximate \pagerank\ $\apr_\graphh(\alpha, \indicatorvec_{u_1}, \vecr)$. Specifically, Algorithm~\ref{alg:aprdc} maintains, for every vertex $u \in \vertexset_\graphg$, tuples $(\vecp(u_1), \vecp(u_2))$ and $(\vecr(u_1), \vecr(u_2))$ to keep track of the values of $\vecp$ and $\vecr$ in $\graphg$'s double cover.
For a given vertex $u$, the entries in these tuples are expressed by $\vecp_1(u), \vecp_2(u), \vecr_1(u),$ and $\vecr_2(u)$ respectively.
Every $\algdcpush$ operation~(Algorithm~\ref{alg:dcpush}) preserves the invariant
$
        \vecp + \ppr_\graphh(\alpha, \vecr) = \ppr_\graphh(\alpha, \indicatorvec_{v_1}),
$
which ensures that the final output of Algorithm~\ref{alg:aprdc} is an approximate Pagerank vector.
It is worth noting that, although the presentation of the \algaprdc\ procedure is   similar to the one in Andersen et al.~\cite{andersenLocalGraphPartitioning2006}, in the \algdcpush\ procedure the update of the residual vector $\vecr$ is slightly more involved: specifically, for every vertex $u\in \vertexset_\graphg$, both $\vecr_1(u)$ and $\vecr_2(u)$ are needed in order to update $\vecr_1(u)$ (or $\vecr_2(u)$).
That is one of the reasons that the performance of the algorithm in Andersen et al.~\cite{andersenLocalGraphPartitioning2006} cannot be directly applied for this algorithm, and a more technical analysis, some of which is parallel to theirs, is needed in order to analyse the correctness and performance of the algorithm.

\begin{algorithm}[htb] \SetAlgoLined
\SetKwInOut{Input}{Input}
\SetKwInOut{Output}{Output}
   \Input{A graph $\graphg$, starting vertex $u$, target volume $\gamma$, and target bipartiteness $\beta$}
   \Output{Two disjoint sets $\setl$ and $\setr$}
  Set $\alpha = \frac{\beta^2}{378}$, and  $\epsilon = \frac{1}{20 \gamma}$ \\
Compute $\vecp' = \algaprdc(u, \alpha, \epsilon)$ \\
Compute $\vecp = \sigma \circ \vecp'$ \\
\For{$j \in [1, \abs{\supp(p)}]$}{
    \If{$\cond(\pjsweep) \leq \beta$}{
      Set $\setl = \{u : u_1 \in \pjsweep \}$, and   $\setr = \{u : u_2 \in \pjsweep \}$ \\
    \Return $(\setl, \setr)$
    }
}
   \caption[Find densely connected clusters locally: \alglocbipartdc$(\graphg, u, \gamma, \beta)$]{\alglocbipartdc$(\graphg, u, \gamma, \beta)$ \label{alg:local_max_cut}}
\end{algorithm}

\begin{algorithm}[htb] \SetAlgoLined
\SetKwInOut{Input}{Input}
\SetKwInOut{Output}{Output}
   \Input{Starting vertex $v$, parameters $\alpha$ and $\epsilon$ }
   \Output{Approximate Pagerank vector $\apr_H(\alpha, \indicatorvec_{v_1}, \vecr)$}
   Set $\vecp_1 = \vecp_2 = \vecr_2 = \zerovec$; set $\vecr_1 = \indicatorvec_v$ \\
   \While{$\max_{(u, i) \in \vertexset \times \{1, 2\}} \frac{\vecr_i(u)}{\deg(u)} \geq \epsilon$} {
   Choose any $u$ and $i \in \{1, 2\}$ such that $\frac{\vecr_i(u)}{\deg(u)} \geq \epsilon$ \\
   $(\vecp_1, \vecp_2, \vecr_1, \vecr_2) = \algdcpush(\alpha, (u, i), \vecp_1, \vecp_2, \vecr_1, \vecr_2)$
   }
   \Return $\vecp = \left[\vecp_1, \vecp_2 \right]$
   \caption[Approximate \pagerank\ on the double cover: \algaprdcshort$(v, \alpha, \epsilon)$]{\algaprdc$(v, \alpha, \epsilon)$ \label{alg:aprdc}}
\end{algorithm}

\begin{algorithm}[htb] \SetAlgoLined
\SetKwInOut{Input}{Input}
\SetKwInOut{Output}{Output}
   \Input{$\alpha, (u, i), \vecp_1, \vecp_2, \vecr_1, \vecr_2$}
   \Output{$(\vecp'_1, \vecp'_2, \vecr'_1, \vecr'_2)$}
   Set $(\vecp_1', \vecp_2', \vecr_1', \vecr_2') = (\vecp_1, \vecp_2, \vecr_1, \vecr_2)$ \\
   Set $\vecp'_i(u) = \vecp_i(u) + \alpha \vecr_i(u)$ \\
   Set $\vecr'_i(u) = (1 - \alpha) \frac{\vecr_i(u)}{2}$ \\
   \For{$v\in N_\graphg(u)$}{
   Set $\vecr'_{3 - i}(v) = \vecr_{3 - i}(v) + (1 - \alpha) \frac{\vecr_i(u)}{2 \deg(u)}$
   }
   \Return $(\vecp'_1, \vecp'_2, \vecr'_1, \vecr'_2)$
   \caption[\pagerank\ helper method: \algdcpush$(\alpha, u, i, \vecp_1, \vecp_2, \vecr_1, \vecr_2)$]{\algdcpush$(\alpha, (u, i), \vecp_1, \vecp_2, \vecr_1, \vecr_2)$}
   \label{alg:dcpush}
\end{algorithm}

\subsection{Analysis of \algaprdc}
We begin by considering the \algaprdc\ algorithm,  and   show that it computes an approximate \pagerank\ vector on the double cover of the input graph.
The following lemma shows that \algdcpush\ maintains that throughout the algorithm $\left[\vecp_1^\transpose, \vecp_2^\transpose\right]^\transpose$ is an approximate Pagerank vector on the double cover with residual $\left[\vecr_1^\transpose, \vecr_2^\transpose\right]^\transpose$,
where $\left[\veca^\transpose, \vecb^\transpose\right]^\transpose$ for $\veca \in \R^{n_1}$ and $\vecb \in \R^{n_2}$ is the vector in $\R^{n_1 + n_2}$ obtained by concatenating $\veca$ and $\vecb$.
\begin{lemma} \label{lem:dcpush}
Let $\vecp'_1, \vecp'_2, \vecr'_1$ and $\vecr'_2$ be the result of $\algdcpush((u, i), \vecp_1, \vecp_2, \vecr_1, \vecr_2)$. Let $\vecp = \left[\vecp_1^\transpose, \vecp_2^\transpose \right]^\transpose$, $\vecr = \left[\vecr_1^\transpose, \vecr_2^\transpose \right]^\transpose$, $\vecp' = \left[(\vecp'_1)^\transpose, (\vecp'_2)^\transpose \right]^\transpose$ and $\vecr' = \left[(\vecr'_1)^\transpose, (\vecr'_2)^\transpose \right]^\transpose$.
Then, we have that
$$
\vecp' + \ppr_\graphh(\alpha, \vecr') = \vecp + \ppr_\graphh(\alpha, \vecr).$$
\end{lemma}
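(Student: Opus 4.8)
The plan is to reduce the statement to two elementary facts about the personalised \pagerank\ operator on the double cover $\graphh$: first, that $\ppr_\graphh(\alpha,\cdot)$ is linear in its second argument, i.e. $\ppr_\graphh(\alpha, c\veca + \vecb) = c\,\ppr_\graphh(\alpha,\veca) + \ppr_\graphh(\alpha,\vecb)$ for $c \geq 0$; and second, that it satisfies the identities $\ppr_\graphh(\alpha,\vecx) = \alpha\vecx + (1-\alpha)\ppr_\graphh(\alpha, \lazywalkm\vecx)$. Both follow immediately from the series representation $\ppr_\graphh(\alpha,\vecx) = \alpha\sum_{t\geq 0}(1-\alpha)^t\lazywalkm^t\vecx$ recorded earlier (Andersen et al.), since $\graphh$ is just an ordinary undirected graph. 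Granted these, the lemma becomes a one-line substitution once the effect of a single \algdcpush\ step is written in vector form.

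First I would translate the three update lines of \algdcpush\ into an action on the concatenated vectors $\vecp = [\vecp_1^\transpose, \vecp_2^\transpose]^\transpose$ and $\vecr = [\vecr_1^\transpose, \vecr_2^\transpose]^\transpose$ indexed by $\vertexseth$. Writing $u_i \in \vertexseth$ for the copy being pushed, and recalling that in the double cover the neighbours of $u_i$ are exactly $\{v_{3-i} : v \in N_\graphg(u)\}$ with $\deg_\graphh(u_i) = \deg_\graphg(u)$, so that $\lazywalkm \indicatorvec_{u_i} = \tfrac12\indicatorvec_{u_i} + \tfrac{1}{2\deg_\graphg(u)}\sum_{v\in N_\graphg(u)}\indicatorvec_{v_{3-i}}$, the push step is precisely
\[
\vecp' = \vecp + \alpha\,\vecr(u_i)\,\indicatorvec_{u_i},
\qquad
\vecr' = \vecr - \vecr(u_i)\,\indicatorvec_{u_i} + (1-\alpha)\,\vecr(u_i)\,\lazywalkm\indicatorvec_{u_i}.
\]
The one bookkeeping point that needs care here is that the mass removed from $u_i$ lands on the \emph{opposite} copies $v_{3-i}$ of $u$'s $\graphg$-neighbours — this is exactly what the ``$3-i$'' in \algdcpush\ encodes — so one must confirm the push really is $(1-\alpha)\vecr(u_i)\lazywalkm\indicatorvec_{u_i}$ and not the analogous expression with $v_i$ in place of $v_{3-i}$.

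Then I would simply compute, using linearity of $\ppr_\graphh(\alpha,\cdot)$:
\[
\vecp' + \ppr_\graphh(\alpha,\vecr')
= \vecp + \alpha\vecr(u_i)\indicatorvec_{u_i} + \ppr_\graphh(\alpha,\vecr)
- \vecr(u_i)\,\ppr_\graphh(\alpha,\indicatorvec_{u_i})
+ (1-\alpha)\vecr(u_i)\,\ppr_\graphh(\alpha,\lazywalkm\indicatorvec_{u_i}).
\]
Applying $\ppr_\graphh(\alpha,\indicatorvec_{u_i}) = \alpha\indicatorvec_{u_i} + (1-\alpha)\ppr_\graphh(\alpha,\lazywalkm\indicatorvec_{u_i})$, the last two terms collapse to $-\alpha\vecr(u_i)\indicatorvec_{u_i}$, which cancels the $+\alpha\vecr(u_i)\indicatorvec_{u_i}$ term, leaving $\vecp + \ppr_\graphh(\alpha,\vecr)$, as claimed. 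I do not expect a genuine obstacle; the only places to be careful are (i) verifying the two $\ppr$-identities transfer verbatim to $\graphh$ (they do, being purely about the series form), and (ii) the double-cover adjacency translation above. Finally it is worth noting in passing that this lemma is the inductive step for the invariant $\vecp + \ppr_\graphh(\alpha,\indicatorvec_{v_1}) = \ppr_\graphh(\alpha,\indicatorvec_{v_1})$ wait — more precisely, $\vecp + \ppr_\graphh(\alpha,\vecr) = \ppr_\graphh(\alpha,\indicatorvec_{v_1})$ throughout \algaprdc, since it holds at initialisation where $\vecp = \zerovec$ and $\vecr = \indicatorvec_{v_1}$.
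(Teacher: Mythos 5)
Your proposal is correct and follows essentially the same route as the paper: both express the \algdcpush\ update in vector form as $\vecp' = \vecp + \alpha\vecr(u_i)\indicatorvec_{u_i}$, $\vecr' = \vecr - \vecr(u_i)\indicatorvec_{u_i} + (1-\alpha)\vecr(u_i)\lazywalkm_\graphh\indicatorvec_{u_i}$, and then invoke linearity of $\ppr_\graphh(\alpha,\cdot)$ together with the identity $\ppr_\graphh(\alpha,\vecs) = \alpha\vecs + (1-\alpha)\ppr_\graphh(\alpha,\lazywalkm_\graphh\vecs)$, both justified from the series representation. The only cosmetic difference is the direction of the computation (you expand $\vecp' + \ppr_\graphh(\alpha,\vecr')$, the paper expands $\vecp + \ppr_\graphh(\alpha,\vecr)$), which is immaterial.
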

\begin{proof}
    After the push operation, we have
    \begin{align*}
        \vecp' & = \vecp + \alpha \vecr(u_i) \indicatorvec_{u_i} \\
        \vecr' & = \vecr - \vecr(u_i) \indicatorvec_{u_i} + (1 - \alpha) \vecr(u_i) \lazywalkm_\graphh \indicatorvec_{u_i}  
    \end{align*}
    where $\lazywalkm_\graphh$ is the lazy random walk matrix on the double cover of the input graph. 
    Then, we have
    \begin{align*}
        \lefteqn{\vecp + \ppr_\graphh(\alpha, \vecr)}\\
        & = \vecp + \ppr_\graphh(\alpha, \vecr - \vecr(u_i) \indicatorvec_{u_i}) + \ppr_\graphh(\alpha, \vecr(u_i) \indicatorvec_{u_i}) \\
        & = \vecp + \ppr_\graphh(\alpha, \vecr - \vecr(u_i) \indicatorvec_{u_i}) + \left[\alpha \vecr(u) \indicatorvec_{u_i} + (1 - \alpha) \ppr_\graphh(\alpha, \vecr(u) \lazywalkm_\graphh \indicatorvec_{u_i}) \right] \\
        & = \left[ \vecp + \alpha \vecr(u_i) \indicatorvec_{u_i}\right] + \ppr_\graphh(\alpha, \left[\vecr - \vecr(u_i) \indicatorvec_{u_i} + (1 - \alpha)\vecr(u_i)\lazywalkm_\graphh \indicatorvec_{u_i}\right]) \\
        & = \vecp' + \ppr_\graphh(\alpha, \vecr').
    \end{align*}
    Here, we use the fact that $\ppr(\alpha, \cdot)$ is linear and 
    \[
        \ppr(\alpha, \vecs) = \alpha \vecs + (1 - \alpha) \ppr(\alpha, \lazywalkm_\graphh \vecs),
    \]
    which follows from the fact that $\ppr(\alpha, \vecs) = \alpha \sum_{t = 0}^\infty (1 - \alpha)^t \lazywalkm_\graphh^t \vecs$.
\end{proof}
Now we can show the correctness and performance guarantee of the \algaprdc\ algorithm.
\begin{lemma} \label{lem:aprdc}
    For a graph $\graphg$  and any $v \in \vertexset_\graphg$, $\algaprdc(v, \alpha, \epsilon)$ has running time $\bigo{\frac{1}{\epsilon \alpha}}$ and computes $\vecp = \apr_\graphh(\alpha, \indicatorvec_{v_1}, \vecr)$ such that $\max_{u \in \vertexset_\graphh} \frac{\vecr(u)}{\deg(u)} < \epsilon$ and $\vol(\supp(\vecp))) \leq \frac{1}{\epsilon \alpha}$, where $\graphh$ is the double cover of $\graphg$.
\end{lemma}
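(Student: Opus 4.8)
The plan is to follow the template of Andersen et al.'s analysis of their \textsc{ApproximatePagerank} routine, adapted to the double cover $\graphh$. For correctness, I would first observe that \algdcpush\ preserves the invariant $\vecp + \ppr_\graphh(\alpha, \vecr) = \ppr_\graphh(\alpha, \indicatorvec_{v_1})$, which is exactly Lemma~\ref{lem:dcpush}. Since \algaprdc\ initialises $\vecp = \zerovec$ and $\vecr = \indicatorvec_{v_1}$ — for which the invariant holds trivially — an induction on the number of \algdcpush\ calls shows it still holds at termination, and so by the definition of the approximate \pagerank\ vector the output is $\apr_\graphh(\alpha, \indicatorvec_{v_1}, \vecr)$. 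The termination condition of the \textbf{while} loop gives $\max_{u \in \vertexset_\graphh} \vecr(u)/\deg(u) < \epsilon$ directly.

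The running-time and support-volume bounds both flow from a single mass-conservation observation, which I would establish next. One \algdcpush\ at $(u,i)$ moves exactly $\alpha \vecr_i(u)$ units of mass from $\vecr$ into $\vecp$: the coordinate $\vecp_i(u)$ grows by $\alpha\vecr_i(u)$, while $\norm{\vecr}_1$ first drops by $\vecr_i(u)$ and then regains $(1-\alpha)\vecr_i(u)$ (spread over the neighbours of $u$ in the opposite copy), a net decrease of $\alpha\vecr_i(u)$. Hence $\norm{\vecp}_1 + \norm{\vecr}_1 = 1$ is invariant, so $\norm{\vecp}_1 \le 1$ throughout. Each \algdcpush\ is performed at a coordinate with $\vecr_i(u)/\deg_\graphh(u) \ge \epsilon$, so it transfers at least $\alpha\epsilon\,\deg_\graphh(u)$ mass into $\vecp$; summing over all pushes and using $\norm{\vecp}_1 \le 1$ bounds the total degree-weighted push count by $1/(\alpha\epsilon)$. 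As one \algdcpush\ does $\bigo{\deg_\graphh(u)}$ work — iterating over the neighbours of $u$ in $\graphg$ to update $\vecr_{3-i}$ — the total running time is $\bigo{1/(\epsilon\alpha)}$, provided a queue of `active' coordinates $(u,i)$ with $\vecr_i(u)/\deg(u) \ge \epsilon$ is maintained so that selecting the next coordinate costs $\bigo{1}$ amortised. For the support bound, I would note that $\vecp$ only ever gains mass at the coordinate being pushed, so every $u_i \in \supp(\vecp)$ has been pushed at least once; at its first push the quantity $\alpha\vecr_i(u) \ge \alpha\epsilon\,\deg_\graphh(u_i)$ was added to $\vecp_i(u)$ and never removed, whence $\alpha\epsilon\sum_{u_i \in \supp(\vecp)}\deg_\graphh(u_i) \le \norm{\vecp}_1 \le 1$, i.e. $\vol(\supp(\vecp)) \le 1/(\epsilon\alpha)$.

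The arithmetic here is routine; the only subtlety relative to the single-graph case is that a push at $u_i$ reads $\vecr_i(u)$ but writes into $\vecr_{3-i}(v)$ for the neighbours $v$, so the mass-conservation check must be carried out in $\graphh$ (working with the concatenated vectors $[\vecp_1^\transpose,\vecp_2^\transpose]^\transpose$ and $[\vecr_1^\transpose,\vecr_2^\transpose]^\transpose$) rather than in $\graphg$, and one uses $\deg_\graphh(u_1)=\deg_\graphh(u_2)=\deg_\graphg(u)$ to translate volumes back and forth. I expect the main care needed — rather than a real obstacle — is simply to phrase each step consistently in $\graphh$ and to invoke Lemma~\ref{lem:dcpush} for the correctness invariant, so that none of the index-crossing in \algdcpush\ leaks into the accounting.
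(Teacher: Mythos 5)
Your proposal is correct and follows essentially the same argument as the paper: correctness via the \algdcpush\ invariant of Lemma~\ref{lem:dcpush} and the loop's stopping condition, then the observation that each push moves at least $\alpha\epsilon\deg_\graphh(u)$ mass out of a residual of initial $\ell_1$-mass $1$, which simultaneously bounds the total push work, the running time (with the queue implementation), and $\vol(\supp(\vecp))$. The only cosmetic slip is that after a push half of the redistributed $(1-\alpha)\vecr_i(u)$ stays at $u_i$ itself rather than all going to the opposite-copy neighbours, but this does not affect the net mass accounting or any of the bounds.
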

\begin{proof}
    By Lemma~\ref{lem:dcpush}, we have that $\vecp = \apr_\graphh(\alpha, \indicatorvec_{v_1}, \vecr)$ throughout the algorithm and so this clearly holds for the output vector.
    By the stopping condition of the algorithm, it holds that $\max_{u \in \vertexset_\graphh}\frac{\vecr(u)}{\deg(u)} < \epsilon$.
   
    Let $T$ be the total number of push operations performed by $\algaprdc$, and   $d_j$ be the degree of the vertex $u_i$ used in the $j$-th push.
    The amount of probability mass on $u_i$ at time $j$ is at least $\epsilon d_j$ and so the amount of probability mass that moves from $\vecr_i$ to $\vecp_i$ at step $j$ is at least $\alpha \epsilon d_i$.
    Since $\norm{\vecr_1}_1 + \norm{\vecr_2}_1 = 1$ at the beginning of the algorithm, we have that 
    $
        \alpha \epsilon \sum_{j = 1}^T d_j  \leq 1,   
    $
    which implies that 
    \begin{align}
        \sum_{j = 1}^T d_j & \leq \frac{1}{\alpha \epsilon}. \label{eq:pushnum}
    \end{align}
    Now, note that for every vertex in $\supp(\vecp)$, there must be at least one push operation on that vertex. So,
    \[
        \vol(\supp(\vecp)) = \sum_{v \in \supp(\vecp)} \deg(v) \leq \sum_{j = 1}^T d_j \leq \frac{1}{\epsilon \alpha}.
    \]
    Finally, to bound the running time, we can implement the algorithm by maintaining a queue of vertices satisfying $\frac{\vecr(u)}{\deg(u)} \geq \epsilon$, and repeat the push operation on the first item in the queue at every iteration, updating the queue as appropriate.
    The push operation and queue updates can be performed in time proportional to $\deg(u)$ and so the running time follows from \eqref{eq:pushnum}.
\end{proof}

\subsection{Analysis of \alglocbipartdc} \label{sec:alganalysis}
We now come to the analysis of the main algorithm and the core technical result of this chapter.
To prove the correctness of the \alglocbipartdc\ algorithm, we   show two complementary facts which we state informally here: 
\begin{enumerate}%\itemsep -3pt
    \item If there is a simple set $\sets \subset \vertexset_\graphh$ with low conductance, then for most $u_1 \in \sets$ the simplified approximate Pagerank vector $\vecp = \sigma \circ \apr(\alpha, \indicatorvec_{u_1}, \vecr)$ would have a lot of probability mass on a small set of vertices.
    \item If $\vecp = \sigma \circ \apr(\alpha, \indicatorvec_{u_1}, \vecr)$ contains a lot of probability mass on some small set of vertices, then there is a sweep set of $\vecp$ with low conductance.
\end{enumerate}
\vspace{-0.2cm}
As we see in Section~\ref{sec:reduction}, there is a direct correspondence between almost-bipartite sets in $\graphg$, and low-conductance and  simple sets in $\graphh$.
This means that the two facts above are exactly what we need to prove that
Algorithm~\ref{alg:local_max_cut}
can find densely connected sets in $\graphg$.
  
We   first show in Lemma~\ref{lem:sapr_escapingmass} how the $\sigma$-operator affects some standard mixing properties of Pagerank vectors in order to establish the first fact promised above.
This lemma relies on the fact that $\sets \subset \vertexset_\graphg$ corresponds to a \emph{simple} set in $\vertexset_\graphh$.
This allows us to apply the $\sigma$-operator to the approximate Pagerank vector $\apr_\graphh(\alpha, \indicatorvec_{u_1}, \vecr)$ while preserving a large probability mass on the target set.
We   make use of the following fact proved by Andersen et al.\ \cite{andersenLocalGraphPartitioning2006} which shows that the approximate \pagerank\ vector has a large probability mass on a set of vertices if they have low conductance.
\begin{proposition}[\cite{andersenLocalGraphPartitioning2006}, Theorem~4] \label{lem:apr_escaping_mass}
    For any vertex set $\sets$ and any constant $\alpha \in [0, 1]$, there is a subset $\sets_{\alpha} \subseteq \sets$ with $\vol(\sets_\alpha) \geq \vol(\sets)/2$ such that for any vertex $v \in \sets_\alpha$, the approximate \pagerank\ vector $\apr(\alpha, \indicatorvec_v, r)$ satisfies
    \[
        \apr(\alpha, \indicatorvec_v, \vecr)(\sets) \geq 1 - \frac{\cond(\sets)}{\alpha} - \vol(\sets) \max_{u \in \vertexset} \frac{\vecr(u)}{\deg(u)}.   
    \]
\end{proposition}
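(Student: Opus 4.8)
The plan is to compare $\vecp \triangleq \apr(\alpha, \indicatorvec_v, \vecr)$ with the exact personalised \pagerank\ vector and then control the amount of mass of the exact vector that escapes from $\sets$. For the first step, the defining identity $\vecp + \ppr(\alpha, \vecr) = \ppr(\alpha, \indicatorvec_v)$ gives $\vecp(\sets) = \ppr(\alpha, \indicatorvec_v)(\sets) - \ppr(\alpha, \vecr)(\sets)$. Setting $\epsilon \triangleq \max_u \vecr(u)/\deg(u)$, we have $\vecr \preceq \epsilon\, \vec{d}$, where $\vec{d}$ is the degree vector; since $\ppr(\alpha,\cdot)$ is linear, monotone under $\preceq$, and fixes $\vec{d}$ (because $\lazywalkm \vec{d} = \vec{d}$ and $\ppr(\alpha, \vec{d}) = \alpha\sum_{t\ge0}(1-\alpha)^t \lazywalkm^t \vec{d}$), it follows that $\ppr(\alpha, \vecr) \preceq \epsilon\, \vec{d}$ and hence $\ppr(\alpha, \vecr)(\sets) \le \epsilon\, \vol(\sets)$. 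Thus it suffices to produce $\sets_\alpha \subseteq \sets$ with $\vol(\sets_\alpha) \ge \vol(\sets)/2$ on which $\ppr(\alpha, \indicatorvec_v)(\sets) \ge 1 - \cond(\sets)/\alpha$.

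Next I would average over the set. By linearity, $\sum_{v \in \sets} \deg(v)\, \ppr(\alpha, \indicatorvec_v) = \vol(\sets)\, \ppr(\alpha, \statdist_\sets)$, where $\statdist_\sets$ is the restriction of $\vec{d}$ to $\sets$ renormalised to a probability vector; so the degree-weighted average of $\ppr(\alpha, \indicatorvec_v)(\comp{\sets})$ over $v \in \sets$ equals the single quantity $\ppr(\alpha, \statdist_\sets)(\comp{\sets})$. The heart of the argument — and what I expect to be the main obstacle — is the bound $\ppr(\alpha, \statdist_\sets)(\comp{\sets}) \le \cond(\sets)/(2\alpha)$. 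To get it, rearrange the fixed-point equation $\ppr(\alpha, \vecs) = \alpha \vecs + (1-\alpha)\lazywalkm \ppr(\alpha, \vecs)$ into $\ppr(\alpha, \vecs) = \vecs - \tfrac{1-\alpha}{\alpha}(\identity - \lazywalkm)\ppr(\alpha, \vecs)$, specialise to $\vecs = \statdist_\sets$, write $p \triangleq \ppr(\alpha, \statdist_\sets)$, and evaluate both sides on $\comp{\sets}$; since $\statdist_\sets$ vanishes outside $\sets$, this yields $p(\comp{\sets}) = -\tfrac{1-\alpha}{\alpha}\sum_{u \in \comp{\sets}} [(\identity - \lazywalkm)p](u)$. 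Using $\identity - \lazywalkm = \tfrac12(\identity - \walkm)$ with $\walkm = \adj\degm^{-1}$ and regrouping the resulting boundary sum by which side of the cut each vertex lies on, one obtains $-\sum_{u \in \comp{\sets}} [(\identity - \lazywalkm)p](u) = \tfrac12\big(\sum_{v \in \sets} \tfrac{p(v)}{\deg(v)} \weight(v, \comp{\sets}) - \sum_{v \in \comp{\sets}} \tfrac{p(v)}{\deg(v)}\weight(v, \sets)\big) \le \tfrac12\sum_{v \in \sets} \tfrac{p(v)}{\deg(v)}\weight(v, \comp{\sets})$. Finally, the same monotonicity as before gives the pointwise bound $p \preceq \vec{d}/\vol(\sets)$ (since $\statdist_\sets \preceq \vec{d}/\vol(\sets)$), so the last sum is at most $\weight(\sets, \comp{\sets})/\vol(\sets) \le \cond(\sets)$; multiplying back by $\tfrac{1-\alpha}{\alpha} \le \tfrac1\alpha$ proves $p(\comp{\sets}) \le \cond(\sets)/(2\alpha)$.

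To conclude, apply Markov's inequality to this degree-weighted average: the set $\sets_\alpha \triangleq \{v \in \sets : \ppr(\alpha, \indicatorvec_v)(\comp{\sets}) \le \cond(\sets)/\alpha\}$ satisfies $\vol(\sets_\alpha) \ge \vol(\sets)/2$, and every $v \in \sets_\alpha$ has $\ppr(\alpha, \indicatorvec_v)(\sets) = 1 - \ppr(\alpha, \indicatorvec_v)(\comp{\sets}) \ge 1 - \cond(\sets)/\alpha$. Combining with the first paragraph, $\apr(\alpha, \indicatorvec_v, \vecr)(\sets) \ge 1 - \cond(\sets)/\alpha - \vol(\sets)\max_u \vecr(u)/\deg(u)$ for all $v \in \sets_\alpha$, which is the claim. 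Besides the escaping-mass estimate, the only points needing care are the bookkeeping in the boundary-flow regrouping and the fact that $\cond(\sets)$ carries $\min\{\vol(\sets), \vol(\comp{\sets})\}$ in the denominator, which only helps in the step $\weight(\sets, \comp{\sets})/\vol(\sets) \le \cond(\sets)$; everything else is linearity and $\preceq$-monotonicity of $\ppr(\alpha,\cdot)$.
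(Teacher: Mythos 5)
This proposition is stated in the thesis as an imported result (cited from Andersen et al.) and is not proved there, so there is no internal proof to compare against; your argument is a correct, self-contained reconstruction of essentially the standard argument behind it. The individual steps all check out: the reduction from $\apr$ to $\ppr$ uses only linearity, $\preceq$-monotonicity, and the fixed point $\ppr(\alpha,\vec{d})=\vec{d}$ of the lazy walk, all of which are valid; the fixed-point rearrangement and the boundary regrouping identity are correct (and the laziness factor $\tfrac12$ in $\identity-\lazywalkm=\tfrac12(\identity-\walkm)$ is exactly what produces the $\cond(\sets)/(2\alpha)$ bound on the degree-weighted average, which the factor-$2$ Markov step then converts into the threshold $\cond(\sets)/\alpha$ on a subset of at least half the volume); and the pointwise bound $\ppr(\alpha,\statdist_\sets)\preceq\vec{d}/\vol(\sets)$ follows from the same monotonicity. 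The only points worth making explicit are that $p=\ppr(\alpha,\statdist_\sets)\succeq\vec{0}$ is needed when you drop the term $\sum_{v\in\comp{\sets}}\tfrac{p(v)}{\deg(v)}\weight(v,\sets)$, and the degenerate case $\cond(\sets)=0$ in the Markov step (where all escape probabilities must vanish), both of which are immediate. Your route via the rearranged fixed-point equation is a slightly different bookkeeping of the leakage across the cut than Andersen et al.'s original series/escaping-mass argument, but it is the same idea: average $\ppr(\alpha,\indicatorvec_v)(\comp{\sets})$ over degree-weighted seeds $v\in\sets$, bound the average by the cut weight, and apply Markov's inequality.
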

The following lemma generalises this result to show that applying the $\sigma$-operator to the approximate \pagerank\ vector maintains this large probability mass up to a constant factor.
\begin{lemma} \label{lem:sapr_escapingmass}
 For any set $\sets \subset \vertexset_{\graphg}$ with partitioning $(\setl, \setr)$ and any constant  $\alpha \in [0, 1]$, there is a subset $\sets_{\alpha} \subseteq \sets$ with $\vol(\sets_{\alpha}) \geq \vol(\sets) / 2$ such that, for any vertex $v \in \sets_{\alpha}$, the simplified approximate \pagerank\ on the double cover $\vecp = \simplify{\apr_\graphh(\alpha, \indicatorvec_{v_1}, \vecr)}$ satisfies
    \[
    	\vecp(\setl_1 \union \setr_2) \geq 1 - \frac{2 \bipart(\setl, \setr)}{\alpha} - 2 \vol(\sets) \max_{u \in \vertexset} \frac{\vecr(u)}{\deg(u)}.
    \]
\end{lemma}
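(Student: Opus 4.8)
The plan is to reduce to the ordinary approximate \pagerank\ on the double cover $\graphh$, push Proposition~\ref{lem:apr_escaping_mass} through the reduction of Section~\ref{sec:reduction}, and then control separately the loss incurred by the $\sigma$-operator. Write $\sets' \triangleq \setl_1 \union \setr_2 \subseteq \vertexseth$. Since $\setl \intersect \setr = \emptyset$, the set $\sets'$ is simple; by Lemma~\ref{lem:cond_bip} we have $\cond_\graphh(\sets') = \bipart_\graphg(\setl,\setr)$, and clearly $\vol_\graphh(\sets') = \vol_\graphg(\sets)$. First I would apply Proposition~\ref{lem:apr_escaping_mass} in $\graphh$ to the set $\sets'$: this produces a subset $\sets'_\alpha \subseteq \sets'$ with $\vol_\graphh(\sets'_\alpha) \geq \vol_\graphh(\sets')/2 = \vol_\graphg(\sets)/2$ such that for every $w \in \sets'_\alpha$ the vector $\vec{q} \triangleq \apr_\graphh(\alpha, \indicatorvec_w, \vecr)$ satisfies
\[
  \vec{q}(\sets') \;\geq\; 1 - \frac{\cond_\graphh(\sets')}{\alpha} - \vol_\graphh(\sets')\max_u\frac{\vecr(u)}{\deg(u)} \;=\; 1 - \frac{\bipart_\graphg(\setl,\setr)}{\alpha} - \vol_\graphg(\sets)\max_u\frac{\vecr(u)}{\deg(u)}.
\]
Because $\sets'_\alpha$ is simple, the projection $v_1\mapsto v$, $v_2\mapsto v$ is injective on it and degree-preserving, so its image $\sets_\alpha \subseteq \sets$ has $\vol_\graphg(\sets_\alpha) = \vol_\graphh(\sets'_\alpha) \geq \vol_\graphg(\sets)/2$; for $v \in \sets_\alpha \intersect \setl$ the copy of $v$ lying in $\sets'$ is $v_1$, matching the statement, and the case $v \in \sets_\alpha \intersect \setr$ is symmetric after exchanging $\setl$ and $\setr$.

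The remaining work is to show that $\sigma$ does not destroy this concentrated mass, i.e.\ that $\vecp \triangleq \simplify{\vec{q}}$ still places most of its mass on $\setl_1 \union \setr_2$. Here I would use only the pointwise definition of $\sigma$ together with the simplicity of $\sets'$. Dropping the truncations $\max(0,\cdot)$ and regrouping gives
\[
  \vecp(\setl_1 \union \setr_2) = \sum_{w\in\setl}\max\bigl(0,\vec{q}(w_1)-\vec{q}(w_2)\bigr) + \sum_{w\in\setr}\max\bigl(0,\vec{q}(w_2)-\vec{q}(w_1)\bigr) \;\geq\; \vec{q}(\sets') - \vec{q}(\setr_1\union\setl_2).
\]
Now $\setr_1 \union \setl_2$ is disjoint from $\sets' = \setl_1 \union \setr_2$, so $\vec{q}(\setr_1\union\setl_2) \leq \vec{q}(\vertexseth \setminus \sets') = \norm{\vec{q}}_1 - \vec{q}(\sets') \leq 1 - \vec{q}(\sets')$, using that an approximate \pagerank\ vector has $\ell_1$-mass at most $1$ (its mass plus that of its residual equals that of the starting indicator vector, and $\ppr_\graphh(\alpha,\cdot)$ preserves $\ell_1$-mass on nonnegative vectors). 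Combining the two displays,
\[
  \vecp(\setl_1\union\setr_2) \;\geq\; 2\,\vec{q}(\sets') - 1 \;\geq\; 1 - \frac{2\,\bipart_\graphg(\setl,\setr)}{\alpha} - 2\,\vol_\graphg(\sets)\max_u\frac{\vecr(u)}{\deg(u)},
\]
which is exactly the claimed bound for $\vecp = \simplify{\apr_\graphh(\alpha,\indicatorvec_{v_1},\vecr)}$ with $v \in \sets_\alpha$.

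I expect the main obstacle to be this last step: a priori, the simplification could cancel a large fraction of the mass that Proposition~\ref{lem:apr_escaping_mass} guarantees on $\sets'$. The point that rescues it is that $\sigma$ only ever removes from $\vec{q}(\setl_1 \union \setr_2)$ the mass that $\vec{q}$ assigns to the ``mirror'' set $\setr_1 \union \setl_2$, and because $\sets'$ is simple this mirror set is disjoint from $\sets'$ and therefore carries no more than the total escaping mass, which is already bounded by $\cond_\graphh(\sets')/\alpha + \vol_\graphh(\sets')\max_u \vecr(u)/\deg(u)$. The only other things to be careful about are routine: checking that $\sets'_\alpha$ inherits simplicity from $\sets'$ so that the projection back to $\graphg$ preserves volume exactly, and recording the standard identity $\norm{\apr_\graphh(\alpha,\cdot,\cdot)}_1 + \norm{\vecr}_1 = 1$.
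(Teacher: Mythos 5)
Your proposal is correct and follows essentially the same route as the paper: apply Proposition~\ref{lem:apr_escaping_mass} to $\setl_1 \union \setr_2$ in the double cover via Lemma~\ref{lem:cond_bip}, then bound the loss from $\sigma$ by the mass on the mirror set $\setl_2 \union \setr_1 \subseteq \comp{\setl_1 \union \setr_2}$, giving $\vecp(\setl_1\union\setr_2) \geq 2\,\apr_\graphh(\alpha,\indicatorvec_{v_1},\vecr)(\setl_1\union\setr_2)-1$. Your additional care about projecting $\sets'_\alpha$ back to $\graphg$ and the $v_1$ versus $v_2$ starting-vertex issue (handled by the $\setl \leftrightarrow \setr$ symmetry) only makes explicit details the paper glosses over.
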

\begin{proof}
By Lemma \ref{lem:cond_bip} and Proposition~\ref{lem:apr_escaping_mass}, there is a set $\sets_\alpha \subseteq \sete$ with $\vol(\sets_\alpha)\geq \vol(\sets)/2$ such that it holds for $v \in \sets_\alpha$ that  
    \[
    	\apr_\graphh(\alpha, \indicatorvec_v, \vecr)(\setl_1 \cup \setr_2) \geq 1 - \frac{\bipart(\setl, \setr)}{\alpha} - \vol(\sets) \max_{u \in \vertexset} \frac{\vecr(u)}{\deg(u)}.
    \]
    Therefore, by the definition of $\vecp$ we have that
    \begin{align*}
    	\vecp(\setl_1 \cup \setr_2) & \geq \apr(\alpha, \indicatorvec_v, \vecr)(\setl_1 \cup \setr_2) - \apr(\alpha, \indicatorvec_v, \vecr)(\setl_2 \cup \setr_1) \\
    	&\geq \apr(\alpha, \indicatorvec_v, \vecr)(\setl_1 \cup \setr_2) - \apr(\alpha, \indicatorvec_v, \vecr)(\overline{\setl_1 \cup \setr_2}) \\
    	& \geq \apr(\alpha, \indicatorvec_v, \vecr)(\setl_1 \union \setr_2) - (1 - \apr(\alpha, \indicatorvec_v, \vecr)(\setl_1 \union \setr_2)) \\
    	& = 2\cdot\apr(\alpha,\indicatorvec_v, \vecr) (\setl_1\cup \setr_2) - 1\\
    	& \geq 1   - \frac{2 \bipart(\setl, \setr)}{\alpha} - 2 \vol(\sets) \max_{u \in \vertexset} \frac{\vecr(u)}{\deg(u)},
    \end{align*}
    which proves the statement of the lemma.
\end{proof}

 To prove the second fact promised at the beginning of this subsection, we show as an intermediate lemma that the value of $\vecp(u_1)$ can be bounded with respect to its value after taking a step of the random walk: $(\lazywalkm\vecp)(u_1)$. 
In order to prove this, we need to carefully analyse the effect of the $\sigma$-operator on the approximate \pagerank\ vector, and in particular examine the effect of the residual vector $\vecr$.
  
\begin{lemma} \label{lem:sapr_updatestep} Let $\graphg$ be a graph with double cover $\graphh$, and $\apr(\alpha, \vecs, \vecr)$ be the approximate \pagerank\ vector defined with respect to $\graphh$. Then, $\vecp = \simplify{\apr(\alpha, \vecs, \vecr)}$ satisfies that
$ \vecp(u_1)  \leq \alpha \left(\vecs(u_1) + \vecr(u_2)\right) + (1 - \alpha)(\lazywalkm\vecp)(u_1)$, and $ \vecp(u_2)  \leq \alpha \left(\vecs(u_2) + \vecr(u_1)\right) + (1 - \alpha)(\lazywalkm\vecp)(u_2)$
for any $u\in \vertexset_\graphg$.
\end{lemma}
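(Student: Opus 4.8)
The plan is to combine the defining relation of the approximate \pagerank\ vector with the third property of the $\sigma$-operator established in Lemma~\ref{lem:sigmaproperty}. Write $\vecp' = \apr_\graphh(\alpha, \vecs, \vecr)$, so that by definition $\vecp' + \ppr_\graphh(\alpha, \vecr) = \ppr_\graphh(\alpha, \vecs)$, i.e.\ $\vecp' = \ppr_\graphh(\alpha,\vecs) - \ppr_\graphh(\alpha,\vecr)$. Applying the identity $\ppr_\graphh(\alpha, \vecx) = \alpha \vecx + (1-\alpha)\lazywalkm \ppr_\graphh(\alpha, \vecx)$ to both $\ppr_\graphh(\alpha,\vecs)$ and $\ppr_\graphh(\alpha,\vecr)$ and subtracting, one obtains the fixed-point identity
\[
  \vecp' = \alpha(\vecs - \vecr) + (1-\alpha)\lazywalkm\vecp',
\]
and in particular $\vecp'(u_i) = \alpha\bigl(\vecs(u_i) - \vecr(u_i)\bigr) + (1-\alpha)(\lazywalkm\vecp')(u_i)$ for $i \in \{1,2\}$ and every $u \in \vertexset_\graphg$.

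Next I would unfold $\vecp(u_1) = (\sigma\circ\vecp')(u_1) = \max\bigl(0, \vecp'(u_1) - \vecp'(u_2)\bigr)$ by cases. When $\vecp'(u_1) \le \vecp'(u_2)$ the left-hand side is zero, which is trivially dominated by the claimed bound since both $\alpha(\vecs(u_1) + \vecr(u_2))$ and $(1-\alpha)(\lazywalkm\vecp)(u_1)$ are non-negative. When $\vecp'(u_1) > \vecp'(u_2)$, substituting the two fixed-point identities and cancelling yields
\[
  \vecp(u_1) = \alpha\bigl(\vecs(u_1) + \vecr(u_2) - \vecr(u_1) - \vecs(u_2)\bigr) + (1-\alpha)\bigl((\lazywalkm\vecp')(u_1) - (\lazywalkm\vecp')(u_2)\bigr),
\]
and since $\vecr(u_1) \ge 0$ and $\vecs(u_2) \ge 0$ the first term is at most $\alpha(\vecs(u_1) + \vecr(u_2))$.

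It then remains to bound the second term. Here I would apply the third bullet of Lemma~\ref{lem:sigmaproperty} to the non-negative vector $\vecp'$, giving $\sigma \circ (\lazywalkm\vecp') \preceq \lazywalkm(\sigma\circ\vecp') = \lazywalkm\vecp$; evaluated at $u_1$ this reads $\max\bigl(0, (\lazywalkm\vecp')(u_1) - (\lazywalkm\vecp')(u_2)\bigr) \le (\lazywalkm\vecp)(u_1)$. Since $(\lazywalkm\vecp')(u_1) - (\lazywalkm\vecp')(u_2)$ is at most this maximum, we conclude that $\vecp(u_1) \le \alpha(\vecs(u_1) + \vecr(u_2)) + (1-\alpha)(\lazywalkm\vecp)(u_1)$, and the bound for $\vecp(u_2)$ follows identically after interchanging the roles of the two copies $u_1, u_2$.

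This argument is essentially routine; the only points requiring care are the bookkeeping of the seed and residual terms (using $\vecr, \vecs \ge \zerovec$ with the correct signs when dropping $-\vecr(u_1) - \vecs(u_2)$) and recognising that the third property of $\sigma$ in Lemma~\ref{lem:sigmaproperty} is exactly what is needed to convert the difference $(\lazywalkm\vecp')(u_1) - (\lazywalkm\vecp')(u_2)$ of walk-distributions on the double cover into a bound in terms of $\lazywalkm\vecp$. I do not expect any substantive obstacle beyond lining up these two observations.
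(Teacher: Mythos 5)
Your proposal is correct and follows essentially the same route as the paper: both derive the fixed-point identity $\vecp' = \alpha(\vecs - \vecr) + (1-\alpha)\lazywalkm\vecp'$ for the approximate \pagerank\ vector, drop the non-negative terms $\alpha\vecr(u_1)$ and $\alpha\vecs(u_2)$, and then invoke the third property of Lemma~\ref{lem:sigmaproperty} to bound the walk term by $(\lazywalkm\vecp)(u_1)$. The only cosmetic difference is that you handle the $\max(0,\cdot)$ by explicit case analysis, whereas the paper uses the sub-additivity properties of the $\sigma$-operator to keep everything under $\sigma$ before applying that same third property.
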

\begin{proof}
Let   $u \in \vertexset_\graphg$ be an arbitrary vertex. We have that 
    \begin{align*}
       \vecp(u_1) & = \sigma \circ \big( \ppr(\alpha, \vecs) - \ppr(\alpha, \vecr)\big)(u_1) \\
        & = \sigma \circ \big(\alpha \vecs + (1 - \alpha)\lazywalkm \ppr(\alpha, \vecs) - \alpha \vecr - (1 - \alpha)\lazywalkm\ppr(\alpha, \vecr) \big)(u_1) \\
        & = \sigma \circ \big(\alpha \vecs + (1 - \alpha)\lazywalkm\apr(\alpha, \vecs, \vecr) - \alpha \vecr\big)(u_1) \\
        & = \max(0, (\alpha \vecs + (1 - \alpha)\lazywalkm\apr(\alpha, \vecs, \vecr) )(u_1) \\
        & \quad \quad {} - (\alpha \vecs + (1 - \alpha)\lazywalkm\apr(\alpha, \vecs, \vecr) )(u_2) + \alpha \vecr(u_2) - \alpha \vecr(u_1)) \\
        & \leq \sigma \circ \big(\alpha \vecs + (1 - \alpha)\lazywalkm\apr(\alpha, \vecs, \vecr) \big)(u_1) + \max(0, \alpha \vecr(u_2) - \alpha \vecr(u_1)) \\
        & \leq \sigma \circ \big(\alpha \vecs + (1 - \alpha)\lazywalkm\apr(\alpha, \vecs, \vecr) \big)(u_1) + \alpha \vecr(u_2) \\
        & \leq \sigma \circ \big(\alpha \vecs \big)(u_1) + (1 - \alpha)\sigma \circ \big( \lazywalkm\apr(\alpha, \vecs, \vecr) \big)(u_1) + \alpha \vecr(u_2) \\
        & \leq \alpha \vecs(u_1) + \alpha \vecr(u_2) + (1 - \alpha) (\lazywalkm\vecp) (u_1),
    \end{align*}
    where the first equality follows by the definition of the approximate \pagerank, the second equality follows by the definition of the \pagerank, and the last two inequalities follow by Lemma~\ref{lem:sigmaproperty}. This proves the first statement, and the second statement follows by symmetry.
\end{proof}
 Notice that applying the $\sigma$-operator for any vertex $u_1$ introduces a new dependency on the value of the residual vector $\vecr$ at $u_2$.
This subtle observation demonstrates the additional complexity introduced by the $\sigma$-operator when compared with previous analysis of \pagerank-based local algorithms~\cite{andersenLocalGraphPartitioning2006}.
 Taking account of the $\sigma$-operator, let us further analyse the  Lov\'asz-Simonovits curve defined by $\vecp$, which is a common technique in the analysis of random  walks on graphs~\cite{lovaszMixingRateMarkov1990}.
 Our goal is to show that if there is a set $\sets$ with a large value of $\vecp(\sets)$, then there must be a sweep set $\pjsweep$ with low conductance.
We   first bound the curve at some point by the conductance of the corresponding sweep set.
\begin{lemma} \label{lem:ls1}
Let $\vecp = \simplify{\apr(\alpha, \vecs, \vecr)}$. It holds for any $j \in [n-1]$ that 
    \begin{align*}
        \lscurve{\vol(\pjsweep)} \leq & \alpha \Big(\lscurve[\vecs]{\vol(\pjsweep)} + \lscurve[\vecr]{\vol(\pjsweep)} \Big) \\
        & {} + (1 - \alpha) \frac{1}{2}\Big( \lscurve{\vol(\pjsweep) - \abs{\partial(\pjsweep)}}\Big) \\
        & {} + (1 - \alpha)\frac{1}{2}\Big( \lscurve{\vol(\pjsweep) + \abs{\partial(\pjsweep)}}\Big).
    \end{align*}
\end{lemma}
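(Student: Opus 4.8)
The plan is to adapt the standard Lov\'asz--Simonovits argument of Andersen et al.~\cite{andersenLocalGraphPartitioning2006} to the double cover $\graphh$, with two modifications needed to handle the simplify operator: the one-step inequality for $\vecp$ now carries an extra residual term at the \emph{partner} vertex, and we must use the degree symmetry of $\graphh$ to control it.

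First I would note that, since $\pjsweep$ is a sweep set of $\vecp$ by definition, the point $\big(\vol(\pjsweep),\vecp(\pjsweep)\big)$ lies on the Lov\'asz--Simonovits curve, so $\lscurve{\vol(\pjsweep)} = \vecp(\pjsweep)$. Next I would sum the pointwise bound of Lemma~\ref{lem:sapr_updatestep} over all $w\in\pjsweep$. Writing $\pi$ for the involution of $\vertexset_\graphh$ that swaps the two copies $u_1,u_2$ of each $u\in\vertexset_\graphg$, and noting that the residual term produced at $w$ is evaluated at $\pi(w)$, this yields
\[
    \vecp(\pjsweep)\;\leq\;\alpha\Big(\vecs(\pjsweep)+\vecr\big(\pi(\pjsweep)\big)\Big)+(1-\alpha)\,(\lazywalkm\vecp)(\pjsweep).
\]
By Proposition~\ref{prop:lsleq}, $\vecs(\pjsweep)\leq\lscurve[\vecs]{\vol(\pjsweep)}$ and $\vecr(\pi(\pjsweep))\leq\lscurve[\vecr]{\vol(\pi(\pjsweep))}$; since every vertex and its partner have equal degree in $\graphh$, the involution $\pi$ preserves volume, so $\vol(\pi(\pjsweep))=\vol(\pjsweep)$ and the second term matches the one in the statement.

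The remaining and main step is to bound $(\lazywalkm\vecp)(\pjsweep)$ by $\tfrac12\lscurve{\vol(\pjsweep)-\abs{\partial(\pjsweep)}}+\tfrac12\lscurve{\vol(\pjsweep)+\abs{\partial(\pjsweep)}}$; this is the combinatorial heart of the Lov\'asz--Simonovits lemma and is essentially unchanged from the undirected setting. Using $\lazywalkm=\tfrac12(\identity+\adj\degm^{-1})$ for $\graphh$, one expands $(\lazywalkm\vecp)(\pjsweep)=\tfrac12\vecp(\pjsweep)+\tfrac12\sum_v\tfrac{\vecp(v)}{\deg(v)}\,b_v$ where $b_v$ is the number of edges from $v$ to $\pjsweep$, splits the inner sum according to whether $v\in\pjsweep$, and uses the identities $\sum_{v\in\pjsweep}(\deg(v)-b_v)=\sum_{v\notin\pjsweep}b_v=\abs{\partial(\pjsweep)}$. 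This rewrites $(\lazywalkm\vecp)(\pjsweep)$ as $\tfrac12$ of a weighted sum $\sum_v\vecw(v)\vecp(v)$ with entries $\vecw(v)\in[0,1]$ and $\sum_v\vecw(v)\deg(v)=\vol(\pjsweep)-\abs{\partial(\pjsweep)}$, plus $\tfrac12$ of another such sum with total weight $\vol(\pjsweep)+\abs{\partial(\pjsweep)}$; applying the variational formula~\eqref{eq:deflscurve} to each gives the two curve values, and combining the three estimates proves the lemma. I expect no real obstacle here beyond careful bookkeeping of the partner vertices and the degree identities of the double cover.
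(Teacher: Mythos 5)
Your proposal is correct and follows essentially the same route as the paper: sum the one-step bound of Lemma~\ref{lem:sapr_updatestep} over the sweep set, use the degree symmetry of the double cover so that the residual evaluated on the partner set $\widehat{\pjsweep}$ has the same volume as $\pjsweep$, and bound the lazy-walk term by the two curve values via the standard Lov\'asz--Simonovits edge-counting argument. The only cosmetic difference is that you re-derive the combinatorial step directly from the variational formula~\eqref{eq:deflscurve}, whereas the paper invokes Lemma~\ref{lem:andersenlem4} of Andersen et al.\ together with Lemma~\ref{lem:lsedges}, which amounts to the same calculation.
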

Since $\vol(\pjsweep) \pm \abs{\partial(\pjsweep)} = \vol(\pjsweep)(1 \pm \cond(\pjsweep))$, this tells us that when the conductance of the sweep set $\pjsweep$ is high, the Lov\'asz-Simonovits curve around the point $\vol(\pjsweep)$ is close to a straight line.
For the proof of Lemma~\ref{lem:ls1}, we view the undirected graph $\graphh$ instead as a directed graph, and each edge $\{u, v\} \in \edgeset_\graphh$ as a pair of directed edges $(u, v)$ and $(v, u)$. For any edge $(u, v)$ and vector $\vecp \in \R_{\geq 0}^{2n}$, let
\[
    p(u, v) = \frac{\vecp(u)}{\deg(u)}
\]
and for a set of directed edges $\seta$, let
\[
    p(\seta) = \sum_{(u, v) \in \seta} p(u, v).
\]
Now for any set of vertices $\sets \subset \vertexset_\graphh$, let $\In(\sets) = \{(u, v) \in \edgeset_\graphh : v \in \sets\}$ and $\Out(\sets) = \{(u, v) \in \vertexset_\graphh : u \in \sets\}$.
Additionally, for any simple set $\sets \subset \vertexset_\graphh$, we define the \firstdef{simple complement} of the set $\sets$ by
\[
    \widehat{\sets} = \{u_2 : u_1 \in \sets\} \union \{u_1 : u_2 \in \sets\}.
\]
We   need the following property of the Lov\'asz-Simonovitz curve with regard to sets of edges.
\begin{lemma} \label{lem:lsedges}
    For any set of edges $\seta$, it holds that
   $$
        p(\seta) \leq \lscurve{\cardinality{\seta}}.$$
\end{lemma}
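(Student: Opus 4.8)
\textbf{Proof proposal for Lemma~\ref{lem:lsedges}.}
The plan is to exhibit a single fractional weight vector that simultaneously realises the mass $p(\seta)$ and the volume $\cardinality{\seta}$, and then invoke the variational formula~\eqref{eq:deflscurve} for the \LS\ curve (applied to the graph $\graphh$, on which $\vecp$ lives).

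First I would group the edges of $\seta$ by their tail. For each vertex $u \in \vertexset_\graphh$, let $a_u$ denote the number of edges of $\seta$ of the form $(u, v)$, so that by the definition $p(u,v) = \vecp(u)/\deg(u)$ we get
\[
    p(\seta) \;=\; \sum_{(u,v) \in \seta} \frac{\vecp(u)}{\deg(u)} \;=\; \sum_{u \in \vertexset_\graphh} a_u \cdot \frac{\vecp(u)}{\deg(u)},
    \qquad
    \sum_{u \in \vertexset_\graphh} a_u \;=\; \cardinality{\seta}.
\]
Next I would observe that $a_u \leq \deg_\graphh(u)$: every directed edge out of $u$ in $\graphh$ corresponds to one of the $\deg_\graphh(u)$ edges incident to $u$, and $\seta$ is a set, so it can contain each of them at most once. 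Hence setting $\vecw(u) \triangleq a_u / \deg_\graphh(u)$ gives a vector $\vecw \in [0,1]^{2n}$ with $\sum_u \vecw(u)\deg_\graphh(u) = \cardinality{\seta}$ and $\sum_u \vecw(u)\vecp(u) = p(\seta)$. Since $\cardinality{\seta}$ lies in $[0, \vol(\vertexset_\graphh)]$ (it is a set of directed edges, of which there are $\vol(\vertexset_\graphh)$ in total), $\vecw$ is feasible for the maximisation in~\eqref{eq:deflscurve} at $x = \cardinality{\seta}$, and therefore $p(\seta) \leq \lscurve{\cardinality{\seta}}$, as claimed.

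There is essentially no hard step here; the only point that warrants care is the bound $a_u \leq \deg_\graphh(u)$, which is where we use that $\seta$ is a genuine set rather than a multiset, together with the fact that in the double cover $\deg_\graphh(u_1) = \deg_\graphh(u_2) = \deg_\graphg(u)$ so each `virtual' vertex has the expected out-degree. Everything else is just checking that the constructed $\vecw$ satisfies the two constraints of the variational formula.
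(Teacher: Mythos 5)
Your proof is correct and follows essentially the same route as the paper: group the edges of $\seta$ by their tail, define the weight vector $\vecw(u) = a_u/\deg(u) \in [0,1]$, and apply the variational characterisation \eqref{eq:deflscurve} of the \LS\ curve at $x = \cardinality{\seta}$. The extra remarks on $a_u \leq \deg_\graphh(u)$ simply make explicit what the paper states implicitly, so there is nothing to change.
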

\begin{proof}
    For each $u \in \vertexset$, let $x_u = \cardinality{\{(u, v) \in \seta\}}$ be the number of edges in $\seta$ with the tail at $u$.
    Then, we have by definition  that 
    \begin{align*}
        p(\seta) & = \sum_{(u, v) \in \seta} \frac{\vecp(u)}{\deg(u)}  = \sum_{u \in \vertexset} \frac{x_u}{\deg(u)} \vecp(u).
    \end{align*}
    Since $x_u/\deg(u) \in [0, 1]$ and
    \[
        \sum_{u \in \vertexset} \frac{x_u}{\deg(u)} \deg(u) = \sum_{u \in \vertexset} x_u = \cardinality{\seta},
    \]
      the statement follows from   (\ref{eq:deflscurve}).
\end{proof}
We   also make use of the following lemma from \cite{andersenLocalGraphPartitioning2006}.
\begin{lemma}[\cite{andersenLocalGraphPartitioning2006}, Lemma~4]\label{lem:andersenlem4}
    For any distribution $\vecp$ and any set of vertices $\seta$,
    \[
        (\lazywalkm\vecp)(\seta) \leq \frac{1}{2}\cdot 
        \left(p(\In(\seta) \union \Out(\seta)) + p(\In(\seta) \intersect \Out(\seta))\right).
    \]
\end{lemma}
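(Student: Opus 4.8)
\textbf{Proof plan for Lemma~\ref{lem:andersenlem4}.}
The plan is to unfold the definition of the lazy walk and carefully track where the probability mass on $\seta$ comes from after one lazy step. Recall that $\lazywalkm = \frac{1}{2}(\identity + \walkm)$ on $\graphh$, so for any distribution $\vecp$ we have $(\lazywalkm\vecp)(\seta) = \frac{1}{2}\vecp(\seta) + \frac{1}{2}(\walkm\vecp)(\seta)$. First I would rewrite each of these two terms as a sum over directed edges using the edge-mass notation $p(u,v) = \vecp(u)/\deg(u)$ introduced just before the lemma. For the walk term, $(\walkm\vecp)(v) = \sum_{u : u\connected v} \vecp(u)/\deg(u) = \sum_{u : u \connected v} p(u,v)$, so summing over $v \in \seta$ gives $(\walkm\vecp)(\seta) = \sum_{(u,v)\in\In(\seta)} p(u,v) = p(\In(\seta))$. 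For the identity term, note $\vecp(\seta) = \sum_{v\in\seta}\vecp(v) = \sum_{v\in\seta} \deg(v) \cdot \frac{\vecp(v)}{\deg(v)}$; since each $v\in\seta$ has exactly $\deg(v)$ out-edges, this equals $\sum_{(v,w)\in\Out(\seta)} p(v,w) = p(\Out(\seta))$, where I use that $p(v,w)$ depends only on the tail $v$. Hence $(\lazywalkm\vecp)(\seta) = \frac12\big(p(\In(\seta)) + p(\Out(\seta))\big)$.

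The remaining step is purely combinatorial: for any two sets of directed edges $X, Y$ (here $X = \In(\seta)$, $Y = \Out(\seta)$) with the property that $p(\cdot)$ is additive over disjoint edge sets, one has $p(X) + p(Y) = p(X\union Y) + p(X\intersect Y)$, since every edge counted on the left is counted with the same multiplicity on the right (an edge in both $X$ and $Y$ contributes twice on each side, an edge in exactly one contributes once on each side). This is just inclusion–exclusion applied to the nonnegative weights $\{p(u,v)\}$. Substituting $X\union Y = \In(\seta)\union\Out(\seta)$ and $X\intersect Y = \In(\seta)\intersect\Out(\seta)$ into the displayed identity for $(\lazywalkm\vecp)(\seta)$ yields
\[
    (\lazywalkm\vecp)(\seta) = \frac{1}{2}\left(p(\In(\seta)\union\Out(\seta)) + p(\In(\seta)\intersect\Out(\seta))\right),
\]
which in fact gives equality; the inequality in the lemma statement follows a fortiori.

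There is essentially no hard obstacle here — the result is a bookkeeping identity — but the one point requiring care is the bookkeeping for the self-loop/identity contribution: I must be sure that in the directed-edge view of $\graphh$ each vertex $v$ contributes exactly $\deg_\graphh(v)$ out-edges and that $p(v,w) = \vecp(v)/\deg_\graphh(v)$ is constant over those out-edges, so that $\sum_{w}p(v,w) = \vecp(v)$ reproduces the identity term exactly. A secondary subtlety is that $\vecp$ need not be a genuine probability distribution for the argument — only nonnegativity is used — so the lemma (and this proof) applies verbatim to the simplified approximate Pagerank vectors that arise later; I would note this explicitly since it is exactly the generality needed in the analysis of \alglocbipartdc. (If one wishes to follow Andersen et al.\ literally and keep the inequality form, it is also fine to simply bound $p(\In(\seta)) + p(\Out(\seta)) \geq p(\In(\seta)\union\Out(\seta)) + p(\In(\seta)\intersect\Out(\seta))$ without ever claiming equality, which sidesteps any worry about edge cases in the multiplicity count.)
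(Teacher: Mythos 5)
Your proof is correct and takes essentially the same route as the original argument of Andersen et al.\ (the paper itself only cites this lemma and does not reprove it): split the lazy walk into its identity and walk parts, identify these with $p(\Out(\seta))$ and $p(\In(\seta))$ respectively, and finish by inclusion--exclusion on the directed-edge masses. Your side remarks are also fine --- the bound is indeed an equality in the unit-weight setting used for the double cover, the lemma only requires the weaker inequality, and only nonnegativity of $\vecp$ is needed.
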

These two facts give us everything we need to prove the behaviour of the Lov\'asz-Simonovits curve described in Lemma~\ref{lem:ls1}.
\begin{proof}[Proof of Lemma~\ref{lem:ls1}]
By definition, we have that 
\begin{align*}
    \lefteqn{\vecp(\sets)} \\
    & = \sum_{u_1\in  \sets} \vecp(u_1) + \sum_{u_2\in \sets} \vecp (u_2) \\
    & \leq\sum_{u_1\in \sets}\Big( \alpha \left(\vecs(u_1) + \vecr(u_2)\right) + (1 - \alpha)(\lazywalkm\vecp)(u_1)\Big) \\
    & \quad \quad {} + \sum_{u_2\in \sets} \Big(\alpha \left(\vecs(u_2) + \vecr(u_1)\right) + (1 - \alpha)(\lazywalkm\vecp)(u_2) \Big) \\
    & = \alpha\cdot \Bigg(\sum_{u_1\in \sets} \vecs(u_1)+ \sum_{u_2\in \sets} \vecs(u_2)  +  \sum_{u_1\in \sets} \vecr(u_2)+ \sum_{u_2\in \sets} \vecr(u_1) \Bigg) \\
    & \quad \quad {} + (1-\alpha)\cdot\Bigg( \sum_{u_1\in \sets} (\lazywalkm\vecp)(u_1) + \sum_{u_2\in \sets} (\lazywalkm\vecp)(u_2) \Bigg)\\
    & = \alpha\cdot \Big( \vecs(\sets) + \vecr\left(\widehat{\sets}\right)\!\Big) + (1-\alpha)\cdot (\lazywalkm\vecp)(\sets)\\
    & \leq \alpha\cdot \Big( \vecs(\sets) + \vecr\left(\widehat{\sets}\right)\!\Big) + (1-\alpha)\cdot \frac{1}{2} \cdot \Big(p(\In(S) \union \Out(S)) + p(\In(S) \intersect\Out(S))\Big),
\end{align*}
where the first inequality follows by Lemma~\ref{lem:sapr_updatestep} and the last one follows by  Lemma~\ref{lem:andersenlem4}.

 Now, recall that $\lscurve{\vol(\pjsweep)} = \vecp(\pjsweep)$ and notice that $\vol\left(\widehat{\pjsweep}\right) = \vol\left(\pjsweep\right)$ holds by the definition of $\widehat{\sets}$. Hence, using Lemma~\ref{lem:lsedges}, we have that
 \begin{align*}
        \lefteqn{ \lscurve{\vol(\pjsweep)}}\\
       & =  \vecp(\pjsweep) \\
        & \leq \alpha\left( \vecs(\pjsweep) + \vecr\left(\widehat{\pjsweep} \right)
         \right)  + (1 - \alpha)\cdot \frac{1}{2}\Big( p\left( \In(\pjsweep) \union \Out(\pjsweep)\right) + p(\In(\pjsweep) \intersect \Out(\pjsweep))\Big) \\
        & \leq  \alpha\Big(\lscurve[\vecs]{\vol(\pjsweep)} + \lscurve[\vecr]{\vol(\pjsweep)}\Big) \\
        & \qquad {} + (1 - \alpha)\cdot \frac{1}{2}\left( \lscurve{\cardinality{\In(\pjsweep) \union \Out(\pjsweep)}} + \lscurve{\cardinality{\In(\pjsweep) \intersect \Out(\pjsweep)}} \right).
    \end{align*}
    It remains to show that
    \[
        \cardinality{\In(\pjsweep) \intersect \Out(\pjsweep)} = \vol(\pjsweep) - \cardinality{\partial(\pjsweep)},
    \]
    and
    \[
        \cardinality{\In(\pjsweep) \union \Out(\pjsweep)} = \vol(\pjsweep) + \cardinality{\partial(\pjsweep)}.
    \]
    The first follows by noticing that $\cardinality{\In(\pjsweep) \intersect \Out(\pjsweep)}$ counts precisely twice the number of edges with both endpoints inside $\pjsweep$.
    The second follows since
    \[
        \cardinality{\In(\pjsweep) \union \Out(\pjsweep)} = \cardinality{\In(\pjsweep) \intersect \Out(\pjsweep)} + \cardinality{\In(\pjsweep) \setminus \Out(\pjsweep)} + \cardinality{\Out(\pjsweep) \setminus \In(\pjsweep)}
    \]
    and both the second and third terms on the right hand side are equal to $\cardinality{\partial(\pjsweep)}$.
\end{proof}
We now show another technical lemma which bounds the \LS\ curve with respect to the conductance of the sweep sets $\pjsweep$.
To understand this lemma,
 recall from the discussion in Section~\ref{sec:related:local} that the \lovasz-Simonovits curve of the stationary distribution is linear: $\statdist[k] = k / \vol(\vertexset_\graphh)$.
This lemma bounds how far the Lov\'asz-Simonovits curve deviates from
 that of the stationary distribution.
In particular, if there is no sweep set with low conductance, then the deviation must be small.
\begin{lemma} \label{lem:lscurve}
     Let $\vecp = \simplify{\apr_\graphh(\alpha, \vecs, \vecr)}$ such that $\max_{u \in \vertexset_\graphh} \frac{\vecr(u)}{\deg(u)} \leq \epsilon$, and  $\phi$ be any constant in $[0, 1]$. If $\cond(\pjsweep) \geq \phi$ for all $j \in [\cardinality{\supp(\vecp)}]$, then
    \[ 
        \vecp[k] - \frac{k}{\vol(\vertexset_\graphh)} \leq \alpha t + \alpha \epsilon k t + \sqrt{\min(k, \vol(\vertexset_\graphh) - k)}\left(1 - \frac{\phi^2}{8}\right)^t
    \]
    for all $k \in [0, \vol(\vertexset_\graphh)]$ and integer $t \geq 0$.
\end{lemma}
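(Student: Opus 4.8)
The plan is to follow the standard Lov\'asz--Simonovits induction on $t$, adapted to handle the two new error terms coming from the $\sigma$-operator: the teleportation mass $\alpha t$ (present in the original analysis) and the residual-mass term $\alpha \epsilon k t$ (which is new, arising because in Lemma~\ref{lem:sapr_updatestep} the bound on $\vecp(u_1)$ picks up an $\alpha \vecr(u_2)$ contribution rather than just $\alpha \vecs(u_1)$). The base case $t = 0$ is immediate: for any $k$ we have $\vecp[k] \leq \min(k, \vol(\vertexset_\graphh) - k) \cdot \max_u \vecp(u)/\deg(u) \leq \sqrt{\min(k, \vol(\vertexset_\graphh)-k)}$ after using $\vecp[k] \leq 1$ and the concavity of $\vecp[\cdot]$, which dominates the claimed right-hand side since $\alpha t = 0$ and $\alpha \epsilon k t = 0$.

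For the inductive step, fix $t \geq 1$ and $k \in [0, \vol(\vertexset_\graphh)]$. Since $\vecp[\cdot]$ is piecewise linear and concave, it suffices to verify the bound at the breakpoints $k = \vol(\pjsweep)$; intermediate points follow by concavity of $\vecp[\cdot]$ together with concavity of $\sqrt{\min(k, \vol(\vertexset_\graphh)-k)}$ and linearity of the other terms in $k$. At such a breakpoint I would apply Lemma~\ref{lem:ls1} to write $\vecp[\vol(\pjsweep)]$ as $\alpha(\vecp[\vecs][\vol(\pjsweep)] + \vecp[\vecr][\vol(\pjsweep)])$ plus $(1-\alpha)\tfrac12(\vecp[\vol(\pjsweep) - |\partial(\pjsweep)|] + \vecp[\vol(\pjsweep) + |\partial(\pjsweep)|])$. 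Bound $\vecp[\vecs][\vol(\pjsweep)] \leq 1$ (since $\vecs$ is a probability distribution, or more precisely $\vecp[\vecs][x] \leq 1$ for all $x$), and bound $\vecp[\vecr][\vol(\pjsweep)] \leq \epsilon \cdot \vol(\pjsweep) = \epsilon k$ using $\max_u \vecr(u)/\deg(u) \leq \epsilon$ together with the characterisation \eqref{eq:deflscurve} of the curve. For the two diffusion terms, subtract $k/\vol(\vertexset_\graphh)$ from both sides, observe that $\vol(\pjsweep) \pm |\partial(\pjsweep)| = k(1 \pm \cond(\pjsweep))$ with $\cond(\pjsweep) \geq \phi$, apply the inductive hypothesis at parameter $t-1$ to each of $\vecp[k(1-\phi)] - k(1-\phi)/\vol(\vertexset_\graphh)$ and $\vecp[k(1+\phi)] - k(1+\phi)/\vol(\vertexset_\graphh)$, and then use the standard concavity estimate $\tfrac12(\sqrt{g(k(1-\phi))} + \sqrt{g(k(1+\phi))}) \leq \sqrt{g(k)}\,(1 - \phi^2/8)$ where $g(k) = \min(k, \vol(\vertexset_\graphh)-k)$ — this is exactly the inequality used in Andersen et al.~\cite{andersenLocalGraphPartitioning2006}. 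Collecting terms, the diffusion part contributes at most $\alpha(t-1) + \alpha\epsilon k (t-1) + \sqrt{g(k)}(1 - \phi^2/8)^t$ (after multiplying by $(1-\alpha) \leq 1$ and noting $k$ is unchanged in the $\alpha\epsilon k$ term since $k(1\pm\phi) \leq $ something we must handle — see below), and adding the $\alpha + \alpha\epsilon k$ from the teleportation/residual part gives the claimed $\alpha t + \alpha\epsilon k t + \sqrt{g(k)}(1-\phi^2/8)^t$.

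The main obstacle I anticipate is the bookkeeping on the $\alpha\epsilon k t$ term under the inductive step: when I apply the inductive hypothesis at the shifted points $k(1\pm\phi)$, the residual term there reads $\alpha\epsilon \cdot k(1\pm\phi) \cdot (t-1)$, and averaging gives $\alpha\epsilon k (t-1)$, which is fine — the $(1\pm\phi)$ factors cancel in the average. So actually this works cleanly; the subtlety is just making sure one uses the hypothesis at $t-1$ (not $t$) so that adding the fresh $\alpha\epsilon k$ from Lemma~\ref{lem:ls1} rebuilds $\alpha\epsilon k t$ exactly, and similarly for $\alpha t$. A second point requiring care is the edge case where $k(1+\phi) > \vol(\vertexset_\graphh)$ or $k(1-\phi) < 0$; there one truncates the curve using $\vecp[x] = 1$ for $x \geq \vol(\vertexset_\graphh)$ and monotonicity, exactly as in the classical argument, and the bound only gets easier. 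Finally, one should double-check that $\vecp$ being a \emph{simplified} vector (hence possibly with $\|\vecp\|_1 < 1$) does not break the curve's concavity or the characterisation \eqref{eq:deflscurve} — it does not, since those only require $\vecp \in \R_{\geq 0}^{2n}$, which holds by construction of $\sigma$.
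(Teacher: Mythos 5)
Your proposal follows essentially the same route as the paper's proof: induction on $t$ with the base case via $\vecp[k] \leq 1 \leq \sqrt{\min(k,\vol(\vertexset_\graphh)-k)}$ at integer points, reduction to the breakpoints $k_j = \vol(\pjsweep)$ by piecewise linearity and concavity, one application of Lemma~\ref{lem:ls1} per inductive step with $\lscurve[\vecs]{k_j}\leq 1$ and $\lscurve[\vecr]{k_j}\leq \epsilon k_j$, and the Taylor/concavity estimate $\tfrac12\left(\sqrt{x-\phi x}+\sqrt{x+\phi x}\right)\leq \sqrt{x}\left(1-\phi^2/8\right)$. Your observation that the symmetric shifts make the $\alpha\epsilon k t$ term rebuild exactly (apply the hypothesis at $t-1$, add the fresh $\alpha+\alpha\epsilon k_j$) is also precisely how the paper closes the induction.

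The one step that is not right as written is the identity $\vol(\pjsweep)\pm\abs{\partial(\pjsweep)} = k(1\pm\cond(\pjsweep))$ and the consequent replacement of $\abs{\partial(\pjsweep)}$ by $\phi k$. With the conductance definition used here, $\abs{\partial(\pjsweep)} = \cond(\pjsweep)\cdot\overline{k_j}$ where $\overline{k_j}=\min\left(k_j,\vol(\vertexset_\graphh)-k_j\right)$, so concavity of the curve only lets you shrink the spread down to $\phi\overline{k_j}$, not $\phi k_j$; when $k_j>\vol(\vertexset_\graphh)/2$ your substitution $k(1\pm\phi)$ is not justified (and the averaging inequality for $\sqrt{\min(\cdot,\vol(\vertexset_\graphh)-\cdot)}$ at those shifted points is also dubious there). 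The fix is exactly what the paper does: work with the points $k_j\pm\phi\overline{k_j}$, bound $\min\left(k_j\pm\phi\overline{k_j},\,\vol(\vertexset_\graphh)-k_j\mp\phi\overline{k_j}\right)\leq\overline{k_j}\pm\phi\overline{k_j}$, and apply the Taylor estimate at $x=\overline{k_j}$; the cancellation in the $\alpha\epsilon$-term is unaffected since the shifts remain symmetric. With that correction your argument coincides with the paper's proof.
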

\begin{proof} 
For ease of discussion we write  $k_j = \vol(\pjsweep)$ and let $\overline{k_j} = \min(k_j, \vol(\vertexset_\graphh) - k_j)$. For convenience we write
    \[
        f_t(k) = \alpha t + \alpha \epsilon k t + \sqrt{\min(k, \vol(\vertexset_\graphh) - k)}\left(1 - \frac{\phi^2}{8}\right)^t.
    \]
    We prove by induction that for all $t \geq 0$,
    \[
        \vecp[k] - \frac{k}{\vol(\vertexset_\graphh)} \leq f_t(k).
    \]
    For the base case, this   holds for $t = 0$ for any $\phi$. Our proof is by case distinction:
    (1) for any integer $k \in [\vol(\vertexset_\graphh) - 1]$,
    \[
        \vecp[k] - \frac{k}{\vol(\vertexset_\graphh)}\leq 1 \leq \sqrt{\min(k, \vol(\vertexset_\graphh) - k)} \leq f_0(k);
    \]
    (2) 
    for $k = 0$ or $k = \vol(\vertexset_\graphh)$, $\vecp[k] - \frac{k}{\vol(\vertexset_\graphh)} = 0 \leq f_0(k)$. The base case follows, since $f_0$ is concave and $\vecp[k]$ is linear between integer values.
   
    Now for the inductive case, we assume that the claim holds for $t$ and  show that it holds for $t+1$.
    It suffices to show that the following holds for every $j \in [\cardinality{\supp(\vecp)}]$:
    \[
        \vecp[k_j] - \frac{k_j}{\vol(\vertexset_\graphh)} \leq f_{t+1}(k_j).
    \]
    The lemma   follows because this inequality holds trivially for $j = 0$ and $j = n$ and $\vecp[k]$ is linear between $k_j$ for $j \in \{0, n\} \union [1, \cardinality{\supp(\vecp)}]$.
   
   For any $j \in [1, \cardinality{\supp(\vecp)}]$, it holds by Lemma~\ref{lem:ls1} that 
   \begin{align*}
        \lscurve{k_j} & \leq \alpha \left(\lscurve[\vecs]{k_j} + \lscurve[\vecr]{k_j} \right) + (1 - \alpha) \frac{1}{2}\left( \lscurve{k_j - \abs{\partial(\pjsweep)}} + \lscurve{k_j + \abs{\partial(\pjsweep)}} \right) \\
        & \leq \alpha + \alpha \epsilon k_j + \frac{1}{2}\left(\lscurve{k_j - \cond(\pjsweep)\cdot\overline{k_j}} + \lscurve{k_j + \cond(\pjsweep)\cdot \overline{k_j}} \right) \\
        & \leq \alpha + \alpha \epsilon k_j + \frac{1}{2}\left( \lscurve{k_j - \phi \overline{k_j}} + \lscurve{k_j + \phi \overline{k_j}}\right),
    \end{align*}
    where the second inequality uses the fact that $\lscurve[\vecr]{k_j} \leq \epsilon k_j$ and the last inequality uses the concavity of $\lscurve{k}$ and the fact that $\cond(\pjsweep) \geq \phi$. By the induction hypothesis, we have that
    \begin{align*}
        \lefteqn{\lscurve{k_j}}\\
        & \leq \alpha + \alpha \epsilon k_j + \frac{1}{2}\left( f_t(k_j - \phi \overline{k_j}) + \frac{k_j - \phi \overline{k_j}}{\vol(\vertexset_\graphh)} + f_t(k_j + \phi \overline{k_j}) + \frac{k_j + \phi \overline{k_j}}{\vol(\vertexset_\graphh)} \right) \\
          & = \alpha + \alpha \epsilon k_j + \frac{k_j}{\vol(\vertexset_\graphh)} + \frac{1}{2}\left( f_t(k_j + \phi \overline{k_j}) + f_t(k_j - \phi \overline{k_j}) \right) \\
            &  = \alpha + \alpha \epsilon k_j + \alpha t + \alpha \epsilon k_j t +\frac{k_j}{\vol(\vertexset_\graphh)} + \frac{1}{2}\Bigg( \sqrt{\min(k_j - \phi \overline{k_j}, \vol(\vertexset_\graphh) - k_j + \phi \overline{k_j})} \\
            & \qquad {} + \sqrt{\min(k_j + \phi \overline{k_j}, \vol(\vertexset_\graphh) - k_j - \phi \overline{k_j})} \Bigg) \left(1 - \frac{\phi^2}{8}\right)^t \\
        & \leq   \alpha(t + 1) + \alpha \epsilon k_j (t + 1) + \frac{k_j}{\vol(\vertexset_\graphh)}+ \frac{1}{2}\left(\sqrt{\overline{k_j} - \phi \overline{k_j}} + \sqrt{\overline{k_j} + \phi \overline{k_j}}\right) \left(1 - \frac{\phi^2}{8}\right)^t
    \end{align*}
    Now, using the Taylor series of $\sqrt{1 + \phi}$ at $\phi = 0$, we see that for any $x \geq 0$ and $\phi \in [0, 1]$,
    \begin{align*}
        \frac{1}{2}\left( \sqrt{x - \phi x} + \sqrt{x + \phi x}\right) & \leq \frac{\sqrt{x}}{2}\left( \left(1 - \frac{\phi}{2} - \frac{\phi^2}{8} - \frac{\phi^3}{16}\right) + \left(1 + \frac{\phi}{2} - \frac{\phi^2}{8} + \frac{\phi^3}{16}\right)  \right) \\
        & \leq \sqrt{x}\left(1 - \frac{\phi^2}{8}\right).
    \end{align*}
    Applying this with $x = \overline{k_j}$, we have
    \begin{align*}
        \lscurve{k_j} - \frac{k_j}{\vol(\vertexset_\graphh)} & \leq   \alpha (t + 1) + \alpha \epsilon k_j (t+1) + \sqrt{\overline{k_j}}\left(1 - \frac{\phi^2}{8}\right) \left(1 - \frac{\phi^2}{8}\right)^t \\
        & = f_{t+1}(k_j),
    \end{align*}
    which completes the proof.
\end{proof}
We can now use this bound to show the other direction: if the \LS\ curve is particularly large at some point, then there must be a sweep set with low conductance.
\begin{lemma} \label{lem:probimpliescond}
 Let $\graphg$ be a graph with double cover $\graphh$, and let $\vecp = \simplify{\apr_\graphh(\alpha, \vecs, \vecr)}$ such that  $\max_{u \in \vertexset_\graphh} \frac{\vecr(u)}{\deg(u)} \leq \epsilon$.
If there is a set of vertices $\sets \subset \vertexset_\graphh$ and a constant $\delta$ such that
$
    \vecp(\sets) - \frac{\vol(\sets)}{\vol(\vertexset_\graphh)} \geq \delta$,
then there is some $j \in [1, \abs{\supp(\vecp)}]$ such that
\[
    \cond_\graphh(\pjsweep) < 6 \sqrt{ (1 + \epsilon \vol(\sets)) \alpha \ln \left(\frac{4}{\delta}\right)\big/\delta}.\]
\end{lemma}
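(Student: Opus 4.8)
The plan is to use the contrapositive of Lemma~\ref{lem:lscurve}. Write $\phi \triangleq \min_{j \in [\cardinality{\supp(\vecp)}]} \cond_\graphh(\pjsweep)$, so the claim is precisely that $\phi < 6\sqrt{(1+\epsilon\vol(\sets))\,\alpha\ln(4/\delta)/\delta}$; I would assume for contradiction that $\phi$ is at least this quantity. Since $\vecp(\sets) = \vecp(\sets\cap\supp(\vecp))$ while $\vol(\sets\cap\supp(\vecp)) \le \vol(\sets)$, I would first replace $\sets$ by $\sets\cap\supp(\vecp)$; this only increases the deviation $\vecp(\sets) - \vol(\sets)/\vol(\vertexset_\graphh)$ and only decreases $1+\epsilon\vol(\sets)$, so it is harmless. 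By Proposition~\ref{prop:lsleq} we then have $\lscurve{\vol(\sets)} \ge \vecp(\sets) \ge \delta + \vol(\sets)/\vol(\vertexset_\graphh)$, i.e.\ the Lov\'asz--Simonovits curve of $\vecp$ deviates from the stationary line by at least $\delta$ at the point $k=\vol(\sets)$.

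Next I would feed this into Lemma~\ref{lem:lscurve} at $k=\vol(\sets)$: under the standing assumption that \emph{every} sweep set has conductance at least $\phi$, for every integer $t\ge 0$,
\[
    \delta \;\le\; \lscurve{\vol(\sets)} - \frac{\vol(\sets)}{\vol(\vertexset_\graphh)} \;\le\; \alpha t\bigl(1+\epsilon\vol(\sets)\bigr) + \sqrt{\vol(\sets)}\Bigl(1-\tfrac{\phi^2}{8}\Bigr)^{t},
\]
where I used $\min(k,\vol(\vertexset_\graphh)-k)\le k$; note that $t=0$ already forces $\vol(\sets)\ge\delta^2$, which makes the logarithms below meaningful. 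The engine of the proof is then to pick $t$ balancing the two terms on the right: using $1-\phi^2/8\le e^{-\phi^2/8}$, taking $t\approx (8/\phi^2)\ln(2\sqrt{\vol(\sets)}/\delta)$ (rounded up, which costs only a constant factor since $\phi\le 1$ and $\vol(\sets)\ge\delta^2$) makes the geometric term at most $\delta/2$, whence $\delta/2 \le \alpha t(1+\epsilon\vol(\sets))$; rearranging gives $\phi^2 = O\!\bigl(\alpha(1+\epsilon\vol(\sets))\ln(\cdot)/\delta\bigr)$, contradicting the assumed lower bound on $\phi$ and thereby proving the lemma.

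The genuinely delicate point is the bookkeeping in that last step, both to obtain the stated leading constant $6$ and, more substantively, to reconcile the $\sqrt{\vol(\sets)}$ emerging from the $\sqrt{\min(k,\vol(\vertexset_\graphh)-k)}$ factor with the clean $\ln(4/\delta)$ in the statement. Here I expect one needs to exploit that $\vecp$ is a sub-probability vector, so that $\lscurve{x}\le\min(x,1)$ and hence the deviation hypothesis already constrains $\vol(\sets)$ (in particular $\vol(\sets)\le(1-\delta)\vol(\vertexset_\graphh)$ and $\vol(\sets)\ge\delta^2$), which, together with the reduction to $\sets\subseteq\supp(\vecp)$, lets one fold the $\tfrac12\ln\vol(\sets)$ contribution into the constant and the teleportation/residual error term; the remaining ingredients --- Proposition~\ref{prop:lsleq}, the estimate $1-\phi^2/8\le e^{-\phi^2/8}$, and the final algebra --- are routine. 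A second, minor subtlety is just checking that the chosen integer $t$ lies between $1$ and a constant multiple of $(8/\phi^2)\ln(2\sqrt{\vol(\sets)}/\delta)$, which follows from $\phi\le 1$ and $\vol(\sets)\ge\delta^2$.
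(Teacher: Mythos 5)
Your skeleton is the same as the paper's: reduce to the Lov\'asz--Simonovits bound of Lemma~\ref{lem:lscurve}, pick $t$ of order $(8/\phi^2)\ln(2\sqrt{\cdot}/\delta)$ so that the geometric term drops below $\delta/2$, and read off a contradiction with the assumed lower bound on $\phi$. However, there is a genuine gap exactly at the point you flag. Running the argument directly on $\graphh$ leaves you with the factor $\ln\bigl(2\sqrt{\vol(\sets)}/\delta\bigr)$ (or, if you keep $\min(k,\vol(\vertexset_\graphh)-k)$, with $\ln\vol(\vertexset_\graphh)$), and this cannot be folded into the constant: the facts you invoke, namely $\vol(\sets)\geq\delta^2$ and $\vol(\sets)\leq(1-\delta)\vol(\vertexset_\graphh)$, give no upper bound on $\vol(\sets)$ in terms of $1/\delta$, and restricting to $\sets\cap\supp(\vecp)$ does not help either, since $\supp(\vecp)$ may contain vertices of arbitrarily large degree carrying tiny probability mass, so $\vol(\supp(\vecp))$ is not controlled by $\alpha,\epsilon,\delta$. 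One can have $\delta$ a fixed constant while $\vol(\sets)$ is arbitrarily large, so your route only yields a conductance bound of the form $O\bigl(\sqrt{(1+\epsilon\vol(\sets))\alpha\ln(\vol(\sets))/\delta}\bigr)$, which depends on the size of the graph and is strictly weaker than the stated $\ln(4/\delta)$ bound.

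The missing idea in the paper is a re-weighting trick: replace $\graphh$ by $\graphh_c$, the same graph with every edge weight multiplied by a constant $c$ chosen so that $\vol(\vertexset_{\graphh_c})=(4/\delta)^2$. This rescaling changes nothing relevant --- sweep sets and their conductances, the deviation $\vecp(\sets)-\vol(\sets)/\vol(\vertexset_\graphh)$, and the product of the residual parameter with $\vol(\sets)$ (the residual bound becomes $\epsilon/c$ while volumes scale by $c$) are all invariant, so Lemma~\ref{lem:lscurve} applies verbatim in $\graphh_c$ --- but now $\sqrt{\min(k,\vol(\vertexset_{\graphh_c})-k)}\leq\sqrt{\vol(\vertexset_{\graphh_c})/2}$ is bounded purely in terms of $1/\delta$, so the choice $t=\lceil(8/\phi^2)\ln(2\sqrt{\vol(\vertexset_{\graphh_c})/2}/\delta)\rceil$ produces the graph-size-independent factor $\ln(4/\delta)$ and the constant $6$. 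Without this (or some equivalent device for decoupling the logarithm from the actual graph volume), your proof as written does not establish the lemma.
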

\begin{proof}
Let $\graphh_c$ be a weighted version of the double cover with the weight of every edge equal to some constant $c$.
    Let $k = \vol_{\graphh_c}(\sets)$ and $\overline{k} = \min(k, \vol(\vertexset_{\graphh_c}) - k)$.
    Notice that the behaviour of random walks and the conductance of vertex sets are unchanged by re-weighting the graph by a constant factor.
    As such, $\vecp = \sigma \circ \apr_{\graphh_c}(\alpha, \vecs, \vecr)$ and we let $\phi = \min_j \cond_{\graphh_c}(\pjsweep)$. Hence, by Lemma~\ref{lem:lscurve} it holds for any $t>0$ that
    \begin{align*}
        \delta & < \lscurve{k} - \frac{k}{\vol(\vertexset_{\graphh_c})}  < \alpha t (1 + \epsilon k) + \sqrt{\overline{k}} \left(1 - \frac{\phi^2}{8}\right)^t.
    \end{align*}
    Letting $v = \vol(\vertexset_{\graphh_c})$ we set
    $t = \left\lceil \frac{8}{\phi^2} \ln \frac{2 \sqrt{v/2}}{\delta} \right\rceil$.
    Now, assuming that
    \begin{equation} \label{eq:deltabound}
        \delta \geq \frac{2}{\sqrt{v / 2}}
    \end{equation}
    we have
\[
        \delta < (1 + \epsilon k) \alpha \left\lceil \frac{8}{\phi^2} \ln \frac{2 \sqrt{v/2}}{\delta}\right\rceil + \frac{\delta}{2 \sqrt{v/2}} \sqrt{\overline{k}} 
        < \frac{9 (1 + \epsilon k) \alpha \ln(v/2)}{\phi^2} + \frac{\delta}{2}.
        \]
    We set $c$ such that
    \[
        v = c\ \vol(\vertexset_\graphh) = \vol(\vertexset_{\graphh_c}) = \left(\frac{4}{\delta}\right)^2,
    \]
    which satisfies \eqref{eq:deltabound} and implies that \[
        \phi< \sqrt{\frac{36 (1 + \epsilon k) \alpha \ln(4 / \delta)}{\delta}}.
        \]
By the definition of $\phi$, there must be some $\pjsweep$ with the desired property.
\end{proof}

We have now shown the two facts promised at the beginning of this subsection.
Putting these together,
if there is a simple set $\sets \subset \vertexset_\graphh$ with low conductance, then we can find a sweep set of $\sigma \circ \apr(\alpha, \vecs, \vecr)$ with low conductance.
By the reduction from almost-bipartite sets in $\graphg$ to low-conductance simple sets in $\graphh$, our target set corresponds to a simple set $\sets \subset \vertexset_\graphh$ which leads to Algorithm~\ref{alg:local_max_cut} for finding almost-bipartite sets. Our result is summarised as follows.
\begin{theorem} \label{thm:main_thm}
 Let $\graphg$ be an $n$-vertex undirected graph, and $\setl, \setr \subset \vertexset_\graphg$ be disjoint sets such that $\bipart(\setl, \setr) \leq \beta$ and $\vol(\lur) \leq \gamma$. Then, there is a set $\setc \subseteq \lur$ with $\vol(C) \geq \vol(\lur)/2$ such that, for any $v\in \setc$,  $\alglocbipartdc\left(\graphg, v, \gamma, \sqrt{7560 \beta}\right)$ returns $(\setl', \setr')$ with $\bipart(\setl', \setr') = \bigo{\sqrt{\beta}}$ and  ${\vol(\setl' \union \setr')} = \bigo{\beta^{-1}\gamma}$.
Moreover, the algorithm has running time $\bigo{\beta^{-1}\gamma \log n}$. 
\end{theorem}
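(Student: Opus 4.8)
The plan is to combine the double-cover reduction of Section~\ref{sec:reduction} with the properties of the simplified approximate Pagerank vector established in Section~\ref{sec:alganalysis}, tracking constants carefully so that the target conductance $\sqrt{7560\beta}$ comes out. First I would pass to the double cover $\graphh$ of $\graphg$ and set $\sets = \setl_1 \cup \setr_2 \subseteq \vertexset_\graphh$. This $\sets$ is simple, so by Lemma~\ref{lem:cond_bip} it satisfies $\cond_\graphh(\sets) = \bipart_\graphg(\setl, \setr) \le \beta$ and $\vol_\graphh(\sets) = \vol_\graphg(\lur) \le \gamma$; moreover, being simple, $\vol_\graphh(\sets) \le \tfrac12\vol(\vertexset_\graphh)$, since each vertex of $\graphg$ contributes at most one of its two copies. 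When \alglocbipartdc\ is invoked with target bipartiteness $\beta' = \sqrt{7560\beta}$, it sets $\alpha = (\beta')^2/378 = 20\beta$ and $\epsilon = 1/(20\gamma)$.

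Next I would extract the good seed set. By Lemma~\ref{lem:sapr_escapingmass} applied to $\sets$ with partition $(\setl,\setr)$ there is a subset $\setc \subseteq \lur$ with $\vol(\setc) \ge \vol(\lur)/2$ such that, for every seed $v \in \setc$, the vector $\vecp = \sigma\circ\apr_\graphh(\alpha, \indicatorvec_{v_1}, \vecr)$ produced inside Algorithm~\ref{alg:aprdc} satisfies
\[
    \vecp(\setl_1 \cup \setr_2) \ge 1 - \frac{2\bipart(\setl,\setr)}{\alpha} - 2\vol(\sets)\max_{u}\frac{\vecr(u)}{\deg(u)}.
\]
By Lemma~\ref{lem:aprdc} the residual of the approximate Pagerank returned by \algaprdc\ obeys $\max_u \vecr(u)/\deg(u) < \epsilon$, so plugging in $\alpha = 20\beta$, $\bipart(\setl,\setr)\le\beta$, $\vol(\sets)\le\gamma$ and $\epsilon = 1/(20\gamma)$ gives $\vecp(\sets) \ge 1 - \tfrac1{10} - \tfrac1{10} = \tfrac45$.

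Then I would run the overlap-to-conductance step. Since $\vol_\graphh(\sets) \le \tfrac12\vol(\vertexset_\graphh)$ we have $\vecp(\sets) - \vol_\graphh(\sets)/\vol(\vertexset_\graphh) \ge \tfrac45 - \tfrac12 = \tfrac3{10} =: \delta$, a universal constant, and $\epsilon\vol(\sets) \le \tfrac1{20}$. Lemma~\ref{lem:probimpliescond} then provides some $j$ with
\[
    \cond_\graphh(\pjsweep) < 6\sqrt{\bigl(1+\epsilon\vol(\sets)\bigr)\,\alpha\,\ln(4/\delta)\big/\delta} \;\le\; 6\sqrt{\tfrac{21}{20}\cdot 20\beta\cdot\tfrac{10}{3}\ln\tfrac{40}{3}} \;\le\; \sqrt{7560\beta} = \beta',
\]
the last step being exactly the calibration that fixes the constant $7560$ (since $36\cdot 70\ln(40/3) \le 7560$). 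Hence the loop of Algorithm~\ref{alg:local_max_cut} halts and returns some sweep set $\pjsweep$ of conductance at most $\beta'$. As $\vecp = \sigma\circ\vecp'$ has simple support, every $\pjsweep$ is a simple set, so Corollary~\ref{lem:ReductionForSimpleset} turns it into disjoint $\setl',\setr'$ with $\bipart_\graphg(\setl',\setr') = \cond_\graphh(\pjsweep) \le \sqrt{7560\beta} = \bigo{\sqrt\beta}$. For the volume bound, Lemma~\ref{lem:aprdc} gives $\vol(\supp(\vecp')) \le 1/(\epsilon\alpha) = \gamma/\beta$ and $\supp(\vecp)\subseteq\supp(\vecp')$, so $\vol(\setl'\cup\setr') = \vol_\graphh(\pjsweep) \le \gamma/\beta = \bigo{\beta^{-1}\gamma}$. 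For the running time, \algaprdc\ costs $\bigo{1/(\epsilon\alpha)} = \bigo{\beta^{-1}\gamma}$ by Lemma~\ref{lem:aprdc}, applying $\sigma$ and evaluating the conductance of all sweep sets by one incremental pass costs $\bigo{\vol(\supp(\vecp))} = \bigo{\beta^{-1}\gamma}$, and sorting $\supp(\vecp)$ by $\vecp(v)/\deg(v)$ adds an $\bigo{\log n}$ factor, for a total of $\bigo{\beta^{-1}\gamma\log n}$.

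I expect the main obstacle to be the constant bookkeeping rather than a conceptual difficulty: one has to check that the chain of factors (the $378$ hidden in the definition of $\alpha$, the resulting $\alpha = 20\beta$ and $\epsilon = 1/(20\gamma)$, the constant $\delta = 3/10$, and the $6$ and logarithmic terms in Lemma~\ref{lem:probimpliescond}) composes to stay below $\sqrt{7560\beta}$; and --- the genuinely new feature relative to the classical \pagerank-based analysis --- that the extra residual contribution the $\sigma$-operator introduces (the $\vecr(u_2)$ term in Lemmas~\ref{lem:sapr_updatestep} and~\ref{lem:ls1}, and hence the $1 + \epsilon\vol(\sets)$ multiplier carried through Lemma~\ref{lem:probimpliescond}) is tracked consistently in every estimate. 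This is the place where an off-by-a-constant slip is easiest to make.
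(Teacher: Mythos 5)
Your proposal is correct and follows essentially the same route as the paper's own proof: reduce to the double cover via Lemma~\ref{lem:cond_bip}, apply Lemma~\ref{lem:sapr_escapingmass} with $\alpha=20\beta$, $\epsilon=1/(20\gamma)$ to get $\vecp(\sets)\ge 4/5$ and hence $\delta=3/10$, invoke Lemma~\ref{lem:probimpliescond} to obtain a sweep set of conductance at most $\sqrt{7560\beta}$, and use Lemma~\ref{lem:aprdc} for the volume and running-time bounds. The constant bookkeeping you carry out (including bounding $\ln(40/3)\le 3$ so that $2520\,\beta\ln(40/3)\le 7560\beta$) matches the paper's calculation.
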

\begin{proof}
We set the parameters of Algorithm~\ref{alg:local_max_cut} to be
\begin{itemize}
    \item $\alpha = \widehat{\beta}^2/378 = 20\beta$, and
    \item $\epsilon=1/(20\gamma)$. 
\end{itemize}
Let  $\vecp = \simplify{\apr_\graphh(\alpha, \indicatorvec_v, \vecr)}$ be the simplified approximate Pagerank vector computed in the algorithm, which satisfies that $\max_{v \in \vertexset_\graphh} \frac{\vecr(v)}{\deg(v)} \leq \epsilon$, and $\setc'=\setl_1\cup \setr_2$ be the target set in the double cover. Hence, by  Lemma~\ref{lem:sapr_escapingmass} we have that
\[
    \vecp(\setc')  \geq 1 - \frac{2 \beta}{\alpha} - 2 \epsilon\cdot  \vol(\setc)  \geq 1 - \frac{1}{10} - \frac{1}{10} = \frac{4}{5}.
\]
Notice that, since $\setc'$ is simple, we have that $\frac{\vol(\setc')}{\vol(\vertexset_\graphh)} \leq \frac{1}{2}$ and
\[
    \vecp(\setc') - \frac{\vol(\setc')}{\vol(\vertexset_\graphh)} \geq \frac{4}{5} - \frac{1}{2} = \frac{3}{10}.
\]
Therefore, by Lemma~\ref{lem:probimpliescond} and choosing $\delta=3/10$, there is some $j \in [1, \cardinality{\supp(\vecp)}]$ such that
\begin{align*}
    \cond(\pjsweep) & < \sqrt{\frac{360}{3}(1 + \epsilon \vol(\setc))\alpha \ln\left(\frac{40}{3}\right)} \leq \sqrt{\left(1 + \frac{1}{20}\right) 7200 \beta}   = \widehat \beta.
\end{align*}
Since such a $\pjsweep$ is a simple set, by Lemma~\ref{lem:cond_bip}, we know that the sets $\setl' = \{u \in \vertexset_\graphg : u_1 \in \pjsweep \}$ and $\setr' = \{u \in \vertexset_\graphg : u_2 \in \pjsweep \}$ satisfy $\bipart(\setl', \setr') \leq \widehat \beta$.
For the volume guarantee, Lemma~\ref{lem:aprdc} gives that
\[
    \vol(\supp(\vecp)) \leq \frac{1}{\epsilon \alpha} = \frac{\gamma}{\beta}.
\]
    Since the sweep set procedure in the algorithm is over the support of $\vecp$, we have that
    $$
        \vol(\setl' \union \setr') \leq \vol(\supp(\vecp)) \leq \beta^{-1} \gamma.$$
    Now, we analyse the running time of the algorithm.
    By Lemma~\ref{lem:aprdc}, computing the approximate Pagerank vector $\vecp'$ takes time
    \[
        \bigo{\frac{1}{\epsilon \alpha}} = \bigo{\beta^{-1}\gamma}.
    \]
    Since $\vol(\supp(\vecp')) = \bigo{\frac{1}{\epsilon \alpha}}$, computing $\simplify{\vecp'}$ can also be done in time $\bigo{\beta^{-1}\gamma}$.
    The cost of checking each conductance in the sweep set loop is $\bigo{\vol(\supp(\vecp)) \log(n)} = \bigo{\beta^{-1} \gamma \log(n)}$. This dominates the running time of the algorithm, which completes the proof of the theorem.
\end{proof}

We remark that the quadratic approximation guarantee in Theorem~\ref{thm:main_thm} matches the state-of-the-art local algorithm for finding a single set with low conductance~\cite{andersenAlmostOptimalLocal2016}.
Furthermore,
this result presents a significant improvement over the previous state-of-the-art by Li and Peng~\cite{liDetectingCharacterizingSmall2013}, whose design is based on an entirely different technique.
For any $\epsilon \in [0, 1/2]$, their algorithm has running time $\bigo{\epsilon^2 \beta^{-2}\gamma^{1 + \epsilon}\log^3\gamma}$ and returns a set with volume $\bigo{\gamma^{1 + \epsilon}}$ and bipartiteness $\bigo{\sqrt{\beta / \epsilon}}$.
In particular, their algorithm requires much higher time complexity in order to guarantee the same bipartiteness $O\left(\sqrt{\beta}\right)$.
  
\section{Algorithm for Directed Graphs} \label{sec:directed}
We now turn our attention to local algorithms for finding densely-connected clusters in directed graphs.
In comparison with undirected graphs, we are interested in finding disjoint $\setl, \setr \subset \vertexset$ of an input directed graph $\geqve$ such that most of the edges adjacent to  $\lur$ are \emph{from $\setl$ to $\setr$}. To formalise this,
we define the \emph{flow ratio} from $\setl$ to $\setr$ as
\[
F(\setl,\setr) \triangleq 1  - \frac{2 \weight(\setl, \setr)}{\vol_\Out(\setl) + \vol_\In(\setr)},
\]
where $\weight(\setl, \setr)$ is the number of directed edges from $\setl$ to $\setr$. Notice that  we take not only edge densities but also edge directions into account:
a low $F(\setl, \setr)$-value tells us that almost all edges with their tail in $\setl$ have their head in $\setr$, and conversely almost all edges with their head in $\setr$ have their tail in $\setl$. One could also see $F(\setl, \setr)$ as a generalisation of $\bipart(\setl, \setr)$.
In particular, if we view an undirected graph as a directed graph by replacing each edge with two directed edges, then
$\bipart(\setl, \setr) = F(\setl, \setr)$. In this section, we   develop a local algorithm for finding such vertex sets in a directed graph, and analyse the algorithm's performance.

\subsection{Reduction by Semi-Double Cover}
Given  a directed graph $\graphg=(\vertexset_\graphg, \edgeset_\graphg)$, we construct its \firstdef{semi-double cover} $\graphh=(\vertexset_\graphh, \edgeset_\graphh)$ as follows:
(1) every vertex $v\in \vertexset_\graphg$ has two corresponding vertices $v_1, v_2\in \vertexset_\graphh$; (2) for every edge $(u,v)\in \edgeset_\graphg$, we add the edge $\{u_1, v_2\}$ in $\edgeset_\graphh$, see Figure~\ref{fig:directedconstruct} for illustration.
\begin{figure}[t]
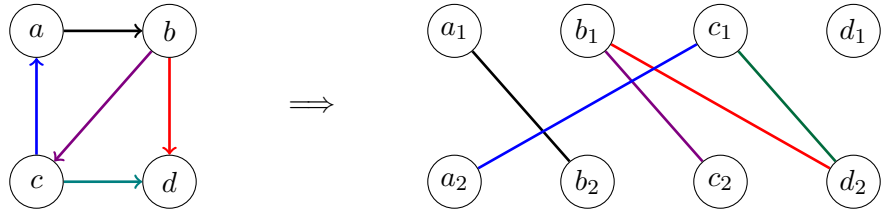

     \centering
 \scalebox{1}{\tikzfig{localDense/dcconstruct}}
     \caption[An example of a directed graph and its semi-double cover]{An example of a directed graph and its semi-double cover}
     \label{fig:directedconstruct}
\end{figure}

It is worth comparing this reduction with the one for undirected graphs:  
\begin{itemize} 
    \item For undirected graphs, we apply the standard double cover and every undirected edge in $\graphg$ corresponds to two edges in the double cover $\graphh$;
    \item For directed graphs, every directed edge in $\graphg$ corresponds to \emph{one} undirected edge in $\graphh$. This \emph{asymmetry} would allow us to `recover' the direction of any edge in $\graphg$.
\end{itemize}
\vspace{-0.2cm}

We follow the use of $\sets_1, \sets_2$ from Section~\ref{sec:undirected}:  
for any $\sets \subset \vertexset_\graphg$,
we define $\sets_1 \subset \vertexset_\graphh$ and $\sets_2 \subset \vertexset_\graphh$ by $\sets_1\triangleq \{v_1~|~ v\in \sets\}$ and $\sets_2\triangleq \{v_2~|~ v\in \sets\}$. 
The lemma below shows the connection between the value of $F_\graphg(\setl, \setr)$ for any   $\setl, \setr$ and $\cond_\graphh(\setl_1\cup \setr_2)$.
 
\begin{lemma} \label{lem:flow_cond}
Let $\graphg$ be a directed graph with semi-double cover $\graphh$. Then, it holds  for any $\setl, \setr \subset \vertexset_\graphg$
that 
$F_\graphg(\setl, \setr) = \cond_\graphh(\setl_1 \union \setr_2)$.
Similarly, for any simple set $\sets \subset \vertexset_\graphh$, let $\setl = \{u : u_1 \in \sets\}$ and $\setr = \{u : u_2 \in \sets\}$. Then, it holds that $F_\graphg(\setl, \setr) = \cond_\graphh(\sets)$.
\end{lemma}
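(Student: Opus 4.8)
\textbf{Proof plan for Lemma~\ref{lem:flow_cond}.}

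The plan is to mirror the proof of Lemma~\ref{lem:cond_bip} for the undirected case, but carefully track how the \emph{asymmetry} of the semi-double cover construction changes the edge bookkeeping. The key structural fact is that each directed edge $(u,v)\in\edgeset_\graphg$ contributes exactly one undirected edge $\{u_1,v_2\}$ to $\edgeset_\graphh$, so $u_1$ inherits the out-degree of $u$ and $v_2$ inherits the in-degree of $v$. Concretely, I would first record the degree identities in $\graphh$: for any $\sets\subset\vertexset_\graphg$, $\deg_\graphh(v_1)=\deg_{\Out}(v)$ and $\deg_\graphh(v_2)=\deg_{\In}(v)$, and hence $\vol_\graphh(\setl_1\cup\setr_2)=\vol_\Out(\setl)+\vol_\In(\setr)$.

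Next, writing $\sets'=\setl_1\cup\setr_2$, I would decompose $\vol_\graphh(\sets')$ into the edges leaving $\sets'$ and twice the edges internal to $\sets'$:
\[
\vol_\graphh(\sets')=\weight_\graphh(\sets',\vertexset_\graphh\setminus\sets')+2\weight_\graphh(\sets',\sets').
\]
The point is that the only edges internal to $\sets'$ are of the form $\{u_1,v_2\}$ with $u_1\in\setl_1$ and $v_2\in\setr_2$ — no edge has both endpoints in $\setl_1$ (since all edges in $\graphh$ go from a type-$1$ vertex to a type-$2$ vertex) and likewise none has both in $\setr_2$. Therefore $\weight_\graphh(\sets',\sets')=\weight_\graphh(\setl_1,\setr_2)$, which by construction equals $\weight_\graphg(\setl,\setr)$, the number of directed edges from $\setl$ to $\setr$. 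Substituting,
\[
\cond_\graphh(\sets')=\frac{\weight_\graphh(\sets',\vertexset_\graphh\setminus\sets')}{\vol_\graphh(\sets')}=\frac{\vol_\graphh(\sets')-2\weight_\graphg(\setl,\setr)}{\vol_\graphh(\sets')}=1-\frac{2\weight_\graphg(\setl,\setr)}{\vol_\Out(\setl)+\vol_\In(\setr)}=F_\graphg(\setl,\setr).
\]
The second claim, for a simple set $\sets\subset\vertexset_\graphh$ with $\setl=\{u:u_1\in\sets\}$ and $\setr=\{u:u_2\in\sets\}$, follows immediately because simplicity guarantees $\sets=\setl_1\cup\setr_2$ with $\setl$ and $\setr$ disjoint-indexed, so the identity just proved applies verbatim.

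I expect the main (minor) obstacle to be getting the denominator normalisation exactly right: one must be careful that $\cond_\graphh$ uses $\vol_\graphh(\sets')$ in the denominator (not $\min\{\vol(\sets'),\vol(\comp{\sets'})\}$, since the hypergraph/directed conductance variant divides by $\vol(\sets')$ alone), and that this matches the $\vol_\Out(\setl)+\vol_\In(\setr)$ appearing in the definition of $F$. Once the degree identities $\deg_\graphh(v_1)=\deg_\Out(v)$ and $\deg_\graphh(v_2)=\deg_\In(v)$ are established, everything else is a direct substitution and there is no genuine difficulty; the construction of the semi-double cover was designed precisely so that this correspondence holds on the nose.
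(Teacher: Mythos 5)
Your proposal is correct and follows essentially the same route as the paper's proof: write $\vol_\graphh(\setl_1\cup\setr_2)$ as the cut weight plus twice the internal weight, identify the internal weight with $\weight_\graphg(\setl,\setr)$ and the volume with $\vol_\Out(\setl)+\vol_\In(\setr)$, and conclude by direct substitution, with the second claim following by the same argument. Your explicit remark about the denominator being $\vol_\graphh(\sets')$ rather than the min-normalised version matches the convention the paper uses in this proof.
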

\begin{proof}
We define  $\sets_\graphh = \setl_1 \union \setr_2$, and have that 
\begin{align*}
\cond_\graphh(\sets_\graphh) & = \frac{\weight(\sets_\graphh, \vertexset_\graphh \setminus \sets_\graphh)}{\vol_\graphh(\sets_\graphh) }\\
& = \frac{\vol_\graphh(\sets_\graphh) - 2\weight_\graphh(\sets_\graphh, \sets_\graphh)}{\vol_\graphh(\sets_\graphh) } \\
& = 1  - \frac{2\weight_\graphh(\sets_\graphh, \sets_\graphh)}{\vol_\graphh(\sets_\graphh)} \\
& =   1 - \frac{2 \weight_\graphg(\setl, \setr)}{\vol_\Out(\setl) + \vol_\In(\setr)} \\
& = F_\graphg(\setl,\setr).
\end{align*}
This proves the first statement of the lemma.  
The second statement of the lemma follows by the same argument.
\end{proof}

\subsection{Design and Analysis of the Algorithm}
The new algorithm is a modification of the algorithm by Andersen and Peres~\cite{andersenFindingSparseCuts2009} described in Section~\ref{sec:esp}.
Given a directed graph $\graphg$ as input, the algorithm simulates the volume-biased ESP on $\graphg$'s semi-double cover $\graphh$.
Notice that the graph $\graphh$ can be constructed locally in the same way as the local construction of the double cover.
However, as the output set $\sets$ of an ESP-based algorithm is not necessarily simple, the algorithm only returns vertices $u\in \vertexset_\graphg$ in which \emph{exactly} one of $u_1$ and $u_2$ are included in $\sets$.
The key procedure for the algorithm is given in Algorithm~\ref{alg:directed_evo_cut}, in which the \alggeneratesample\ procedure is the one described at the end of Section~\ref{sec:esp}.

\begin{algorithm}[ht] \SetAlgoLined
\SetKwInOut{Input}{Input}
\SetKwInOut{Output}{Output}
   \Input{Starting vertex $u$, $i \in \{1, 2\}$,  target flow ratio $\phi$}
   \Output{A pair of sets $\setl, \setr \subset \vertexset_\graphg$}
   Set $T = \lfloor (100 \phi^{\frac{2}{3}})^{-1}\rfloor$ \\
   Compute $\sets = \alggeneratesample_\graphh(u_i, T)$ \\
   Let $\setl = \{u \in \vertexset_\graphg : u_1 \in \sets$ and $u_2 \not \in \sets\}$ \\
   Let $\setr = \{u \in \vertexset_\graphg : u_2 \in \sets$ and $u_1 \not \in \sets\}$  \\
   \Return $\setl$ and $\setr$
   \caption[Find densely connected clusters in directed graphs: \algevocutdirected$(u, i, \phi)$]{\algevocutdirected$(u, i, \phi)$}
   \label{alg:directed_evo_cut}
\end{algorithm}

Notice that in the constructed graph $\graphh$,
$\cond(\setl_1 \union \setr_2) \leq \phi$ does not imply that $\cond(\setl_2 \union \setr_1) \leq \phi$.
Due to this asymmetry, Algorithm~\ref{alg:directed_evo_cut} takes a parameter $i \in \{1, 2\}$ to indicate whether the starting vertex is in $\setl$ or $\setr$.
If it is not known whether $u$ is in $\setl$ or $\setr$,
two copies
can be run in parallel, one with $i = 1$ and the other with $i = 2$.
Once one of them  terminates with the performance guaranteed   in Theorem~\ref{thm:directedresult}, the other
can be terminated.
Hence, we can always assume that it is known whether the starting vertex $u$ is in $\setl$ or $\setr$.

We now analyse the algorithm. Notice that, since  the evolving set process gives us an arbitrary set on the semi-double cover, Algorithm~\ref{alg:directed_evo_cut} converts this into a simple set by 
removing any vertices $u$ where $u_1 \in \sets$ and $u_2 \in \sets$. The following definition allows us to discuss sets which are close to being simple.
\begin{definition}[$\epsilon$-simple set] For any set $\sets \subset \vertexset_\graphh$, let $\setp =\{u_1,u_2: u\in \vertexset_\graphg, u_1\in \sets~\mbox{and}~u_2\in \sets\}$. We call set $\sets$ $\epsilon$-simple if it holds that
   $
        \frac{\vol(\setp)}{\vol(\sets)} \leq \epsilon.
    $
\end{definition}
The notion of $\epsilon$-simple sets measures the ratio of vertices
in which both $u_1$ and $u_2$ are in $\sets$.
In particular, any simple set defined in Definition~\ref{def:simple} is $0$-simple.
We show that, if $\sets \subset \vertexset_\graphh$ is an $\epsilon$-simple set for small $\epsilon$, we can construct a simple set $\sets'$ with low conductance.
 \begin{lemma} \label{lem:lazysimplify}
For  any $\epsilon$-simple set $\sets\subset \vertexset_\graphh$,
let $\setp = \{u_1, u_2 : u \in \vertexset_\graphg, u_1 \in \sets \mbox{ and } u_2 \in \sets\}$
and set $\sets'=\sets\setminus \setp$. 
Then,  
 $
        \cond(\sets') \leq \frac{1}{1 - \epsilon} (\cond(\sets) + \epsilon).
    $
\end{lemma}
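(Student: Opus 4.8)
The plan is to bound directly the numerator and denominator of $\cond_\graphh(\sets') = \weight_\graphh(\sets', \comp{\sets'}) / \vol_\graphh(\sets')$ in terms of the corresponding quantities for $\sets$, exploiting that the ``bad'' set $\setp$ is small relative to $\sets$. Throughout I use the Andersen--Peres convention $\cond_\graphh(\sets) = \weight_\graphh(\sets, \comp{\sets})/\vol_\graphh(\sets)$ flagged in the footnote in Section~\ref{sec:esp}; this is the right notion here because the algorithm only ever applies the lemma to sets of small volume.

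First I would deal with the denominator. Since $\setp \subseteq \sets$ and $\sets' = \sets \setminus \setp$, we have $\vol_\graphh(\sets') = \vol_\graphh(\sets) - \vol_\graphh(\setp)$, and the $\epsilon$-simplicity hypothesis $\vol_\graphh(\setp) \le \epsilon \cdot \vol_\graphh(\sets)$ yields $\vol_\graphh(\sets') \ge (1 - \epsilon)\,\vol_\graphh(\sets)$.

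Next, the numerator. Since $\setp \subseteq \sets$, we have $\comp{\sets'} = \comp{\sets} \cup \setp$ with this union disjoint, so $\weight_\graphh(\sets', \comp{\sets'}) = \weight_\graphh(\sets', \comp{\sets}) + \weight_\graphh(\sets', \setp)$. For the first term, $\sets' \subseteq \sets$ implies every edge between $\sets'$ and $\comp{\sets}$ is also an edge between $\sets$ and $\comp{\sets}$, hence $\weight_\graphh(\sets', \comp{\sets}) \le \weight_\graphh(\sets, \comp{\sets})$. For the second term, every edge counted is incident to a vertex of $\setp$, so $\weight_\graphh(\sets', \setp) \le \vol_\graphh(\setp) \le \epsilon \cdot \vol_\graphh(\sets)$. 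Putting these together gives $\weight_\graphh(\sets', \comp{\sets'}) \le \weight_\graphh(\sets, \comp{\sets}) + \epsilon \cdot \vol_\graphh(\sets)$. Dividing the two estimates then gives
\[
    \cond_\graphh(\sets') \le \frac{\weight_\graphh(\sets, \comp{\sets}) + \epsilon \cdot \vol_\graphh(\sets)}{(1 - \epsilon)\,\vol_\graphh(\sets)} = \frac{1}{1 - \epsilon}\left( \cond_\graphh(\sets) + \epsilon \right),
\]
as desired.

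There is no serious obstacle here: the argument is a short counting estimate. The only points requiring care are (i) to note $\comp{\sets}$ and $\setp$ are disjoint so that the boundary of $\sets'$ splits cleanly, and (ii) to bound $\weight_\graphh(\sets', \setp)$ by $\vol_\graphh(\setp)$ rather than by something involving $\sets'$ directly. One should also double-check the degenerate cases ($\epsilon = 1$ or $\vol_\graphh(\sets) = 0$), which are vacuous or trivial.
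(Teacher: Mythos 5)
Your proof is correct and follows essentially the same route as the paper: both bound the numerator by $\weight_\graphh(\sets, \comp{\sets}) + \vol_\graphh(\setp)$ and the denominator below by $\vol_\graphh(\sets) - \vol_\graphh(\setp) \geq (1-\epsilon)\vol_\graphh(\sets)$, then divide. You merely spell out the numerator bound (via the disjoint split $\comp{\sets'} = \comp{\sets} \cup \setp$) in more detail than the paper does, which is fine.
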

\begin{proof} 
By the definition of conductance, we have
    \begin{align*}
        \cond(\sets') & = \frac{\weight(\sets', \vertexset \setminus \sets')}{\vol(\sets')} \\
        & \leq \frac{\weight(\sets, \vertexset \setminus \sets) + \vol(\setp)}{\vol(\sets) - \vol(\setp)} \\
        & \leq \frac{1}{1 - \epsilon} \frac{\weight(\sets, \vertexset \setminus \sets) + \vol(\setp)}{\vol(\sets)} \\
        & \leq \frac{1}{1 - \epsilon} \left( \frac{\weight(\sets, \vertexset \setminus \sets)}{\vol(\sets)} + \epsilon \right) \\
        & = \frac{1}{1 - \epsilon} \left( \cond(\sets) + \epsilon \right),
    \end{align*}
    which proves the lemma.
\end{proof}
Given this, in order to guarantee that $\cond(\sets')$ is low, we need to construct $\sets$ such that $\cond(\sets)$ is low and $\sets$ is $\epsilon$-simple for small $\epsilon$.
The following lemma shows that, by choosing the number of steps of the evolving set process carefully, we can very tightly control the overlap of the output of the evolving set process with the target set $\sets$, which we know to be simple.
    \begin{lemma} \label{lem:esp_tightvol}
        Let $\setc \subset \vertexset_\graphh$ be a set with conductance $\cond(\setc) = \phi$, and let  $T = (100 \phi^{\frac{2}{3}})^{-1}$. There exists a set $\setc_g \subseteq \setc$ with volume at least $\vol(\setc) / 2$ such that for any $x \in \setc_g$, with probability at least $7 / 9$, a sample path $(\sets_0, \sets_1, \ldots, \sets_T)$ from the volume-biased ESP started from $\sets_0 = \{x\}$ satisfies the following:
        \begin{itemize}
            \item $\cond(\sets_t) < 60 \phi^{\frac{1}{3}} \sqrt{\ln(\vol(\vertexset))}$ for some $t \in [0, T]$;
            \item $\vol(\sets_t \intersect \setc) \geq \left(1 - \frac{1}{10}\phi^{\frac{1}{3}}\right) \vol(\sets_t)$ for all $t \in [0, T]$. 
        \end{itemize}
    \end{lemma}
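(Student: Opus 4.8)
The plan is to derive both bullets from the three standard facts about escape probabilities and the volume-biased evolving set process applied to the undirected graph $\graphh$ --- Propositions~\ref{prop:escape_prob}, \ref{prop:esp_cond} and \ref{prop:esp_vol} --- and then to choose the constants so that a union bound over the two failure events leaves probability at least $7/9$. The key observation is that $\graphh$ is itself an ordinary undirected graph, so these propositions apply to it verbatim.

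First I would fix $\setc_g$. Applying Proposition~\ref{prop:escape_prob} to $\graphh$ with $\seta = \setc$ gives a subset $\setc_g \subseteq \setc$ with $\vol(\setc_g) \geq \vol(\setc)/2$ such that every $x \in \setc_g$ satisfies $\esc(x, T, \setc) \leq T\cond(\setc) = T\phi$. Since $T \leq (100\phi^{2/3})^{-1}$ (taking the floor, as in Algorithm~\ref{alg:directed_evo_cut}, only makes $T$ and hence this bound smaller), we get $\esc(x, T, \setc) \leq \phi^{1/3}/100$ for all $x \in \setc_g$; this escape bound is the only property of $\setc_g$ that the rest of the argument uses.

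Next I would treat the two bullets separately, for a fixed $x \in \setc_g$ and a sample path $(\sets_0, \ldots, \sets_T)$ of the volume-biased ESP started at $\sets_0 = \{x\}$. For the conductance bullet, apply Proposition~\ref{prop:esp_cond} with $c = 9$: with probability at least $1 - 1/9 = 8/9$ there is some $t \leq T$ with $\cond(\sets_t) \leq \sqrt{(4 \cdot 9 / T)\ln(\vol(\vertexset))}$, and by the choice of $T$ we have $\sqrt{36/T} = \sqrt{3600\,\phi^{2/3}} = 60\phi^{1/3}$, which matches the claimed bound (the strict-versus-non-strict distinction being harmless --- it can be absorbed into the slack of the floor in $T$ or into an infinitesimal decrease of $c$). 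For the overlap bullet, apply Proposition~\ref{prop:esp_vol} with $\seta = \setc$ and $\lambda = 9$: with probability greater than $1 - 1/9$ we have, for every $t \leq T$,
\[
\frac{\vol(\sets_t \setminus \setc)}{\vol(\sets_t)} \leq 9\,\esc(x, T, \setc) \leq \frac{9\phi^{1/3}}{100} < \frac{\phi^{1/3}}{10},
\]
and hence $\vol(\sets_t \cap \setc)/\vol(\sets_t) = 1 - \vol(\sets_t \setminus \setc)/\vol(\sets_t) \geq 1 - \tfrac{1}{10}\phi^{1/3}$.

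Finally, a union bound over the complements of these two events, each of probability at most $1/9$, shows that both bullets hold simultaneously on a path started at any $x \in \setc_g$ with probability at least $1 - 2/9 = 7/9$, which completes the proof. I do not expect a genuine obstacle here: the argument is essentially bookkeeping, and the only things that need care are the numerical identities $\sqrt{36/T} = 60\phi^{1/3}$ and $9\phi^{1/3}/100 < \phi^{1/3}/10$, the routine check that Propositions~\ref{prop:escape_prob}, \ref{prop:esp_cond} and \ref{prop:esp_vol} transfer unchanged to the semi-double cover $\graphh$, and the mild caveat that $T = \lfloor(100\phi^{2/3})^{-1}\rfloor$ must be at least $1$ for the statement to be non-vacuous, which holds once $\phi$ is below an absolute constant.
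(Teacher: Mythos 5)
Your proposal is correct and follows essentially the same route as the paper: Proposition~\ref{prop:esp_cond} (with $c=9$) for the conductance bullet, Proposition~\ref{prop:escape_prob} to define $\setc_g$ and bound $\esc(x,T,\setc)\le T\phi = \phi^{1/3}/100$, Proposition~\ref{prop:esp_vol} for the overlap bullet, and a union bound. The only cosmetic difference is your choice $\lambda=9$ where the paper takes $\lambda=10$; both give the claimed $\tfrac{1}{10}\phi^{1/3}$ overlap bound and at least $7/9$ success probability.
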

\begin{proof}
    By Proposition~\ref{prop:esp_cond}, with probability at least $(1 - 1/9)$ there is some $t < T$ such that
    \begin{align*}
        \cond(\sets_t) & \leq 3 \sqrt{4 T^{-1} \ln \vol(\vertexset_\graphh)}  = 3 \sqrt{400 \phi^{\frac{2}{3}} \ln \vol(\vertexset_\graphh)}  = 60 \phi^{\frac{1}{3}} \sqrt{\ln (\vol(\vertexset_\graphh))}.
    \end{align*}
    Then, by Proposition~\ref{prop:escape_prob} we have that 
    $
        \mathrm{esc}(x, T, \setc) \leq T \phi = \frac{1}{100} \phi^{1/3}$.
     Hence, by Proposition~\ref{prop:esp_vol}, with probability at least $9/10$,
    \[
    \max_{t\leq T} \frac{\vol(\sets_t\setminus \setc)}{\vol(\sets_t)} \leq 10\cdot \mathrm{esc}(x,T,\setc) \leq \frac{1}{10}\cdot \phi^{1/3}
    \]
    Applying the union bound on these two events proves the statement of the lemma. 
\end{proof}
This allows us to analyse the performance of Algorithm~\ref{alg:directed_evo_cut}, which
is summarised in Theorem~\ref{thm:directedresult}. 
\begin{theorem} \label{thm:directedresult}
Let $\graphg$ be an $n$-vertex directed graph, and $\setl, \setr \subset \vertexset_\graphg$ be disjoint sets such that $F(\setl, \setr) \leq \phi$ and $\vol(\lur)\leq \gamma$. There is a set $\setc \subseteq \lur$ with $\vol(\setc) \geq \vol(\lur)/2$ such that, for any $v\in \setc$ and some $i \in \{1, 2\}$,
$\algevocutdirected(\graphg, v, i, \phi)$ returns $(\setl', \setr')$ such that  $F(\setl', \setr') = \bigo{\phi^{\frac{1}{3}} \log^{\frac{1}{2}} n }$ and  ${\vol(\setl' \union \setr')} =\bigo{\left(1 - \phi^{\frac{1}{3}}\right)^{-1} \gamma}$.
Moreover, the algorithm has running time $\bigo{ \phi^{-\frac{1}{2}}\gamma \log^{\frac{3}{2}} n}$.
\end{theorem}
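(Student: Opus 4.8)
The plan is to follow the template of the proof of Theorem~\ref{thm:main_thm}, replacing the double cover by the semi-double cover, Lemma~\ref{lem:cond_bip} by Lemma~\ref{lem:flow_cond}, and the \pagerank\ machinery by the volume-biased evolving set process together with Lemma~\ref{lem:lazysimplify}, Lemma~\ref{lem:esp_tightvol}, and the facts of Andersen and Peres recalled in Section~\ref{sec:esp}.

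First I would fix $T = \lfloor (100\phi^{2/3})^{-1}\rfloor$ as in Algorithm~\ref{alg:directed_evo_cut} and let $\graphh$ be the semi-double cover of $\graphg$. Set $\setc' \triangleq \setl_1 \cup \setr_2 \subset \vertexset_\graphh$; by Lemma~\ref{lem:flow_cond} this is a \emph{simple} set with $\cond_\graphh(\setc') = F(\setl, \setr) \le \phi$, and by construction $\vol_\graphh(\setc') = \vol_\Out(\setl) + \vol_\In(\setr) \le \vol(\lur) \le \gamma$. Applying Lemma~\ref{lem:esp_tightvol} to $\setc'$ in $\graphh$ yields a subset $\setc'_g \subseteq \setc'$ with $\vol_\graphh(\setc'_g) \ge \vol_\graphh(\setc')/2$ such that, for every copy $v_i \in \setc'_g$, with probability at least $7/9$ the volume-biased ESP path $(\sets_0, \ldots, \sets_T)$ started from $\{v_i\}$ contains some $\sets_t$ with $\cond_\graphh(\sets_t) < 60\phi^{1/3}\sqrt{\ln\vol(\vertexset_\graphh)}$, and satisfies $\vol_\graphh(\sets_t \cap \setc') \ge (1 - \tfrac{1}{10}\phi^{1/3})\vol_\graphh(\sets_t)$ for all $t \le T$. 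I would then define the seed set $\setc \subseteq \lur$ to be those $v$ whose unique copy in $\setc'$ lies in $\setc'_g$ (using $i = 1$ for $v \in \setl$ and $i = 2$ for $v \in \setr$); the volume bound on $\setc'_g$ gives $\vol(\setc) \ge \vol(\lur)/2$ after translating between volumes in $\graphg$ and $\graphh$, which matches the statement and explains the role of the parameter $i \in \{1,2\}$: since one does not know which of $\setl, \setr$ contains $v$, both copies are run in parallel and the first to meet the guarantee is kept, as described after Algorithm~\ref{alg:directed_evo_cut}.

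Conditioning on this good event, the core of the argument is to turn $\sets_t$ into a simple set cheaply. Because $\setc'$ is simple, every vertex $u$ with both copies in $\sets_t$ has at least one of those copies in $\sets_t \setminus \setc'$, a set whose volume is at most $\tfrac{1}{10}\phi^{1/3}\vol_\graphh(\sets_t)$; I would use this to show that $\sets_t$ is $\epsilon$-simple with $\epsilon = O(\phi^{1/3})$, after which Lemma~\ref{lem:lazysimplify} produces the simple set $\sets' = \sets_t \setminus \setp$ with $\cond_\graphh(\sets') \le \tfrac{1}{1-\epsilon}(\cond_\graphh(\sets_t) + \epsilon) = O(\phi^{1/3}\sqrt{\ln\vol(\vertexset_\graphh)})$, and $\sets'$ is precisely $\setl'_1 \cup \setr'_2$ for the sets $(\setl', \setr')$ returned by Algorithm~\ref{alg:directed_evo_cut}. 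I expect this step — establishing that $\sets_t$ is $O(\phi^{1/3})$-simple — to be the main obstacle, because the semi-double cover is asymmetric: $\deg_\graphh(u_1) = \deg_\Out(u)$ while $\deg_\graphh(u_2) = \deg_\In(u)$, so a naive charging of $\vol_\graphh(\setp)$ against $\vol_\graphh(\sets_t \setminus \setc')$ loses a factor, and a more careful accounting (plausibly exploiting that $\setc'$ itself has small conductance in $\graphh$, or pairing copies with matched degrees) is needed.

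Finally I would read off the three guarantees. Since $\sets'$ is simple, the second part of Lemma~\ref{lem:flow_cond} gives $F(\setl', \setr') = \cond_\graphh(\sets') = \bigo{\phi^{1/3}\log^{1/2} n}$, using $\vol(\vertexset_\graphh) = 2\abs{\edgeset_\graphg} = \bigo{n^2}$. For the volume, $\vol(\setl' \cup \setr') \le \vol_\graphh(\sets') \le \vol_\graphh(\sets_t) \le (1 - \tfrac{1}{10}\phi^{1/3})^{-1}\vol_\graphh(\setc') = \bigo{(1 - \phi^{1/3})^{-1}\gamma}$, using that $\sets_t$ is almost contained in $\setc'$ and $\vol_\graphh(\setc') \le \gamma$. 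The running time follows from the guarantee that $\alggeneratesample$ runs in time proportional to the volume it processes, the number $T = O(\phi^{-2/3})$ of simulated steps, the volume bound above, and the logarithmic overhead of the sweep-style bookkeeping; a routine accounting then yields the claimed $\bigo{\phi^{-1/2}\gamma\log^{3/2} n}$.
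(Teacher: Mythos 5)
Your proposal follows the paper's proof of Theorem~\ref{thm:directedresult} essentially step by step: the reduction via Lemma~\ref{lem:flow_cond} to the simple target set $\setc' = \setl_1 \union \setr_2$ in the semi-double cover, the application of Lemma~\ref{lem:esp_tightvol} to obtain both the low-conductance state and the overlap guarantee, the conversion of the ESP output $\sets$ into a simple set via Lemma~\ref{lem:lazysimplify}, and the same volume and running-time accounting. The only place you stop short is the step you yourself flag as the main obstacle, namely that $\sets$ is $O(\phi^{1/3})$-simple. In the paper this is dispatched in two lines: with $\setp = \{u_1, u_2 : u_1 \in \sets \mbox{ and } u_2 \in \sets\}$, it asserts that since $\setc'$ is simple, $\vol(\setp) \leq 2\left(\vol(\sets) - \vol(\sets \intersect \setc')\right) \leq \frac{1}{5}\phi^{1/3}\vol(\sets)$ --- i.e.\ exactly the naive charging of $\vol(\setp)$ against $2\vol(\sets \setminus \setc')$ --- after which Lemma~\ref{lem:lazysimplify} combined with Lemma~\ref{lem:esp_tightvol} gives $\cond_\graphh(\sets') = \bigo{\phi^{1/3}\log^{1/2} n}$ and the rest proceeds just as you describe. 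So to reproduce the paper's argument you only need to state that one inequality; no ``more careful accounting'' appears there.

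That said, your observation about the asymmetry of the semi-double cover is pointed at precisely this charging, not at anything the paper does to get around it. The factor-$2$ bound amounts to pairing each copy in $\setp \intersect \setc'$ with its sibling, which lies in $\sets \setminus \setc'$ because $\setc'$ is simple; that pairing is volume-preserving in the undirected double cover, where $\deg_\graphh(u_1) = \deg_\graphh(u_2) = \deg_\graphg(u)$, but in the semi-double cover $\deg_\graphh(u_1) = \deg_\Out(u)$ and $\deg_\graphh(u_2) = \deg_\In(u)$ may differ, which is exactly the loss you identify. The paper nevertheless treats the bound as immediate and does not supply the matched-degree or conductance-based refinement you anticipate. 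In short, your write-up is the paper's proof with its thinnest step left open: closing it as the paper does takes one line, and your degree-asymmetry concern is best read as a critique of that very line rather than as a divergence from, or an obstacle overcome differently by, the paper's approach.
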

\begin{proof}
By Lemma~\ref{lem:esp_tightvol}, we know that, with  probability at least $7/9$,  the set $\sets$ computed by Algorithm~\ref{alg:directed_evo_cut} satisfies
    \[
        \vol(\sets \intersect \setc) \geq \left(1 - \frac{1}{10}\phi^{\frac{1}{3}}\right) \vol(\sets).
    \]
    Let $\setp = \{u_1, u_2 : u_1 \in \sets $ and $u_2 \in \sets\}$. Then, since $\setc$ is simple, we have that
    \begin{align*}
        \vol(\setp) & \leq 2\cdot (\vol(\sets) - \vol(\sets \intersect \setc))   \leq 2\cdot \left(1 - \left(1 - \frac{1}{10} \phi^{\frac{1}{3}}\right)\right)\cdot \vol(\sets)  = \frac{1}{5}\cdot  \phi^{\frac{1}{3}} \vol(\sets),
    \end{align*}
    which implies that $\sets$ 
    is $(1/5)\cdot \phi^{\frac{1}{3}}$-simple. 
    Therefore, by letting $\sets' = \sets \setminus \setp$, we have by Lemma~\ref{lem:lazysimplify} that 
    \begin{align*}
        \cond(\sets') & \leq \frac{1}{1 - \epsilon}\cdot  \left(\cond(\sets) + \epsilon\right) 
          \leq \frac{1}{1 - \frac{1}{5}} \left(\cond(\sets) + \frac{1}{5}\phi^{\frac{1}{3}}\right)   \leq \frac{5}{4}\cdot  \cond(\sets) + \frac{1}{4}\cdot  \phi^{\frac{1}{3}}.  
    \end{align*}
    Since $\cond(\sets) = \bigo{\phi^{\frac{1}{3}} \ln^{\frac{1}{2}}(n)}$ by Lemma~\ref{lem:esp_tightvol}, we   have that $\cond(\sets') = \bigo{\phi^{\frac{1}{3}} \ln^{\frac{1}{2}}(n)}$.
    For the volume guarantee, direct calculation shows that 
    \begin{align*}
        \vol(\sets') \leq \vol(\sets)  \leq \frac{\vol(\sets \intersect \setc)}{1 - \frac{1}{10}\phi^{\frac{1}{3}}} \leq \frac{\vol(\setc)}{1 - \phi^{\frac{1}{3}}}.
    \end{align*}
    Finally, the running time of the algorithm is dominated by the \alggeneratesample\ method, which is shown in \cite{andersenFindingSparseCuts2009} to have running time $\bigo{\vol(\setc) \phi^{-\frac{1}{2}} \ln^{\frac{3}{2}}(n)}$.
\end{proof}

Algorithm~\ref{alg:directed_evo_cut} is the first local algorithm for directed graphs that approximates a pair of densely connected clusters,
and demonstrates that finding such a pair appears to be much easier than finding a low-conductance set in a directed graph; in particular, existing local algorithms for finding a low-conductance set require the stationary distribution of the random walk in the directed graph~\cite{andersenLocalPartitioningDirected2007}, the sublinear-time computation of which is unknown~\cite{cohenAlmostlineartimeAlgorithmsMarkov2017}.
However, the knowledge of the stationary distribution is not needed for our algorithm.
 
\section{Experimental Results} \label{sec:localDense:experiments}
In this section we evaluate the performance of the newly proposed algorithms on both synthetic and real-world datasets. 
For undirected graphs, we   compare the performance of \alglocbipartdc\ against the previous state-of-the-art~\cite{liDetectingCharacterizingSmall2013}  through synthetic datasets with various parameters, and apply a real-world dataset to demonstrate the significance of the new algorithm.
For directed graphs, we   compare the performance of \algevocutdirected\ with the state-of-the-art \emph{non-local} algorithm since our presented local algorithm for directed graphs is the first such algorithm in the literature.
All experiments are performed on a Lenovo Yoga 2 Pro with an Intel(R) Core(TM) i7-4510U CPU @ 2.00GHz processor and 8GB of RAM.
The code can be downloaded from
\begin{equation*}
\mbox{\url{https://github.com/pmacg/local-densely-connected-clusters}.}
\end{equation*}

\subsection{Results for Undirected Graphs}
We first compare the performance of \alglocbipartdc\ with the previous state-of-the-art given by Li and Peng \cite{liDetectingCharacterizingSmall2013}, which we can refer to as \alglp, on synthetic graphs generated from the stochastic block model (SBM).
Specifically, we assume that the graph has $k=3$ clusters $\{\setc_j\}_{j=1}^3$, and the number of vertices in each cluster, denoted by $n_1, n_2$ and $n_3$ respectively, satisfy $n_1=n_2=0.1 n_3$. Moreover, any pair of vertices $u\in \setc_i$ and $v\in \setc_j$ is connected with probability $P_{i,j}$.
We assume that $P_{1,1}=P_{2,2}=p_1$, $P_{3,3}=p_2$, $P_{1,2}=q_1$, and $P_{1,3}=P_{2,3}=q_2$.
Throughout the experiments, we maintain the ratios $p_2 = 2 p_1$ and $q_2 = 0.1 p_1$, leaving the parameters $n_1$, $p_1$ and $q_1$ free.
Notice that the different values of $q_1$ and $q_2$ guarantee that $\setc_1$ and $\setc_2$ are the ones optimising the $\beta$-value, which is why the proposed model is slightly more involved than the standard SBM described in Section~\ref{sec:prelim:sbm}.

We   evaluate the quality of the output $(\setl, \setr)$ returned by each algorithm with respect to its $\beta$-value, the Adjusted Rand Index (ARI)~\cite{gatesImpactRandomModels2017}, as well as the ratio of the misclassified vertices defined by $\left(\cardinality{\setl \triangle \setc_1} + \cardinality{\setr \triangle \setc_2}\right)/\left(\cardinality{\setl \union \setc_1} + \cardinality{\setr \union \setc_2}\right)$.
All the reported results are the average performance of each algorithm over $10$ runs, in which a random vertex from $\setc_1 \cup \setc_2$ is chosen as the starting vertex of the algorithm.

\paragraph{Setting the $\epsilon_{LP}$ parameter in the \alglp\ algorithm.}
The \alglp\ algorithm has an additional parameter over \alglocbipartdc, which we refer to as $\epsilon_{LP}$.
This parameter influences the runtime and performance of the algorithm and must be in the range $[0, 1/2]$.
In order to choose a fair value of $\epsilon_{LP}$ for comparison with \alglocbipartdc, we consider several values on graphs with a range of target volumes.

We generate graphs from the SBM such that $p_1 = 1 / n_1$ and $q_1 = 100 p_1$ and vary the size of the target set by varying $n_1$ between $100$ and $10,000$.
For values of $\epsilon_{LP}$ in $\{0.01, 0.02, 0.1, 0.5\}$, Figure~\ref{fig:choosing_eps}(a) shows how the runtime of the \alglp\ algorithm compares to the \alglocbipartdc\ algorithm for a range of target volumes.
In this experiment, the runtime of the \alglocbipartdc\ algorithm lies between the runtimes of the \alglp\ algorithm for $\epsilon_{LP} = 0.01$ and $\epsilon_{LP} = 0.02$.
However, Figure~\ref{fig:choosing_eps}(b) shows that the performance of the \alglp\ algorithm with $\epsilon_{LP} = 0.01$ is significantly worse than the performance of the \alglocbipartdc\ algorithm, and so for a fair comparison we set $\epsilon_{LP} = 0.02$ for the remainder of the experiments.

\begin{figure}[t]
\centering
% \hspace{1in}
\begin{subfigure}{0.45\textwidth}
\includegraphics[width=\textwidth]{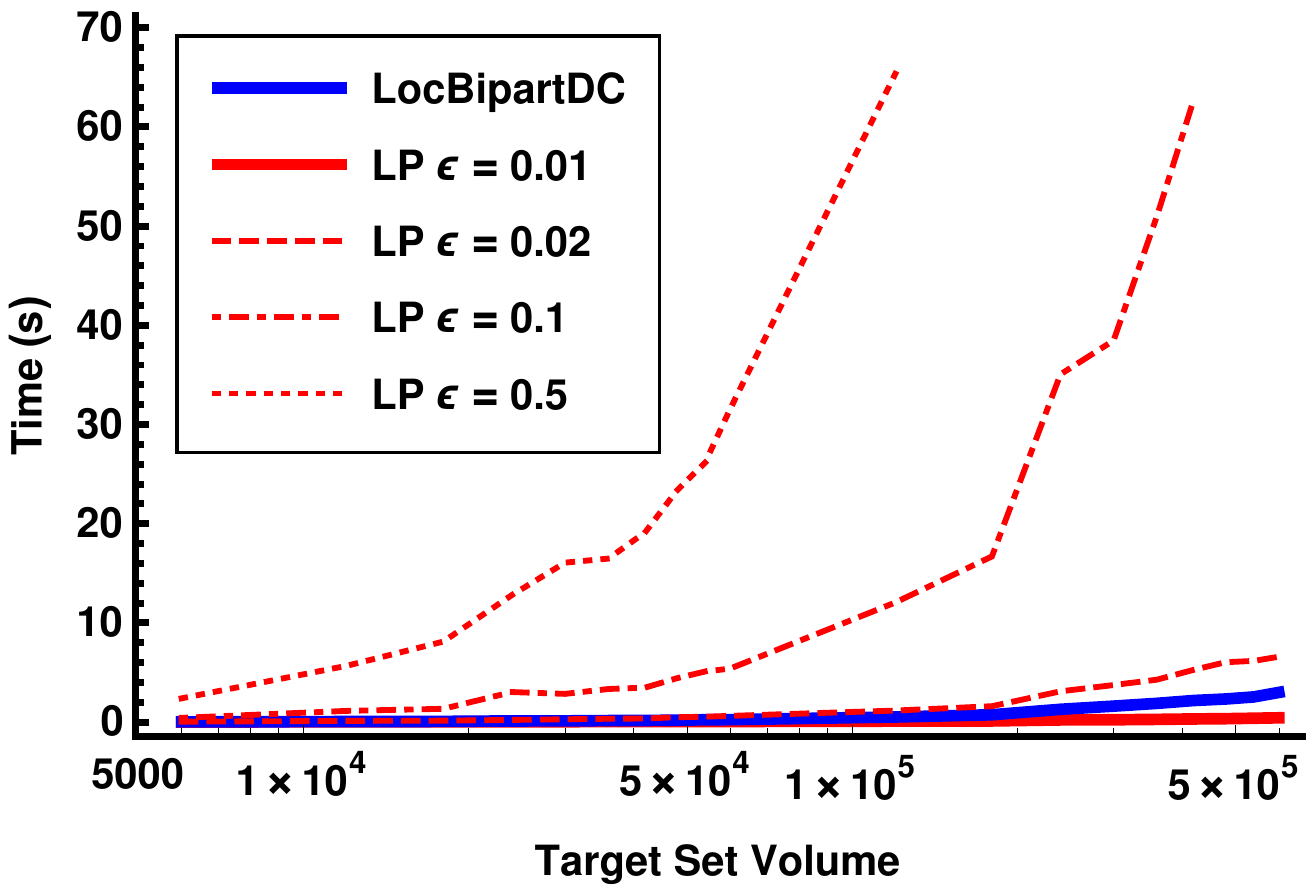}
\caption{}
\end{subfigure}
\hspace{1em}
\begin{subfigure}{0.45\textwidth}
\includegraphics[width=\textwidth]{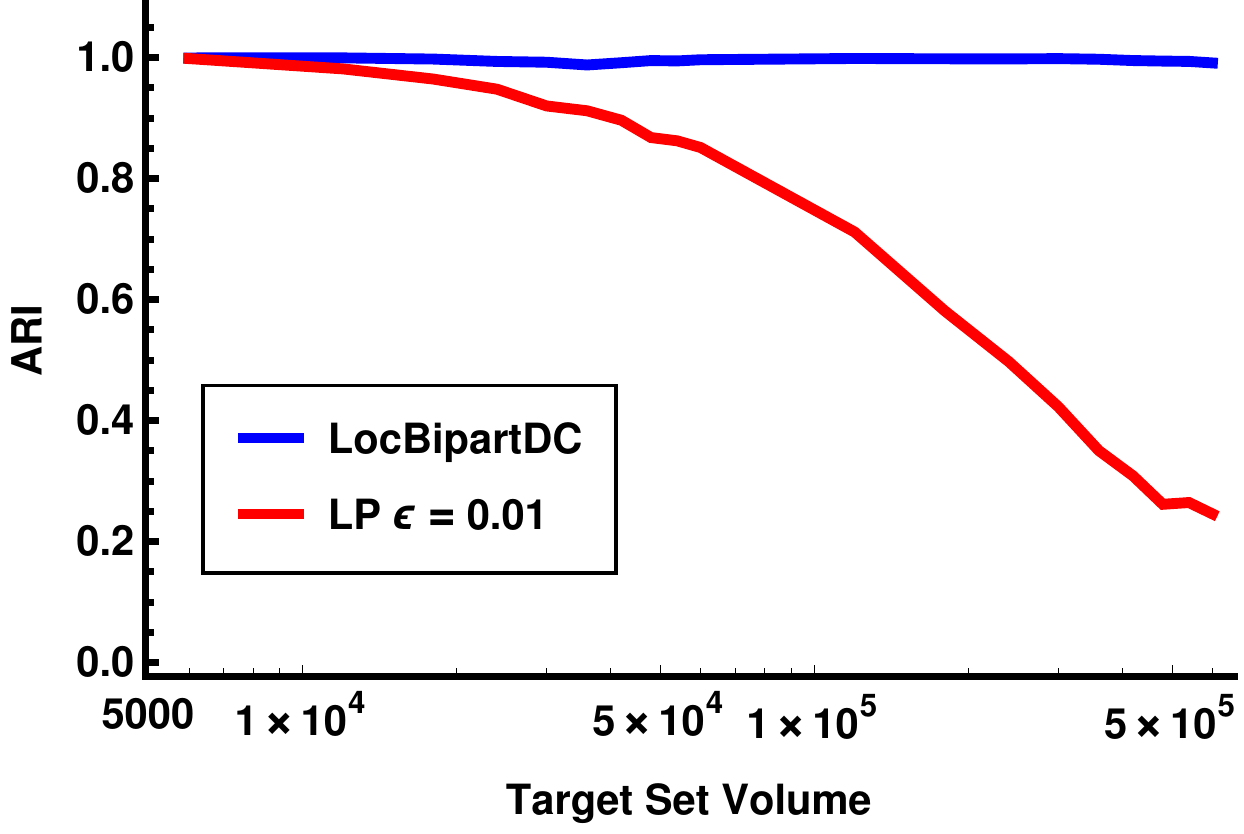}
\caption{}
\end{subfigure}
\caption[Runtime comparison of \alglocbipartdc\ and \alglp]{\textbf{(a)} A comparison of the runtimes of the \alglocbipartdc\ algorithm and the \alglp\ algorithm with various values of $\epsilon_{LP}$.
\textbf{(b)} Setting $\epsilon_{LP} = 0.01$ results in a significantly worse Adjusted Rand Index (ARI) than the \alglocbipartdc\ algorithm.
\label{fig:choosing_eps}}
\end{figure}

% \FloatBarrier
\paragraph{Comparison on the synthetic data.}
\begin{table*}[t]
\caption[Comparison between \alglocbipartdc\ and the previous state-of-the art]{\small Full comparison between Algorithm~\ref{alg:local_max_cut}~(\alglbdc) and the previous state-of-the-art~\cite{liDetectingCharacterizingSmall2013} (\alglp). For clarity we report the target bipartiteness $\beta = \bipart(\setc_1, \setc_2)$ and target volume $\gamma = \vol(\setc_1 \union \setc_2)$ along with the SBM parameters.
The final column shows the  ratio between the number of misclassified vertices and the total number of vertices in the target clusters.
}
\label{tab:full_results}
\vskip 0.1in
\centering
\begin{small}
\begin{sc}
\begin{tabular}{cccccc}
\toprule
Input graph parameters & Algo. & Runtime & $\beta$-value & ARI & Misclass.\\
\midrule
$n_1 = 10^3$, $p_1 = 10^{-3}$, $q_1 = 0.018$ & \alglbdc & \textbf{0.09} & \textbf{0.154} & \textbf{0.968} & \textbf{0.073} \\
$\beta \approx 0.1$, $\gamma \approx 4 \cdot 10^4$ & \alglp & 0.146 & 0.202 & 0.909 & 0.138 \\
\midrule
$n_1 = 10^4$, $p_1 = 10^{-4}$, $q_1 = 0.0018$ & \alglbdc & \textbf{0.992} & \textbf{0.215} & \textbf{0.940} & \textbf{0.145} \\
$\beta \approx 0.1$, $\gamma \approx 4 \cdot 10^5$ & \alglp & 1.327 & 0.297 & 0.857 & 0.256 \\
\midrule
$n_1 = 10^5$, $p_1 = 10^{-5}$, $q_1 = 0.00018$ & \alglbdc & \textbf{19.585} & \textbf{0.250} & \textbf{0.950} & \textbf{0.166} \\
$\beta \approx 0.1$, $\gamma \approx 4 \cdot 10^6$ & \alglp & 30.285 & 0.300 & 0.865 & 0.225 \\
\midrule
$n_1 = 10^3$, $p_1 = 4 \cdot 10^{-3}$, $q_1 = 0.012$ & \alglbdc & \textbf{1.249} & \textbf{0.506} & \textbf{0.503} & \textbf{0.763} \\
$\beta \approx 0.4$, $\gamma \approx 4 \cdot 10^4$ & \alglp & 1.329 & 0.597 & 0.445 & 0.785 \\
\bottomrule
\end{tabular}
\end{sc}
\end{small}
\vskip -0.1in
\end{table*}
We now compare the \alglocbipartdc\ and \alglp\ algorithms'
performance on graphs generated from the SBM with different values of $n_1,p_1$ and $q_1$.
As shown in Table~\ref{tab:full_results}, \alglocbipartdc\ not only runs faster, but also produces better clusters with respect to all three metrics.
Secondly, since the clustering task becomes more challenging when the target clusters have higher $\beta$-value,
we compare the algorithms' performance on a sequence of instances with increasing value of $\beta$.
Since $q_1/p_1 = 2(1-\beta)/\beta$, we simply fix the values of  $n_1, p_1$ as $n_1=1,000,p_1=0.001$, and generate graphs with increasing value of $q_1/p_1$; this gives us graphs with monotone values of $\beta$.
As shown in Figure~\ref{fig:ari}(a),
the performance of our algorithm is always better than the previous state-of-the-art.

%  the ARI value of our algorithm's output is always greater than the previous state-of-the-art.
%  \peter{Maybe mention our algorithm is also better on the other metrics, we just don't plot it. Since the reviewer may think we're hiding something.}

 Thirdly, notice that both algorithms use some parameters to control the algorithm's runtime and the output's approximation ratio, which are naturally influenced by each other. To study this dependency, we generate graphs according to   $n_1 = 100,000$,  $p_1 = 0.000015$, and $q_1 = 0.00027$ which results in target sets with     $\beta \approx 0.1$ and volume $\gamma \approx 6,000,000$.
 Figure~\ref{fig:ari}(b) shows that, in comparison with the previous state-of-the-art, \alglocbipartdc\ takes much less time to produce output with the same ARI value.
 
\begin{figure}[t]
% \begin{center}
\centering
    \begin{subfigure}{0.45\textwidth} 
    \includegraphics[width=\textwidth]{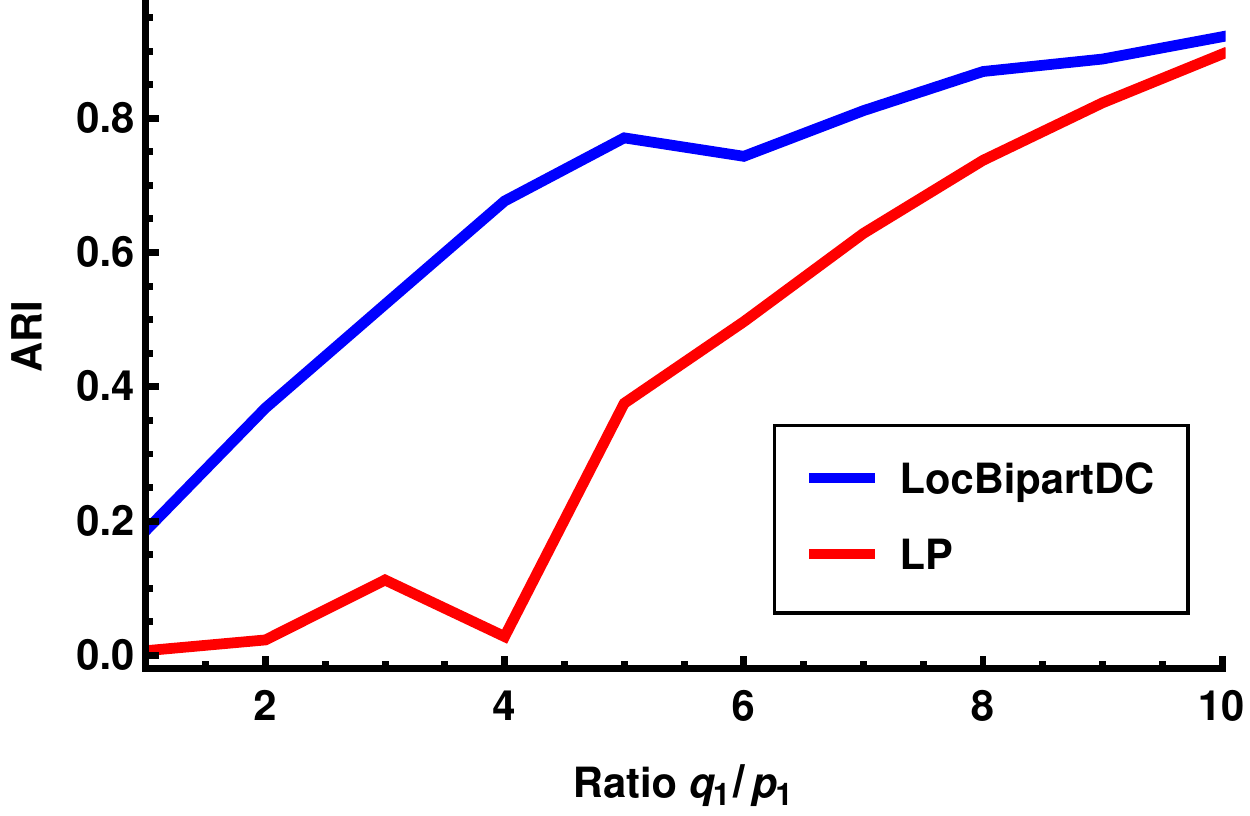}
    \caption{\label{fig:bipvsari}}
    \end{subfigure}
    \hspace{1em}
    \begin{subfigure}{0.45\textwidth} 
    \includegraphics[width=\textwidth]{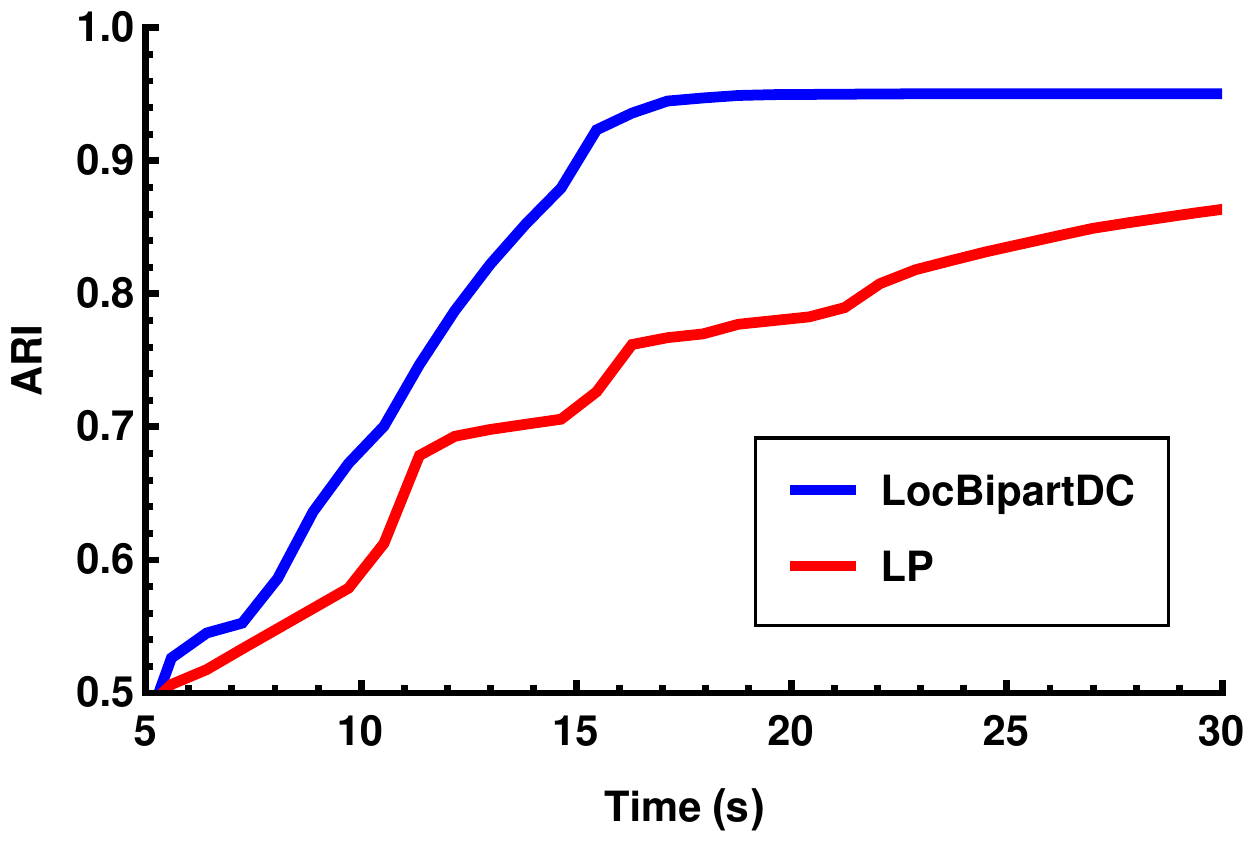}
    \caption{\label{fig:timevsari}}
    \end{subfigure}
    \caption[Experimental results for \alglocbipartdc]{
    \small
         \textbf{(a)} The ARI value of each algorithm when varying the target $\beta$-value. A higher $q_1 / p_1$ ratio corresponds to a lower $\beta$-value.
         \textbf{(b)} The ARI value of each algorithm when bounding the runtime of the algorithm.
    \label{fig:ari}
    }
% \end{center}
\end{figure}
 
\paragraph{Experiments on the real-world dataset.} 
We further demonstrate the significance of our algorithm on the Interstate Disputes Dataset~(v3.1)~\cite{maozDyadicMilitarizedInterstate2019}, which
records every 
interstate dispute during 1816--2010, including the level of hostility resulting from the dispute and the number of casualties, and has been widely studied in the social and political sciences~\cite{mansfieldWhyDemocraciesCooperate2002, martinMakeTradeNot2008} as well as the machine learning    community~\cite{huDeepGenerativeModels2017, menonLinkPredictionMatrix2011, traagCommunityDetectionNetworks2009}.
For a given time period, we construct a graph from the data by representing each country with a vertex and adding an edge between each pair of countries weighted according to the severity of any military disputes between those countries.
Specifically, if there's a war\footnote{A war is defined by the maintainers of the dataset as a series of battles resulting  in at least 1,000 deaths.
}
between the two countries, the corresponding two vertices are 
 connected by an edge with weight $30$; for any other dispute that is not part of an interstate war, the two corresponding vertices are connected by an edge with weight $1$.
 We always use the USA as the starting vertex of the algorithm, and the algorithm's output, as visualised in Figure~\ref{fig:intro_undirected}, can be well explained by geopolitics. The 
 $\beta$-values of the pairs of clusters in Figures~\ref{fig:intro_undirected} are $0.361$, $0.356$, $0.170$ and $0.191$ respectively.
 
\subsection{Results for Directed Graphs}
Next we evaluate the performance of our algorithm for directed graphs on synthetic and real-world datasets.
Since there are no previous local directed graph clustering algorithms that achieve similar objectives, we compare the output of Algorithm~\ref{alg:directed_evo_cut} (\algecd) with the state-of-the-art
non-local algorithm proposed by Cucuringu et al.~\cite{cucuringuHermitianMatricesClustering2020}, and we refer this to as \algclsz\ in the following. 
 
\paragraph{Synthetic dataset.}
Let us first look at the \firstdef{cyclic block model}~(CBM) described in  Cucuringu et al.~\cite{cucuringuHermitianMatricesClustering2020} with parameters $k$, $n$, $p$, $q$, and $\eta$.
In this model, we generate a directed graph with $k$ clusters $\setc_1, \ldots, \setc_k$ of size $n$, and for $u, v \in \setc_i$, there is an edge between $u$ and $v$ with probability $p$ and the edge direction is chosen uniformly at random.
For $u \in \setc_i$ and $v \in \setc_{i + 1 \mod k}$, there is an edge between $u$ and $v$ with probability $q$, and the edge is directed from $u$ to $v$ with probability $\eta$ and from $v$ to $u$ with probability $1 - \eta$.
Throughout our experiments, we fix $p = 0.001$, $q = 0.01$, and $\eta = 0.9$.

Secondly, since the goal of our algorithm is to find local structure in a graph, we extend the cyclic block model with additional local clusters and refer to this model as CBM+.
In addition to the parameters of the CBM, we introduce the parameters $n', q'_1, q'_2$, and $\eta'$.
In this model, the clusters $\setc_1$ to $\setc_k$ are generated as in the CBM, and there are two additional clusters $\setc_{k+1}$ and $\setc_{k+2}$ of size $n'$.
For $u, v \in \setc_{k+i}$ for $i \in \{1, 2\}$, there is an edge between $u$ and $v$ with probability $p$
and for $u \in \setc_{k+1}$ and $v \in \setc_{k+2}$, there is an edge with probability $q'_1$;
the edge directions are chosen uniformly at random.
For $u \in \setc_{k+1} \union \setc_{k+2}$ and $v \in \setc_1$, there is an edge with probability $q'_2$.
If $u \in \setc_{k+1}$, the orientation is from $v$ to $u$ with probability $\eta'$ and from $u$ to $v$ with probability $1 - \eta'$ and if $u \in \setc_{k+2}$, the orientation is from $u$ to $v$ with probability $\eta'$ and from $v$ to $u$ with probability $1 - \eta'$.
We always fix $q'_1 = 0.5$, $q'_2 = 0.005$, and $\eta' = 1$.
Notice that the clusters $\setc_{k+1}$ and $\setc_{k+2}$ form a `local' cycle with the cluster $\setc_1$.

Table~\ref{tab:gdsbm} reports the average performance over 10 runs with a variety of parameters.
We can see that \algclsz\ uncovers the global structure in the CBM more accurately than \algecd.
On the other hand, \algclsz\ fails to identify the local cycle in the CBM+ model, while \algecd\ succeeds. 

\begin{table}[t]
\caption[Comparison of \algecd\ with \algclsz\ on synthetic data.]{\small Comparison of \algecd\ with \algclsz\ on synthetic data. \label{tab:gdsbm}}
%\vskip 0.15in
\begin{center}
\begin{small}
\begin{sc}
\small\addtolength{\tabcolsep}{-1pt}
\begin{tabular}{cccccccc}
\toprule
& & & & \multicolumn{2}{c}{Time} & \multicolumn{2}{c}{ARI} \\
\cmidrule(lr){5-6}\cmidrule(lr){7-8}
Model & $n'$ & $n$ & $k$ & \algecd & \algclsz & \algecd & \algclsz \\
\midrule
CBM & - & $10^3$ & $5$ & $\mathbf{1.59}$ & $3.99$ & $0.92$ & $\mathbf{1.00}$ \\
CBM & - &$10^3$ & $50$ & $\mathbf{3.81}$ & $156.24$ & $0.99$ & $0.99$ \\
CBM+ & $10^2$ & $10^3$ & $3$ & $\mathbf{0.24}$ & $6.12$ & $\mathbf{0.98}$ & $0.35$ \\
% & & & & & & \\
CBM+ & $10^2$ & $10^4$ & $3$ & $\mathbf{0.32}$ & $45.17$ & $\mathbf{0.99}$ & $0.01$ \\
\bottomrule
\end{tabular}
\end{sc}
\end{small}
\end{center}
\vskip -0.1in
\end{table}

\paragraph{Real-world dataset.}
Now  we evaluate the algorithms' performance
on the US Migration Dataset~\cite{UnitedStatesCensus2000}.
For fair comparison, we   follow Cucuringu et al.~\cite{cucuringuHermitianMatricesClustering2020} and construct the directed graph as follows: every county in the mainland USA is represented by a vertex; for any vertices $j,\ell$, the edge weight of $(j,\ell)$ is given by
\[
\left|\frac{\matm_{j,\ell} - \matm_{\ell,j}}{\matm_{j,\ell} + \matm_{\ell,j}}\right|,
\]
where $\matm_{j,\ell}$ is the number of people who migrated from county $j$ to county $\ell$ between 1995 and 2000; in addition, the direction of $(j,\ell)$ is set to be from $j$ to $\ell$ if $\matm_{j,\ell}>\matm_{\ell,j}$, otherwise the direction is set to be the opposite.

For \algclsz, we follow their suggestion on the same dataset and set $k=10$.
Both algorithms' performance is evaluated with respect to the flow ratio, as well as the cut imbalance ratio used in their work.
For any vertex sets $\setl$ and $\setr$, the \firstdef{cut imbalance ratio} is defined by
\[
\mathrm{CI}(\setl,\setr) = \frac{1}{2}\cdot \left|\frac{\weight(\setl,\setr) - \weight(\setr,\setl)}{\weight(\setl,\setr) + \weight(\setr,\setl)}\right|,
\]
and a higher $\mathrm{CI}(\setl,\setr)$ value indicates the connection between $\setl$ and $\setr$ is more significant.  Using counties in Ohio, New York, California, and Florida as the starting vertices, the outputs of \algevocutdirected\ are  visualised in Figures~\ref{fig:intro_directed},
and in Table~\ref{tab:compalgo2} we compare them
to  the top $4$ pairs returned by \algclsz, which are shown in Figure~\ref{fig:migration}.
The new algorithm produces better outputs with respect to the two metrics.

\begin{figure*}[t]
\centering
    \begin{subfigure}{0.48\columnwidth}
    \includegraphics[width=\columnwidth]{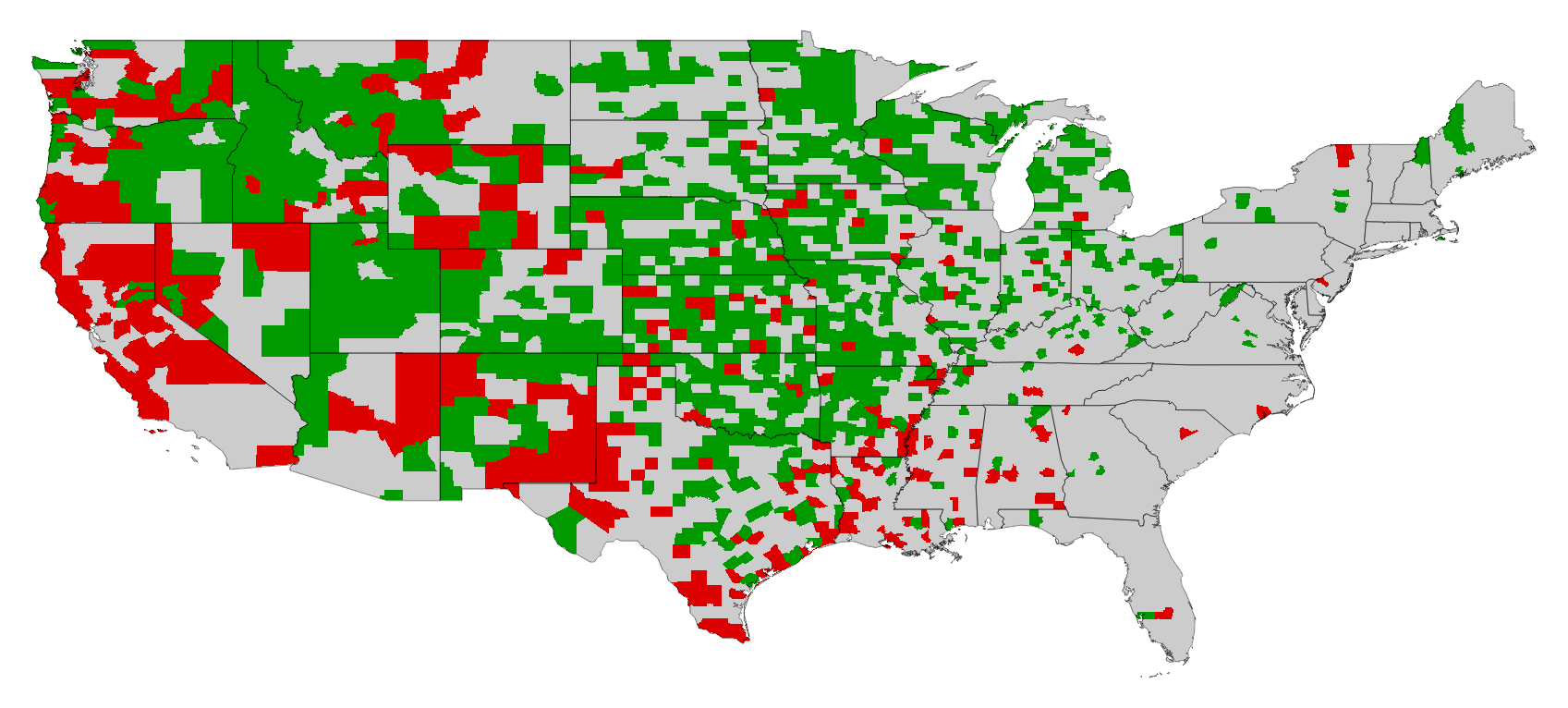}
    \caption{\algclsz, Pair 1}
    \end{subfigure}
    \hfill
    \begin{subfigure}{0.48\columnwidth}
    \includegraphics[width=\columnwidth]{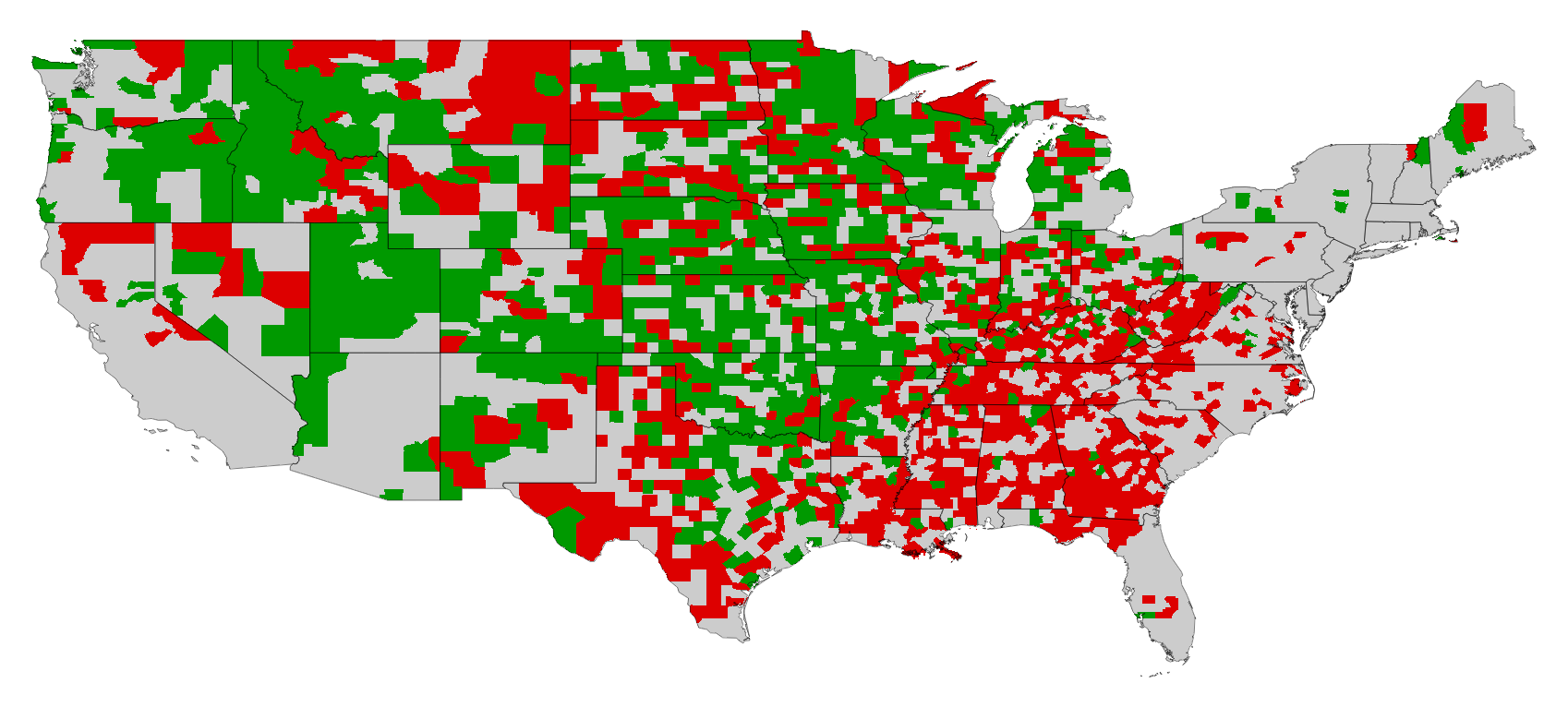}
    \caption{\algclsz, Pair 2}
    \end{subfigure}
    \begin{subfigure}{0.48\columnwidth} 
    \includegraphics[width=\columnwidth]{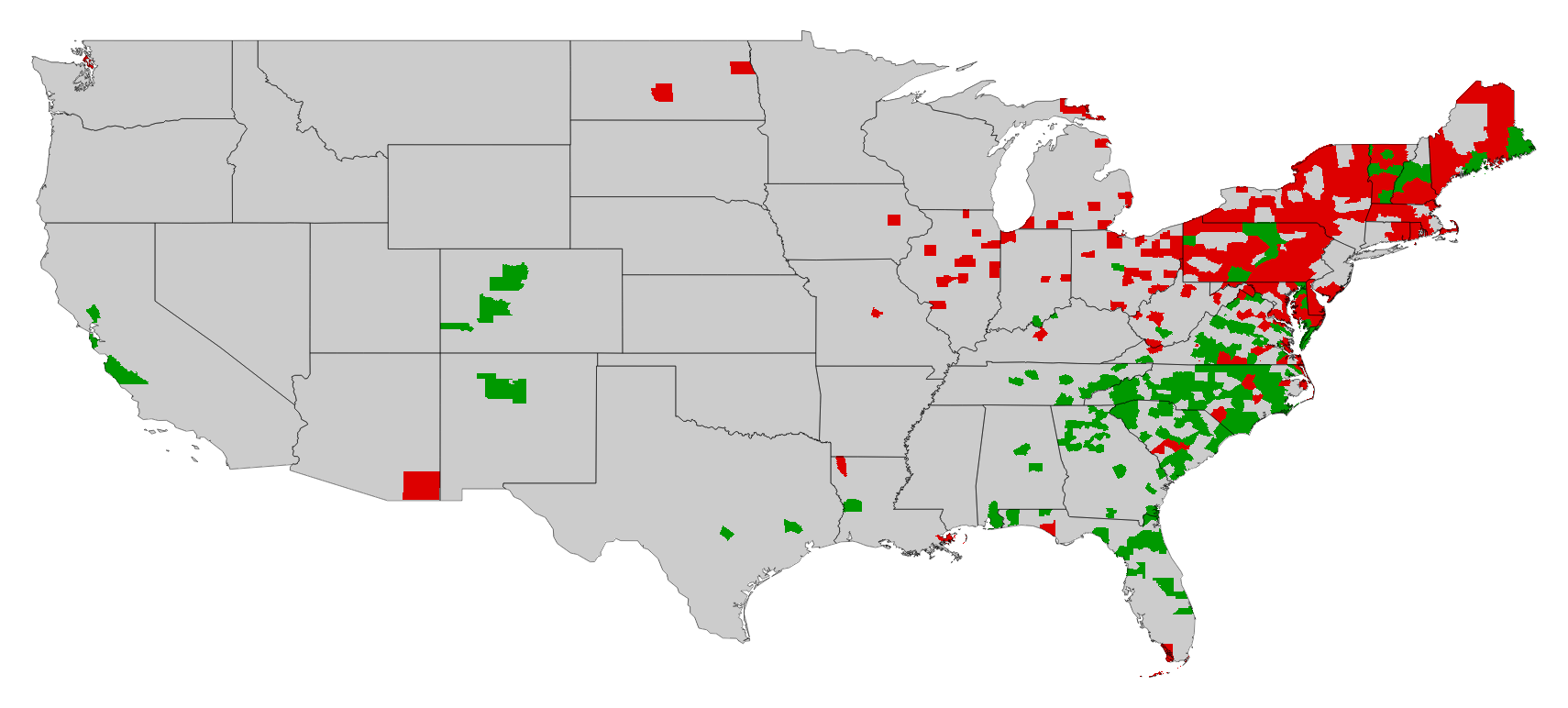}
    \caption{\algclsz, Pair 3}
    \end{subfigure}
    \hfill
    \begin{subfigure}{0.48\columnwidth}
    \includegraphics[width=\columnwidth]{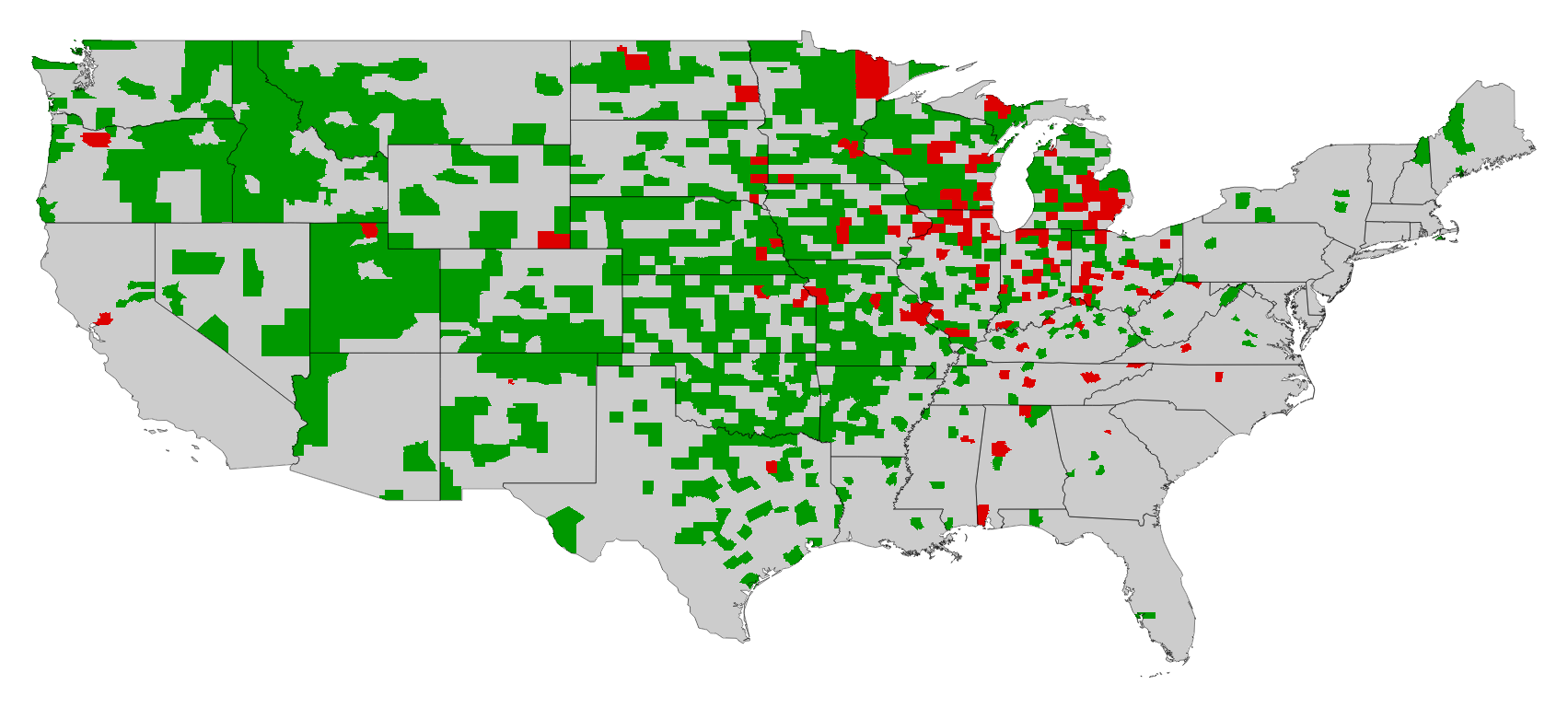}
    \caption{\algclsz, Pair 4}
    \end{subfigure}
    \caption[The top $4$ pairs of clusters found by \algclsz\ on the US Migration dataset]{
    The top $4$ pairs of clusters found by \algclsz\ on the US Migration dataset.
    \label{fig:migration}
    }
\end{figure*}

\begin{table*}[b]
\caption[Comparison of \algevocutdirected\ with \algclsz\ on the US migration dataset.]{\small Comparison of \algevocutdirected\ with \algclsz\ on the US migration dataset.}
\label{tab:compalgo2}
\vskip 0.15in
\begin{center}
\begin{small}
\begin{sc}
\begin{tabular}{ccccc}
\toprule
Figure & Algorithm      & Cluster         & Cut Imbalance & Flow Ratio \\
\midrule
\ref{fig:migration}(a)       & CLSZ           & Pair 1          & $0.41$        & $0.80$     \\
\ref{fig:migration}(b)       & CLSZ           & Pair 2          & $0.35$        & $0.83$     \\
\ref{fig:migration}(c)       & CLSZ           & Pair 3          & $0.32$        & $0.84$     \\
\ref{fig:migration}(d)       & CLSZ           & Pair 4          & $0.29$        & $0.84$     \\
\ref{fig:intro_directed}(a)       & EvoCutDirected & Ohio Seed       & $0.50$        & $0.56$     \\
\ref{fig:intro_directed}(b)       & EvoCutDirected & New York Seed   & $0.49$        & $0.58$     \\
\ref{fig:intro_directed}(c)       & EvoCutDirected & California Seed & $0.49$        & $0.67$     \\
\ref{fig:intro_directed}(d)       & EvoCutDirected & Florida Seed    & $0.42$        & $0.79$     \\
\bottomrule
\end{tabular}
\end{sc}
\end{small}
\end{center}
\vskip -0.1in
\end{table*}

These experiments suggest that local algorithms are not only more efficient, but can also be much more effective than global algorithms when learning certain  structures  in graphs. In particular, some localised structure
might be hidden when applying the objective function over the entire graph.

\section{Future Work}
 This chapter presents a novel graph reduction technique, and applies this to design  two   local algorithms for learning a certain structure of clusters in both undirected and directed graphs. Our experimental results further demonstrate their potential wide applications in learning meaningful structures of real-world datasets. Our work leaves several open questions for future work.
 
Firstly, there is a gap between the approximation guarantee of the algorithm for directed graphs when compared with undirected graphs, and it looks very challenging to close this gap with the techniques  presented in this chapter.
To understand this, it is insightful to discuss why Algorithm~\ref{alg:local_max_cut} cannot be applied for directed graphs, although the input graph is translated into an \emph{undirected} graph by the semi-double cover reduction.
This is because, when translating a directed graph into a bipartite undirected graph, the third property of the $\sigma$-operator in Lemma~\ref{lem:sigmaproperty} no longer holds, since the existence of the edge $\{u_1, v_2\}\in \edgeset_\graphh$ does not necessarily imply that $\{u_2, v_1\}\in \edgeset_\graphh$.
Indeed, Figure~\ref{fig:counterexample} gives a counterexample in which $\left(\sigma \circ (\lazywalkm\vecp)\right)(u) \not \leq \left(\lazywalkm(\sigma \circ \vecp)\right)(u)$.
\begin{figure}[t]
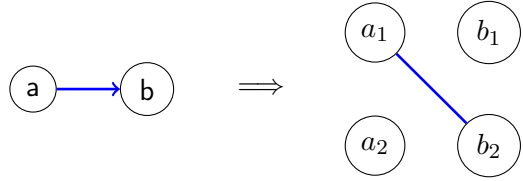
 
    \centering
    \tikzfig{localDense/counterexample}
    \caption[Example showing that \alglocbipartdc\ cannot be used for directed graphs]{\small Consider the
    directed graph and its semi-double cover above. Suppose $\vecp(a_1) = \vecp(a_2) = 0.5$ and $\vecp(b_1) = \vecp(b_2) = 0$. It is straightforward to check that $(\sigma \circ (\lazywalkm\vecp))(b_2) = 0.25$ and $(\lazywalkm\sigmap)(b_2) = 0$.}
    \label{fig:counterexample}
\end{figure}
This means that the typical analysis of a \pagerank\ vector with the Lov\'asz-Simonovitz curve cannot be applied anymore.
In addition, it is interesting to note that one cannot apply the tighter analysis of the ESP process given by Andersen et al.\ \cite{andersenAlmostOptimalLocal2016} to the new algorithm in this chapter.
The key to their analysis is an
improved bound on the probability that a random walk escapes from the target cluster.
This bound given a better guarantee for the conductance of the output set at the cost of a relaxed guarantee on the overlap between the output set and the target set.
Since the analysis for directed graphs in this chapter relies on a very tight guarantee on the overlap of the output set with the target set, we cannot use their improvement in this setting.
These difficulties leave the following important open question.
\begin{openquestion}
Suppose we are given a directed graph $\geqve$ containing clusters $\setl, \setr \subset \vertexset$ with $F(\setl, \setr) = \phi$.
Is there a local algorithm which is guaranteed to return $\setl', \setr' \subset \vertexset$ with $F(\setl', \setr') = \widetilde{O}(\sqrt{\phi})$? 
\end{openquestion}

Secondly, although this chapter shows that two densely connected clusters can be approximately  recovered with local algorithms, it would be very interesting to know whether our presented technique can be generalised to find  clusters whose structure is defined by some meta-graph as described in Chapter~\ref{chap:meta}.
\begin{openquestion}
Suppose we are given a graph $\geqve$, and  $\sets_1, \ldots, \sets_k \subset \vertexset$
are clusters defined  with respect to some   meta-graph.
Is there a local algorithm that approximately recovers these  $\sets_1, \ldots, \sets_k$?
\end{openquestion}

% We define some of the notation that is used commonly throughout this chapter.
\newcommand{\cals}{\mathcal{S}}         % Sets used in proof of LP
\newcommand{\cali}{\mathcal{I}}
\newcommand{\cale}{\mathcal{E}}
\newcommand{\cut}{\mathrm{cut}}
\newcommand{\dxdt}{\frac{\mathrm{d} x}{\mathrm{d} t}}
\newcommand{\signlaph}{\signlap_\graphh}
\newcommand{\signlapg}{\signlap_\graphg}
\newcommand{\sfe}{\sets_{\vecf}(e)}
\newcommand{\sfte}{\sets_{\vecf_t}(e)}
\newcommand{\ife}{\set{I}_{\vecf}(e)}
\newcommand{\ifte}{\set{I}_{\vecf_t}(e)}

% Expressions for the discrepancy of a given edge.
\newcommand{\deltaesquare}[1]{\left(\max_{u \in e} #1(u) + \min_{v \in e}#1(v)\right)^2}
\newcommand{\deltaeabs}[1]{\abs{\max_{u \in e} #1(u) + \min_{v \in e} #1(v)}}

\chapter{Finding Bipartite Components in Hypergraphs} \label{chap:hyper}
Hypergraphs are important objects to model ternary or higher-order relations of objects, and have a number of  applications in analysing many complex datasets occurring in practice.
In this chapter we   study a new heat diffusion process in hypergraphs, and employ this process to design a polynomial-time algorithm that approximately finds bipartite components in a hypergraph.
We theoretically study the performance of the proposed algorithm, and compare it against the previous state-of-the-art through extensive experimental analysis on both synthetic and real-world datasets.
We   see that the new algorithm consistently and significantly outperforms the previous state-of-the-art across a wide range of hypergraphs.
The problem studied in this chapter can be seen as a generalisation of the problem studied in the previous chapter, although we   study algorithms in the global setting rather than the local one.

 A unifying theme of this thesis is the design of efficient algorithms based on spectral methods, which study the graph Laplacian matrix in particular.
The results of the present chapter form part of a sequence of work to generalise many of the results of spectral graph theory to hypergraphs.
We introduce and study the non-linear Laplacian-type operator $\signlap_\graphh$ for any hypergraph $\graphh$.
While  we formally define the operator $\signlap_\graphh$ in Section~\ref{sec:diff_and_alg}, one can informally think about $\signlap_\graphh$ as a variant of the non-linear hypergraph Laplacian $\lap_\graphh$ studied in~\cite{chanSpectralPropertiesHypergraph2018, liSubmodularHypergraphsPLaplacians2018, takaiHypergraphClusteringBased2020} which generalises the \emph{signless Laplacian operator} of graphs.
We develop a polynomial-time algorithm that finds an eigenvalue $\lambda$ and its associated eigenvector of $\signlap_\graphh$.
 
\begin{mainresult}[See Theorem~\ref{thm:convergence} for the formal statement]
Given a hypergraph $\graphh$, there is a non-linear operator $\signlap_\graphh$ which generalises the signless graph Laplacian operator for hypergraphs. Moreover, there
 is a polynomial-time algorithm that finds an eigenvalue of $\signlap_\graphh$ along with its associated eigenvector.
\end{mainresult}

The algorithm is based on the following  heat diffusion process:
starting from an arbitrary vector $\vecf_0 \in \R^n$ that describes the initial heat distribution among the vertices, we use $\vecf_0$ to  construct some graph\footnote{Note that the word `graph' always refers to a non-hyper graph. Similarly, we always use $\lap_\graphh$ to refer to the \emph{non-linear} hypergraph Laplacian operator, and use $\lap_\graphg$ as the standard graph Laplacian.}
$\graphg_0$, and use the diffusion process in $\graphg_0$ to represent the  one in the original hypergraph $\graphh$ and update $\vecf_t$; this process continues until the time at which $\graphg_0$ cannot be used to appropriately simulate the diffusion process in $\graphh$ any more.
At this point, we use the currently maintained $\vecf_t$ to construct another graph $\graphg_t$ to simulate the diffusion process in $\graphh$, and update $\vecf_t$.
This process continues until the vector $\vecf_t$ converges; see Figure~\ref{fig:illustration} for illustration.
We prove that this heat diffusion process is unique, well-defined, and the maintained vector $\vecf_t$ converges to an eigenvector of $\signlap_\graphh$.

\begin{figure}[t]
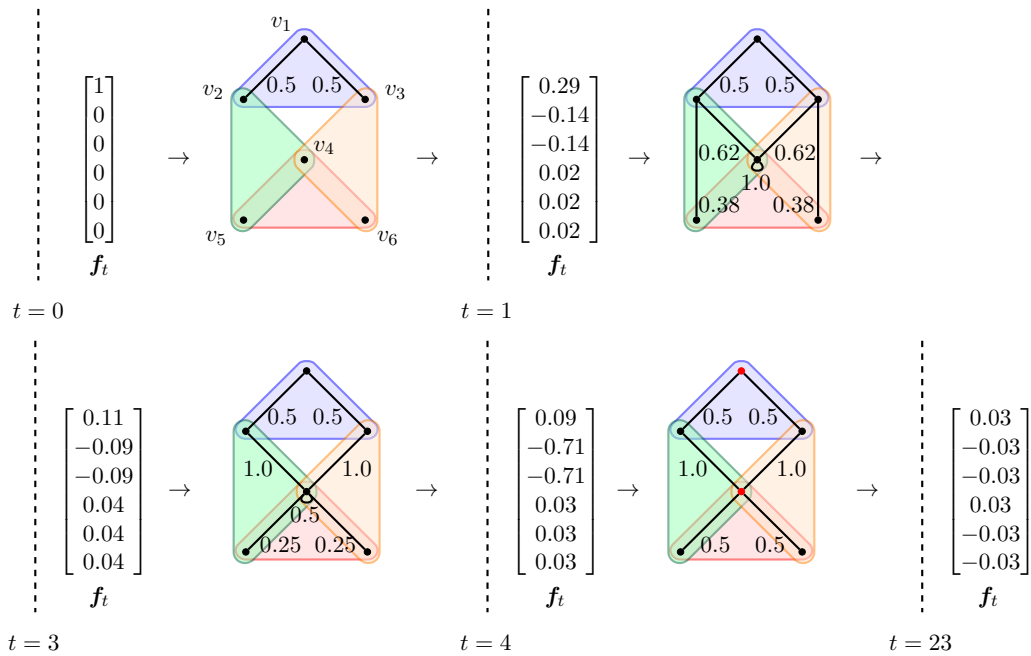

    \centering
    \scalebox{0.8}{\singlespace
    \tikzfig{hypergraphs/diffusion_example_large}
    }
\caption[Hypergraph signless Laplacian diffusion process]{\small{Illustration of the proposed diffusion process. In each time step, we construct a graph $\graphg$ based on the current vector $\vecf_t$, and update $\vecf_t$ with the $\signlap_\graphg$ operator.
Notice that the graph $\graphg$ changes throughout the execution of the algorithm, and that the final $\vecf_t$ vector can be used to partition the vertices of $\graphh$ into two well-connected sets (all the edges are adjacent to both sets), by splitting according to positive and negative entries.}
\label{fig:illustration}}
\end{figure}

The second result of this chapter is a polynomial-time algorithm that, given a hypergraph $\graphh=(\vertexset_\graphh, \edgeset_\graphh, \weight)$ as input, 
finds disjoint subsets $\setl, \setr \subset \vertexset_\graphh$ that are highly connected with each other.
 \begin{mainresult}[See Theorem~\ref{thm:mainalg} for the formal statement] 
Suppose we are given a hypergraph $\graphh = (\vertexset_\graphh, \edgeset_\graphh, \weight)$, and an eigenvalue $\lambda$ of the operator $\signlap_\graphh$ with its corresponding eigenvector.
There is an algorithm that finds two clusters $\setl, \setr \subset \vertexset_\graphh$ such that $\bipart_\graphh(\setl, \setr) \leq \sqrt{2 \lambda}$, where $\bipart_\graphh$ is the \emph{hypergraph bipartiteness}.
\end{mainresult}

The key to this algorithm is a Cheeger-type inequality for hypergraphs that relates the spectrum of $\signlap_\graphh$ and the \emph{hypergraph bipartiteness} of $\graphh$, an analogue of the graph bipartiteness studied in Chapter~\ref{chap:local}.
Both the design and analysis of the algorithm is inspired  by Trevisan~\cite{trevisanMaxCutSmallest2012}, however the analysis is much more involved because of the non-linear operator $\signlap_\graphh$ and hyperedges of different ranks.
This result alone answers an open question posed by Yoshida~\cite{yoshidaCheegerInequalitiesSubmodular2019},  who asks whether there is a hypergraph operator which satisfies a Cheeger-type inequality for bipartiteness.

The significance of our algorithmic result is demonstrated by extensive experimental studies of the algorithm on both synthetic and real-world datasets.
In particular, on the well-known Penn Treebank corpus that contains 49,208 sentences and over 1 million words, 
our \emph{purely unsupervised} algorithm is able to identify a significant fraction of verbs from non-verbs in its two output clusters.
Hence, this work could potentially have many applications in unsupervised learning for hypergraphs.

Finally, we study the spectrum of the non-linear Laplacian-type operators for hypergraphs, and show that, for any $n \in \Z_{\geq 0}$, there is a hypergraph on $n$ vertices such that $\lap_\graphh$ and $\signlap_\graphh$ have more than $n$ eigenvectors.
This answers an open question posed by Chan~\etal~\cite{chanSpectralPropertiesHypergraph2018}, who ask whether $\lap_\graphh$ could have more than $2$ eigenvectors\footnote{Notice that, while the operator $\lap_\graphg$ of a graph $\graphg$ has $n$ eigenvalues, the number of eigenvalues of $\lap_\graphh$ is unknown because of its non-linearity.}.

\section{Additional Notation} \label{sec:hyper:prelim}
In order to study densely connected clusters in hypergraphs, we   need some new notation and definitions.
Suppose, we have a hypergraph $\graphh = (\vertexset, \edgeset, \weight)$, then for any $\seta, \setb \subset \vertexset$ we can overload the weight function to define the cut value between $\seta$ and $\setb$ by
\[
\weight(\seta, \setb) \triangleq
\sum_{e\in \edgeset} \weight(e) \cdot \iverson{e \intersect \seta \neq \emptyset \land e \intersect \setb \neq \emptyset}.
\]
Sometimes, we would like to analyse the weights of edges that intersect some vertex sets and not others. To this end, for any $\seta, \setb, \setc \subseteq \vertexset$, let 
\[
\weight(\seta, \setb~|~\setc) \triangleq \sum_{e \in \edgeset} \weight(e) \cdot \iverson{e \intersect \seta \neq \emptyset \land e \intersect \setb \neq \emptyset \land e \intersect C=\emptyset},
\]
and we can also define $\weight(\seta~|~\setc) \triangleq \weight(\seta, \seta~|~\setc)$ for simplicity.
Generalising the notion of the bipartiteness  of a graph, the \firstdef{hypergraph bipartiteness} of sets $\setl,\setr \subset \vertexset$ is defined by 
    \[
        \bipart_\graphh(\setl, \setr) \triangleq \frac{2\weight(\setl | \setcomplement{\setl}) + 2\weight(\setr | \setcomplement{\setr}) + \weight(\setl, \setcomplement{\setl} | \setr) + w(\setr, \setcomplement{\setr}| \setl)}{\vol(\lur)},
    \]
    and we define $$\bipart_H\triangleq \min_{\sets \subset \vertexset_\graphh}\bipart_\graphh(\sets, \vertexset_\graphh \setminus \sets).$$
This definition generalises the graph bipartiteness (Definition~\ref{def:bipartiteness}) in that, if every edge in $\graphh$ contains $2$ vertices, then $\graphh$ is a (non-hyper) graph and $\bipart_\graphh(\setl, \setr)$ is equivalent to the graph bipartiteness.
Informally, a low value of $\bipart_\graphh(\setl, \setr)$ indicates that most of the edges which intersect either $\setl$ or $\setr$ connect $\setl$ and $\setr$ together.
The goal of this chapter is to develop an algorithm to find such sets $\setl$ and $\setr$ in a hypergraph.

For any hypergraph $H$ and $\vecf \in \R^n$, we define the \firstdef{discrepancy} of an edge $e \in \edgeset_\graphh$ with respect to $\vecf$ as
\[
\discrep_{\vecf} (e) \triangleq \max_{u\in e} \vecf(u) + \min_{v\in e} \vecf(v).
\] 
Moreover, the weighted discrepancy of $e \in \edgeset_\graphh$ is
\[
    c_{\vecf}(e) \triangleq \weight(e) \abs{\discrep_{\vecf}(e)},
\]
and for any set $\sets \subseteq \edgeset_\graphh$  we have
\[
    c_{\vecf}(\sets) \triangleq \sum_{e \in \sets} c_{\vecf}(e).
\]
The discrepancy of $\graphh$ with respect to $\vecf$ is defined by
\[
D(\vecf) \triangleq\frac{\sum_{e\in \edgeset_\graphh} \weight(e) \cdot \discrep_{\vecf}(e)^2 }{\sum_{v\in \vertexset_\graphh} \deg_\graphh(v)\cdot \vecf(v)^2}.
\]
For any
$\vecf, \vecg \in \R^n$, the weighted inner product between $\vecf$ and $\vecg$ is defined by 
$$\inner{\vecf}{\vecg}_w \triangleq \vecf^\transpose \degm_\graphh \vecg,$$
where $\degm_\graphh \in \R^{n \times n}$ is the diagonal matrix consisting of the degrees of all the vertices of $\graphh$. The weighted norm of $\vecf$ is defined by $\norm{\vecf}_w^2 \triangleq  \inner{\vecf}{\vecf}_w$.

For any non-linear operator $\mat{J}: \R^n \mapsto \R^n$, we say that $(\lambda, \vecf)$ is an eigen-pair if and only if $\mat{J} \vecf = \lambda \vecf$. Note that
in contrast with linear operators, a non-linear operator does not necessarily have $n$ eigenvalues.
Additionally, we define the weighted Rayleigh quotient of the operator $\signlap$ and vector $\vecf$ to be
\[
    R_\signlap(\vecf) \triangleq \frac{\vecf^\transpose \signlap \vecf}{\norm{\vecf}_\weight^2}.
\]
It is  important to remember that throughout this chapter the letter $\graphh$ is always used to represent a hypergraph, and $\graphg$ to represent a graph. 

\section{Baseline Algorithms} \label{sec:hyper:baselines}
Before we present the new algorithm, let us first consider the baseline algorithms for finding bipartite components in hypergraphs.
One natural idea is to
construct a graph $\graphg$ from the original hypergraph $\graphh$ and apply a graph algorithm on $\graphg$ to find a good approximation of the cut structure of $\graphh$~\cite{chitraRandomWalksHypergraphs2019, liInhomogeneousHypergraphClustering2017, zhouLearningHypergraphsClustering2006}.
However, most graph reductions would introduce a factor of $r$, which relates to the rank of hyperedges in $\graphh$, into the approximation guarantee.
To see this, consider the following two natural graph reductions:
\begin{enumerate}
\item \firstdef{Random Reduction}: From $\graphh=(\vertexset,\edgeset_\graphh)$, we construct $\graphg=(\vertexset, \edgeset_\graphg)$ in which every hyperedge $e\in  \edgeset_\graphh$ is replaced by an edge connecting two randomly chosen vertices in $e$;
    \item \firstdef{Clique Reduction}: From $\graphh=(\vertexset,\edgeset_\graphh)$, we construct $\graphg=(\vertexset, \edgeset_\graphg)$ in which 
every hyperedge $e\in \edgeset_\graphh$ is replaced by a clique of $\rank(e)$ vertices in $\graphg$.
\end{enumerate}

To discuss the drawback of both reductions, we study the following two $r$-uniform  hypergraphs $\graphh_1$ and $\graphh_2$, in each of which the vertex set is $\setl\cup \setr$ and the edge set is defined as follows:
\begin{enumerate}
    \item in $\graphh_1$, every edge contains exactly one vertex from $\setl$, and $r-1$ vertices from $\setr$;
    \item in $\graphh_2$, every edge contains exactly $r/2$ vertices from $\setl$ and $r/2$ vertices from $\setr$.
\end{enumerate}
As such, we have that $\weight_{\graphh_1}(\setl,\setr) = \cardinality{\edgeset_{\graphh_1}}$, and $\weight_{\graphh_2}(\setl,\setr) = \cardinality{\edgeset_{\graphh_2}}$.
See  Figure~\ref{fig:example_hypergraphs} for illustration.

\begin{figure}[t]
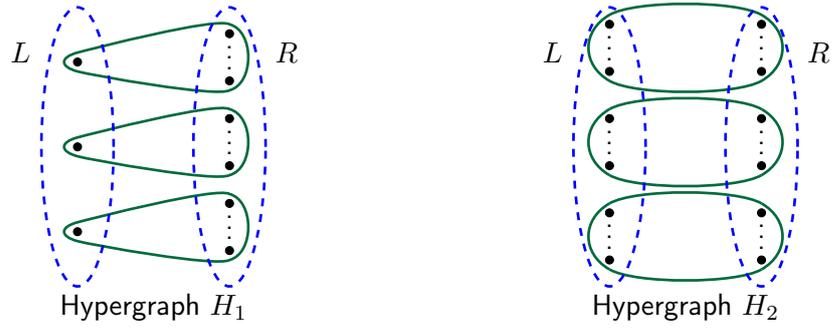

    \centering
    \tikzfig{hypergraphs/hypergraph_examples}
    \caption{Hypergraphs illustrating the downside of simple reductions
    \label{fig:example_hypergraphs}}
\end{figure}

Now we analyse the size of the cut $(\setl, \setr)$ in the reduced graphs.
\begin{itemize}
    \item Let $\weight_{\graphg}(\setl,\setr)$ be the cut value of $(\setl,\setr)$ in the \emph{random} graph $\graphg$ constructed from $\graphh$ by the random reduction. We have for $\graphh_1$ that $\E[\weight_{\graphg}(\setl,\setr) ] = \Theta(1/r)\cdot \weight_{\graphh_1}(\setl,\setr)$  and for $\graphh_2$ that $\E[\weight_{\graphg}(\setl,\setr) ] = \Theta(1)\cdot \weight_{\graphh_2}(\setl,\setr)$. 
    \item 
Similarly, by setting $\weight_{\graphg}(\setl,\setr)$ to be the cut value of $(\setl,\setr)$ in the reduced graph $\graphg$ constructed from $\graphh$ by the clique reduction, we have for $\graphh_1$ that
$\weight_{\graphg}(\setl,\setr)= \Theta(r)\cdot \weight_{\graphh_1}(\setl,\setr)$
and for $\graphh_2$ that
$\weight_{\graphg}(\setl,\setr) = \Theta(r^2)\cdot \weight_{\graphh_2}(\setl,\setr)$.
\end{itemize}
Notice that these two approximation ratios differ by a factor of $r$ in the two reductions.
This suggests that reducing an $r$-uniform hypergraph $\graphh$ to a graph with some simple reduction would always introduce a factor of $r$ into the approximation guarantee.
This is one of the main reasons to develop spectral theory for hypergraphs through heat diffusion processes~\cite{chanSpectralPropertiesHypergraph2018, takaiHypergraphClusteringBased2020, yoshidaCheegerInequalitiesSubmodular2019}.

\section{Diffusion Process and Algorithm} \label{sec:diff_and_alg}
In this section, we introduce a new diffusion process in hypergraphs, and use it to design a polynomial-time algorithm  for finding bipartite components in hypergraphs.
We first study the heat diffusion process in graphs to give some intuition. 
Then, we generalise it to hypergraphs and describe the new algorithm.
Finally, we sketch some of the analysis which proves that the diffusion process is well defined.
The full analysis is given in Sections~\ref{sec:thm2proof}~and~\ref{sec:thm1proof}.

\subsection{Diffusion Process in Graphs} \label{sec:2graph-diffusion}
To discuss the intuition behind our designed diffusion process, let us look at the case of graphs. Let $\graphg=(\vertexset, \edgeset , \weight)$ be a graph, and we have for  any $\vecx \in \R^n$ that
\[
\frac{\vecx^\transpose \signlapn_\graphg \vecx}{\vecx^\transpose \vecx} = 
\frac{\vecx^\transpose ( \identity + \degmhalfneg_\graphg \adj_\graphg \degmhalfneg_\graphg) \vecx }{\vecx^\transpose \vecx }.
\]
By setting $\vecx = \degmhalf_\graphg \vecy$, we have that 
\begin{align}\label{eq:2graphcalculation}
\frac{\vecx^\transpose \signlapn_\graphg \vecx}{\vecx^\transpose \vecx}
& = \frac{\vecy^\transpose \degmhalf_\graphg \signlapn_\graphg \degmhalf_\graphg \vecy }{ \vecy^\transpose \degm_\graphg \vecy} \nonumber \\
& = \frac{\vecy^\transpose (\degm_\graphg + \adj_\graphg) \vecy }{\vecy^\transpose \degm_\graphg  \vecy } \nonumber \\
& = \frac{\sum_{\{u,v\}\in \edgeset_\graphg} \weight(u,v) \cdot (\vecy(u) + \vecy(v))^2}{\sum_{u\in \vertexset_\graphg} \deg_\graphg(u) \cdot \vecy(u)^2}.
\end{align}
It is easy to see  that   $\lambda_{1}(\signlapn_\graphg)=0$ if $\graphg$ is bipartite,  and it is known that  $\lambda_{1}(\signlapn_\graphg)$ and its corresponding eigenvector $\vecf_1(\signlapn_\graphg)$ are closely related to two densely connected components of $\graphg$~\cite{trevisanMaxCutSmallest2012}.
Moreover, similar to the heat equation for graph Laplacian $\lap_\graphg$, suppose $\degm_\graphg \vecf_t \in \R^n$ is some measure on the vertices of $\graphg$, then a diffusion process defined by the differential equation
\begin{equation} \label{eq:2graph_pde}
\frac{\mathrm{d} \vecf_t}{\mathrm{d} t} = -\degm_\graphg^{-1} \signlap_\graphg \vecf_t\
\end{equation}
converges to the eigenvector of $\degm_\graphg^{-1} \signlap_\graphg$ with minimum eigenvalue and
can be employed to find two densely connected components of the underlying graph.\footnote{For the reader familiar with the heat diffusion process of graphs~(e.g.,  \cite{chungHeatKernelPagerank2007, klosterHeatKernelBased2014}), notice that the above-defined process essentially employs the operator $\signlap_\graphg$ to
replace the Laplacian $\lap_\graphg$ when defining the heat diffusion: through $\signlap_\graphg$, the heat diffusion can be used to find two densely connected components of $\graphg$.}

\subsection{Hypergraph Diffusion and our Algorithm} \label{sec:algorithm}
Now we study whether one can construct a new hypergraph operator $\signlap_\graphh$ that generalises the diffusion in graphs to hypergraphs.
First of all, we focus on a fixed time $t$ with measure vector $\degm_\graphh \vecf_t \in \R^n$. 
We follow \eqref{eq:2graph_pde} and define the rate of change
\[
    \frac{\mathrm{d} \vecf_t}{\mathrm{d} t} = - \degm_\graphh^{-1} \signlap_\graphh \vecf_t
\]
so that the diffusion can proceed for an infinitesimal time step.
Our intuition is that the rate of change due to some edge $e \in \edgeset_\graphh$ should involve only the vertices in $e$ with the maximum or minimum value in the normalised measure $\vecf_t$.
To formalise this, for any edge $e \in \edgeset_\graphh$ we define
\[
    \sets_{\vecf}(e) \triangleq \{ v \in e : \vecf_t(v) = \max_{u \in e} \vecf_t(u) \} \mbox{\ \ \ and\ \ \ }
    \set{I}_{\vecf}(e) \triangleq \{ v \in e : \vecf_t(v) = \min_{u \in e} \vecf_t(u) \}.
\]
That is, for any edge $e$ and normalised measure $\vecf_t$, $\sets_{\vecf}(e) \subseteq e$ consists of the vertices $v$ adjacent to $e$ whose $\vecf_t(v)$ values are maximum, and $\set{I}_{\vecf}(e) \subseteq e$ consists of the vertices $v$ adjacent to $e$ whose $\vecf_t(v)$ values are minimum.
See Figure~\ref{fig:example_hyperedge} for an example.
Then, applying the $\signlap_\graphh$ operator to a vector $\vecf_t$ should be equivalent to applying the operator $\signlap_\graphg$ for some graph $\graphg$ which we construct by splitting the weight of each hyperedge $e \in \edgeset_\graphh$ between the edges in $\sets_{\vecf}(e) \times \set{I}_{\vecf}(e)$.
Similar to the case for graphs and \eqref{eq:2graphcalculation}, for any $\vecx = \degmhalf_\graphh \vecf_t$ this gives us the quadratic form
\begin{align*}
\frac{\vecx^\transpose \degmhalfneg_\graphh \signlap_\graphh \degmhalfneg_\graphh \vecx}{\vecx^\transpose \vecx}
= \frac{\vecf_t^\transpose \signlap_\graphg \vecf_t }{\vecf_t^\transpose \degm_\graphh  \vecf_t } & 
= \frac{\sum_{\{u,v\}\in \edgeset_\graphg} \weight_\graphg(u, v) \cdot (\vecf_t(u) + \vecf_t(v))^2}{\sum_{u\in \vertexset_\graphg} \deg_\graphh(u) \cdot \vecf_t(u)^2} \\
& = \frac{\sum_{e \in \edgeset_\graphh} \weight_\graphh(e) (\max_{u \in e} \vecf_t(u) + \min_{v \in e} \vecf_t(v))^2}{\sum_{u \in \vertexset_\graphh} \deg_\graphh(u)\cdot \vecf_t(u)^2},
\end{align*}
where   $\weight_\graphg(u, v)$ is the weight of the edge $\{u, v\}$ in $\graphg$, and $\weight_\graphh(e)$ is the weight of the edge $e$ in $\graphh$.
We   see later in this chapter that $\signlap_\graphh$ has an eigenvalue of $0$ if the hypergraph is $2$-colourable\footnote{Hypergraph $\graphh$ is $2$-colourable if there are disjoint sets $\setl, \setr \subset \vertexset_\graphh$ such that every edge intersects $\setl$ and $\setr$.}, and the spectrum of $\signlap_\graphh$ is closely related to the hypergraph bipartiteness.

\begin{figure}[t]
    \centering
    \tikzfig{hypergraphs/hyperedge_example}
    \caption[Illustration of $\sfe$ and $\ife$]{\small{Illustration of 
      $\sfe$ and $\ife$. Vertices are labelled with their value in $\vecf_t$.
    \label{fig:example_hyperedge}}}
\end{figure}

 For this reason, we would expect that the diffusion process based on the operator $\signlap_\graphh$ can be used to find sets with small hypergraph bipartiteness.
However, one needs to be very  cautious here as,   by the nature of the diffusion process,  the values $\vecf_t(v)$ of all the vertices $v$  change over time and, as a result, the sets $\sets_{\vecf}(e)$ and $\set{I}_{\vecf}(e)$ that consist of the vertices with the maximum and minimum $\vecf_t$-value
might change after an \emph{infinitesimal} time step; this   prevents the process from continuing. We     discuss this  issue in detail through the so-called \firstdef{diffusion continuity condition} in Section~\ref{sec:natural_assumption}.
In essence,  the diffusion continuity condition ensures that one  can always    construct a graph $\graphg$ by allocating the weight of each hyperedge $e$ to the edges in $\sets_{\vecf}(e) \times \set{I}_{\vecf}(e)$ such that the sets $\sets_{\vecf}(e)$ and $\set{I}_{\vecf}(e)$ don't change in infinitesimal time although $\vecf_t$ changes according to $(\mathrm{d} \vecf_t)/(\mathrm{d} t) = - \degm_\graphh^{-1} \signlap_\graphg \vecf_t$.
We   also develop an efficient procedure in Section~\ref{sec:natural_assumption}  to compute the weights of edges in $\sfe \times \ife$. All of these guarantee that (i) every graph that corresponds to the hypergraph diffusion process at any time step   can be efficiently constructed;
(ii) with this sequence of constructed graphs, the diffusion process defined by $\signlaph$ is able to continue until the heat distribution converges.
With this, we can summarise the main idea of the new algorithm as follows: 
\begin{itemize}\itemsep -2pt
    \item First of all, we introduce some arbitrary $\vecf_0 \in \R^n$ as the initial diffusion vector, and a step size parameter $\epsilon>0$ to discretise
    the diffusion process. At each step, the algorithm constructs the graph $\graphg$ guaranteed by the diffusion continuity condition,  and updates  $\vecf_t \in \R^n$ according to the rate of change $(\mathrm{d} \vecf_t)/(\mathrm{d}t) = - \degm_\graphh^{-1} \signlapg \vecf_t$. The algorithm terminates when  
     $\vecf_t$ has converged, i.e., the ratio between the current Rayleigh quotient $(\vecf_t^\transpose \signlapg \vecf_t)/(\vecf_t^\transpose \degm_\graphh \vecf_t)$
     and the one in the previous time step is bounded by some predefined constant.
    \item Secondly, similar to many previous spectral graph clustering algorithms~(e.g. \cite{andersenLocalGraphPartitioning2006, takaiHypergraphClusteringBased2020, trevisanMaxCutSmallest2012}),
    the algorithm constructs  the sweep sets defined by $\vecf_t$ and returns the two sets with minimum
    $\bipart_\graphh$-value among all the constructed sweep sets. 
    Specifically, for every $i \in [n]$, the algorithm constructs    $\setl_j = \{v_i : \abs{\vecf_t(v_i)} \geq \abs{\vecf_t(v_j)} \land \vecf_t(v_i) < 0\}$ and $\setr_j = \{v_i : \abs{\vecf_t(v_i)} \geq \abs{\vecf_t(v_j)} \land \vecf_t(v_i) \geq 0\}$.
    Then, among the $n$ pairs $(\setl_j, \setr_j)$, the algorithm returns the one with the minimum $\bipart_\graphh$-value. 
\end{itemize}
  See Algorithm~\ref{algo:main} for the formal description,
and its performance is summarised in  Theorem~\ref{thm:mainalg}.
We prove the theorem in Section~\ref{sec:thm1proof}.

\begin{theorem} \label{thm:mainalg}
Given a hypergraph $\graphh = (\vertexset_\graphh, \edgeset_\graphh, \weight)$ and parameter $\epsilon>0$, the following holds:
\begin{enumerate}\itemsep -2pt
    \item There is an algorithm that finds an eigen-pair ($\lambda, \vecf$) of the operator $\signlaph$ such that $\lambda \leq \lambda_1(\signlapg)$, where $\graphg$ is the clique reduction of $\graphh$ and the inequality is strict if $\min_{e \in \edgeset_\graphh} \rank(e) > 2$. The algorithm has running time $\poly(\cardinality{\vertexset_\graphh}, \cardinality{\edgeset_\graphh} ,1/\epsilon)$.
    \item Given an eigen-pair $(\lambda, \vecf)$ of the operator $\signlaph$, there is an algorithm that constructs the two-sided sweep sets  defined  on $\vecf$, and  finds sets $\setl$ and $\setr$ such that $\bipart_\graphh(\setl, \setr) \leq \sqrt{2 \lambda}$. The algorithm has running time $\poly(\cardinality{\vertexset_\graphh}, \cardinality{\edgeset_\graphh})$.\\
\end{enumerate}
\end{theorem}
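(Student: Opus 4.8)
\textbf{Proof proposal for Theorem~\ref{thm:mainalg}.}

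The plan is to prove the two parts separately, since they correspond to genuinely different kinds of arguments: part~(1) is a statement about the existence of a convergent diffusion process and the eigenvalue it produces, whereas part~(2) is a Cheeger-type rounding argument. For part~(1), I would define the operator $\signlaph$ implicitly through the diffusion process sketched in Section~\ref{sec:algorithm}: at each time $t$, the diffusion continuity condition guarantees a graph $\graphg_t$ built by allocating the weight $\weight_\graphh(e)$ of each hyperedge $e$ among the edges of $\sfe \times \ife$, and the evolution follows $(\mathrm{d}\vecf_t)/(\mathrm{d}t) = -\degm_\graphh^{-1}\signlap_{\graphg_t}\vecf_t$. I would first show that the weighted Rayleigh quotient $R_\signlap(\vecf_t) = (\vecf_t^\transpose \signlap_{\graphg_t}\vecf_t)/\norm{\vecf_t}_\weight^2$ is non-increasing along the process — this is the standard heat-flow energy-dissipation computation, using that $\signlap_{\graphg_t}$ is PSD and that the quadratic form $\vecf_t^\transpose \signlap_{\graphg_t}\vecf_t$ equals $\sum_{e\in\edgeset_\graphh}\weight_\graphh(e)\discrep_{\vecf_t}(e)^2$ by the construction of $\graphg_t$. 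Boundedness of $R_\signlap$ below by $0$, together with monotonicity, forces convergence of the Rayleigh quotient, and a compactness argument on the (normalised) $\vecf_t$ gives a limit point $\vecf$ with $\signlaph\vecf = \lambda\vecf$ where $\lambda = R_\signlap(\vecf)$. The bound $\lambda \leq \lambda_1(\signlapg)$ for $\graphg$ the clique reduction follows by initialising $\vecf_0 = \vecf_1(\signlapg)$ (or by noting $R_\signlap(\vecf_0)$ is at most the clique-reduction Rayleigh quotient for any $\vecf_0$, since splitting a hyperedge's weight onto a single max--min pair can only decrease the discrepancy energy relative to summing over all clique edges) and then using monotonicity; strictness when every edge has rank $>2$ comes from the fact that the clique reduction strictly overcounts in that regime, as already observed in Section~\ref{sec:hyper:baselines}. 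The running-time claim reduces to bounding the number of discretised steps by $\poly(1/\epsilon)$ together with the efficiency of the procedure (promised in Section~\ref{sec:natural_assumption}) that computes the edge weights in $\sfe\times\ife$.

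For part~(2), I would follow the Trevisan-style analysis adapted to hypergraphs. Given the eigen-pair $(\lambda,\vecf)$, order the vertices by $\abs{\vecf(v_i)}$ and consider the two-sided sweep sets $\setl_j, \setr_j$ as defined in the algorithm, so that $\setl_j$ collects the negative-valued vertices of largest magnitude and $\setr_j$ the non-negative ones. The goal is to show $\min_j \bipart_\graphh(\setl_j,\setr_j) \leq \sqrt{2\lambda}$. The standard trick is a probabilistic threshold argument: pick $t\in[0,1]$ with density proportional to something like $2t$ (or pick $t^2$ uniform on $[0,1]$, normalising so that $\max_i \vecf(v_i)^2 = 1$), and let $\setl = \{v : \vecf(v) \leq -\sqrt{t}\cdot M\}$, $\setr = \{v : \vecf(v) \geq \sqrt{t}\cdot M\}$ for the appropriate scaling $M$. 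One then computes the expected numerator and denominator of $\bipart_\graphh(\setl,\setr)$. The denominator $\expected{\vol(\lur)}$ is proportional to $\sum_v \deg_\graphh(v)\vecf(v)^2 = \norm{\vecf}_\weight^2$. For the numerator, one needs to bound $\expected{2\weight(\setl\mid\comp{\setl}) + 2\weight(\setr\mid\comp{\setr}) + \weight(\setl,\comp{\setl}\mid\setr) + \weight(\setr,\comp{\setr}\mid\setl)}$ edge by edge: for a hyperedge $e$ with maximum value $\max_{u\in e}\vecf(u)$ and minimum value $\min_{v\in e}\vecf(v)$, the probability that $e$ contributes to the bad term in the numerator is controlled by $\abs{\discrep_{\vecf}(e)}$ times the larger of $\abs{\max_{u\in e}\vecf(u)}$ and $\abs{\min_{v\in e}\vecf(v)}$, via the elementary inequality $\abs{a^2 - b^2} = \abs{a-b}\cdot\abs{a+b}$ applied with $a = \max$, $b = -\min$ (the signs are chosen so that $a+b = \discrep_{\vecf}(e)$). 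Summing and applying Cauchy--Schwarz on $\sum_e \weight_\graphh(e)\abs{\discrep_{\vecf}(e)}\cdot(\text{magnitude})$ against $\sqrt{\sum_e \weight_\graphh(e)\discrep_{\vecf}(e)^2}\cdot\sqrt{\sum_e\weight_\graphh(e)(\text{magnitude})^2}$, and then using $\sum_e \weight_\graphh(e)\discrep_{\vecf}(e)^2 = \lambda\norm{\vecf}_\weight^2$ (the Rayleigh quotient identity for the eigen-pair) and $\sum_e \weight_\graphh(e)(\text{magnitude})^2 \leq \sum_v \deg_\graphh(v)\vecf(v)^2 = \norm{\vecf}_\weight^2$, one obtains that the expected ratio is at most $\sqrt{2\lambda}$, so some threshold — hence some sweep set, since the extremal thresholds are realised at sweep cuts — achieves $\bipart_\graphh(\setl,\setr)\leq\sqrt{2\lambda}$. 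The running time of part~(2) is clearly polynomial since it is just a sweep over $n$ candidate pairs with an $O(\cardinality{\edgeset_\graphh})$-time evaluation of $\bipart_\graphh$ for each.

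The main obstacle I anticipate is \emph{not} the rounding in part~(2), which is a fairly mechanical (if delicate with the bookkeeping of the three cut-weight terms $\weight(\setl\mid\comp\setl)$, $\weight(\setr\mid\comp\setr)$, and the mixed terms) adaptation of Trevisan's argument — the hyperedge-by-hyperedge case analysis of which of these terms a given $e$ can fall into, depending on where its max and min values sit relative to the thresholds $\pm\sqrt{t}M$, is where care is needed. The genuinely hard part is establishing that part~(1)'s diffusion process is \emph{well-defined and convergent}: one must verify that the diffusion continuity condition can always be satisfied (so that a valid $\graphg_t$ exists at every moment), that the resulting $\vecf_t$ is unique, and that the process does not stall or oscillate. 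This is exactly the content flagged in Section~\ref{sec:natural_assumption}, and in the full proof it would rely on the characterisation of admissible weight allocations as the solution to a certain convex program (or a fixed-point/measurable-selection argument), ensuring $\sfe$ and $\ife$ remain stable over infinitesimal time. I would lean on the detailed treatment promised in Sections~\ref{sec:thm2proof} and~\ref{sec:thm1proof} for this, and in the proposal would simply cite the diffusion continuity condition as the key structural input, deferring its justification.
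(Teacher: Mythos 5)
Your overall plan follows the paper's own route, and your part~(2) is essentially the paper's proof of Lemma~\ref{lem:trev-cheeg}: normalise so that $\max_u \vecf(u)^2 = 1$, pick $t$ uniformly and threshold at $\pm\sqrt{t}$, bound each hyperedge's expected contribution, and finish with Cauchy--Schwarz against the Rayleigh quotient identity. One bookkeeping point: writing $a_e = \max_{u\in e}\vecf(u)$ and $b_e = \min_{v\in e}\vecf(v)$, the per-edge factor multiplying $\abs{a_e+b_e}$ must be $\abs{a_e}+\abs{b_e}$, not $\max(\abs{a_e},\abs{b_e})$ -- in the mixed-sign case the expectation equals $\abs{a_e+b_e}(\abs{a_e}+\abs{b_e})$ exactly, so the smaller factor would fail -- and the $\sqrt{2}$ in the final bound comes precisely from $(\abs{a_e}+\abs{b_e})^2 \leq 2(a_e^2+b_e^2) $.

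In part~(1), however, two steps are genuinely missing. First, ``monotone, bounded below, plus compactness'' gives no rate, while the claimed running time $\poly(\cardinality{\vertexset_\graphh},\cardinality{\edgeset_\graphh},1/\epsilon)$ is exactly a rate statement. The paper's Theorem~\ref{thm:convergence} proves a quantitative dissipation bound: whenever $\vecf_t$ is not $(1-\epsilon)$-concentrated on eigenvectors of the \emph{current} operator whose eigenvalues lie in a window of width $2\epsilon$, the Rayleigh quotient decreases at rate at least $\epsilon^3$, so $\bigo{1/\epsilon^3}$ time suffices. Without some such bound, your limit-point argument neither bounds the number of discretised steps nor says anything about the finite-time iterate the algorithm actually outputs; also, for this non-linear, time-varying operator, concluding that a limit point is an eigenvector requires the characterisation that $\frac{\mathrm{d}}{\mathrm{d}t}R(\vecf_t)=0$ only when $\degm_\graphh^{-1}\signlaph\vecf_t \in \mathrm{span}(\vecf_t)$ (the equality case of Cauchy--Schwarz, Lemma~\ref{lem:derivatives}), not compactness alone. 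Second, the comparison with the clique reduction is not an observation from Section~\ref{sec:hyper:baselines}. The clique reduction that makes the two Rayleigh quotients comparable uses the degree-preserving weights $\weight(e)/(\rank(e)-1)$ on each pair, so that $\degm_\graphg=\degm_\graphh$ and the denominators agree; with that normalisation the needed per-edge inequality $\left(\max_{u\in e}\vecf(u)+\min_{v\in e}\vecf(v)\right)^2 \leq \sum_{u,v\in e}\frac{1}{\rank(e)-1}(\vecf(u)+\vecf(v))^2$ is no longer trivial (your heuristic would only be immediate if every clique pair kept the full weight $\weight(e)$, but then the degrees, hence the denominators, would not match). The paper proves it by induction (Lemma~\ref{lem:cliqueedge}), and the strictness for $\min_{e}\rank(e)>2$ does not follow from ``the reduction overcounts'': it needs the equality case of that lemma (equality only when at most one vertex of $e$ has non-zero value) combined with a trace argument showing $\lambda_1(\degm_\graphg^{-1}\signlapg)<1$, which rules out the degenerate configuration. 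These two ingredients -- the $\epsilon^3$ dissipation rate and the clique-edge inequality with its equality analysis -- are the substantive content of part~(1) beyond the well-definedness of the diffusion, which you correctly defer to the linear-programming construction.
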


\begin{algorithm} \SetAlgoLined
\SetKwInOut{Input}{Input}
\SetKwInOut{Output}{Output}
\Input{Hypergraph $\graphh$, starting vector $\vecf_0\in \R^n$, step size $\epsilon>0$}
\Output{Sets $\setl$ and $\setr$} 
$t := 0$ \\
\While{$\vecf_t$ has not converged}{
    Use $\vecf_t$ to construct graph $\graphg$ satisfying the diffusion continuity condition
     \\
    $\vecf_{t + \epsilon} := \vecf_t - \epsilon \degm_\graphh^{-1} \signlapg \vecf_t $\\
    $t := t + \epsilon$ 
} 
Set $j := \argmin_{1 \leq i \leq n}\bipart_\graphh(\setl_i, \setr_i)$ \\
\Return $(\setl_j, \setr_j)$
\caption[Find densely connected clusters in a hypergraph: \diffalgshortname$(H, \vecf_0, \epsilon)$]{\diffalgname$(H, \vecf_0, \epsilon)$}\label{algo:main}
\end{algorithm}

\subsection{Dealing with the Diffusion Continuity Condition} \label{sec:natural_assumption}
It remains for us to discuss the diffusion continuity condition, which guarantees that $\sfe$ and $\ife$ 
don't change in infinitesimal time and the diffusion process
  eventually converges to some stable distribution.
Formally, let $\vecf_t$ be the normalised measure on the vertices of $\graphh$,
and
\begin{equation*}
\vecr \triangleq \dfdt = -\degm_\graphh^{-1} \signlaph \vecf_t
\end{equation*}
be the derivative of $\vecf_t$, which describes the rate of change   for every vertex at the current time $t$.
We write $\vecr(v)$ for any $v\in \vertexset_\graphh$ as 
$
\vecr(v) = \sum_{e\in \edgeset_\graphh} \vecr_e(v)$,  
where $\vecr_e(v)$ is the contribution of edge $e$ towards the rate of change of $\vecf_t(v)$.
Now we discuss three rules that we expect the diffusion process to satisfy, and later prove that these three rules uniquely define the rate of change $\vecr$.

First of all, as we mentioned in Section~\ref{sec:algorithm},
we expect that only the vertices in $\sfe \cup \ife$   participate in the diffusion process,
i.e., $\vecr_e(u)=0$ unless $u\in \sfe \cup \ife$. Moreover, any vertex $u$  participating in the diffusion process must satisfy the following: 
\begin{center}
\begin{minipage}{0.99\textwidth}
\centering
\begin{itemize}\itemsep -2pt
\item Rule~(0a): if $\abs{\vecr_e(u)}>0$ and $u\in \sfe$, then  $\vecr(u) = \max_{v \in \sfe} \{\vecr(v)\}$. % $r(u)\geq r(v)$ for any $v\in S(e)$.
\item Rule~(0b): if $\abs{\vecr_e(u)}>0$ and $u\in \ife$, then $\vecr(u) = \min_{v \in \ife} \{\vecr(v)\}$. % $r(u)\leq r(v)$ for any $v\in I(e)$.
\end{itemize}
\end{minipage}
\end{center}
To explain Rule~(0), notice that for an infinitesimal time, $\vecf_t(u)$ is increased  according to $\left(\mathrm{d}\vecf_t/\mathrm{d}t\right)(u) = \vecr(u)$. Hence, by Rule~(0) we know that, if $u\in \sfe$ (resp.\ $u \in \ife$) participates in the diffusion process in edge $e$, then in an infinitesimal time $\vecf(u)$ remains the maximum (resp.\ minimum) among the vertices in $e$.
Such a rule is necessary to ensure that the vertices involved in the diffusion in edge $e$ do not change in infinitesimal time, and the diffusion process is able to continue.

Our next rule states that the total rate of change of the measure due to edge $e$ is equal to $-\weight(e) \cdot \discrep_{\vecf}(e)$:
\begin{center}
\begin{minipage}{0.99\textwidth}
\centering
\begin{itemize}
\item Rule~(1): $\sum_{v\in \sfe} \deg(v) \vecr_e(v)  = \sum_{v\in \ife} \deg(v) \vecr_e(v) = -\weight(e)\cdot \discrep_{\vecf}(e)$ for all $e \in \edgeset_\graphh$.
\end{itemize}
\end{minipage}
\end{center}
 
This rule is a generalisation from the operator $\signlapg$ in graphs.
In particular, since
\[
\degm_\graphg^{-1} \signlapg \vecf_t(u) =   \sum_{\{u, v\}\in \edgeset_\graphg} \weight_\graphg(u, v) (\vecf_t(u) + \vecf_t(v))/\deg_\graphg(u),
\]
the rate of change of $\vecf_t(u)$ due to the edge $\{u, v\} \in \edgeset_\graphg$ is $- \weight_\graphg(u, v) (\vecf_t(u) + \vecf_t(v))/\deg_\graphg(u)$.
Rule (1) states that in the hypergraph case the rate of change of the vertices in $\sfe$ and $\ife$ together behave like the rate of change of $u$ and $v$ in the graph case.
 
One might have expected that Rules~(0) and (1) together would define a unique process. Unfortunately, this isn't the case.
For example, let us define an unweighted hypergraph $\graphh= (\vertexset_\graphh, \edgeset_\graphh)$, where $\vertexset_\graphh=\{u,v,w\}$ and $\edgeset_\graphh = \{ \{ u,v,w\} \}$. By setting the measure to be $\vecf_t=(1,1,-2)^\transpose$ and $e=\{u,v,w\}$, we have $\discrep_{\vecf_t}(e)=-1$, and $\vecr(u) + \vecr(v) = \vecr(w)= 1$ by Rule (1). In such a scenario, either $\{u,w\}$ or $\{v,w\}$ can participate in the diffusion and satisfy Rule (0), which makes the process not uniquely defined.
To overcome this, we introduce the following stronger rule to replace Rule~(0): 
\begin{center}
\begin{minipage}{0.99\textwidth}
\centering
\begin{itemize}\itemsep -2pt
\item Rule~(2a): Assume that $\abs{\vecr_e(u)}>0$ and $u\in \sfe$:
\begin{itemize}
    \item if $\discrep_{\vecf}(e)>0$, then $\vecr(u) = \max_{v\in \sfe) }\{\vecr(v) \}$;
    \item if $\discrep_{\vecf}(e)<0$, then $\vecr(u)= \vecr(v)$ for all $v\in \sfe$.
\end{itemize}
\item Rule~(2b): Assume that $\abs{\vecr_e(u)}>0$ and $u\in \ife$:
\begin{itemize}
    \item if $\discrep_{\vecf}(e)<0$, then $\vecr(u) = \min_{v\in \ife)} \{\vecr(v)\}$;
    \item if $\discrep_{\vecf}(e)>0$, then $\vecr(u)= \vecr(v)$ for all $v\in \ife$.
\end{itemize}
\end{itemize}
\end{minipage}
\end{center}
Notice that the first conditions of Rules~(2a) and (2b) correspond to Rules (0a) and (0b) respectively; the second conditions are introduced for purely technical reasons: they state that, if the discrepancy of $e$ is negative  (resp.\ positive), then all the vertices $u\in \sfe$ (resp.\ $u \in \ife$)   have the same value of $\vecr(u)$.
Lemma~\ref{lem:rulesimplydiffusion} shows that there is a unique $\vecr\in\R^n$ that satisfies Rules~(1) and (2), and $\vecr$ can be computed in polynomial time. 
Therefore, our two rules uniquely define a diffusion process, and we can use the computed $\vecr$ to simulate the continuous diffusion process with a discretised version.\footnote{Note that the graph $\graphg$ used for the diffusion at time $t$ can be easily computed from the $\{\vecr_e(v)\}$ values, although in practice this is not actually needed since the $\vecr(u)$ values can be used to update the diffusion directly.}

\begin{lemma}[Existence of diffusion process]
     \label{lem:rulesimplydiffusion}
    For any given $\vecf_t\in\R^n$, there is a unique $\vecr=\mathrm{d}\vecf_t/\mathrm{d}t$ and associated $\{\vecr_e(v)\}_{e\in \edgeset, v\in \vertexset}$ that satisfy  Rule~(1) and (2), and  $\vecr$ can be computed in polynomial time by linear programming. 
\end{lemma}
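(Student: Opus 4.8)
The plan is to show existence and uniqueness of $\vecr$ by formulating the rules as the solution of a quadratic program (or, as the statement suggests, extracting it from a linear program), and then arguing that the feasible problem has a unique optimum. First I would fix the vector $\vecf_t$ and partition the edge set $\edgeset_\graphh$ according to the sign of the discrepancy $\discrep_{\vecf}(e)$: the edges with $\discrep_{\vecf}(e)=0$ contribute nothing by Rule~(1) and can be ignored, so we may assume every edge has $\discrep_{\vecf}(e)\neq 0$. For each edge $e$, by Rule~(2) the vertices that actually carry flow lie in exactly one of $\sfe$ or $\ife$ when the other side is ``frozen'' (the second conditions of Rule~(2) say one side has a common $\vecr$-value), so the unknowns are the quantities $\{\vecr_e(v)\}$ indexed by $e$ and $v\in\sfe\cup\ife$, subject to the linear constraints of Rule~(1). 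The key realisation is that Rules~(2a) and (2b) are exactly the optimality (KKT) conditions of a natural convex objective: minimising $\sum_{v\in\vertexset_\graphh}\deg(v)\,\vecr(v)^2$ (equivalently $\norm{\vecr}_w^2$ up to the degree weighting) over all $\{\vecr_e(v)\}$ satisfying Rule~(1) and the support condition $\vecr_e(v)=0$ for $v\notin\sfe\cup\ife$, where $\vecr(v)=\sum_e\vecr_e(v)$.

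The main steps would then be: (1) Write down this quadratic program explicitly and observe that its feasible region is a non-empty polytope — non-emptiness follows because one can always split the mass $-\weight(e)\discrep_{\vecf}(e)$ equally among the vertices of $\sfe$ (and of $\ife$), which trivially satisfies Rule~(1). (2) Since the objective $\sum_v \deg(v)\vecr(v)^2$ is strictly convex \emph{in the aggregated variables} $\vecr(v)$ and the map $\{\vecr_e(v)\}\mapsto\{\vecr(v)\}$ is linear, the minimiser is unique \emph{in $\vecr$}, which is all we need (the per-edge decomposition $\{\vecr_e(v)\}$ need not be unique, only the total $\vecr$). This gives existence and uniqueness of $\vecr=\mathrm{d}\vecf_t/\mathrm{d}t$. (3) Verify that the KKT stationarity conditions of this QP are precisely Rules~(2a) and~(2b): the Lagrange multiplier for the Rule~(1) constraint on the $\sfe$-side of edge $e$, call it $\mu_e^+$, forces $\vecr(v)$ to be constant across those $v\in\sfe$ with $\vecr_e(v)$ interior, and the sign of $\discrep_{\vecf}(e)$ controls which boundary (max or common value) is active — this is where one matches the two bullet cases in each of Rules~(2a), (2b). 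Conversely, any $\vecr$ satisfying Rules~(1) and~(2) is a KKT point, hence the unique minimiser. (4) For the computational claim, note that although I phrased it as a QP, the structure of the problem — piecewise-linear selection of max/min vertices plus linear flow constraints — can be unrolled into a linear program by introducing auxiliary variables for the common values $\max_{v\in\sfe}\vecr(v)$ and $\min_{v\in\ife}\vecr(v)$ on each edge; solving this LP (or the QP, both polynomial-time) recovers $\vecr$, and constructing the graph $\graphg$ from the $\{\vecr_e(v)\}$ is then immediate.

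The hard part will be step~(3): carefully checking that the combinatorial case analysis in Rules~(2a) and~(2b) corresponds exactly to the complementary-slackness pattern of the QP, without off-by-sign errors. In particular one must be attentive to the asymmetry between the $\sfe$ side and the $\ife$ side (a vertex in $\sfe$ gets its $\vecf_t$ value pushed \emph{up} or \emph{down} depending on the sign of $\discrep_{\vecf}(e)$, and symmetrically for $\ife$), and to the degree weights $\deg(v)$ appearing both in the constraint of Rule~(1) and in the objective. A secondary subtlety is showing that the rules do not over-constrain the system — i.e.\ that a feasible point for the support/Rule~(1) constraints always \emph{can} be chosen to satisfy Rule~(2); this is subsumed by the existence of the QP minimiser, but it is worth spelling out that the minimiser's support automatically respects $\vecr_e(v)=0$ for $v\notin\sfe\cup\ife$ because those variables are fixed to zero by constraint, not by optimisation. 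I do not expect the polynomial-time claim itself to be an obstacle once the LP/QP formulation is in place.
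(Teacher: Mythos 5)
Your step~(3) — the claim that Rules~(2a)/(2b) are exactly the KKT conditions of the unconstrained programme $\min\sum_v\deg(v)\vecr(v)^2$ subject to Rule~(1) — is false, and this is the step the whole argument rests on. Because you place no sign constraints on the variables $\vecr_e(v)$, stationarity of the Lagrangian forces $\vecr(\cdot)$ to be \emph{constant on all of} $\sets_{\vecf}(e)$ (and on all of $\set{I}_{\vecf}(e)$) for every edge $e$; this matches only the ``all equal'' bullets of Rule~(2), whereas the bullets with $\discrep_{\vecf}(e)>0$ on the $\sets$-side (and $\discrep_{\vecf}(e)<0$ on the $\set{I}$-side) explicitly allow non-participating vertices of $\sfe$ with strictly smaller $\vecr$-value. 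Concretely, take $\vertexset=\{u,v,w,z\}$ with edges $e_1=\{u,v,w\}$ of weight $1$ and $e_2=\{u,z\}$ of weight $10$, and $\vecf_t=(1,1,0,5)^\transpose$, so $\deg(u)=11$, $\deg(v)=\deg(w)=1$, $\deg(z)=10$, $\discrep_{\vecf_t}(e_1)=1$, $\discrep_{\vecf_t}(e_2)=6$. Rule~(1) pins down $\vecr_{e_2}(u)=-60/11$, and the vector produced by the paper's procedure (which it then verifies against the rules) is $\vecr(v)=-1$, $\vecr(u)=-60/11$, with $\vecr_{e_1}(u)=0$: only $v$ participates on the $\sets$-side of $e_1$, and it attains $\max\{\vecr(u),\vecr(v)\}$, as Rule~(2a) demands. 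Your QP instead returns $\vecr_{e_1}(u)=49/132>0$ and $\vecr(u)=\vecr(v)=-671/132$ — a different $\vecr$. So uniqueness of the QP minimiser tells you nothing about uniqueness of a rule-consistent $\vecr$. The missing ingredient is that each edge's allocation must be (a scaling of) a convex combination over $\sfe$ and over $\ife$, i.e.\ sign constraints on the $\vecr_e(v)$; and even after adding those, the ``all equal'' second cases of Rule~(2) are not complementary-slackness conditions, so the matching would still require a separate argument. Note also that the minimal-norm-subgradient picture you are implicitly invoking depends on convexity of the underlying potential, and $\sum_e\weight(e)\discrep_{\vecf}(e)^2$ is \emph{not} convex for this signless operator (a maximum plus a minimum is neither convex nor concave), so the gradient-flow machinery that works for the operator of Chan et al.\ does not transfer directly.

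The paper's proof takes an entirely different, combinatorial route: it partitions $\vertexset$ into equivalence classes of equal $\vecf_t$-value, and within each class repeatedly solves the linear program \eqref{eq:lp} to extract the subset $\setp$ of maximum density $\delta(\setp)$, assigns $\vecr\equiv\delta(\setp)$ on $\setp$, and recurses (Algorithm~\ref{algo:computechangerate}). Uniqueness of the output is deduced from the structure of the LP's optimal solutions (level sets and a unique maximal optimal subset), any rule-consistent $\vecr$ is shown to coincide with this output, and the existence of a compatible per-edge allocation $\{\vecr_e(v)\}$ satisfying Rules~(1) and~(2) is established via a max-flow/min-cut construction rather than convex duality. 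If you want to rescue a variational argument, you would need the sign-constrained programme together with a case-by-case verification of its optimality conditions against all four sign cases of Rule~(2) — which is essentially the case analysis the paper performs combinatorially.
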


It is worth emphasising that  
our defined rules and the proof of Lemma~\ref{lem:rulesimplydiffusion} are more involved than those used in~\cite{chanSpectralPropertiesHypergraph2018} to define the hypergraph Laplacian operator.
In particular, in contrast to~\cite{chanSpectralPropertiesHypergraph2018}, in our case the discrepancy $\discrep_{\vecf}(e)$ within a hyperedge $e$ can be either positive or negative.
This results in the four different cases in Rule~(2) which have to be carefully considered throughout the proof of Lemma~\ref{lem:rulesimplydiffusion}.

% \section{Proof of Lemma~\ref{lem:rulesimplydiffusion}} \label{sec:thm2proof}
\section{Existence of Diffusion Process} \label{sec:thm2proof}

In this section we prove Lemma~\ref{lem:rulesimplydiffusion} which shows that the diffusion process introduced in Section~\ref{sec:diff_and_alg} is well-defined.
The proof consists of two parts.
First, we construct a linear program which can compute the rate of change $\vecr$ satisfying the rules of the diffusion process.
Then we analyse the new linear program which establishes Lemma~\ref{lem:rulesimplydiffusion}.

\subsection{Computing \texorpdfstring{$\vecr$}{r} by a Linear Program}
Now we present an algorithm that computes the vector $\vecr=\mathrm{d}\vecf_t/\mathrm{d}t$ for any $\vecf_t\in\R^n$. 
Without loss of generality, let us fix $\vecf_t\in\R^n$, and define a set of equivalence classes $\mathcal{U}$ on $\vertexset$ such that vertices $u,v\in \vertexset_\graphh$ are in the same equivalence class if $\vecf_t(u)=\vecf_t(v)$. 
Next we study every equivalence class $\setu \in\mathcal{U}$ in turn, and set the $\vecr$-value of the vertices in $\setu$ recursively.
 In each iteration, we fix the $\vecr$-value of some subset $\setp \subseteq \setu$ and recurse on $\setu \setminus \setp$.
As we prove later, it's important to highlight that the recursive procedure ensures that the $\vecr$-values assigned to the vertices always \emph{decrease} after each recursion.
Notice that it suffices to consider the edges $e$ in which $(\sfte \cup \ifte)\cap \setu \neq\emptyset$, since the diffusion process induced by other edges $e$ has no impact on   $\vecr(u)$ for any $u\in \setu$. Hence, we introduce the sets
\begin{align*}
\mathcal{S}_{\setu} & \triangleq \{e\in \edgeset_\setu: \sfte \intersect \setu \neq\emptyset \} \\
\mathcal{I}_{\setu} & \triangleq \{e\in \edgeset_\setu: \ifte \intersect \setu \neq\emptyset\},
\end{align*}
where $\edgeset_\setu$ consists of the edges adjacent to some vertex in $\setu$. To work with the four cases listed in Rule~(2a) and (2b), we define
\begin{align}
\mathcal{S}_{\setu}^+& \triangleq \{ e\in \cals_\setu: \discrep_{\vecf_t}(e)<0\},\nonumber\\
\cals_{\setu}^- & \triangleq \{ e\in\cals_\setu: \discrep_{\vecf_t}(e)>0\},\nonumber\\
\cali_\setu^+ & \triangleq \{ e\in \cali_\setu: \discrep_{\vecf_t}(e)<0\},\nonumber\\
\cali_\setu^- & \triangleq \{ e\in \cali_\setu: \discrep_{\vecf_t}(e)>0\}.\nonumber
\end{align}
Our objective is to find some  $\setp \subseteq \setu$ and assign the same $\vecr$-value to every vertex in $\setp$.
To this end, for any $\setp \subseteq \setu$ we define
\newcommand{\spp}[2]{\cals_{\set{#1}, \set{#2}}^+}
\newcommand{\spm}[2]{\cals_{\set{#1}, \set{#2}}^-}
\newcommand{\ipp}[2]{\cali_{\set{#1}, \set{#2}}^+}
\newcommand{\ipm}[2]{\cali_{\set{#1}, \set{#2}}^-}
\begin{align*}
\spp{U}{P} & \triangleq \left\{e\in\cals_{\setu}^+: \sfte \subseteq \setp \right\},\\
\ipp{U}{P} & \triangleq 
\left\{e\in\cali_{\setu}^+: \ifte \subseteq \setp \right\}, \\
\spm{U}{P} & \triangleq \left\{ e\in \cals_{\setu}^-: \sfte \cap \setp \neq\emptyset \right\},\\
\ipm{U}{P} &\triangleq \{e \in \cali_{\setu}^-: \ifte \subseteq \setp \}.
\end{align*}
These are the edges contributing to the rate of change of the vertices in $\setp$.
Before continuing the analysis, we briefly explain the intuition behind these four definitions:
\begin{enumerate}
    \item For $\spp{U}{P}$, since every $e\in \cals_{\setu}^+$ satisfies $\discrep_{\vecf_t}(e)<0$ and all the vertices in $\sfte$ must have the same value by Rule~(2a), all such $e\in \spp{U}{P}$ must satisfy that $\sfte \subseteq \setp$, since the unassigned vertices would receive lower values of $\vecr$ in the remaining part of the recursion process.
    \item For $\ipp{U}{P}$, since every $e\in\cali_{\setu}^+$ satisfies $\discrep_{\vecf_t}(e)<0$, Rule (2b) implies that if $\vecr_e(u) \neq 0$ then $\vecr(u)\leq \vecr(v)$ for all $v\in \ifte$. Since unassigned vertices would receive lower values of $\vecr$ later, such $e\in\ipp{U}{P}$ must satisfy $\ifte \subseteq \setp$.
    \item For $\spm{U}{P}$,  since every $e\in\cals_{\setu}^-$ satisfies $\discrep_{\vecf_t}(e)>0$, by Rule~(2a) it suffices that some vertex in $\sfte$ receives the assignment in the current iteration, i.e., every such $e$ must satisfy $\sfte \cap \setp \neq\emptyset$.
    \item The case for $\ipm{U}{P}$ is the same as $\spp{U}{P}$. 
\end{enumerate}

As we expect all the vertices $u\in \sfte$ to have the same $\vecr$-value for every $e$ as long as $\discrep_{\vecf_t}(e)<0$ by Rule~(2a) and at the moment we are only considering the assignment of the vertices in $\setp$,  we expect that 
\begin{equation}\label{eq:condition1}
\left\{ e\in \cals_{\setu}^+\setminus \spp{U}{P}: \sfte \cap \setp \neq \emptyset \right\} =\emptyset,
\end{equation}
and this   ensures that, as long as $\discrep_{\vecf_t}(e)<0$ and some $u\in \sfte$ gets its $\vecr$-value, all the other vertices in $\sfte$ would be assigned the same value as $u$.
Similarly, by Rule~(2b), we expect all the vertices $u\in \ifte$ to have the same $\vecr$-value for every $e$ as long as $\discrep_{\vecf_t}(e)>0$, and so we expect that
\begin{equation}\label{eq:condition2}
\left\{ 
e\in \cali_{\setu}^- \setminus \ipm{U}{P}: \ifte \cap \setp \neq\emptyset
\right\} =\emptyset.
\end{equation}
We   set the $\vecr$-value by dividing the total discrepancy of the edges in $\ipp{U}{P} \union \spp{U}{P} \union \ipm{U}{P} \union \spm{U}{P}$ between the vertices in $\setp$. As such, 
we would like to find some $\setp \subseteq \setu$ that maximises the value of 
\[
\frac{1}{\vol(\setp)}\cdot \left(\sum_{e\in \spp{U}{P}\cup\ipp{U}{P}} c_{\vecf_t}(e)  - \sum_{e\in \spm{U}{P} \cup \ipm{U}{P}} c_{\vecf_t}(e)\right).
\]

Taking all of these requirements into account, we   show that, for any equivalence class $\setu$, we can find the desired set $\setp$ by solving the following linear program:
\begin{alignat}{2}
   & \text{maximise } & & c(\vecx) = \sum_{e \in \cals_{\setu}^+\cup \cali_{\setu}^+ } c_{\vecf_t}(e)\cdot  x_e  - \sum_{e \in \cals_{\setu}^-\cup\cali_{\setu}^-} c_{\vecf_t}(e)\cdot x_e \label{eq:lp} \\
   & \text{subject to }& \quad & \sum_{v \in \setu}
   \begin{aligned}[t]
                \deg(v) y_v & = 1 \\[3ex]
                x_e & = y_u & \quad e & \in \cals_{\setu}^+, u \in \sfte, \\
                x_e & \leq y_u & \quad e & \in \cali_{\setu}^+, u \in \ifte, \\
                x_e & \geq y_u & \quad e & \in \cals_{\setu}^-, u \in \sfte, \\
                x_e & = y_u & \quad e & \in \cali_{\setu}^-, u \in \ifte, \\
                x_e, y_v & \geq 0 & & \forall  v\in \setu,  e\in \edgeset_{\setu}.
   \end{aligned}\nonumber
\end{alignat}
Since the linear program only gives partial assignment to the vertices' $\vecr$-values, we solve the same linear program on the reduced instance given by the set $\setu \setminus \setp$. The formal description of our algorithm is given in Algorithm~\ref{algo:computechangerate}.
% \textcolor{orange}{to define $c(P)$ for any $P$}
\begin{algorithm} \SetAlgoLined
\SetKwInOut{Input}{Input}
\SetKwInOut{Output}{Output}
\Input{vertex set $\setu \subseteq \vertexset$, and edge set $\edgeset_{\setu}$ }
\Output{Values of $\{\vecr(v)\}_{v\in \setu}$}
Construct sets $\cals_{\setu}^+$, $\cals_{\setu}^-$, $\cali_{\setu}^+$, and $\cali_{\setu}^-$\\
Solve the linear program defined by \eqref{eq:lp}, and define $\setp :=\{v\in \setu: y(v) >0\}$\\
Construct sets $\spp{U}{P}$, $\spm{U}{P}$, $\ipp{U}{P}$, and $\ipm{U}{P}$\\
Set $C(\setp) := c_{\vecf_t}\left( \spp{U}{P} \right) + c_{\vecf_t} \left(\ipp{U}{P} \right) - c_{\vecf_t} \left(\spm{U}{P}  \right) - c_{\vecf_t} \left( \ipm{U}{P} \right)$\\
Set $\delta(\setp): = C(\setp) / \vol(\setp)$\\
Set $\vecr(u) :=\delta(\setp)$ for every $u\in \setp$\\
 \algcomputechangerate$\left(\setu \setminus \setp, \edgeset_\setu \setminus \left(\spp{U}{P} \cup \ipp{U}{P} \cup \spm{U}{P} \cup \ipm{U}{P} \right)\right)$
 \caption[Diffusion helper method: \algcomputechangerate$(\setu, \edgeset_\setu)$]{\algcomputechangerate$(\setu, \edgeset_\setu)$}\label{algo:computechangerate}
\end{algorithm}

\subsection{Analysis of the Linear Program}
Now we analyse Algorithm~\ref{algo:computechangerate}, and the properties of the $\vecr$-values it computes.
 Specifically, we show the following facts which   together allow us to establish Lemma~\ref{lem:rulesimplydiffusion}.
\begin{enumerate}
    \item Algorithm~\ref{algo:computechangerate} always produces a unique vector $\vecr$, no matter which optimal result is returned when computing the linear program~\eqref{eq:lp}.
    \item If there is any vector $\vecr$ which is consistent with Rules (1) and (2), then it must be equal to the output of Algorithm~\ref{algo:computechangerate}.
    \item The vector $\vecr$ produced by Algorithm~\ref{algo:computechangerate} is consistent with Rules (1) and (2).
\end{enumerate}
 
\subsubsection{Output of Algorithm~\ref{algo:computechangerate} is Unique}
First of all, for any $\setp \subseteq \setu$ that satisfies \eqref{eq:condition1} and \eqref{eq:condition2}, we define vectors $\vecx_\setp$ and $\vecy_\setp$ by
\[
    \vecx_\setp(e) = \twopartdefow{\frac{1}{\vol(\setp)}}{e \in \spp{U}{P}  \union \ipp{U}{P} \union \spm{U}{P} \union \ipm{U}{P}}{0},
\]
\[
    \vecy_\setp(v) = \twopartdefow{\frac{1}{\vol(\setp)}}{v \in \setp}{0},
\]
and $z_\setp=\left(\vecx_\setp, \vecy_\setp \right)$.
It is easy to verify that $\left(\vecx_\setp, \vecy_\setp \right)$ is a feasible solution to \eqref{eq:lp} with the objective value $c\left(\vecx_\setp \right)=\delta(\setp)$. We   prove that \algcomputechangerate~(Algorithm~\ref{algo:computechangerate}) computes a unique vector $\vecr$ regardless of how ties are broken when computing the subsets $\setp$.

For any feasible solution $z=(\vecx,\vecy)$, we say that a non-empty set $\set{Q}$ is a \emph{level set} of $z$ if there is some $t>0$ such that $\set{Q}=\left\{u\in \setu: y_u\geq t\right\}$. We first show that any non-empty level set of an optimal solution $z$ also corresponds to an optimal solution.

\begin{lemma}
    Suppose that $z^{\star}= \left(\vecx^{\star}, \vecy^{\star} \right)$ is an optimal solution of the linear program \eqref{eq:lp}. Then, any non-empty level set $\set{Q}$ of $z^{\star}$ corresponds to an optimal solution of \eqref{eq:lp} as well.
\end{lemma}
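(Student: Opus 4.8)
The plan is a layer-cake (threshold) decomposition: write $z^\star$ as a convex combination of the solutions attached to its level sets and then use linearity of the objective $c(\cdot)$ together with the optimality of $z^\star$. As a preliminary normalisation I would first replace $z^\star$ by an optimal solution $\tilde z^\star=(\tilde\vecx^\star,\vecy^\star)$ with the same $y$-coordinate, in which $\tilde x^\star_e=\min_{u\in\ifte}y^\star_u$ for every $e\in\cals_{\setu}^+$ — no, for every $e\in\cali_{\setu}^+$ — and $\tilde x^\star_e=\max_{u\in\sfte}y^\star_u$ for every $e\in\cals_{\setu}^-$ (leaving $x^\star_e$ unchanged on $\cals_{\setu}^+\cup\cali_{\setu}^-$). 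This is legitimate: on $\cali_{\setu}^+$ the objective coefficient $c_{\vecf_t}(e)\ge 0$ and the constraint is $x_e\le y^\star_u$, so raising $x_e$ to $\min_u y^\star_u\ (\ge 0)$ keeps feasibility and cannot decrease $c$; on $\cals_{\setu}^-$ the coefficient is $-c_{\vecf_t}(e)\le 0$ and the constraint is $x_e\ge y^\star_u$, so lowering $x_e$ to $\max_u y^\star_u$ again keeps feasibility and cannot decrease $c$; hence $\tilde z^\star$ is still optimal, and it has the same level sets as $z^\star$ since only the $x$-coordinate is touched. For $e\in\cals_{\setu}^+$, resp.\ $e\in\cali_{\setu}^-$, the equality constraints of \eqref{eq:lp} already force $\vecy^\star$ to be constant on $\sfte$, resp.\ $\ifte$, so nothing is altered there.

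Next I would carry out the decomposition. Let $t_1>\dots>t_m>0$ be the distinct positive values of $\vecy^\star$, set $t_{m+1}=0$, $\set{Q}_i=\{v\in\setu:y^\star_v\ge t_i\}$ and $\alpha_i=(t_i-t_{i+1})\vol(\set{Q}_i)>0$. Writing $z_{\set{Q}_i}$ for the point produced by the construction preceding the lemma with $\set{P}$ replaced by $\set{Q}_i$, I would verify $\tilde z^\star=\sum_{i=1}^m\alpha_i\,z_{\set{Q}_i}$. For the $y$-coordinate this is the layer-cake identity $y^\star_v=\sum_i(t_i-t_{i+1})\iverson{v\in\set{Q}_i}$, which together with $\sum_v\deg(v)y^\star_v=1$ gives $\sum_i\alpha_i=1$. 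For the $x$-coordinate I would argue class by class: on $\cals_{\setu}^+$ and $\cali_{\setu}^-$, $\tilde x^\star_e$ is the common value of $\vecy^\star$ on $\sfte$, resp.\ $\ifte$, which is all-or-nothing relative to each $\set{Q}_i$; on $\cali_{\setu}^+$ and $\cals_{\setu}^-$ one uses the elementary identities
$\min_{u\in\ifte}y^\star_u=\sum_i(t_i-t_{i+1})\iverson{\ifte\subseteq\set{Q}_i}$ and
$\max_{u\in\sfte}y^\star_u=\sum_i(t_i-t_{i+1})\iverson{\sfte\cap\set{Q}_i\neq\emptyset}$,
which match exactly the definitions of $\vecx_{\set{Q}_i}$ on $\ipp{U}{Q_i}$, resp.\ $\spm{U}{Q_i}$.

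To finish, I would check that each $\set{Q}_i$ satisfies \eqref{eq:condition1} and \eqref{eq:condition2} — again because $\vecy^\star$ is constant on $\sfte$, resp.\ $\ifte$, for the equality classes — so $z_{\set{Q}_i}$ is well-defined by the construction above and, by a direct inspection of the constraints, feasible for \eqref{eq:lp}; hence $c(\vecx_{\set{Q}_i})\le\opt$ for every $i$. Linearity then gives $\opt=c(\tilde\vecx^\star)=\sum_i\alpha_i\,c(\vecx_{\set{Q}_i})$ with $\alpha_i>0$ and $\sum_i\alpha_i=1$, which forces $c(\vecx_{\set{Q}_i})=\opt$ for all $i$. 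Since every non-empty level set of $z^\star$ is one of $\set{Q}_1,\dots,\set{Q}_m$, each of them corresponds to an optimal solution of \eqref{eq:lp}, which is the claim.

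The genuinely non-routine point is the $x$-coordinate bookkeeping for the inequality-constrained classes $\cali_{\setu}^+$ and $\cals_{\setu}^-$: there the normalisation step is indispensable, and one must recognise a minimum (resp.\ maximum) over a hyperedge as the threshold sum of indicators of level-set containment (resp.\ intersection). Everything else is the standard layer-cake manipulation together with careful handling of the four edge classes; the feasibility check is routine once one knows $\vecy^\star$ is constant on $\sfte$/$\ifte$ for the equality classes.
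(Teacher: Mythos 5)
Your proof is correct, and it reaches the conclusion by a one-shot layer-cake decomposition rather than the paper's inductive peeling. The paper writes $z^\star$ as a convex combination of just two feasible points — the uniform solution $z_\setp$ on the full support $\setp=\{v:\vecy^\star(v)>0\}$ and a rescaled residual $\widehat z$ obtained by subtracting the smallest positive $\vecy^\star$-value — uses linearity of $c$ to conclude both must be optimal, and then recurses on $\widehat z$ to reach the smaller level sets. You instead decompose a normalised $\tilde z^\star$ simultaneously as $\sum_i\alpha_i z_{\set{Q}_i}$ over all level sets, so a single application of linearity together with feasibility of each $z_{\set{Q}_i}$ finishes the argument without induction. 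Your preliminary adjustment of the $x$-variables on the inequality-constrained classes $\cali_{\setu}^+$ and $\cals_{\setu}^-$ is exactly what makes the coordinate-wise identity $\tilde z^\star=\sum_i\alpha_i z_{\set{Q}_i}$ hold, via the telescoping identities for $\min_{u\in\ifte}\vecy^\star(u)$ and $\max_{u\in\sfte}\vecy^\star(u)$; the paper's two-term decomposition relies on the analogous bookkeeping only implicitly (its residual $\widehat{\vecx}$ must remain nonnegative and feasible), so your treatment is, if anything, more careful on this point. What the paper's route buys is brevity — one level set per step and no threshold identities over hyperedges; what yours buys is a cleaner, non-recursive argument in which the feasibility of $z_{\set{Q}_i}$ is verified once for all level sets through \eqref{eq:condition1} and \eqref{eq:condition2}, using that $\vecy^\star$ is forced constant on $\sfte$ for $e\in\cals_{\setu}^+$ and on $\ifte$ for $e\in\cali_{\setu}^-$.
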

\begin{proof}
Let $\setp = \{v \in \vertexset_\graphh : \vecy^\star(v) > 0 \}$.
The proof is by case distinction.
We first look at the case in which all the vertices in $\setp$ have the same value of  $\vecy^{\star}(v)$ for any $v\in \setp$.
Then, it must be that $z^\star = z_\setp$ and
every non-empty level set $\set{Q}$ of $z^{\star}$ is equal to $\setp$; as such, the statement holds trivially.

Secondly, we assume that the vertices in $\setp$ have at least two different $\vecy^{\star}$-values. We define $\alpha=\min\left\{ y_v^{\star}: v\in \setp \right\}$, and have
\[
\alpha\cdot\vol(\setp) < \sum_{u\in \setu} \deg(u)\cdot y_u^{\star}=1.
\]
We introduce $\widehat{z}=\left( \widehat{\vecx}, \widehat{\vecy} \right)$ defined by
 \begin{align*}
        \widehat{\vecx}(e) & = \twopartdefow{\frac{\vecx^{\star}(e) - \alpha}{1 - \alpha \vol(\setp)}}{\vecx^{\star}(e) \geq 0}{0}, 
\end{align*}
and
\begin{align*}
        \widehat{\vecy}(v) & = \twopartdefow{\frac{\vecy^{\star}(v) - \alpha}{1 - \alpha \vol(\setp)}}{v \in \setp}{0}.
    \end{align*}
This implies that 
\begin{equation}\label{eq:linearcombination1}
\vecx^{\star} = \left( 1-\alpha\vol(\setp) \right)\widehat{\vecx} + \alpha\cdot \constvec_\setp = \left( 1-\alpha\vol(\setp) \right)\widehat{\vecx} + \alpha\cdot \vol(\setp)\cdot \vecx_\setp,
\end{equation}
and
\begin{equation} \label{eq:linearcombination2}
\vecy^{\star} = \left( 1-\alpha\vol(\setp) \right)\widehat{\vecy} + \alpha\cdot \constvec_\setp = \left( 1-\alpha\vol(\setp) \right)\widehat{\vecy} + \alpha\cdot \vol(\setp)\cdot \vecy_\setp,
\end{equation}
where $\constvec_\setp$ is the indicator vector of the set $\setp$.
Notice that 
 $\widehat{z}$ preserves the relative ordering of the vertices and edges with respect to $\vecx^{\star}$ and $\vecy^{\star}$, and all the constraints in \eqref{eq:lp} hold for $\widehat{z}$. These imply that $\widehat{z}$ is a feasible solution to \eqref{eq:lp} as well.
 Moreover, it's not difficult to see that $\widehat{z}$ is an optimal solution of \eqref{eq:lp}, since otherwise by the linearity of \eqref{eq:linearcombination1} and \eqref{eq:linearcombination2} $z_\setp$ would have a higher objective value than $z^{\star}$, contradicting the fact that $z^{\star}$ is an optimal solution. Hence, the non-empty level set  defined by $\widehat{z}$ corresponds to an optimal solution.
Finally, by applying the second case inductively, we prove the claimed statement of the lemma.
\end{proof}

By applying the lemma above and the linearity of the objective function of \eqref{eq:lp}, we obtain the following corollary.
\begin{corollary} \label{cor:lpmaximal}
    The following statements hold:
    \begin{itemize}
        \item Suppose that $\setp_1$ and $\setp_2$ are optimal subsets of $\setu$. Then, $\setp_1\cup \setp_2$, as well as $\setp_1\cap \setp_2$ satisfying $\setp_1\cap \setp_2\neq\emptyset$,  is  an optimal subset of $\setu$. 
        \item The optimal set of maximum size is unique, and contains all optimal subsets.
    \end{itemize}
\end{corollary}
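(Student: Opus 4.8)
The plan is to deduce both statements of Corollary~\ref{cor:lpmaximal} from the preceding lemma, which says that every non-empty level set of an optimal solution of~\eqref{eq:lp} corresponds to an optimal solution. The key observation is that the objective $c(\vecx)$ of~\eqref{eq:lp} is linear, the feasible region is defined by the single normalisation constraint $\sum_{v\in\setu}\deg(v)y_v=1$ together with a collection of homogeneous inequalities/equalities, and the map $\setp\mapsto z_\setp$ identifies each candidate subset with a vertex-type feasible point with objective value $\delta(\setp)$. So proving the corollary amounts to showing that the family of \emph{optimal} subsets is closed under union and (non-empty) intersection, and then invoking finiteness to get a unique maximal element.

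First I would set up notation: let $\delta^\star$ denote the optimal value of~\eqref{eq:lp}, and recall that for any $\setp\subseteq\setu$ satisfying \eqref{eq:condition1} and \eqref{eq:condition2} we have a feasible point $z_\setp=(\vecx_\setp,\vecy_\setp)$ with $c(\vecx_\setp)=\delta(\setp)\le\delta^\star$; $\setp$ is \emph{optimal} iff $\delta(\setp)=\delta^\star$. For the union statement, suppose $\setp_1,\setp_2$ are optimal. Consider the feasible point obtained by averaging, or more precisely a suitable convex combination $z = \theta z_{\setp_1} + (1-\theta) z_{\setp_2}$ with $\theta$ chosen so that the normalisation constraint is met; by linearity of $c$, $z$ is again optimal, and its support on the $\vecy$-coordinates is exactly $\setp_1\cup\setp_2$. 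Now apply the level-set lemma to $z$: the top level set is $\setp_1\cap\setp_2$ (if non-empty) and the full support is $\setp_1\cup\setp_2$, so both correspond to optimal solutions, giving $\delta(\setp_1\cup\setp_2)=\delta(\setp_1\cap\setp_2)=\delta^\star$. One must check that $\setp_1\cup\setp_2$ and $\setp_1\cap\setp_2$ still satisfy \eqref{eq:condition1} and \eqref{eq:condition2}, which follows because the edge sets $\spp{U}{P}$ etc.\ are monotone in $\setp$ in the appropriate direction and the constraints in~\eqref{eq:lp} force the corresponding containments; I would spell this monotonicity out carefully as it is the place where the four-case structure of Rule~(2) re-enters.

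For the second statement, since $\setu$ is finite there are only finitely many optimal subsets; by the union closure just established, $\setp^\star := \bigcup\{\setp : \setp \text{ optimal}\}$ is itself optimal, hence is the unique optimal subset of maximum size. It automatically contains every optimal subset. Uniqueness of the maximum-size optimal subset is then immediate: any optimal subset of size $|\setp^\star|$ is contained in $\setp^\star$ and has the same cardinality, so equals $\setp^\star$.

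The main obstacle I anticipate is not the convexity argument itself but the bookkeeping needed to show that the union and intersection of optimal subsets remain admissible, i.e.\ continue to satisfy the structural conditions \eqref{eq:condition1}–\eqref{eq:condition2} so that the shorthand $z_\setp$ and $\delta(\setp)$ are even defined for them, and that the level-set lemma's hypotheses are genuinely available for the averaged point $z$ (in particular that $z$ is feasible, which requires the normalisation to be re-scaled correctly and the homogeneous constraints to be preserved under convex combination — the latter is clear, the former needs a one-line computation). Everything else is routine linear-programming manipulation.
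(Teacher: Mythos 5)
Your proposal is correct and follows essentially the same route as the paper, which derives the corollary in one line by combining the level-set lemma with the linearity of the objective of~\eqref{eq:lp}: a convex combination of $z_{\setp_1}$ and $z_{\setp_2}$ is optimal, its top level set is $\setp_1\cap\setp_2$ and its support is $\setp_1\cup\setp_2$, and the maximal optimal set then follows by union-closure over the finitely many optimal subsets. The extra bookkeeping you flag about conditions~\eqref{eq:condition1}--\eqref{eq:condition2} is implicitly absorbed into the level-set lemma in the paper, so your argument is, if anything, a more explicit version of the same proof.
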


Now we are ready to show that the procedure \algcomputechangerate~(Algorithm~\ref{algo:computechangerate}) and the linear program \eqref{eq:lp} together always give us the same set of $\vecr$-values regardless of which optimal solution of \eqref{eq:lp} is used for the recursive construction of the entire vector $\vecr$.

\begin{lemma}
    Let $(\setu, \edgeset_\setu)$ be the input to \algcomputechangerate, and $\setp \subset \setu$ be the set returned by \eqref{eq:lp}. Moreover, let $\left(\setu'=\setu \setminus \setp, \edgeset_{\setu'}\right)$ be the input to the recursive call \algcomputechangerate$(\setu', \edgeset_{\setu'})$. Then, it holds for any $\setp' \subseteq \setu'$ that $\delta(\setp')\leq \delta(\setp)$, where the equality holds if and only if $\delta(\setp \cup \setp') = \delta(\setp)$.
\end{lemma}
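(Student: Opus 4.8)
The plan is to view $\setp\cup\setp'$ as a candidate subset of the \emph{original} instance $(\setu,\edgeset_\setu)$ and squeeze its density $\delta(\setp\cup\setp')$ between $\delta(\setp)$ and a volume‑weighted average of $\delta(\setp)$ and $\delta(\setp')$. Throughout I would use that, by the correspondence between level sets and feasible solutions established above and by Corollary~\ref{cor:lpmaximal}, $\delta(\setp)$ equals the optimum of the linear program~\eqref{eq:lp} over $(\setu,\edgeset_\setu)$ and $\setp$ may be taken to be the unique optimal subset of maximum size; in particular $\delta(\set{Q})\leq\delta(\setp)$ for every $\set{Q}\subseteq\setu$ that corresponds to a feasible solution of~\eqref{eq:lp}, with equality only if $\set{Q}\subseteq\setp$.

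The combinatorial core is a \emph{superadditivity} inequality
\[
    C(\setp\cup\setp') \;\geq\; C(\setp) + C'(\setp'),
\]
where $C(\cdot)$ is the numerator appearing in \algcomputechangerate~(Algorithm~\ref{algo:computechangerate}) and $C'(\cdot)$ is the same quantity computed in the reduced instance $(\setu',\edgeset_{\setu'})$. To prove it I would go through the four edge families $\spp{U}{P},\ipp{U}{P},\spm{U}{P},\ipm{U}{P}$ defining $C$, using: (i) every hyperedge deleted from $\edgeset_\setu$ when forming $\edgeset_{\setu'}$ (i.e.\ already lying in one of these families for $\setp$) contributes to $C(\setp\cup\setp')$ exactly as it did to $C(\setp)$ and to $C'(\setp')$ not at all; (ii) for the families $\cals_{\setu}^+$ and $\cali_{\setu}^-$, conditions~\eqref{eq:condition1} and~\eqref{eq:condition2} for $\setp$ force any surviving edge whose relevant vertex set meets $\setp$ to have that set contained in $\setp$, so a surviving edge contributing to $C(\setp\cup\setp')$ has its relevant set inside $\setp'$ and is counted with the same sign by $C'(\setp')$; (iii) for $\cals_{\setu}^-$ the split is again exact, since a surviving $\cals^-$ edge meeting $\setp\cup\setp'$ but not $\setp$ lies in the reduced $\cals^-$ family and is counted identically by $C'(\setp')$, whereas a $\cals^-$ edge straddling $\setp$ and $\setp'$ meets $\setp$ and so was already deleted; (iv) for $\cali_{\setu}^+$, which has no protecting condition, a surviving edge whose min‑set straddles $\setp$ and $\setp'$ but is contained in $\setp\cup\setp'$ is counted with the nonnegative weight $c_{\vecf_t}(e)=\weight(e)\abs{\discrep_{\vecf_t}(e)}$ by $C(\setp\cup\setp')$ but by neither $C(\setp)$ nor $C'(\setp')$, so the net effect only increases $C(\setp\cup\setp')$. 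Along the way one also checks $\setp\cup\setp'$ satisfies~\eqref{eq:condition1} and~\eqref{eq:condition2}, so $z_{\setp\cup\setp'}=(\vecx_{\setp\cup\setp'},\vecy_{\setp\cup\setp'})$ is feasible for~\eqref{eq:lp} with objective $\delta(\setp\cup\setp')$; this uses (ii) again, namely that every $\cals_{\setu}^+$ edge surviving into $\edgeset_{\setu'}$ has its max‑set disjoint from $\setp$, hence lies in $\cals_{\setu'}^+$, where the condition for $\setp'$ applies, and symmetrically for $\cali^-$.

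Granting superadditivity and feasibility, the conclusion follows from the chain
\[
    \vol(\setp)\,\delta(\setp) + \vol(\setp')\,\delta(\setp') \;=\; C(\setp)+C'(\setp') \;\leq\; C(\setp\cup\setp') \;=\; \vol(\setp\cup\setp')\,\delta(\setp\cup\setp') \;\leq\; \bigl(\vol(\setp)+\vol(\setp')\bigr)\,\delta(\setp),
\]
where I use $\vol(\setp\cup\setp')=\vol(\setp)+\vol(\setp')$ and $\delta(\setp\cup\setp')\leq\delta(\setp)$. Cancelling $\vol(\setp)\delta(\setp)$ and dividing by $\vol(\setp')>0$ yields $\delta(\setp')\leq\delta(\setp)$. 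If $\delta(\setp')=\delta(\setp)$ then the two ends of the chain coincide, forcing $\delta(\setp\cup\setp')=\delta(\setp)$; conversely, if $\delta(\setp\cup\setp')=\delta(\setp)$ then $\setp\cup\setp'$ is an optimal subset of $\setu$, so by Corollary~\ref{cor:lpmaximal} it is contained in the maximal optimal set $\setp$, whence $\setp'=\emptyset$ and $\delta(\setp')=\delta(\setp)$ holds trivially. In particular, for nonempty $\setp'$ the inequality is strict, which is exactly what is needed to guarantee that the $\vecr$‑values assigned along the recursion strictly decrease.

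The step I expect to be the main obstacle is the class‑by‑class bookkeeping in the superadditivity argument: a single hyperedge may belong simultaneously to several of $\cals_\setu$ and $\cali_\setu$, the discrepancy $\discrep_{\vecf_t}(e)$ within an edge may have either sign, and this creates a real asymmetry between the families $\cals^+_\setu,\cali^-_\setu$ (which are protected by~\eqref{eq:condition1}--\eqref{eq:condition2} and therefore match exactly after the reduction), the family $\cals^-_\setu$ (which matches exactly because straddling edges are deleted), and the family $\cali^+_\setu$ (which is unprotected and can produce the genuinely nonnegative ``extra'' contributions). Tracking which edges change family, in which direction their contribution moves, and confirming that no surviving ambiguous edge can violate conditions~\eqref{eq:condition1}--\eqref{eq:condition2} for $\setp\cup\setp'$, is where essentially all the care is required.
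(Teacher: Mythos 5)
Your combinatorial core is correct and, on one point, more careful than the paper's own argument. The paper derives the lemma from the exact identity $c\left(\spp{U'}{P'}\right)=c\left(\spp{U}{P\cup P'}\right)-c\left(\spp{U}{P}\right)$, asserted for all four families; for the unprotected family $\cali_{\setu}^+$ this can fail, for precisely the straddling edges you flag in item (iv): a surviving $e\in\cali_{\setu}^+$ with $\ifte\subseteq\setp\cup\setp'$ meeting both $\setp$ and $\setp'$ is counted by $\ipp{U}{P\cup P'}$ but by neither $\ipp{U}{P}$ nor $\ipp{U'}{P'}$. Your superadditivity $C(\setp\cup\setp')\geq C(\setp)$ plus the reduced-instance numerator of $\setp'$, combined with $\vol(\setp\cup\setp')=\vol(\setp)+\vol(\setp')$ and the averaging chain, recovers $\delta(\setp')\leq\delta(\setp)$ and the implication from $\delta(\setp')=\delta(\setp)$ to $\delta(\setp\cup\setp')=\delta(\setp)$, where the paper instead uses an exact mediant identity; and your converse via Corollary~\ref{cor:lpmaximal} yields exactly the strict decrease the recursion needs. (Minor wrinkle: your converse ends with $\setp'=\emptyset$, for which $\delta(\setp')$ is undefined, so the equality case should be phrased for nonempty $\setp'$.)

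The genuine gap is the step $\delta(\setp\cup\setp')\leq\delta(\setp)$. You justify it by feasibility of $z_{\setp\cup\setp'}$, arguing that a surviving $\cals_{\setu}^+$ (or $\cali_{\setu}^-$) edge lies in the reduced family ``where the condition for $\setp'$ applies''. But in the statement $\setp'$ is an \emph{arbitrary} subset of $\setu'$, not the output of the reduced linear program, so no such condition exists: a surviving $e\in\cals_{\setu}^+$ with $\sfte\subseteq\setu'$ may meet both $\setp'$ and $\setu'\setminus\setp'$, and then the constraint $x_e=y_u$ for all $u\in\sfte$ makes $z_{\setp\cup\setp'}$ infeasible (similarly for $\cali_{\setu}^-$). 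Nor can one fall back on the optimality of $\setp$ alone, as the paper's last line does, because for sets violating \eqref{eq:condition1} or \eqref{eq:condition2} the value of $\delta$ can exceed the optimum of \eqref{eq:lp}: a set containing part but not all of $\ifte$ for a heavy edge $e\in\cali_{\setu}^-$ collects positive contributions while escaping that edge's penalty, and one can realise instances where this makes $\delta(\setp\cup\setp')$, and indeed $\delta(\setp')$, strictly larger than $\delta(\setp)$ for a suitably chosen arbitrary $\setp'$; so the literal ``any $\setp'\subseteq\setu'$'' form of the claim cannot be rescued by either route. The repair, which is also all the recursion requires, is to assume that $\setp'$ satisfies the reduced-instance analogues of \eqref{eq:condition1} and \eqref{eq:condition2} (for instance, that it is a level set of an optimal solution of the reduced program). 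Under that hypothesis your check that $\setp\cup\setp'$ satisfies \eqref{eq:condition1} and \eqref{eq:condition2} goes through, $z_{\setp\cup\setp'}$ is feasible with objective $\delta(\setp\cup\setp')\leq\delta(\setp)$, and the rest of your argument is complete; you should state this restriction explicitly rather than inherit the unrestricted phrasing.
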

\begin{proof}
By the definition of the function $c$ and sets $\cals^+, \cals^-, \cali^+, \cali^-$, it holds that
\[
c\left( \spp{U'}{P'} \right) = c\left( \spp{U}{ P\cup P'} \right) - c\left( \spp{U}{P} \right), 
\]
and the same equality holds for sets $\cals^-, \cali^+$ and $\cali^-$.
% By introducing  \[
% \delta_M = \delta(P) = \frac{c\left(\cals_P^+\right) + c\left(\cali_P^+\right) - c\left(\cals_P^-\right) - c\left(\cali_P^-\right)}{\vol(P)},
% \]
We have that 
    \begin{align*}
        \delta(\setp') & = \frac{c\left(\spp{U'}{P'}\right) + c\left(\ipp{U'}{P'}\right) - c\left(\spm{U'}{P'}\right) - c\left(\ipm{U'}{P'}\right)}{\vol(\setp')} \\
        & = \frac{\delta\left(\setp \union \setp'\right)\cdot \vol(\setp \union \setp') - \delta(\setp) \cdot \vol(\setp)}{\vol(\setp \union \setp') - \vol(\setp)}.
    \end{align*}
    Therefore, it holds  for any operator $\bowtie\ \in \{<, =, >\}$ that
    \begin{alignat*}{2}
        & & \delta(\setp') & \bowtie \delta(\setp) \\
        \iff & \quad & \frac{\delta(\setp \union \setp')\cdot \vol(\setp \union \setp') - \delta(\setp)\cdot \vol(\setp)}{\vol(\setp \union \setp') - \vol(\setp)} & \bowtie \delta(\setp) \\
        \iff & & \delta(\setp \union \setp') & \bowtie \delta(\setp),
    \end{alignat*}
    which implies that  $\delta(\setp') \leq \delta(\setp)$ if and only if $\delta(\setp \union \setp') \leq \delta(\setp)$ with equality if and only if $\delta(\setp \union \setp') = \delta(\setp)$.
    Since $\setp$ is optimal, it cannot be the case that $\delta\left( \setp \cup \setp' \right)>\delta(\setp)$, and therefore the lemma follows.
 \end{proof}
 Combining everything together, we have that
for any input instance $(\setu, \edgeset_\setu)$,  Algorithm~\ref{algo:computechangerate} always returns the same output $\vecr\in\mathbb{R}^{\cardinality{\setu}}$ no matter which optimal sets  are returned by solving the linear program~\eqref{eq:lp}. In particular, Algorithm~\ref{algo:computechangerate} always finds the unique optimal set $\setp \subseteq \setu$ of maximum size, and assigns $\vecr(u) = \delta(\setp)$ to every $u\in \setp$. After removing the computed $\setp \subset \setu$, the computed $\vecr(v)=\delta(\setp')$ for some $\setp'\subseteq \setu \setminus \setp$ and  any $v\in \setp'$ is always strictly less than $\vecr(u)=\delta(\setp)$ for any $u\in \setp$.

\subsubsection{Any $\vecr$ Satisfying Rules (1) and (2) is Computed by Algorithm~\ref{algo:computechangerate}}
 Next we show that if there is any vector $\vecr$ satisfying Rules (1) and (2), it must be equal to the output of Algorithm~\ref{algo:computechangerate}.
\begin{lemma} \label{lem:alg_computes_r}
For any hypergraph $\graphh = (\vertexset_\graphh, \edgeset_\graphh, \weight)$ and $\vecf_t \in \R^n$, if there is a vector $\vecr = \mathrm{d}\vecf_t/\mathrm{d}t$ with an associated $\{\vecr_e(v)\}_{e \in \edgeset_\graphh, v \in \vertexset_\graphh}$ satisfying Rules~(1) and (2), then $\vecr$ is equal to the output of Algorithm~\ref{algo:computechangerate}.
\end{lemma}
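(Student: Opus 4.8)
The plan is to prove the lemma by induction on the recursive calls made by \algcomputechangerate\ (Algorithm~\ref{algo:computechangerate}), reducing everything to a single statement about one call. Fix an equivalence class $\setu$ of $\vecf_t$-values and the edge set $\edgeset_\setu$ passed to a call, and assume the restriction of $\vecr$ to $\setu$ (with its associated contributions $\{\vecr_e(v)\}_{e\in\edgeset_\setu}$) satisfies Rule~(1) and Rule~(2) for this reduced instance. Let $m\triangleq\max_{w\in\setu}\vecr(w)$ and $\setp\triangleq\{v\in\setu:\vecr(v)=m\}$. The key claim is that $\setp$ is the unique optimal set of maximum size for the linear program~\eqref{eq:lp} and that $\delta(\setp)=m$. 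Granting this, the call assigns $\delta(\setp)=m$ to exactly the vertices of $\setp$, in agreement with $\vecr$; one then checks that deleting the edges $\spp{U}{P}\cup\ipp{U}{P}\cup\spm{U}{P}\cup\ipm{U}{P}$ leaves a reduced instance for which the restriction of $\vecr$ to $\setu\setminus\setp$ still satisfies Rules~(1) and~(2) (the deleted edges contribute $0$ to every vertex of $\setu\setminus\setp$, by the same case analysis below), so the induction hypothesis finishes the class, and iterating over all classes proves the lemma.

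First I would establish the easy half of the key claim: $z_\setp=(\vecx_\setp,\vecy_\setp)$ is feasible, i.e.\ $\setp$ satisfies conditions~\eqref{eq:condition1} and~\eqref{eq:condition2}. For $e\in\cals_\setu^+$ we have $\discrep_{\vecf_t}(e)\neq 0$, so Rule~(1) forces some $u\in\sfte$ to have $\vecr_e(u)\neq 0$, and then the ``$\discrep<0$'' case of Rule~(2a) forces $\vecr$ to be constant on all of $\sfte$; hence $\sfte$ is either contained in $\setp$ or disjoint from it, which is exactly~\eqref{eq:condition1}, and symmetrically the ``$\discrep>0$'' case of Rule~(2b) handles $\cali_\setu^-$ and gives~\eqref{eq:condition2}. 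Since $\setp$ is feasible, $c(\vecx_\setp)=\delta(\setp)$. Expanding $m\cdot\vol(\setp)=\sum_{v\in\setp}\deg(v)\vecr(v)=\sum_{e}\sum_{v\in\setp}\deg(v)\vecr_e(v)$ and evaluating the inner sum edge by edge with Rule~(1) (using that within a single class $\sfte\cap\setu\neq\emptyset$ forces $\sfte\subseteq\setu$ and $\ifte\cap\setu=\emptyset$, and vice versa), one finds that the edges contributing $+c_{\vecf_t}(e)$ are exactly those of $\spp{U}{P}\cup\ipp{U}{P}$ and those contributing $-c_{\vecf_t}(e)$ are exactly those of $\spm{U}{P}\cup\ipm{U}{P}$; summing gives $m=\delta(\setp)$, and in particular the LP optimum $\delta^{\star}\geq m$.

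For the reverse inequality, take any optimal set $\set{Q}\subseteq\setu$ with $\delta(\set{Q})=\delta^{\star}$. Running the same bookkeeping on $\sum_{v\in\set{Q}}\deg(v)\vecr(v)=\sum_{e}\sum_{v\in\set{Q}}\deg(v)\vecr_e(v)$, now using feasibility of $\set{Q}$ (so~\eqref{eq:condition1} and~\eqref{eq:condition2} hold for $\set{Q}$) together with Rule~(2) and the sign consistency of the contributions, each edge $e$ contributes at least $+c_{\vecf_t}(e)$ when $e\in\spp{U}{Q}\cup\ipp{U}{Q}$, at least $-c_{\vecf_t}(e)$ when $e\in\spm{U}{Q}\cup\ipm{U}{Q}$, and $0$ otherwise; hence $\sum_{v\in\set{Q}}\deg(v)\vecr(v)\geq\delta^{\star}\vol(\set{Q})$. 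Since trivially $\sum_{v\in\set{Q}}\deg(v)\vecr(v)\leq m\cdot\vol(\set{Q})$ (as $\set{Q}\subseteq\setu$), we get $\delta^{\star}\leq m$, so $\delta^{\star}=m$. The same inequality is strict whenever $\set{Q}\supsetneq\setp$, because then $\set{Q}$ contains a vertex with $\vecr$-value strictly less than $m$; so no optimal set strictly contains $\setp$, and combined with Corollary~\ref{cor:lpmaximal} this identifies $\setp$ as the unique maximal optimal set, completing the key claim.

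The hard part will be the edge-by-edge bookkeeping underlying the last two paragraphs: one must split on the sign of $\discrep_{\vecf_t}(e)$, on whether $\sfte$ or $\ifte$ meets the current class, and on which vertices actually participate (have $\vecr_e(v)\neq 0$) under each of the four cases of Rules~(2a) and~(2b), and one must separately dispose of the degenerate cases of a hyperedge lying entirely inside one $\vecf_t$-class and of hyperedges with $\discrep_{\vecf_t}(e)=0$ — this is also where the sign consistency of the contributions (that $\vecr_e(v)$ is a nonnegative multiple of $-\discrep_{\vecf_t}(e)$ for $v\in\sfte$ and of $\discrep_{\vecf_t}(e)$ for $v\in\ifte$), which should be recorded up front from the definition of a contribution together with the three rules, gets used. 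None of this is conceptually deep, but it is the only place all four cases of Rule~(2) are genuinely needed and where sign errors are easy to make.
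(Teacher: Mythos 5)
Your overall strategy coincides with the paper's: work one $\vecf_t$-equivalence class at a time, identify the top level set of $\vecr$ in that class with the unique maximal optimal set of the linear program~\eqref{eq:lp}, and recurse on what remains. The first half of your key claim is sound and in fact spelled out more carefully than in the paper: feasibility of $z_\setp$ (i.e.\ conditions~\eqref{eq:condition1} and~\eqref{eq:condition2}) follows from Rules~(1) and~(2) exactly as you argue, and the identity $m\cdot\vol(\setp)=C(\setp)$ holds because whenever an edge's \emph{participating} vertices meet the top level set they all lie inside it: for $e\in\cals_\setu^+$ and $e\in\cali_\setu^-$ Rule~(2) puts the whole of $\sfte$, resp.\ $\ifte$, on a single $\vecr$-level, while for $e\in\cals_\setu^-$ and $e\in\cali_\setu^+$ the participating vertices sit at the extreme $\vecr$-level of that side, which equals the class maximum as soon as the side touches $\setp$. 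Hence only full Rule-(1) sums ever enter this computation, no sign information is needed, and $\delta^{\star}\geq m$ follows.

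The genuine gap is in your reverse inequality $\delta^{\star}\leq m$. There you run the bookkeeping over an \emph{arbitrary} optimal set $\set{Q}$ and need each edge to contribute at least its share of $C(\set{Q})$; for an edge of $\cals_\setu^-$ (or $\cali_\setu^+$) whose participating vertices are only partially inside $\set{Q}$, this is a bound on a \emph{partial} sum of the Rule-(1) total, and you justify it by a ``sign consistency'' of the contributions. That property is not implied by the hypotheses of the lemma: Rule~(1) fixes only the aggregate sums over $\sfte$ and over $\ifte$, and Rule~(2) constrains only the $\vecr$-values of the participating vertices, so nothing prevents an admissible family $\{\vecr_e(v)\}$ from having, say, two vertices $a,b\in\sfte$ at the same $\vecr$-level with $\vecr_e(a)>0>\vecr_e(b)$ as long as the terms sum to $-\weight(e)\discrep_{\vecf_t}(e)$. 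Since LP feasibility of $\set{Q}$ only forces whole-side containment for $\cals_\setu^+$ and $\cali_\setu^-$ edges, $\set{Q}$ may split the participating vertices of a $\cals_\setu^-$ or $\cali_\setu^+$ edge, and then your per-edge lower bounds are unjustified. (Your stated convention is also off: by Rule~(1) both side-sums equal $-\weight(e)\discrep_{\vecf_t}(e)$, so any sign convention would make $\vecr_e(v)$ a nonnegative multiple of $-\discrep_{\vecf_t}(e)$ on \emph{both} sides, not of $+\discrep_{\vecf_t}(e)$ on $\ifte$ — a hint that this ``easy'' bookkeeping has not been checked.) The paper's proof never estimates contributions over a non-level set: it shows every vertex of the top level set $\set{T}$ has $\vecr$-value $\delta(\set{T})$, uses feasibility to get $\delta(\set{T})\leq\delta(\setp)$, and identifies $\set{T}$ with the maximal optimal set via Corollary~\ref{cor:lpmaximal}. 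You should either restructure your converse along those lines, so that only level sets of $\vecr$ (where the full-sum argument applies) are analysed, or else state and prove the sign-consistency property separately — but as the lemma's hypotheses stand, it cannot simply be ``recorded up front''.
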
 
\begin{proof}
We   focus our attention on a single equivalence class $\setu \subset \vertexset$ where for any $u, v \in \setu$, $\vecf(u) = \vecf(v)$.
    Recall that for each $e \in \edgeset_\setu$, $c_{\vecf_t}(e) = \weight(e) \abs{\discrep_{\vecf_t}(e)}$, which is the rate of flow due to $e$ into $\setu$ (if $e \in \cals_{\setu}^+ \union \cali_{\setu}^+$) or out of $\setu$ (if $e \in \cals_{\setu}^- \union \cali_{\setu}^-$).
    Let $\vecr\in\R^n$ be the vector supposed to satisfy Rules~(1) and (2). We assume that $\setu \subseteq \vertexset$ is an arbitrary equivalence class, and define $$\set{T} \triangleq \left\{ u\in \setu: \vecr(u) = \max_{v\in \setu} \vecr(v)\right\}.$$
    Let us study which properties $\vecr$ must satisfy according to Rules~(1) and (2).
\begin{itemize}
    \item Assume that $e\in\cals_\setu^-$, i.e., it holds that $\sfte \cap U \neq\emptyset$ and $\discrep_{\vecf_t}(e)>0$. To satisfy Rule~(2a), it suffices to have that $$c_{\vecf_t}(e)=\weight(e) \cdot  \discrep_{\vecf_t}(e) = - \sum_{v \in \sfe} \deg(v) \vecr_e(v) = - \sum_{v\in 
    \set{T}} \deg(v) \vecr_e(v)$$ if $\sfte \cap \set{T} \neq\emptyset$, and $\vecr_e(v) = 0$ for all $v \in \set{T}$ otherwise. 
    \item Assume that $e\in\cals_\setu^+$, i.e., it holds that $\sfte \cap \setu \neq\emptyset$ and $\discrep_{\vecf_t}(e)<0$. To satisfy Rule~(2a), it suffices to have $\sfte \subseteq \set{T}$, or $\sfte \cap \set{T}=\emptyset$.
    \item Assume that $e\in \cali_\setu^+$, i.e., it holds that $\ife \cap \setu \neq\emptyset$ and $\discrep_{\vecf_t}(e)<0$. To satisfy Rule~(2b), it suffices to have that $$c_{\vecf_t}(e) = \sum_{v\in \ife} \deg(v) \vecr_e(v) = \sum_{v\in \set{T}} \deg(v) \vecr_e(v)$$ if $\ife \subseteq \set{T}$, and $\vecr_e(v) = 0$ for all $v \in \set{T}$ otherwise. 
    \item Assume that $e\in\cali_\setu^-$, i.e., it holds that $\ife \cap \setu \neq\emptyset$ and $\discrep_{\vecf_t}(e)>0$. To satisfy Rule~(2b), it suffices to have $\ife \subseteq \set{T}$, or $\ife \cap \set{T} =\emptyset$.
\end{itemize}
    Notice that the four conditions above needed to satisfy Rule~(2) naturally reflect our definitions of the sets $\spp{U}{P},\ipp{U}{P}, \spm{U}{P}$, and $\ipm{U}{P}$ and for all $u \in \set{T}$, it must be that $\vecr(u) = \delta(\set{T})$.
    
Now we show that the output set $\setp$ returned by solving the linear program~\eqref{eq:lp}
 is the set $\set{T}$. To prove this, notice that on one hand, by  Corollary~\ref{cor:lpmaximal}, the linear program gives us the unique maximal optimal subset $\setp \subseteq \setu$, and every $v\in \setp$ satisfies that $\vecr(v)=\delta(\setp)\leq \vecr(u) = \delta(\set{T})$ for any $u\in \set{T}$ as every vertex in $\set{T}$ has the maximum $\vecr$-value. On the other side, we have that $\delta(\set{T}) \leq \delta(\setp)$ since  $\setp$ is the set returned by the linear program, and therefore  $\set{T}=\setp$. We can apply this argument recursively, and this proves that Algorithm~\ref{algo:computechangerate} must return the vector $\vecr$.
\end{proof}

\subsubsection{Output of Algorithm~\ref{algo:computechangerate} Satisfies Rules (1) and (2)}
 Now we show that the output of Algorithm~\ref{algo:computechangerate} does indeed satisfy Rules (1) and (2) which, together with Lemma~\ref{lem:alg_computes_r}, implies that there is exactly one such vector satisfying the rules.
\begin{lemma} \label{lem:algo_satisfies_rules}
    For any hypergraph $\graphh = (\vertexset_\graphh, \edgeset_\graphh, \weight)$ and vector $\vecf_t \in \R^n$, the vector $\vecr$ constructed by Algorithm~\ref{algo:computechangerate} has corresponding $\{\vecr_e(v)\}_{e \in \edgeset_\graphh, v \in \vertexset_\graphh}$ which satisfies Rules~(1) and (2).
    Moreover, the $\{\vecr_e(v)\}_{e \in \edgeset_\graphh, v \in \vertexset_\graphh}$ values can be computed in polynomial time using the vector $\vecr$.
\end{lemma}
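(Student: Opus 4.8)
The plan is to exhibit, for the vector $\vecr$ returned by Algorithm~\ref{algo:computechangerate}, an explicit decomposition $\vecr(v)=\sum_{e\in\edgeset_\graphh}\vecr_e(v)$ and to check it against Rules~(1) and~(2) case by case. Fix $\graphh$ and $\vecf_t$, and work inside a single equivalence class $\setu$ (vertices of $\vertexset_\graphh$ sharing a common value of $\vecf_t$); the classes can be treated independently, since an edge $e$ influences the rate of change of vertices in a class only through the vertices of $\sfe$ (resp.\ $\ife$) lying in that class, and $\sfe$, $\ife$ each lie entirely in one class. Let $\setp_1,\ldots,\setp_m$ be the sets extracted by the successive recursive calls of Algorithm~\ref{algo:computechangerate} on $\setu$, so that they partition $\setu$, every $v\in\setp_i$ receives $\vecr(v)=\delta(\setp_i)$, and, by the level-set lemma preceding Corollary~\ref{cor:lpmaximal} and the monotonicity used in the uniqueness proof, $\delta(\setp_1)>\delta(\setp_2)>\cdots>\delta(\setp_m)$.

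Next I would assign the per-edge flows. Consider the step extracting $\setp:=\setp_i$; exactly the edges of $\spp{U}{P}\cup\ipp{U}{P}\cup\spm{U}{P}\cup\ipm{U}{P}$ are removed here. For $e$ with $\discrep_{\vecf_t}(e)<0$, the constraints $x_e=y_u$ of \eqref{eq:lp} force $\sfe\subseteq\setp$ (for $e\in\cals_\setu^+$), and likewise $\ife\subseteq\setp$ (for $e\in\cali_\setu^+$); for $e$ with $\discrep_{\vecf_t}(e)>0$, the vertices of $\sfe$ (resp.\ $\ife$) in $\setp$ are precisely those of maximum (resp.\ minimum) $\vecr$-value inside $\sfe$ (resp.\ $\ife$), because any other vertex of $\sfe$ is extracted in a strictly later step (it cannot be in an earlier $\setp_j$, else $e$ would already have been removed) and hence has a strictly smaller $\delta$-value. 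For each removed $e$, let $X_e$ be the relevant one of $\sfe,\ife$ with $X_e\cap\setp\neq\emptyset$; I would place the total $-\weight(e)\discrep_{\vecf_t}(e)$ on the vertices of $X_e\cap\setp$ so that the degree-weighted sum over all of $X_e$ equals $-\weight(e)\discrep_{\vecf_t}(e)$. This yields Rule~(1) for each edge, and Rule~(2) follows from the two observations above: when $\discrep_{\vecf_t}(e)<0$ all participating vertices of $X_e$ share the common value $\delta(\setp)$, and when $\discrep_{\vecf_t}(e)>0$ they carry the maximal (resp.\ minimal) value of $X_e$. All other edges contribute $\vecr_e(v)=0$ to vertices of $\setu$.

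The crux is consistency: these per-edge distributions must be chosen so that $\sum_e\vecr_e(v)=\delta(\setp)$ for \emph{every} $v\in\setp$, not merely on average. Since $\delta(\setp)=C(\setp)/\vol(\setp)$ is exactly the degree-weighted average of the net flow into $\setp$ (contributions of $\spp{U}{P},\ipp{U}{P}$ minus $\spm{U}{P},\ipm{U}{P}$), it suffices to route this total flow so that each $v\in\setp$ absorbs $\deg(v)\delta(\setp)$, where each removed edge $e$ may split its total $\pm c_{\vecf_t}(e)$ among $X_e\cap\setp$ in arbitrary non-negative proportions (for $\cals^+,\cali^+$) or must place its fixed total on $X_e\cap\setp$ (for $\cals^-,\cali^-$). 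I would formulate this as a transportation (max-flow) feasibility problem between the removed edges and the vertices of $\setp$, and deduce feasibility from the maximality of $\setp$ given by Corollary~\ref{cor:lpmaximal} together with conditions \eqref{eq:condition1}--\eqref{eq:condition2}: a violated Hall-type cut would exhibit a proper subset of $\setp$ that is itself optimal or that contradicts closure of $\setp$. Because such a transportation problem is solvable in polynomial time, this simultaneously gives the ``moreover'' claim, recovering $\{\vecr_e(v)\}$ from $\vecr$ efficiently; combined with Lemma~\ref{lem:alg_computes_r} it then follows that $\vecr$ is the unique vector satisfying Rules~(1) and~(2). I expect the consistency/feasibility step — pinning down the exact routing rather than just the averages — to be the main obstacle, the case analysis for Rules~(1)--(2) being routine once it is in place.
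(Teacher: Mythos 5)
Your proposal is correct and follows essentially the same route as the paper: the paper also reduces the consistency step to an explicit $s$--$t$ max-flow instance between the removed hyperedges and the vertices of the extracted set, and shows the required flow saturates by a min-cut argument in which any cheaper cut would yield a subset $\set{T}'\subseteq\setp$ with $\delta(\set{T}')>\delta(\setp)$, contradicting optimality of $\setp$ (note this is the precise form of your ``Hall-type'' step — the contradiction is to the optimality of the $\delta$-value, not to maximality of size or closure). The polynomial-time recovery of $\{\vecr_e(v)\}$ via these flow computations is likewise exactly how the paper establishes the ``moreover'' claim.
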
 
\begin{proof} 
\newcommand{\setetp}{\mathcal{E}_{\set{T}}^+}
\newcommand{\setetpp}{\mathcal{E}_{\set{T}'}^+}
\newcommand{\setetm}{\mathcal{E}_{\set{T}}^-}
\newcommand{\setetpm}{\mathcal{E}_{\set{T}'}^-}
We focus on a single iteration of the algorithm, in which $r(v)$ is assigned for the vertices in some set $\set{T} \subset \vertexset_\graphh$.
We use the notation \[
\setetp = \ipp{U}{T} \union \spp{U}{T}, \qquad \setetm = \ipm{U}{T} \union \spm{U}{T}
\]
and show that the values of $\vecr_e(v)$ for $e \in \setetp \union \setetm$ can be computed and satisfy Rules (1) and (2). Therefore, by applying this argument to each recursive call of the algorithm, we establish the lemma.
Given the set $\set{T}$, construct the following undirected flow graph, which is illustrated in  Figure~\ref{fig:maxflow}.
\begin{itemize}
    \item The vertex set is $\setetp \union \setetm \union \set{T} \union \{s, t\}$.
    \item For all $e \in \setetp$, there is an edge $(s, e)$ with capacity $c_{\vecf_t}(e)$.
    \item For all $e \in \setetm$, there is an edge $(e, t)$ with capacity $c_{\vecf_t}(e)$.
    \item For all $v \in \set{T}$, if $\delta(\set{T}) \geq 0$, there is an edge $(v, t)$ with capacity $\deg(v) \delta(\set{T})$.
    Otherwise, there is an edge $(s, v)$ with capacity $\deg(v)\abs{\delta(\set{T})}$.
    \item For each $e \in \setetp \union \setetm$, and each $v \in \set{T} \intersect \left(\sfte \union \ifte \right)$, there is an edge $(e, v)$ with  capacity $\infty$.
\end{itemize}

\begin{figure}[t]
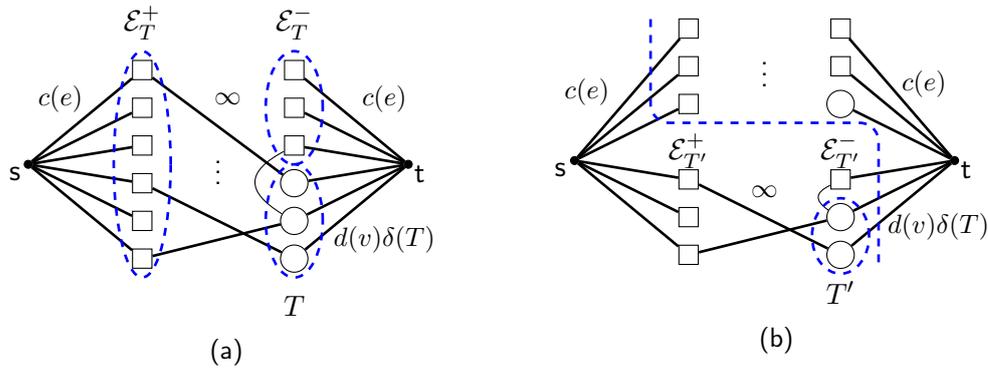

    \centering
    \begin{subfigure}{0.49\textwidth}
    \centering
        \tikzfig{hypergraphs/maxflow}
        \caption{}
    \end{subfigure}
    \begin{subfigure}{0.49\textwidth}
    \centering
        \tikzfig{hypergraphs/maxflowcut}
        \caption{}
    \end{subfigure}
    \caption[Max-flow graph used in proof of Lemma~\ref{lem:algo_satisfies_rules}]{ \textbf{(a)} An illustration of the constructed max-flow graph, when $\delta(\set{T}) \geq 0$.
    The minimum cut is given by $\{s\}$. \textbf{(b)} A cut induced by $\set{T}' \subset \set{T}$. We can assume that every $e \in \setetp \union \setetm$ connected to $\set{T}'$ is on the same side of the cut as $\set{T}'$. Otherwise, there would be an edge with infinite capacity crossing the cut.}
    \label{fig:maxflow}
\end{figure}

We use $\cut(\seta)$ to denote the weight of the cut defined by the set $\seta$ in this constructed graph, and note that $\cut(\{s\}) = \cut(\{t\})$ since
\begin{align*}
    \cut(\{s\}) - \cut(\{t\}) = \sum_{e \in \setetp} c_{\vecf}(e) - \sum_{e \in \setetm} c_{\vecf}(e) - \vol(\set{T}) \delta(\set{T}) = 0
\end{align*}
by the definition of $\delta(\set{T})$.

Now, suppose that the maximum flow value on this graph is $\cut(\{s\})$, and let the corresponding flow from $u$ to $v$ be given by $\Theta(u, v) = - \Theta(v, u)$.
Then, set $\deg(v) \vecr_e(v) = \Theta(e, v)$ for any $e \in \setetp \union \setetm$ and $v \in \set{T} \intersect e$.
This configuration of the values $\vecr_e(v)$ would be compatible with the vector $\vecr$ computed by Algorithm~\ref{algo:computechangerate}, and would satisfy the rules of the diffusion process for the following reasons:
\begin{itemize}
    \item For all $v \in \set{T}$, the edge $(v, t)$ or $(s, v)$ is saturated and so $$\sum_{e \in \edgeset} \deg(v) \vecr_e(v) = \deg(v) \delta(\set{T}) = \deg(v) \vecr(v)$$.
    \item For any $e \in \setetp \union \setetm$, the edge $(s, e)$ or $(e, t)$ is saturated and so we have $$\sum_{v \in \set{T} \intersect e} \deg(v) \vecr_v(e) = - \weight(e) \discrep(e).$$ Since $\setetp \union \setetm$ is removed in the recursive step of the algorithm, $\vecr_e(v) = 0$ for all $v \in \setu \setminus \set{T}$ and so $\sum_{v \in \setu \intersect e} \deg(v) \vecr_e(v) = - \weight(e) \discrep(e)$. This establishes Rule~(1) since $\setu \intersect e$ is equal to either $\sfe$ or $\ife$.
    \item For edges in $\spp{U}{T}$ (resp.\ $\ipp{U}{T}$, $\ipm{U}{T}$), since $\sfte$ (resp.\ $\ifte$, $\ifte$) is a subset of $\set{T}$ and every $v \in \set{T}$ has the same value of $\vecr(v)$, Rule~(2) is satisfied.
    For edges in $\spm{U}{T}$, for any $v \not \in \set{T}$, we have $\vecr_e(v) = 0$ and $\vecr(v) < \delta(\set{T})$ which satisfies Rule~(2).
\end{itemize}

We now show that every cut separating $s$ and $t$ has weight at least $\cut(\{s\})$ which  would establish that the maximum flow on this graph is $\cut(\{s\})$ by the max-flow min-cut theorem.

Consider some arbitrary cut given by $\set{X} = \{s\} \union \set{T}' \union \setetpp \union \setetpm$ where $\set{T}'$ (resp.\ $\setetpp$, $\setetpm$) is a subset of $\set{T}$ (resp.\ $\setetp$, $\setetm$).
Figure~\ref{fig:maxflow} illustrates this cut.
Since all of the edges not connected to $s$ or $t$ have infinite capacity, we can assume that no such edge crosses the cut which implies that
\begin{itemize}
    \item for all $e \in \setetpp$, $e \intersect (\set{T} \setminus \set{T}') = \emptyset$;
    \item for all $e \in \setetpm$, $e \intersect (\set{T} \setminus \set{T}') = \emptyset$;
    \item for all $e \in (\setetp \setminus \setetpp)$, $e \intersect \set{T}' = \emptyset$;
    \item for all $e \in (\cale_T^- \setminus \setetpm)$, $e \intersect \set{T}' = \emptyset$.
\end{itemize}
These conditions, along with the definition of $\setetp$ and $\setetm$, allow us to assume that $\setetpp = \ipp{U}{T'} \union \spp{U}{T'}$ and $\setetpm = \ipm{U}{T'} \union \spm{U}{T'}$.
The size of this arbitrary cut is
\[
    \cut(\set{X}) = \cut(\{s\}) - \sum_{e \in \setetpp} c(e) + \sum_{e \in \setetpm} c(e) + \sum_{v \in \set{T}'} \deg(v) \delta(\set{T}).
\]
Since $\set{T}$ maximises the objective function $\delta$, we have
\[
    \sum_{e \in \setetpp} c(e) - \sum_{e \in \setetpm} c(e) = \vol(\set{T}') \delta(\set{T}') \leq \vol(\set{T}') \delta(\set{T}) = \sum_{v \in \set{T}'} \deg(v) \delta(\set{T})
\]
and can conclude that $\cut(\set{X}) \geq \cut(\{s\})$. This completes the proof.
\end{proof}

We can now combine the results in Lemmas~\ref{lem:alg_computes_r} and \ref{lem:algo_satisfies_rules} to prove Lemma~\ref{lem:rulesimplydiffusion}.
\begin{proof}[Proof of Lemma~\ref{lem:rulesimplydiffusion}.]
    Lemma~\ref{lem:alg_computes_r} and Lemma~\ref{lem:algo_satisfies_rules} together imply that there is a unique vector $\vecr$ and corresponding $\{\vecr_e(v)\}_{e \in \edgeset_\graphh, v \in \vertexset_\graphh}$ which satisfies Rules~(1) and (2).
    Lemma~\ref{lem:alg_computes_r} further shows that Algorithm~\ref{algo:computechangerate} computes this vector $\vecr$, and the proof of Lemma~\ref{lem:algo_satisfies_rules} gives a polynomial-time algorithm for computing the $\{\vecr_e(v)\}$ values by solving a sequence of max-flow problems.
\end{proof}

\section{Analysis of Algorithm~\ref{algo:main}} \label{sec:thm1proof}
In this section we 
analyse the \diffalgname\ algorithm and
prove Theorem~\ref{thm:mainalg}.
This section consists of two parts corresponding to the two statements in Theorem~\ref{thm:mainalg}.
First, we show that the diffusion process converges to an eigenvector of $\signlaph$.
We then show that this allows us to find sets $\setl, \setr \subset \vertexset_\graphh$ with low hypergraph bipartiteness.

\subsection{Convergence of the Diffusion Process}
We now show that the diffusion process determined by the operator $\signlaph$ converges in polynomial time to an eigenvector of $\signlaph$.

\begin{theorem} \label{thm:convergence}
For any $\epsilon > 0$, there is some $t = \bigo{1 / \epsilon^3}$ such that for any starting vector $\vecf_0$, there is an interval $[c, c + 2\epsilon]$ such that
\[
    \frac{1}{\norm{\vecf_t}_\weight}\sum_{\vec{u}_i : \lambda_i \in [c, c + 2\epsilon]} \inner{\vecf_t}{\vec{u}_i}_\weight \geq 1 - \epsilon,
\]
where $(\vec{u}_i, \lambda_i)$ are the eigen-pairs of $\degm_\graphh^{-1} \signlap_t = \degm_\graphh^{-1}(\degm_{\graphg_t} + \adj_{\graphg_t})$ and $\graphg_t$ is the graph constructed to represent the diffusion operator $\signlaph$ at time $t$.
\end{theorem}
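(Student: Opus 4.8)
The plan is to use the Rayleigh quotient of the operator realising the diffusion at time $t$ as a Lyapunov function. Define $R(t):=R_{\signlap_t}(\vecf_t)=\vecf_t^\transpose\signlap_t\vecf_t/\norm{\vecf_t}_\weight^2$, where $\signlap_t=\degm_{\graphg_t}+\adj_{\graphg_t}$ is the signless Laplacian of the graph $\graphg_t$ constructed from $\vecf_t$ as in Section~\ref{sec:algorithm}; by that construction $\vecf_t^\transpose\signlap_t\vecf_t=\sum_{e\in\edgeset_\graphh}\weight(e)\,\discrep_{\vecf_t}(e)^2$ independently of how each hyperedge weight is split inside $\sfte\times\ifte$, so $R(t)=D(\vecf_t)$. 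First I would record two scale-invariant facts: $0\le R(t)\le 2$ — non-negativity because $\signlap_t$ is PSD, and the upper bound because $\discrep_{\vecf}(e)^2=\big(\max_{u\in e}\vecf(u)+\min_{v\in e}\vecf(v)\big)^2\le 2\sum_{u\in e}\vecf(u)^2$, which sums over $e$ to $2\norm{\vecf}_\weight^2$; and, since the diffusion merely rescales a vector already proportional to an eigenvector of $\degm_\graphh^{-1}\signlap_t$, it suffices to track the spectral distribution of $\vecf_t$. Writing $\vecf_t=\sum_i\alpha_i(t)\,\vecu_i$ in the $\inner{\cdot}{\cdot}_\weight$-orthonormal eigenbasis of $\degm_\graphh^{-1}\signlap_t$, put $p_t(i):=\alpha_i(t)^2/\norm{\vecf_t}_\weight^2$; this is a probability distribution on $\{\lambda_i\}$ with mean $\sum_i\lambda_i p_t(i)=R(t)$.

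The next step is a variance identity for $R'$. At all but finitely many times on any bounded interval — between the times when the combinatorial type of $(\sfte,\ifte)_e$ changes — the diffusion is the linear ODE $\dfdt=-\vecr$ with $\vecr:=\degm_\graphh^{-1}\signlap_t\vecf_t$ and $\signlap_t$ locally constant, so $R$ is differentiable there. I would compute $\tfrac{d}{dt}\norm{\vecf_t}_\weight^2=2\inner{\vecf_t}{\dfdt}_\weight=-2\vecf_t^\transpose\signlap_t\vecf_t$ and, invoking the diffusion continuity condition of Section~\ref{sec:natural_assumption} to discard the first-order contribution of the moving graph $\graphg_t$ (the analogue for $\signlaph$ of the corresponding step for $\lap_\graphh$ in~\cite{chanSpectralPropertiesHypergraph2018}), $\tfrac{d}{dt}\big(\vecf_t^\transpose\signlap_t\vecf_t\big)\le -2\norm{\vecr}_\weight^2$. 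Substituting $\vecr=\sum_i\lambda_i\alpha_i\vecu_i$ and $\vecf_t^\transpose\signlap_t\vecf_t=\inner{\vecf_t}{\vecr}_\weight$ into the quotient rule gives
\[
R'(t)\ \le\ -2\left(\frac{\norm{\vecr}_\weight^2}{\norm{\vecf_t}_\weight^2}-\frac{\inner{\vecf_t}{\vecr}_\weight^2}{\norm{\vecf_t}_\weight^4}\right)\ =\ -2\big(\E_{p_t}[\lambda^2]-\E_{p_t}[\lambda]^2\big)\ =\ -2\,\mathrm{Var}_{p_t}(\lambda)\ \le\ 0 .
\]
In particular $R$ is non-increasing, and at every such $t$ we have $\mathrm{Var}_{p_t}(\lambda)\le -R'(t)/2$.

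Now conclude by a drop-budget argument. Over $[0,T]$ the monotone $R$ decreases by at most $2$; partitioning $[0,T]$ into $\lceil T/2\epsilon\rceil$ intervals of length $2\epsilon$, some interval $[c_0,c_0+2\epsilon]$ has total drop at most $2/\lceil T/2\epsilon\rceil$, and hence (by the a.e.-derivative / mean value estimate, selecting a point at which $\graphg_t$ is locally stable) contains a time $t^\star$ with $-R'(t^\star)=O(1/T)$. Taking $T=\Theta(1/\epsilon^3)$ forces $-R'(t^\star)\le 2\epsilon^3$, so $\mathrm{Var}_{p_{t^\star}}(\lambda)\le\epsilon^3$. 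Writing $\mu:=R(t^\star)=\E_{p_{t^\star}}[\lambda]$ and setting $c:=\max\{0,\mu-\epsilon\}$, Chebyshev's inequality gives
\[
\sum_{i:\,\lambda_i\notin[c,\,c+2\epsilon]}p_{t^\star}(i)\ \le\ \sum_{i:\,|\lambda_i-\mu|>\epsilon}p_{t^\star}(i)\ \le\ \frac{\mathrm{Var}_{p_{t^\star}}(\lambda)}{\epsilon^2}\ \le\ \epsilon ,
\]
i.e.\ $\sum_{i:\lambda_i\in[c,c+2\epsilon]}\inner{\vecf_{t^\star}}{\vecu_i}_\weight^2\ge(1-\epsilon)\norm{\vecf_{t^\star}}_\weight^2$. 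Choosing the signs of the $\vecu_i$ so that $\inner{\vecf_{t^\star}}{\vecu_i}_\weight\ge 0$ and using $\sqrt{1-\epsilon}\ge 1-\epsilon$ yields the stated inequality at time $t=t^\star=O(1/\epsilon^3)$ (after renaming $\epsilon$ to absorb implicit constants).

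The main obstacle is the inequality $\tfrac{d}{dt}(\vecf_t^\transpose\signlap_t\vecf_t)\le -2\norm{\vecr}_\weight^2$ in the second step — i.e.\ justifying that this derivative may be computed as though $\graphg_t$ were frozen. This is the signless analogue of the corresponding step for the hypergraph Laplacian $\lap_\graphh$, but it is genuinely more involved because $\discrep_{\vecf_t}(e)$ can be positive or negative inside a single hyperedge, which is precisely why Rule~(2) branches into four cases on $\sfte$ and $\ifte$ according to the sign of $\discrep_{\vecf_t}(e)$; one must check in each case that the weight allocation defining $\graphg_t$ is a stationary point of the relevant functional, so that the motion of $\graphg_t$ contributes nothing at first order. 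A secondary technical point is the regularity bookkeeping underpinning the drop-budget step: using Lemma~\ref{lem:rulesimplydiffusion} and the fact that the combinatorial type of $(\sfte,\ifte)_e$ can change only finitely often on a bounded interval, one checks that $R$ is continuous and piecewise-$C^1$, so that the partition-and-mean-value argument — and the selection of $t^\star$ at a point of differentiability with $\graphg_{t^\star}$ locally stable — is legitimate.
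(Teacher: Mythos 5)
Your proposal is correct and follows essentially the same route as the paper: both hinge on the identity $\frac{\mathrm{d}}{\mathrm{d}t}R(\vecf_t)=-2\bigl(\E_{p_t}[\lambda^2]-\E_{p_t}[\lambda]^2\bigr)$ (justified, as you note, by the diffusion rules handling the moving graph $\graphg_t$, which is the paper's Lemma on $\norm{\vecr}_\weight^2$ and its derivative lemma) together with the fact that $R\in[0,2]$ can only decrease by a constant, forcing a time $t=\bigo{1/\epsilon^3}$ with spectral variance $\bigo{\epsilon^3}$. The only cosmetic difference is packaging: you select a low-derivative time by a drop-budget/pigeonhole argument and conclude via Chebyshev around $\mu=R(t^\star)$, whereas the paper argues the contrapositive directly, showing by a two-case analysis that whenever more than an $\epsilon/2$-fraction of mass lies above $R(\vecf_t)+\epsilon$ or below $R(\vecf_t)-\epsilon$ one has $\frac{\mathrm{d}}{\mathrm{d}t}R(\vecf_t)\leq-\epsilon^3$.
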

By taking $\epsilon$ to be some small constant, this shows that the vector $\vecf_t$ converges to an eigenvector of the hypergraph operator in polynomial time.  

\begin{proof}
 We show that the Rayleigh quotient $R_{\signlaph}(\vecf_t)$ is always decreasing at a rate of at least $\epsilon^3$ whenever the conclusion of Theorem \ref{thm:convergence} does not hold. Since $R_{\signlaph}(\vecf_t)$ can only decrease by a constant amount, the theorem follows.

    \newcommand{\sumj}{\sum_{j = 1}^n}
    \newcommand{\ajs}{\alpha_j^2}
    \newcommand{\halfdelta}{\frac{\delta}{2}}
    First, we derive an expression for the rate of change of the Rayleigh quotient $R_{\signlaph}(\vecf_t)$.
    Let $\vecx_t \triangleq \degmhalf_\graphh \vecf_{t}$, and   \[\signlapn_t = \degmhalfneg_\graphh (\degm_{\graphg_t} + \adj_{\graphg_t})\degmhalfneg_\graphh.\]  Then, we have
     \[
        R_{\signlap_\graphh}(\vecf_t) = \frac{\vecf_t^\transpose (\degm_{\graphg_t} + \adj_{\graphg_t}) \vecf_t}{\vecf_t^\transpose \degm_\graphh \vecf_t} = \frac{\vecx_t^\transpose \signlapn_t \vecx_t}{\vecx_t^\transpose \vecx_t}.
    \]
    For every eigenvector $\vecu_j$ of $\degm_\graphh^{-1} \signlaph$, the vector $\vecv_j = \degmhalf_\graphh \vecu_j$ is an eigenvector of $\signlapn$ with the same eigenvalue $\lambda_i$.
 Since $\signlapn$ is symmetric, the vectors $\vecv_1, \ldots, \vecv_n$ are orthogonal. 
    Additionally, notice that
    \[
    \inner{\vecx_t}{\vecv_j} = \vecf^\transpose_t \degm_\graphh \vecu_j = \inner{\vecf_{t}}{\vecu_j}_\weight
    \]
    and we define $\alpha_j = \inner{\vecf_t}{\vecu_j}_w$ so we can write $\vecx_t = \sum_{j = 1}^n \alpha_j \vecv_j$ and \[\vecf_{t} = \degmhalfneg_\graphh \vecx_t = \sum_{j = 1}^n \alpha_j \vecu_j.\]
    Now, we have that
    \begin{align*}
        \frac{\mathrm{d}}{\mathrm{d}t} \inner{\vecx_t}{\signlapn_{t} \vecx_t} & = \inner{\frac{\mathrm{d}}{\mathrm{d}t} \vecx_t}{\signlapn_t \vecx_t} + \inner{\vecx_t}{\frac{\mathrm{d}}{\mathrm{d}t} \signlapn_t \vecx_t} \\
        & = - \vecx_t^\transpose \signlapn_t^2 \vecx_t - \vecx_t^\transpose \signlapn_t^2 \vecx_t \\
        & = - 2 \sum_{j = 1}^n \alpha_j^2 \lambda_j^2.
    \end{align*}
    Additionally, we have
    \[
        \frac{\mathrm{d}}{\mathrm{d}t} \vecx_t^\transpose \vecx_t = - \vecx_t^\transpose \signlapn_t \vecx_t - \vecx_t^\transpose \signlapn_t \vecx_t 
        = - 2 \vecx_t^\transpose \signlapn_t \vecx_t.
    \]
    Recalling that $\vecx_t^\transpose \vecx_t = \sum_{j = 1}^n \alpha_j^2$, this gives 
    \newcommand{\dddelta}{\frac{\mathrm{d}}{\mathrm{d} \delta}}
    \begin{align}
        \frac{\mathrm{d}}{\mathrm{d} t} R(\vecf_t) & = \frac{1}{(\vecx_t^\transpose \vecx_t)^2} \left(\left(\frac{\mathrm{d}}{\mathrm{d} t} (\vecx_t^\transpose \signlapn_t \vecx_t) \right) (\vecx_t^\transpose \vecx_t) - \left(\frac{\mathrm{d}}{\mathrm{d}t} (\vecx_t^\transpose \vecx_t) \right) (\vecx_t^\transpose \signlapn_t \vecx_t) \right) \nonumber \\
        & = \frac{1}{\vecx_t^\transpose \vecx_t} \left(\frac{\mathrm{d}}{\mathrm{d}t} (\vecx_t^\transpose \signlapn_t \vecx_t)\right) + 2 R(\vecf_t)^2 \nonumber \\
        & = 2 \left( R(\vecf_t)^2 - \frac{1}{\sum_{j = 1}^n \alpha_j^2} \sumj \ajs \lambda_j^2\right) \label{eq:ddt_gt}
    \end{align}
    We now show that at any time $t$, if the conclusion of the theorem does not hold, then
    \begin{equation} \label{eq:gt_decreases}
        \frac{\mathrm{d}}{\mathrm{d} t} R(\vecf_t) \leq - \epsilon^3.
    \end{equation}
    Assuming that this holds and using the fact that $R(\vecf_0) \leq 2$, when $t = 2 / \epsilon^3$, either $R(\vecf_t) = 0$ or there is some $t' < t$ when $(\mathrm{d}/\mathrm{d}t') R(\vecf_{t'}) > - \epsilon^3$ and the conclusion of the theorem holds.
    
    Now, to show \eqref{eq:gt_decreases}, consider the partial derivative
    \begin{align}
        \frac{\partial}{\partial \lambda_i} 2 \left( R(\vecf_t)^2 - \frac{1}{\sum_{j = 1}^n \alpha_j^2} \sumj \ajs \lambda_j^2 \right) & = 2 \left( \frac{2 \alpha_i^2}{\sum_{j = 1}^n \alpha_j^2} R(\vecf_t) - \frac{2 \alpha_i^2}{\sum_{j = 1}^n \alpha_j^2} \lambda_i \right) \nonumber \\
        & = \frac{4 \alpha_i^2}{\sum_{j = 1}^n \alpha_j^2} (R(\vecf_t) - \lambda_i), \label{eq:partial}
    \end{align}
    where we use the fact that $R(\vecf_t) = (\sum_{j = 1}^n \alpha_j^2 \lambda_j) / (\sum_{j = 1}^n \alpha_j^2)$.
    Notice that
    the derivative in \eqref{eq:partial} is greater than $0$
    if $\lambda_i < R(\vecf_t)$,
    and
    the derivative is less than $0$
    if $\lambda_i > R(\vecf_t)$.
    This means that, in order to establish an upper-bound of $(\mathrm{d}/\mathrm{d} t) R(\vecf_t)$, we can assume that the eigenvalues $\lambda_j$ are as close to the value of $R(\vecf_t)$ as possible.
    
    Now, we assume that at time $t$ the conclusion of the theorem does not hold. Then, one of the following cases must hold: 
    \begin{enumerate}
        \item $(\sum_{j : \lambda_j > R(\vecf_t) + \epsilon} \alpha_j^2) / (\sum_{j = 1}^n \alpha_j^2) > \epsilon / 2$;
        \item $(\sum_{j : \lambda_j < R(\vecf_t) - \epsilon} \alpha_j^2) / (\sum_{j = 1}^n \alpha_j^2) > \epsilon / 2$.
    \end{enumerate}
    Suppose the first case holds.
    By the conclusions we draw from \eqref{eq:partial}, we can assume that there is an eigenvalue $\lambda_i = R(\vecf_t) + \epsilon$ such that $\alpha_i^2 / (\sum_{j = 1}^n \alpha_j^2) = \epsilon / 2$ and that \[\left(\sum_{j: \lambda_j < R(\vecf_t)} \alpha_j^2\right) \Bigg/ \left(\sum_{j = 1}^n \alpha_j^2\right) = 1 - \epsilon / 2.\]
    Then, since \[R(\vecf_t) = \left(\sum_{j = 1}^n \alpha_j^2 \lambda_j\right) \Bigg/ \left(\sum_{j = 1}^n \alpha_j^2\right),\] we have
    \begin{align*}
        \frac{1}{\sum_{j = 1}^n \alpha_j^2} \sum_{j: \lambda_j < R(\vecf_t)} \alpha_j^2 \lambda_j = R(\vecf_t) - \frac{\epsilon}{2}(R(\vecf_t) + \epsilon),
    \end{align*}
    which is equivalent to
    \begin{equation} \label{eq:ljminusgt}
        \frac{1}{\sumj \ajs} \sum_{j: \lambda_j < R(\vecf_t)} \ajs (\lambda_j - R(\vecf_t)) = \frac{\epsilon}{2}\cdot R(\vecf_t) - \frac{\epsilon}{2}\cdot (R(\vecf_t) + \epsilon) = - \frac{\epsilon^2}{2}.
    \end{equation}
    Now, notice that for any $\lambda_j < R(\vecf_t)$ we have
    \begin{align*}
        (\lambda_j^2 - R(\vecf_t)^2) & = (\lambda_j + R(\vecf_t)) (\lambda_j - R(\vecf_t)) \\
        & \geq 2 R(\vecf_t) (\lambda_j - R(\vecf_t)),
    \end{align*}
    since $\lambda_j - R(\vecf_t) < 0$.
    As such, we have
    \begingroup
    \allowdisplaybreaks
    \begin{align*}
        \frac{\mathrm{d}}{\mathrm{d}t} R(\vecf_t) & = 2 \left( R(\vecf_t)^2 - \frac{1}{\sum_{j = 1}^n \alpha_j^2} \sumj \ajs \lambda_j^2\right) \\
        & = - \frac{2}{\sum_{j = 1}^n \alpha_j^2} \sumj \ajs (\lambda_j^2 - R(\vecf_t)^2) \\
        & = - \epsilon \left((R(\vecf_t) + \epsilon)^2 - R(\vecf_t)^2\right) - \frac{2}{\sumj \ajs} \sum_{j: \lambda_j < R(\vecf_t)} \alpha_j^2 (\lambda_j^2 - R(\vecf_t)^2) \\
        & \leq - \epsilon \left(2 \epsilon R(\vecf_t) + \epsilon^2\right) - \frac{2}{\sumj \ajs} \sum_{j: \lambda_j < R(\vecf_t)} 2 \ajs R(\vecf_t) (\lambda_j - R(\vecf_t)) \\
        & = - 2 \epsilon^2 R(\vecf_t) - \epsilon^3 + 2 \epsilon^2 R(\vecf_t) \\
        & = - \epsilon^3,
    \end{align*}
    \endgroup
    where the second last equality follows by \eqref{eq:ljminusgt}.
    We now consider the second case.
    We assume that there is an eigenvalue $\lambda_i = R(\vecf_t) - \epsilon$ such that $\alpha_i^2 / (\sumj \ajs) = \epsilon / 2$, and that $(\sum_{j: \lambda_j > R(\vecf_t)} \ajs) / (\sumj \ajs) = 1 - \epsilon / 2$.
    Then, we have
    \begin{align*}
        \frac{1}{\sum_{j = 1}^n \alpha_j^2} \sum_{j: \lambda_j > R(\vecf_t)} \alpha_j^2 \lambda_j = R(\vecf_t) - \frac{\epsilon}{2}\left(R(\vecf_t) - \epsilon\right),
    \end{align*}
    which is equivalent to
    \begin{equation} \label{eq:ljminusgt2}
        \frac{1}{\sumj \ajs} \sum_{j: \lambda_j > R(\vecf_t)} \ajs (\lambda_j - R(\vecf_t)) = \frac{\epsilon}{2} R(\vecf_t) - \frac{\epsilon}{2}(R(\vecf_t) - \epsilon) = \frac{\epsilon^2}{2}.
    \end{equation}
    Now, notice that for any $\lambda_j > R(\vecf_t)$ we have
    \begin{align*}
        \lambda_j^2 - R(\vecf_t)^2 & = (\lambda_j + R(\vecf_t)) (\lambda_j - R(\vecf_t)) \\
        & \geq 2 R(\vecf_t) \cdot (\lambda_j - R(\vecf_t)).
    \end{align*}
    As such, we have
    \begin{align*}
        \frac{\mathrm{d}}{\mathrm{d} t} R(\vecf_t) & = - \frac{2}{\sum_{j = 1}^n \alpha_j^2} \sumj \ajs (\lambda_j^2 - R(\vecf_t)^2) \\
        & = - \epsilon \left((R(\vecf_t) - \epsilon)^2 - R(\vecf_t)^2\right) - \frac{2}{\sumj \ajs} \sum_{j: \lambda_j > R(\vecf_t)} \alpha_j^2 (\lambda_j^2 - R(\vecf_t)^2) \\
        & \leq - \epsilon \left(\epsilon^2 - 2 \epsilon R(\vecf_t)\right) - \frac{2}{\sumj \ajs} \sum_{j: \lambda_j < R(\vecf_t)} 2 \ajs R(\vecf_t) (\lambda_j - R(\vecf_t)) \\
        & = 2 \epsilon^2 g(t) - \epsilon^3 - 2 \epsilon^2 R(\vecf_t) \\
        & = - \epsilon^3
    \end{align*}
    where the second last equality follows by \eqref{eq:ljminusgt2}.
    These two cases establish \eqref{eq:gt_decreases} and complete the proof of the theorem.
\end{proof}

\subsubsection{Relationship of the Eigenvalue to the Clique Graph}
 We now show that the eigenvalue corresponding to the eigenvector to which the algorithm converges is at most the minimum eigenvalue of $\signlapg$ where $\graphg$ is the clique reduction of $\graphh$.
 We start by showing the following technical lemma.
\begin{lemma} \label{lem:cliqueedge}
    For any hypergraph $\graphh = (\vertexset_\graphh, \edgeset_\graphh, \weight)$, vector $\vecf \in \R^n$, and edge $e \in \edgeset_\graphh$, it holds that
    \[
        \left(\max_{u \in e} \vecf(u) + \min_{v \in e} \vecf(v)\right)^2 \leq \sum_{u, v \in e} \frac{1}{\rank(e) - 1} (\vecf(u) + \vecf(v))^2.
    \]
    The equality holds if and only if there is exactly one vertex $v \in e$ with $\vecf(v) \neq 0$ or $\rank(e) = 2$.
 
\end{lemma}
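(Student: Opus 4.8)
The plan is to fix an edge $e \in \edgeset_\graphh$ with $r = \rank(e)$, let $M = \max_{u \in e} \vecf(u)$ be attained at some vertex $a \in e$ and $m = \min_{v \in e} \vecf(v)$ be attained at some vertex $b \in e$, and compare the single term $(M + m)^2$ on the left with the full clique sum $\frac{1}{r-1}\sum_{u,v \in e}(\vecf(u)+\vecf(v))^2$ on the right. The main idea is that the clique sum already contains, for each $u \in e$, the two terms $(\vecf(u) + M)^2$ and $(\vecf(u) + m)^2$ coming from pairing $u$ with $a$ and with $b$; I would show that just these terms, summed over $u \in e$ and divided by $r - 1$, already dominate $(M+m)^2$, and then argue everything else in the clique sum is non-negative. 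Concretely, I would first establish the elementary pointwise inequality: for every real $x$ with $m \le x \le M$,
\[
    (x + M)^2 + (x + m)^2 \geq (M + m)^2.
\]
This follows because $(x+M)^2 + (x+m)^2 - (M+m)^2 = 2x^2 + 2x(M+m) + M^2 + m^2 - M^2 - 2Mm - m^2 = 2x^2 + 2x(M+m) - 2Mm = 2(x - m)(x + M) + 2m(x+m) - \dots$; rather than push symbols, the clean way is to note it equals $2(x+M)(x+m) + (M-m)^2 \cdot 0 + \dots$ — I would simply verify $(x+M)^2 + (x+m)^2 - (M+m)^2 = 2(x+M)(x+m)$, and since $m \le x \le M$ both factors $x + M \ge m + M$ and $x + m$ have a sign making the product nonneg only when $M, m$ have suitable signs, so I need a slightly more careful split. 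The genuinely robust route: $(x+M)^2+(x+m)^2-(M+m)^2 = (x+M)^2 - M^2 + (x+m)^2 - m^2 + M^2 + m^2 - 2Mm - \dots$; I will instead write $x = \lambda M + (1-\lambda) m$ for $\lambda \in [0,1]$ and expand, where convexity of $t \mapsto (t+M)^2$ and $t\mapsto (t+m)^2$ gives the bound directly after checking the endpoint cases $x = M$ and $x = m$ (both give equality). Doing it via convexity avoids the sign casework.

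Next I would sum this pointwise inequality over all $u \in e$ (each $\vecf(u)$ lies in $[m, M]$ by definition of $m$ and $M$), obtaining
\[
    \sum_{u \in e}\Big[(\vecf(u) + M)^2 + (\vecf(u) + m)^2\Big] \geq r \, (M+m)^2 \geq (r-1)(M+m)^2.
\]
On the other hand, the left-hand side here is a sub-sum of $\sum_{u,v \in e}(\vecf(u)+\vecf(v))^2$: it consists exactly of the pairs $(u, a)$ and $(u, b)$ for $u$ ranging over $e$ (note $a = \argmax$, $b = \argmin$; if $a = b$ then $M = m$ and both sides of the claimed inequality are $(2M)^2$ and $\frac{1}{r-1}\sum 4 \vecf(u)^2$-type expressions — I would handle that degenerate case separately, and in fact it is covered by the equality analysis below). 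Since every remaining term $(\vecf(u)+\vecf(v))^2$ in the clique sum is nonnegative, dividing by $r-1$ yields
\[
    \left(M + m\right)^2 \;\leq\; \frac{1}{r-1}\sum_{u,v\in e}(\vecf(u)+\vecf(v))^2,
\]
which is the desired inequality.

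For the equality characterisation, equality forces two things: (i) every discarded clique term $(\vecf(u)+\vecf(v))^2$ with $u, v \notin \{a,b\}$ (and the ones not of the used form) must vanish, and (ii) the summed pointwise inequality must be tight, i.e. $(\vecf(u)+M)^2 + (\vecf(u)+m)^2 = (M+m)^2$ for every $u \in e$, and additionally we used $r(M+m)^2 \ge (r-1)(M+m)^2$ which is tight only if $(M+m)^2 = 0$ — wait, that over-restricts, so I must be more careful: the slack $r(M+m)^2$ vs $(r-1)(M+m)^2$ means my bound is not tight unless $M = -m$. Let me reorganize: actually the clean statement is that equality holds iff $r = 2$ (then the clique sum is literally $2(\vecf(a)+\vecf(b))^2/1$... no — for $r=2$, $\sum_{u,v\in e}(\vecf(u)+\vecf(v))^2$ over ordered or unordered pairs needs care, but with the $\frac{1}{r-1} = 1$ factor it matches $(M+m)^2$ exactly since the only pair is $\{a,b\}$), or there is exactly one vertex $v$ with $\vecf(v) \ne 0$ (then $M$ and $m$ are $\vecf(v)$ and $0$ in some order, the clique sum collapses to $\sum_u \vecf(v)^2$-type terms which I would compute equals $(r-1)\vecf(v)^2 = (r-1)(M+m)^2$ after noting $M + m = \vecf(v)$). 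The main obstacle I anticipate is precisely pinning down this equality case cleanly — the general inequality is a short convexity argument, but tracking exactly when each slack term is zero, especially reconciling the $r$ versus $r-1$ counting with the equality condition as stated, requires rewriting the proof so that the bound is manifestly tight in those two cases (probably by pairing $u$ only with $a$ and $b$ and being precise about multiplicities, rather than invoking $r(M+m)^2 \ge (r-1)(M+m)^2$). I would restructure the counting argument so the factor $r-1$ emerges exactly, then the equality analysis becomes a direct inspection of which pointwise inequalities and which nonneg terms are forced to be equalities/zero.
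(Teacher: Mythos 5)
Your core idea — controlling the clique sum through the pairs involving the maximising vertex $a$ and the minimising vertex $b$, via the pointwise inequality $(x+M)^2+(x+m)^2\ge (M+m)^2$ for $m\le x\le M$ — is exactly the key inequality in the paper's proof (which establishes it by a case distinction on the signs of $M$ and $m$, inside an induction that adds one vertex of $e$ at a time with the normalisation $1/(k-1)$). But as written your argument has a genuine gap in the bookkeeping: $\sum_{u\in e}\bigl[(\vecf(u)+M)^2+(\vecf(u)+m)^2\bigr]$ is \emph{not} a sub-sum of $\sum_{u,v\in e}(\vecf(u)+\vecf(v))^2$, because the latter ranges over unordered pairs of \emph{distinct} vertices (this reading is forced by your own $r=2$ equality check), whereas your sum contains the diagonal terms $(2M)^2$ and $(2m)^2$ (the ``pairs'' $(a,a)$ and $(b,b)$) and counts the pair $\{a,b\}$ twice. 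Concretely, for $r=3$ and $\vecf\equiv 1$ on $e$, your sum is $24$ while the clique sum is $12$, so the inequality ``clique sum $\ge$ your sum'' on which your chain relies is false, and the proof of the inequality itself (not merely the equality case) breaks at this step. The repair is the one you half-anticipate at the end: apply the pointwise inequality only to the $r-2$ vertices $u\notin\{a,b\}$ and add the single genuine pair $\{a,b\}$, which contributes $(M+m)^2$; this yields exactly $(r-1)(M+m)^2$, so the factor $r-1$ emerges without the lossy step $r(M+m)^2\ge(r-1)(M+m)^2$, and the equality characterisation can then be read off from which discarded terms and which pointwise inequalities must be tight — this is, in unrolled form, what the paper's induction does.

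A second problem is your justification of the pointwise inequality itself. The identity you guess is wrong: the difference equals $2x^2+2x(M+m)-2Mm$, not $2(x+M)(x+m)$ (the sign of the $Mm$ term differs). The convexity route also fails: $x\mapsto(x+M)^2+(x+m)^2$ is convex, so endpoint checks bound its \emph{maximum} on $[m,M]$, not its minimum, and in any case the endpoints do not give equality (at $x=M$ the left side is $(M+m)^2+4M^2$). The inequality is nevertheless true; you need either the sign case distinction used in the paper, or a direct check that the quadratic $x^2+x(M+m)-Mm$ is nonnegative on $[m,M]$ (its endpoint values are $2M^2$ and $2m^2$, and its interior vertex, when it lies in $[m,M]$, has nonnegative value). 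Finally, you explicitly leave the equality characterisation unresolved; with the corrected counting above it follows by the same tightness inspection the paper carries out.
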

\begin{proof}
    We consider some ordering of the vertices in $e$,
    \[
    u_1, u_2, \ldots, u_{\rank(e)},
    \]
    such that $u_1 = \argmax_{u \in e} \vecf(u)$ and $u_2 = \argmin_{u \in e} \vecf(u)$ and the remaining vertices are ordered arbitrarily. 
    Then, for any $2 \leq k \leq \rank(e)$, we define
    \[
        C_k = \sum_{u, v \in \{u_1, \ldots, u_k\}} \frac{1}{k - 1} (\vecf(u) + \vecf(v))^2.
    \]
    We show by induction on $k$ that 
    \begin{equation} \label{eq:induction}
    C_k \geq \left(\max_{u \in e} \vecf(u) + \min_{v \in e} \vecf(v)\right)^2
    \end{equation}
    for all $2 \leq k \leq \rank(e)$, with equality if and only if $k = 2$ or there is exactly one vertex $u_i \in e$ with $\vecf(v) \neq 0$.
    The lemma follows by setting $k = \rank(e)$.
 
    The base case of $k = 2$ follows trivially by the definitions and the choice of $u_1$ and $u_2$.

    For the inductive step, we assume that~\eqref{eq:induction} holds for some $k$, and show that it holds for $k + 1$. 
    We have that
    \[
        C_{k + 1} = \sum_{u, v \in \{u_1, \ldots, u_{k+1}\}} \frac{1}{k}\cdot (\vecf(u) + \vecf(v))^2,
    \]
    which is equivalent to
    \begin{align*}
        C_{k + 1} & = \frac{1}{k} \sum_{i = 1}^k (\vecf(u_i) + \vecf(u_{k+1}))^2 + \frac{1}{k} \sum_{u, v \in \{u_1, \ldots, u_k\}} (\vecf(u) + \vecf(v))^2 \\
        & = \frac{1}{k} \sum_{i = 1}^k (\vecf(u_i) + \vecf(u_{k + 1}))^2 + \frac{k - 1}{k} C_k \\
        & \geq \left(1 - \frac{1}{k}\right)\left(\max_{u \in e}\vecf(u) + \min_{v \in e} \vecf(v)\right)^2 + \frac{1}{k} \sum_{i = 1}^k (\vecf(u_1) + \vecf(u_{k + 1}))^2
    \end{align*}
    where the final inequality holds by the induction hypothesis.
    Therefore, it is sufficient to show that
    \[
        \sum_{i = 1}^k \left(\vecf(u_i) + \vecf(u_{k + 1})\right)^2 \geq \left(\max_{u \in e} \vecf(u) + \min_{v \in e} \vecf(v)\right)^2.
    \]
    We instead show the stronger fact that
    \begin{equation} \label{eq:stronginequality}
        \left(\max_{v \in e}\vecf(v) + \vecf(u_{k + 1})\right)^2 +
        \left(\min_{v \in e}\vecf(v) + \vecf(u_{k + 1})\right)^2 
        \geq
        \left(\max_{u \in e}\vecf(u) + \min_{v \in e}\vecf(v)\right)^2.
    \end{equation}
    The proof is by case distinction.
    The first case is when 
    \[
    \sign(\max_{u \in e}\vecf(v)) = \sign(\min_{u \in e}\vecf(u)).
    \] 
    Assume without loss of generality that the sign is positive.
    Then, since $\vecf(u_{k+1}) \geq \min_{v \in e} \vecf(v)$, we have
    \[
        \left(\max_{v \in e} \vecf(v) + \vecf(u_{k + 1})\right)^2 \geq \left(\max_{v \in e} \vecf(v) + \min_{u \in e} \vecf(v)\right)^2
    \]
    and~\eqref{eq:stronginequality} holds.
    Moreover, the inequality is strict if $\abs{\min_{u \in e}\vecf(u)} > 0$ or $\abs{\vecf(u_{k+1})} > 0$.
 
    The second case is when
    \[
    \sign(\min_{u \in e}\vecf(u)) \neq \sign(\max_{v \in e}\vecf(v)).
    \]
    Expanding~\eqref{eq:stronginequality}, we would like to show
    \begin{multline*}
        \left(\max_{u \in e} \vecf(u)\right)^2 + \left(\min_{v \in e}\vecf(v)\right)^2 + 2 \vecf(u_{k + 1})\left(\max_{u \in e}\vecf(u)\right) + 2 \vecf(u_{k + 1})\left(\min_{v \in e}\vecf(v)\right) \\
        + 2 \vecf(u_{k + 1})^2
        \geq \left(\max_{u \in e}\vecf(u)\right)^2 + \left(\min_{u \in e} \vecf(u)\right)^2 - 2\left(\max_{u \in e}\vecf(u)\right)\abs{\min_{u \in e}\vecf(u)},
    \end{multline*}
    which is equivalent to
    \[
        2 \vecf(u_{k + 1})^2 + 2 \vecf(u_{k + 1})\left(\max_{u \in e}\vecf(u)\right) + 2 \vecf(u_{k + 1})\left(\min_{v \in e}\vecf(v)\right) \geq -2 \left(\max_{u \in e}\vecf(u)\right) \abs{\min_{v \in e}\vecf(v)}.
    \]
    Notice that exactly one of the terms on the left hand side is negative.
    Recalling that $\min_{u \in e}\vecf(u) \leq \vecf(u_{k + 1}) \leq \max_{v \in e} \vecf(v)$, it is clear that
    \begin{itemize}
        \item if $\vecf(u_{k+1}) < 0$, then \[-2\left(\max_{u \in e}\vecf(u)\right)\abs{\min_{v \in e}\vecf(v)} \leq 2 \vecf(u_{k + 1})\left(\max_{v \in e}\vecf(v)\right) \leq 0\] and the inequality holds.
        \item if $\vecf(u_{k+1}) \geq 0$, then \[-2\left(\max_{u \in e}\vecf(u)\right)\abs{\min_{v \in e}\vecf(v)} \leq 2 \vecf(u_{k + 1})\left(\min_{u \in e}\vecf(u)\right) \leq 0,\] and the inequality holds.
    \end{itemize}
    Moreover, in both cases the inequality is strict if $-2 \left(\max_{v \in e}\vecf(v)\right) \abs{\min_{u \in e}\vecf(u)}< 0$ or $\abs{\vecf(u_{k_1})} > 0$.
\end{proof}

 Now, we show that one always finds an eigenvector whose corresponding eigenvalue is at most the minimum eigenvalue of the clique reduction.

\begin{lemma} \label{lem:bettereigenvalue}
        For any hypergraph $\graphh = (\vertexset_\graphh, \edgeset_\graphh, \weight)$ with clique reduction $\graphg$, if $\vecf$ is the eigenvector corresponding to $\lambda_1(\degm_\graphg^{-1} \signlapg)$, then
        \[
            \frac{\vecf^\transpose \signlaph \vecf}{\vecf^\transpose \degm_\graphh \vecf} \leq \lambda_1(\degm_\graphg^{-1} \signlapg)
        \]
        and the inequality is strict if $\min_{e \in \edgeset_\graphh} \rank(e) > 2$.
\end{lemma}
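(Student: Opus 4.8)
The plan is to compare the Rayleigh quotient of $\signlaph$ at $\vecf$ with that of $\signlapg$ at $\vecf$ hyperedge‑by‑hyperedge via Lemma~\ref{lem:cliqueedge}, and then to identify the latter quotient with $\lambda_1(\degm_\graphg^{-1}\signlapg)$.

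First I would record the two quadratic forms over the same denominator. By the definition of $\signlaph$ and the identity derived in Section~\ref{sec:algorithm} (equivalently, $R_{\signlaph}(\vecf) = D(\vecf)$), we have for every $\vecf\in\R^n$ that $\vecf^\transpose\signlaph\vecf = \sum_{e\in\edgeset_\graphh}\weight(e)\discrep_{\vecf}(e)^2$, so $R_{\signlaph}(\vecf) = \frac{1}{\norm{\vecf}_\weight^2}\sum_{e\in\edgeset_\graphh}\weight(e)\discrep_{\vecf}(e)^2$. In the clique reduction $\graphg$ each hyperedge $e$ is replaced by a clique on its $\rank(e)$ vertices in which every clique edge carries weight $\weight(e)/(\rank(e)-1)$; this normalisation is chosen precisely so that $\degm_\graphg = \degm_\graphh$, hence $\vecf^\transpose\degm_\graphg\vecf = \norm{\vecf}_\weight^2$ and $\vecf^\transpose\signlapg\vecf = \sum_{e\in\edgeset_\graphh}\frac{\weight(e)}{\rank(e)-1}\sum_{u,v\in e}(\vecf(u)+\vecf(v))^2$, so $R_{\signlapg}(\vecf)$ has the same denominator $\norm{\vecf}_\weight^2$ as $R_{\signlaph}(\vecf)$.

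Applying Lemma~\ref{lem:cliqueedge} to each hyperedge $e$ and summing gives $\vecf^\transpose\signlaph\vecf \le \vecf^\transpose\signlapg\vecf$, and dividing by $\norm{\vecf}_\weight^2$ yields $R_{\signlaph}(\vecf)\le R_{\signlapg}(\vecf)$. Since $\vecf$ is the eigenvector of $\degm_\graphg^{-1}\signlapg$ for its smallest eigenvalue, $\signlapg\vecf = \lambda_1(\degm_\graphg^{-1}\signlapg)\,\degm_\graphg\vecf$; substituting this into the numerator of $R_{\signlapg}(\vecf)$ gives $R_{\signlapg}(\vecf) = \lambda_1(\degm_\graphg^{-1}\signlapg)$ (alternatively this follows from Courant–Fischer, Theorem~\ref{thm:courantfischer}, after the substitution $\vecf = \degmhalfneg_\graphg\vecy$ which turns the generalised eigenvalue problem into the symmetric one for $\signlapn_\graphg$). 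Combining the two facts proves $R_{\signlaph}(\vecf)\le \lambda_1(\degm_\graphg^{-1}\signlapg)$, which is the first assertion.

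For the strict inequality I would argue by contradiction: assume $\min_{e\in\edgeset_\graphh}\rank(e) > 2$ and $R_{\signlaph}(\vecf) = \lambda_1(\degm_\graphg^{-1}\signlapg)$. Then all of the per‑hyperedge inequalities coming from Lemma~\ref{lem:cliqueedge} are tight, so the equality case of that lemma holds for every hyperedge $e$ with $\weight(e) > 0$; since each such $e$ has $\rank(e) \ge 3$, this forces $e$ to contain at most one vertex with $\vecf(v)\ne 0$. Consequently no two vertices of $\supp(\vecf)$ lie in a common hyperedge, i.e.\ $\supp(\vecf)$ is an independent set of $\graphg$. Taking any $u\in\supp(\vecf)$ and reading off row $u$ of the eigenvector equation $(\degm_\graphg + \adj_\graphg)\vecf = \lambda_1(\degm_\graphg^{-1}\signlapg)\,\degm_\graphg\vecf$, the adjacency term vanishes because every $\graphg$‑neighbour of $u$ lies outside $\supp(\vecf)$, leaving $\deg_\graphg(u)\vecf(u) = \lambda_1(\degm_\graphg^{-1}\signlapg)\,\deg_\graphg(u)\vecf(u)$ and hence $\lambda_1(\degm_\graphg^{-1}\signlapg) = 1$. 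But $\lambda_1(\degm_\graphg^{-1}\signlapg) = \lambda_1(\signlapn_\graphg) = 1 + \lambda_{\min}(\degmhalfneg_\graphg\adj_\graphg\degmhalfneg_\graphg)$, and since $\graphh$ has a hyperedge of positive weight the graph $\graphg$ has an edge, so $\degmhalfneg_\graphg\adj_\graphg\degmhalfneg_\graphg$ is a nonzero matrix of trace $0$; its least eigenvalue is therefore strictly negative and $\lambda_1(\degm_\graphg^{-1}\signlapg) < 1$, a contradiction. (Throughout I assume, as is standard, that $\graphh$ has at least one positively weighted hyperedge and no isolated vertices; zero‑weight hyperedges may be discarded without changing anything.)

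The main obstacle will be the strictness part rather than the inequality itself: the inequality is a termwise invocation of Lemma~\ref{lem:cliqueedge} followed by plugging in the eigenvector equation, whereas for strictness one must track the equality case of Lemma~\ref{lem:cliqueedge} simultaneously over all hyperedges and then exploit it. The crucial observation is that tightness forces $\supp(\vecf)$ to be $\graphg$‑independent, which is incompatible with $\vecf$ being the bottom eigenvector, because $\lambda_1(\degm_\graphg^{-1}\signlapg)$ is strictly below $1$ whenever $\graphg$ contains an edge.
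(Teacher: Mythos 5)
Your proof is correct and follows essentially the same route as the paper: the inequality is a termwise application of Lemma~\ref{lem:cliqueedge} after observing that $\degm_\graphg = \degm_\graphh$ for the clique reduction, and the strictness argument likewise reduces the equality case (each hyperedge supporting at most one nonzero coordinate of $\vecf$) to the conclusion $\lambda_1(\degm_\graphg^{-1}\signlapg)=1$, which is then ruled out. The only cosmetic difference is the final contradiction: you note that the normalised adjacency matrix is nonzero with trace zero and hence has a strictly negative eigenvalue, while the paper uses $\lambda_n(\degm_\graphg^{-1}\signlapg)=2$ together with $\tr(\degm_\graphg^{-1}\signlapg)=n$; both yield $\lambda_1(\degm_\graphg^{-1}\signlapg)<1$.
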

\begin{proof}
    Since $\lambda_1(\degm_\graphg^{-1} \signlapg) = (\vecf^\transpose \signlapg \vecf) / (\vecf^\transpose \degm_\graphg \vecf)$ and $(\vecf^\transpose \degm_\graphg \vecf) = (\vecf^\transpose \degm_\graphh \vecf)$ by the construction of the clique graph, it suffices to show that
    \[
        \vecf^\transpose \signlaph \vecf \leq \vecf^\transpose \signlapg \vecf.
    \]
    This is equivalent to
    \begin{align*}
        \sum_{e \in \edgeset_\graphh} \weight(e) \left(\max_{v \in e} \vecf(v) + \min_{u \in e}\vecf(u)\right)^2 & \leq \sum_{(u, v) \in \edgeset_\graphg} \weight_\graphg(u, v) (\vecf(u) + \vecf(v))^2 \\
        & = \sum_{e \in \edgeset_\graphh} \weight(e) \sum_{u, v \in e} \frac{1}{\rank(e) - 1} (\vecf(u) + \vecf(v))^2
    \end{align*}
    which holds by Lemma~\ref{lem:cliqueedge}.
    
    Furthermore, if $\min_{e \in \edgeset_\graphh} \rank(e) > 2$, then by Lemma~\ref{lem:cliqueedge} the inequality is strict unless every edge $e \in \edgeset_\graphh$ contains at most one $v \in e$ with $\vecf(v) \neq 0$.
    Suppose the inequality is not strict, then it must be that
    \begin{align*}
        \lambda_1(\degm_\graphg^{-1} \signlapg) & = \frac{\sum_{(u, v) \in \edgeset_\graphg} \weight_\graphg(u, v)(\vecf(v) + \vecf(u))^2}{\sum_{v \in \vertexset_\graphg} \deg_\graphg(v) \vecf(v)^2} \\
        & =  \frac{\sum_{v \in \vertexset_\graphg} \deg_\graphg(v) \vecf(v)^2}{\sum_{v \in \vertexset_\graphg} \deg_\graphg(v) \vecf(v)^2} \\
        & = 1,
    \end{align*}
    since for every edge $(u, v) \in \edgeset_\graphg$, at most one of $\vecf(u)$ or $\vecf(v)$ is not equal to $0$.
    This cannot be the case, since it is well-known that the maximum eigenvalue $\lambda_n(\degm_\graphg^{-1} \signlapg) = 2$ and so $\sum_{i = 1}^n \lambda_i(\degm_\graphg^{-1} \signlapg) \geq (n - 1) + 2 = n + 1$ which contradicts the fact that the trace $\tr(\degm_\graphg^{-1} \signlapg)$ is equal to $n$. This proves the final statement of the lemma.
\end{proof}

\subsection{Cheeger-type Inequality for Hypergraph Bipartiteness}
 
We now prove some intermediate facts about the new hypergraph operator $\signlaph$.
These allow us to show that the operator $\signlaph$ has a well-defined minimum eigenvector, and the corresponding eigenvalue satisfies a Cheeger-type inequality for hypergraph bipartiteness.
Given a hypergraph $\graphh = (\vertexset_\graphh, \edgeset_\graphh, \weight)$, any edge $e \in \edgeset_\graphh$ and weighted measure vector $\vecf_t$, let $r_e^{\sets} \triangleq \max_{v \in \sfte} \{\vecr(v)\}$ and $r_e^{\set{I}} \triangleq \min_{v \in \ifte} \{\vecr(v)\}$; recall that $c_{\vecf_t}(e) = \weight(e) \abs{\discrep_{\vecf_t}(e)}$.
\begin{lemma} \label{lem:rnorm}
Given a hypergraph $\graphh = (\vertexset_\graphh, \edgeset_\graphh, \weight)$ and normalised measure vector $\vecf_t$, let
\[
    \vecr = \dfdt = - \degm_\graphh^{-1} \signlaph \vecf_t.
\]
Then, it holds that 
    \[
        \norm{\vecr}_\weight^2 = - \sum_{e \in \edgeset} \weight(e) \discrep_{\vecf_t}(e) \left(r_e^{\sets} + r_e^{\set{I}}\right).
    \]
\end{lemma}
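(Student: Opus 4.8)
The plan is to expand $\norm{\vecr}_\weight^2$ using the edge-decomposition $\vecr(v)=\sum_{e\in\edgeset_\graphh}\vecr_e(v)$ supplied by the diffusion process, and then to collapse each per-edge contribution with Rules~(1) and~(2).

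First I would use the definition of the weighted norm and swap the order of summation:
\[
\norm{\vecr}_\weight^2 = \vecr^\transpose\degm_\graphh\vecr = \sum_{v\in\vertexset_\graphh}\deg(v)\,\vecr(v)^2 = \sum_{v\in\vertexset_\graphh}\deg(v)\,\vecr(v)\sum_{e\in\edgeset_\graphh}\vecr_e(v) = \sum_{e\in\edgeset_\graphh}\ \sum_{v\in\vertexset_\graphh}\deg(v)\,\vecr(v)\,\vecr_e(v).
\]
By the construction of the diffusion, only vertices in $\sfte\cup\ifte$ participate, so $\vecr_e(v)=0$ unless $v\in\sfte\cup\ifte$; moreover, since the sets $\cals^{\pm}_{\setu},\cali^{\pm}_{\setu}$ (and hence Algorithm~\ref{algo:computechangerate}) only ever involve hyperedges of nonzero discrepancy, $\vecr_e(v)=0$ for every $v$ whenever $\discrep_{\vecf_t}(e)=0$. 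Such edges also contribute $0$ to the claimed right-hand side, so they can be discarded, and I would fix an edge $e$ with $\discrep_{\vecf_t}(e)\neq0$.

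For such an $e$ with $\sfte\cap\ifte=\emptyset$ I would split the inner sum into its $\sfte$-part and its $\ifte$-part. For $v\in\sfte$ with $\vecr_e(v)\neq0$, Rule~(2a) forces $\vecr(v)=r_e^{\sets}$: if $\discrep_{\vecf_t}(e)>0$ this is exactly $\vecr(v)=\max_{u\in\sfte}\vecr(u)$, while if $\discrep_{\vecf_t}(e)<0$ then $\vecr$ is constant on $\sfte$, so it still equals $r_e^{\sets}$. Since vertices with $\vecr_e(v)=0$ contribute nothing, this gives
\[
\sum_{v\in\sfte}\deg(v)\,\vecr(v)\,\vecr_e(v) = r_e^{\sets}\sum_{v\in\sfte}\deg(v)\,\vecr_e(v) = -\,r_e^{\sets}\,\weight(e)\,\discrep_{\vecf_t}(e),
\]
using Rule~(1) in the last equality. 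The symmetric argument with Rule~(2b) gives $\sum_{v\in\ifte}\deg(v)\,\vecr(v)\,\vecr_e(v) = -\,r_e^{\set{I}}\,\weight(e)\,\discrep_{\vecf_t}(e)$, so the total contribution of $e$ equals $-\weight(e)\,\discrep_{\vecf_t}(e)\bigl(r_e^{\sets}+r_e^{\set{I}}\bigr)$. Summing over all edges then yields the stated identity.

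The main obstacle is the degenerate case of a hyperedge $e$ all of whose endpoints carry the same (nonzero) value of $\vecf_t$, so that $\sfte=\ifte$ and the $\sfte$/$\ifte$ split double-counts vertices. I would handle this either by a direct computation — here Rule~(2) forces $\vecr$ to be constant on the participating vertices of $e$, which together with Rule~(1) pins the contribution of $e$ to the same expression $-\weight(e)\,\discrep_{\vecf_t}(e)\bigl(r_e^{\sets}+r_e^{\set{I}}\bigr)$ — or, more cleanly, by perturbing $\vecf_t$ infinitesimally to break all such ties and passing to the limit, using that both sides are continuous in $\vecf_t$. Verifying that the four sub-cases of Rules~(2a)/(2b) align with the sign of $\discrep_{\vecf_t}(e)$ so that a participating $v\in\sfte$ always has $\vecr(v)=r_e^{\sets}$ (and dually for $\ifte$) is the one point where genuine care is required.
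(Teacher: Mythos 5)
Your main argument is correct, but it takes a genuinely different route from the paper. You expand $\norm{\vecr}_\weight^2=\sum_{e}\sum_{v}\deg(v)\,\vecr(v)\,\vecr_e(v)$ and collapse each edge directly with Rules~(1) and~(2): every $v\in\sfte$ carrying nonzero $\vecr_e(v)$ has $\vecr(v)=r_e^{\sets}$ (dually for $\ifte$), so the inner sum factors and Rule~(1) turns it into $-\weight(e)\discrep_{\vecf_t}(e)\left(r_e^{\sets}+r_e^{\set{I}}\right)$. The paper never touches the per-edge decomposition at this point: it uses the structure of the output of \algcomputechangerate, namely that $\vecr$ equals $\delta(\setp)$ on each densest set $\setp$, writes $\sum_{v\in\setp}\deg(v)\vecr(v)^2=\vol(\setp)\,\delta(\setp)^2=C(\setp)\,\delta(\setp)$, and re-expands $C(\setp)$ over the four edge classes, identifying $\delta(\setp)$ with $r_e^{\sets}$ or $r_e^{\set{I}}$ as appropriate. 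Your version is more self-contained, needing only the existence of a decomposition satisfying the rules (Lemma~\ref{lem:rulesimplydiffusion}) rather than the algorithmic bookkeeping; the paper's version avoids reasoning about which vertices of $\sfte$ actually carry nonzero $\vecr_e(v)$. One step you should attribute correctly: the claim that $\vecr_e\equiv 0$ whenever $\discrep_{\vecf_t}(e)=0$ is a property of the constructed decomposition (such edges lie in none of the classes $\cals_{\setu}^{\pm},\cali_{\setu}^{\pm}$, so no flow is routed through them), not a consequence of Rules~(1) and~(2) alone.

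On the tie case $\sfte=\ifte$: your perturbation fallback is not available off the shelf, because continuity of the non-linear map $\vecf_t\mapsto\vecr$ is nowhere established in the paper and is not obvious. Your direct-computation fallback also needs one more convention: if Rule~(1) is read as a single constraint when $\sfte=\ifte$, the edge contributes only $-\weight(e)\discrep_{\vecf_t}(e)\,r_e^{\sets}$, i.e.\ half of the claimed term (here $r_e^{\sets}=r_e^{\set{I}}$); the identity comes out only if such an edge is counted once through its $\cals$-role and once through its $\cali$-role, which is exactly how the paper's bookkeeping treats it. Since the paper handles this corner only implicitly, this is a clarification to record rather than a defect of your main argument.
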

\begin{proof}
Let $\setp \subset \vertexset$ be one of the densest vertex sets defined with respect to the solution of the linear program~\eqref{eq:lp}. By the description of Algorithm~\ref{algo:computechangerate}, we have $\vecr(u)=\delta(\setp)$ for every $u\in \setp$, and therefore
\begin{align*}
    \lefteqn{\sum_{u \in \setp} \deg(u) \vecr(u)^2}\\
    & = \vol(\setp)\cdot \delta(\setp)^2 \\
    & = \left(c_{\vecf_t}\left(\spp{U}{P}\right) + c_{\vecf_t}\left(\ipp{U}{P}\right) - c_{\vecf_t}\left(\spm{U}{P}\right) - c_{\vecf_t}\left(\ipm{U}{P}\right)\right)\cdot \delta(\setp) \\
     & = \left( \sum_{e\in\spp{U}{P}} c_{\vecf_t}(e) + \sum_{e\in\ipp{U}{P}} c_{\vecf_t}(e) - \sum_{e\in\spm{U}{P}} c_{\vecf_t}(e) - \sum_{e\in\ipm{U}{P}} c_{\vecf_t}(e) \right)\cdot\delta(\setp)\\
     & =   \sum_{e\in\spp{U}{P}} c_{\vecf_t}(e) \cdot r_e^{\sets} + \sum_{e\in\ipp{U}{P}} c_{\vecf_t}(e)\cdot r_e^{\set{I}} - \sum_{e\in\spm{U}{P}} c_{\vecf_t}(e)\cdot r_e^{\sets} - \sum_{e\in\ipm{U}{P}} c_{\vecf_t}(e)\cdot r_e^{\set{I}}.
\end{align*}
Since each vertex is included in exactly one set $\setp$ and each edge appears either in one each of $\spp{U}{P}$ and $\ipp{U}{P}$ or in one each of $\spm{U}{P}$ and $\ipm{U}{P}$, it holds that 
    \begin{align*}
    \|\vecr\|^2_\weight & = 
        \sum_{v \in \vertexset} \deg(v) \vecr(v)^2\\
        & = \sum_{\setp} \sum_{v\in \setp} \deg(v) \vecr(v)^2\\
        & = \sum_{\setp} \left( \sum_{e\in\spp{U}{P}} c_{\vecf_t}(e) \cdot r_e^\sets + \sum_{e\in\ipp{U}{P}} c_{\vecf_t}(e)\cdot r_e^{\set{I}} - \sum_{e\in\spm{U}{P}} c_{\vecf_t}(e)\cdot r_e^\sets - \sum_{e\in\ipm{U}{P}} c_{\vecf_t}(e)\cdot r_e^{\set{I}}\right) \\
         & = - \sum_{e \in \edgeset} \weight(e) \discrep_{\vecf_t}(e) \left(r_e^\sets + r_e^{\set{I}}\right),
    \end{align*}
    which proves the lemma.
\end{proof}

Next, we define $\gamma_1 = \min_\vecf D(\vecf)$ and show that any vector $\vecf$ that satisfies $\gamma_1 = D(\vecf)$  is an eigenvector of $\signlaph$ with eigenvalue $\gamma_1$. 
We start by showing that the Raleigh quotient of the new operator is equivalent to the discrepancy ratio of the hypergraph.
\begin{lemma} \label{lem:rqdisc}
    For any hypergraph $\graphh$ and vector $\vecf_t \in \R^n$, it holds that $D(\vecf_t) = R_{\signlaph}(\vecf_t)$.
\end{lemma}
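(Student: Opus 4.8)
The plan is to notice that the two quantities already have a common denominator: by the definition of the weighted norm, the denominator of $R_{\signlaph}(\vecf_t)$ is $\norm{\vecf_t}_\weight^2 = \vecf_t^\transpose \degm_\graphh \vecf_t = \sum_{v \in \vertexset_\graphh}\deg_\graphh(v)\,\vecf_t(v)^2$, which is exactly the denominator of $D(\vecf_t)$. Hence the whole statement reduces to the single numerator identity
\[
    \vecf_t^\transpose \signlaph \vecf_t \;=\; \sum_{e \in \edgeset_\graphh} \weight(e)\,\discrep_{\vecf_t}(e)^2 .
\]

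To prove this I would unfold the definition of the operator at the point $\vecf_t$. By the diffusion continuity condition there is a graph $\graphg$ (on the same vertex set, with each hyperedge weight $\weight(e)$ split among the edges of $\sfte \times \ifte$) such that applying $\signlaph$ to $\vecf_t$ agrees with applying $\signlapg$; in particular $\vecf_t^\transpose \signlaph \vecf_t = \vecf_t^\transpose \signlapg \vecf_t$. The graph Rayleigh-quotient computation carried out in \eqref{eq:2graphcalculation} (with $\vecx = \degmhalf_\graphh \vecf_t$) then gives $\vecf_t^\transpose \signlapg \vecf_t = \sum_{\{u,v\} \in \edgeset_\graphg} \weight_\graphg(u,v)\,(\vecf_t(u)+\vecf_t(v))^2$. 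I would now regroup this sum according to the originating hyperedge: every edge of $\graphg$ coming from a hyperedge $e$ has one endpoint in $\sfte$ and one in $\ifte$, and since $\vecf_t$ is constant and equal to $\max_{w \in e}\vecf_t(w)$ on $\sfte$ and constant and equal to $\min_{w \in e}\vecf_t(w)$ on $\ifte$, each such edge contributes $\weight_\graphg(u,v)\,\discrep_{\vecf_t}(e)^2$. Because the weights of the $\graphg$-edges coming from $e$ sum to $\weight(e)$, the total contribution of $e$ is $\weight(e)\,\discrep_{\vecf_t}(e)^2$; summing over $e$ proves the identity, and dividing by $\norm{\vecf_t}_\weight^2$ yields $R_{\signlaph}(\vecf_t) = D(\vecf_t)$. (This is essentially the chain of equalities already displayed at the end of Section~\ref{sec:algorithm}, so in the write-up it may suffice simply to invoke it.) An alternative derivation that never names $\graphg$ uses $\signlaph \vecf_t = -\degm_\graphh \vecr$ with $\vecr = \mathrm{d}\vecf_t/\mathrm{d}t$, so that $\vecf_t^\transpose \signlaph \vecf_t = -\inner{\vecf_t}{\vecr}_\weight = -\sum_{e}\sum_{v}\deg(v)\,\vecf_t(v)\,\vecr_e(v)$; on the support of $\vecr_e$, which is contained in $\sfte \cup \ifte$ by Lemma~\ref{lem:rulesimplydiffusion}, one replaces $\vecf_t(v)$ by its constant value and applies Rule~(1), $\sum_{v \in \sfte}\deg(v)\vecr_e(v) = \sum_{v\in\ifte}\deg(v)\vecr_e(v) = -\weight(e)\discrep_{\vecf_t}(e)$, to recover the same per-hyperedge contribution $\weight(e)\discrep_{\vecf_t}(e)^2$.

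Since the computation is pure bookkeeping, I expect essentially no obstacle beyond two small points of care: first, making precise the identity $\vecf_t^\transpose\signlaph\vecf_t = \vecf_t^\transpose\signlapg\vecf_t$ directly from the definition of the operator; and second, the degenerate case in which $\sfte$ and $\ifte$ overlap (all vertices of $e$ share the same $\vecf_t$-value), where one must check that allocating $\weight(e)$ to a self-loop still contributes $\weight(e)\discrep_{\vecf_t}(e)^2 = 4\weight(e)\vecf_t(v)^2$ to the quadratic form, or, in the $\vecr$-based version, that the two roles of $e$ (as an element of the ``$\sets$-side'' and of the ``$\set{I}$-side'') are counted exactly as in the proof of Lemma~\ref{lem:rnorm}.
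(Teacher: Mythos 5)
Your proposal is correct and follows essentially the same route as the paper's proof: reduce to the numerator identity via the common denominator $\vecf_t^\transpose \degm_\graphh \vecf_t$, write $\signlaph \vecf_t$ as $(\degm_{\graphg_t}+\adj_{\graphg_t})\vecf_t$ for the graph $\graphg_t$ obtained by splitting each hyperedge weight over $\sfte \times \ifte$, expand the graph quadratic form, and regroup by originating hyperedge using that $\vecf_t$ is constant on $\sfte$ and on $\ifte$ and that the split weights sum to $\weight(e)$. Your alternative derivation via $\vecr$ and Rule~(1), and your remark about the degenerate case $\sfte = \ifte$, are sound but not needed beyond what the paper's argument already covers.
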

\begin{proof}
    Since $\vecf_t^\transpose \degm_\graphh \vecf_t = \sum_{v \in \vertexset} \deg(v) \vecf_t(v)^2$, it is sufficient to show that
    \[
        \vecf_t^\transpose \signlaph \vecf_t = \sum_{e \in \edgeset_\graphh} \weight(e) \left(\max_{u \in e} \vecf_t(u) + \min_{v \in e} \vecf_t(v)\right)^2.
    \]
    Recall that for some graph $\graphg_t$, $\signlaph = \degm_{\graphg_t} + \adj_{\graphg_t}$.
    Then, we have that
    \begin{align*}
        \vecf_t^\transpose \signlaph \vecf_t & = \vecf_t^\transpose (\degm_{\graphg_t} + \adj_{\graphg_t}) \vecf_t \\
        & = \sum_{(u, v) \in \edgeset_\graphg} \weight_\graphg(u, v) (\vecf_t(u) + \vecf_t(v))^2 \\
        & = \sum_{e \in \edgeset_\graphh} \left(\sum_{(u, v) \in \sfte \times \ifte} \weight_{\graphg_t}(u, v) (\vecf_t(u) + \vecf_t(v))^2\right) \\
        & = \sum_{e \in \edgeset_\graphh} \weight(e) \left(\max_{u \in e} \vecf_t(u) + \min_{v \in e} \vecf_t(v)\right)^2, 
    \end{align*}
    which follows since the graph $\graphg_{t}$ is constructed by splitting the weight of each hyperedge $e \in \edgeset_\graphh$ between the edges $\sfte \times \ifte$.
\end{proof}

\begin{lemma} \label{lem:derivatives}
    For a hypergraph $\graphh$, operator $\signlaph$, and vector $\vecf_t$, the following statements hold:
    \begin{enumerate}
        \item $\frac{\mathrm{d}}{\mathrm{d} t} \norm{\vecf_t}_\weight^2 = -2 \vecf_t^\transpose \signlaph \vecf_t$;
        \item $\frac{\mathrm{d}}{\mathrm{d} t} (\vecf_t^\transpose \signlaph \vecf_t) = -2 \norm{\degm_\graphh^{-1} \signlaph \vecf_t}_\weight^2$;
        \item $\frac{\mathrm{d}}{\mathrm{d} t} R(\vecf_t) \leq 0$ with equality if and only if $\degm_\graphh^{-1} \signlaph \vecf_t \in \mathrm{span}(\vecf_t)$.
    \end{enumerate}
\end{lemma}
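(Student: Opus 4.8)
The plan is to obtain all three statements by differentiating along the trajectory $\vecf_t$, using the defining ODE $\dfdt = -\degm_\graphh^{-1}\signlaph\vecf_t$ together with the fact — guaranteed by the diffusion continuity condition — that at any time $t$ the graph $\graphg_t$ representing $\signlaph$ may be taken fixed for an infinitesimal step, so that locally $\signlaph\vecf_s = \signlap_{\graphg_t}\vecf_s$ with $\signlap_{\graphg_t} = \degm_{\graphg_t} + \adj_{\graphg_t}$ a fixed \emph{symmetric} matrix (this is exactly the manipulation already used in the proof of Theorem~\ref{thm:convergence} for the normalised operator $\signlapn_t$). Statements~1 and 2 are then short product-rule computations, and Statement~3 falls out of them via the quotient rule together with Cauchy--Schwarz in the weighted inner product $\inner{\cdot}{\cdot}_\weight$.

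For Statement~1 I would use that $\degm_\graphh$ is a fixed diagonal matrix to get $\frac{\mathrm{d}}{\mathrm{d}t}\norm{\vecf_t}_\weight^2 = 2\vecf_t^\transpose\degm_\graphh\dfdt = -2\vecf_t^\transpose\signlaph\vecf_t$. For Statement~2, writing $\mat{M}=\signlap_{\graphg_t}$ for the locally constant matrix, $\frac{\mathrm{d}}{\mathrm{d}t}(\vecf_t^\transpose\signlaph\vecf_t) = 2\vecf_t^\transpose\mat{M}\dfdt = -2\vecf_t^\transpose\mat{M}\degm_\graphh^{-1}\mat{M}\vecf_t$ by symmetry of $\mat{M}$; and expanding $\norm{\degm_\graphh^{-1}\signlaph\vecf_t}_\weight^2 = (\degm_\graphh^{-1}\mat{M}\vecf_t)^\transpose\degm_\graphh(\degm_\graphh^{-1}\mat{M}\vecf_t) = \vecf_t^\transpose\mat{M}\degm_\graphh^{-1}\mat{M}\vecf_t$ shows the two agree. (Alternatively one can avoid re-deriving this by appealing directly to Lemma~\ref{lem:rnorm}, which already identifies $\norm{\vecr}_\weight^2$ with the relevant quantity.)

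For Statement~3 I would write $R(\vecf_t) = (\vecf_t^\transpose\signlaph\vecf_t)/\norm{\vecf_t}_\weight^2$ and plug Statements~1 and 2 into the quotient rule, obtaining
\[
\frac{\mathrm{d}}{\mathrm{d}t}R(\vecf_t) = \frac{2}{\norm{\vecf_t}_\weight^4}\Bigl((\vecf_t^\transpose\signlaph\vecf_t)^2 - \norm{\vecf_t}_\weight^2\,\norm{\degm_\graphh^{-1}\signlaph\vecf_t}_\weight^2\Bigr).
\]
Since $\vecf_t^\transpose\signlaph\vecf_t = \vecf_t^\transpose\degm_\graphh\bigl(\degm_\graphh^{-1}\signlaph\vecf_t\bigr) = \inner{\vecf_t}{\degm_\graphh^{-1}\signlaph\vecf_t}_\weight$, Cauchy--Schwarz makes the bracket $\le 0$, giving $\frac{\mathrm{d}}{\mathrm{d}t}R(\vecf_t)\le 0$; and equality holds iff the Cauchy--Schwarz inequality is tight, i.e.\ iff $\vecf_t$ and $\degm_\graphh^{-1}\signlaph\vecf_t$ are parallel, which — as $\vecf_t\neq\zerovec$ — is precisely $\degm_\graphh^{-1}\signlaph\vecf_t\in\mathrm{span}(\vecf_t)$.

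The one delicate point, and the only place where the non-linearity of $\signlaph$ really bites, is the claim invoked for Statement~2 that differentiating $\vecf_t^\transpose\signlaph\vecf_t$ picks up only the motion of $\vecf_t$ and nothing from the graph $\graphg_t$ changing underneath it. Here I would lean on the diffusion continuity condition and Lemma~\ref{lem:rulesimplydiffusion}: at any $t$ one can choose $\graphg_t$ whose vertex splits $\sfte,\ifte$ — and hence whose edge set — are unchanged for an infinitesimal step, with weights given by the linear program, so that $t\mapsto\signlaph\vecf_t$ has one-sided derivative $\signlap_{\graphg_t}\dfdt$ except at the finitely many times where the combinatorial structure switches. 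The careful statement is thus that the three identities hold at all but finitely many $t$ and extend by continuity of $R(\vecf_t)$; I expect writing this out cleanly (rather than the algebra) to be where most of the work lies.
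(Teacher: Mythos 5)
Your proposal is correct and takes essentially the same route as the paper: differentiate along the ODE (product rule for Statement~1, quotient rule plus Cauchy--Schwarz in $\inner{\cdot}{\cdot}_\weight$ for Statement~3, with the same equality condition). The only difference is cosmetic: for Statement~2 the paper differentiates the discrepancy form $\sum_{e}\weight(e)\discrep_{\vecf_t}(e)^2$ and identifies the result with $-2\norm{\vecr}_\weight^2$ via Lemma~\ref{lem:rnorm}, whereas you differentiate the quadratic form of the locally fixed symmetric matrix $\signlap_{\graphg_t}$ and verify the norm identity by direct matrix expansion -- both rest on exactly the continuity point you flag (that the rules let $\graphg_t$ be held fixed for an infinitesimal step), which the paper also treats at this informal level.
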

\begin{proof}
 
By definition, we have that 
\begin{align*}
    \frac{\mathrm{d} \norm{\vecf_t}_\weight^2}{\mathrm{d}t} & = \frac{\mathrm{d}}{\mathrm{d}t} \sum_{v \in \vertexset} \deg(v) \vecf_t(v)^2 \\
    & = \sum_{v \in \vertexset} \deg(v)\cdot \frac{\mathrm{d} \vecf_t(v)^2}{\mathrm{d} t} \\
    & = \sum_{v \in \vertexset} \deg(v)\cdot \frac{\mathrm{d} \vecf_t(v)^2}{\mathrm{d} \vecf_t(v)} \frac{\mathrm{d} \vecf_t(v)}{\mathrm{d}t} \\
    & = 2 \sum_{v \in \vertexset} \deg(v) \vecf_t(v)\cdot \frac{\mathrm{d} \vecf_t(v)}{\mathrm{d}t} \\
    & = 2 \inner{\vecf_t}{\dfdt}_\weight = -2 \inner{\vecf_t}{\degm_\graphh^{-1} \signlaph \vecf_t}_\weight,
\end{align*}
which proves the first statement.

For the second statement, by Lemma~\ref{lem:rqdisc} we have
\[
\vecf_t^\transpose \signlaph \vecf_t = \sum_{e \in \edgeset} \weight(e) \left(\max_{u \in e} \vecf_t(u) + \min_{v \in e} \vecf_t(v)\right)^2,% = \sum_{e \in E} w(e) \Delta_{f_t}(e)^2,
\]
and therefore 
\begin{align}
    \frac{\mathrm{d}}{\mathrm{d}t} \vecf_t^\transpose \signlaph \vecf_t & = \frac{\mathrm{d}}{\mathrm{d}t} \sum_{e \in \edgeset} \weight(e) \left(\max_{u \in e} \vecf_t(u) + \min_{v \in e} \vecf_t(v) \right)^2 \nonumber\\
    & = 2 \sum_{e \in \edgeset} \weight(e) \discrep_{\vecf_t}(e)\cdot \frac{\mathrm{d}}{\mathrm{d}t} \left(\max_{u \in e} \vecf_t(u) + \min_{v \in e} \vecf_t(v)\right)\nonumber\\
    &= 2 \sum_{e \in \edgeset} \weight(e) \discrep_{\vecf_t}(e)\cdot  \left(r_e^{\sets} + r_e^{\set{I}} \right), \label{eq:calculation1}
\end{align}
where the last equality holds by the way that all the vertices receive their $\vecr$-values by the algorithm and the definitions of $r_e^{\sets}$ and $r_e^{\set{I}}$.  On the other side, 
by definition $\vecr=- \degm_\graphh^{-1} \signlaph \vecf_t$ and so by Lemma~\ref{lem:rnorm},
\begin{equation}\label{eq:calculation2}
\norm{\degm_\graphh^{-1} \signlaph \vecf_t}_\weight^2 = 
\norm{\vecr}_\weight^2 = - \sum_{e\in \edgeset} \weight(e)\discrep_{\vecf_t}(e) \left( r_e^{\sets}+r_e^{\set{I}}
\right).
\end{equation}
By combining \eqref{eq:calculation1} with \eqref{eq:calculation2}, we have the second statement.  

 For the third statement, notice that we can write $\vecf_t^\transpose \signlaph \vecf_t$ as $\inner{\vecf_t}{\degm_\graphh^{-1} \signlaph \vecf_t}_\weight$. Then, we have that
\begin{align*}
\lefteqn{\frac{\mathrm{d}}{\mathrm{d}t} \frac{ \langle \vecf_t, \degm_\graphh^{-1} \signlaph \vecf_t\rangle_\weight}{ \norm{\vecf_t}_\weight^2}}\\ & = \frac{1}{ \norm{\vecf_t}_\weight^2} \cdot \frac{\mathrm{d}~\langle \vecf_t, \degm_\graphh^{-1} \signlaph \vecf_t\rangle_\weight}{\mathrm{d}t}    -  \langle \vecf_t, \degm_\graphh^{-1} \signlaph \vecf_t\rangle_\weight \cdot   \frac{1}{ \norm{\vecf_t}_\weight^4} \frac{\mathrm{d}~\norm{\vecf_t}^2_\weight}{\mathrm{d}t} \\
& = -\frac{1}{\norm{\vecf_t}_\weight^4}\cdot \left(2\cdot \norm{\vecf_t}_\weight^2\cdot  \norm{\degm_\graphh^{-1} \signlaph \vecf_t}_\weight^2 + \left\langle \vecf_t, \degm_\graphh^{-1} \signlaph \vecf_t\right\rangle_\weight  \cdot \frac{\mathrm{d}~\norm{\vecf_t}^2_\weight}{\mathrm{d}t} \right) \\
& = - \frac{2}{\norm{\vecf_t}_\weight^4}\cdot \left( \norm{\vecf_t}_\weight^2\cdot  \norm{\degm_\graphh^{-1} \signlaph \vecf_t}_\weight^2 -\left\langle \vecf_t, \degm_\graphh^{-1} \signlaph \vecf_t\right\rangle_\weight^2  \right)\\
& \leq 0,
\end{align*}
where the last inequality holds by the Cauchy-Schwarz inequality on the inner product $\inner{\cdot}{\cdot}_\weight$ with the equality if and only if $\degm_\graphh^{-1} \signlaph \vecf_t\in\mathrm{span}(\vecf_t)$.
\end{proof}

This allows us to establish the following key lemma.
 
\begin{lemma} \label{lem:eigenvalueexists}
    For any hypergraph $\graphh$, $\gamma_1 = \min_{\vecf} D(\vecf)$ is an eigenvalue of $\signlaph$ and any minimiser $\vecf$ is its corresponding eigenvector.
\end{lemma}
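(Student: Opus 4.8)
The plan is to combine the identification of the discrepancy ratio with the Rayleigh quotient (Lemma~\ref{lem:rqdisc}) and the monotonicity of the Rayleigh quotient along the diffusion (Lemma~\ref{lem:derivatives}). By Lemma~\ref{lem:rqdisc} we have $D(\vecf) = R_{\signlaph}(\vecf)$ for every $\vecf \neq \zerovec$, so $\gamma_1 = \min_{\vecf \neq \zerovec} R_{\signlaph}(\vecf)$. First I would check that this minimum is actually attained: $R_{\signlaph}$ is invariant under the scaling $\vecf \mapsto c\,\vecf$, so it suffices to minimise over the weighted unit sphere $\{\vecf : \norm{\vecf}_\weight = 1\}$, which is compact; and $R_{\signlaph}$ is continuous there since the numerator $\sum_{e \in \edgeset_\graphh} \weight(e)\discrep_{\vecf}(e)^2$ is a finite sum of squares of continuous $\max/\min$ combinations of the entries of $\vecf$. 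Fix any minimiser $\vecf$, and without loss of generality assume $\norm{\vecf}_\weight = 1$.

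Next I would run the diffusion process of Algorithm~\ref{algo:main} with starting vector $\vecf_0 = \vecf$. By Lemma~\ref{lem:rulesimplydiffusion} the rate of change $\vecr = \mathrm{d}\vecf_t/\mathrm{d}t = -\degm_\graphh^{-1}\signlaph\vecf_t$ is uniquely defined, so $R(\vecf_t)$ has a well-defined (one-sided) derivative at $t = 0$ and Lemma~\ref{lem:derivatives}(3) applies: $\frac{\mathrm{d}}{\mathrm{d}t}R(\vecf_t)\big|_{t=0} \leq 0$, with equality if and only if $\degm_\graphh^{-1}\signlaph\vecf_0 \in \mathrm{span}(\vecf_0)$. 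On the other hand, since $\vecf_0$ minimises $R_{\signlaph}$ over all nonzero vectors, $R(\vecf_t) \geq R(\vecf_0) = \gamma_1$ for every $t \geq 0$, which forces the right-derivative at $t = 0$ to be non-negative. Hence it equals $0$, and the equality case gives $\degm_\graphh^{-1}\signlaph\vecf_0 = c\,\vecf_0$ for some scalar $c \in \R$.

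It then remains to identify $c$. Taking the weighted inner product of both sides of $\degm_\graphh^{-1}\signlaph\vecf_0 = c\,\vecf_0$ with $\vecf_0$ gives $\inner{\vecf_0}{\degm_\graphh^{-1}\signlaph\vecf_0}_\weight = c\,\norm{\vecf_0}_\weight^2$; the left-hand side equals $\vecf_0^\transpose\signlaph\vecf_0$, so $c = \vecf_0^\transpose\signlaph\vecf_0 \big/ \norm{\vecf_0}_\weight^2 = R_{\signlaph}(\vecf_0) = D(\vecf_0) = \gamma_1$. Therefore $\signlaph\vecf_0 = \gamma_1\,\degm_\graphh\vecf_0$, i.e.\ $(\gamma_1, \vecf_0)$ is an eigen-pair of $\signlaph$ in the sense used throughout this chapter (the sense in which the diffusion converges to an eigenvector). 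Since $\vecf$ was an arbitrary minimiser, both assertions of the lemma follow.

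The main obstacle I anticipate is the boundary issue at $t = 0$: Lemma~\ref{lem:derivatives} is phrased for the diffusion in motion, and the graph $\graphg_t$ driving it can switch at later times, whereas here I only have the right-derivative of $R(\vecf_t)$ at the initial time. I would need to confirm that the derivative identities of Lemma~\ref{lem:derivatives}, and in particular the equality condition of part~(3), remain valid as one-sided derivatives at $t = 0$ — which they do, because at time $0$ the diffusion is governed by a single fixed graph $\graphg_0$ over an infinitesimal interval, exactly the setting in which those identities are derived. A secondary point to watch is that $\mathrm{span}(\vecf_0)$ contains $\zerovec$, so a priori $c$ could be $0$; the inner-product computation above handles this case uniformly, simply yielding $\gamma_1 = 0$, consistent with $\signlaph\vecf_0 = \zerovec$.
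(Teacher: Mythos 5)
Your proposal is correct and follows essentially the same route as the paper's proof: identify $D(\vecf)$ with $R_{\signlaph}(\vecf)$ via Lemma~\ref{lem:rqdisc}, note that at a minimiser the derivative of $R(\vecf_t)$ along the diffusion must vanish, and invoke the equality case of Lemma~\ref{lem:derivatives}(3) to conclude $\degm_\graphh^{-1}\signlaph\vecf \in \mathrm{span}(\vecf)$. The extra details you supply (attainment of the minimum by compactness, the one-sided derivative at $t=0$, and the inner-product computation identifying the eigenvalue as $\gamma_1$) are sound refinements of the same argument rather than a different approach.
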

\begin{proof}
By Lemma~\ref{lem:rqdisc}, it holds that $R(\vecf)= D(\vecf)$ for any $\vecf\in\R^n$. When $\vecf$ is a minimiser of $D(\vecf)$, it holds that 
\[
\frac{\mathrm{d} R(\vecf)}{\mathrm{d}t}=0,
\]
which implies by Lemma~\ref{lem:derivatives} that $\degm_\graphh^{-1} \signlaph \vecf\in\mathrm{span}(\vecf)$ and proves that $\vecf$ is an eigenvector.
\end{proof}

We are now ready to prove a Cheeger-type inequality for our operator and the hypergraph bipartiteness.
We first show that the minimum eigenvalue of $\signlaph$ is at most twice the hypergraph bipartiteness.

\begin{lemma} \label{lem:trev_cheeg_easy}
    Given a hypergraph $\graphh = (\vertexset_\graphh, \edgeset_\graphh)$ with sets $\setl, \setr \subset \vertexset_\graphh$ such that $\bipart(\setl, \setr) = \bipart$, it holds that
    \[
        \gamma_1 \leq 2 \bipart
    \]
    where $\gamma_1$ is the smallest eigenvalue of $\signlaph$.
\end{lemma}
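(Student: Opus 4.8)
The plan is to prove this easy (upper-bound) direction of the Cheeger-type inequality in the textbook way: exhibit an explicit test vector $\vecf$ whose discrepancy ratio $D(\vecf)$ is at most $2\bipart$, and then conclude by the definition $\gamma_1 = \min_{\vecf} D(\vecf)$. The natural choice is $\vecf = \indicatorvec_\setl - \indicatorvec_\setr$, meaning $\vecf(v) = 1$ for $v \in \setl$, $\vecf(v) = -1$ for $v \in \setr$, and $\vecf(v) = 0$ for $v \in \vertexset_\graphh \setminus (\lur)$. Since $\setl$ and $\setr$ are disjoint, the denominator of $D(\vecf)$ is immediately
\[
\sum_{v \in \vertexset_\graphh} \deg_\graphh(v)\, \vecf(v)^2 = \vol(\setl) + \vol(\setr) = \vol(\lur).
\]

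For the numerator $\sum_{e \in \edgeset_\graphh} \weight(e)\,\discrep_{\vecf}(e)^2$, I would partition the edges according to how $e$ meets the three sets $\setl$, $\setr$, and $\vertexset_\graphh \setminus (\lur)$, and evaluate $\discrep_{\vecf}(e) = \max_{u \in e}\vecf(u) + \min_{v \in e}\vecf(v)$ in each case. One checks: if $e \subseteq \setl$ then $\discrep_{\vecf}(e) = 2$; if $e \subseteq \setr$ then $\discrep_{\vecf}(e) = -2$; if $e$ meets both $\setl$ and $\setr$ then $\max_{u\in e}\vecf(u) = 1$ and $\min_{v\in e}\vecf(v) = -1$, so $\discrep_{\vecf}(e) = 0$; if $e$ meets $\setl$ and $\vertexset_\graphh \setminus(\lur)$ but not $\setr$ then $\discrep_{\vecf}(e) = 1$; symmetrically $\discrep_{\vecf}(e) = -1$ if $e$ meets $\setr$ and $\vertexset_\graphh\setminus(\lur)$ but not $\setl$; and $\discrep_{\vecf}(e) = 0$ if $e$ is disjoint from $\lur$. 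Matching these contributions to the terms of $\bipart_\graphh(\setl,\setr)$ — the first class contributes $4\weight(\setl \mid \setcomplement{\setl})$, the second $4\weight(\setr \mid \setcomplement{\setr})$, the fourth $\weight(\setl, \setcomplement{\setl} \mid \setr)$, the fifth $\weight(\setr, \setcomplement{\setr} \mid \setl)$, and every other class contributes $0$ — yields
\[
\sum_{e \in \edgeset_\graphh} \weight(e)\,\discrep_{\vecf}(e)^2 = 4\weight(\setl \mid \setcomplement{\setl}) + 4\weight(\setr \mid \setcomplement{\setr}) + \weight(\setl, \setcomplement{\setl} \mid \setr) + \weight(\setr, \setcomplement{\setr} \mid \setl).
\]
Comparing with the definition of $\bipart_\graphh$, the right-hand side is at most $2\bipart_\graphh(\setl,\setr)\cdot\vol(\lur)$, since $\bipart_\graphh(\setl,\setr)\cdot\vol(\lur)$ is $\tfrac12$ of the first two terms plus the full last two terms. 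Dividing by $\vol(\lur)$ gives $D(\vecf) \le 2\bipart$, and therefore $\gamma_1 = \min_{\vecf} D(\vecf) \le D(\vecf) \le 2\bipart$, which is the claimed bound.

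I do not expect a genuine obstacle here; the work is essentially the case analysis and the bookkeeping needed to identify each edge class with the correct term of $\bipart_\graphh(\setl,\setr)$. The one point that deserves care is verifying that the class of edges meeting $\setl$ and $\vertexset_\graphh \setminus (\lur)$ but not $\setr$ is precisely the set counted by $\weight(\setl, \setcomplement{\setl} \mid \setr)$: this uses the decomposition $\setcomplement{\setl} = \setr \cup (\vertexset_\graphh \setminus (\lur))$, so that an edge avoiding $\setr$ meets $\setcomplement{\setl}$ if and only if it meets $\vertexset_\graphh \setminus (\lur)$ (and symmetrically for $\setr$). Note that the non-linearity of $\signlaph$ plays no role in this direction, because for the specific sign-pattern vector $\vecf = \indicatorvec_\setl - \indicatorvec_\setr$ the discrepancy $\discrep_{\vecf}(e)$ is read off directly from which of $\{1,0,-1\}$ appear among the values of $\vecf$ on $e$.
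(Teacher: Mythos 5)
Your proposal is correct and follows essentially the same route as the paper: the paper also plugs the test vector $\vec{\chi}_{\setl,\setr}$ (with values $1$ on $\setl$, $-1$ on $\setr$, $0$ elsewhere) into the Rayleigh quotient, which equals the discrepancy ratio $D$, computes the numerator as $4\weight(\setl\mid\setcomplement{\setl}) + 4\weight(\setr\mid\setcomplement{\setr}) + \weight(\setl,\setcomplement{\setl}\mid\setr) + \weight(\setr,\setcomplement{\setr}\mid\setl)$ over $\vol(\lur)$, bounds it by $2\bipart(\setl,\setr)$, and concludes from $\gamma_1 = \min_{\vecf} R_{\signlaph}(\vecf)$. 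Your write-up merely makes explicit the edge-by-edge case analysis that the paper leaves implicit, and that analysis is accurate.
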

\begin{proof}
    Let $\vec{\chi}_{\setl, \setr}\in\{-1,0,1\}^n$ be the indicator vector of the cut $\setl, \setr$ such that
    \[
        \vec{\chi}_{\setl, \setr}(u) = \threepartdefow{1}{u \in \setl}{-1}{u \in \setr}{0}.
    \]
    Then, by Lemma~\ref{lem:rqdisc}, the Rayleigh quotient is given by
    \begin{align*}
        R_{\signlaph}\left(\vec{\chi}_{\setl,\setr}\right) & = \frac{\sum_{e \in \edgeset} \weight(e) \left(\max_{u \in e} \vec{\chi}_{\setl,\setr}(u) + \min_{v \in e} \vec{\chi}_{\setl,\setr}(v)\right)^2}{\sum_{v \in \vertexset} \deg(v) \vec{\chi}_{\setl,\setr}(v)^2} \\
        & = \frac{4 \weight(\setl | \setcomplement{\setl}) + 4 \weight(\setr | \setcomplement{\setr}) + \weight(\setl, \setcomplement{\lur} | \setr) + \weight(\setr, \setcomplement{\lur} | \setl)}{\vol(\lur)} \\
        & \leq 2 \bipart(\setl, \setr),
    \end{align*}
    which completes the proof since $\gamma_1 = \min_{\vecf} R_{\signlaph}(\vecf)$ by Lemma~\ref{lem:eigenvalueexists}.
\end{proof}

The proof of the other direction of the Cheeger-type inequality is more involved, and forms the basis of the second claim in Theorem~\ref{thm:mainalg}.

\begin{lemma} \label{lem:trev-cheeg}
For any hypergraph $\graphh =(\vertexset,\edgeset,\weight)$, let $\signlaph$ be the operator defined with respect to $\graphh$, and $\gamma_1 = \min_{\vecf} D(\vecf) = \min_{\vecf} R(\vecf)$ be the minimum eigenvalue of $\degm_\graphh^{-1} \signlaph$. Then, there are disjoint sets $\setl,\setr \subset \vertexset$ such that
\[
\bipart(\setl,\setr)\leq\sqrt{2\gamma_1}.
\]
\end{lemma}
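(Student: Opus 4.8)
The plan is to establish the "hard" direction of the Cheeger-type inequality by a sweep-set rounding argument applied to the eigenvector $\vecf$ achieving $\gamma_1 = D(\vecf)$, following the template of Trevisan's analysis of $\signlap_\graphg$ for graphs but adapting it to cope with the non-linear operator and hyperedges of arbitrary rank. First I would normalise $\vecf$ so that $\max_v |\vecf(v)| = 1$, and consider the two-sided sweep sets $\setl_j = \{v_i : |\vecf(v_i)| \geq |\vecf(v_j)| \wedge \vecf(v_i) < 0\}$ and $\setr_j = \{v_i : |\vecf(v_i)| \geq |\vecf(v_j)| \wedge \vecf(v_i) \geq 0\}$, exactly as in Algorithm~\ref{algo:main}. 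The goal is to show that some threshold $j$ yields $\bipart_\graphh(\setl_j, \setr_j) \leq \sqrt{2\gamma_1}$; since $\gamma_1 = R_{\signlap_\graphh}(\vecf) = D(\vecf)$ by Lemma~\ref{lem:rqdisc}, it suffices to relate the discrepancy ratio $D(\vecf)$ to the best sweep-set bipartiteness.

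The key technical step is a probabilistic/averaging argument: pick a threshold $t$ by choosing $t^2$ uniformly at random in $[0,1]$ (so the threshold set is $\{v : \vecf(v)^2 \geq t^2\}$ split by sign), and bound the expected numerator and denominator of $\bipart_\graphh(\setl_t, \setr_t)$. For the denominator, $\expected{\vol(\setl_t \cup \setr_t)} = \sum_v \deg(v)\vecf(v)^2 = \norm{\vecf}_\weight^2$. For the numerator, I need to bound, for each edge $e$, the probability that $e$ contributes a "bad" term — i.e. $e$ is cut within $\setl_t$ alone, within $\setr_t$ alone, or straddles to $\setcomplement{\lur}$. The crucial observation is that this probability is controlled by $|\vecf(u)^2 - \vecf(v)^2|$ for appropriate $u, v \in e$, and by grouping cleverly one shows it is at most roughly $\deltaeabs{\vecf} \cdot (\max_{u\in e}|\vecf(u)| + \text{something})$, which after a Cauchy–Schwarz step over all edges yields an expected numerator of order $\sqrt{\sum_e \weight(e)\discrep_{\vecf}(e)^2}\cdot\norm{\vecf}_\weight = \sqrt{D(\vecf)}\cdot\norm{\vecf}_\weight^2$. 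Dividing, the expected ratio is at most $\sqrt{2\gamma_1}$ (the factor $2$ coming from the algebraic identity $b^2 - a^2 = (b+a)(b-a)$ when $b = \max_{u\in e}|\vecf(u)|$ and $a$ the relevant minimum, exactly as in Lemma~\ref{lem:trev_cheeg_easy}'s computation run backwards), so some realisation of $t$ achieves a sweep set at least this good.

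The step I expect to be the main obstacle is correctly accounting for the different ways a hyperedge $e$ can fail to be "well-cut" across a threshold, because unlike in the graph case an edge has many vertices and the quantities $\max_{u\in e}\vecf(u)$, $\min_{v\in e}\vecf(v)$ can both be positive, both negative, or have opposite signs; each sub-case contributes differently to $\weight(\setl|\setcomplement\setl)$, $\weight(\setr|\setcomplement\setr)$, and the straddling terms $\weight(\setl,\setcomplement\setl|\setr)$, $\weight(\setr,\setcomplement\setr|\setl)$ appearing in the definition of $\bipart_\graphh$. I would handle this by a careful case analysis on the signs of $\max_{u\in e}\vecf(u)$ and $\min_{v\in e}\vecf(v)$, in each case identifying the pair of "extreme" vertices of $e$ whose squared-value gap $\deltaeabs{\vecf}$ dominates the relevant cut probability, and checking that the bound $\discrep_{\vecf}(e)^2 = \deltaesquare{\vecf}$ is exactly what appears in $D(\vecf)$. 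Once the per-edge bounds are in place, the remaining assembly — summing, applying Cauchy–Schwarz to pass from $\sum_e \weight(e)|\discrep_{\vecf}(e)|\cdot(\ldots)$ to $\sqrt{\sum_e \weight(e)\discrep_{\vecf}(e)^2}\cdot\sqrt{\sum \deg(v)\vecf(v)^2}$, and concluding by the probabilistic method — is routine, and yields $\bipart(\setl,\setr) \leq \sqrt{2\gamma_1}$ for the best sweep set, proving the lemma and simultaneously justifying the $\sqrt{2\lambda}$ guarantee in Theorem~\ref{thm:mainalg}.
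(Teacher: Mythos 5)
Your proposal follows essentially the same route as the paper's proof: round the minimising vector $\vecf$ by a random two-sided threshold on the squared values, show per edge that the expected contribution to the bipartiteness numerator is at most $\abs{a_e+b_e}(\abs{a_e}+\abs{b_e})$ via a sign case analysis on $a_e=\max_{u\in e}\vecf(u)$ and $b_e=\min_{v\in e}\vecf(v)$, then apply Cauchy--Schwarz and the averaging argument to extract a sweep set with $\bipart(\setl,\setr)\leq\sqrt{2\gamma_1}$. The only quibble is bookkeeping: in the paper the factor $2$ arises from $(\abs{a_e}+\abs{b_e})^2\leq 2(a_e^2+b_e^2)$ in the Cauchy--Schwarz step rather than from the difference-of-squares identity, but this does not affect the validity of your plan.
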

\begin{proof}
    \newcommand{\maxfu}{a_e}
    \newcommand{\minfu}{b_e}
    % Start the proof
    Let $\vecf\in\R^n$ be the vector such that $D(\vecf)=\gamma_1$.
    For any threshold $t \in [0, \max_u \vecf(u)^2]$, define $\vecx_t$ such that
    \[
        \vecx_t(u) = \threepartdefow{1}{\vecf(u) \geq \sqrt{t}}{-1}{\vecf(u) \leq -\sqrt{t}}{0}.
    \]
    We show that choosing $t \in \left[0, \max_u \vecf(u)^2\right]$ uniformly at random gives
    \begin{equation} \label{eq:expectation_goal}
        \E\left[\sum_{e \in \edgeset_\graphh} \weight(e) \deltaeabs{\vecx_t} \right] \leq \sqrt{2 \gamma_1}\cdot \E\left[\sum_{v \in \vertexset_\graphh} \deg(v) \abs{\vecx_t(v)} \right].
    \end{equation}
    Notice that every such $\vecx_t$ defines disjoints vertex sets $\setl$ and $\setr$ that satisfy
     \[
        \bipart(\setl, \setr) = \frac{\sum_{e \in \edgeset_\graphh}\weight(e) \deltaeabs{\vecx_t}}{\sum_{v \in \vertexset_\graphh} \deg(v) \abs{\vecx_t(v)}}.
    \]
    Hence, \eqref{eq:expectation_goal} would imply that there is some $\vecx_t$ such that the disjoint $\setl,\setr$ defined by $\vecx_t$ would satisfy
    \[
     \sum_{e \in \edgeset_\graphh}\weight(e) \deltaeabs{\vecx_t}   \leq \sqrt{2 \gamma_1}\cdot  \sum_{v \in \vertexset_\graphh} \deg(v) \abs{\vecx_t(v)},
    \]
    which implies that
    \[
    \bipart(\setl,\setr)\leq\sqrt{2\gamma_1}.
    \]
    Hence, it suffices to prove \eqref{eq:expectation_goal}.
        We assume without loss of generality that $\max_u\left\{\vecf(u)^2\right\}=1$, so $t$ is chosen uniformly from $[0,1]$.
    First of all, we have that
    \begin{align*}
        \E \left[\sum_{v \in \vertexset_\graphh} \deg(v) \abs{\vecx_t(v)} \right] = \sum_{v \in \vertexset_\graphh} \deg(v) \E\left[\abs{\vecx_t(v)}\right] = \sum_{v \in \vertexset_\graphh} \deg(v) \vecf(v)^2.
    \end{align*}
    To analyse the left-hand side of \eqref{eq:expectation_goal}, we focus on a particular edge $e \in \edgeset_\graphh$. Let $\maxfu = \max_{u \in e} \vecf(u)$ and $\minfu = \min_{v \in e} \vecf(v)$.
    We drop the subscript $e$ when it is clear from context.
    We show that
    \begin{equation} \label{eq:edge_expectation}
        \E\left[\deltaeabs{\vecx_t}\right] \leq \abs{\maxfu + \minfu}(\abs{\maxfu} + \abs{\minfu}).
    \end{equation}
    \begin{enumerate}
        \item Suppose $\sign(\maxfu) = \sign(\minfu)$. Our analysis is by case distinction: 
        \begin{itemize}
            \item $\deltaeabs{\vecx_t} = 2$ with probability $\min(\maxfu^2, \minfu^2)$;
            \item $\deltaeabs{\vecx_t} = 1$ with probability $\abs{\maxfu^2 - \minfu^2}$;
            \item $\deltaeabs{\vecx_t} = 0$ with probability $1 - \max(\maxfu^2, \minfu^2)$.
        \end{itemize}
        Assume without loss of generality that $\maxfu^2 = \min\left(\maxfu^2, \minfu^2\right)$. Then, it holds hat 
        \[\E \left[\deltaeabs{\vecx}\right] = 2 \maxfu^2 + \abs{\maxfu^2 - \minfu^2} = \maxfu^2 + \minfu^2 \leq \abs{\maxfu + \minfu}(\abs{\maxfu} + \abs{\minfu}).\]
        \item Suppose $\sign(\maxfu) \neq \sign(\minfu)$. Our analysis is by case distinction: 
        \begin{itemize}
            \item $\deltaeabs{\vecx_t} = 2$ with probability $0$;
            \item $\deltaeabs{\vecx_t} = 1$ with probability $\abs{\maxfu^2 - \minfu^2}$;
            \item $\deltaeabs{\vecx_t} = 0$ with probability $\min(\maxfu^2, \minfu^2)$.
        \end{itemize}
        Assume without loss of generality that $\maxfu^2 = \min\left(\maxfu^2, \minfu^2\right)$. Then, it holds that 
        \begin{align*}
        \E \left[\deltaeabs{\vecx_t}\right] & = \abs{\maxfu^2 - \minfu^2} \\
        & = (\abs{\maxfu} - \abs{\minfu})(\abs{\maxfu} + \abs{\minfu}) \\
        & = \abs{\maxfu + \minfu}(\abs{\maxfu} + \abs{\minfu}),
        \end{align*}
        where the final equality follows because $\maxfu$ and $\minfu$ have different signs.
    \end{enumerate}
    These two cases establish \eqref{eq:edge_expectation}.
    Now, we have that 
    \begin{align*}
        \lefteqn{\E  \left[\sum_{e \in \edgeset_\graphh} \weight(e) \deltaeabs{\vecx_t}\right]}\\
        & \leq \sum_{e \in \edgeset_\graphh} \weight(e) \abs{\maxfu + \minfu}(\abs{\maxfu} + \abs{\minfu}) \\
        & \leq \sqrt{\sum_{e \in \edgeset_\graphh}\weight(e)\abs{\maxfu + \minfu}^2} \sqrt{\sum_{e \in \edgeset_\graphh}\weight(e) (\abs{\maxfu} + \abs{\minfu})^2} \\
        & = \sqrt{\sum_{e \in \edgeset_\graphh} \weight(e) \deltaesquare{\vecf}} \sqrt{\sum_{e \in \edgeset_\graphh}\weight(e) (\abs{\maxfu} + \abs{\minfu})^2}.
    \end{align*}
    By our assumption that $\vecf$ is the eigenvector corresponding to the eigenvalue $\gamma_1$, it holds that
    \[
        \sum_{e \in \edgeset_\graphh}\weight(e) \deltaesquare{\vecf} \leq \gamma_1 \sum_{v \in \vertexset_\graphh}\deg(v) \vecf(v)^2.
    \]
    On the other side, we have that 
    \[
        \sum_{e \in \edgeset_\graphh}\weight(e)(\abs{\maxfu} + \abs{\minfu})^2 \leq 2 \sum_{e \in \edgeset_\graphh}\weight(e)(\abs{\maxfu}^2 + \abs{\minfu}^2) \leq 2 \sum_{v \in \vertexset_\graphh}\deg(v) \vecf(v)^2.
    \]
    This gives us that 
    \begin{align*}
        \E \left[\sum_{e \in \edgeset_\graphh}\weight(e) \deltaeabs{\vecx}\right] & \leq \sqrt{2 \gamma_1} \sum_{v \in \vertexset_\graphh}\deg(v) \vecf(v)^2\\
        & = \sqrt{2 \gamma_1}\cdot  \E \left[\sum_{v \in \vertexset_\graphh}\deg(v) \abs{\vecx(v)}\right],
    \end{align*}
    which proves the statement.
\end{proof}
 Now, we are able to combine these results to prove Theorem~\ref{thm:mainalg}.
\begin{proof}[Proof of Theorem~\ref{thm:mainalg}.]
    The first statement of the theorem follows by setting the starting vector $\vecf_0$ of the diffusion to be the minimum eigenvector of the clique graph $\graphg$. By Lemma~\ref{lem:bettereigenvalue}, we have that $R_{\signlaph}(\vecf_0) \leq \lambda_1(\degm_\graphg^{-1} \signlapg)$ and the inequality is strict if $\min_{e \in \edgeset_\graphh} \rank(e) > 2$.
    Then, Theorem~\ref{thm:convergence} shows that the diffusion process converges to an eigenvector and the Rayleigh quotient only decreases during convergence, and so the inequality holds.
    The algorithm runs in polynomial time, since by Theorem~\ref{thm:convergence} the diffusion process converges in polynomial time and each step of Algorithm~\ref{algo:main} can be computed in polynomial time using a standard algorithm for solving   linear programs.
 
    The second statement is a restatement of Lemma~\ref{lem:trev-cheeg}. The sweep set algorithm runs in polynomial time, since there are $n$ different sweep sets and computing the hypergraph bipartiteness for each one also takes   polynomial time.
\end{proof}

\section{Further Study of the Hypergraph Laplacian Spectrum}
 In this section, we further study the spectrum of the operators $\lap_\graphh$ and $\signlap_\graphh$.
We first demonstrate that there exist infinite families  of hypergraphs $\{H\}$ for which the number of eigenvectors of $\lap_\graphh$ and $\signlap_\graphh$ exceeds the number of vertices.
We then show that finding the eigenvector of $\signlap_\graphh$ corresponding to its smallest eigenvalue is $\NP$-hard.

\subsection{Discussion on the Number of Eigenvectors of \texorpdfstring{$\signlaph$}{Jh} and \texorpdfstring{$\lap_\graphh$}{Lh}} \label{sec:numeigs}
 We now  investigate the spectrum of the non-linear hypergraph Laplacian operator and our new operator $\signlaph$ by considering some example hypergraphs.
In particular, we show that the hypergraph Laplacian $\lap_\graphh$ can have more than $2$ eigenvalues, which answers the following open question posed by Chan~\etal~\cite{chanSpectralPropertiesHypergraph2018}.

\begin{displayquote}[Chan~\etal~\cite{chanSpectralPropertiesHypergraph2018}]
\ldots\ apart from $\vecf_1$, the Laplacian has another eigenvector $\vecf_2$ \ldots\ it is not clear if $\lap$ has any other eigenvalues. We leave as an open problem the task of investigating if other eigenvalues exist.
\end{displayquote}

Furthermore, we show that the number of eigenvectors of the new operator $\signlaph$ can be exponential in the number of vertices.

\subsubsection{\texorpdfstring{$\lap_\graphh$}{Lh} Can Have More Than $2$ Eigenvalues}
Similar to our new operator $\signlaph$, for some vector $\vecf$, the operator $\lap_\graphh$ behaves like the graph Laplacian $\lap_\graphg = \degm_\graphg - \adj_\graphg$ for some graph $\graphg$ constructed by splitting the weight of each hyperedge $e$ between the edges in $\sfe \times \ife$.
We refer the reader to~\cite{chanSpectralPropertiesHypergraph2018} for the full details of this construction.

\begin{figure}[t]
    \centering
    \scalebox{0.8}{
    \tikzfig{hypergraphs/laplacian_eigenvalues}
    }
\caption[Graphs corresponding to hypergraph Laplacian eigenvalues]{\small{Given the hypergraph $\graphh$, there are three graphs $\graphg_1$, $\graphg_2$, and $\graphg_3$ which correspond to eigenvalues of the hypergraph Laplacian $\lap_\graphh$.}
\label{fig:laplacian_eigenvalues}}
\end{figure}

Now, with reference to Figure~\ref{fig:laplacian_eigenvalues}, we consider the simple hypergraph $\graphh$ where $\vertexset_\graphh = \{v_1, v_2, v_3\}$ and $\edgeset_\graphh = \{\{v_1, v_2, v_3\}\}$.
Letting $\vecf_1 = [1, 1, 1]^\transpose$, notice that the graph $\graphg_1$ in Figure~\ref{fig:laplacian_eigenvalues} is the graph for which $\lap_\graphh \vecf_1$ is equivalent to $\lap_{\graphg_1} \vecf_1$.
In this case,
\[
\renewcommand\arraystretch{1.6}
    \lap_{\graphg_1} = \begin{bmatrix}
        \frac{2}{3} & -\frac{1}{3} & -\frac{1}{3} \\
        -\frac{1}{3} & \frac{2}{3} & -\frac{1}{3} \\
        -\frac{1}{3} & -\frac{1}{3} & \frac{2}{3}
    \end{bmatrix}
\]
and we have
\[
\renewcommand\arraystretch{1.6}
    \lap_\graphh \vecf_1 = \lap_{\graphg_1} \vecf_1 = [0, 0, 0]^\transpose, 
\]
which shows that $\vecf_1$ is the trivial eigenvector of $\lap_\graphh$ with eigenvalue $0$.

Now, consider $\vecf_2 = [1, -2, 1]^\transpose$. In this case, $\graphg_2$ shown in Figure~\ref{fig:laplacian_eigenvalues} is the graph for which $\lap_\graphh \vecf_2$ is equivalent to $\lap_{\graphg_2} \vecf_2$.
Then, we have
\[
\renewcommand\arraystretch{1.6}
    \lap_{\graphg_2} = \begin{bmatrix}
        \frac{1}{2} & -\frac{1}{2} & 0 \\
        -\frac{1}{2} & 1 & -\frac{1}{2} \\
        0 & -\frac{1}{2} & \frac{1}{2}
    \end{bmatrix}
\]
and
\[
    \lap_\graphh \vecf_2 = \lap_{\graphg_2} \vecf_2 = \left[\frac{3}{2}, -3, \frac{3}{2} \right]^\transpose
\]
which shows that $\vecf_2$ is an eigenvector of $\lap_\graphh$ with eigenvalue $3/2$.

Finally, we consider $\vecf_3 = [1, -1, 0]$ and notice that $\graphg_3$ is the graph such that $\lap_\graphh \vecf_3 = \lap_\graphg \vecf_3$. 
Then,
\[
\renewcommand\arraystretch{1.6}
    \lap_{\graphg_3} = \begin{bmatrix}
        \frac{1}{2} & -\frac{1}{2} & 0 \\
        - \frac{1}{2} & \frac{1}{2} & 0 \\
        0 & 0 & 0
    \end{bmatrix}
\]
and
\[
    \lap_\graphh \vecf_3 = \lap_{\graphg_3} \vecf_3 = \left[2, -2, 0 \right]^\transpose
\]
which shows that $\vecf_3$ is an eigenvector of $\lap_\graphh$ with eigenvalue $2$.

Through an exhaustive search of the other possible constructed graphs on $\{v_1, v_2, v_3\}$, we find that these are the only eigenvalues of $\lap_\graphh$.
By the symmetries of $\vecf_2$ and $\vecf_3$, this means that the operator $\lap_\graphh$ has a total of $7$ different eigenvectors and $3$ distinct eigenvalues.
It is useful to point out that, since the $\lap_\graphh$ operator is non-linear, a linear combination of eigenvectors with the same eigenvalue is \emph{not}, in general, an eigenvector.
Additionally, notice that the constructions of $\vecf_1$ and $\vecf_2$ can be generalised to hypergraphs with more than $3$ vertices, demonstrating that the number of eigenvectors of $\lap_\graphh$ can grow faster than the number of vertices.

\subsubsection{\texorpdfstring{$\signlap_\graphh$}{Jh} Can Have an Exponential Number of Eigenvectors}
\begin{figure}[t]
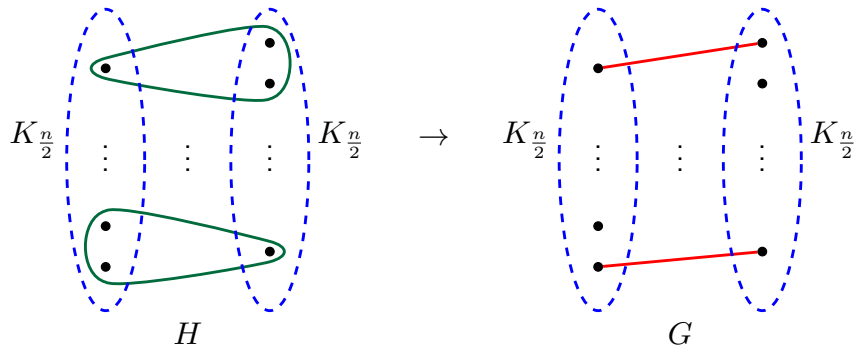

    \centering
    \scalebox{0.9}{
    \scalebox{1.2}{\tikzfig{hypergraphs/new_operator_eigenvalues}}
    }
\caption[Graphs corresponding to hypergraph signless Laplacian eigenvalues]{\small{Given the hypergraph $\graphh$, there are $2^{n/3}$ possible graphs $\graphg$, each of which corresponds to a different eigenvector of the hypergraph operator $\signlaph$.}
\label{fig:new_operator_eigenvalues}}
\end{figure}
To study the spectrum of our new operator $\signlaph$, we construct a hypergraph $\graphh$ in the following way:
\begin{itemize}
    \item There are $n$ vertices split into two clusters $\setl$ and $\setr$ of size $n/2$. There is a clique on each cluster.
    \item The cliques are joined by $n/3$ edges of rank $3$, such that every vertex is a member of exactly one such edge.
\end{itemize}
Now, we can construct a vector $\vecf$ as follows:
for each edge $e$ of rank $3$ in the hypergraph, let $u \in e$ be the vertex alone in one of the cliques, and $v, w \in e$ be the two vertices in the other clique.
Then, we set $\vecf(u) = 1$ and one of $\vecf(v)$ or $\vecf(w)$ to be $-1$ and the other to be $0$.
Notice that there are $2^{n/3}$ such vectors.
Each one corresponds to a different graph $\graphg$, as illustrated in Figure~\ref{fig:new_operator_eigenvalues}, which is the graph constructed such that $\signlaph$ is equivalent to $\signlapg$ when applied to the vector $\vecf$.

Notice that, by the construction of the graph $\graphh$, half of the edges of rank $3$ must have one vertex in $\setl$ and two vertices in $\setr$, and half of the rank-$3$ edges must have two vertices in $\setl$ and one vertex in $\setr$.
This means that, within each cluster $\setl$ or $\setr$, one third of the vertices have $\vecf$-value $1$, one third have $\vecf$-value $-1$, and one third have $\vecf$-value $0$.

Now, we have that
\begin{align*}
    \left(\degm_\graphh^{-1} \signlaph \vecf\right)(u)  = \left(\degm_\graphh^{-1} \signlapg \vecf\right)(u)  = \frac{2}{n} \sum_{\{u, v\} \in \edgeset_\graphg} (\vecf(u) + \vecf(v)).
\end{align*}
Suppose that $\vecf(u) = 0$, meaning that it does not have an adjacent edge in $\graphg$ from its adjacent rank-$3$ edge in $\graphh$.
In this case,
\begin{align*}
    \left(\degm_\graphh^{-1} \signlaph \vecf\right)(u) & = \frac{2}{n} \sum_{u \sim_\graphg v} \vecf(v) \\
    & = \frac{2}{n} \left[ \frac{n}{3 \cdot 2} \cdot 1 + \frac{n}{3 \cdot 2} \cdot (-1) \right] \\
    & = 0.
\end{align*}
Now, suppose that $\vecf(u) = 1$, and so it has an adjacent edge in $\graphg$ from its adjacent rank-$3$ edge in $\graphh$.
Then,
\begin{align*}
    \left(\degm_\graphh^{-1} \signlaph \vecf\right)(u) & = \frac{2}{n} \sum_{u \sim_\graphg v} (1 + \vecf(v)) \\
    & = \frac{2}{n} \left[ \frac{n}{2} + \frac{n}{3 \cdot 2} \cdot (-1) + \left(\frac{n}{3 \cdot 2} - 1\right) \cdot 1 - 1 \right] \\
    & = 1 - \frac{4}{n}.
\end{align*}
Similarly, if $\vecf(u) = -1$, then we find that
\[
    \left(\degm_\graphh^{-1} \signlaph \vecf\right)(u) = \frac{4}{n} - 1,
\]
and so we can conclude that $\vecf$ is an eigenvector of the operator $\signlaph$ with eigenvalue $(n - 4)/n$.
Since there are $2^{n/3}$ possible such vectors $\vecf$, there are an exponential number of eigenvectors with this eigenvalue.
Once again, we highlight that due to the non-linearity of $\signlaph$, a linear combination of these eigenvectors is in general not an eigenvector of the operator.

\subsection{Hardness of Finding the Minimum Eigenvalue of \texorpdfstring{$\signlaph$}{J\_H}} 
The second statement of Theorem~\ref{thm:mainalg} answers an open question posed by Yoshida~\cite{yoshidaCheegerInequalitiesSubmodular2019} by showing that our constructed hypergraph operator satisfies a Cheeger-type inequality for hypergraph bipartiteness.
However, the following theorem shows that there is no polynomial-time algorithm to compute the eigenvector corresponding to the minimum eigenvalue of \emph{any such operator} unless $\P=\NP$.
This means that the problem we consider in this work is fundamentally more difficult than the equivalent problem for graphs, as well as the problem of finding a sparse cut in a hypergraph.

\begin{theorem} \label{thm:min_eigvec_np}
     Given any operator that satisfies a Cheeger-type inequality for hypergraph bipartiteness, there is no polynomial-time algorithm that computes a multiplicative-factor approximation of the minimum eigenvalue or eigenvector unless $\P=\NP$.
\end{theorem}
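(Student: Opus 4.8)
The plan is to prove hardness by reducing from the problem of deciding whether a hypergraph is $2$-colourable. The bridge is the Cheeger-type inequality itself: for any operator equipped with such an inequality for the hypergraph bipartiteness, its smallest eigenvalue $\gamma_1$ is zero if and only if $\bipart_\graphh = 0$. For our operator this is exactly the two-sided bound $\gamma_1 \le 2\bipart_\graphh$ (Lemma~\ref{lem:trev_cheeg_easy}) together with $\bipart_\graphh \le \sqrt{2\gamma_1}$ (Lemma~\ref{lem:trev-cheeg}), and the same equivalence holds for any operator satisfying a Cheeger-type inequality (or, minimally, any operator for which $\gamma_1$ and $\bipart_\graphh$ vanish simultaneously). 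I would then combine this with two elementary facts: first, that $\bipart_\graphh = 0$ exactly when $\graphh$ is $2$-colourable, since for a covering partition $(\setl, \vertexset_\graphh \setminus \setl)$ the numerator of $\bipart_\graphh$ reduces to a weighted count of monochromatic edges; and second, that deciding $2$-colourability of $3$-uniform hypergraphs is $\NP$-complete (equivalently, monotone Not-All-Equal $3$-SAT).

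The reduction itself is short. Given a $3$-uniform hypergraph $\graphh$, run the hypothesised polynomial-time multiplicative-approximation algorithm on $\graphh$ — treating the Cheeger-type operator as a black box determined by $\graphh$, so that we never need to construct it explicitly — to obtain an estimate $\widetilde\gamma_1$ of $\gamma_1$; output "$2$-colourable'' precisely when $\widetilde\gamma_1 = 0$. Correctness follows because a multiplicative guarantee, in either normalisation $\gamma_1 \le \widetilde\gamma_1 \le \alpha\gamma_1$ or $\gamma_1/\alpha \le \widetilde\gamma_1 \le \gamma_1$, forces $\widetilde\gamma_1 = 0$ when $\gamma_1 = 0$ and $\widetilde\gamma_1 > 0$ when $\gamma_1 > 0$; so testing whether $\widetilde\gamma_1 = 0$ decides whether $\gamma_1 = 0$, hence whether $\graphh$ is $2$-colourable, which would place $\NP \subseteq \P$.

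For the eigenvector formulation I would reduce to the eigenvalue case. If a polynomial-time algorithm returns a vector $\vecf$ whose Rayleigh quotient multiplicatively approximates $\gamma_1$, then one computes $D(\vecf) = R_{\signlaph}(\vecf)$ in polynomial time (by Lemma~\ref{lem:rqdisc} and the polynomial-time evaluability of the discrepancy ratio) and applies the same zero/nonzero dichotomy to $D(\vecf)$; equivalently, one rounds $\vecf$ with the sweep-set procedure underlying Lemma~\ref{lem:trev-cheeg} to obtain sets $\setl', \setr'$ with $\bipart_\graphh(\setl', \setr') \le \sqrt{2 D(\vecf)}$, evaluates $\bipart_\graphh(\setl', \setr')$ in polynomial time, and tests it against $0$. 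I would also remark that hyperedges of rank at least $3$ are essential: for ordinary graphs $\bipart_\graphg = 0$ is just bipartiteness, decidable in polynomial time, which is why the reduction is stated for the $3$-uniform case.

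The main obstacle is not the combinatorial core but the care needed at the two ends of the argument. On the hypothesis side, one must fix what "satisfies a Cheeger-type inequality for hypergraph bipartiteness" is taken to mean, so that the crucial consequence $\gamma_1 = 0 \iff \bipart_\graphh = 0$ is genuinely licensed, and phrase the statement so it is robust to weaker readings than the full two-sided bound. On the hardness side, one must invoke the correct classical result and verify the $\bipart_\graphh = 0 \iff 2$-colourable identity from the (somewhat intricate) definition of $\bipart_\graphh$. A secondary subtlety is making "multiplicative-factor approximation of the eigenvector" precise; I would define it through the Rayleigh quotient of the returned vector, which is both the operationally meaningful notion and the one that reduces cleanly to the eigenvalue case.
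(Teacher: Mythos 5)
Your reduction from hypergraph $2$-colourability is the same starting point as the paper's proof, but the pivotal equivalence you rely on --- $\gamma_1 = 0 \iff \bipart_\graphh = 0 \iff \graphh$ is $2$-colourable --- is not licensed by the Cheeger-type inequality in the form it actually holds for these operators, and it is in fact false for the paper's operator $\signlap_\graphh$. The ``hard'' direction (Lemma~\ref{lem:trev-cheeg}, i.e.\ the second part of Theorem~\ref{thm:mainalg}) only produces \emph{disjoint} sets $\setl, \setr \subset \vertexset_\graphh$ with $\bipart_\graphh(\setl,\setr) \leq \sqrt{2\gamma_1}$; these sets need not cover $\vertexset_\graphh$, and $\bipart_\graphh(\setl,\setr)=0$ for such partial sets only says that every edge meeting $\setl \union \setr$ meets both of them --- edges lying entirely outside $\setl \union \setr$ are unconstrained. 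Concretely, take any non-$2$-colourable $3$-uniform hypergraph, add two fresh vertices $a,b$ and a single edge containing both of them (plus one old vertex if you want connectivity): the vector $\indicatorvec_a - \indicatorvec_b$ has discrepancy ratio $0$, so $\gamma_1 = 0$, yet the hypergraph is not $2$-colourable. Hence testing whether a multiplicative approximation of $\gamma_1$ is zero does not decide $2$-colourability, and your eigenvector fallback fails for the same reason: $D(\vecf)=0$, or $\bipart_\graphh(\setl',\setr')=0$ for the rounded sweep sets, certifies only a \emph{partial} $2$-colouring. (Your two side observations are fine: a multiplicative guarantee does preserve the zero/nonzero dichotomy, and $\bipart_\graphh = 0$ over \emph{full} partitions does characterise $2$-colourability; the sole problem is that $\gamma_1 = 0$ does not give you a full partition.)

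The paper closes exactly this gap by using the eigenvector algorithmically and recursing, rather than performing a single zero-test on the eigenvalue. Given a $2$-colourable input, $\gamma_1 = 0$ by the easy direction (Lemma~\ref{lem:trev_cheeg_easy}); the sweep-set rounding yields disjoint $\setl', \setr'$ with $\bipart_\graphh(\setl',\setr')=0$; one colours $\setl'$ and $\setr'$ with the two colours, discards every edge meeting $\setl' \union \setr'$ (all such edges are already properly coloured), and recurses on the induced sub-hypergraph on the uncoloured vertices, which is still $2$-colourable. After at most $\bigo{n}$ rounds this constructs a full $2$-colouring, and since a proposed colouring is verifiable in polynomial time this decides the $\NP$-complete \textsc{Max Set Splitting} problem. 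If you want to keep a version of your cleaner ``decision'' argument, you would have to assume the Cheeger-type inequality with $\bipart_\graphh$ taken over full partitions, but then the hypothesis excludes the paper's own operator (and plausibly every operator of this kind), so the theorem would become vacuous; the recursive eigenvector-based reduction is what makes the statement meaningful for the operator actually constructed in this chapter.
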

\begin{proof}
The proof is based on considering the following \textsc{Max Set Splitting} problem:  For any given hypergraph, the problem looks for a partition $\setl,\setr$ with $\setl \union \setr=\vertexset_\graphh$ and $\setl \intersect \setr=\emptyset$, such that it holds for any $e\in \edgeset_\graphh$ that  $e\cap \setl \neq\emptyset$ and $e\cap \setr \neq\emptyset$. This problem is known to be $\NP$-complete~\cite{gareyComputersIntractability1979}.
This is also referred to as \textsc{Hypergraph 2-Colourability} and we can consider the problem of colouring every vertex in the hypergraph either red or blue such that every edge contains at least one vertex of each colour.

We assume that there is some operator $\signlap$ which satisfies the Cheeger-type inequality
\begin{equation} \label{eq:cheeg_ineq_proof}
    \frac{\lambda_1(\signlap)}{2} \leq \bipart_H \leq \sqrt{2 \lambda_1(\signlap)}
\end{equation}
and that there is an algorithm which can compute the minimum eigen-pair of the operator in polynomial time.
We   show that this would allow us to solve the \textsc{Max Set Splitting} problem in polynomial time.

Given a 2-colourable hypergraph $\graphh$ with colouring $(\setl, \setr)$, we   use the eigenvector of the operator $\signlaph$ to find a valid colouring.
By definition, we have that $\bipart(\setl, \setr) = 0$ and $\gamma_1 = 0$ by \eqref{eq:cheeg_ineq_proof}.
Furthermore, by the second statement of Theorem~\ref{thm:mainalg}, we can compute disjoint sets $\setl', \setr'$ such that $\bipart(\setl', \setr') = 0$. Note that in general $\setl'$ and $\setr'$ are different from $\setl$ and $\setr$.

Now, let $\sete' = \{e \in \edgeset_\graphh : e \intersect (\setl' \union \setr') \neq \emptyset\}$.
Then, by the definition of bipartiteness, for all $e \in \sete'$ we have $e \intersect \setl' \neq \emptyset$ and $e \intersect \setr' \neq \emptyset$.
Therefore, if we colour every vertex in $\setl'$ blue and every vertex in $\setr'$ red, then every $e \in \sete'$ is satisfied. That is, they contain at least one vertex of each colour.

It remains to colour the vertices in $\vertexset \setminus (\setl' \union \setr')$ such that the edges in $\edgeset_\graphh \setminus \sete'$ are satisfied.
The hypergraph $\graphh' = (\vertexset \setminus (\setl' \union \setr'), \edgeset_\graphh \setminus \sete')$ must also be 2-colourable, and so we can recursively apply our algorithm until every vertex is coloured.
This algorithm   runs in polynomial time, since there are at most $\bigo{n}$ iterations and each iteration runs in polynomial time by our assumption.
\end{proof}

\section{Experimental Results} \label{sec:experiments}
In this section, we evaluate the performance of Algorithm~\ref{algo:main} on synthetic and real-world datasets. 
All algorithms are implemented in Python 3.6, using the \texttt{scipy} library for sparse matrix representations and linear programs. The experiments are performed using an Intel(R) Core(TM) i5-8500 CPU @ 3.00GHz processor, with 16 GB RAM. Code to reproduce these experiments can be downloaded from 
\begin{center}
\href{https://github.com/pmacg/hypergraph-bipartite-components}{https://github.com/pmacg/hypergraph-bipartite-components}.
\end{center}

Since this is the first proposed algorithm for approximating hypergraph bipartiteness, we compare it to a simple and natural baseline algorithm, which we call \algcliquecut~(\algcc).
In this algorithm, we construct the clique reduction of the hypergraph, and 
use the two-sided sweep-set algorithm described in \cite{trevisanMaxCutSmallest2012} to find a set with low bipartiteness in the clique reduction.\footnote{We choose to use the algorithm in \cite{trevisanMaxCutSmallest2012} here since, as far as we know, this is the only non-\textsf{SDP} based algorithm for solving the MAX-CUT problem for graphs. Notice that, although SDP-based algorithms achieve a better approximation ratio for the MAX-CUT problem, they are not practical even for hypergraphs of medium sizes.}

Additionally, we compare two versions of our proposed algorithm.  \diffalgname\ (\diffalgshortname) is the new algorithm described in Algorithm~\ref{algo:main} and \diffalgapproxname\ (\diffalgapproxshortname) is an approximate version in which we do not solve the linear programs in Lemma~\ref{lem:rulesimplydiffusion} to compute the graph $\graphg$.
Instead, at each step of the algorithm, we construct $\graphg$ by splitting the weight of each hyperedge $e$ evenly between the edges in $\sets(e) \times \set{I}(e)$.

We always set the parameter $\epsilon = 1$ for \diffalgshortname\ and \diffalgapproxshortname, and   set the starting vector $\vecf_0 \in\R^n$ for the diffusion to be the eigenvector corresponding to the minimum eigenvalue of $\signlapg$, where $\graphg$ is the clique reduction of the hypergraph $\graphh$.

\subsection{Synthetic Datasets}
We first evaluate the algorithms using a random hypergraph model.
Given the parameters $n$, $r$, $p$, and $q$, we generate an $n$-vertex $r$-uniform hypergraph in the following way:
the vertex set $\vertexset$ is divided into two clusters $\setl$ and $\setr$ of size $n/2$.
For every set $\sets \subset \vertexset$ with $\cardinality{\sets} = r$, we add the hyperedge $\sets$ with probability $p$ if $\sets \subset \setl$ or $\sets \subset \setr$, and   we add the hyperedge with probability $q$ otherwise.
 We remark that this is a special case of the hypergraph stochastic block model (e.g., \cite{chienCommunityDetectionHypergraphs2018}). We limit the number of free parameters for simplicity while maintaining enough flexibility to generate random hypergraphs with a wide range of optimal $\bipart_\graphh$-values.

We compare the algorithms' performance using four metrics: the hypergraph bipartiteness $\bipart_\graphh(\setl, \setr)$, the clique graph bipartiteness $\bipart_\graphg(\setl, \setr)$, the $F_1$-score~\cite{vanrijsbergenGeometryInformationRetrieval2004} of the returned clustering, and the runtime of the algorithm.
We always report the average result on $10$ hypergraphs randomly generated with each parameter configuration.

\paragraph{Comparison of \diffalgshortname\ and \diffalgapproxshortname.}
We first fix the values $n = 200$, $r = 3$,   $p = 10^{-4}$, and vary the ratio of $q/p$ from $2$ to $6$ which produces hypergraphs with $250$ to $650$ edges.
The performance of each algorithm on these hypergraphs is shown in Figure~\ref{fig:synthetic_rank3}, from which we can make the following observations.
\begin{itemize}
    \item From Figure~\ref{fig:synthetic_rank3}(a) we observe that \diffalgshortname\ and \diffalgapproxshortname\ find sets with very similar bipartiteness and they perform better than the \algcliquecut\ baseline.
    % \item From Figure~\ref{fig:synthetic_rank3}~(b) we see that our proposed algorithms produce output with a lower
    % $\beta_G$-value than the output of the \algcliquecut\ algorithm. This is a surprising result given that \algcliquecut\ operates directly on the clique graph.
    \item Figure~\ref{fig:synthetic_rank3}(b) shows that the \diffalgapproxshortname\ algorithm is much faster than \diffalgshortname.
\end{itemize}
From this, we conclude that in practice it is sufficient to use the much faster \diffalgapproxshortname\ algorithm in place of the \diffalgshortname\ algorithm.

\begin{figure}[t]
\captionsetup[subfigure]{justification=centering}
\centering
    \begin{subfigure}{0.45\textwidth}
        \includegraphics[width=\textwidth]{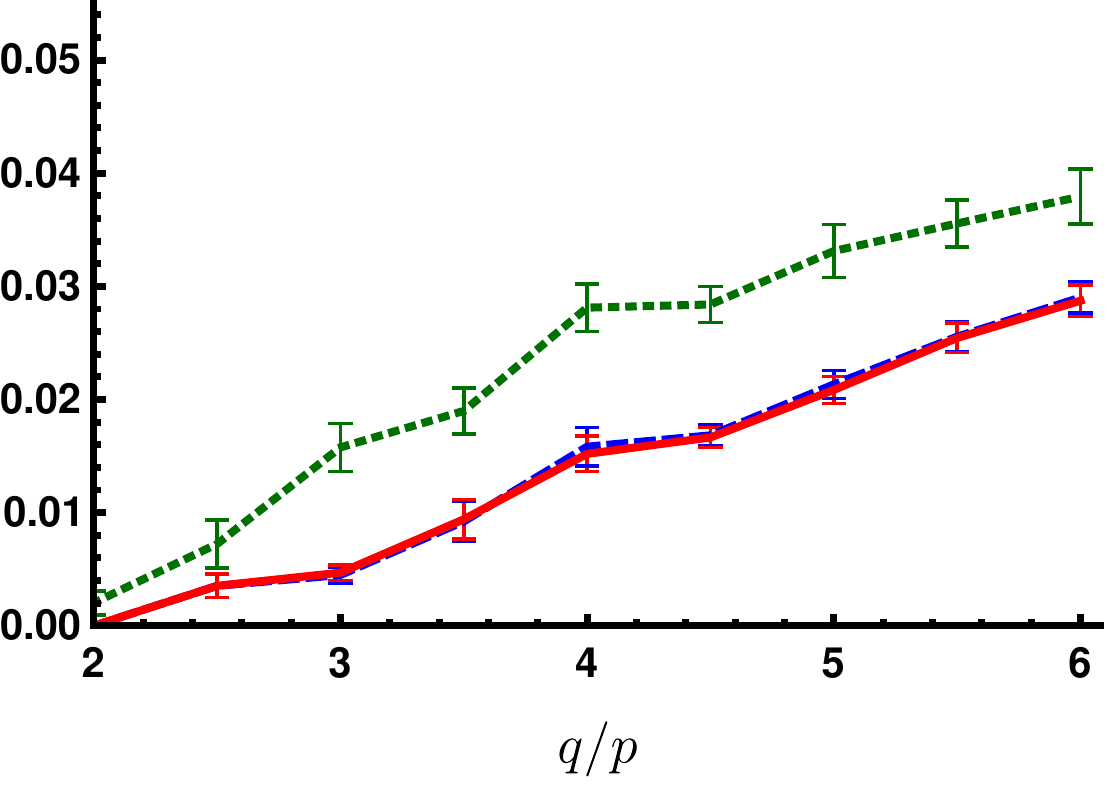}
        \caption{\centering $\bipart_\graphh$-value}
    \end{subfigure}
    \hspace{1em}
    % % \begin{subfigure}{0.32\textwidth}
    % %     \includegraphics[width=\textwidth]{}
    % %     \caption{$\bipart_\graphg$-value}
    % % \end{subfigure}
    % \hspace{3pt}
    \begin{subfigure}{0.45\textwidth}
        \includegraphics[width=\textwidth]{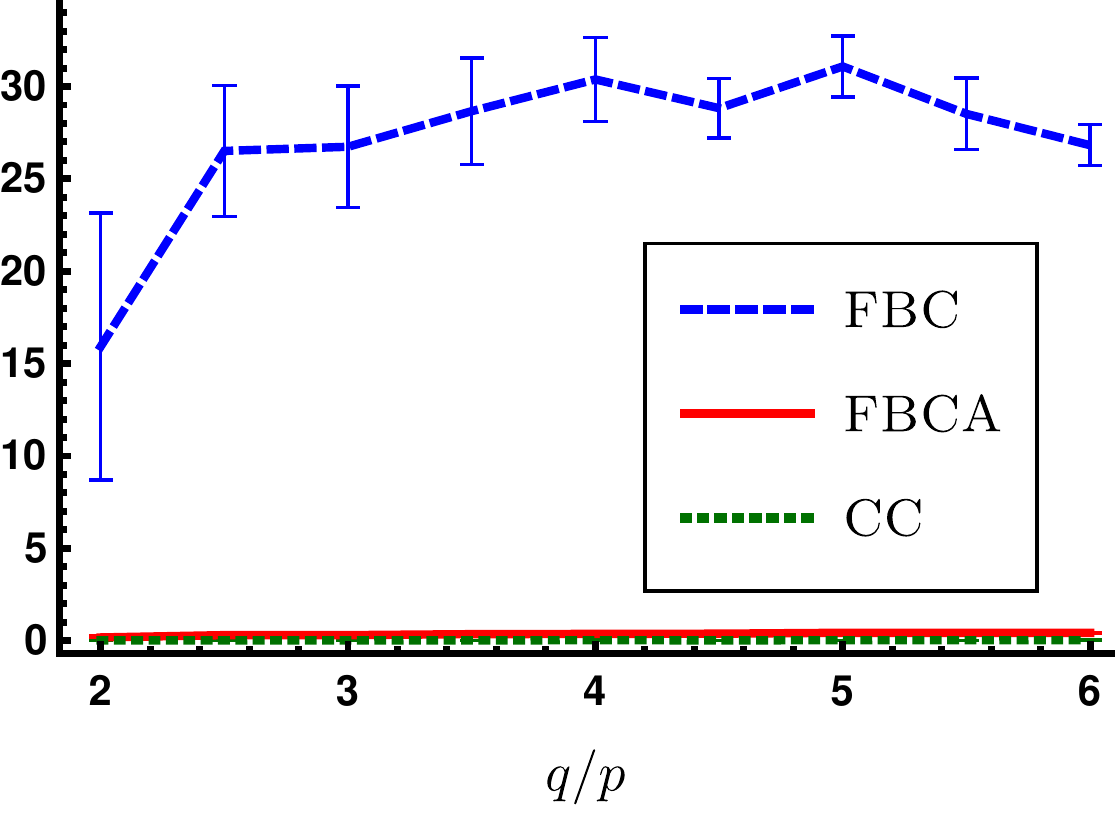}
        \caption{Runtime}
    \end{subfigure}
    \caption[Experimental results of hypergraph partitioning on synthetic data]{The average performance and standard error of each algorithm, when $n = 200$, $r = 3$ and $p = 10^{-4}$. 
    \label{fig:synthetic_rank3}
    }
\end{figure}

\paragraph{Experiments on larger graphs.}

We now compare only the \diffalgapproxshortname\ and \algcliquecut\ algorithms, which allows us to run on hypergraphs with higher rank and number of vertices.
We fix the parameters $n = 2,000$, $r = 5$, and $p = 10^{-11}$, producing hypergraphs with between $5,000$ and $75,000$ edges\footnote{In the random hypergraph model, small values of $p$ and $q$ are needed, since in an $n$-vertex, $r$-uniform hypergraph there are $\binom{n}{r}$ possible edges which can be a very large number. In this case, $\binom{2000}{5} \approx 2.6 \times 10^{14}$.} and show the results in Figure~\ref{fig:synthetic_rank5}.
The new algorithm consistently and significantly outperforms the baseline on every metric and across a wide variety of input hypergraphs.

\begin{figure}[t]
\centering
    \begin{subfigure}{0.45\textwidth}
        \includegraphics[width=\textwidth]{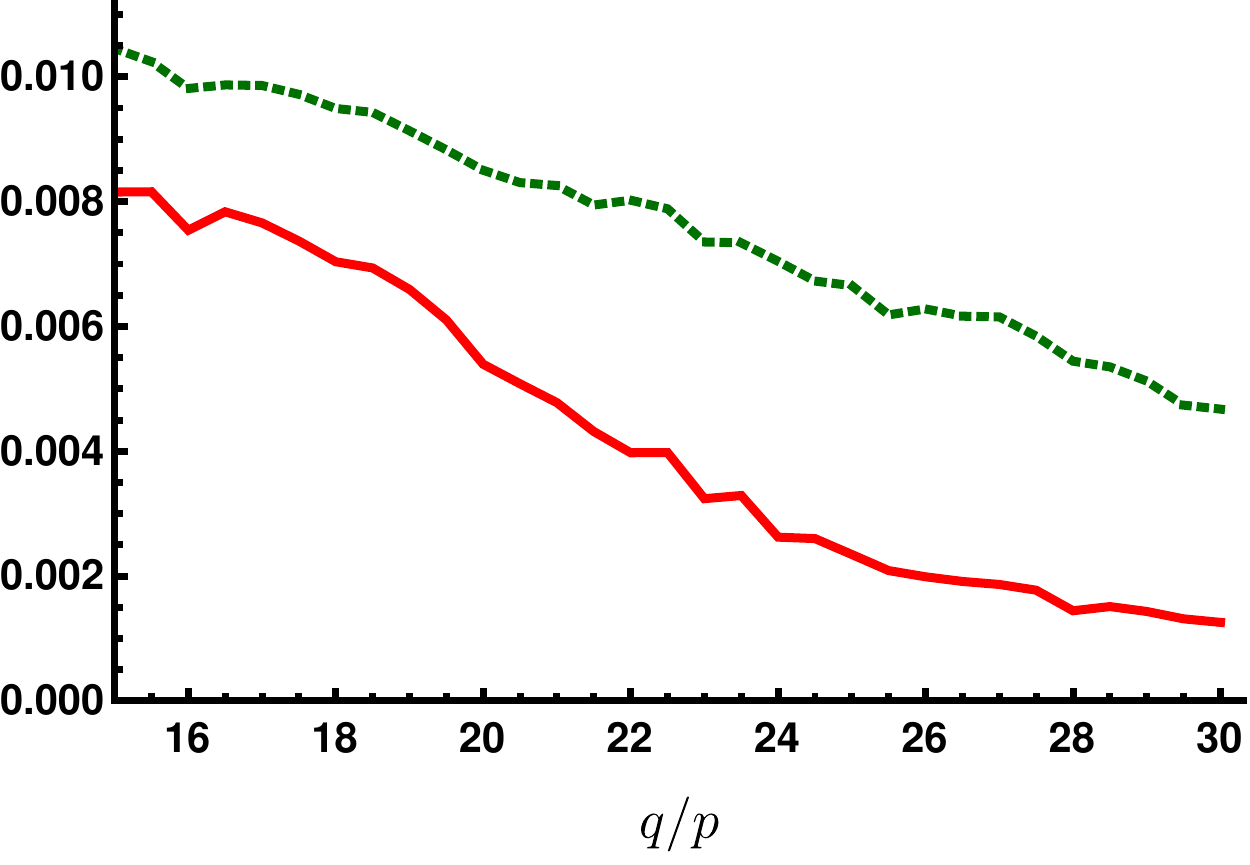}
        \caption{$\beta_H$-value}
    \end{subfigure}
    \hspace{1em}
    % \begin{subfigure}{0.3\textwidth}
    %     \includegraphics[width=\textwidth]{}
    %     \caption{$\beta_G$-value}
    % \end{subfigure}
    \begin{subfigure}{0.45\textwidth}
        \includegraphics[width=\textwidth]{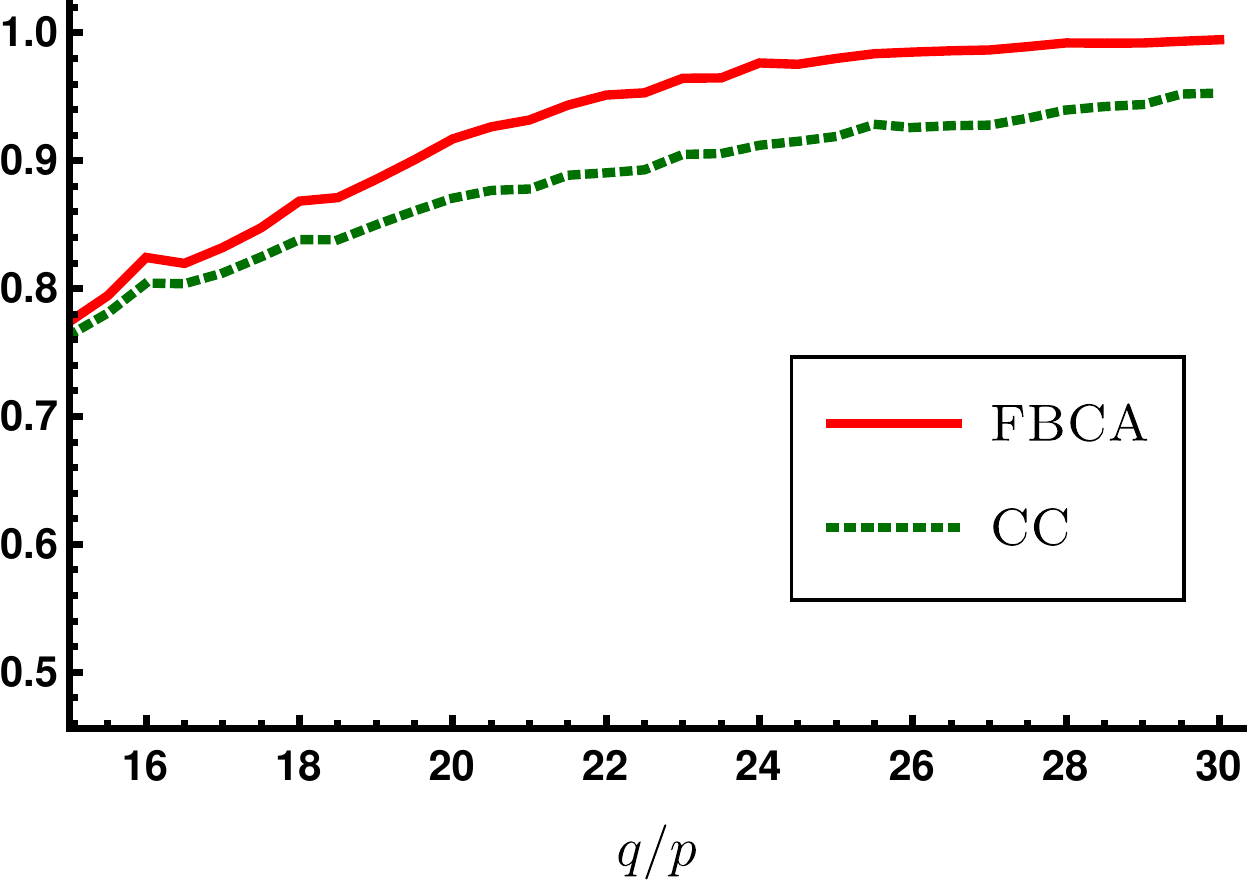}
        \caption{$F_1$-score}
    \end{subfigure}
    \caption[Further experimental results of hypergraph partitioning on synthetic data]{The average performance of each algorithm, when $n = 2,000$, $r = 5$, and $p = 10^{-11}$.
    \label{fig:synthetic_rank5}
    }
\end{figure}

To compare the algorithms' runtime, we fix the parameter $n = 2,000$, the ratio $q = 2p$, and report the runtime of the \diffalgapproxshortname\ and \algcc\ algorithms on a variety of hypergraphs in Table~\ref{tab:runtime}.
Our new algorithm takes more time than the baseline \algcc\ algorithm, but both algorithms appear to scale linearly in the size of the input hypergraph\footnote{Although $n$ is fixed, the \algcliquecut\ algorithm's runtime is not constant since the time 
to compute an eigenvalue of the sparse adjacency matrix scales with the number and rank of the hyperedges.}; this suggests that our algorithm's runtime is roughly a constant factor multiple of the baseline.

\begin{table} [hb]
  \caption[The runtime of the \diffalgapproxshortname\ and \algcc\ algorithms]{The runtime in seconds of the \diffalgapproxshortname\ and \algcc\ algorithms}
  \label{tab:runtime}
  \centering
  \begin{tabular}{lllll}
    \toprule
    & & & \multicolumn{2}{c}{Avg.\ Runtime} \\
    \cmidrule(r){4-5}
    $r$ & $p$ & Avg.\ $\cardinality{\edgeset_\graphh}$ & \diffalgapproxshortname\ & \algcc \\
    \midrule
    & $10^{-9}$ & $1239$ & $1.15$ & $0.12$ \\
    $4$ & $10^{-8}$ & $12479$ & $10.14$ & $0.86$ \\
    & $10^{-7}$ & $124717$ & $89.92$ & $9.08$ \\
    \midrule
    & $10^{-11}$ & $5177$ & $3.99$ & $0.62$ \\
    $5$ & $10^{-10}$ & $51385$ & $44.10$ & $6.50$ \\
    & $10^{-9}$ & $514375$ & $368.48$ & $69.25$ \\
    \bottomrule
  \end{tabular}
\end{table}

\subsection{Real-world Datasets} \label{sec:real_world_experiments}
Next, we demonstrate the broad utility of our new algorithm on complex real-world datasets with higher-order relationships which are most naturally represented by hypergraphs.
Moreover, the hypergraphs are \firstdef{inhomogeneous}, meaning that they contain vertices of different types, although this information is not available to the algorithm and so an algorithm has to treat every vertex identically.
We see that Algorithm~\ref{algo:main} is able to find clusters which correspond to the vertices of different types.
Table~\ref{tab:real_world} shows the $F_1$-score of the clustering produced by the algorithm on each dataset, and demonstrates that it consistently outperforms the \algcliquecut\ algorithm. 

\begin{table}[t]
  \caption{The performance of the \diffalgapproxshortname\ and \algcc\ algorithms on real-world datasets} 
  \label{tab:real_world}
  \centering
  \begin{tabular}{llll}
    \toprule
    & & \multicolumn{2}{c}{$F_1$-Score} \\
    \cmidrule(r){3-4}
    Dataset & Cluster & \diffalgapproxshortname\ & \algcc \\
    \midrule
    \multirow{2}{1.6cm}{Penn Treebank} & Verbs & $\mathbf{0.73}$ & $0.69$ \\
      & Non-Verbs & $\mathbf{0.59}$ & $0.56$ \\
      \midrule
    \multirow{2}{*}{DBLP} & Conferences & $\mathbf{1.00}$ & $0.25$ \\
      & Authors & $\mathbf{1.00}$ & $0.98$ \\
    \bottomrule
  \end{tabular}
\end{table}

\paragraph[Penn Treebank.]{Penn Treebank.}
The Penn Treebank dataset is an English-language corpus with examples of written American English from several sources, including fiction and journalism~\cite{marcusBuildingLargeAnnotated1993}.
The dataset contains 49,208 sentences and over 1 million words, which are labelled with their part of speech.
We construct a hypergraph in the following way: the vertex set consists of all the  verbs, adverbs, and adjectives which occur at least 10 times in the corpus, and for every 4-gram (a sequence of 4 words) we add a hyperedge containing the co-occurring words.
This results in a hypergraph with 4,686 vertices and 176,286 edges. 
The clustering returned by the new algorithm correctly distinguishes between verbs and non-verbs with an accuracy of $67$\%.
This experiment demonstrates that our \emph{unsupervised} general purpose algorithm can recover non-trivial structure in a dataset which would ordinarily be clustered using domain knowledge and a complex pre-trained model~\cite{akbikContextualStringEmbeddings2018, goldwaterFullyBayesianApproach2007}.

\paragraph{DBLP.}
We construct a hypergraph from a subset of the DBLP network consisting of 14,376 papers published in artificial intelligence and machine learning conferences~\cite{DBLPComputerScience, wangHeterogeneousGraphAttention2019}.
For each paper, we include a hyperedge linking the authors of the paper with the conference in which it is published, giving a hypergraph with 14,495 vertices and 14,376 edges.
The clusters returned by our algorithm successfully separate the authors from the conferences with an accuracy of $100$\%.

\section{Future Work} 
In this chapter, we introduce a new Laplacian-type operator which generalises the signless Laplacian operator from graphs to hypergraphs.
We show that the resulting algorithm is effective for recovering the structure of complex datasets, and this further justifies the recent effort to develop a  spectral theory of hypergraphs.
The applicability of spectral hypergraph theory for clustering remains an important area for future research.
For example, as demonstrated in this chapter, the Laplacian operator of Chan \etal~\cite{chanSpectralPropertiesHypergraph2018} can have more than two eigenvectors.
It would be interesting to investigate whether these other eigenvectors can be applied for learning clusters in hypergraphs.
\begin{openquestion}
Can the eigenvectors of the non-linear hypergraph Laplacian be used to develop a general algorithm for learning clusters in hypergraphs?
\end{openquestion}
Such an algorithm could avoid the downsides of using graph reductions for hypergraph clustering and make full use of the higher-order information encoded in the hypergraph edges.

% \include{chapters/conclusions}

% Change the title format for the appendix and bibliography.
\titleformat{\chapter}[display]{\fontsize{18pt}{21pt}\bfseries \sffamily \centering}{\chaptertitlename\ \thechapter}{10pt}{}{}

%% Choose your favourite bibliography style here.
\bibliographystyle{alpha}

%% If you want the bibliography single-spaced (which is allowed), uncomment
%% the next line.
\singlespace

%% Specify the bibliography file. Default is thesis.bib.
\bibliography{references}

%%%%%%%%
%% Any appendices should go here. The appendix files should look just like the
%% chapter files.
\appendix
% \include{chapters/imagsegappendix}
%% ... etc...

%% ... that's all, folks!
\end{document}